\renewcommand*\env@matrix[1][*\c@MaxMatrixCols c]{%
	\hskip -\arraycolsep
	\let\@ifnextchar\new@ifnextchar
	\array{#1}}
\numberwithin{equation}{section}
\crefname{prob}{Problem}{Problems}
\crefname{thm}{Theorem}{Theorems}
\crefname{eq}{Equation}{Equations}
\crefname{ineq}{Inequality}{Inequalities}
\newcommand{\declarecolor}[2]{\definecolor{#1}{RGB}{#2}\expandafter\newcommand\csname #1\endcsname[1]{\textcolor{#1}{##1}}}
\algnewcommand\algorithmicforeach{\textbf{for each}}
\def\R{{\mathbb{R}}}
\def\N{{\mathbb{N}}}
\theoremstyle{plain}
\newtheorem{theorem}{Theorem}[section]
\newtheorem{corollary}[theorem]{Corollary}
\newtheorem{lemma}[theorem]{Lemma}
\newtheorem{invariant}[theorem]{Invariant}
\theoremstyle{remark}
\newtheorem{rem}[theorem]{Remark}
\theoremstyle{definition}
\newtheorem{definition}[theorem]{Definition}
\newtheorem{example}[theorem]{Example}
\DeclareMathOperator{\new}{new}
\DeclareMathOperator{\nnz}{nnz} 
\DeclareMathOperator*{\argmin}{arg\,min}
\DeclareMathOperator{\poly}{poly} 
\DeclareMathOperator{\polylog}{polylog} 
\DeclareMathOperator{\layer}{\mathsf{Layer}}
\newcommand\norm[1]{\left\lVert#1\right\rVert}
\newcommand\mc[1]{\mathcal{#1}}
\newcommand{\T}{\top}
\newcommand{\eps}{\varepsilon}
\newcommand{\grad}{\nabla}
\DeclareMathOperator{\tw}{tw}
\newcommand{\wt}{\widetilde}
\global\long\def\defeq{\stackrel{\mathrm{{\scriptscriptstyle def}}}{=}}
\def\=#1{\bm{#1}}
\def\+#1{\mathcal{#1}}
\def\-#1{\mathbb{#1}}
\newcommand{\block}[1]{[#1]}
\global\long\def\leq{\leqslant}%
\global\long\def\hes#1#2{\nabla^{2}#1(#2)}%
\global\long\def\wt#1{\widetilde{#1}}%
\global\long\def\new#1{#1^{\mathrm{new}}}%
\global\long\def\norm#1{\|#1\|}%
\global\long\def\d{\mathrm{d}}%
\global\long\def\R{\mathbb{R}}%
\global\long\def\Rn{\mathbb{R}^{n}}%
\global\long\def\defeq{\overset{\mathrm{def}}{=}}%
\global\long\def\oxs{(\overline{x},\overline{s})}%
\global\long\def\ox{\overline{x}}%
\global\long\def\os{\overline{s}}%
\global\long\def\ot{\overline{t}}%
\global\long\def\nnz{\operatorname{nnz}}%
\global\long\def\dom{\operatorname{dom}}%
\global\long\def\tw{\mathrm{tw}}%
\global\long\def\eps{\varepsilon}%
\global\long\def\poly{\operatorname{poly}}%
\global\long\def\tw{\operatorname{tw}}%
\global\long\def\dom{\operatorname{dom}}%
\global\long\def\ts{\mathcal{S}}%
\global\long\def\tt{\mathcal{T}}%
\global\long\def\depth{\mathsf{depth}}%
\global\long\def\lca{\mathsf{LCA}}%
\global\long\def\low{\mathsf{low}}%
\global\long\def\block_low{\overline{\mathsf{low}}}%
\begin{document}
	
\title{A Nearly-Linear Time Algorithm for Linear Programs with Small Treewidth: A Multiscale Representation of Robust Central Path}
\author{
	Sally Dong\\
	University of Washington
	\and
	Yin Tat Lee\\
	University of Washington \\ \& Microsoft Research
	\and
	Guanghao Ye\\ 
	University of Washington
}
\maketitle
\begin{abstract}
Arising from structural graph theory, treewidth has become a focus of study in fixed-parameter tractable algorithms in various communities including combinatorics, integer-linear programming, and numerical analysis. 
Many NP-hard problems are known to be solvable in  $\widetilde{O}(n \cdot 2^{O(\mathrm{tw})})$ time, where $\mathrm{tw}$ is the treewidth of the input graph. Analogously, many problems in P should be solvable in $\widetilde{O}(n \cdot \mathrm{tw}^{O(1)})$ time; however, due to the lack of appropriate tools, only a few such results are currently known. \cite{fomin2018fully} conjectured this to hold for as broadly as all linear programs; in our paper, we show this is true:

   Given a linear program of the form $\min_{Ax=b,\ell \leq x\leq u} c^{\top} x$, and a width-$\tau$ tree decomposition of a graph $G_A$ related to $A$, we show how to solve it in time
   $$\widetilde{O}(n \cdot \tau^2 \log (1/\varepsilon)),$$
where $n$ is the number of variables and $\varepsilon$ is the relative accuracy. 
Combined with recent techniques in vertex-capacitated flow \cite{bernstein2021deterministic}, this leads to an algorithm with $\widetilde{O}(n^{1+o(1)} \cdot \mathrm{tw}^2 \log (1/\varepsilon))$ runtime.
Besides being the first of its kind, our algorithm has runtime nearly matching the fastest runtime for solving the sub-problem $Ax=b$, under the assumption that no fast matrix multiplication is used.

We obtain these results by combining recent techniques in interior-point methods (IPMs), sketching, and a novel representation of the solution under a multiscale basis similar to the wavelet basis. 

\end{abstract}

\newpage
\tableofcontents
\newpage

\section{Introduction}

Linear programming is one of the most fundamental problems in computer
science and optimization. General techniques for solving linear programs,
such as simplex methods~\cite{dantzig1951maximization}, ellipsoid
methods~\cite{khachiyan1980polynomial} and interior point methods~\cite{karmarkar1984new},
have been developed and continuously refined since the 1940s, and
have later been found to be useful in a wide range of problems spanning
optimization, combinatorics, and machine learning.

For an arbitrary linear program $\min_{Ax=b,\ell \leq x\leq u}c^{\top}x$ with
$n$ variables and $d$ constraints, the current fastest algorithms
take either $\widetilde{O}(n^{2.373}\log(1/\eps))$
time~\cite{DBLP:conf/stoc/CohenLS19, jiang2020faster}
or $\widetilde{O}((\sqrt{nd \cdot \nnz(A)}+d^{2.5})\log(1/\eps))$ time~\cite{van2020solving, brand2020maxflow},
where $\eps$ is the accuracy parameter\footnote{The current fastest exact algorithms for linear program take either
$2^{O(\sqrt{d\log\frac{n-d}{d}})}$ time~\cite{hansen2015improved}, or the runtime depends on the magnitude of entries of $A$.}. 
When $A$ is a dense matrix, these runtimes are close to optimal,
as they nearly match the runtime $\widetilde{O}((\text{nnz}(A)+d^{\omega})\log(1/\eps))$
to solve the subproblem $Ax=b$, where $\omega \approx 2.373$ is the matrix multiplication exponent. When $A$ is sparse, as is the 
case in many problems arising from both theory and applications, we ask 
if much faster runtimes are possible. 

When $n$ and $d$ are the same order, this problem is highly non-trivial,
even for linear systems. It is only recently known how to solve a
sparse linear system in slightly faster than $d^{\omega}$ time~\cite{peng2020solving},
and sub-quadratic time is insurmountable under the current techniques.
It turns out in practice however, sparse linear systems often have
low \emph{treewidth}, a condition much stronger than mere sparsity;
for example, many of the linear programs
in the Netlib repository have sublinear treewidth
(\cref{fig:practice}). For low treewidth linear systems, a small
polynomial dependence on treewidth still implies a much faster than
quadratic runtime, hence making them a particularly suitable target
of study.

Beyond the practical consideration, whether there is a $\widetilde{O}(n \cdot \tw^{O(1)})$ LP algorithm is important in parameterized complexity. 
Most algorithms designed for low treewidth graphs rely on dynamic programming,
which naturally give algorithms with runtime exponential in treewidth even for problems in P, 
such as reachability and shortest paths~\cite{akiba2012shortest,chatterjee2013faster,chaudhuri2000shortest,planken2012computing}.
There are only a few problems in P that we know how to solve in $\widetilde{O}(n \cdot \tw^{O(1)})$ time \cite{fomin2018fully}. 
We refer to \cref{sec:exp-treewidth-algs,sec:poly-treewidth-algs} for a discussion of these problems.

Recently,~\cite{fomin2018fully} posed exactly this question\footnote{We add the $\log(1/\eps)$ term to their original conjecture. Without
this term, the conjecture will imply the existence of strongly polynomial
time algorithms for linear programs, one of Smale's 18 unsolved problems
in mathematics.}:
\begin{center}
Can linear programs be solved in $\widetilde{O}(n\cdot\tw^{O(1)}\log(1/\eps))$
time?
\par\end{center}

We answer the question affirmatively in this paper:
\begin{theorem}\label{thm:main}
Given a linear program $\min_{Ax=b,\ell \leq x\leq u}c^{\top}x$, where
$A\in\R^{d\times n}$ is a full rank matrix with $d\leq n$, define the \emph{dual graph} $G_{A}$\footnote{There are different ways of associating a graph with the matrix $A$ (see~\cite{jansen2015structural,fomin2018fully}). We adopt the one used in the ILP community~\cite{jansen2015structural, eisenbrand2019algorithmic}. 
We choose this definition so that when applied to linear programming formulations of flow problems, in which the constraint matrix $A$ is the incidence
matrix of the input graph $G$, we have $G_{A} = G$, and hence the treewidth of the LP is most meaningfully related to the flow problem.}
to be the graph with vertex
set $\{1,\dots,d\}$, such that $ij\in E(G_{A})$ if there is a column
$r$ such that $A_{i,r}\neq0$ and $A_{j,r}\neq0$.
Suppose that
\begin{itemize}
\item a tree decomposition of $G_{A}$ with width $\tau$ is given, and 
\item $r$ is the inner radius of the polytope, namely, there is $x$ such that $Ax=b$ and $\ell + r \leq x \leq u - r$.
\end{itemize}
Let $L = \|c\|_2$ and $R = \|u-\ell\|_2$. For any $0<\eps\leq1/2$, we can find $x$ such that $Ax=b$ and $\ell \leq x\leq u$ such
that
\[
c^{\top}x\leq\min_{Ax=b,\, \ell \leq x \leq u}c^{\top}x+\eps\cdot L R
\]
in expected time
\[
\widetilde{O}(n\cdot \tau^{2}\log(R/(r \eps))).
\]
\end{theorem}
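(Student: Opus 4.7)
The plan is to run a robust short-step interior point method (IPM) for $\widetilde{O}(\sqrt{n}\log(R/(\eps r)))$ iterations and to maintain the primal--dual iterates under a multiscale basis derived from the tree decomposition so that the amortized per-iteration cost is $\widetilde{O}(\sqrt{n}\,\tau^{2})$, giving the claimed total.

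I would first set up the algorithmic framework of the robust central path: at each iteration we decrease the barrier parameter $\mu$ by a factor of $1-\Theta(1/\sqrt{n})$ and take a Newton step whose direction is obtained by solving a linear system of the form $A D A^{\top}\Delta y = r$, where $D$ is a positive diagonal matrix determined by the current $(x,s)$. The matrix $A D A^{\top}$ has the same sparsity pattern as $A A^{\top}$, so its underlying graph is a subgraph of $G_{A}$ and hence has treewidth at most $\tau$; by nested dissection on the given tree decomposition (after balancing it into an elimination tree of height $O(\log n)$), a single such solve takes $\widetilde{O}(n\tau^{2})$ time. A naive implementation already gives $\widetilde{O}(n^{1.5}\tau^{2})$, so the real task is to amortize across iterations.

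The key new ingredient is a multiscale basis adapted to the balanced elimination tree. Each internal node of the tree owns a separator of size $O(\tau)$, and the basis is built so that any $x\in\R^{n}$ decomposes into contributions supported on these separators, one per level of the tree; this is analogous to how a wavelet basis decomposes a function by scale, with each scale contributing $O(\tau)$ degrees of freedom. Under this representation I would maintain a symbolic factorization of $A D A^{\top}$ along with the current iterates so that changing $k$ coordinates of $D$ costs $\widetilde{O}(k\tau^{2})$. Because the robust central path bounds the total $\ell_{2}$-movement of $(x,s)$ across all iterations, a potential argument should show that the total number of such coordinate updates over the full run of the IPM is $\widetilde{O}(n)$, so maintenance cost sums to $\widetilde{O}(n\tau^{2})$.

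Sketching enters in two places: to detect quickly which coordinates of $D$ have moved enough to require an explicit update (avoiding an $\Omega(n)$ scan per iteration), and to approximate the Newton step itself by projecting onto an $\widetilde{O}(\tau)$-dimensional random subspace so the per-iteration work becomes $\widetilde{O}(\sqrt{n}\,\tau^{2})$ rather than $\widetilde{O}(n\tau^{2})$. The main obstacle I expect is constructing the multiscale basis so that it is simultaneously compatible with the tree-decomposition sparsity, the changing diagonal scaling $D$, and the $\ell_{2}$-stability guarantees of the robust IPM, and then showing that the resulting data structure answers Newton-direction queries to within the accuracy tolerated by the IPM. Designing this representation and proving its stability under both the IPM updates and the sketch-based queries is, I expect, the heart of the paper's technical contribution.
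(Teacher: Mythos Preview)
Your high-level plan matches the paper's approach: a robust IPM with $\widetilde{O}(\sqrt{n}\log(R/(\eps r)))$ steps, Cholesky on the tree decomposition, a multiscale representation of the iterate, and sketching for detection. But a few points are off in ways that matter.

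First, the multiscale basis is not a generic separator-supported decomposition. The paper's basis is the specific algebraic object $\{H_{\ox}^{-1}A^{\top}L^{-\top}e_i\}_i$, where $L$ is the Cholesky factor of $AH_{\ox}^{-1}A^{\top}$; one writes $x = x_0 + H_{\ox}^{-1}A^{\top}L^{-\top}h$ and maintains the coefficients $h$. The two crucial properties you need (and that the paper proves from the Cholesky sparsity pattern) are that any single coordinate $x_i$ can be read off in $\tau^{O(1)}$ time, and that after a sparse change to $L$ or $H_{\ox}$ the representation can be repaired in $\tau^{O(1)}$ time. A separator-by-separator description does not obviously give you either property, because the basis vectors are not supported on separators; their supports are subtrees.

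Second, and more importantly, sketching is \emph{not} used to approximate the Newton step. The step is held \emph{exactly} in the implicit representation above; the ``Move'' operation updates only two scalars and costs $O(1)$. The expensive work is entirely in keeping the approximation $\ox$ close to $x$: one must find the coordinates $i$ where $(x-\ox)_i$ is large, and this is where JL sketches of dimension $r=\widetilde{O}(1)$ (not $\widetilde{O}(\tau)$) are maintained on a sampling tree so that one can sample $i$ proportional to $(x-\ox)_i^2$. Your statement that the per-step cost becomes $\widetilde{O}(\sqrt{n}\,\tau^2)$ by projecting the Newton step onto a random $\widetilde{O}(\tau)$-dimensional subspace is not the mechanism at work here.

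Finally, a small correction: the elimination tree has height $O(\tau\log n)$, not $O(\log n)$. The $O(\log n)$-height tree in the paper is a separate \emph{sampling} tree, built on top of the elimination tree via heavy-light decomposition, and is needed only to tighten the $\tau$-exponent in the runtime.
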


To keep this paper simple, we refrain from using fast matrix multiplication. Under this restriction, we note that our runtime is tight, 
since it nearly matches the fastest runtime for solving the subproblem $Ax=b$ (\cref{cor:chol-time}).

Our algorithm involves a pre-processing component: We need to find
some suitable reordering of the rows of $A$, known as an \emph{elimination
order}, so that matrices in later computations will have certain desired
sparsity patterns. In practice, there are various efficient algorithms for finding a good reordering, 
such as minimum degree orderings~\cite{george1989evolution,amestoy1996approximate,fahrbach2018graph} 
and nested dissection algorithms~\cite{george1973nested,lipton1979generalized,karypis1998fast}.
In theory, there are also different ways to compute the reordering. In the previous version of this paper, we applied techniques in \cite{arora2016combinatorial} and \cite{brandt2017approximating} to give two reordering algorithms with suboptimal bounds. They are removed to shorten the paper. After our paper, \cite{bernstein2021deterministic} gave an almost-linear time algorithm to compute a tree decomposition that is a $\polylog n$ factor from optimal.
This implies the following:
\begin{theorem}\label{thm:main2}
Applying the algorithms in \cite{bernstein2021deterministic}, the runtime in \cref{thm:main} becomes
\[
\widetilde{O}\left( \left( (n \cdot \tw)^{1+o(1)} + n \cdot \tw^{2}\right)  \log(1/\eps) \right) = 
\widetilde{O}\left( \left(n^{1+o(1)}  \cdot \tw^{2}\right)  \log(1/\eps) \right).
\]
\end{theorem}


Detailed discussions can be found in literature (e.g.~\cite{renegar1988polynomial}
and \cite[Sections E, F]{lee2013path}) on converting an approximation solution
to an exact solution. To summarize, for integral $A,b,c$, it suffices
to pick $\eps=2^{-O(L)}$ to get an exact solution, where $L=\log(1+d_{\max}+\|b\|_{2}+\|c\|_{2})$
is the bit complexity and $d_{\max}$ is the largest absolute value
of the determinant of a square sub-matrix of $A$. For many combinatorial
problems, $L=O(\log(n+\|b\|_{2}+\|c\|_{2}))$.


\subsection{Convex Generalization}

\cref{thm:main} and \cref{thm:main2}  generalize to a class of convex optimization problem
as follows:
\begin{theorem}\label{thm:conv_main}
Given a convex program 
\begin{align}\label{eq:conv_form}
\min_{Ax=b,x_{i}\in K_{i}\text{ for }i\in[m]}c^{\top}x
\end{align}
where $A\in\R^{d\times n}$ is a full rank matrix with $d\leq n$
and $K_{i}\subset\R^{n_{i}}$ are convex sets, with $\sum_{i=1}^m n_i = n$. We identify the columns
of $A$ in blocks, such that block $i$ contains the $n_{i}$
columns corresponding to $x_{i}$. We define the \emph{generalized dual graph} $G_{A}$ to be the graph with vertices set $\{1,\cdots d\}$,
such that $ij\in E(G_{A})$ if there is a block $r$ such that
$A_{i,r}\neq \boldsymbol{0}$ and $A_{j,r}\neq\boldsymbol{0}$. We define the product convex set $K = \Pi_{i=1}^m K_i$.
Suppose that 
\begin{itemize}
\item we are given a tree decomposition of $G_{A}$ with width $\tau$,
\item $R$ is the diameter of the set $K$,
\item There exists $z$ such that $Az=b$ and $B(z,r) \subset K$,
\item $n_{i}=O(1)$ for all $i\in[m]$,
\item we are given initial points $x_{i} \in \mathbb{R}^{n_i}$ such that  $B(x_{i},r)\subset K_{i}$ for each $i$,
\item we can check if $y\in K_{i}$ in $O(1)$ time for all $i \in [m]$.
\end{itemize}
Then, for any $0<\eps\leq 1/2$, we can find $x\in K$ with $Ax=b$
such that
\[
c^{\top}x\leq\min_{Ax=b,x \in K}c^{\top}x+\eps\cdot\|c\|_{2}\cdot R
\]
in expected time 
\[
\widetilde{O}(n\cdot\tau^{2}\log(R/r) \log(R/(r\eps))).
\]
\end{theorem}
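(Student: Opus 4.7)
The plan is to extend the robust interior-point method of \cref{thm:main} from the LP setting to the convex setting by replacing the coordinate-wise log-barrier for the box constraints $\ell \leq x \leq u$ with a self-concordant barrier $\phi_i$ for each $K_i$. Since $K_i \subset \R^{n_i}$ with $n_i = O(1)$, the universal or entropic barrier for $K_i$ has self-concordance parameter $O(1)$, so the composite barrier $\phi = \sum_{i=1}^m \phi_i$ has total parameter $\nu = O(n)$, matching the LP case and yielding an interior-point method with $\widetilde{O}(\sqrt{n}\,\log(R/(r\eps)))$ outer iterations.

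The key structural observation is that the block partition of $\phi$ coincides with the column-block partition of $A$. Consequently, the generalized dual graph $G_A$ (and the provided tree decomposition) is exactly the object used in \cref{thm:main}, and the multiscale representation of the robust central path, the sketching scheme, and all of the per-iteration amortized bookkeeping carry over verbatim. Each iteration therefore costs $\widetilde{O}(\sqrt{n}\,\tau^2)$ amortized operations plus the cost of evaluating the $\phi_i$, together with their gradients and Hessians.

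The only substantively new ingredient is this barrier evaluation, since the algorithm has access to each $K_i$ only through a membership oracle. Because $n_i = O(1)$, standard tools from low-dimensional convex geometry (for instance, constructing a polytope approximation of $K_i$ via a logarithmic-depth sequence of grid queries, or simulating the entropic barrier via short random walks) furnish an $\widetilde{O}(\log(R/r))$-query procedure for the required evaluations at the precision demanded by the IPM. Multiplying this per-evaluation overhead by the iteration count gives the claimed runtime $\widetilde{O}(n\,\tau^2 \log(R/r)\log(R/(r\eps)))$. The hypotheses that $B(x_i,r) \subset K_i$ and that there exists $z$ with $Az=b$ and $B(z,r)\subset K$ supply exactly the warm-start needed to launch the IPM at a well-centered interior point.

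The main obstacle will be verifying that these approximate barrier evaluations remain compatible with the robust central-path framework underlying \cref{thm:main}. The robustness tolerance must simultaneously absorb (i) the approximation error introduced in each $\phi_i$, (ii) the discrepancy accumulated when updates are propagated through the multiscale basis, and (iii) the sketching error used in the per-iteration linear-algebra machinery; this amounts to calibrating the oracle-query precision for each $\phi_i$ according to the scale at which it is accessed. Once this error-budget calibration is in place, the remainder of the argument reduces to the analysis already established in the LP case.
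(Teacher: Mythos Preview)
Your proposal is correct and follows essentially the same approach as the paper: use the universal barrier on each $K_i$ (self-concordance $n_i = O(1)$), invoke the robust IPM framework to get $\widetilde O(\sqrt{n}\log(R/(r\eps)))$ iterations, reuse the multiscale/sketching data structures block-wise, and pick up the extra $\log(R/r)$ factor from evaluating $\nabla\phi_i,\nabla^2\phi_i$ via the membership oracle in constant dimension. The paper handles your stated ``main obstacle'' exactly as you anticipate, noting that the robust central path already tolerates $O(1/\mathrm{polylog}(m))$ error in the barrier gradient and Hessian, so the oracle precision can be set accordingly.
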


\begin{rem}
The proofs for the convex program and the linear program are almost identical.
Any operation involving the entry $A[i,j]$ in the linear program setting 
is generalized to operations on the $1\times n_{j}$ submatrix of $A$ from row $i$ and block
$j$. Since each block has size $O(1)$, the overall runtime relating to all matrix
operations is maintained. We analyze our interior point method directly
using this generalized formulation in this paper; the linear programming
formulation follows as a special case.
\end{rem}


This natural convex generalization in fact captures a large number of
problem formulations. We illustrate with one example from signal processing, the fused lasso model for denoising~\cite{tibshirani2005sparsity}: Given a 1-D input signal $u_{1},u_{2},\cdots,u_{n}$,
find an output $x$ that minimizes the potential 
\[
V(x) = \sum_{i=1}^{n}(x_{i}-u_{i})^{2}+\lambda \sum_{i=1}^{n-1}|x_{i+1}-x_{i}|,
\]
where the first term restricts the output signal to be close to the
input, and the second term controls the amount of irregularity, and
$\lambda$ is the regularization parameter. To relate it back to our
problem \cref{eq:conv_form}, we consider a generalized formulation:
Given a family of convex functions $\phi_{1},\dots,\phi_{N}$ of $x=(x_{1},\dots,x_{n})$,
where for each $i,$ the function $\phi_{i}(x)=\phi_{i}(x_{S_{i}})$
only depends on the variables $\{x_{j}:j\in S_{i}\}$ for some subset
$S_{i}\in[n]$, we want to solve the problem
\begin{align}\label{eq:conv_form2}
\min_{x\in\mathbb{R}^{n}}\sum_{i=1}^{N}\phi_{i}(x_{S_{i}}).
\end{align}
By creating extra variables $y_{i,j}$ for all $i\in[n]$ and $j\in S_{i}$,
we can write the problem as $\min\sum_{i}t_{i}$, subjected to $y_{i,j}=x_{j}$
and $t_{i}\geq\phi_{i}(y_{i,j})$ for all $i$ and all $j\in S_{i}$.
The inequality constraints is equivalent to requiring that $(t,y)$
lie in the convex set $\{(t,y):t_{i}\geq\phi_{i}(y_{i,j})\}$.  This
is exactly in the form of \cref{eq:conv_form}. The dual graph $G_{A}$
of this problem is closely related to the intersection graph $G_{\mathcal{I}}$
of the set family $\{S_{i}\}_{i\in N}$: Specifically, each set of constraints $y_{i,j}=x_{j}$ corresponds to 
$|S_{i}|$ many vertices in $G_{A}$, and contracting
each such set into one vertex produces $G_{\mathcal{I}}$.
Hence, we have that the treewidth $\tw(G_A)$ of this convex program
is at most the treewidth of $G_{\mathcal{I}}$. For the denoising
problem above, the intersection graph is in fact close to a path and
has constant treewidth. Therefore, our result shows that this problem
can be solved in nearly-linear time, without relying on the specific
formula or structure. 

\subsection{Difficulties}

In this section, we discuss a few alternate approaches to our problem
and why they likely prove unfruitful. We will illustrate using problems of the form \cref{eq:conv_form2}
 when it is more straightforward.

\subsubsection{Dynamic Programming} \label{sec:DP}

Dynamic programming is a natural first approach,
as has been applied to other low treewidth problems. To explain the
difficulty of achieving fully-polynomial-time fixed-parameter tractability
in the optimization setting, we consider the following simplified
problem: Given a graph $G=(V,E)$ with a convex function $f_{e}:\R^{2}\rightarrow\R$
for every edge $e\in E$, consider the objective function on $\R^{V}$
defined by
\begin{align} \label{eq:f_graph}
f_{G}(x)=\sum_{ij\in E}f_{ij}(x_{i},x_{j}).
\end{align}
To divide the problem into smaller one, we consider any small
balanced vertex separator $S\subset V$; namely $V$ is partition
into three sets $S$, $L$ and $R$ such that there are no edges between
$L$ and $R$. We can write the objective function $f(x)$ by
\[
f_{G}(x)=f_{L}(x)+f_{R}(x)+f_{G-E(L)-E(R)}(x),
\]
where $f_{T}(x)=\sum_{ij\in E(T)}f_{ij}(x_{i},x_{j})$ and $E(T)$ is the set of edges with at least one end point in $T$. To minimize
$f_{G}$, it suffices to fix $x_{S}$ and recursively minimizing $x$
on $L$ and $R$, and minimize over all fixed $x_{S}$. Namely,
\[
\min_{x}f_{G}(x)=\min_{x_{S}}f_{G-E(L)-E(R)}(x_S)+ \widetilde{f_{L}}(x_{S}) + \widetilde{f_{R}}(x_{S}).
\]
where $\widetilde{f_{L}}(x_{S})=\min_{x_{L}}f(x_{S},x_{L})$ and $\widetilde{f_{R}}(x_{S})=\min_{x_{R}}f(x_{S},x_{R})$. Here, we crucially use the fact that $f_{G-E(L)-E(R)}(x)$ depends only on the variables in $S$, but not $L$ and $R$; the term $f_{L}$ depends only on the variables in $L$ and $S$, but not $R$; similarly for $f_{R}$. In general, if $f$ is convex, then both $\widetilde{f_{L}}$ and $\widetilde{f_{R}}$ are convex functions on $\R^{S}$. Hence,
the formula shows that we can solve the optimization problem by first
constructing the reduced problem on $G[L]$ and $G[R]$, then solve a size $|S|$
optimization problem.

If the $f_{ij}$'s are all quadratic functions, then both $\widetilde{f_{L}}$
and $\widetilde{f_{R}}$ are quadratic functions, and it turns out
they can be stored as matrices known as Schur complements. Hence,
we can solve the problem with the approach described above; in fact,
algebraic manipulation gives the sparse Cholesky factorization algorithm
with runtime $\widetilde{O}(n\cdot\tau^{2})$.

However, for general convex function $f_{G}$, it is not known how
to store the functions $\widetilde{f_{L}}$ and $\widetilde{f_{R}}$
efficiently, and this will likely require runtime exponential in treewidth.
At a high level, the reason is that before we solve the outer problem
$f_{G}$, we do not know at which fixed $x_{S}$ we should recurse
on for $\widetilde{f_{L}}$ and $\widetilde{f_{R}}$. It is known that
without adaptivity, exponentially many oracle calls are needed to
minimize a general convex function~\cite{nemirovski1994parallel, balkanski2018parallelization, bubeck2019complexity}.
This suggests we should compute $\widetilde{f_{L}}$ and $\widetilde{f_{R}}$
recursively for each different $x_{S}$. However, it is likely that
we need to access at least two different points $x_{S}$, and this
already leads to runtime recursion $T(n)\geq4T(n/2)+O(1)$ which is
at least $n^{2}$. Therefore, dynamic programming appears to be inefficient
for general convex optimization.

\subsubsection{Scanning Through Variables}

When the underlying structure of the variable dependencies is simple
enough, a simple scan through the variables may suffice for the problem
at hand; for example,~\cite{durfee2019efficient} successfully applies
this approach for function-fitting problems on a path. To illustrate,
consider a problem of the form
\[
\min_{x}F(x)\defeq f_{1}(x_{1},x_{2},\cdots,x_{k})+f_{2}(x_{2},\cdots,x_{k+1})+f_{3}(x_{3},\cdots,x_{k+2})+\cdots.
\]
Suppose $x^{*}$ is the unique minimizer of the function and 
$x_{1}^{*},x_{2}^{*},\cdots,x_{k-1}^{*}$ are given.
By looking at the gradient of the function above at the first coordinate,
we know that
\[
\frac{\partial}{\partial x_{1}}F(x)=\frac{\partial}{\partial x_{1}}f_{1}(x_{1}^{*},x_{2}^{*},\cdots,x_{k}^{*})=0.
\]
Since $x_{1}^{*},x_{2}^{*},\cdots,x_{k-1}^{*}$ is given, this is
a one variable non-linear equation on $x_{k}^{*}$ and it has a unique
solution under mild assumptions. Solving these equations, we obtain
$x_{k}^{*}$. Now, looking at $\frac{\partial}{\partial x_{2}}F(x)$,
we have that
\[
\frac{\partial}{\partial x_{2}}F(x)=\frac{\partial}{\partial x_{2}}f_{1}(x_{1}^{*},x_{2}^{*},\cdots,x_{k}^{*})+\frac{\partial}{\partial x_{2}}f_{2}(x_{2}^{*},x_{3}^{*},\cdots,x_{k+1}^{*})=0.
\]
Since we already know $x_{1}^{*},\cdots,x_{k}^{*}$, this is again
a one variable non-linear equation. Therefore, we can solve this problem
one variable at a time. 

%

This approach can be modelled by an underlying graph structure in the following sense: 
Each variable $x_i$ is represented by vertex $i$ of the graph, and $i\sim j$ if there is some a term $f_k$ dependent on both $i$ and $j$. We say a vertex $i$ is solved if we know $x_i^*$. In the example above, the graph is a thick path, and if the first $k-1$ vertices are solved at the beginning, then we can follow the path to solve the remaining vertices one by one.

Unfortunately, this type of scan-based algorithm cannot be generalized.
Consider a convex function of the form \cref{eq:f_graph}
where the graph $G$ is a complete binary tree with $n$ leaves. Let $i$ be a vertex such that the subtree rooted at $i$ is of height two containing four leaves. Observe that we cannot solve for the children of $i$ by case analysis, if both $i$ and the leaves are unsolved. Since there are $n/4$ many subtree of height two in $G$, at least $n/4$ many variables must be known at the beginning, before we can follow the graph structure to solve for the remaining variables. As such, this approach does not produce any meaningful simplification.

\subsubsection{Tightening the Iterations Bounds for Interior Point Methods}\label{sec:IPM_lower}

Another natural approach for attacking the conjecture is to prove
that existing polynomial time methods for linear program run faster
automatically for graphs with low treewidth. Currently, there are
two family of polynomial time algorithms -- the ellipsoid method
(more generally cutting plane methods) and interior point methods.
For cutting plane methods, $n$
iterations are needed in general, since the method only obtains one
hyperplane per iteration, and we need $n$ hyperplane simply to represent
the solution even for the case $\tw(A)=O(1)$. In general, these hyperplanes are represented by dense vectors and will probably take $n^2$ time in total.

For interior point methods, the iteration bound is less clear since
there is no information obstruction. In general, it is known that
$O(\sqrt{n}\log(1/\eps))$ iterations are needed to solve a linear
program, and each iteration involves solving a linear system. For
the case $d=\Theta(n)$ in particular, this bound has not been improved
since the '80s. In fact, it has been shown that the standard interior
point method used in practice indeed takes $\Omega(\sqrt{n}\log(1/\eps))$
iterations in the worst case~\cite{mut2014tight, allamigeon2018log}, and some of these
constructions have $O(1)$ treewidth. Even for concrete problems such
as maximum flow, difficult instances for iterative methods often have
treewidth $O(1)$~\cite{kelner2014almost}. These lower bounds suggest that obtaining an optimization
method with $\widetilde{O}(\tw^{O(1)}(A))$ iterations requires a substantively new algorithm.

\subsubsection{Faster Iterations via Inverse Maintenance}

Dual to the previous approach is the idea of speeding up each iteration
of interior point methods. Each iteration of these methods require some computation or maintenance
involving a term $(AH^{-1}A^{\top})^{-1}$; previous work on linear
programming focused on inverse maintenance techniques to accomplish
this either explicitly or implicitly. In~\cite{DBLP:conf/stoc/CohenLS19, van2020solving, jiang2020faster},
the inverse is explicitly maintained and this takes at least $d^{2}$
time in total. ~\cite{brand2020bipartite, brand2020maxflow} focus on IPM for the
bipartite matching problem and the maximum flow problem, where a sparsified Laplacian system $AH^{-1}A^{\top}x=b$
is solved directly in each iteration and hence the whole algorithm
takes at least $d$ per step and $d^{1.5}$ time in total, where $d$ is the number of vertices.
It seems that either approach cannot lead to nearly linear time (when
$n=\Theta(d)$). 

In our setting, one natural approach is to maintain the Cholesky factorization
$LL^{\top}=AH^{-1}A^{\top}$. This can be done in nearly-linear time
in total, by combining ideas from numerical methods~\cite{Davis2006book}
and previous algorithms mentioned above. Unfortunately, in general,
almost any sparse update in $H$ leads to $\Omega(d)$ changes in
$L^{-1}$. Hence, it seems difficult to get a runtime faster than $d^{1.5}$
by just combining inverse maintenance with current knowledge of sparse
Cholesky factorization.

\subsection{Related Works}

\subsubsection{Algorithms With Runtime at Least Exponential to Treewidth}\label{sec:exp-treewidth-algs}

The notion of treewidth is closely tied to vertex separators; specifically,
low treewidth graphs have small vertex separators, and this structure
is amenable to a dynamic-programming approach for various
problems. A number of NP-hard problems such as $\textsc{Independent Set}$, $\textsc{Hamiltonian Circuit}$,
$\textsc{Steiner Tree}$, and $\textsc{Travelling Salesman}$ can be
solved with runtimes that depend only linearly on the problem size
and exponentially on treewidth~\cite{bodlaender1994tourist} as the
result of dynamic programming. They are extensively studied as part
of the class of fixed-parameter tractible problems. 
In general, dynamic programming style approaches based on the tree decomposition
unfortunately almost always lead to an exponential dependence on treewidth,
even for polynomial-time solvable problems. 

We point to one particular recent result here, which is a $2^{O(k^{2})}\cdot n$
time algorithm to find $k$ disjoint paths given $k$ vertex pairs
on a planar graph by~\cite{lokshtanov2020exponential}; it appears
to be one of the first algorithms to exploit treewidth in a completely
different way from dynamic programming. 

\subsubsection{Algorithms with Runtime Polynomial to Treewidth}\label{sec:poly-treewidth-algs}

When the problem is linear algebraic, such as solving linear systems and computing rank or determinant,
the dynamic programming approaches often leads to runtime polynomial in treewidth.

For linear systems $Ax=b$,
George first developed the method of nested dissection in~\cite{george1973nested},
which leveraged the underlying graph structure of $A$ for the case
where it is a grid. This was generalized by the seminal work of Lipton,
Rose and Tarjan in~\cite{lipton1979generalized}, to solving systems
where $A$ is any symmetric positive-definite matrix whose underlying
graph has good balanced vertex separators. This was further extended
by~\cite{alon2013matrix}, to apply to non-singular matrices over any field. 
The Cholesky factorization of $A$ is a key part of all aforementioned
results; it has a long line of study in numerical analysis~\cite{Davis2006book},
and is used as the default sparse linear system solver in various languages such as Julia, $\textsc{Matlab}$ and Python.
Our algorithm heavily relies on the machineries developed in this line of work.

Recently, \cite{fomin2018fully} shows several problems can be reduced to matrix factorizations efficiently, including computing determinant, computing rank, and finding 
maximum matching, and this leads to $O(\tau^{O(1)}\cdot n)$ time algorithms where $\tau$ is the width of the given tree decomposition of the graph.
The only non-linear algebraic $O(\tau^{O(1)}\cdot n)$ time problem
we are aware of is {\sc Unweighted Maximum Vertex-Flow}~\cite{fomin2018fully},
which makes use of the crucial fact that the vertex separator size
is directedly connected to the flow size to achieve a $\widetilde{O}(\tau^{2}\cdot n)$
runtime.


When we are not restricted to nearly linear-time algorithms, 
~\cite{kyng2018incomplete} combines nested
dissection with support theory to solve the class of linear systems
where $A$ can be viewed as a higher dimensional graph Laplacian.
For semidefinite
programming,~\cite{zhang2018sparse} shows that interior point methods
can solve certain classes of sparse semidefinite programs in $O(\tau^{6.5}n^{1.5}\log(1/\eps))$
time, where $\tau$ is a sparsity parameter for SDPs analogous to
treewidth for LPs. Both algorithms require solving super-logarithm many linear systems.

As far as we know, there is no previous work on linear programming in direct relation to treewidth. 

\subsubsection{Related Works in Optimization}

A long line of work in the integer-linear programming community studies solving ILPs with respect to fixed treedepth, a parameter related but more restrictive than treewidth; indeed, ILPs can be weakly NP-hard even on instances with treewidth at most two. 
For an ILP with treedepth denoted $\mathrm{td}(A)$, 
\cite{eisenbrand2019algorithmic} gives a weakly polynomial ILPs algorithm running in time $O(g(\min\{\mathrm{td}(A),\mathrm{td}(A^{\top})\})\cdot\text{poly}(n))$,
where $g$ is at least some doubly-exponential function. 
This is followed-up
by~\cite{cslovjecsek2020blockstructured}, giving a strongly polynomial
algorithm running in $2^{O(\mathrm{td}\cdot2^{\mathrm{td}})}\Delta^{O(2^{\mathrm{td}})}n^{1+o(1)}$time,
where $\Delta$ is an upper-bound on the absolute value of an entry of $A$. 
~\cite{eisenbrand2019algorithmic} also discusses how an algorithm for ILP may be used to solve LP,~\cite{brand2019parameterized} builds on this to show an algorithm solving mixed integer-linear programs in time $f(a, \mathrm{td}(A)) \poly(n)$, where $a$ is the largest coefficient of the constraint matrix. 

The optimization work in this paper is mainly inspired by techniques
for general interior point methods, where the first proof of a polynomial
time algorithm was due to Karmarkar~\cite{karmarkar1984new}. After
multiple running time improvements~\cite{karmarkar1984new, renegar1988polynomial, vaidya1989new, nesterov1991acceleration, DBLP:journals/corr/abs-1910-08033, DBLP:conf/stoc/CohenLS19, DBLP:conf/colt/LeeSZ19, brand2020bipartite, van2020solving},
the current fastest IPMs are the results of~\cite{brand2020maxflow}
and~\cite{jiang2020fasterSDP}. We build on this recent line of work,
where ideas from interior point methods, Johnson-Lindenstrauss sketching, and linear algebraic data structures are
combined. For our dynamic data structure, we inspired by ideas similar to wavelets
commonly found in signal processing~\cite{rioul1991wavelets}, where
we maintain IPM information across iterations at different scales,
and process updates in every level of resolution.

\section{Overview of Our Approach}

In this section, we provide a high-level explanation of the overall approach and the techniques used. 
We discuss the more general convex formulation given in~\cref{thm:conv_main}, but for simplicity, \emph{we assume each $n_i = 1$ and $m = n$} in the statement of the theorem; this allows us to directly refer to coordinates of all relevant matrices and vectors, rather than blocks. We revert back to blocks for the detailed proofs in later sections.


Our algorithm is based on interior point methods~\cite{nesterov1994interior}.
These methods solve the convex program by alternating between taking
a gradient step, and projecting back to the constraint set $Ax=b$
under a suitable norm. The movement of $x$ follows some path $x(t)$
inside the interior of the domain $K$, with $t$ decreasing by
a $1-\Theta(1/\sqrt{n})$ factor every iteration, starting at some point $x(1)\in K$
and ending at the solution $x(0)$ we want to find. We use the common
central path defined by
\begin{equation}
x(t)=\arg\min_{Ax=b}c^{\top}x+t\sum_{i=1}^{m}\phi_{i}(x_{i})\label{eq:ipm_formula-1}
\end{equation}
where $\phi_{i}$ is a self-concordant barrier function 
(\cref{defn:self-concordant}) on
$K_{i}$ that blows up on $\partial K_{i}$, namely, $\phi_{i}(x_{i})\rightarrow+\infty$
as $x_{i}\rightarrow\partial K_{i}$. 
We simultaneously require the dual central path $s(t)$ \cref{eq:KKT}, where $s$ is maintained similarly to $x$. 

The main difficulty is in following the path $x(t)$ efficiently.
At timestep $t$ of the central path, the current point $x$ is updated
by $x\leftarrow x+\delta_{x}$, where
\begin{equation}\label{eq:ipm-idea}
\delta_{x}=\left( H_x^{-1}-H_x^{-1}A^{\top}(AH_x^{-1}A^{\top})^{-1}AH_x^{-1}\right) \delta_{\mu}(x,s,t)
\end{equation}
for some non-negative diagonal matrix $H_x$ dependent on $x$, 
and vector $\delta_{\mu}$ dependent on $(x,s,t)$.

Our work therefore focuses on how to quickly and approximately maintain~\cref{eq:ipm-idea}, and the accumulation of $\delta_x$ over the entire central path for the final solution $x$ (and analogously for the dual solution $\delta_s$ and $s$). In \cref{sec:overview_RCPM}, we follow existing results and approximate $AH_{x}^{-1}A^{\top}$ by $AH_{\ox}^{-1}A^{\top}$ where $\ox$ is an approximation of $x$. This ensures the change in $AH_{\ox}^{-1}A^{\top}$ is low-rank in each iteration, which allows us to update $(AH_{\ox}^{-1}A^{\top})^{-1}$ implicitly and efficiently using existing results in Cholesky decomposition, outlined in  \cref{sec:overview_chol}. Unfortunately, the change of $\delta_{x}$ is dense even under a sparse change of $\ox$. In \cref{sec:overview_representation}, we propose a novel representation of $\delta_{x}$, where only $\widetilde{O}(n \tau \log(1/\eps))$ ``coefficients'' are changed during the central path. This allows us to maintain the solution $x$ implicitly during the whole algorithm using only $\widetilde{O}(n \tau^2 \log(1/\eps))$ time. Finally, to maintain $AH_{\ox}^{-1}A^{\top}$ close to $AH_{x}^{-1}A^{\top}$, we show how to detect large coordinate changes in this new representation in \cref{sec:overview_ds}.

The main contributions of this paper is the novel representation of the central path and the data structure to maintain and detect changes under this representation. We believe that this representation will be of independent interest beyond convex programs with low treewidth.

\subsection{Robust Central Path Method}\label{sec:overview_RCPM}
Although each entry of $H_x$ and $\delta_{\mu}$ is updated at every step due to the dense update of $x$,
a \emph{robust central path} circumvents the need to recompute them completely in every iteration, and thus lowers the cost of each step.
This idea has been used since the first interior point method~\cite{karmarkar1984new}, and has led to significant recent progress in convex optimization~\cite{DBLP:conf/stoc/CohenLS19, van2020deterministic, van2020solving, brand2020bipartite, jiang2020faster, jiang2020improved, brand2020maxflow}.

In \cref{sec:Robust-IPM}, we give our robust central path algorithm (\cref{alg:IPM_framework}), which is a slight variant of the one presented in~\cite{DBLP:conf/colt/LeeSZ19}. The changes are needed to support some extra approximation required by our new representation. \cref{thm:IPM_framework} shows that to solve problem \cref{eq:conv_form}, it suffices to implement $\wt O(\sqrt{n} \log(1/\eps))$ approximate steps
\begin{align}
x & \leftarrow x + (H_{\ox}^{-1}-H_{\ox}^{-1}A^{\top}(AH_{\ox}^{-1}A^{\top})^{-1}AH_{\ox}^{-1})\delta_{\mu}(\ox,\os,\ot) \label{eq:central-path-update}\\
s & \leftarrow s + tA^{\top}(AH_{\ox}^{-1}A^{\top})^{-1}AH_{\ox}^{-1}\delta_{\mu}(\ox,\os,\ot) \nonumber
\end{align}
where $\ox, \os$ are vectors close to $x,s$, and $\ot$ is a scalar close to $t$.

We only need to output $(x,s)$ at the end, and do not need their exact values during the algorithm. Instead, it suffices to detect which coordinate has changed too much and update the approximation $(\ox,\os)$ accordingly. 
For interior point methods, if updated lazily, there are only a nearly linear number of coordinate changes to $\ox$ and $\os$ during the whole algorithm:
$$\sum_{\text{IPM step } k} \| \ox^{(k+1)} - \ox^{(k)} \|_0 + \sum_{\text{IPM step } k} \| \os^{(k+1)} - \os^{(k)} \|_0 = \wt O(n \log(1/\eps)).$$
Since $\ox, \os$ are $n$-dimensional vectors, every coordinate is updated only roughly $\log(1/\eps)$ times on average, and hence it allows for very efficient updates of the approximate steps. In particular, we have the following:
\begin{equation}
\text{Throughout the algorithm, there are only }\widetilde{O}(n\log(1/\eps))\text{ coordinate updates to }H_{\ox}.\label{eq:overview_H_change}
\end{equation}

\subsection{Cholesky Decomposition}\label{sec:overview_chol}

In recent IPM works, each iteration involves either
computing or maintaining $(AH_{\ox}^{-1}A^{\top})^{-1}$
of the update given in~\cref{eq:central-path-update}. 
However, this is too expensive for our setting, even for the case of constant treewidth.
The change of the inverse between consecutive steps is usually a dense matrix (possibly small rank) which takes at least $\Omega(d)$ space to represent.
In our algorithm, we instead maintain the sparse Cholesky decomposition.

$AH_{\ox}^{-1}A^{\top}$ is a positive-definite matrix, and therefore
admits a unique \emph{Cholesky decomposition} \\
$AH_{\ox}^{-1}A^{\top}=LL^{\top}$,
where $L$ is a lower-triangular matrix with positive diagonal entries.
The diagonal matrix $H_{\ox}$ changes throughout the algorithm, but crucially,
this only changes the entries of $L$, not its non-zero pattern.
In~\cref{sec:tw_tex}, we discuss how to compute a permutation of the rows
of $A$ (and correspondingly entries of \textbf{$b$}), and an associated
elimination tree $\mathcal{T}$ of $A$, which reflects the non-zero pattern of
$L$. Suppose the rows of $A$ has been reordered, and then $AH_{\ox}^{-1}A^{\top}$is
factored into $LL^{\top}$. Let $\tau$ be the height of the elimination
tree $\mathcal{T}.$ The following properties hold (\cref{thm:main-tw-Cholesky}):
\begin{itemize}
\item $\mathcal{T}$ is a tree on $d$ vertices $\{1,\dots,d\}$, with vertex
$i$ representing row/column $i$ of $L$.
\item The columns of $A,\,L,$ and $L^{-1}$ are all $\tau$-sparse.
\item The non-zero entries of $L^{-1}e_{i}$ and $Le_{i}$ are respectively
subsets of the path from vertex $i$ to the root of $\mathcal{T}$.
Furthermore, they can be computed in $\tau^{O(1)}$ time.
\item For a single coordinate change in $H_{\ox}$, it takes $\tau^{O(1)}$ time
to update $L$ exactly.
\end{itemize}
Now, we can rewrite $(AH_{\ox}^{-1}A^{\top})^{-1}$ as $L^{-\top}L^{-1}$, and take advantage of the sparsity of $L$ via $\mathcal{T}$ in the algorithm. In particular, by \cref{eq:overview_H_change}, we have the following:
\begin{equation}
\text{Throughout the algorithm, there are only }\widetilde{O}(n \tau^{O(1)}\log(1/\eps))\text{ coordinate updates to }L.\label{eq:overview_L_change}
\end{equation}

\subsection{Multiscale Representation of the Central Path}\label{sec:overview_representation}
To implement the central path steps, we want all variables to change in a sparse way, 
so we can update quickly between iterations. In particular, we want to represent $x$ (similarly $s$) implicitly by
$$ x = x_0 +  B h$$
for some vectors $x_0, h$ and some basis matrix $B$. 

When $H_{\ox}$ and $\delta_{\mu}$ admit only sparse changes between steps,
the first term ($H_{\ox}^{-1} \delta_{\mu}$) of~\cref{eq:central-path-update} is easy to compute explicitly, 
which we do and maintain as part of $x_0$.
Part of the second term given by $h\defeq L^{-1}AH_{\ox}^{-1/2}\delta_{\mu}$ is similarly easy to maintain,
due to the fact that each column of $L^{-1}$ and $A$ has sparsity
$\tau$ and can be obtained in $\tau^{O(1)}$ time. However, computing
and maintaining $H_{\ox}^{-1} A^{\top}L^{-\top}h$ explicitly is still costly. The first key observation of this paper is that the representation 
$$x = x_0 +  H_{\ox}^{-1} A^{\top}L^{-\top}h$$ 
has the following properties:

\begin{enumerate}
\item For any $i$, we can compute $x_i$ in $\tau^{O(1)}$ time.

Note that $x_i = (x_0)_i + h^\top L^{-1} A H_{\ox}^{-1} e_i$. Since we know each column of $A$ is $\tau$-sparse, we can compute $A H_{\ox}^{-1} e_i$ in $O(\tau)$ time and it is $O(\tau)$ sparse. Hence, $L^{-1} A H_{\ox}^{-1} e_i$ is just a mixture of $O(\tau)$ many columns of $L^{-1}$ and since each column of $L^{-1}$ is $O(\tau)$ sparse, we can compute it in $\tau^{O(1)}$ time. This gives a $\tau^{O(1)}$ time algorithm to compute $x_i$.

\item After a sparse update to $L$ and $H_{\ox}$, we can maintain the representation in $\tau^{O(1)}$ time.

More precisely, given $x = x_0 +  H_{\ox}^{-1} A^{\top}L^{-\top}h$, $\new{L} = L + \Delta L$, $\new{H_{\ox}} = H_{\ox} + \Delta H_{\ox}$, then we can find $\new{x_0}$ and $\new{h}$ in $\tau^{O(1)}$ time such that $x = \new{x_0} +  (\new{H_{\ox}})^{-1} A^{\top}(\new{L})^{-\top}\new{h}$.

For the change of $\new{H_{\ox}}$, we can simply set $\new{x_0} = x_0 + (H_{\ox}^{-1}- (\new{H_{\ox}})^{-1}) A^{\top}(\new{L})^{-\top}h$. Since $(H_{\ox}^{-1}- (\new{H_{\ox}})^{-1})$ is sparse, we can compute the term $(H_{\ox}^{-1}- (\new{H_{\ox}})^{-1}) A^{\top}(\new{L})^{-\top}h$ by the approach from Property 1 (computing the formula from left to right).

For the change of $\new{L}$, we simply need to find $\new{h}$ such that $(\new{L})^{-\top}\new{h} = L^{-\top}h$. Rearranging, we have $\new{h} = (\new{L})^{\top} L^{-\top}h = h + (\Delta L)^{\top} L^{-\top}h$. Again, since $(\Delta L)^{\top}$ is sparse, we can compute it from left to right.
\end{enumerate}

From now on, we call $h$ the \emph{multiscale coefficients}. Since there are only $\wt O(n \tau^{O(1)} \log(1/\eps))$ coordinates in $H_{\ox}$ and $L$ (\cref{eq:overview_H_change}, \cref{eq:overview_L_change}), Property 2 shows that we can maintain this representation in $\wt O(n \tau^{O(1)} \log(1/\eps))$ time. Furthermore, we have:
\begin{equation}
\text{Throughout, there are only }\widetilde{O}(n \tau^{O(1)}\log(1/\eps))\text{ coordinates updates to the multiscale coefficients.}\label{eq:overview_h_change}
\end{equation}
Finally, Property 1 shows that these multiscale coefficients is as good as explicit representation since we can read any entry in $\tau^{O(1)}$ time. Suppose we know which coordinates of $x$ deviated from $\ox$ significantly, then we can simply use Property 1 to update $\ox$. 

Combining this with heavy-hitter ideas, we can easily get an algorithm of time $\widetilde{O}(n^{1.25} \tau^{O(1)}\log(1/\eps))$ (See~\cite{ye2020fast} for an earlier draft version of this paper).

\subsection{Data Structures for Maintaining Multiscale Representation}\label{sec:overview_ds}

A key component of our algorithm revolves around finding which coordinates of $x$ deviate significantly from $\ox$. Specifically, we want to find large coordinate in $H_{\ox}^{1/2}(x - \ox)$, where the term $H_{\ox}^{1/2}$ is to measure the deviation in a correct norm required by the interior point method.

Similar to the discussion above, we can maintain $H_{\ox}^{1/2}(x - \ox)$ implicitly as $x_0 + \mathcal{W}^\top h$ 
for some sparsely changing vectors $x_0$, where $\mathcal{W} \defeq L^{-1} A H_{\ox}^{-1/2}$ and $h \defeq L^{-1}AH_{\ox}^{-1/2}\delta_\mu$. Here, we focus on discussing the change of the term $\mathcal{W}^\top h$; 
analogous ideas are used for $x_0$.

First, observe that we cannot maintain $\mathcal{W}^\top h = H_{\ox}^{-1/2} A^{\top}L^{-\top}h$, as
the rows of $A$ and $L^{-1}$ may be dense. However,
it is relatively easy to maintain $v^{\top}\mathcal{W}^\top h$ for
any vector $v$, since $v^{\top}\mathcal{W}^\top h = h^\top \mathcal{W} v = h^{\top}L^{-1}AH_{\ox}^{-1/2}v$,
and we can exploit the column sparsity of $A$ and $L^{-1}$.
If we use a Johnson-Lindenstrauss sketching
matrix $\Phi$ in place of $v$ and maintain $\Phi \mathcal{W}^\top h$, then this allows us to quickly estimate $\|\mathcal{W}^\top h\|_{2}^{2}$.

We construct a data structure called the \emph{sampling tree} $\mathcal{S}$ (\cref{def:sampling-tree}), 
based on the elimination tree $\mathcal{T}$, to
store a family of sketches of the form $\Phi \mathcal{W}^\top h$. 
Specifically, $\mathcal{S}$ is a constant-degree tree with leaves given by the set $[n]$,
where leaf $i$ corresponds to $(\mathcal{W}^\top h)_i$.
For any node $v \in V(\mathcal{S})$, let $\chi(v) \subseteq [n]$ denote the set of all leaves in the subtree rooted at $v$, and let $\Phi_{\chi(v)}$
denote the JL sketching matrix restricted to the indices given by $\chi(v)$.
Then at node $v$, we maintain $\Phi_{\chi(v)} \mathcal{W}^\top h$. 
By JL properties, we can estimate $\|(\mathcal{W}^\top h)|_{\chi(v)}\|_{2}^{2}$ at each node $v$; 
in other words, we have the approximate $\ell_2$-norm of various subvectors of $\mathcal{W}^\top h$ of different lengths. 
Using this information, we can apply the standard sampling technique of walking down $\mathcal{S}$ from the root to a leaf:
\begin{equation}\label{eq:overview-sample-tree}
\text{We can sample for a coordinate $i$ proportional to $(\mathcal{W}^\top h)_i^{2}$ in $O(\mathrm{height}(\mathcal{S})) \leq \widetilde{O}(\tau)$ \emph{steps}.}
\end{equation}
A large coordinate $(\mathcal{W}^\top h)_i$ means $x_i$ and $\ox_i$ differ significantly. Then we compute $x_i$ exactly  and update $\ox_i \leftarrow x_i$.

$\ox$ and $\delta_\mu$ are updated every iteration, 
hence, we must maintain the latest $\mathcal{W}$ and $h$ to support sampling using $\mathcal{S}$.
As there are $\wt O(n\tau)$ nodes in $\mathcal{S}$,
we do not have enough time to update $\Phi_{\chi(v)}\mathcal{W}^\top h$ at every node $v$ every iteration.
However, observe that we only need to know the latest value of $\norm{(\mathcal{W}^\top h)|_{\chi(v)}}_{2}^{2}$ during the sampling procedure, 
and as $\mathcal{S}$ is a constant-degree tree, sampling once only visits $O(\mathrm{height}(\mathcal{S}))$ nodes in $\mathcal{S}$. So we may rely on a form of lazy maintenance. Here, we focus on discussing a coordinate change in $\ox$; analogous ideas are used for changes in $\delta_\mu$. 

For a single coordinate change in $\ox_i$, we need to update $H_{\ox}^{-1/2}$ and $L$, but crucially the update only affects $\Phi_{\chi(v)} \mathcal{W}^\top h$ at select nodes of $\mathcal{S}$.
Specifically, $H_{\ox}^{-1/2}$ changes by a single entry, which causes the value of $\Phi_{\chi(v)} \mathcal{W}^\top h$ to change only if $i \in \chi(v)$. Hence, for each entry update of $H_{\ox}^{-1/2}$, we only need to update a path in $\mathcal{S}$.
On the other hand, a change in $\ox_i$ causes $O(\tau)$ columns of $L$ to update. 
Each column of $L$ has a corresponding node $u \in \mathcal{S}$, such that for any $v \in \mathcal{S}$, the value $\Phi_{\chi(v)}\mathcal{W}^\top h$ maintained at $v$ changes only if $u$ is an ancestor or descendant of $v$.
Hence, for each column update of $L$, we split the effect into two:
\begin{enumerate}
	\item ``upwards'' effect: The updates to ancestors of $u$. Since $u$ has at most $\mathrm{height}(\mathcal{S})$ many ancestors, we have sufficient time to update these sketches immediately. 
	\item ``downwards'' effect: The updates to descendants of $u$. We cannot afford to update the whole subtree rooted at $u$; hence, we delay the update. 
	A node can have $\mathrm{height}(\mathcal{S})$-many delayed updates, with one per ancestor. 
	Then, we can perform all the delayed updates in $\tau^{O(1)}$ time when it is accessed during sampling.
\end{enumerate}
Combined with~\cref{eq:overview-sample-tree}, we have:
\begin{equation}
	\text{$\mathcal{S}$ can be maintained to support sampling a coordinate $i$ in $\tau^{O(1)}$ amortized time}.
\end{equation}

To further lower the cost for the case that $\tau$ is large, we present a more involved construction of a sampling tree using heavy-light decompositions with height $O(\log n)$ (\cref{subsec:balanced-sampling-tree}).

\subsection{Proofs of Main Theorems}
We now link the various pieces of this paper together to prove~\cref{thm:main,thm:main2,thm:conv_main}.

All three settings require a preprocessing step to find a suitable reordering
of the constraints $Ax=b$. Constructing the graph $G_A$ from the non-zero pattern of $AH^{-1}_{\ox}A^\top$ takes $O(n\tau^2)$ time. Then by~\cref{thm:main-tw-Cholesky}, we can find a reordering 
of the rows of $A$ and a binary elimination tree $\mathcal{T}$ for the corresponding
Cholesky decomposition: when a width-$\tau$ tree decomposition of $G_A$ is given as in~\cref{thm:main}, this takes $\widetilde{O}(n \tau)$ time and produces an elimination tree of height $\widetilde{O}(\tau)$. Otherwise, we use \cite{bernstein2021deterministic} to obtain a tree of height $\widetilde{O}(\tw(G_A))$ which takes  $\widetilde{O}(n \cdot \tw(G_A))$ time.

We can reduce the linear program of~\cref{thm:main} to a convex program of the form~\cref{eq:problem}, before invoking~\cref{thm:IPM_framework} for the interior point method.
Specifically, for the LP given in~\cref{thm:main}, each convex set $K_{i}$ is the interval $[u_{i},l_{i}]$ with $n_i = 1$; we have that $\phi_{i}(x_{i})=-\log(u_{i}-x_{i})-\log(x_{i}-l_{i})$
is a 1-self-concordant barrier function for $K_{i}$ minimized by $x_i = (l_i+u_i)/2$; without loss of generality, we may set $w=\mathbf{1}_{m}$ and have $\kappa=n$.

For~\cref{thm:conv_main}, we can  invoke~\cref{thm:IPM_framework} directly. When the barrier functions are not given, we use the universal barrier $\phi_i$ with self-concordance $n_i$ for each $i$ (\cref{subsec:IPM-universal-barrier}); since $n_i = O(1)$, we can find the minimizer $x_i$ of $\phi_i$ as a preprocessing step in $O(1)$ time. As in the LP case, we set $w = \mathbf{1}_m$ and have $\kappa = \sum_{i=1}^m w_i n_i = n$.

\cref{thm:IPM_framework} shows that the robust interior point method given as~\cref{alg:IPM_framework} produces the approximate solution as required, and terminates within $O(\sqrt{\kappa}\log(\kappa/\eps \cdot R/r))=O(\sqrt{n}\log(R/(\eps r)))$ steps. 

The data structure \textsc{CentralPathMaintenance} is used to perform one step of the central path exactly as we need.
The cost of a step is analyzed in~\cref{thm:central-path-algo}. 
Let $\tau$ denote the height of the elimination tree $\mathcal{T}$ computed during preprocessing, and let $N=O(\sqrt{n} \log(n/\eps \cdot R/r))$ denote the number of central path steps.
To begin, we initialize the data structure via $\textsc{Initialize}$
in time $O(n\tau^{2}\log^4(N))$. At timestep $t$,~\cref{alg:IPM_framework} needs to find $\ox,\os, \overline{t}$ and compute updates to $x,s,t$, which is all accomplished by invoking $\textsc{MultiplyAndMove}(t)$. 
As $\textsc{MultiplyAndMove}$ is called $N$ times over the entire algorithm, 
the total runtime is 
$O({Nn^{1/2}}+n{\log(t_\textnormal{max}/t_\textnormal{min})})\cdot \tau^2\poly\log(N)=\widetilde{O}(n\tau^{2}\log(1/\eps))$.
At the very end, $\textsc{Output}$ outputs the result $(x,s)$
exactly in time $O(n\tau^{2})$. 

Finally, for the setting of~\cref{thm:conv_main}, since we use universal barrier functions $\phi_i$ for $i \in [m]$, computing $\grad \phi_i$ and $\grad^2 \phi_i$ as part of the IPM take $O(\log (Rn/r))$ time by~\cref{rem:IPM-universal-barrier}. Hence, we incur an additional $\log(R/r)$ factor in the overall runtime.

\subsection{Wavelet Interpretation}
\begin{figure}[t]
\centering
\includegraphics[width=0.9\textwidth]{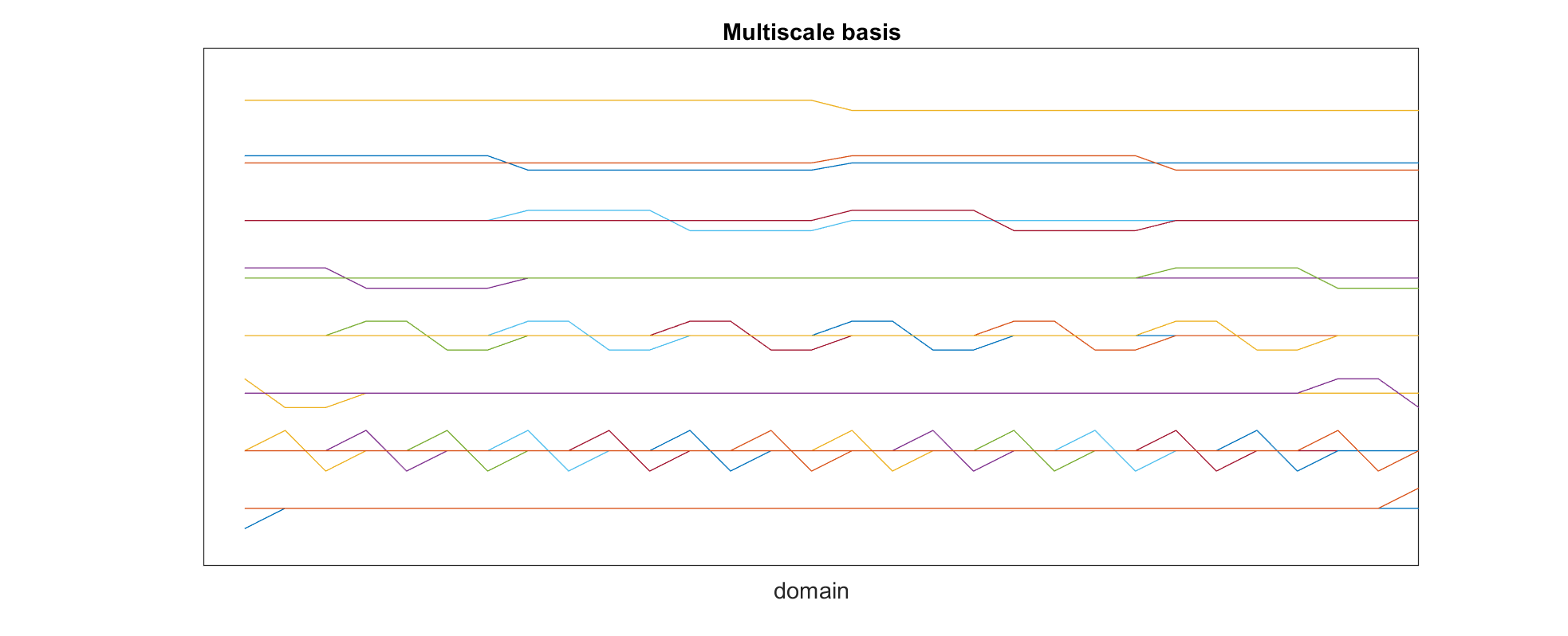}
\caption{The multiscale basis $\{ A^{\T} L^{-1} e_i \}_i$ where $A$ is the incidence matrix of a path. We group the basis by size and shift the basis according to its size for clarity.\label{fig:wavelet}} 
\end{figure}
Now, we explain the geometric meaning of this multiscale representation and its connection to wavelets. The rest of this subsection can be safely skipped as this view is not used in any proof. 

In wavelet theory, a complex signal is
represented as a linear combination of shifted and scaled versions
of a simple signal. In our context, we are representing the pending
change $\delta_{x}$ by various linear combinations of vectors $H_{\ox}^{-1} A^{\top}L^{-\top} e_i$. In the case that
$A$ is the incidence matrix of a path, the elimination tree is simply the complete binary tree, that when flattened in a breadth-first fashion, returns the original path. Here the vertices at different levels of the elimination tree exactly correspond to dyadic intervals of different lengths, while vertices at the same level correspond to dyadic intervals of the same length but with a ``time'' shift. When $H = I$, the vector $H^{-1} A^{\top}L^{-\top} e_i$ in fact looks quite similar to the Haar basis (See \cref{fig:wavelet}).

More precisely, we define the wavelet transform $\mathcal{W} \defeq L^{-1} A H^{-1/2}$, where $\mathcal{W}$ maps the signal in the original space to the coefficient space, with the basis elements corresponding to vertices of the elimination tree. Here we list only some similarities between this and the standard wavelet transform: 

\begin{enumerate}
\item Applying the wavelet and inverse wavelet transform recovers the signal:
\[
\mathcal{W}^{\top}\mathcal{W}h=h\text{ for any }h\in\text{Range}(\mathcal{W}^{\top}).
\]
\item For each point in the original space, it is only covered by a few basis elements with different scales:
\[
\mathcal{W}e_{i}\text{ lies on }O(\tau)\text{ paths on the elimination tree}.
\]
\item For each basis element, it covers the original space with different scales:
\[
\text{The support of }\mathcal{W}^{\top}e_{i}\text{ is roughly a subtree}.
\]
\item There is a fast wavelet transform:
\[
\text{We can apply \ensuremath{\mathcal{W}\text{ and }}\ensuremath{\mathcal{W^{\top}}\text{to any vector in }}}n\tau^{O(1)}\text{ time}.
\]
\end{enumerate}
The key difference is that our wavelet basis does not represent any signal, but only the signal in the range of $H^{-1/2} A^\top$.

\begin{figure}[t]
\centering
\includegraphics[width=\textwidth]{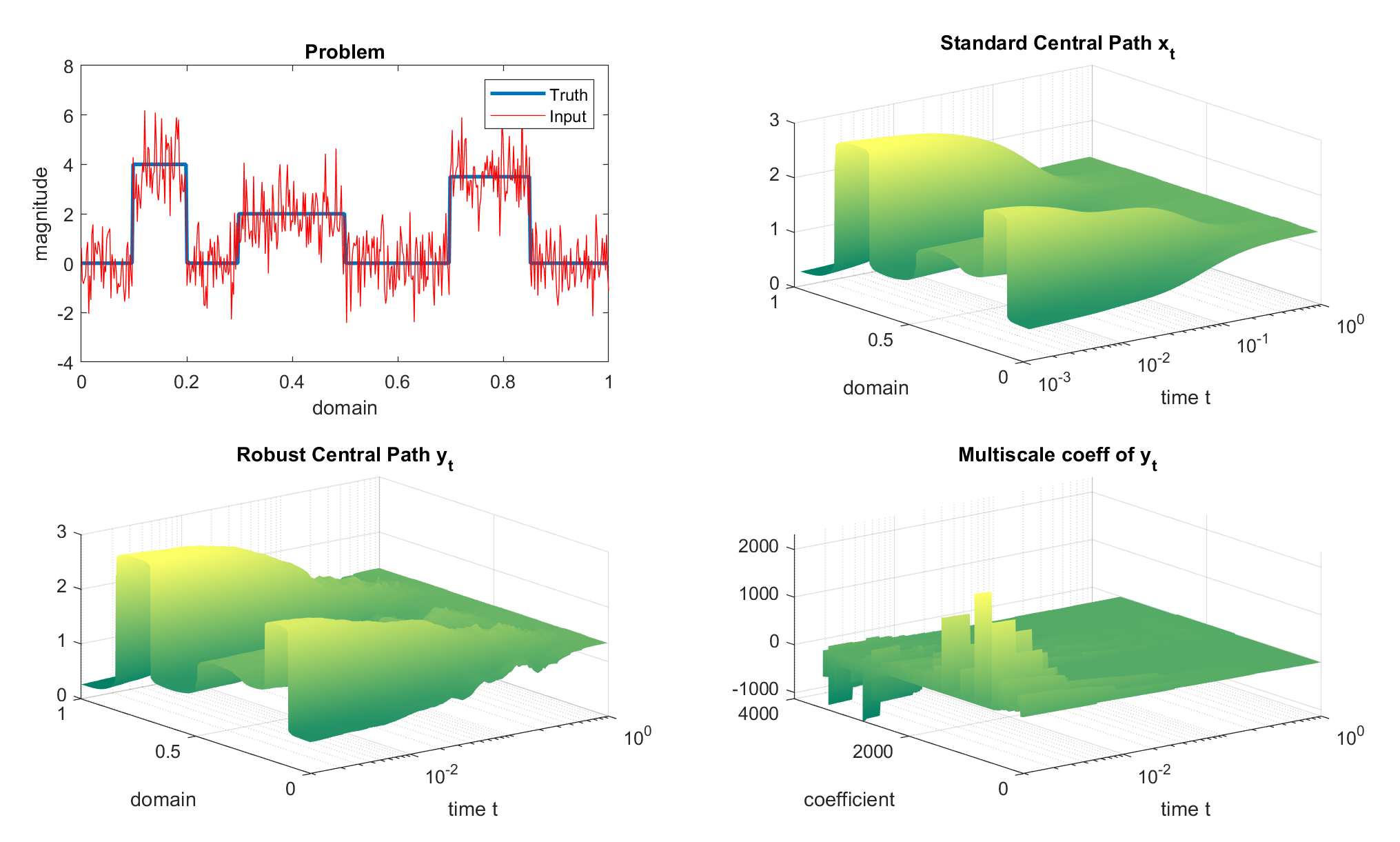}
\caption{Consider the fused lasso problem, with the upper left figure showing the input and true signals. The upper right figure shows the standard central path for this problem. The lower left figure shows the robust central path implicitly maintained in our algorithm. The lower right figure shows the multiscale coefficients we explicitly maintain in the algorithm. \label{fig:IPM}}
\end{figure}

To illustrate the multiscale coefficient, we consider the fused lasso problem\footnote{We pick this problem because it is easy to represent the whole central path as a surface plot.} in ~\cref{fig:IPM}. The central path is a 1-D signal that smoothly moves from a constant at $t=1$ to a recovered signal at $t = 10^{-3}$. Following this smooth transition is expensive, hence we consider the robust central path, which is noisier but converges to the same recovered signal. The noise comes from the approximation of $x$ by $\ox$\footnote{We emphasize $x_t$ is the robust central path, not $\ox_t$. We only use $\ox$ to approximate the linear systems. We cannot use $\ox$ as the solution because it does not satisfy the condition $Ax=b$.}. However, maintaining this robust central path is still quite expensive, because all coordinates change at every step in the robust central path. The crux of this paper is representing the robust central path by the multiscale basis $\{ A^{\T} L^{-1} e_i \}_i$, under which the coefficient changes sparsely.

Finally, we note that choosing this wavelet basis is quite natural from the view of physical science. Consider applying the interior point method for the maximum flow problem on a path: In this case, the linear system $AH_x^{-1}A^{\top}$ is simply a weighted Laplacian on a path, and the central path is simply the solution of some partial differential equations. Numerical differential equations in general face the same computation issues as us, that is, to represent the solution in a sparse way. It has been known since the '90s that both the Laplacian (more generally, elliptical differential equation), its inverse, and the solution can be represented sparsely using the wavelet basis such as~\cite{beylkin1991fast, dahmen1997multiscale}. The idea of using wavelets to approximate the solution has been applied to many partial differential equations~\cite{schaeffer2013sparse}. Arguably, this paper shows that the idea also applies to the ``partial differential equation'' defined by a central path, where the self-concordance theory ensures everything is well-behave enough for this to happen.

\section{Preliminaries} \label{sec:prelim} 

In this section, we introduce the notations we used throughout the paper.

We say a symmetric matrix $A\in \R^{n\times n}$ is positive semidefinite (PSD) if $x^\top Ax\geq 0$ for all $x\in \R^n$ and positive definite (PD) if $x^\top Ax > 0$ for all $x\in \R^n$.
For symmetric matices $A,B\in \R^{n\times n}$, we use $A\succeq B$ to indicate that $A-B$ is a PSD matrix. We define operators $\preceq,\succ,\prec$ analogously.  

For a vector $v\in \R^{n}$, we use $\|v\|_2$ to denote its euclidean norm. 
We use $\|v\|_0$ to denote the number of non-zero entries in $v$.
For a PSD matrix $A \in \R^{n\times n}$, we let $\|v\|_A = \sqrt{v^\top A v}$.

We use $e_i$ to denote the standard unit vector. 
We use $\boldsymbol{0}_n,\boldsymbol{1}_n$ to denote all-zero and all-one vectors in $\R^n$. 
We define $\boldsymbol{0}_{m\times n}$ and $\boldsymbol{1}_{m\times n}$ analogously. 
We write $I_n\in \R^n$ to denote the identity matrix.
When dimensions are clear in the context, we drop the subscripts.

We use upper case letters to denote matrices, and lower cases for vectors and scalars. We use $A\cdot B$ to denote the matrix-matrix multiplication and $A\cdot x$ to denote the matrix-vector multiplication for readability.
When readability is not an issue, the operator $\cdot$ is omitted. To distinguish from the vector dot product, we always use $x^\top y$.

For any matrix $A\in \R^{m \times n}$, we use $A_S$ to denote the matrix restricted to the column (block) indices given by the set $S$. We say a block diagonal matrix $A \in \oplus_{i=1}^m \R^{n_i \times n_i}$ if $A$ can be written as
\begin{align*}
A = \begin{bmatrix}
A_1 & & & \\
& A_2 & & \\
& & \ddots &  \\
& & & A_m
\end{bmatrix}
\end{align*}
where $A_1 \in \R^{n_1 \times n_1}$, $A_2 \in \R^{n_2 \times n_2}$, $\dots$, and $A_m \in \R^{n_m \times n_m}$.

For any matrix $M$, we use $M_i$ to denote the $i$-th column or column block, and $\mathcal{M}_i$ to denote the non-zero pattern of the $i$-th column or column block, i.e. it is a set of row indices. For example, $j \in \mathcal{A}_i$ if row $j$ of $A$ is non-zero in a column in block $i$. We use $M^i$ to denote the $i$-th row of $M$ and $\mathcal{M}^i$ to denote the non-zero pattern of the $i$-th row. 

We use $\widetilde O(\cdot)$ to hide $\log^{O(1)}(n)$ and $(\log \log(1/\eps))^{O(1)}$ factors. We similarly define $\widetilde \Omega$ and $\widetilde \Theta$. 
For any positive integer $n$, we let $[n]$ denote the set $\{1,2,\ldots, n\}$.  
We use $\sinh x$ to denote $\frac{ e^x - e^{-x} }{2}$ and $\cosh x$ to denote $\frac{ e^x + e^{-x} }{2}$.

For a tree $\mathcal{T}=(V,E)$, we write $v\in \mathcal{T}$ or $v\in V(\mathcal{T})$ interchangeably to denote $v\in V$. 
For a rooted tree $\mathcal{T}$, we say a set $S$ \emph{lies on a path of $\mathcal{T}$} if there is a path $\mathcal{P}$ from the root of $\mathcal{T}$ to some node in $\mc{T}$, and $S\subseteq \mathcal{P}$.  
We use $\mathcal{P}(i)$ to denote the set of vertices on the path from vertex $i$ to the root in $\mathcal{T}$, and
$\mathcal{D}(i)$ to denote the set of vertices in the subtree rooted at $i$.

We use the convention that a tree consisting of a single vertex has height 1.

In our pseudocode, we use $\textsf{font}$ to denote data structure objects,  $\textsc{Font}$ to denote functions and object types, and regular math font to denote other variables stored in a data structure. 
Throughout our algorithms, we assume there is a basic object type {\sc List} which gives us random access to all its elements.
We write {\sc DataStructureA} extends {\sc DataStructureB} in the object-oriented programming sense: that is, {\sc DataStructureA} contains all the variables and functions from {\sc DataStructureB}, accessible either directly by name when there is no naming conflict, or with the keyword $\textsf{super}$.

\section{Elimination Tree}\label{sec:tw_tex}

Any positive-definite matrix $M$ admits a unique \emph{Cholesky factorization} $M = LL^\T$, where $L$ is a lower-triangular matrix with real and positive diagonal entries. 
In this section, we review some existing techniques~\cite{bodlaender1995approximating, Davis2006book} for computing a permutation of the linear constraints $Ax=b$, for $A \in \mathbb{R}^{d \times n}$. 
Our goal is to ensure that after permuting the rows of $A$, the Cholesky factorization $LL^\top = AH_{\ox}^{-1}A^\top$ will have certain desired sparsity patterns, 
which is then reflected in an associated \emph{elimination tree}. 



Let the rows of $A$ be labelled $1, 2, \dots, d$. Recall we are given block-diagonal structure $n = \sum_{i=1}^m n_i$ for $A$ and $H_{\ox}$. We identify $A$ in \emph{column blocks}, 
with $A_i$ denoting the $n_{i}$ columns in block $i$. 
We simply use $H$ in the remainder of this section, as we only require its non-zero pattern which is independent of $\ox$; $H$ is an $n \times n$ block-diagonal positive-definite matrix, and without loss of generality, we may assume all entries in each block of $H$ are non-zero. In this case, observe that the $n_i$ columns in block $i$ of $AH^{-1/2}$ all have the same non-zero pattern, which we denote by $\mathcal{A}_i \subseteq [d]$. We use the convention that a tree on one vertex has height 1.



The main results of this section is as follows. We give two cases for the runtime, corresponding to~\cref{thm:main} with a given tree decomposition, and ~\cref{thm:main2} without the decomposition. 
\begin{theorem}\label{thm:main-tw-Cholesky}
	Let $A$ be a $d \times n$ matrix with block structure $n = \sum_{i=1}^m n_i$, and suppose we are given the generalized dual graph $G_A$.
	We can compute a permutation $P$ of the rows of $AH^{-1/2}$ (equivalently, an ordering $\pi : [d] \mapsto [d]$), and a tree $\mathcal{T}$ on $d$ vertices, so that in the Cholesky factorization $PAH^{-1}A^\T P^\T = LL^\T$, 
	\begin{itemize}
		\item each vertex of $\mathcal{T}$ corresponds to a row/column of the Cholesky factor $L$, and
		\item the non-zero entries of $Le_i, L^{-1}e_i$ are respectively subsets of the path from vertex $i$ to the root in $\mathcal{T}$.
	\end{itemize} 
	The second property implies the column sparsity of $L$ and $L^{-1}$ are bounded by $\mathrm{height}(\mathcal{T})$. The following runtimes and associated tree height are possible:
	\begin{enumerate}
	\item $\widetilde{O}(n \cdot \tau)$ if a tree decomposition of the dual graph $G_A$ of width $\tau$ is given.  $\mathrm{height}(\mathcal{T}) = O(\tau \log{n})$.
	\item $\widetilde{O}((n \cdot tw(G_A))^{1+o(1)})$ without a given tree decomposition. $\mathrm{height}(\mathcal{T}) = O(tw(G_A) \polylog{n})$, where $tw(G_A)$ is the treewidth of $G_A$\footnote{Here, we defined the treewidth of a directed graph by simply ignoring the directions of the edges. This definition is compatible with first writing the directed max-flow as an LP, and then taking the treewidth of the dual graph of the constraint matrix.}.
	\end{enumerate}
\end{theorem}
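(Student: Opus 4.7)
The plan is to combine classical sparse Cholesky theory with nested dissection, extracting balanced vertex separators from the tree decomposition of $G_A$. First, since $H^{-1}$ is block-diagonal with full-rank blocks, the off-diagonal sparsity pattern of $M \defeq PAH^{-1}A^{\top}P^{\top}$ coincides with the edge set of the (permuted) generalized dual graph $G_A$, independently of the actual values in $H$. Thus the symbolic Cholesky factorization depends only on $G_A$ and the chosen ordering $\pi$. Given any $\pi$, I would define the elimination tree $\mathcal{T}$ in the standard way, setting $\mathrm{parent}(i) \defeq \min\{j > i : L_{ji} \ne 0\}$. The classical facts I need (see \cite{Davis2006book}) are that the row-support of $Le_i$ lies on the path from $i$ to the root of $\mathcal{T}$; this is proved by induction on the elimination sequence via the no-fill lemma, combined with the block observation above so each block $\mathcal{A}_i$ is treated uniformly. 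For $L^{-1}e_i$, the support is computed by forward substitution starting at $i$, and each step adds a scalar multiple of some column $Le_j$ with $j$ already in the support; by the previous property these additions only introduce ancestors of $j$, so the support stays within the ancestor path of $i$. Both sparsity counts are therefore bounded by $\mathrm{height}(\mathcal{T})$.

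It remains to produce $\pi$ so that $\mathrm{height}(\mathcal{T}) = O(\tau \log n)$, which I would do by nested dissection: recursively find a balanced vertex separator $S \subset V(G_A)$ of size $O(\tau)$, place the vertices of $S$ last in $\pi$, and recurse independently on each connected component of $G_A - S$. Placing $S$ last means all elimination-tree edges crossing between components must pass through a vertex of $S$, and standard analysis then bounds $\mathrm{height}(\mathcal{T})$ by the sum of separator sizes along any root-to-leaf branch of the recursion, giving $O(\tau \log n)$ after $O(\log n)$ halving steps.

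The technical core is extracting the separators from the given width-$\tau$ tree decomposition. I would repeatedly find a centroid bag $B$ of the decomposition tree, i.e., one whose removal splits the decomposition into subtrees each covering at most half of $V(G_A) \setminus B$; such a centroid exists in any tree by a standard argument. By the running-intersection property of tree decompositions, the $\le \tau + 1$ vertices of $B$ form a balanced vertex separator of $G_A$. Recursing on each side (restricting the decomposition to the corresponding subtree and removing $B$ from every remaining bag) drives the dissection. With pointer-based bookkeeping that charges each bag-modification to its eventual assignment, the recursion runs in $\widetilde{O}(n\tau)$ time overall; once $\pi$ is in hand, both $\mathcal{T}$ and the symbolic non-zero pattern of $L$ can be produced in $\widetilde{O}(n\tau)$ time by standard symbolic-factorization algorithms.

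The main obstacle I anticipate is the careful bookkeeping during separator extraction so that maintaining bag contents under deletions and restrictions fits in $\widetilde{O}(n\tau)$ time, rather than the naive $\widetilde{O}(n\tau^2)$ that results from rescanning bags. This can be handled by representing each bag as a linked list shared across recursive calls and charging updates to the final recursion level at which each vertex is eliminated. For case~(2), when no decomposition is provided, I would invoke \cite{bernstein2021deterministic} as a black box to compute a decomposition of width $O(\tw(G_A)\,\polylog n)$ in time $\widetilde{O}(n \cdot \tw(G_A))$, then apply case~(1); the extra $\polylog n$ factor is absorbed into the $\widetilde{O}(\cdot)$ notation, yielding $\mathrm{height}(\mathcal{T}) = O(\tw(G_A)\,\polylog n)$ as claimed.
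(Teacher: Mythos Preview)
Your proposal is correct and follows essentially the same route as the paper: nested dissection driven by balanced separators extracted from the tree decomposition, combined with the classical Schreiber fill-path lemma for $Le_i$ and the forward-substitution argument for $L^{-1}e_i$. The one structural difference is that the paper does not use the standard elimination tree as $\mathcal{T}$; instead, during the recursion it strings each separator $S$ into a path and hangs the two recursive subtrees beneath it, obtaining a \emph{binary} tree $\mathcal{T}$ of height $O(\tau\log n)$, and then argues that every root-path in the true elimination tree $\mathcal{T}'$ is contained in the corresponding root-path of $\mathcal{T}$. This buys a bounded-degree tree (needed by the sampling-tree data structures later in the paper) at the cost of a slightly looser containment argument, whereas your choice of the standard elimination tree gives a tighter $\mathcal{T}$ but no degree bound. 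For the theorem as stated, either construction suffices.
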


Proving \cref{thm:main-tw-Cholesky} requires a number of concepts that may be unfamiliar to the reader. We begin by presenting them and their basic properties in the subsections below.

\subsection{Dual Graph and Treewidth}

We begin with the necessary definitions for completion.

\begin{definition}
	Recall the $\emph{generalized dual graph}$ of the matrix $A \in \mathbb{R}^{d \times n}$ with block structure $n = \sum_{i=1}^m n_i$ is the graph $G_A = (V,E)$ with $V = \{1,\dots, d\}$, and $ij \in E$ if and only if $A_{i,r} \neq \mathbf{0}$ and $A_{j,r} \neq \mathbf{0}$ for some $r$, where we use $A_{i,r}$ to mean the submatrix of $A$ in row $i$ and column block $r$. 
	
	Equivalently, $G_A$ is the dual graph of $AH^{-1/2}$ by the definition in~\cref{thm:main}. In particular, the non-zero pattern of $(AH^{-1/2})(AH^{-1/2})^\T$ is precisely the adjacency matrix of $G_A$.
\end{definition}

\begin{definition}
	A \emph{tree-decomposition} of a graph $G$ is a pair $(X, T)$, where $T$ is a tree, and $X : V(T) \mapsto 2^{V(G)}$ is a family of subsets of $V(G)$ called \emph{bags}  labelling the vertices of $T$, such that
	\begin{enumerate}
		\item $\bigcup_{t \in V(T)} X(t) = V(G)$,
		\item for each $v \in V(G)$, the nodes $t \in V(T)$ with $v \in X(t)$ induces a connected subgraph of $T$, and
		\item for each $e = uv \in V(G)$, there is a node $t \in V(T)$ such that $u,v \in X(t)$. 
	\end{enumerate}
	
	The \emph{width} of a tree-decomposition $(X, T)$ is $\max \{ |X(t)|-1 \; : \; t \in T\}$. The \emph{treewidth} of $G$ is the minimum width over all tree-decompositions of $G$. Intuitively, the treewidth of a graph captures how close the graph is to being a tree.
\end{definition}

The following structural results about treewidth are elementary.
\begin{lemma}\label{lem:tw-edge-bound}
	If $G$ is a graph on $d$ vertices and $tw(G) = \tau$, then $|E(G)| \leq d\tau$. \qed
\end{lemma}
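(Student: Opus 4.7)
The plan is to prove that graphs of treewidth $\tau$ are $\tau$-degenerate in a quantitative, edge-counting form, by rooting a tree decomposition and charging each edge to an appropriate endpoint.

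Specifically, I would fix any tree decomposition $(X,T)$ of $G$ with width $\tau$, so that $|X(t)| \leq \tau+1$ for every $t \in V(T)$. Root $T$ at an arbitrary node. For each vertex $v \in V(G)$, by property (2) of a tree decomposition the set $\{t \in V(T) : v \in X(t)\}$ induces a (nonempty) connected subtree of $T$, so it contains a unique node $t_v$ of minimum depth (closest to the root). I call $t_v$ the \emph{top bag} of $v$.

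Next, I would assign each edge $uv \in E(G)$ to one of its endpoints via the following charging rule. By property (3), there exists $t^* \in V(T)$ with $\{u,v\} \subseteq X(t^*)$. Without loss of generality assume $\depth(t_v) \geq \depth(t_u)$; I charge $uv$ to $v$. The key observation is that $u$ must appear in the bag $X(t_v)$ as well: the subtrees of $T$ consisting of bags containing $u$ and bags containing $v$ both meet $t^*$, so they intersect; hence there is some node on the path from $t^*$ to $t_v$ whose bag contains both $u$ and $v$, and by connectivity $u$ appears in every bag on that path, including $t_v$. Thus the other endpoint $u$ of any edge charged to $v$ lies in $X(t_v) \setminus \{v\}$, a set of size at most $\tau$.

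Summing over vertices, the number of edges charged to $v$ is at most $\tau$, so
\[
|E(G)| \;\leq\; \sum_{v \in V(G)} \tau \;=\; d\tau,
\]
as claimed. The only mildly delicate step is verifying the charging is well-defined and that $u \in X(t_v)$ whenever $uv$ is charged to $v$; this uses the connected-subtree property (2) together with the existence of a common bag from property (3), and is the one place where care is needed to avoid circularity in comparing depths of $t_u$ and $t_v$. Everything else is immediate from the definitions.
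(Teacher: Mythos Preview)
The paper gives no proof of this lemma at all---it is stated with an immediate \qed\ and treated as folklore---so there is nothing to compare your approach against. Your argument is the standard degeneracy/charging proof and is correct.

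One small tightening: your sentence ``by connectivity $u$ appears in every bag on that path, including $t_v$'' does not quite follow from what precedes it. The clean way to finish is to observe that since $t^* \in T_u$ and $t^* \in T_v$, and each subtree has a unique minimum-depth node, both $t_u$ and $t_v$ are ancestors of $t^*$ in the rooted tree $T$. Ancestors of a fixed node are totally ordered by depth, so $\depth(t_v) \geq \depth(t_u)$ forces $t_v$ to lie on the path from $t^*$ to $t_u$; that path is contained in $T_u$ by connectivity, hence $u \in X(t_v)$. With this adjustment the proof is complete.
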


\begin{lemma}
	If $G'$ is a subgraph of $G$, then $tw(G') \leq tw(G)$.	\qed
\end{lemma}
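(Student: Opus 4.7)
The plan is to exhibit a tree decomposition of $G'$ whose width is at most $tw(G)$, built directly from any optimal tree decomposition of $G$. Let $(X, T)$ be a tree decomposition of $G$ of width $tw(G) = \tau$. I would reuse the same underlying tree $T$ but restrict every bag to $V(G')$, defining $X'(t) := X(t) \cap V(G')$ for each $t \in V(T)$. This construction is essentially forced: any attempt to remove nodes from $T$ would risk breaking the connectivity axiom, whereas shrinking the bags only helps the width bound.

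Next I would verify the three axioms of a tree decomposition for the pair $(X', T)$ against $G'$. Coverage is immediate because $V(G') \subseteq V(G) = \bigcup_t X(t)$, so intersecting each bag with $V(G')$ still covers every $v \in V(G')$. Connectivity is preserved because for any $v \in V(G')$ we have the equality of index sets $\{t : v \in X'(t)\} = \{t : v \in X(t)\}$, which induces a connected subgraph of $T$ by hypothesis on $(X, T)$. The edge axiom holds because every $uv \in E(G') \subseteq E(G)$ has some bag $X(t)$ containing both endpoints, and since $u, v \in V(G')$ both remain in $X'(t)$. Finally, $|X'(t)| \le |X(t)|$ for every $t$, so the width of $(X', T)$ is at most $\tau$, giving $tw(G') \le \tau = tw(G)$.

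There is no real obstacle here; the only caveat is being careful about what "subgraph" means. The proof works uniformly whether $G'$ has strictly fewer vertices, fewer edges, or both, since the argument uses only the containments $V(G') \subseteq V(G)$ and $E(G') \subseteq E(G)$. The construction of $(X', T)$ is canonical and the verification is routine, which is presumably why the paper elides the argument with a $\qed$.
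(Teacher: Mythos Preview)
Your proposal is correct and is precisely the standard argument for this well-known fact; the paper itself omits the proof entirely (marking the lemma with a bare \qed), so your write-up is exactly the routine verification the authors elided.
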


\begin{lemma}
	For the complete graph on $k$ vertices, $tw(K_k) = k-1$. \qed
\end{lemma}
There are some basic relations between the sparsity of a matrix $A$ and the treewidth of its dual graph:
\begin{lemma}
	Any block of $A$ with sparsity $\tau$ induces a clique of size $\tau$ in $G_A$. It follows that $\max\{|\mathcal{A}_i| : A_i \text{ a column block of } A\} \leq tw(A) + 1$. 
	\qed
\end{lemma}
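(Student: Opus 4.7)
The claim has two parts, both of which follow essentially directly from the definitions combined with the lemmas stated immediately above it.

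The plan for the first part is to unfold the definition of $\mathcal{A}_r$ and $G_A$. Fix a column block $A_r$ with sparsity $\tau$, i.e.\ $|\mathcal{A}_r|=\tau$. By definition of $\mathcal{A}_r$, for every $i \in \mathcal{A}_r$ the submatrix $A_{i,r}$ is nonzero. Therefore, for any two distinct indices $i,j \in \mathcal{A}_r$, we have $A_{i,r}\neq\boldsymbol{0}$ and $A_{j,r}\neq\boldsymbol{0}$ simultaneously, which is exactly the condition for $ij\in E(G_A)$. Since every pair of vertices in $\mathcal{A}_r$ is joined by an edge, the vertex set $\mathcal{A}_r$ induces a clique of size $\tau$ in $G_A$, proving the first sentence.

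For the second part, I would combine the previous three lemmas. Let $i$ be the block maximizing $|\mathcal{A}_i|$, and set $\tau = |\mathcal{A}_i|$. By the first part, $G_A$ contains $K_\tau$ as a subgraph on vertex set $\mathcal{A}_i$. Applying the monotonicity lemma ($\tw(G')\leq \tw(G)$ for subgraphs) and the identity $\tw(K_\tau)=\tau-1$ yields
\[
\tau - 1 \;=\; \tw(K_\tau)\;\leq\;\tw(G_A),
\]
so $|\mathcal{A}_i| \leq \tw(G_A)+1$, which is what the statement calls $\tw(A)+1$ under the standard abuse of notation $\tw(A)\defeq \tw(G_A)$.

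There is essentially no obstacle here: the argument is just a one-line unfolding of the dual graph definition followed by two invocations of the preceding lemmas. The only thing worth being careful about is the block convention — the edges in $G_A$ are placed whenever the full submatrix $A_{i,r}$ (not a single entry) is nonzero, and this is exactly how $\mathcal{A}_r$ was defined, so the two notions line up without any further work. $\qed$
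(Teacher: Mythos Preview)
Your proof is correct and matches the paper's intent: the paper gives no proof at all (just a \qed), signalling that the statement follows immediately from the definition of $G_A$ together with the two preceding lemmas on subgraph monotonicity of treewidth and $\tw(K_t)=t-1$, exactly as you argued.
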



Treewidth is a natural structural parameter of a graph, with close connections to graph algorithms of a recursive nature. At a high level, it is generalized by the notion of well-separable graphs. We are particularly interested in its connection to vertex separators.

\subsection{Balanced Vertex Separator}

\begin{definition}
Let $G = (V,E)$ be a graph. For any $W \subseteq V$ and $1/2 \leq \alpha < 1$, an \emph{$\alpha$-vertex separator of $W$} is a set $S \subseteq V$ of vertices such that every connected component of the graph $G[V - S]$ contains \emph{at most} $\alpha \cdot |W|$ vertices of $W$. In the particular case when $W = V$, we call the separator an \emph{$\alpha$-vertex separator of $G$}. The \emph{separator number} of $G$ is the maximum over all subsets $W$ of $V$ of the size of the smallest $1/2$-vertex separator of $W$ in $G$. 

We sometimes denote an $\alpha$-vertex separator $S$ by $(G_1, S, G_2)$, where $V(G_1) \cup S \cup V(G_2) = V(G)$, and $G_1$ and $G_2$ are disconnected in $G \setminus S$.
\end{definition}

Similar to treewidth, separator numbers are monotone.
\begin{lemma}
	Let $G'$ be a subgraph of $G$. For any constant $1/2 \leq \alpha < 1$, the size of the smallest $\alpha$-vertex separator of $G'$ is at most that of $G$. \qed
\end{lemma}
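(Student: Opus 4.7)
The plan is to start from a smallest $\alpha$-vertex separator $S$ of $G$ and obtain an $\alpha$-vertex separator of $G'$ of size at most $|S|$ by restriction. Concretely, I will set $S' := S \cap V(G')$, which trivially satisfies $|S'| \leq |S|$; the substance of the argument lies in verifying that $S'$ actually separates $G'$.

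The key observation is that any connected component $C$ of $G'[V(G') \setminus S']$ is contained in some connected component $\tilde{C}$ of $G[V(G) \setminus S]$. The vertex-set inclusion $V(G') \setminus S' \subseteq V(G) \setminus S$ uses $S' = S \cap V(G')$, and connectivity transfers because $E(G') \subseteq E(G)$, so every path in $G'$ is also a path in $G$. The hypothesis on $S$ then gives $|C| \leq |\tilde{C}| \leq \alpha \cdot |V(G)|$.

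The main obstacle is promoting this bound to the required $|C| \leq \alpha \cdot |V(G')|$. When $G'$ is a spanning subgraph of $G$, i.e., $V(G') = V(G)$ (as in the case of subgraphs obtained purely by edge removal, which is the relevant regime for this paper), the two bounds coincide and the proof closes immediately. For more general subgraphs, the cleanest reading of the statement is through the separator-number viewpoint introduced just above: for each fixed $W \subseteq V(G')$, the same restriction of a smallest $\alpha$-separator of $W$ in $G$ yields an $\alpha$-separator of $W$ in $G'$, now using the sharper inequality $|C \cap W| \leq |\tilde{C} \cap W| \leq \alpha \cdot |W|$. Taking the maximum over $W \subseteq V(G') \subseteq V(G)$ on both sides recovers the claimed monotonicity.
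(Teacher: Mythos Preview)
The paper offers no proof here (the \qed\ is a placeholder), so there is nothing to compare against. Your restriction argument $S' := S \cap V(G')$ is the natural one, and you have correctly put your finger on a genuine issue: as literally stated---with ``$\alpha$-vertex separator of $G$'' meaning the $W = V(G)$ case per the paper's own definition---the lemma fails for non-spanning subgraphs. (Take $G$ to be a $k$-clique together with $k$ isolated vertices and $G'$ the induced clique; the smallest $1/2$-separator of $G$ is empty, while that of $G'$ has size $\lceil k/2\rceil$.)

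One correction to your commentary: the spanning-subgraph case is \emph{not} the relevant regime here. The recursion in \textsc{makeElimOrderAndTree} deletes the separator vertices and recurses on the remaining pieces, so the subgraphs that arise are induced on strictly smaller vertex sets. The intended reading is therefore your second one, matching the sentence immediately preceding the lemma (``separator numbers are monotone''): for each $W \subseteq V(G')$, restrict a smallest $\alpha$-separator of $W$ in $G$ to $V(G')$ and use $|C' \cap W| \le |\tilde C \cap W| \le \alpha|W|$, then take the maximum over $W \subseteq V(G') \subseteq V(G)$. That argument is correct and is what the paper actually needs.
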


The following theorem relates the treewidth of a graph and the separator number. 
\begin{theorem}[\cite{bodlaender1995approximating}, Lemma 6]
	If $G$ is a graph with treewidth $\tau$, then there exists a 1/2-balanced separator of $G$ of size at most $\tau+1$. \qed
\end{theorem}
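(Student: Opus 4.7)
The plan is to invoke a standard centroid argument on a tree decomposition of $G$. Fix a tree decomposition $(X, T)$ of $G$ of width $\tau$, so every bag satisfies $|X(t)| \leq \tau + 1$. I will find a single bag $X(t^\ast)$ whose removal balances $G$; since this bag has size at most $\tau + 1$, that will give the claimed separator.

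For every edge $e = st$ of $T$, deleting $e$ splits $T$ into two subtrees $T_s \ni s$ and $T_t \ni t$. Let $A(s,t)$ denote the set of vertices of $G$ that appear in some bag of $T_s$ but in no bag of $T_t$, and symmetrically $A(t,s)$. By the connectedness property of tree decompositions, any $v \in V(G)$ that appears in bags on both sides of $e$ must appear in both $X(s)$ and $X(t)$, so $V(G)$ is the disjoint union $A(s,t) \sqcup A(t,s) \sqcup (X(s) \cap X(t))$. I orient each edge $e = st$ toward the endpoint whose side is heavier, i.e., toward $s$ if $|A(s,t)| \geq |A(t,s)|$ and toward $t$ otherwise (ties broken arbitrarily).

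Since $T$ has $|V(T)| - 1$ edges, the oriented tree has a sink $t^\ast$: a node all of whose incident edges point into it. I claim $S \defeq X(t^\ast)$ is a $1/2$-balanced separator. Each connected component $C$ of $G \setminus S$ is contained in exactly one of the sets $A(s, t^\ast)$ as $s$ ranges over the neighbors of $t^\ast$ in $T$: indeed, by the connectedness property of tree decompositions, if $C$ contained vertices from bags on two different subtrees of $T - t^\ast$, then the connecting edges in $G$ would force a common bag containing a vertex of $X(t^\ast)$, contradicting $C \cap S = \emptyset$. Since the edge $(s, t^\ast)$ points into $t^\ast$, I have $|A(s, t^\ast)| \leq |A(t^\ast, s)|$, and therefore
\[
|A(s, t^\ast)| \;\leq\; \tfrac{1}{2}\bigl(|A(s,t^\ast)| + |A(t^\ast,s)|\bigr) \;\leq\; \tfrac{1}{2}|V(G)|.
\]
Combined with $|C| \leq |A(s, t^\ast)|$, this shows every component has at most $|V(G)|/2$ vertices, so $S$ is a $1/2$-balanced separator of $G$ of size at most $\tau+1$.

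The only real subtlety is in verifying the two uses of the connectedness axiom of tree decompositions (that the sets $A(\cdot,\cdot)$ partition $V(G) \setminus (X(s)\cap X(t))$, and that each component of $G - S$ sits inside a single $A(s,t^\ast)$); both are standard and short, so I do not expect this to be difficult. The existence of the sink $t^\ast$ is an immediate counting argument on the oriented tree.
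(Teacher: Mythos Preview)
Your argument is correct and is the standard centroid/sink proof of this fact. The paper itself does not supply a proof; it simply cites \cite{bodlaender1995approximating} and closes the statement with a \qed, so there is nothing to compare against beyond noting that your argument is essentially the one found in that reference.
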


Now we return to ideas for computing the permutation and elimination tree.

\subsection{Elimination Tree}
    Let $G = (V,E)$ be the generalized dual graph of $A$, that is, its adjacency matrix is given by the non-zero pattern of $AH^{-1}A^\top$. 
    Let $\pi :  V \mapsto [d]$ be an ordering of the vertices of $G$, which we will call an \emph{elimination order}. 
    We say a vertex $v \in V$ is eliminated at step $\pi(v)$. 
    The \emph{filled graph of $G$ corresponding to $\pi$}, denoted by $G_\pi^+$, is constructed as follows:
    
    \begin{algorithm}[H]\caption{Construct $G_\pi^+$} \label{alg:make-filled-graph}
    	\begin{algorithmic}
    		\State $G_\pi^+ \leftarrow (V, E)$
    		\For{$i$ from 1 to $n$}
    			\ForEach{$v \in V$ such that $\pi(v) > i$}
    				\If{$\exists$ a path $P$ from $\pi^{-1}(i)$ to $v$ in $G$, and all $u \in P - v$ satisfies $\pi(u) \leq i$}
    					\State add an edge between $\pi^{-1}(i)$ and $v$ in $G_\pi^+$
    				\EndIf
    			\EndFor
    		\EndFor
    		\State \Return $G_\pi^+$
    	\end{algorithmic}
    \end{algorithm}

	This construction of $G_\pi^+$ is also known as the elimination game on $G$, which intuitively models the canonical Cholesky factorization algorithm on $PAH^{-1}A^\T P^\T = LL^\T$, where $P$ is the permutation matrix for $\pi$: 
	Indeed, eliminating the vertex $\pi^{-1}(i)$ at the $i$-th iteration of the elimination game can be viewed as moving the $\pi^{-1}(i)$-th row of $A$ to the $i$-th row in the factorization algorithm, and adding the edge between $\pi^{-1}(i)$ and $v$ for the specified vertices $v \in V$ indicates that the $vi$-th entry of $L$ is non-zero in the factorization algorithm. 	
	It turns out the adjacency matrix of the filled graph $G_\pi^+$ precisely gives the nonzero structure of the triangular factor $L$. Hence, our goal is to choose $\pi$ to decrease the number of edges in $G_\pi^+$.

	Formally, $u,v \in V(G_\pi^+)$ are adjacent if and only if there is a path $P$ from $u$ to $v$ in $G$, such that all interior vertices $w$ on $P$ satisfies $\pi(w) < \min \{\pi(u), \pi(v)\}$.
	
	\begin{definition}[Elimination Tree]
	The \emph{elimination tree corresponding to $\pi$} is the tree $\mathcal{T}$ defined by the following parent-children relation: 
	For a vertex $v \in V$, its parent is $\argmin \{\pi(w) : w \in N_{G^+_{\pi}}(v), \; \pi(w) > \pi(v)\}$; 
	in words, it is the vertex $w$ that is eliminated earliest after $v$, that is reachable from $v$ in $G$ using a path whose interior vertices are all eliminated before $v$.
	Different elimination orders give rise to different elimination trees. The height of the shortest elimination tree over all choices of $\pi$ is the \emph{minimum etree height}.
	\end{definition}
	
	When the rows of $A$ are reordered according to $\pi$, the elimination tree reflects the non-zero pattern in the Cholesky factor.

	\begin{lemma}[\cite{Schreiber1982}]\label{lem: nnz path}
		Let $L$ be the Cholesky factor for the matrix $AH^{-1}A^\T$. Let $L_j$ denote the $j$-th column block. The non-zero pattern of $L_j$ is a subset of the vertices on the path from $j$ to the root in the elimination tree corresponding to the identity permutation.
	\end{lemma}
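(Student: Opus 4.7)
The plan is to prove the lemma in two steps: first establish that the nonzero pattern of column $j$ of $L$ is contained in the set of vertices adjacent to $j$ (or equal to $j$) in the filled graph $G_\pi^+$, and then show that every such higher-indexed neighbor of $j$ in $G_\pi^+$ is an ancestor of $j$ in the elimination tree $\mathcal{T}$.

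For the first step, I would appeal to the standard structural equivalence between the elimination game on $G$ and the canonical Cholesky factorization on $AH^{-1}A^\T$: if we process the rows in the given order, the Cholesky update at step $i$ introduces a (structurally) nonzero entry $L_{kj}$ (for $k > j$) only when $kj$ is an edge of $G_\pi^+$, because the Schur complement at step $i$ mirrors exactly the fill-in rule defined in \cref{alg:make-filled-graph}. Here I would cite or briefly reprove the standard fact that the symbolic nonzero pattern of $L$ coincides with the adjacency structure of $G_\pi^+$ (treating numerical cancellations as nonzero in the worst case, which is what is needed for the lemma). This reduces the claim to a purely combinatorial statement about $G_\pi^+$ and $\mathcal{T}$.

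For the second step, I would prove: for each $j$, every neighbor $k > j$ of $j$ in $G_\pi^+$ lies on the path from $j$ to the root in $\mathcal{T}$. The argument is by downward induction on $j$ (from $d$ down to $1$). Let $w_1 < w_2 < \cdots < w_s$ denote the higher-indexed neighbors of $j$ in $G_\pi^+$; by the definition of $\mathcal{T}$, the parent of $j$ is exactly $w_1$. Using the elimination-game rule, once $j$ is eliminated the filled graph contains all edges $w_1 w_r$ for $r \geq 2$, so each $w_r$ is a higher-indexed neighbor of $w_1$ in $G_\pi^+$. By the inductive hypothesis applied at index $w_1$, each such $w_r$ is an ancestor of $w_1$ in $\mathcal{T}$, and hence an ancestor of $j$. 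Combining with the trivial fact $j$ itself is on the path, every row index $k$ with $L_{kj} \neq 0$ lies on the path from $j$ to the root.

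I expect the main obstacle to be bookkeeping in the induction, specifically justifying that the fill-in edges $w_1 w_r$ really are present in $G_\pi^+$ (and not only in some intermediate elimination-game graph). This follows from the monotonicity of the elimination game: every edge introduced at any step of Algorithm~\ref{alg:make-filled-graph} persists in the final filled graph. Once that is in place, the induction proceeds cleanly, and the lemma follows. No heavy machinery beyond the elimination-game characterization of the Cholesky nonzero pattern is needed.
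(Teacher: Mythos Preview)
The paper does not give its own proof of this lemma; it is stated with a citation to \cite{Schreiber1982} and left without proof, as it is a classical fact from sparse matrix theory. Your proposed argument is the standard one and is correct: the symbolic nonzero pattern of $L$ is exactly the lower triangle of the filled graph $G_\pi^+$, and the downward induction showing that all higher-indexed neighbors of $j$ in $G_\pi^+$ are ancestors of $j$ in $\mathcal{T}$ goes through cleanly, since eliminating $j$ forces the clique on $\{w_1,\dots,w_s\}$ in $G_\pi^+$ and in particular makes each $w_r$ a higher-indexed neighbor of $w_1 = \mathrm{parent}(j)$.
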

	
	\begin{example} 
		 The figure below shows the relationship between a matrix, its Cholesky factor, and the corresponding elimination tree. On the left is a $10 \times 10$ matrix $AA^\T$, with rows labelled $\{1, \dots, 10\}$. In the middle is the Cholesky factor $L$ of $AA^\T$. On the right is the elimination tree, where node $i$ in the tree corresponds to row $i$ of the matrices $AA^\T$ and $L$.
		 \begin{figure}
		 	\centering 
		 	\includegraphics[scale=0.9]{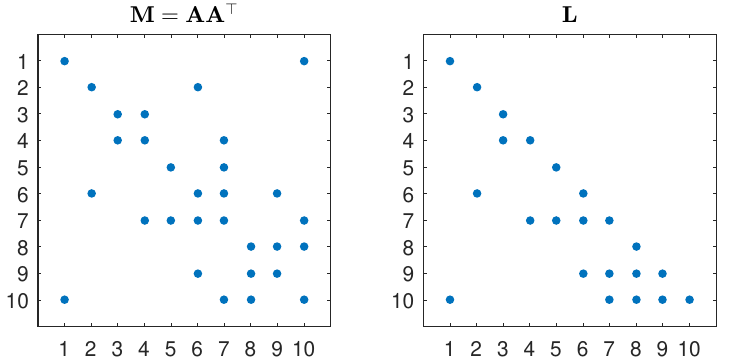}
		 	\includegraphics[scale=1]{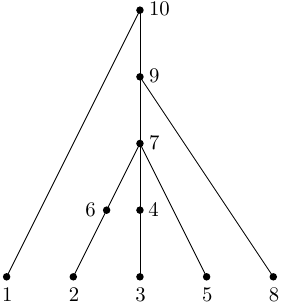}
		 	\caption{Each blue dot represents a non-zero entry in the matrix.} 
		 \end{figure}
	\end{example}

	\begin{lemma}\label{lem:clique-in-elim-tree}
		If $uw$ is an edge in $G$, then in any elimination tree $\mathcal{T}$ of $G$, there is an ancestor-descendant relationship between $u$ and $w$. It follows that if $K$ is a clique in $G$, then in any elimination tree $\mathcal{T}$ of $G$, the vertices of $K$ all lie on the same path from some leaf of $\mathcal{T}$ to the root.
		\qed
	\end{lemma}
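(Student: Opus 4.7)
The plan is to prove a slightly strengthened form of the first claim, namely: for every edge $uw$ in the \emph{filled} graph $G_\pi^+$ with $\pi(u)<\pi(w)$, the vertex $w$ is an ancestor of $u$ in $\mathcal{T}$. Since $E(G)\subseteq E(G_\pi^+)$ (an edge of $G$ is a length-$1$ path satisfying the filled-graph condition vacuously), this immediately yields the ancestor-descendant relation stated in the lemma.

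I would proceed by reverse induction on $\pi(u)$. The base case $\pi(u)=d$ is vacuous. For the inductive step, note that since $w$ is a neighbor of $u$ in $G_\pi^+$ with $\pi(w)>\pi(u)$, the vertex $u$ has a well-defined parent $p$ in $\mathcal{T}$, and by the minimality built into the parent definition, $\pi(u)<\pi(p)\leq\pi(w)$. If $p=w$ we are done. Otherwise $\pi(p)<\pi(w)$, and the key step is to establish $pw\in E(G_\pi^+)$. The formal definition of $G_\pi^+$ supplies paths $P_1$ in $G$ from $u$ to $p$ and $P_2$ in $G$ from $u$ to $w$ whose interior vertices all have $\pi$-values strictly less than $\pi(u)$. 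Concatenating $P_1^{-1}$ and $P_2$ at $u$ produces a $p$-to-$w$ walk in $G$; its interior vertices are the interior vertices of $P_1$ and $P_2$ (all with $\pi$-value $<\pi(u)$) together with $u$ itself (with $\pi$-value $\pi(u)$). All of these satisfy $\pi\leq\pi(u)<\pi(p)=\min(\pi(p),\pi(w))$, so extracting a simple subpath witnesses $pw\in E(G_\pi^+)$. Applying the inductive hypothesis to the edge $pw$, and using $\pi(p)>\pi(u)$, gives that $w$ is an ancestor of $p$ in $\mathcal{T}$; since $p$ is the parent of $u$, $w$ is also an ancestor of $u$.

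For the second claim, every pair of vertices $u,v\in K$ satisfies $uv\in E(G)\subseteq E(G_\pi^+)$, so by the first part they are comparable under the ancestor-descendant order of $\mathcal{T}$. This order is therefore total on $K$, making $K$ a chain in $\mathcal{T}$, and hence all of $K$ lies on the root-to-leaf path through its $\mathcal{T}$-deepest member.

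The main obstacle is that a naive induction using only edges of the original graph $G$ does not close: after routing through the parent $p$ of $u$, the "next" edge $pw$ need not exist in $G$. Strengthening the claim to edges of $G_\pi^+$ is essential, because exactly the filled graph contains the transitive-closure edges produced by eliminating vertices of smaller $\pi$-value, and the concatenation-of-paths argument above is the precise mechanism that certifies this closure.
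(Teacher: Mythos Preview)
Your proof is correct. The paper does not actually supply a proof for this lemma; it is stated with an immediate \qed, treating the result as standard folklore in the elimination-tree literature. The argument you give---strengthening the claim to all edges of $G_\pi^+$ so that the induction closes, and using the parent definition together with path concatenation to propagate the ancestor relation---is precisely the standard proof of this fact. Your observation that the strengthening is essential (since the edge $pw$ need not lie in $G$ itself) is exactly the point.
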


The various parameters presented above are related by the following result:
\begin{theorem}[\cite{bodlaender1995approximating}, Theorem 12]
	Every graph $G$ on $n$ vertices satisfies
	\[
	\text{separator number}-1  \leq \text{treewidth} \leq \text{min elimination tree height} \leq \text{separator number} \cdot \log n .
	\]
\end{theorem}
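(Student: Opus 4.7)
The plan is to establish each of the three inequalities separately, leveraging the definitions and structural results already developed in the section. Throughout, let $s(G)$, $\tw(G)$, and $h(G)$ denote the separator number, the treewidth, and the minimum elimination tree height, respectively.

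For $s(G) - 1 \leq \tw(G)$, I would use the bag-walk technique on a width-$\tau$ tree decomposition $(X,T)$. Fix any $W \subseteq V(G)$. Assign each vertex of $W$ to some bag containing it, so that each bag $X(t)$ acquires a weight $w(t) = |\{v \in W : v \text{ assigned to } X(t)\}|$. Root $T$ arbitrarily and walk down: at each node $t$, move to the child whose subtree has total weight exceeding $|W|/2$, if one exists; otherwise stop at $t^*$. At $t^*$, every subtree hanging off has weight at most $|W|/2$, so removing $X(t^*)$ breaks $G$ into components each meeting $W$ in at most $|W|/2$ vertices (using the standard fact that removing a bag disconnects the vertices appearing in different subtrees, which is exactly the connectivity axiom of tree decompositions). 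Hence $X(t^*)$ is a $1/2$-separator of $W$ of size at most $\tau+1$, giving $s(G) \leq \tau + 1$.

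For $\tw(G) \leq h(G)$, I would build a tree decomposition from any elimination tree $\mathcal{T}$ of height $h$ (for some order $\pi$) by setting the host tree to $\mathcal{T}$ itself and the bag at each vertex $v$ to $X(v) = \{v\} \cup \{\text{proper ancestors of } v \text{ in } \mathcal{T}\}$. Axioms (1) and (2) of tree decompositions follow immediately since the set of bags containing $v$ is precisely the subtree rooted at $v$, which is connected. Axiom (3), which says every edge $uw \in E(G)$ lies in some bag, is exactly Lemma~\ref{lem:clique-in-elim-tree}: $u$ and $w$ are in ancestor-descendant relation in $\mathcal{T}$, so both lie in the bag of the descendant. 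Since $|X(v)| \leq h$, we obtain $\tw(G) \leq h - 1 \leq h(G)$.

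For $h(G) \leq s(G) \cdot \log n$, I would argue by recursive construction. Find a $1/2$-balanced separator $S$ of $G$ with $|S| \leq s(G)$, and let $C_1, \dots, C_k$ be the connected components of $G \setminus S$, each with $|C_i| \leq n/2$. Build the elimination order $\pi$ recursively: eliminate each $C_i$ first (in the order produced by its own recursively built elimination tree $\mathcal{T}_i$ on $G[C_i]$), then eliminate $S$ in any order. Stack $S$ as a single path at the top and attach the roots of each $\mathcal{T}_i$ as children of the bottom of this path. By monotonicity, $s(G[C_i]) \leq s(G)$, so the recursion $T(n) \leq T(n/2) + s(G)$ yields height $O(s(G)\log n)$. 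It remains to verify that this resulting tree is indeed the elimination tree of the order $\pi$ in the sense of the filled graph $G_\pi^+$: one must check that when all of $C_i$ is eliminated before $S$, no fill edge can connect $C_i$ to $C_j$ (for $j \neq i$) or create unexpected ancestors outside $S \cup C_i$, because every $G$-path from $C_i$ out of $C_i$ must pass through $S$, and $S$ has not yet been eliminated.

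The main obstacle lies in this last verification for the third inequality. Making the fill-in argument rigorous requires an induction on the recursion: one shows that during the elimination of $C_i$, the fill-edges added lie entirely within $C_i \cup (S \cap N_G(C_i))$, so parent pointers stay inside the intended subtree, and that when $S$ is later processed, its vertices only acquire parents among $S$. Once this is established, the height bound follows by straightforward unrolling of the recursion, and the first two inequalities are then essentially bookkeeping around the definitions.
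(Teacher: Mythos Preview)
The paper does not prove this statement itself; it is quoted from \cite{bodlaender1995approximating} and used as a black box, so there is no in-paper argument to compare against. Your sketch is correct and follows the standard route for all three inequalities.

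One simplification for the third inequality: you need not verify that your constructed tree coincides with the true elimination tree for $\pi$. It is enough to observe that in the true elimination tree, every root-to-leaf path passes through at most $|S|$ vertices of $S$ (since $S$ is eliminated last, ancestors of $S$-vertices remain in $S$) and then lies entirely inside a single recursively built $\mathcal{T}_i$ (parents of $C_i$-vertices lie in $C_i \cup S$ because every $G$-path leaving $C_i$ must cross $S$, and any $C_i$-candidate beats any $S$-candidate in $\pi$-order). This gives the height bound $|S| + \max_i h(\mathcal{T}_i)$ directly, without matching the trees node by node. The paper in fact runs exactly this path-containment argument later, when comparing the tree produced by \textsc{makeElimOrderAndTree} to the true elimination tree in the proof of Theorem~\ref{thm:main-tw-Cholesky}.
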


This structural theorem indicates that we can construct an elimination tree $\mathcal{T}$ of $G_A$ and bound its height as a function of $tw(A)$. Specifically, we use the standard technique of recursively computing vertex separators, and using them to generate an ordering $\pi$ of the vertices of $V(G_A)$. 
Rather than constructing the elimination tree according to the definition however, we construct a slightly taller bounded-degree tree, and show it still reflects the sparsity conditions of the Cholesky factor.

In~\cref{alg:make-elim-tree}, we use a list notation $(v_1, \dots, v_k)$ to denote the ordering $\pi$ of a set of vertices $\{v_1, \dots, v_k\}$ with $\pi(v_i) = i$. We use $+$ to denote the concatenation of two lists.
\begin{algorithm}[H]
	\caption{Constructing an Elimination Order and Tree} 
	\label{alg:make-elim-tree}
	\begin{algorithmic}[1]
		\Procedure {makeElimOrderAndTree}{$G$}
		\If {$|V(G)|\leq f(\tau)$}
			\State let $\pi$ be an arbitrary ordering of $V(G)$
			\State construct a path on $V(G)$ according to $\pi$, let $u$ be the last vertex of the ordering/path
			\State \Return $(\pi, u)$
		\EndIf
		\State $(G_1, S, G_2) \leftarrow \textsc{approxBalancedSeparator}(G)$
		\State $(\pi_1,v_1) \leftarrow \textsc{makeElimOrderAndTree}(G_1)$ 
		\State $(\pi_2,v_2) \leftarrow \textsc{makeElimOrderAndTree}(G_2)$
		\State $\pi \leftarrow $ arbitrary ordering of $S$
		\State construct a path on $S$ according to $\pi$, let $u$ be the first vertex of the ordering/path and $v$ the last
		\State set $u$ as the parent of $v_1$ and $v_2$ 
		\State \Return $(\pi_1 + \pi_2 + \pi, v)$
		\EndProcedure

	\end{algorithmic}
\end{algorithm}

\begin{theorem}
	Let $G = (V,E)$ be a graph on $n$ vertices with treewidth $\tau$, and let $f(\tau)$ be some function of $\tau$. Suppose \emph{$\textsc{approxBalancedSeparator}$} is an algorithm that, given a graph $H$ on $k$ vertices, computes $(H_1, S, H_2)$ where
	\begin{enumerate}
		\item $H_1, H_2 \subseteq H$ are subgraphs of $H$, and $V(H_1) \cup S \cup V(H_2) = V(H)$, 
		\item $S$ is an $\alpha$-vertex separator of $H$ for some universal constant $1/2 \leq \alpha < 1$, and $|S| \leq f(\tau)$,
		\item the algorithm runs in time $T_{sep}(k)$.
	\end{enumerate}	
	Then~\cref{alg:make-elim-tree} constructs an elimination order and a binary tree $\mathcal{T}$ for $G$ of height at most $O(f(\tau) \cdot \log n)$, in time $\widetilde{O}(T_{sep}(n))$.
\end{theorem}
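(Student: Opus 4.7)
The plan is to verify three claims about the output of Algorithm~\ref{alg:make-elim-tree}: that $\mathcal{T}$ has binary structure, that its height is $O(f(\tau)\log n)$, and that the total running time is $\widetilde{O}(T_{\mathrm{sep}}(n))$. All three follow by induction on $n = |V(G)|$, exploiting the balance property of the separator together with the monotonicity of treewidth under subgraphs. The binary structure is immediate from the construction: in the recursive step, each internal node of the separator path has exactly one child along the path, the top node has a single child too, and the bottom node $u$ has exactly two children (the roots of the subtrees returned by the two recursive calls); the base case is a unary path. That $\pi$ is a valid permutation of $V(G)$ is clear since it concatenates orderings on the disjoint sets $V(G_1), V(G_2), S$ whose union is $V(G)$.

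For the height, let $H(n)$ denote the maximum height of the tree returned on an input with $n$ vertices. In the recursive case, $\mathcal{T}$ is a path of length $|S| \leq f(\tau)$ stacked above two subtrees of sizes $n_1, n_2$, each at most $\alpha n$ by the balance property. Since $G_1$ and $G_2$ are subgraphs of $G$, treewidth monotonicity gives $\tw(G_i) \leq \tw(G) \leq \tau$, so the separator bound $|S| \leq f(\tau)$ continues to apply in every recursive call. This produces the recurrence
\[
H(n) \leq f(\tau) + H(\alpha n),
\]
which, because $\alpha < 1$ is a fixed constant, unfolds in $O(\log n)$ levels to $H(n) = O(f(\tau) \log n)$; the base case ($n \leq f(\tau)$) contributes at most $f(\tau)$ to this bound.

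For the runtime, let $T(n)$ be the total work. Outside the two recursive calls, the algorithm performs one call to \textsc{approxBalancedSeparator} plus $O(f(\tau))$ bookkeeping to build the new path and attach the subtrees, so
\[
T(n) \leq T_{\mathrm{sep}}(n) + T(n_1) + T(n_2) + O(f(\tau)),
\]
with $n_1, n_2 \leq \alpha n$ and $n_1 + n_2 \leq n$. The recursion tree has depth $O(\log n)$, and at each level the subproblems have disjoint vertex sets summing to at most $n$. Under the mild (and standard) assumption that $T_{\mathrm{sep}}$ is at least linear in $n$, the sum $\sum_i T_{\mathrm{sep}}(k_i)$ over disjoint subproblems of total size at most $n$ is bounded by $T_{\mathrm{sep}}(n)$; summing over $O(\log n)$ levels and absorbing into $\widetilde{O}$ yields the claimed total bound.

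The only real obstacle is a purely technical one: making sure the separator-size bound $f(\tau)$ holds uniformly in every recursive subproblem, which is resolved by treewidth monotonicity. Everything else is a textbook analysis of a balanced divide-and-conquer that terminates once the subproblem size drops to at most $f(\tau)$; no new algorithmic ingredient is required.
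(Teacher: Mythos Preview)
Your proof is correct and follows essentially the same approach as the paper's own proof: set up the recurrences $H(n)\le f(\tau)+H(\alpha n)$ for height and $T(n)\le T_{\mathrm{sep}}(n)+T(n_1)+T(n_2)$ for time, then solve using the $O(\log n)$ recursion depth guaranteed by the balance constant $\alpha<1$. Your write-up is in fact more complete than the paper's terse argument, since you explicitly verify the binary structure of $\mathcal{T}$ and invoke treewidth monotonicity to justify that the bound $|S|\le f(\tau)$ persists in every subproblem---points the paper leaves implicit.
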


\begin{proof}

We have the following straightforward analysis of $\textsc{makeElimOrderAndTree}$: Let $T(k)$ denote the runtime on a graph with $k$ vertices. Then
\[\begin{cases}
	T(k) = O(1) \qquad \qquad &k \leq f(\tau) \\
	T(k) \leq  T(\alpha' k) + T((1-\alpha')k) + T_{sep}(k) \qquad &k > f(\tau)
\end{cases}\]
Solving the recurrence, we have $T(n) = \widetilde{O}(T_{sep}(n))$.

In total, $\textsc{makeElimOrderAndTree}$ recurses to a depth of $O(\log n)$, and at each recursive iteration, the contribution to the elimination tree height is the size of the separator $|S|\leq f(\tau)$ computed in the iteration.
\end{proof}

A standard implementation of \textsc{approxBalancedSeparator} given a tree decomposition of $G$ is as follows:
\begin{theorem}
	Let $(X,T)$ be a width-$\tau$ tree decomposition of a graph $G$ on $n$ vertices. Then in $O(n \tau)$ time, we can find a $2/3$-vertex separator $(G_1, S, G_2)$ of $G$, and tree decompositions $(X_1, T_1)$ of $G_1$ and $(X_2, T_2)$ of $G_2$ each of width at most $\tau$.
\end{theorem}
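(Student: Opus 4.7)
The plan is to exploit the fact that the tree decomposition $(X,T)$ already encodes a tree-like structure on $G$, so a balanced bag separator can be found via a weighted centroid argument on $T$, and the pieces inherit tree decompositions directly from subtrees of $T$.

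First, I would preprocess so that $|V(T)| = O(n)$ (standard smoothing: contract edges whose bags are nested, which can be done in $O(n\tau)$ time and does not increase the width). Root $T$ at an arbitrary node. For each $v \in V(G)$, the set $T_v = \{t \in V(T) : v \in X(t)\}$ is a connected subtree of $T$ by the second axiom of tree decompositions; assign $v$ to the \emph{root} of $T_v$ in the rooted tree $T$, and let $w(t)$ be the number of vertices assigned to $t$, so $\sum_{t\in V(T)} w(t) = n$.

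Next, find a weighted centroid $t^\ast \in V(T)$ such that every connected component of $T - t^\ast$ has total $w$-weight at most $n/2$. This exists and is found in $O(|V(T)|)=O(n)$ time by the standard subtree-sum DFS followed by a walk into the heaviest child until balance is achieved. Set $S := X(t^\ast)$, so $|S| \le \tau + 1$. The key structural observation is that for any $v \notin S$, the subtree $T_v$ avoids $t^\ast$ and hence lies entirely in one component of $T - t^\ast$. Because every edge of $G$ incident to $v$ is contained in $\bigcup_{t \in T_v} X(t)$, it follows that the components of $G - S$ are grouped according to the components of $T - t^\ast$, and each such piece contains at most $n/2$ vertices (its vertices are precisely those assigned to nodes in that component of $T - t^\ast$).

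To split into two sides $G_1, G_2$ with $|V(G_i)| \le 2n/3$, I would use a greedy pack: if some component $T'$ of $T - t^\ast$ has weight exceeding $n/3$, take its induced subgraph as $G_1$ (weight in $[n/3,\,n/2]$) and the rest as $G_2$ (weight $\le 2n/3$); otherwise greedily add components to $G_1$ until it first exceeds $n/3$, so both $G_1$ and $G_2$ end up with weight at most $2n/3$. The tree decomposition $(X_i, T_i)$ for $G_i$ is the disjoint union of the involved subtrees of $T - t^\ast$, glued at a fresh node with empty bag, with bags restricted via $X_i(t) := X(t) \cap V(G_i)$; the three tree-decomposition axioms are inherited and the width stays at most $\tau$.

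The bottleneck in runtime is computing the assignment and the weights $w(t)$, which takes $O(\sum_t |X(t)|) = O(n\tau)$; the centroid step, the greedy grouping, and emitting the restricted bags are all within $O(n\tau)$ as well. The only real subtlety---which is exactly what makes tree decompositions useful here---is that the assignment argument relies crucially on $T_v$ being connected; without that property a single vertex could be charged to multiple components of $T - t^\ast$, breaking the weight bound. I do not expect any genuine obstacle beyond keeping bookkeeping careful, since every step uses only classical centroid machinery on $T$.
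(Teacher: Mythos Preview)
Your proposal is correct and takes essentially the same approach as the paper: locate a centroid-type node $t^\ast$ of the tree decomposition, take $S = X(t^\ast)$ as the separator, and inherit tree decompositions for the pieces from the subtrees of $T - t^\ast$ with bags restricted. The paper's own proof is a three-sentence sketch that glosses over exactly the points you spell out (the weighted centroid computation, and the greedy grouping of possibly many components of $T - t^\ast$ into two balanced sides), so your write-up is in fact more complete than the paper's on these details.
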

\begin{proof}
	We assume $T$ has $O(n)$ nodes to start (a transformation can be made in $O(\tau \cdot |V(T)|)$ time in the recursive iterations, see e.g.~\cite{fomin2018fully} Definition 2.4). 
	By scanning through the bags of $T$ in $O(n \tau)$ time, we can find a node $t \in T$ such that $T \setminus t$ is two disjoint subtrees $T_1, T_2$, with $|\bigcup_{s \in T_1} X(s) \setminus X(t)| \leq 2/3 n$, and similarly for $T_2$. Then $X(t) \subseteq V(G)$ is a $2/3$-vertex separator of $G$.
	By removing the vertices $X(t)$ from all the bags in $T_1$ and $T_2$, we get the tree decompositions of $G_1$ and $G_2$ respectively, both of width at most $\tau$.
\end{proof}

When the tree decomposition is not given, we can use \cite{bernstein2021deterministic} to find it approximately.
\begin{theorem}[\cite{bernstein2021deterministic}]
Given a graph with $m$ edges and $n$ vertices, we can compute a width-$O(tw(G) \log^{3}{n})$ tree decomposition in $O(m^{1+o(1)} \polylog{n})$ time.
\end{theorem}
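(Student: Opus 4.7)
The plan is to reduce the problem to a near-linear time approximate balanced vertex separator subroutine, then recurse in exactly the style of \cref{alg:make-elim-tree}. Concretely, suppose we have a routine that, given a graph $H$ with $m_H$ edges, produces in $O(m_H \polylog{n})$ time a $(2/3)$-balanced vertex separator of size $O(\tw(H) \cdot \polylog{n})$. Then the recursive scheme—find separator $S$, split $H$ into $H_1$ and $H_2$ with $V(H_1)\cup S\cup V(H_2)=V(H)$ and no edges between $H_1$ and $H_2$, recurse on $H_1$ and $H_2$, and glue the resulting tree decompositions by creating a new root bag equal to $S$ attached to both sub-decompositions—produces a tree decomposition whose width at any node is the union of separators found along the root-to-leaf path in the recursion, i.e.\ at most $O(\tw(G) \polylog{n})$ by monotonicity of treewidth under subgraphs. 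The recursion depth is $O(\log n)$ by the balance guarantee, and each edge is touched at $O(\log n)$ levels, giving total time $O(m \polylog{n})$.

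The main obstacle is the near-linear time balanced separator subroutine itself. Classical polynomial-time constructions (LP relaxations, SDP-based $O(\sqrt{\log n})$-approximations of minimum balanced vertex cut, or exact max-flow) are too slow at $\Omega(m^{1+c})$ or $\Omega(mn)$. The plan is to invoke the deterministic expander decomposition machinery of \cite{bernstein2021deterministic}: first compute a decomposition in which each induced piece is a $\phi$-expander and the total number of inter-cluster edges is $\widetilde O(\phi m)$; then use expander-based flow or cut-matching routines locally to either extract a small balanced vertex cut or certify that the treewidth of the piece is large; and finally combine with the boundary edges, whose small total size lets us handle them without asymptotic loss. Iterating this at the right parameter scale and tracking the approximation ratio carefully yields the $\polylog{n}$ overhead in separator size while staying in $O(m \polylog n)$ time.

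Putting the two ingredients together gives the width bound $O(\tw(G) \polylog{n})$. To sharpen the exponent of the polylog to the stated $\log^{3}{n}$, I would carefully distribute the approximation losses: one $\log n$ factor comes from the recursion depth, one from the conversion between treewidth and balanced separator size (cf.\ the $\log n$ slack in the separator-number-to-treewidth inequality cited earlier in this section), and one from the per-call approximation of the expander-based separator routine. Since the excerpt itself attributes the statement to \cite{bernstein2021deterministic}, the hard part, which I would treat as a black box, is the deterministic near-linear time balanced separator oracle; the recursive assembly and the width accounting are standard given such an oracle.
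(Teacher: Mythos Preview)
The paper does not give its own proof of this statement: it is simply quoted as a black-box result from \cite{bernstein2021deterministic}, with no argument provided. So there is nothing in the paper to compare your proposal against.

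That said, your sketch is a reasonable high-level outline of how such results are typically structured (recursive balanced separators glued into a tree decomposition), and you correctly identify that the substantive content---the near-linear-time deterministic balanced vertex separator---is exactly what \cite{bernstein2021deterministic} supplies and must be taken as a black box. One caution on your width accounting: in the recursive scheme you describe, each bag should be just the separator $S$ at that node (plus possibly the separators from ancestors pushed down), not the union of all separators along the root-to-leaf path; if you literally take that union the width bound becomes $O(\tw(G)\cdot\polylog n \cdot \log n)$, adding an extra $\log n$. Getting the stated $\log^{3}n$ requires the tighter bag construction (or the direct algorithm of \cite{bernstein2021deterministic}), not the naive union-along-path.
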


\begin{proof}[Proof of~\cref{thm:main-tw-Cholesky}]
	
	It remains to show that the tree $\mathcal{T}$ returned by~\cref{alg:make-elim-tree} satisfies the sparsity properties specified in~\cref{thm:main-tw-Cholesky}.
	
	Let $\mathcal{T}'$ be the true elimination tree corresponding to the elimination order $\pi$ computed by~\cref{alg:make-elim-tree}. 
	Note that in a recursive iteration $\textsc{makeElimOrderAndTree}(H)$, the subroutine $\textsc{approxBalancedSeparator}(H)$ will return $(H_1, S, H_2)$, such that \emph{in the original graph $G$}, vertices in $H_1$ are only connected to vertices in $H_2$ via a path containing vertices in $S$. 
	Hence, vertices in $H_1$ have no ancestors in $H_2$ in $\mathcal{T}'$, and vice versa. Any path in $\mathcal{T'}$ from a vertex $i \in H_1$ to the root goes through some higher-ordered vertices in $H_1$ followed by a subset of the vertices $S$; this is contained in the path in $\mathcal{T}$ from $i$ to the root, which includes all higher-ordered vertices in $H_1$ and all of $S$.
	By~\cref{lem: nnz path}, 
	after the permuting according to $\pi$, for each $j \in [d]$, the non-zero pattern of $L_j$ is a subset of the path from $j$ to the root in $\mathcal{T}'$; it is therefore also true in $\mathcal{T}$.
	
	Plugging in $f(\tau) = \tau$ when a width-$\tau$ decomposition is given, and $f(\tau) = O(tw(G) \log^{3}{n})$ when not, and using the monotonicity property of treewidth and separator size, we get the conclusions of~\cref{thm:main-tw-Cholesky} immediately.
\end{proof}

\section{Sparsity Patterns and Maintaining the Cholesky Factorization}\label{sec:chol_tex}

In this section, we discuss the sparsity properties of all the matrices we work with for the central path, and the required runtime for their computations and maintenance. All of these properties are known (see textbooks~\cite{george1994computer,Davis2006book} for more complete introductions). We include some algorithms and proofs to familiarize readers for the techniques we will use. As in the previous sections, we have the constraint matrix $A \in \mathbb{R}^{d \times n}$ whose rows are permuted according to \cref{thm:main-tw-Cholesky}. Let $L$ be the Cholesky factor of $AH^{-1}A^\T$, and let $\mathcal{T} =(\{1,\dots, d\}, E)$ be the elimination tree for $L$ of height $\tau$. Note we use the convention that a tree consisting of a single vertex has height 1.

Recall $\mathcal{P}(i)$ denote the set of vertices on the path from vertex $i$ to the root in $\mathcal{T}$, and
$\mathcal{D}(i)$ denotes the set of vertices in the subtree rooted at $i$ (including $i$) in $\mathcal{T}$.
For any matrix $M$, we use $M_i$ to denote the $i$-th column or block, and $\mathcal{M}_i$ to denote the non-zero pattern of the $i$-th column or block (i.e. it is a set of row indices). For example, $j \in \mathcal{A}_i$ if row $j$ of $A$ is non-zero in a column in block $i$. We use $M^i$ to denote the $i$-th row of $M$ and $\mathcal{M}^i$ to denote the non-zero pattern of the $i$-th row. 

We begin with basic properties of $A$ and $L$:

\begin{lemma}\label{lem:A-sparsity-pattern}
	If $tw(A) = \tau$, then $nnz(A_i) \leq \tau$ for all $i \in [n]$. In particular, $\mathcal{A}_i$ is a subset of some path from a leaf to the root of $\mathcal{T}$.
\end{lemma}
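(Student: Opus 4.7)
The plan is a two-step argument: first establish a clique structure, then transfer it onto the tree $\mathcal{T}$.

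First, I would observe that $\mathcal{A}_i$ induces a clique in the dual graph $G_A$. This is essentially unpacking the definition: two row indices $j, k$ are adjacent in $G_A$ whenever there is some column block $r$ with $A_{j,r} \neq \mathbf{0}$ and $A_{k,r} \neq \mathbf{0}$. Taking $r = i$, every pair $j,k \in \mathcal{A}_i$ is adjacent, so $\mathcal{A}_i$ induces a clique of size $|\mathcal{A}_i|$ in $G_A$. This is the same observation already recorded in the earlier lemma that a block with row-sparsity $\tau$ induces a clique of size $\tau$.

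Next, I would show the clique sits on a single root-to-leaf path in $\mathcal{T}$. Applying \cref{lem:clique-in-elim-tree} to the true elimination tree $\mathcal{T}'$ corresponding to the order $\pi$ produced by \cref{alg:make-elim-tree}, the vertices of $\mathcal{A}_i$ all lie on a common root-to-leaf path in $\mathcal{T}'$. To transfer this to $\mathcal{T}$, I would reuse the observation from the proof of \cref{thm:main-tw-Cholesky}: for every vertex $j$, the path from $j$ to the root in $\mathcal{T}'$ is contained in the corresponding root-to-$j$ path of $\mathcal{T}$. Choosing $j$ to be the leaf-most (that is, earliest eliminated) vertex of $\mathcal{A}_i$ in $\mathcal{T}'$, the entire set $\mathcal{A}_i$ then lies on the path from the root of $\mathcal{T}$ to $j$.

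Finally, the cardinality bound is immediate: since $\mathcal{T}$ has height $\tau$, every root-to-leaf path contains at most $\tau$ vertices, so $|\mathcal{A}_i| \leq \tau$, which is exactly the number of non-zero rows in the block $A_i$.

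The only subtlety, rather than a genuine obstacle, is the bookkeeping step of passing from $\mathcal{T}'$ to the slightly taller bounded-degree tree $\mathcal{T}$; since the containment of paths has already been justified in the proof of \cref{thm:main-tw-Cholesky}, the argument reduces to cleanly composing the clique property with this containment.
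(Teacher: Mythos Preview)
Your proposal is correct and follows the same two-step skeleton as the paper: observe that $\mathcal{A}_i$ is a clique in $G_A$, then invoke \cref{lem:clique-in-elim-tree} to place it on a single root-to-leaf path. The paper's proof is terser on both points; in particular it does not spell out the passage from the true elimination tree $\mathcal{T}'$ to the constructed tree $\mathcal{T}$, which you handle explicitly via the path-containment observation from the proof of \cref{thm:main-tw-Cholesky}.

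The one substantive difference is how the cardinality bound is obtained. The paper reads the hypothesis $tw(A)=\tau$ literally and bounds $|\mathcal{A}_i|$ directly by the treewidth, using that treewidth is at least the maximum clique size minus one. You instead deduce the bound as a corollary of the path claim: once $\mathcal{A}_i$ sits on a root-to-leaf path of $\mathcal{T}$, its size is at most $\mathrm{height}(\mathcal{T})$. Both arguments are valid; yours is internally consistent with Section~5's convention that $\tau$ denotes the elimination-tree height, while the paper's is closer to the lemma's stated hypothesis.
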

\begin{proof}
	By construction, $\mathcal{A}_i$ form a clique in the dual graph $G_A$, and $tw(A)$ is lower-bounded by the size of the largest clique in $G_A$. By construction of the elimination tree, any clique in $G_A$ must lie on one path from a leaf to the root of $\mathcal{T}$.
\end{proof}

\begin{lemma}[{\cite[proposition 5]{Schreiber1982}}]\label{lem:L-column-sparsity-pattern}
	$\mathcal{L}_i \subseteq \mathcal{P}(i)$ for each $i$. In particular, the height of the elimination tree satisfies $\tau \geq \max\{|\mathcal{L}_i| : i \in [d]\}$. \qed
\end{lemma}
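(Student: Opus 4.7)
The plan is to leverage the correspondence between the non-zero pattern of the Cholesky factor $L$ and the filled graph $G_\pi^+$ that was already set up in~\cref{sec:tw_tex}: namely, for $j > i$, the entry $L_{ji}$ is non-zero if and only if $(i,j) \in E(G_\pi^+)$, which in turn holds iff there is a path $i = v_0, v_1, \dots, v_k = j$ in $G$ whose interior vertices $v_1,\dots,v_{k-1}$ all have index strictly less than $i$. Throughout I assume the rows have already been permuted according to $\pi$, so that indices and elimination order coincide. The goal is then purely combinatorial: show that any filled edge $(i,j)$ with $j>i$ forces $j$ to be an ancestor of $i$ in $\mathcal{T}$. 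Given this, $\mathcal{L}_i \subseteq \mathcal{P}(i)$ follows (the diagonal entry $i \in \mathcal{L}_i$ trivially lies on $\mathcal{P}(i)$), and the height bound $\tau \geq \max_i |\mathcal{L}_i|$ is immediate since $|\mathcal{P}(i)|$ is bounded by the height of $\mathcal{T}$.

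I would argue by induction on $j-i$. In the base case where $j$ is the smallest index greater than $i$ neighboring $i$ in $G_\pi^+$, the definition of the elimination tree gives $j = \mathrm{parent}(i)$ directly. For the inductive step, let $p = \mathrm{parent}(i)$; by definition $p$ is the minimum index $>i$ with $(i,p) \in E(G_\pi^+)$, so $p \leq j$. If $p=j$ we are done, so assume $p < j$. It then suffices to show $(p,j) \in E(G_\pi^+)$, since by the inductive hypothesis applied to $j - p < j - i$, this forces $j$ to be an ancestor of $p$, hence of $i$.

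To produce the filled edge $(p,j)$, concatenate the $G$-path from $p$ to $i$ witnessing $(i,p) \in E(G_\pi^+)$ with the $G$-path from $i$ to $j$ witnessing $(i,j) \in E(G_\pi^+)$. This yields a walk from $p$ to $j$ in $G$ whose only possible interior vertex of index $\geq p$ would be $i$ itself, but $i < p$, so every interior vertex of the walk has index $< p$. Extracting a simple path by shortcutting any repeated vertex preserves the property that every interior vertex of the resulting path has index $< p$, so $(p,j) \in E(G_\pi^+)$ as required.

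The main obstacle I anticipate is just being careful with the walk-to-path extraction and checking that $p \leq j$ always holds (which follows because $j > i$ and $(i,j) \in E(G_\pi^+)$ makes $j$ a candidate in the $\min$ defining $\mathrm{parent}(i)$). Once these details are in place the argument is entirely combinatorial and purely about $\mathcal{T}$ and $G_\pi^+$; the linear-algebraic content is fully absorbed into the filled-graph characterization recalled at the start.
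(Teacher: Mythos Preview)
The paper does not supply its own proof of this lemma: it is stated with a \texttt{\textbackslash qed} and attributed to~\cite{Schreiber1982} (the same fact also appears earlier as \cref{lem: nnz path}, again without proof). So there is nothing to compare against beyond the cited source.

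Your argument is the standard one and is correct. The key inductive step---concatenating the $G$-path witnessing $(i,p)\in E(G_\pi^+)$ with the one witnessing $(i,j)\in E(G_\pi^+)$ to produce a witness for $(p,j)\in E(G_\pi^+)$---works exactly as you describe: every interior vertex of the concatenated walk has index $<i<p=\min(p,j)$, and shortcutting to a simple path only removes vertices. One cosmetic point: what you call the ``base case'' is really the terminating branch $p=j$ of the inductive step; the induction on $j-i$ bottoms out at $j-i=1$, where $p=j$ is forced automatically since $p>i$ and $p\le j=i+1$. You may want to rephrase accordingly, but the logic is sound.
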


As a corollary, this relation between the non-zero pattern of the columns of $L$ and $\mathcal{T}$ further allow us to characterize the non-zero pattern of the rows of $L$:

\begin{lemma}\label{lem:L-row}
	$\mathcal{L}^i \subseteq \mathcal{D}(i)$. \qed
\end{lemma}

\subsection{Solving Triangular Systems}

Now, we discuss the cost of solving triangular systems, in connection to the elimination tree $\mathcal{T}$.

	\begin{algorithm}[H]\caption{Solving $Lx = v$}\label{alg:L_solve}
		\begin{algorithmic}[1]
			\State $x \leftarrow \mathbf{0}_d$
			\For {increasing $j$ with $v_j \neq 0$}
			\State $x_j \leftarrow v_j/L_{jj}$
			\State $v \leftarrow v - x_j L_j$
			\EndFor
			\State \Return $x$
		\end{algorithmic}
	\end{algorithm}	
	
\begin{lemma} \label{lem:Linv-sparsity-time}
Let $x=L^{-1}v$, and let $\mathcal{S}$ be the non-zero pattern of $v$. Then, the nonzero pattern of $x$ is a subset of $\bigcup_{i\in \mathcal{S}}\mathcal{P}(i)$. Furthermore, we can solve for $L^{-1}v$ in $O(\|L^{-1}v\|_0 \cdot \tau)$ time.  In particular, if the non-zero pattern of $v$ is a subset of some path $\mathcal{P}$ from a leaf to the root in $\mathcal{T}$, then the non-zero pattern of $L^{-1}v$ is also a subset of $\mathcal{P}$, and we can solve for $L^{-1}v$ in $O(\tau^2)$ time. 
\end{lemma}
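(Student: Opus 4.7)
The plan is to run Algorithm~\ref{alg:L_solve} (forward substitution) and track the propagation of nonzeros through the tree~$\mathcal{T}$, using the column-sparsity relation $\mathcal{L}_k \subseteq \mathcal{P}(k)$ from Lemma~\ref{lem:L-column-sparsity-pattern} as the decisive structural fact.

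First I would establish the sparsity claim $\mathrm{supp}(x) \subseteq \bigcup_{i \in S} \mathcal{P}(i)$ by induction on the iteration count of the loop. The invariant is: at the start of iteration $j$, every nonzero coordinate of the current residual vector $v$ lies in $\bigcup_{i \in S} \mathcal{P}(i)$. Initially this holds with equality (up to the fact that $i \in \mathcal{P}(i)$). When we set $x_j \leftarrow v_j/L_{jj}$, the index $j$ is in $\mathrm{supp}(v) \subseteq \bigcup_{i \in S} \mathcal{P}(i)$ by induction, so $j$ is accounted for. The update $v \leftarrow v - x_j L_j$ can only create nonzeros at coordinates in $\mathcal{L}_j$, and Lemma~\ref{lem:L-column-sparsity-pattern} gives $\mathcal{L}_j \subseteq \mathcal{P}(j)$. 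Since $j$ is already an ancestor (or equal) of some $i \in S$, every ancestor of $j$ is also an ancestor of that same $i$, so $\mathcal{P}(j) \subseteq \mathcal{P}(i) \subseteq \bigcup_{i\in S}\mathcal{P}(i)$, preserving the invariant. This simultaneously proves the containment for $x$.

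Next I would bound the running time. The outer loop of Algorithm~\ref{alg:L_solve} iterates exactly once per coordinate $j$ with $x_j \neq 0$, i.e.\ at most $\|L^{-1}v\|_0$ times. Each iteration costs $O(\tau)$: the update $v \leftarrow v - x_j L_j$ touches only the at most $\tau$ nonzero entries of $L_j$ (Lemma~\ref{lem:L-column-sparsity-pattern}). To iterate through indices ``in increasing order among those with $v_j \neq 0$'' without scanning all of $[d]$, I would precompute the candidate set $U := \bigcup_{i \in S} \mathcal{P}(i)$ by walking up the tree from each $i \in S$, which costs $O(|S|\tau) = O(\|L^{-1}v\|_0 \cdot \tau)$, sort it by index, and iterate over $U$ (skipping those with $v_j = 0$). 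The total cost is $O(\|L^{-1}v\|_0 \cdot \tau)$.

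For the special case, observe that if $S$ lies on a single root-to-leaf path $\mathcal{P}$, then for every $i \in S$, the ancestor path $\mathcal{P}(i)$ is contained in $\mathcal{P}$, hence $\bigcup_{i \in S} \mathcal{P}(i) \subseteq \mathcal{P}$. Since $|\mathcal{P}| \leq \tau$, the containment from the first step yields $\|L^{-1}v\|_0 \leq \tau$, and the running time bound becomes $O(\tau^2)$, with $\mathrm{supp}(L^{-1}v) \subseteq \mathcal{P}$. The main (minor) obstacle is making the enumeration of loop indices efficient without introducing a $\log$ factor, which is handled by precomputing and sorting $U$ as above; everything else follows by transparently combining the column-sparsity Lemma~\ref{lem:L-column-sparsity-pattern} with ancestor closure in $\mathcal{T}$.
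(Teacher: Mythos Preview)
Your proposal is correct and follows essentially the same approach as the paper: both arguments run forward substitution (Algorithm~\ref{alg:L_solve}) and use Lemma~\ref{lem:L-column-sparsity-pattern} to show that nonzeros can only propagate to ancestors in $\mathcal{T}$, then bound the cost per iteration by the column sparsity $\tau$. Your write-up is more detailed (an explicit loop invariant, and a discussion of how to enumerate the active indices) while the paper's version is a terse two-paragraph sketch, but the mathematical content is the same.
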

\begin{proof}
We prove the sparsity pattern by inspecting~\cref{alg:L_solve}. Note that the $x_i\neq 0$ if $v_i \neq 0$ or ($x_j \neq 0$ and $L_{ij} \neq 0$). \cref{lem:L-column-sparsity-pattern} shows that $L_{ij} \neq 0$ implies $i$ is an ancestor of $j$. Hence, the non-zeros in $x$ can only propagate to its ancestors from the non-zeros of $v$. As a result, the non-zero pattern of $x$ is a subset of $\bigcup_{i\in \mathcal{S}}\mathcal{P}(i)$.

For the runtime, we note that $L_j$ has $\tau$ non-zero entries and hence each step takes $O(\tau)$ time. Since the number of steps is exactly $\|L^{-1}v\|_0$, we have the runtime $O(\|L^{-1}v\|_0 \cdot \tau)$.
\end{proof}

\begin{lemma}\label{lem:L-inv-transpose-coordinate}
	For any $v$, we can solve for $(L^{-\T}v)_i$ in time $O(\tau^2)$.
\end{lemma}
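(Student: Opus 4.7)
The plan is to rewrite $(L^{-\T} v)_i$ as an inner product between a sparse vector and $v$, using $(L^{-\T} v)_i = e_i^\T L^{-\T} v = (L^{-1} e_i)^\T v$. So the task reduces to computing $L^{-1} e_i$ and then taking its inner product with $v$.

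First, I would apply \cref{lem:Linv-sparsity-time} to the vector $e_i$. Since the non-zero pattern of $e_i$ is just $\{i\}$, which is trivially contained in the path $\mathcal{P}(i)$ from $i$ to the root of $\mathcal{T}$, the lemma gives that the non-zero pattern of $L^{-1} e_i$ is also contained in $\mathcal{P}(i)$, and hence has at most $\tau$ entries. Moreover, the same lemma asserts that $L^{-1} e_i$ can be computed in $O(\tau^2)$ time by running the forward-substitution procedure \cref{alg:L_solve} restricted to indices along $\mathcal{P}(i)$.

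Once $L^{-1} e_i$ is in hand as a vector with at most $\tau$ non-zero entries, I would compute the inner product $(L^{-1} e_i)^\T v$ by reading off the $O(\tau)$ relevant coordinates of $v$ and summing. This takes $O(\tau)$ additional time, so the total is $O(\tau^2)$, as claimed.

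There is no real obstacle here — the lemma is essentially a transpose-side corollary of \cref{lem:Linv-sparsity-time}. The only subtlety worth flagging is that we are exploiting column sparsity of $L^{-1}$ (equivalently, row sparsity of $L^{-\T}$) rather than trying to solve a triangular system in $L^\T$ directly from the bottom up; solving $L^\T x = v$ in the natural way would require touching potentially dense rows of $L^\T$ (i.e.\ dense columns of $L$ below the diagonal), and would not obviously yield an $O(\tau^2)$ bound for a single coordinate.
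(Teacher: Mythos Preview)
Your proof is correct and essentially identical to the paper's: both rewrite $(L^{-\T}v)_i = (L^{-1}e_i)^\T v$, invoke \cref{lem:Linv-sparsity-time} to compute the $\tau$-sparse vector $L^{-1}e_i$ in $O(\tau^2)$ time, and then take the inner product in $O(\tau)$ time. Your added remark about why one shouldn't solve $L^\T x = v$ directly is a nice clarification but not needed for the argument.
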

\begin{proof}
	Note that $(L^{-\T}v)_i = e_i^\T L^{-\T} v$. By \cref{lem:Linv-sparsity-time}, computing $e_i^\T L^{-\T}$ takes $O(\tau^2)$ time and the resulting vector has $\tau$ sparsity, so the subsequent multiplication with $v$ also takes $\tau = O(\tau^2)$ time.
\end{proof}

\begin{lemma}\label{lem:Linv-path}
	Let $S \subseteq [d]$ be a subset of the vertices on some path $\mathcal{P}$ from a leaf to the root in $\mathcal{T}$. Then for any $y$, we can compute the subvector $(L^{-\top}y)|_S = y^\top L^{-1}|_S$ in $O(\tau^2)$ time, where $L^{-1}|_S$ denotes $L^{-1}$ restricted to the columns given by $S$.
\end{lemma}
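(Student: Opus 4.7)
The plan is to exploit the fact that $S$ lies on a single root-to-leaf path $\mathcal{P}$, so $|S| \leq |\mathcal{P}| \leq \tau$, and more importantly, that all the ancestors of any $i \in S$ in $\mathcal{T}$ also lie on $\mathcal{P}$. The naive approach of applying \cref{lem:L-inv-transpose-coordinate} to each coordinate $i \in S$ separately would cost $O(\tau^3)$, but we can save a factor of $\tau$ by sharing the triangular-solve work across all of $\mathcal{P}$.

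Concretely, first I would set $x = L^{-\top} y$ and observe that $x$ solves the upper-triangular system $L^\top x = y$, which by back-substitution gives the recursion
\[
x_i = \frac{1}{L_{ii}}\left(y_i - \sum_{j > i} L_{ji}\, x_j\right).
\]
By \cref{lem:L-column-sparsity-pattern}, the only $j > i$ contributing to this sum are $j \in \mathcal{P}(i)$, i.e.\ proper ancestors of $i$ in $\mathcal{T}$. When $i \in \mathcal{P}$, every such ancestor also lies on $\mathcal{P}$, so the right-hand side only references $x_j$ for $j \in \mathcal{P}$ with $j > i$. This means that if we process the vertices of $\mathcal{P}$ in decreasing order (from the root of $\mathcal{T}$ down to the leaf), we can compute $x_i$ for every $i \in \mathcal{P}$ using only values already computed.

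To bound the cost, at each $i \in \mathcal{P}$ I iterate over the non-zero entries of the column $L_i$, of which there are at most $\tau$ by \cref{lem:L-column-sparsity-pattern}; hence computing each $x_i$ takes $O(\tau)$ time. Since $|\mathcal{P}| \leq \tau$, the total work is $O(\tau^2)$. Finally I extract the subvector indexed by $S \subseteq \mathcal{P}$ from the computed values, which is free. The only real bookkeeping step is ensuring that we access the non-zero pattern of each $L_i$ efficiently; this is given by the standard representation of $L$ together with $\mathcal{T}$, so there is no genuine obstacle — the saving relative to the per-coordinate bound of \cref{lem:L-inv-transpose-coordinate} is entirely due to the common suffix of ancestors shared by every $i \in S$.
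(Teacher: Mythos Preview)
Your proposal is correct and is essentially the same argument as the paper's. The paper phrases it as restricting to the submatrix: since $L^{-1}e_i$ is supported on $S' = V(\mathcal{P})$ for every $i \in S'$, one has $(L^{-\top}y)|_{S'} = (L_{S'\times S'})^{-\top} y|_{S'}$, and then solves this $(\tau+1)\times(\tau+1)$ upper-triangular system in $O(\tau^2)$ time. Your back-substitution description unrolls exactly this solve coordinate-by-coordinate and uses \cref{lem:L-column-sparsity-pattern} in the same way to argue that only values on $\mathcal{P}$ are ever needed.
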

\begin{proof} Let $S' = V(\mathcal{P})$, so we have $S \subseteq S'$. \cref{lem:Linv-sparsity-time} shows that $L^{-1}e_i$ is supported on $S'$ for any $i \in S'$. It follows that for any $i \in S$, we have $e_i^{\T} L^{-\T} y = y^{\T} L^{-1} e_i = y|_{S'}^{\T} (L^{-1} e_i)|_{S'}$. Hence, $(L^{-\T}y)|_{S}$ only depends on the entries of $y$ on $S'$. 

This allows us to write $(L^{-\T}y)|_S = (L^{-\T})|_{S'\times S'} y|_{S'} = (L_{S'\times S'})^{-\T} y|_{S'}$. Finally, we note that $L_{S'\times S'}^{\T}$ is a $(\tau+1) \times (\tau+1)$ upper triangular matrix and hence we can solve it in $O(\tau^2)$ time.

%
\end{proof}

\subsection{Computing and Updating the Cholesky Factorization}

Next, we study the cost of computing and updating the Cholesky factorization. The crux for efficient implementation of sparse Cholesky factorization is that both the matrix $M$ and its Cholesky decomposition $M = L L^\top$ are sparse, and hence the operations involving $0$ can be skipped. There are many different algorithms for this; the following is one of them.
	\begin{algorithm}[H]\caption{Cholesky factorization of a matrix $M$}\label{alg:compute-chol}
		\begin{algorithmic}[1]
			\For {$j = 1$ to $d$}
			\State $L_{j,j} \leftarrow \sqrt{M_{j,j} - \sum_{k=1}^{j-1} L_{j,k}^2}$
			\For {$i = j+1$ to $d$}
			\State $L_{i,j} \leftarrow \frac{1}{L_{j,j}} \left(M_{i,j} - \sum_{k=1}^{j-1} L_{i,k} L_{j,k} \right)$
			\EndFor
			\EndFor
			\State \Return $L$
		\end{algorithmic}
	\end{algorithm}

By analyzing the number of non-zeros operations of the above algorithm (or other Cholesky factorization algorithms), one can show the following:
\begin{lemma}[{\cite[Theorem 2.2.2]{george1994computer}}]\label{lem:chol-time}
	For a positive definite matrix $M$, we can compute its Cholesky factorization $M = LL^\top$ in time 
	\[
	\Theta(\sum_{j=1}^d |\mathcal{L}_j|^2),
	\] 
	where $|\mathcal{L}_j|$ denotes the number of nonzero entries in the $j$-th column of $L$. \qed
\end{lemma}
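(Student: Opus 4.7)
The plan is to analyze Algorithm \ref{alg:compute-chol} while taking advantage of the fact that the non-zero pattern of $L$ is known a priori from the elimination tree $\mathcal{T}$: by Lemma \ref{lem:L-column-sparsity-pattern} we have $\mathcal{L}_j \subseteq \mathcal{P}(j)$, and by Lemma \ref{lem:L-row} we have $\mathcal{L}^i \subseteq \mathcal{D}(i)$. As a preprocessing step I would compute all $\mathcal{L}_j$ from $\mathcal{T}$ and store each column/row index set as a sorted list, together with pointer arrays into the sparse storage of $L$ so that the address of an entry $L_{i,j}$ with $i \in \mathcal{L}_j$ can be looked up in $O(1)$ time (this is the standard compressed column format; see \cite{george1994computer}). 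With these data structures in place, the outer \textbf{for} loop over $i$ in Algorithm \ref{alg:compute-chol} can be restricted to $i \in \mathcal{L}_j\setminus\{j\}$, and the inner summation over $k$ can be restricted to $k \in \mathcal{L}^i \cap \mathcal{L}^j$.

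Next I would prove the upper bound by a charging argument: every non-trivial arithmetic operation performed by the modified algorithm is of the form $L_{i,k}\cdot L_{j,k}$ for some triple $(i,j,k)$ with $k < j \leq i$ and $i,j \in \mathcal{L}_k$. Charging each such operation to its column index $k$, the number of triples charged to $k$ is at most $\binom{|\mathcal{L}_k|}{2} + |\mathcal{L}_k| = O(|\mathcal{L}_k|^2)$. Summing over $k$ yields the total cost $O(\sum_k |\mathcal{L}_k|^2)$. The enumeration of $\mathcal{L}^i \cap \mathcal{L}^j$ needed to generate exactly these triples can be performed in time proportional to the intersection size using the tree structure: both $\mathcal{L}^i$ and $\mathcal{L}^j$ lie in the subtree rooted at $\min(i,j)$ by Lemma \ref{lem:L-row}, so a single scan through the stored row-indices suffices without any logarithmic overhead.

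For the matching lower bound, I would observe that just to produce the $|\mathcal{L}_k|$ non-zero entries of $L_{*,k}$ from $M$, the algorithm must subtract the rank-one outer product contributions $L_{i,k'}L_{j,k'}$ for every earlier $k'$ with $i,j \in \mathcal{L}_{k'}$, and each such contribution is distinct. An alternative simple argument is that for each pair $i,j \in \mathcal{L}_k$ with $i \geq j > k$, the outer product update to position $(i,j)$ must be executed by the algorithm, giving $\Omega(|\mathcal{L}_k|^2)$ writes per column $k$.

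The main obstacle is ensuring the $O(1)$-per-entry address lookup and the intersection enumeration can be implemented without a hidden logarithmic factor, which would blow up the bound. This is handled by the standard compressed-column representation together with a ``topological-order'' traversal dictated by $\mathcal{T}$, both of which are textbook material (\cite{george1994computer,Davis2006book}); I would cite these rather than reconstruct them, so that the proof reduces to the clean two-sided counting sketched above.
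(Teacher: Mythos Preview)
The paper does not prove this lemma; it is stated with a \qed and attributed to \cite{george1994computer}, preceded only by the sentence ``By analyzing the number of non-zero operations of the above algorithm\dots one can show the following.'' Your charging argument---assigning each nontrivial multiplication $L_{i,k}L_{j,k}$ to column $k$ and counting $O(|\mathcal{L}_k|^2)$ ordered pairs $(i,j)$ with $i,j\in\mathcal{L}_k$---is exactly the standard proof and is correct.

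One small remark on the implementation side: the phrasing ``enumerate $\mathcal{L}^i\cap\mathcal{L}^j$ in time proportional to the intersection size'' is the awkward part of your sketch, and getting that literally would require some care. The usual way to avoid it is to reorder the two inner loops of \cref{alg:compute-chol}: for each $j$, iterate first over $k\in\mathcal{L}^j$ (equivalently $k$ with $j\in\mathcal{L}_k$), and for each such $k$ iterate over $i\in\mathcal{L}_k$ with $i\ge j$. This produces precisely the nontrivial triples with no intersection computation at all, and the row pattern $\mathcal{L}^j$ is built incrementally as earlier columns are finalized. Since you already plan to cite \cite{george1994computer,Davis2006book} for these data-structure details, your sketch is adequate as written.
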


\begin{corollary}\label{cor:chol-time}
	The Cholesky factorization $AH^{-1}_{\ox}A^\top = LL^\top$ can be computed in $O(n\tau^2)$ time.
\end{corollary}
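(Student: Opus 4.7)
The plan is to bound the two costs separately: forming the matrix $M \defeq AH^{-1}_{\ox}A^\top$ and then running \cref{alg:compute-chol} on $M$.

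First, I would form $M$ by writing it as a sum over column blocks:
\[
M \;=\; \sum_{i=1}^{m} A_i (H_{\ox}^{-1})_{ii} A_i^{\top},
\]
where $(H_{\ox}^{-1})_{ii}$ is the $n_i \times n_i$ diagonal block and $A_i$ is the $d \times n_i$ column block. By \cref{lem:A-sparsity-pattern}, each $A_i$ has at most $\tau$ nonzero rows (namely the rows indexed by $\mathcal{A}_i$), so $A_i (H_{\ox}^{-1})_{ii} A_i^{\top}$ is supported in the $\tau \times \tau$ principal submatrix indexed by $\mathcal{A}_i$. Since $n_i = O(1)$, computing this contribution takes $O(\tau^2)$ time, and adding it into $M$ touches at most $\tau^2$ entries. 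Summing over $i \in [m]$ gives $O(m \tau^2) = O(n \tau^2)$ time to form $M$ (using $\sum_i n_i = n$ and $n_i = O(1)$, so $m \leq n$).

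Second, for the Cholesky factorization itself, I would invoke \cref{lem:chol-time}, which bounds the run-time by $\Theta\bigl(\sum_{j=1}^{d} |\mathcal{L}_j|^2\bigr)$. By \cref{lem:L-column-sparsity-pattern}, $\mathcal{L}_j \subseteq \mathcal{P}(j)$, so $|\mathcal{L}_j| \leq \mathrm{height}(\mathcal{T}) = \tau$. Therefore
\[
\sum_{j=1}^{d} |\mathcal{L}_j|^2 \;\leq\; d \cdot \tau^2 \;\leq\; n \tau^2,
\]
where the last inequality uses $d \leq n$ from the problem setup. Adding the two contributions yields the claimed $O(n \tau^2)$ bound.

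There is no real obstacle here; the only thing to be mildly careful about is using the block structure of $A$ (not just entry-level sparsity) when forming $M$, so that each block contributes $O(\tau^2)$ rather than something that scales with $n_i$ in a harmful way. Since $n_i = O(1)$, this is immediate.
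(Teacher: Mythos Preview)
Your proof is correct and follows essentially the same approach as the paper: bound the time to form $M = AH^{-1}_{\ox}A^\top$ as $O(n\tau^2)$, then invoke \cref{lem:chol-time} together with the column-sparsity bound $|\mathcal{L}_j|\le\tau$ to get $O(d\tau^2)\le O(n\tau^2)$ for the factorization itself. Your treatment of forming $M$ via the block decomposition is in fact more explicit than the paper's, which simply asserts the $O(n\tau^2)$ bound for that step.
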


\begin{proof}
	By the definition of tree height, for any vertex $i$ in the tree, the length of the path from $i$ to root is less than $\tau$. Then \cref{lem: nnz path} implies $|\mathcal{L}_j| \leq \tau$ for all $j$. Hence, it takes $O(n\tau^2)$ time to compute $AH^{-1}_{\ox}A^\top$ explicitly. Then, we can apply \cref{lem:chol-time} to compute Cholesky factorization and it takes time $O(d\tau^2) = O(n \tau^2)$.
\end{proof}

The following two lemmas involve rank-1 updates of the Cholesky factorization, one regarding the sparsity pattern and one the update time. We state a simplified version of ~\cite{Davis2003}, which makes a further sparsity assumption. We include the proof of first lemma for intuition.
\begin{lemma}[{\cite[Section 5]{Davis2003}}]\label{lem:cholesky-update-sparsity-pattern}
Given a positive definite matrix $M \in \R^{d\times d}$, its elimination tree $\mathcal{T}$ of height $\tau$, and the corresponding Cholesky factorization $M = L L^\top$. Let $(L+\Delta L)(L+\Delta L)^\top$ be the new Cholesky factorization of $M + w w^\top$. Suppose that the sparsity pattern of $M$ and $M + w w^\top$ are same. If we let $\mathcal{S}$ be the index set of columns of $L$ that are updated, i.e. $\mathcal{S} =\{j \in [d] \mid \Delta L_j \neq \mathbf{0}\}$, then $\mathcal{S}$ is a subset of some path from $k$ to the root in $\mathcal{T}$ where $k$ is the first non-zero index in $w$. Consequently, the row and column sparsity of $\Delta L$ are bounded by $\tau$, and $\nnz(\Delta L) = O(\tau^2)$. 

The same holds for $M - w w^\top$ as long as $M - w w^\top$ is positive definite.
\end{lemma}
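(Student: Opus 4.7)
The plan is to track the classical column-by-column rank-one Cholesky update (sometimes called \texttt{cholupdate}) and show that the sequence of columns it touches traces out a single root-ward path in the elimination tree $\mathcal{T}$ starting from $k = \min\supp(w)$. Recall that this update processes columns in increasing order: at the first index $j$ with $w_j \neq 0$, it replaces $L_{jj}$ with $\sqrt{L_{jj}^2 + w_j^2}$, rescales $L_j$ on the rows below $j$, and writes a new residual $w' := w - \alpha L_j$ for a scalar $\alpha$ chosen so that $w'_j = 0$. The procedure then recurses on $w'$. Consequently, the set $S$ of updated columns is exactly the set of indices that appear as the first non-zero of $w$ in some iteration.

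The central observation is the invariant: \emph{after processing index $j$, the support of the new residual $w'$ is contained in $\mathcal{L}_j \setminus \{j\}$.} This uses the hypothesis that $M$ and $M + ww^\top$ have the same sparsity pattern, which forces any cancellation/fill in the residual to respect the non-zero structure of $L_j$ (rows outside $\mathcal{L}_j$ remain zero in $w'$ because neither $w$ nor $\alpha L_j$ can contribute there without changing the sparsity pattern of $L$). By \cref{lem:L-column-sparsity-pattern}, $\mathcal{L}_j \subseteq \mathcal{P}(j)$, so the first non-zero index of $w'$ is a strict ancestor of $j$ in $\mathcal{T}$, and hence the next updated column is an ancestor of $j$. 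Iterating this shows $S$ lies on a single path from $k$ upward to the root.

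From this path structure the sparsity bounds follow immediately. Since $\mathcal{T}$ has height $\tau$, we have $|S| \leq \tau$, so $\Delta L$ has column sparsity at most $\tau$. For each $j \in S$, the column $L_j$ and its updated version both have supports inside $\mathcal{P}(j)$, which is contained in the same path of length $\leq \tau$; therefore $\Delta L_j$ is also supported on that path, giving row sparsity $\leq \tau$ for $\Delta L$ and $\nnz(\Delta L) = O(\tau^2)$. The downdate case $M - ww^\top$ is identical because the formulas differ only by sign and the sparsity-preservation invariant relies on the supports, not the signs, of the terms.

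The main obstacle I anticipate is rigorously justifying the sparsity-preservation invariant: one must argue that even though $w'$ is obtained by the algebraic cancellation $w - \alpha L_j$, it cannot pick up non-zeros outside $\mathcal{L}_j$. The cleanest way is to observe that if some row $i \notin \mathcal{L}_j$ had $w'_i \neq 0$, then a later iteration would write a non-zero entry into $L$ at a location where the original factorization had a zero, contradicting the assumption that the sparsity pattern of $M + ww^\top$ agrees with that of $M$ (and hence of $L L^\top$ through the symbolic factorization). Once this invariant is in place, the rest of the argument is an induction of constant effort per step, matching the statement of \cite{Davis2003}.
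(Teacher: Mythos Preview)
Your approach is correct but follows a different route from the paper. You track the rank-one \texttt{cholupdate} procedure and maintain the invariant that the residual support stays inside $\mathcal{L}_j$, so the updated columns march up a single root-ward path. The paper instead never invokes \texttt{cholupdate}: it first observes that the sparsity-preservation hypothesis forces $\supp(w)$ to be a clique in the graph of $M$, hence $\supp(w)\subseteq I\defeq\mathcal{P}(k)$; so $M$ and $M+ww^\top$ differ only on the $I\times I$ block. It then imagines running the basic Cholesky algorithm (\cref{alg:compute-chol}) twice and, by a short four-case propagation analysis of the update rules $L_{jj}\gets\sqrt{M_{jj}-\sum_k L_{jk}^2}$ and $L_{ij}\gets(M_{ij}-\sum_k L_{ik}L_{jk})/L_{jj}$, shows every changed entry of $L$ must have both indices in $I$. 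Your argument is closer to the cited source and, as a bonus, immediately yields the $O(\tau^2)$ update time of the next lemma; the paper's argument is more self-contained since it uses only the factorization algorithm already introduced.

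One refinement: your ``main obstacle'' paragraph slightly misplaces the difficulty. The residual $w'=w-\alpha L_j$ cannot pick up \emph{new} non-zeros; its support is contained in $\supp(w)\cup\mathcal{L}_j$. What you actually need is $\supp(w)\subseteq\mathcal{L}_j$ at each stage. For the base case $j=k$ this follows directly from the clique observation (if $i\in\supp(w)$ with $i>k$ then $M_{ik}\neq 0$, so $i\in\mathcal{L}_k$), not from a forward-looking contradiction about later fill. For the inductive step you need the standard elimination-tree fact that $\mathcal{L}_j\setminus\{j\}$ is itself a clique in the filled graph, hence contained in $\mathcal{L}_{j'}$ for $j'=\min\supp(w')$. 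With these two direct observations in place your induction goes through cleanly.
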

\begin{proof}

	Since $w w^\top$ is a clique in the graph associated with non-zeros of $M$, it is contained in a path from $k$ to the root in $\mathcal{T}$ where $k$ is the first non-zeros in $w$. Let $\mathcal{I}$ be the set of indices in this path. It follows that $M$ is only changed in the $\mathcal{I} \times \mathcal{I}$ block. Now, we run \cref{alg:compute-chol} twice, once on $M$ and once on $M + ww^\top$ and prove that the difference in $L$ is in the $\mathcal{I}\times \mathcal{I}$ block. The formulas in \cref{alg:compute-chol} show that the changes to $M$ and $L$ are propagated in the $L$ in the next step in the following ways:
\begin{itemize}
\item Updating the entry $M_{ij}$ causes $L_{ij}$ to update.\\
This case is good because we know $i,j\in \mathcal{I}$.
\item Updating the entry $L_{jk}$ causes $L_{jj}$ to update.\\
By induction, in the last step $L_{jk}$ is updated implies $j,k\in \mathcal{I}$.
Hence, $(j,j)\in \mathcal{I}\times \mathcal{I}$ (only entries in the $\mathcal{I}\times \mathcal{I}$ submatrix of $L$ are updated).
\item Updating the entry $L_{ik}$ and $L_{jk}\neq0$ causes $L_{ij}$
to update.\\
By induction, we know $i,k\in \mathcal{I}$. Since $L_{jk}\neq0$, \cref{lem:L-column-sparsity-pattern}
shows $j$ is on the path of $k$ to the root. Since $k\in \mathcal{I}$, we
have $j\in \mathcal{I}$. Hence, $(i,j)\in \mathcal{I}\times \mathcal{I}$ again.
\item Updating the entry $L_{jk}$ and $L_{ik}\neq0$ causes $L_{ij}$
to update.\\
Same argument as above.
\end{itemize}
In all the cases, the change in $L$ is restricted to the $\mathcal{I} \times \mathcal{I}$ submatrix.
\end{proof}

At a high level, since we know $L$ is changed in a $\tau \times \tau$ sized block, we only need to update the factorization on that block. Similarly to matrix inverse, there are simple algorithms for rank-1 update for factorization in time linear to the square of the dimension.

\begin{lemma}[{\cite[Section 5]{Davis2003}}]\label{lem:cholesky-update-time}
Given a positive definite matrix $M \in \R^{d\times d}$, its elimination tree $\mathcal{T}$ of height $\tau$, and the corresponding Cholesky factorization $M = L L^\top$. Let $(L+\Delta L)(L+\Delta L)^\top$ be the new Cholesky factorization of $M + w w^\top$. Suppose that the sparsity pattern of $M$ and $M + v v^\top$ are same. Then, we can compute $\Delta L$ in $O(\tau^2)$ time.

The same holds for $M - w w^\top$ as long as $M - w w^\top$ is positive definite.
\qed
\end{lemma}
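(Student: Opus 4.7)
The plan is to combine the sparsity result from the previous lemma (\cref{lem:cholesky-update-sparsity-pattern}) with the standard column-by-column rank-one Cholesky update/downdate scheme (e.g., Gill--Golub--Murray--Saunders), and argue that this scheme only does nontrivial work on the path $I$.

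First, recall the setup: since $\supp(w)\times\supp(w)$ must lie inside the sparsity pattern of $M$ (otherwise that pattern would change after the update), the indices of $w$ form a clique in the graph of $M$ and hence lie on a single root-to-leaf path of $\mathcal{T}$. Let $k$ be the first nonzero index of $w$ and let $I\defeq\mathcal{P}(k)$, so $\supp(w)\subseteq I$ and, by \cref{lem:cholesky-update-sparsity-pattern}, $\supp(\Delta L)\subseteq I\times I$ with $|I|\le\tau$.

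Next, I would apply the standard outer-product rank-one update that produces $L'=L+\Delta L$ from $L$ by sweeping $j=1,\dots,d$: at step $j$ one maintains a ``working vector'' $w^{(j-1)}$, computes the new diagonal entry $L'_{j,j}$ and scaling factors from $L_{j,j}$ and $w^{(j-1)}_j$, updates the subdiagonal entries of column $j$ from $L_j$ and $w^{(j-1)}_{j+1:d}$, and then sets $w^{(j)}\defeq w^{(j-1)}-\alpha_j L_j$ for an appropriate scalar $\alpha_j$ (the downdate case $M-ww^\top$ uses the same template with sign changes, which is well-defined because $M-ww^\top\succ 0$). The core observation is that this scheme touches column $j$ of $L$ nontrivially only when $w^{(j-1)}_j\neq 0$, and when it does it only overwrites the entries of $L_j$ and of $w^{(j-1)}$ on the support of $L_j$.

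The key claim, which I would prove by induction on $j$, is:
\begin{equation*}
\supp\bigl(w^{(j)}\bigr)\ \subseteq\ I\cap\{j+1,\dots,d\}.
\end{equation*}
The base case $j=0$ is immediate from $\supp(w)\subseteq I$. For the inductive step, if $j\notin I$ then $w^{(j-1)}_j=0$, so $\alpha_j=0$ and $w^{(j)}=w^{(j-1)}|_{j+1:d}$, preserving the invariant and moreover leaving column $j$ of $L$ unchanged (consistent with $\Delta L_j=0$). If $j\in I$, then by \cref{lem:L-column-sparsity-pattern} we have $\mathcal{L}_j\subseteq\mathcal{P}(j)\subseteq I$, so subtracting $\alpha_j L_j$ from $w^{(j-1)}$ only modifies coordinates inside $I$, and the invariant is maintained.

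Finally I would tally the cost. By the invariant, nontrivial work occurs only at the at most $\tau$ indices $j\in I$. At each such $j$ we read and rewrite the nonzero entries of $L_j$ and update $w^{(j-1)}$ on $\mathcal{L}_j$, both of size at most $\tau$; this costs $O(\tau)$ per such step, for a total of $O(\tau^2)$. For the downdate $M-ww^\top$, positive definiteness of the result guarantees every scaling factor in the algorithm is well-defined (no division by zero or imaginary square roots), and the same accounting gives $O(\tau^2)$.

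The main obstacle I anticipate is purely bookkeeping: one must pin down the exact form of the scalars $\alpha_j$ in the GGMS update, verify that $w^{(j-1)}_j=0$ forces $\alpha_j=0$ and leaves $L_j$ untouched (so columns outside $I$ are genuinely skipped), and check that for $j\in I$ the update can be performed by walking only the at most $\tau$ nonzero entries of $L_j$ rather than a full length-$d$ column. Once these mechanical facts are in hand, the sparsity invariant above immediately gives the $O(\tau^2)$ bound.
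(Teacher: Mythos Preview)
Your proposal is correct and is exactly the intended argument. The paper does not actually prove this lemma: it is stated with a \qed and attributed to \cite{Davis2003}, with only the one-sentence remark preceding it that ``since we know $L$ is changed in a $\tau\times\tau$ sized block, we only need to update the factorization on that block'' and that standard rank-one update algorithms run in time quadratic in the dimension. Your column-by-column Givens/GGMS sweep with the invariant $\supp(w^{(j)})\subseteq I\cap\{j{+}1,\dots,d\}$ is precisely the mechanism behind that citation, and your verification that $j\notin I$ forces $s_j=0$ (identity rotation) while $j\in I$ keeps all modified entries inside $I$ via $\mathcal{L}_j\subseteq\mathcal{P}(j)\subseteq I$ is the right bookkeeping.
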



\section{Robust Central Path Maintenance} \label{sec:data_structure}

In this section, we present a data structure $\textsc{CentralPathMaintenance}$
to efficiently perform the robust central path step needed in \cref{alg:IPM_framework}. 
Specifically, we will prove the following theorem.

\begin{restatable}[Robust Central Path Step]{theorem}{centralPathAlg}
	\label{thm:central-path-algo} 
	Suppose~\cref{alg:IPM_framework} is run on the convex program~\cref{eq:problem}.
	Given the constraint matrix $A \in\R^{d\times n}$ with block-diagonal structure $n=\sum_{i=1}^{m}n_{i}$, its binary elimination tree $\mathcal{T}$ of height $\tau$, 
	and parameters $\lambda, \overline{\eps}, \eps_t, \alpha, w = \mathbf{1}_m$ as defined in~\cref{alg:IPM_framework},
	the randomized data structure $\textsc{CentralPathMaintenance}$
	(\cref{alg:CentralPath,alg:CentralPath-1}) 
	implicitly maintains the central path primal-dual solution pair $(x,s)$ (\cref{alg:IPM_framework} \cref{line:move_xs})
	and explicitly maintains its approximation $\oxs$ (\cref{alg:IPM_framework} \cref{line:maintain_xs}) 
	using the following functions:
	\begin{itemize}
	\item $\textsc{Initialize}(x,s,t_{0},k)$: Initializes
	the data structure with initial primal-dual solution pair $(x,s)$, initial central path timestep $t_0$, and a runtime tuning parameter $k$ in $O(n\tau^{2}\log^4(n))$ time.
	\item $\textsc{MultiplyAndMove}(t)$: It implicitly maintains 
	\begin{equation}\label{eq:robust-central-path-update}
	\begin{aligned}
	x & \leftarrow x+H_{\ox}^{-1/2}(I-P_{\ox})H_{\ox}^{-1/2}\delta_{\mu}(\overline{x},\overline{s},\overline{t})
	\\
	s & \leftarrow s+tH_{\ox}^{1/2}P_{\ox}H_{\ox}^{-1/2}\delta_{\mu}(\overline{x},\overline{s},\overline{t})
	\end{aligned}
	\end{equation}
	where 
	$H_{\ox} \defeq \nabla^{2}\phi(\overline{x})$,
	$P_{\ox} \defeq H_{\ox}^{-1/2}A^{\top}(AH_{\ox}^{-1}A^{\top})^{-1}AH_{\ox}^{-1/2}$,
	and $\overline{t}$ is some earlier timestep satisfying $|t-\overline{t}|\leq \eps_t \cdot \overline{t}$.
	
	It also explicitly maintains $(\ox,\os)$ such that $\|\overline{x}_{i}-x_{i}\|_{\overline{x}_{i}}\leq\overline\eps$
	and $\|\os_{i}-s_{i}\|_{\ox_{i}}^{*}\leq t \overline\eps w_{i}$ for all $i\in[m]$
	with probability at least 0.9. 
	
	Assuming the function is called at most $N$ times and $t$ is monotonically decreasing from $t_{\max}$ to $t_{\min}$, the total running time is
	\[
		O\left(\left({Nn^{1/2}}+n{\log(t_\textnormal{max}/t_\textnormal{min})}\right)\tau^2\poly\log(N)\right).
	\]
	\item $\textsc{Output}$: It computes $(x,s)$ exactly and outputs them in $O(n\tau^{2})$ time.
	\end{itemize}
\end{restatable}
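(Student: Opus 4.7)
The plan is to realize $\textsc{CentralPathMaintenance}$ around the multiscale representation $x = x_0 + H_{\ox}^{-1}A^{\top}L^{-\top}h$ (and an analogous representation for $s$) introduced in Section~\ref{sec:overview_representation}, together with the sampling tree $\mathcal{S}$ of Section~\ref{sec:overview_ds}. The data structure's state consists of: the Cholesky factor $L$ of $AH_{\ox}^{-1}A^{\top}$; the explicit approximations $\ox,\os$; the sparsely-changing vectors $x_0,s_0$; the multiscale coefficients $h,h'$; and, at each node $v\in V(\mathcal{S})$, lazy JL-sketches of the form $\Phi_{\chi(v)}\mathcal{W}^{\top}h$ that estimate the subvector norms $\|(H_{\ox}^{1/2}(x-\ox))|_{\chi(v)}\|_2^2$ and the dual analogue.

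For $\textsc{Initialize}(x,s,t_0,k)$, I would (i) compute $L$ from $AH_{\ox}^{-1}A^{\top}$ in $O(n\tau^2)$ time by Corollary~\ref{cor:chol-time}; (ii) set $\ox=x,\os=s,x_0=x,s_0=s,h=h'=0$; (iii) build the balanced sampling tree of Section~\ref{subsec:balanced-sampling-tree} via a heavy-light decomposition of $\mathcal{T}$, giving height $O(\log n)$; and (iv) populate the initial (zero) sketches at each node, using JL dimension $O(\log^2 n)$ to get a union bound over all $N$ queries. The initialization budget $O(n\tau^2\log^4 n)$ accommodates building one sketch of dimension $O(\log^2 n)$ at each of the $\widetilde O(n\tau)$ sampling-tree nodes, plus an $O(\log^2 n)$ factor from tree height and the sketch arithmetic per node.

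Each call to $\textsc{MultiplyAndMove}(t)$ is then broken into three phases. \emph{Phase 1 (coefficient update):} propagate the sparse change in $\delta_\mu(\ox,\os,\ot)$ into $h$ by Property~2 of the multiscale representation; only $\widetilde O(\sqrt n)$ coordinates of $\delta_\mu$ change per step by the robust IPM analysis of Theorem~\ref{thm:IPM_framework}, and each induces $\tau^{O(1)}$ changes to $h$ and to the sketches on the $O(\log n)$ ancestors in $\mathcal{S}$ (with downward effects deferred to the sampling phase, as in Section~\ref{sec:overview_ds}). \emph{Phase 2 (detect and correct deviations):} use $\mathcal{S}$ to sample coordinates $i$ with probability proportional to the current estimate of $(H_{\ox}^{1/2}(x-\ox))_i^2$; for each sampled coordinate walk down a single root-to-leaf path in $\mathcal{S}$, performing all deferred downward updates along the way in $\tau^{O(1)}$ time per node, then compute $x_i$ exactly via Property~1 of the multiscale representation. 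If $\|x_i-\ox_i\|_{\ox_i}>\overline\eps$, update $\ox_i\leftarrow x_i$; this in turn triggers a rank-1 update to $H_{\ox}$ and to $L$ along the path $\mathcal{P}(i)$ (Lemma~\ref{lem:cholesky-update-sparsity-pattern}, Lemma~\ref{lem:cholesky-update-time}, each in $O(\tau^2)$ time), which Property~2 absorbs into $x_0$ and $h$, and which is then pushed lazily into the sketches (eagerly on the $O(\log n)$ ancestors of the modified columns, deferred on descendants). The dual side is handled symmetrically. \emph{Phase 3 (cleanup):} keep resampling until no more large coordinates remain, where the potential argument from the robust IPM guarantees the total work across all $N$ calls.

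The overall running-time accounting splits into the two advertised terms. The $Nn^{1/2}\tau^2\polylog(N)$ term comes from the per-step work: $\widetilde O(\sqrt n)$ sparse coefficient updates, $\widetilde O(\sqrt n)$ sampling operations (each costing $\tau^{O(1)}\polylog(n)$ via $\mathcal{S}$), and the path-length JL sketch refreshes. The $n\log(t_\textnormal{max}/t_\textnormal{min})\tau^2\polylog(N)$ term is amortized: the total number of coordinate updates to $\ox,\os$ over the whole central path is $\widetilde O(n\log(t_\textnormal{max}/t_\textnormal{min}))$ (the same accounting as in~\eqref{eq:overview_H_change}), and each such update triggers $\tau^{O(1)}$ work in $L$, in the multiscale representation, and in the sketches. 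Finally, $\textsc{Output}$ reads off $x_i$ and $s_i$ one coordinate at a time via Property~1 in $\tau^{O(1)}$ time each, for a total of $O(n\tau^2)$.

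The main obstacle I anticipate is correctly marrying three sources of randomness and laziness in Phase~2: (a) keeping $\mathcal{W}=L^{-1}AH_{\ox}^{-1/2}$ implicit while $H_{\ox}$ and $L$ both drift, so that a single $\ox_i$ update touches $O(\tau)$ columns of $L$ and hence creates downward-propagating ``wavelet'' updates at many subtrees of $\mathcal{S}$; (b) ensuring each deferred update is applied exactly once and in a consistent order so the sketches at visited nodes give valid $\ell_2$ estimates; and (c) reusing the JL matrix across all $N\approx\sqrt n\log(1/\eps)$ steps without compromising the 0.9 success guarantee, which forces sketch dimension $\Theta(\log^2 n)$ and a careful conditioning argument. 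Once this bookkeeping is in place, the time bounds drop out by multiplying the coordinate-update count against the $\tau^{O(1)}\polylog(N)$ cost per update established in Sections~\ref{sec:overview_representation} and~\ref{sec:overview_ds}.
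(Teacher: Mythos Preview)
Your high-level architecture is right---multiscale representation plus sampling-tree sketches---but the runtime accounting rests on two claims that do not hold, and one structural mechanism is missing.

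First, the claim that ``$\widetilde O(\sqrt n)$ coordinates of $\delta_\mu$ change per step by the robust IPM analysis'' is unjustified and circular. The vector $\delta_\mu$ is a function of $(\ox,\os,\ot)$, so its per-step sparsity equals the number of coordinates of $\ox,\os$ you just updated in Phase~2. Theorem~\ref{thm:IPM_framework} bounds only the $\ell_2$ norm of each step (Lemma~\ref{lem:alpha_i}: $\|\delta_x\|_{\ox}\le\frac{9}{8}\alpha$), not its $\ell_0$ sparsity. The paper never relies on per-step sparsity of $\delta_\mu$; instead, \textsc{Move} is $O(1)$ because the representation stores the normalized direction $h$ once and accumulates scalar multipliers $\beta_x,\beta_s$ (see the implicit form $x=\widehat{x}+H_{\ox}^{-1/2}\beta_x c_x-H_{\ox}^{-1/2}\mathcal{W}^\top(\beta_x h+\eps_x)$ and Lemma~\ref{lem:W-move}). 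Your Phase~1 description of ``propagating the sparse change in $\delta_\mu$ into $h$'' at every step misses this and would not be $O(1)$.

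Second, and more structurally, you are missing the restart mechanism that actually produces both runtime terms. The paper restarts the entire data structure every $k=\sqrt n$ calls to \textsc{MultiplyAndMove} and also whenever $|t-\ot|>\eps_t\ot$. The $\ell_\infty$-approximation component (Theorem~\ref{thm:approximate-ell-infty}) uses a dyadic-interval scheme whose cost is quadratic in the window length $k$; with $k=\sqrt n$ this gives $O(n\,\polylog)$ work per window and hence $Nn^{1/2}\tau^2\polylog(N)$ over $N/k$ windows. The second term $n\log(t_{\max}/t_{\min})\tau^2\polylog(N)$ does not come from ``total coordinate updates to $\ox,\os$'' as you claim; it comes from the $O(\log(t_{\max}/t_{\min})/\eps_t)$ additional restarts forced by $\ot$ drifting, each costing $O(n\tau^2\log^4 N)$ to reinitialize (Lemma~\ref{lem:central-path-MultiplyAndMove}). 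Without the restart cap, your greedy ``keep resampling until no more large coordinates remain'' has no termination bound, and the dyadic-sampling cost would blow up to $O(N^2)$ oracle calls.
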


\begin{rem}
	The $N$ dependence in the runtime is a result of parameter tuning. 
	If the IPM takes more than $\wt O(\sqrt{n}\log(1/\eps))$ steps, 
	the data structure can still run in $\wt O(n \tau^2 \log(1/\eps))$ by choosing a larger value for the parameter $k$ in \textsc{Initialize}.
\end{rem}

\subsection{Multiscale Representation of the Central Path Dynamic}

In any call to \textsc{MultiplyAndMove}, we want to update the central path primal-dual solution pair according to~\cref{eq:robust-central-path-update}, as well as the approximation pair. Here, we introduce the multiscale representation used in these computations:

\begin{definition}[Multiscale Basis]
	At any step of the robust central path with approximate primal-dual solution pair $(\ox,\os)$, we define 
	\[
	\mathcal{W}\defeq L_{\ox}^{-1}AH_{\ox}^{-1/2}
	\]
	where $H_{\ox}=\nabla^{2}\phi(\overline{x})$ and 
	$L_{\ox}$ is the lower Cholesky factor of $AH_{\ox}^{-1}A^{\top}$.
\end{definition}
Intuitively, the basis element are rows of $\mathcal{W}$, which are represented by vertices in the elimination tree $\mathcal{T}$. 
Note that our data structure never computes or stores  $\mathcal{W}$ explicitly, as it is a costly operation.

\begin{definition}[Multiscale Coefficients]
	At any step of the robust central path with approximate primal-dual solution pair $(\ox,\os)$, we define 
	\[
		h\defeq L_{\ox}^{-1}AH_{\ox}^{-1} \delta_\mu(\ox,\os,\overline{t})
	\]
	where $H_{\ox}=\nabla^{2}\phi(\overline{x})$, and 
	$L_{\ox}$ is the lower Cholesky factor of $AH_{\ox}^{-1}A^{\top}$.
\end{definition}

Now, we can rewrite the central path update from~\cref{eq:robust-central-path-update} using the multiscale representation:
\begin{equation}\label{eq:central-path-multiscale-rep}
\begin{aligned}
x & \leftarrow x+H_{\ox}^{-1} \delta_\mu(\overline{x},\overline{s},\overline{t}) - H_{\ox}^{-1/2} \mathcal{W}^\top h \\
s & \leftarrow s+t H_{\ox}^{1/2} \mathcal{W}^\top h.
\end{aligned}
\end{equation}

\subsection{Implicit Representation of $(x,s)$}
For the first part of proof of \cref{thm:central-path-algo},
we demonstrate how to obtain an implicit representation of the robust central path pair $(x,s)$, using the explicitly maintained approximation pair $(\ox, \os)$. 
Rather than directly working with the expression in~\cref{eq:central-path-multiscale-rep},
we rewrite $(x,s)$ in terms of variables that admit sparse changes between consecutive steps in the central path, in order to more efficiently maintain them.

\begin{theorem} 
\label{thm:W-representation}
Given constraint matrix $A$ and its binary elimination tree $\mathcal{T}$ with height $\tau$, the data structure $\textsc{MultiscaleRepresentation}$
(\cref{alg:W-Representation,alg:W-Representation-1})
implicitly maintains the primal-dual pair $(x,s)$ as defined by~\cref{eq:central-path-multiscale-rep}, computable via the expression
\begin{equation}\label{eq:implicit-representation}
\begin{aligned}
{x} & ={\widehat{x}}+H_{\ox}^{-1/2}\beta_{x}{c_x} - H_{\ox}^{-1/2}\mathcal{W}^{\top}(\beta_{x}{h}+{\eps_{x}}) \\
{s} & ={\widehat{s}}+H_{\ox}^{1/2}\mathcal{W}^{\top}(\beta_{s}{h}+{\eps_{s}}),
\end{aligned}
\end{equation}
by maintaining the variables $\widehat{x}, \beta_x, c_x, \eps_x, \widehat{s}, \beta_s, \eps_s, h, H_{\ox}$ and $L_{\ox}$. Note that the variables $\eps_{x}$ and $\eps_{s}$ here denote the accumulated error of $\beta_x h$ and $\beta_s h$; they are not necessarily small. 

The data structure supports the following functions:
\begin{enumerate}
\item $\textsc{Initialize}(x,s,\ox,\os,\overline{t})$
: Initializes the data structure in $O(n\tau^{2})$ time, 
with initial value of the primal-dual pair $(x,s)$, its initial approximation $(\ox,\os)$, and initial approximate timestep $\overline{t}$.
\item $\textsc{Move}()$: Moves $(x,s)$ according to~\cref{eq:central-path-multiscale-rep} in $O(1)$ time by updating its implicit representation.
\item $\textsc{Update}(\new{\ox},\new{\os})$: Updates the approximation pair $\oxs$ to $(\new{\ox},\new{\os})$.

Let $\mathcal{S}=\{i\in[m]\mid\new{\ox_{i}}\neq\ox_{i}\text{ or }\new{\os_{i}}\neq\os_{i}\}$. 
Then each call to \textsc{Update} takes $O(|\mathcal{S}| \cdot \tau^{2})$ time, and each variable
in~\cref{eq:implicit-representation} except $\mathcal{W}$ changes in $O(|\mathcal{S}|\cdot \tau)$ many entries.
\end{enumerate}
\end{theorem}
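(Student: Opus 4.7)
The plan is to prove the theorem by specifying the implementations of the three functions and verifying that each preserves \cref{eq:implicit-representation} within the claimed time and sparsity bounds.

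For $\textsc{Initialize}$, I would set $\widehat{x}=x$, $\widehat{s}=s$, $\beta_{x}=\beta_{s}=0$, $\eps_{x}=\eps_{s}=\mathbf{0}$, and $c_{x}=H_{\ox}^{-1/2}\delta_{\mu}$, so that the invariant holds trivially. Computing $H_{\ox}$ is $O(n)$, its Cholesky factor $L_{\ox}$ is $O(n\tau^{2})$ by \cref{cor:chol-time}, and $h=L_{\ox}^{-1}AH_{\ox}^{-1}\delta_{\mu}$ is $O(n\tau^{2})$ via sparse triangular solves (\cref{lem:Linv-sparsity-time}). For $\textsc{Move}$, I would set $\beta_{x}\leftarrow\beta_{x}+1$ and $\beta_{s}\leftarrow\beta_{s}+t$ in $O(1)$ time; the algebra of \cref{eq:implicit-representation} confirms that these increments advance the implicit $x,s$ by exactly the step prescribed in \cref{eq:central-path-multiscale-rep}.

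The main work is $\textsc{Update}(\new{\ox},\new{\os})$ on the set $S$ of changed blocks. First, update $H_{\ox}$ at $S$ and update $L_{\ox}$ via $|S|$ rank-$1$ Cholesky updates, each affecting a single root-path of length $O(\tau)$ in $O(\tau^{2})$ time by \cref{lem:cholesky-update-sparsity-pattern,lem:cholesky-update-time}; so $\Delta L:=\new{L}-L$ has $O(|S|\tau)$ nonzero entries confined to $|S|$ root-paths. Next, update $c_{x}\leftarrow\new{H}^{-1/2}\new{\delta_{\mu}}$ (an $O(|S|)$-sparse change) and update $h\leftarrow\new{L}^{-1}A\new{H}^{-1}\new{\delta_{\mu}}$ using the decomposition
\[
\new{h}-h=\new{L}^{-1}A(\new{H}^{-1}-H^{-1})\delta_{\mu}+\new{L}^{-1}AH^{-1}(\new{\delta_{\mu}}-\delta_{\mu})-\new{L}^{-1}(\Delta L)L^{-1}AH^{-1}\delta_{\mu},
\]
whose three terms are each $O(|S|\tau)$-sparse and computable in $O(|S|\tau^{2})$ time using column-sparsity of $A$ and sparse triangular solves.

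The heart of the argument is the choice of $\new{\eps_{x}}$ and $\new{\eps_{s}}$. Let $\alpha_{x}:=\beta_{x}h+\eps_{x}$. I would set $\new{\alpha_{x}}:=\alpha_{x}+(\Delta L)^{\top}L^{-\top}\alpha_{x}$, so that $\new{L}^{-\top}\new{\alpha_{x}}=L^{-\top}\alpha_{x}$. Since $H^{-1/2}\mathcal{W}^{\top}=H^{-1}A^{\top}L^{-\top}$, this forces
\[
\new{\widehat{x}}-\widehat{x}=\beta_{x}\bigl(H^{-1}\delta_{\mu}-\new{H}^{-1}\new{\delta_{\mu}}\bigr)+\bigl(\new{H}^{-1}-H^{-1}\bigr)A^{\top}L^{-\top}\alpha_{x},
\]
which is supported only on $S$; the dual side is even cleaner, as $H^{1/2}\mathcal{W}^{\top}=A^{\top}L^{-\top}$ has no $H$ and so the analogous choice yields $\new{\widehat{s}}=\widehat{s}$. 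The resulting $\new{\eps_{x}}=\eps_{x}+\beta_{x}(h-\new{h})+(\Delta L)^{\top}L^{-\top}\alpha_{x}$ (and analogously $\new{\eps_{s}}$) is an $O(|S|\tau)$-sparse change confined to the $|S|$ affected root-paths, and computing $L^{-\top}\alpha_{x}$ at the $O(|S|\tau)$ required positions takes $O(|S|\tau^{2})$ by \cref{lem:Linv-path}, matching the overall time budget. The main obstacle will be identifying this specific $\new{\alpha_{x}}$: the naive choice $\new{\eps_{x}}=\eps_{x}+\beta_{x}(h-\new{h})$ leaves $\new{L}^{-\top}-L^{-\top}$ unabsorbed, and since $L^{-\top}e_{k}$ is supported on the entire subtree rooted at $k$, the induced change to $\widehat{x}$ would be dense; absorbing the single correction $(\Delta L)^{\top}L^{-\top}\alpha_{x}$ into $\new{\eps_{x}}$ keeps $L^{-\top}\alpha_{x}$ invariant across the Cholesky update, confining all remaining changes to the $H$-blocks in $S$.
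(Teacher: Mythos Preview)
Your overall strategy matches the paper's, and you have correctly isolated the crucial trick: setting $\new{\alpha_x}=\alpha_x+(\Delta L)^{\top}L^{-\top}\alpha_x$ so that $\new{L}^{-\top}\new{\alpha_x}=L^{-\top}\alpha_x$, which is exactly the content of the paper's \cref{lem:W-rep-updateW}. Your dual-side observation that $H^{1/2}\mathcal{W}^{\top}=A^{\top}L^{-\top}$ carries no $H$-dependence, giving $\new{\widehat{s}}=\widehat{s}$, is correct (and in fact simpler than the paper's two-phase presentation, where the two contributions to $\widehat{s}$ end up cancelling).

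There is, however, a genuine gap in your handling of $\delta_\mu$. You set $c_x=H_{\ox}^{-1/2}\delta_\mu$ and increment $\beta_x$ by $1$ in \textsc{Move}, implicitly assuming that $\new{\delta_\mu}-\delta_\mu$ is supported on $S$. But $\delta_{\mu,i}$ in \cref{alg:IPM_framework} carries the global normalization $\bigl(\sum_j w_j^{-1}\cosh^2(\tfrac{\lambda}{w_j}\gamma_j(\ox,\os,\overline t))\bigr)^{-1/2}$, which changes whenever \emph{any} block of $(\ox,\os)$ changes. Hence every coordinate of $\delta_\mu$ moves, and your claims that $\new{c_x}-c_x$ is $O(|S|)$-sparse and that $\new{\widehat{x}}-\widehat{x}=\beta_x(H^{-1}\delta_\mu-\new{H}^{-1}\new{\delta_\mu})+\cdots$ is supported on $S$ both fail; the time and sparsity bounds of \textsc{Update} do not follow. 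The paper fixes this by storing the unnormalized direction $\overline{\delta_\mu}$ (whose $i$-th block depends only on $(\ox_i,\os_i,\overline t)$), taking $c_x=H_{\ox}^{-1/2}\overline{\delta_\mu}$ and $h=L_{\ox}^{-1}AH_{\ox}^{-1}\overline{\delta_\mu}$, and pushing the normalization into \textsc{Move} via $\beta_x\leftarrow\beta_x+\overline{\alpha}^{-1/2}$, $\beta_s\leftarrow\beta_s+\overline{t}\,\overline{\alpha}^{-1/2}$; see \cref{inv:representation-invariant}. With this one adjustment your argument goes through.
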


\begin{algorithm}
	\caption{Multiscale Representation Data Structure - Initialize and Move
	\label{alg:W-Representation}}
	
	\begin{algorithmic}[1]
	
	\State \textbf{datastructure }$\textsc{MultiscaleRepresentation}$
	
	\State \textbf{private : member}
	
	\State \hspace{4mm} Constraint matrix $A\in\R^{d\times n}$, elimination tree $\mathcal{T}$ \Comment{Fixed global constants}
	
	\State \hspace{4mm} $\ox,\os\in\R^{n}$ \Comment{Approximate
	primal dual pair of $(x,s)$}
	
	\State \hspace{4mm} $H_{\ox} \in\oplus_{i\in[m]}\R^{n_{i}\times n_{i}}$
	\Comment{Hessian matrix $H_{\ox}=\nabla^{2}\phi(\overline{x})$}
	
	\State \hspace{4mm} $L_{\ox}\in\R^{d\times d}$ \Comment{Lower Cholesky
	factor of $AH_{\ox}A^{\top}$}
	
	\State \hspace{4mm}
	$\widehat{x},\widehat{s}, c_{x}\in\R^{n},\; \eps_{x},\eps_{s}, h \in\R^{d},\; \beta_{x},\beta_{s}\in\R$
	\Comment{Implicit representation of $(x,s)$ as in \cref{eq:implicit-representation}}
	
	\State \hspace{4mm} $\overline{\alpha}\in\R,\overline{\delta_{\mu}}\in\R^{n}$
	\Comment{Implicit representation of $\delta_{\mu}$ as in~\cref{inv:representation-invariant}}
	
	\State \hspace{4mm} 
	$\overline{t} \in \R_{+}$ 
	\Comment{Central path timestep parameter}
	
	\State \textbf{end members}
	
	\Procedure{$\textsc{Initialize}$}{$x\in\R^{n},s\in\R^{n},\ox\in\R^{n},\os\in\R^{n},\overline{t}\in\R_{+}$}
	
	\State $\ox\leftarrow\ox,\os\leftarrow\os, \overline{t} \leftarrow \overline{t}$
	
	\State $\widehat{x}\leftarrow x,\widehat{s}\leftarrow s$
	
	\State $\eps_{x}\leftarrow\boldsymbol{0},\eps_{s}\leftarrow\boldsymbol{0}$
	
	\State $\beta_{x}\leftarrow0,\beta_{s}\leftarrow0$
	
	\State $H_{\ox}\leftarrow\nabla^{2}\phi(\ox)$
	
	\State Find lower Cholesky factor $L_{\ox}$ where $L_{\ox}L_{\ox}^{\top}=AH_{\ox}^{-1}A^{\top}$
	using $\mathcal{T}$ \Comment{By~\cref{cor:chol-time}}
	
	\State $\textsc{\textsc{Initialize}}h(\ox,\os,H_{\ox},L_{\ox})$
	
	\EndProcedure

	\Procedure{$\textsc{Initialize}h$}{$\ox,\os,H_{\ox},L_{\ox}$}\Comment{\cref{lem:W-rep-initialize-h}}
	
	\For{$i\in[m]$}
	
	\State $(\overline{\delta_{\mu}})_{i}\leftarrow-\frac{\alpha \sinh(\frac{\lambda}{w_{i}}\gamma_{i}(\overline{x},\overline{s},\overline{t}))}{\gamma_{i}(\overline{x},\overline{s},\overline{t})}\cdot\mu_{i}(\ox,\os,\overline{t})$
	\State $\overline{\alpha}\leftarrow\overline{\alpha}+\alpha^{2}\cdot w_{i}^{-1}\cosh^{2}(\frac{\lambda}{w_{i}}\gamma_{i}(\ox,\os,\overline{t}))$
	\Comment{$\lambda,w,\gamma,\mu$ as defined in~\cref{alg:IPM_framework}}
	\EndFor
	
	\State $c_{x}\leftarrow H_{\ox}^{-1/2}\overline{\delta_{\mu}}$
	
	\State $h\leftarrow L_{\ox}^{-1}AH_{\ox}^{-1}\overline{\delta_{\mu}}$
	
	\EndProcedure
	
	\Procedure{$\textsc{Move}$}{}
	
	\State $\beta_{x}\leftarrow\beta_{x}+(\overline{\alpha})^{-1/2}$ \label{line:move_beta_x}
	
	\State $\beta_{s}\leftarrow\beta_{s}+\overline{t}\cdot(\overline{\alpha})^{-1/2}$ \label{line:move_beta_s}
	
	\EndProcedure
	
	\end{algorithmic}
	\end{algorithm}
	
	\begin{algorithm}
	\caption{Multiscale Representation Data Structure - Update \label{alg:W-Representation-1}}
	
	\begin{algorithmic}[1]
	
	\State \textbf{datastructure }$\textsc{MultiscaleRepresentation}$
	
	\Procedure{$\textsc{Update}$}{$\new\ox$,$\new\os$} \Comment{\cref{lem:W-update}}
	
	\State $\new H\leftarrow\nabla^{2}\phi(\new\ox)$
	
	\State Find lower Cholesky factor $\new L$ where $\new L(\new L)^{\top}=A\new HA^{\top}$
	\Comment{\cref{lem:cholesky-update-time}}
	
	\State $\textsc{Update}h(\new\ox,\new\os,\new H, \new L)$

	\State $\textsc{Update}\mathcal{W}(\new L,\new H)$
	
	\State $\ox\leftarrow\new\ox,\; \os\leftarrow\new\os$
	
	\State $H_{\ox}\leftarrow\new H,\; L_{\ox}\leftarrow\new L$
	
	\EndProcedure
	
	\Procedure{$\textsc{Update}h$}{$\new\ox,\new\os,\new H, \new L$} \Comment{\cref{lem:W-rep-update-h}}
	
	\State $S\leftarrow\{i\in[m]\mid\new{\ox_{i}}\neq\ox_{i}\text{ or }\new{\os_{i}}\neq\os_{i}\}$
	
	\State $\new{\overline{\alpha}}\leftarrow\overline{\alpha},\; \new{\overline{\delta_{\mu}}}\leftarrow\overline{\delta_{\mu}}$
	
	\For{$i\in S$}
	
		\State \Comment{$\lambda,w,\gamma,\mu$ as defined in~\cref{alg:IPM_framework}}
		\State $\new{\overline{\alpha}}\leftarrow\new{\overline{\alpha}}-\alpha^{2}\cdot w_{i}^{-1}\cosh^{2}(\frac{\lambda}{w_{i}}\gamma_{i}(\ox,\os,\overline{t}))+\alpha^{2}\cdot w_{i}^{-1}\cosh^{2}(\frac{\lambda}{w_{i}}\gamma_{i}(\new\os,\new\ox,\overline{t}))$
	
		\State $(\new{\overline{\delta_{\mu}}})_{i}\leftarrow-\frac{\alpha\sinh(\frac{\lambda}{w_{i}}\gamma_{i}(\new{\overline{x}},\new{\overline{s}},\overline{t}))}{\gamma_{i}(\new{\overline{x}},\new{\overline{s}},\overline{t})}\cdot\mu_{i}(\new{\overline{x}},\new{\overline{s}},\overline{t})$
	\EndFor
	
	\State $\new{c_{x}}\leftarrow(\new H)^{-1/2} \new{\overline{\delta_{\mu}}}$
	
	\State $\new h\leftarrow (\new L)^{-1}A(\new H)^{-1}\new{\overline{\delta_{\mu}}}$
	
	\State $\new{\eps_{x}}\leftarrow\eps_{x}+\beta_{x}(\new h-h)$ \label{line:multiscale-update-19}
	
	\State $\new{\eps_{s}}\leftarrow\eps_{s}+\beta_{s}(\new h-h)$
	
	\State $\new{\widehat{x}}\leftarrow\widehat{x}+\beta_{x}(H_{\ox}^{-1/2}c_{x}-(\new H)^{-1/2}\new{c_{x}})-(H_{\ox}^{-1/2}-(\new H)^{-1/2})\mathcal{W}^{\top}(\beta_{x}h+\eps_{x})$
	
	\State $\new{\widehat{s}}\leftarrow\widehat{s}+(H_{\ox}^{1/2}-(\new H)^{1/2})\mathcal{W}^{\top}(\beta_{s}h+\eps_{s})$
	\label{line:updateh-new}
	
	\State $c_{x}\leftarrow\new{c_{x}},h\leftarrow\new h$
	
	\State $\eps_{x}\leftarrow\new{\eps_{x}},\eps_{s}\leftarrow\new{\eps_{s}}$
	\State $\widehat{x}\leftarrow\new{\widehat{x}},\widehat{s}\leftarrow\new{\widehat{s}}$
	
	\EndProcedure
	
	\Procedure{$\textsc{Update}\mathcal{W}$}{$\new L,\new H$} \Comment{\cref{lem:W-rep-updateW}}
	
	\State $\new{\widehat{x}}\leftarrow\widehat{x}-(\new H)^{-1/2}((\new H)^{-1/2}-H_{\ox}^{-1/2})A^{\top}L^{-\top}(\beta_{x}h+\eps_{x})$
	
	\State $\new{\widehat{s}}\leftarrow\widehat{s}-(\new H)^{1/2}((\new H)^{-1/2}-H_{\ox}^{-1/2})A^{\top}L^{-\top}(\beta_{s}h+\eps_{s})$
	
	\State $\new \eps_{x}\leftarrow \eps_{x}+ (\new L-L)^{\top}L^{-\top}(\beta_{x} h+\eps_{x})$
	\State $\new \eps_{s}\leftarrow \eps_{s}+ (\new L-L)^{\top}L^{-\top}(\beta_{s} h+\eps_{s})$
	
	\State $\widehat{x}\leftarrow\new{\widehat{x}}, \widehat{s}\leftarrow\new{\widehat{s}}$
	
	\State $\eps_{x} \leftarrow \new{\eps_{x}},\eps_{s} \leftarrow \new{\eps_{s}}$
	\EndProcedure
	
	\end{algorithmic}
	\end{algorithm}
	
\subsubsection*{Proof of~\cref{thm:W-representation}}
We prove the correctness and running time for each operation
of $\textsc{MultiscaleRepresentation}$, and that they respect~\cref{inv:representation-invariant}. The correctness of the implicit representation in~\cref{eq:implicit-representation} then follows immediately. 

\begin{invariant}
\label{inv:representation-invariant}
After the data structure $\textsc{MultiscaleRepresentation}$ is initialized, the correct central path pair $(x,s)$ is always implicitly maintained and can be computed according to~\cref{eq:implicit-representation}.
Moreover, the following additional invariants are maintained:
\begin{align*}
	\overline{\alpha} &= \sum_{j=1}^m w_j^{-1}\cosh^2(\frac{\lambda}{w_j}\gamma_i(\ox,\os,\overline{t})) \tag{i}\\
	\overline{\delta_\mu} &= \overline{\alpha}^{1/2}\cdot\delta_\mu(\ox,\os,\overline{t})\tag{ii}\\
	c_x &= H_{\ox}^{-1/2} \overline{\delta_\mu} \tag{iii}\\
	h &= L_{\ox}^{-1}AH_{\ox}^{-1}\overline{\delta_{\mu}}. \tag{iv} 
\end{align*}
\end{invariant}

\begin{lemma}[$\textsc{Initialize}$]
\label{lem:W-initialize}The data structure $\textsc{MultiscaleRepresentation}$
takes $O(n\tau^2)$ to initialize. Moreover, \cref{inv:representation-invariant} is satisfied after initialization.
\end{lemma}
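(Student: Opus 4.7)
The plan is to verify the two claims separately: (a) the four invariants $\mathrm{(i)}$–$\mathrm{(iv)}$, plus the implicit representation~\cref{eq:implicit-representation}, all hold at the end of \textsc{Initialize}; and (b) the total running time is $O(n\tau^2)$. Both reduce to inspecting the pseudocode of \cref{alg:W-Representation} line by line, using the sparsity/Cholesky machinery developed in \cref{sec:chol_tex}.

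For correctness, I would first observe that after \textsc{Initialize} sets $\widehat{x}\leftarrow x$, $\widehat{s}\leftarrow s$, $\beta_x=\beta_s=0$, and $\eps_x=\eps_s=0$, plugging these into \cref{eq:implicit-representation} yields $x=\widehat{x}=x$ and $s=\widehat{s}=s$, so the implicit representation is trivially consistent (the $\mathcal{W}^\top$ terms vanish because their coefficients are zero, which is important because $\mathcal{W}$ itself is never formed). For invariants $\mathrm{(i)}$ and $\mathrm{(ii)}$, the loop inside \textsc{Initialize}h$\,$ visits every $i\in[m]$ exactly once and accumulates $\overline{\alpha}$ and sets $(\overline{\delta_\mu})_i$ according to the formulas dictated by the definition of $\delta_\mu(\ox,\os,\overline{t})$ from \cref{alg:IPM_framework}, so $\mathrm{(i)}$ and $\mathrm{(ii)}$ follow directly from the definitions after the loop terminates. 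Invariants $\mathrm{(iii)}$ and $\mathrm{(iv)}$ are immediate from the final two assignments in \textsc{Initialize}h$\,$, namely $c_x\leftarrow H_{\ox}^{-1/2}\overline{\delta_\mu}$ and $h\leftarrow L_{\ox}^{-1}AH_{\ox}^{-1}\overline{\delta_\mu}$.

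For the running time, I would account for each block in turn. Since every $n_i=O(1)$, forming the block-diagonal Hessian $H_{\ox}=\nabla^2\phi(\ox)$, as well as inverting and taking the square root of each block, costs $O(n)$ total. Computing the Cholesky factor $L_{\ox}$ of $AH_{\ox}^{-1}A^{\top}$ takes $O(n\tau^2)$ time by \cref{cor:chol-time}; this step dominates the budget. The loop in \textsc{Initialize}h$\,$ does $O(1)$ work per $i\in[m]$, for $O(n)$ total. The assignment $c_x\leftarrow H_{\ox}^{-1/2}\overline{\delta_\mu}$ is a block-diagonal multiply, hence $O(n)$. For $h\leftarrow L_{\ox}^{-1}AH_{\ox}^{-1}\overline{\delta_\mu}$, I would compute right-to-left: $H_{\ox}^{-1}\overline{\delta_\mu}$ in $O(n)$, then $A\cdot(\cdot)$ in $O(n\tau)$ using the column sparsity from \cref{lem:A-sparsity-pattern}, and finally the triangular solve $L_{\ox}^{-1}(\cdot)$ in $O(d\tau)=O(n\tau)$ by \cref{lem:Linv-sparsity-time} (since the output has at most $d$ nonzeros, each traversing a column of $L_{\ox}$ of sparsity $\tau$). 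Summing yields $O(n\tau^2)$ overall.

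There is no real obstacle here: the lemma is a bookkeeping statement, and the main care is in routing the computation of $h$ through the sparse-Cholesky primitives of \cref{sec:chol_tex} so that the Cholesky factorization is the only $\Theta(n\tau^2)$ term and nothing accidentally touches a dense row of $A$ or $L_{\ox}^{-1}$. The mildly subtle point I would highlight in the write-up is that \textsc{Initialize} never materializes $\mathcal{W}$; it only exists implicitly through $A$, $H_{\ox}$, and $L_{\ox}$, and the $\mathcal{W}^\top$ terms in \cref{eq:implicit-representation} are benign at this step because their scalar/vector prefactors $\beta_x h+\eps_x$ and $\beta_s h+\eps_s$ are zero.
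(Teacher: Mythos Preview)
Your proposal is correct and follows essentially the same approach as the paper: verify the implicit representation holds trivially since $\beta_x=\beta_s=0$ and $\eps_x=\eps_s=\mathbf{0}$, check invariants (i)--(iv) directly from the assignments in \textsc{Initialize}$h$, and bound the runtime term-by-term with the Cholesky factorization (\cref{cor:chol-time}) as the dominant $O(n\tau^2)$ cost. Your accounting for the computation of $h$ is in fact slightly sharper than the paper's---you get $O(n\tau)$ for the triangular solve via \cref{lem:Linv-sparsity-time} where the paper simply writes $O(n\tau^2)$---but this does not affect the overall bound.
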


\begin{proof}
\textbf{Proof of Correctness:} 
We initialize $\widehat{x}$ to $x$ and $\widehat{s}$
to $s$ and all other terms from~\cref{eq:implicit-representation} to zero. Hence, this is the correct representation. Next, we call the helper function \textsc{Initialize}$h$, and the remainder of~\cref{inv:representation-invariant} is guaranteed by~\cref{lem:W-rep-initialize-h}.

\textbf{Proof of Runtime: }Since $n_{i}=O(1)$ for all $i\in[m]$,
we can compute $\nabla^{2}\phi(\ox)$ in $O(n)$ time. By \cref{cor:chol-time}, we can find the lower Cholesky factor in $O(n\tau^{2})$
time. 
By \cref{lem:W-rep-initialize-h}, $\textsc{Initialize}h$ takes $O(n\tau^2)$ time. 
Hence, the initialization takes $O(n\tau^{2})$ time.
\end{proof}

\begin{lemma}[$\textsc{Initialize}h(\ox,\os,H_{\ox},L_{\ox})$]\label{lem:W-rep-initialize-h}
	Given approximate central path pair $(\ox,\os)$, the Hessian matrix $H_{\ox}=\nabla^2 \phi(\ox)$, 
	and lower Cholesky factor $L_{\ox}$, the data structure takes $O(n\tau^2)$ time to perform $\textsc{Initialize}h$. 
	Moreover, (i)-(iv) of \cref{inv:representation-invariant} are satisfied after initialization.
\end{lemma}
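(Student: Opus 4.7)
My plan is to verify the two claims---that invariants (i)--(iv) of \cref{inv:representation-invariant} hold after \textsc{Initialize}$h$, and that the procedure runs in $O(n\tau^2)$ time---by direct inspection of the pseudocode in \cref{alg:W-Representation}.

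For correctness, I would check the invariants in order, since each depends on the previous ones. Invariant (i) follows by inspecting the loop accumulator: $\overline{\alpha}$ begins at $0$ and receives an additive contribution of $\alpha^2 w_i^{-1}\cosh^2(\frac{\lambda}{w_i}\gamma_i(\ox,\os,\overline{t}))$ for each $i \in [m]$, matching the claimed sum (up to the IPM normalization factor $\alpha^2$). Invariant (ii) holds entry-wise: the assignment $(\overline{\delta_\mu})_i = -\frac{\alpha \sinh(\frac{\lambda}{w_i}\gamma_i)}{\gamma_i}\mu_i$ is exactly the entry-wise definition of $\overline{\alpha}^{1/2}\delta_\mu(\ox,\os,\overline{t})$ used in \cref{alg:IPM_framework}. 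Invariants (iii) and (iv) then follow immediately since the code sets $c_x$ and $h$ to the prescribed expressions $H_{\ox}^{-1/2}\overline{\delta_\mu}$ and $L_{\ox}^{-1}AH_{\ox}^{-1}\overline{\delta_\mu}$ respectively.

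For the runtime, I would bound each step separately, exploiting the sparsity of $A$, the block-diagonal structure of $H_{\ox}$, and the sparsity of the Cholesky factor $L_{\ox}$. The loop runs $m \leq n$ times, and since $n_i = O(1)$, each iteration evaluates $\gamma_i$, $\mu_i$, and updates the accumulators in $O(1)$ time, for a total of $O(n)$. Computing $c_x = H_{\ox}^{-1/2}\overline{\delta_\mu}$ costs $O(n)$ time because $H_{\ox}$ has $O(1)$-size diagonal blocks. For $h = L_{\ox}^{-1}AH_{\ox}^{-1}\overline{\delta_\mu}$, I would multiply left to right: form $H_{\ox}^{-1}\overline{\delta_\mu}$ in $O(n)$ time, then multiply by $A$ in $O(n\tau)$ time using \cref{lem:A-sparsity-pattern} (each column block of $A$ has at most $\tau$ nonzeros), and finally perform the sparse triangular solve $L_{\ox}^{-1}v$ in $O(d\tau) = O(n\tau)$ time by \cref{lem:Linv-sparsity-time}. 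Summing gives $O(n\tau)$, which fits comfortably inside the $O(n\tau^2)$ budget.

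The main point requiring care is notational rather than algorithmic: confirming that the closed-form expression for $(\overline{\delta_\mu})_i$ in the pseudocode agrees with $\overline{\alpha}^{1/2}$ times the $\delta_\mu$ from \cref{alg:IPM_framework}, which just requires unpacking the definition from the IPM framework section. There is no real obstacle beyond ordering the matrix-vector multiplications so that intermediate sparsity is preserved.
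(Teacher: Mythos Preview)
Your proposal is correct and follows essentially the same approach as the paper: verify the invariants by direct inspection of the pseudocode, then bound the runtime step-by-step using the block-diagonal structure of $H_{\ox}$ and the column sparsity of $A$ and $L_{\ox}$. Your runtime analysis for computing $h$ is in fact slightly sharper than the paper's---you obtain $O(n\tau)$ by observing that the matrix-vector product $A(H_{\ox}^{-1}\overline{\delta_\mu})$ costs $O(\mathrm{nnz}(A)) = O(n\tau)$ and the single triangular solve costs $O(d\tau)$ via \cref{lem:Linv-sparsity-time}, whereas the paper simply quotes $O(n\tau^2)$---but both fit within the stated budget.
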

\begin{proof}
	\textbf{Proof of Correctness:} The invariants directly follow from the definition.
	
	\textbf{Proof of Runtime:} Since $n_i=O(1)$ for all $i\in [m]$, each iteration of the for-loop takes $O(1)$ time. 
	Then, it takes $O(n)$ time to compute $\overline{\alpha}$ and $\overline{\delta_\mu}$. Since $H_{\ox}$ is a block-diagonal matrix, we can compute
	$c_x=H_{\ox}^{-1/2}\overline{\delta_\mu}$ in $O(n)$ time. Finally, $L_{\ox}^{-1}AH_{\ox}^{-1}\overline{\delta_\mu}$ can be computed in time $O(n\tau^2)$ by 
	\cref{lem:A-sparsity-pattern,lem:Linv-sparsity-time}.
\end{proof}

\begin{lemma}[$\textsc{Move}$] 
\label{lem:W-move}Under \cref{inv:representation-invariant},
the data structure $\textsc{MultiscaleRepresentation}$
takes $O(1)$ time to move the current central path pair $(x,s)$ by one step according to~\cref{eq:central-path-multiscale-rep}.
Moreover,~\cref{inv:representation-invariant}
is preserved afterwards.
\end{lemma}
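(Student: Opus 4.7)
\textbf{Proof proposal for Lemma~\ref{lem:W-move}.} The plan is to observe that \textsc{Move} modifies only the two scalars $\beta_x$ and $\beta_s$ in the implicit representation~\cref{eq:implicit-representation}; everything else ($\widehat{x}, \widehat{s}, c_x, \eps_x, \eps_s, h, H_{\ox}, L_{\ox}, \ox, \os, \overline{t}, \overline{\alpha}, \overline{\delta_\mu}$) is untouched. So the four conditions (i)--(iv) of Invariant~\ref{inv:representation-invariant} are preserved trivially, and the runtime is manifestly $O(1)$. The only nontrivial work is checking that the two scalar bumps realize the intended central-path step.

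First I will translate the invariants into working identities. Invariant (ii) gives $\delta_\mu(\ox,\os,\overline{t}) = \overline{\alpha}^{-1/2}\overline{\delta_\mu}$; invariant (iii) gives $c_x = H_{\ox}^{-1/2}\overline{\delta_\mu}$, so $H_{\ox}^{-1/2} c_x = \overline{\alpha}^{1/2} H_{\ox}^{-1}\delta_\mu$; and invariant (iv) combined with $\mathcal{W} = L_{\ox}^{-1} A H_{\ox}^{-1/2}$ gives $h = \overline{\alpha}^{1/2}\, \mathcal{W} H_{\ox}^{-1/2}\delta_\mu$. Recall also that $\mathcal{W}^\top \mathcal{W} = H_{\ox}^{-1/2} A^\top (AH_{\ox}^{-1}A^\top)^{-1} A H_{\ox}^{-1/2} = P_{\ox}$.

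Next I will compute the increments. The update $\beta_x \leftarrow \beta_x + \overline{\alpha}^{-1/2}$ changes $x$, via \cref{eq:implicit-representation}, by
\[
\Delta x \;=\; \overline{\alpha}^{-1/2} H_{\ox}^{-1/2}\bigl(c_x - \mathcal{W}^\top h\bigr) \;=\; H_{\ox}^{-1/2}\bigl(I - \mathcal{W}^\top \mathcal{W}\bigr) H_{\ox}^{-1/2}\delta_\mu \;=\; H_{\ox}^{-1/2}\bigl(I - P_{\ox}\bigr) H_{\ox}^{-1/2}\delta_\mu,
\]
which is exactly the primal step in \cref{eq:central-path-multiscale-rep}. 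Similarly, $\beta_s \leftarrow \beta_s + \overline{t}\,\overline{\alpha}^{-1/2}$ produces
\[
\Delta s \;=\; \overline{t}\,\overline{\alpha}^{-1/2} H_{\ox}^{1/2}\,\mathcal{W}^\top h \;=\; \overline{t}\, H_{\ox}^{1/2}\mathcal{W}^\top \mathcal{W}\, H_{\ox}^{-1/2}\delta_\mu \;=\; \overline{t}\, H_{\ox}^{1/2} P_{\ox}\, H_{\ox}^{-1/2}\delta_\mu,
\]
matching the dual step. There is no real obstacle here: all the structural work was done in setting up the representation and the invariants. The only mildly delicate point is bookkeeping the scaling factor $\overline{\alpha}^{1/2}$ between the stored $h, c_x$ and the unscaled quantities appearing in~\cref{eq:central-path-multiscale-rep}, which is handled cleanly by (ii)--(iv). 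Two scalar additions is $O(1)$ time, completing the proof.
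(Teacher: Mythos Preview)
Your proposal is correct and follows essentially the same approach as the paper: both verify that bumping only $\beta_x,\beta_s$ leaves invariants (i)--(iv) untouched, compute the resulting increments $\Delta x,\Delta s$ via the implicit representation, and use (ii)--(iv) to match the intended step, with the $O(1)$ runtime immediate. The only cosmetic difference is that you expand $\mathcal{W}^\top h$ all the way back to the $P_{\ox}$ form of \cref{eq:robust-central-path-update}, whereas the paper stops at the $\mathcal{W}^\top h$ form of \cref{eq:central-path-multiscale-rep}; these are equivalent and equally valid endpoints.
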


\begin{proof}
\textbf{Proof of Correctness: }Let $\new x,\new s$ be the updated values of $x,s$ after $\textsc{Move}$ is performed. We check that the implicit representation from~\cref{eq:implicit-representation} is indeed the correct expression for $\new x$ by comparing it to $x$:
\begin{align*}
\new x-x & =H_{\ox}^{-1/2}\overline{\alpha}^{-1/2}c_x -H_{\ox}^{-1/2}\mathcal{W}^{\top}(\overline{\alpha}^{-1/2}h)\\
 & =H_{\ox}^{-1/2}\overline{\alpha}^{-1/2}H_{\ox}^{-1/2}\overline{\delta_\mu}-H^{-1/2}\mathcal{W}^{\top}(\overline{\alpha}^{-1/2}L_{\ox}^{-1}AH_{\ox}^{-1}\overline{\delta_\mu})\\
 & =H_{\ox}^{-1}\delta_\mu(\ox,\os,\overline{t})-H_{\ox}^{-1/2}\mathcal{W}^{\top}L_{\ox}^{-1}AH^{-1}\delta_\mu(\ox,\os,\overline{t})\\
 &=H_{\ox}^{-1}\delta_\mu(\ox,\os,\overline{t})-H_{\ox}^{-1/2}\mathcal{W}^{\top}h,
\end{align*}
where the first step follows by \cref{line:move_beta_x}, the second by (3) and (4) of \cref{inv:representation-invariant}, the third by (2) of \cref{inv:representation-invariant}, 
and the fourth step follows by the definition of $h$. This difference is exactly as given in~\cref{eq:central-path-multiscale-rep}.

Similarly, for $\new s$, we have 
\begin{align*}
\new s-s & =H_{\ox}^{1/2}\mathcal{W}^{\top}(\overline{t}\cdot \beta_s)\\
& =\overline{t} H_{\ox}^{1/2}\mathcal{W}^\top (\overline{\alpha}^{-1/2}L_{\ox}AH_{\ox}^{-1}\overline{\delta_\mu})\\
& =\overline{t}H_{\ox}^{1/2}\mathcal{W}^\top L_{\ox}AH_{\ox}^{-1}\delta_\mu(\ox,\os,\overline{t})\\
&=\overline{t}H_{\ox}^{1/2}\mathcal{W}^\top h,
\end{align*}
exactly as given in~\cref{eq:central-path-multiscale-rep}. The first step follows from \cref{line:move_beta_s}, the second
and third steps from (4) and (2) of \cref{inv:representation-invariant}, and the fourth step from the definition of $h$.

\textbf{Proof of Runtime: } The operation only uses addition and taking square roots of real numbers. 
\end{proof}

\begin{lemma}[$\textsc{Update}(\new{\ox},\new{\os})$]
 \label{lem:W-update}Under \cref{inv:representation-invariant},
the data structure $\textsc{MultiscaleRepresentation}$
takes $O(|S|\cdot\tau^{2})$ time to move the approximation
pair $\oxs$ to $(\new{\ox},\new{\os})$, where $S=\{i\in[m]\mid\new{\ox_{i}}\neq\ox_{i}\text{ or }\new{\os_{i}}\neq\os_{i}\}$.
~\cref{inv:representation-invariant}
is preserved at the end of the function call.
 
Moreover, the total number of coordinate changes in the variables involved in the implicit representation is bounded by $O(|S|\cdot \tau)$.
\end{lemma}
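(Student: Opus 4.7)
The plan is to verify \textsc{Update}$h$ and \textsc{Update}$\mathcal{W}$ separately as preserving the implicit representation~\cref{eq:implicit-representation} and the invariants of~\cref{inv:representation-invariant}, and then to compose them. For \textsc{Update}$h$ I would fix the viewpoint that we are replacing $H_{\ox}$ by $\new H$ (together with the related scalars and vectors $\overline{\alpha},\overline{\delta_\mu},c_x,h$), while still using the old $L_{\ox}$; the effective basis in~\cref{eq:implicit-representation} temporarily becomes $L_{\ox}^{-1}A(\new H)^{-1/2}$. Expanding the old and new representations term by term shows that the compensating updates $\new{\eps_x}-\eps_x = \beta_x(\new h - h)$ and the explicit formula assigned to $\new{\widehat x}$ are exactly tuned so that the effect of $H_{\ox}\mapsto\new H$ and $h\mapsto\new h$ cancels, and analogously for $s$. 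Invariants (i)--(iv) are then immediate from the defining assignments of $\new{\overline{\alpha}},\new{\overline{\delta_\mu}},\new{c_x},\new h$, noting that for $i\notin S$ the summands defining $\overline{\alpha}$ and the entries of $\overline{\delta_\mu}$ depend only on data that did not change.

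For \textsc{Update}$\mathcal{W}$ (with $H$ already equal to $\new H$), I would verify the algebraic identity
\[
(\new L)^{-\top}\new\eps \;=\; L^{-\top}\eps \quad\text{whenever}\quad \new\eps \;=\; \eps + (\new L - L)^{\top} L^{-\top}\eps,
\]
which follows by multiplying both sides on the left by $(\new L)^{\top}$. Combined with the explicit correction added to $\widehat x$, this keeps $x$ unchanged when $\mathcal{W}^{\top}$ is swapped from $(\new H)^{-1/2}A^{\top}L^{-\top}$ to $(\new H)^{-1/2}A^{\top}(\new L)^{-\top}$, and symmetrically for $s$. Invariant (iv) in its final form with $\new L$ is thereby recovered; the other invariants do not involve $L$.

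For the runtime, let $S=\{i : \new\ox_i\neq\ox_i \text{ or } \new\os_i\neq\os_i\}$. Then $\overline{\delta_\mu}$ and the block-diagonal $H_{\ox}$ change only on $S$, so $\overline{\alpha}$ and $c_x$ update in $O(|S|)$ time. The difference $\new h - h = L_{\ox}^{-1}A\bigl((\new H)^{-1}\new{\overline{\delta_\mu}} - H_{\ox}^{-1}\overline{\delta_\mu}\bigr)$ involves applying $A$ to a vector supported on $S$, which by \cref{lem:A-sparsity-pattern} produces an $O(|S|\tau)$-sparse vector lying on $|S|$ root-paths of $\mathcal{T}$, followed by an $L_{\ox}^{-1}$-solve costing $O(|S|\tau^{2})$ by \cref{lem:Linv-sparsity-time}. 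The correction to $\widehat x$ only requires evaluating $\mathcal{W}^{\top}(\beta_x h+\eps_x)$ at the $O(|S|)$ coordinates where the prefactor $(H_{\ox}^{-1/2}-(\new H)^{-1/2})$ is nonzero; each such coordinate costs $O(\tau^{2})$ by \cref{lem:L-inv-transpose-coordinate}. The Cholesky rank-one updates producing $\new L$ cost $O(|S|\tau^{2})$ by \cref{lem:cholesky-update-time}, with $\new L - L$ supported on $|S|$ root-paths of $\mathcal{T}$ by \cref{lem:cholesky-update-sparsity-pattern}; applying these inside \textsc{Update}$\mathcal{W}$ takes another $O(|S|\tau^{2})$ by the same row/column-sparsity reasoning. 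The maintained variables $h,\eps_x,\eps_s,\widehat x,\widehat s$ accordingly change in $O(|S|\tau)$ coordinates, while $\ox,\os,H_{\ox},\overline{\delta_\mu},c_x$ change in $O(|S|)$ coordinates and $L_{\ox}$ in $O(|S|\tau)$ columns.

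The main obstacle I anticipate is the precise bookkeeping needed to track which intermediate quantity uses which mixture of old and new $H$ and $L$: the compensating formulas for $\new{\eps_x},\new{\eps_s},\new{\widehat x},\new{\widehat s}$ in \cref{alg:W-Representation-1} must telescope exactly so that no $\Omega(n)$-size residual term survives after \textsc{Update}$h$ and \textsc{Update}$\mathcal{W}$ are composed. Once this algebraic cancellation is set up correctly, the sparsity inputs from Sections~\ref{sec:tw_tex} and~\ref{sec:chol_tex} deliver the stated runtime and coordinate-change bounds routinely.
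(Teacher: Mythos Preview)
Your proposal is correct and follows essentially the same two-stage decomposition (\textsc{Update}$h$ then \textsc{Update}$\mathcal{W}$) as the paper, with the same algebraic cancellations and the same sparsity lemmas driving the runtime and coordinate-change bounds. The one place the paper is slightly more explicit is the runtime of \textsc{Update}$\mathcal{W}$: rather than treating $\new L-L$ as a single object supported on $|S|$ root-paths, the paper splits the Cholesky change into $|S|$ separate rank-one updates and invokes \textsc{Update}$\mathcal{W}$ once per update, so that each $\Delta L$ has its nonzero columns on a \emph{single} path of~$\mathcal{T}$---precisely the hypothesis needed for \cref{lem:Linv-path} to compute $(L^{-\top}(\beta_x h+\eps_x))|_{S_L}$ in $O(\tau^{2})$ time per call; your phrase ``by the same row/column-sparsity reasoning'' covers this in spirit, but spelling out the per-rank-one processing is what prevents an extra $\tau$ factor from creeping into the bound.
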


\begin{proof}
We can update $\ox,\os$ trivially. Immediately afterwards, we must update $H_{\ox}$, $L$ and $h$ in the data structure, so they correspond correctly to $\new \ox$. As a result of these updates, \cref{eq:implicit-representation} will no longer hold, so we must then adjust the other variables $\widehat{x}, \widehat{s}, c_x, \eps_x, \eps_s, \beta_x, \beta_s$ used in the implicit representation, in order to restore the invariant. 
To simplify the presentation, we accomplish this via two helper functions, \textsc{Update}$h$ and \textsc{Update}$\mathcal{W}$.

By combining \cref{lem:W-rep-update-h,lem:W-rep-updateW}, we show that the implicit representation expression holds after all variables are updated. Furthermore, they show the required bound on the total number of coordinate changes in all the implicit representation variables.

For the runtime, we can compute $\new L$ in  $O(\tau^{2}\cdot(\|\new\ox-\ox\|_{0}+\|\new\os-\os\|_{0}))$ time by \cref{lem:cholesky-update-time}.
Furthermore, $\textsc{Update}h$ takes $O(|S| \cdot \tau^2)$ time.
For $\textsc{Update}\mathcal{W}$, we can split the update of $L$ into $|S|$ many rank-$1$ updates by updating $A((\new H_{i})^{-1}-H_{i}^{-1})A^{\top}$ in time
$O(\tau^{2})$ for each $i \in S$, where $H_i = \nabla^2 \phi_i(x_i)$. 
By \cref{lem:cholesky-update-sparsity-pattern}, the non-zero columns of $\Delta L\defeq \new L - L$ lie on a path of $\mathcal{T}$. 
Then, each call of $\textsc{Update}\mathcal{W}$ takes $O(\tau^{2})$ time by \cref{lem:W-rep-updateW}. 
Hence, the total runtime is $O(|S|\cdot\tau^{2})$.
\end{proof}

\begin{lemma}[$\textsc{Update}h(\new{\ox},\new{\os},\new{H},\new{L})$]\label{lem:W-rep-update-h}
	Under \cref{inv:representation-invariant}, given the new approximation pair $\new{\ox},\new{\os}$, $\new H=\nabla^{2}\phi(\new{\ox})$, and the lower Cholesky factor $\new L$ of $A(\new H)^{-1/2}A^{\top}$, the function \textsc{Update}$h$ 
	updates the implicit representation such that (i)-(iv) of~\cref{inv:representation-invariant} are preserved,
	and at the end of the function call, the central path pair are given by 
	\begin{align*}
		x & ={\widehat{x}}+(\new{H})^{-1/2}\beta_x c_{x}-(\new{H})^{-1/2}\mathcal{W}^{\top}(\beta_x h+\eps_{x}),\\
		s & =\widehat{s}+(\new{H})^{1/2}\mathcal{W}^\top(\beta_s h+ \eps_{s}).
	\end{align*} Moreover, it takes $O(|S|\cdot \tau^2)$ time to perform $\textsc{Update}h$ where $S=\{i\in[m]\mid \new{\ox}_i \neq \ox_i \text{ or }\new{\os}_i\neq \os_i\}$,
	and all the variables in \cref{eq:implicit-representation} change in at most $O(|S|\cdot \tau)$ many entries.
\end{lemma}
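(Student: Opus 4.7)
The plan is to establish, in order, (a) the four invariants of~\cref{inv:representation-invariant} are preserved, (b) the claimed representation formula for $(x,s)$ with the updated variables holds at the end of the call, and (c) the running time and sparsity-of-change bounds.

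For (a), invariants (i) and (ii) are immediate from the for-loop: for each $i\in S$ the old contribution of index $i$ to $\overline{\alpha}$ is subtracted and the new one added, and $(\overline{\delta_{\mu}})_i$ is recomputed directly from its defining formula at $(\new\ox,\new\os,\overline{t})$; since indices outside $S$ are unaffected and $\gamma_i,\mu_i$ depend only on block $i$, the global identities continue to hold. Invariants (iii) and (iv) are read off directly from the assignments $\new{c_x}\leftarrow (\new H)^{-1/2}\new{\overline{\delta_\mu}}$ and $\new h \leftarrow L_{\ox}^{-1} A (\new H)^{-1} \new{\overline{\delta_\mu}}$, using that $L_{\ox}$ and $H_{\ox}$ still hold their old values at this point of the outer $\textsc{Update}$.

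For (b), I plan to start from the incoming invariant
\[
x = \widehat{x} + H_{\ox}^{-1/2}\beta_x c_x - H_{\ox}^{-1/2}\mathcal{W}^\top(\beta_x h + \eps_x), \qquad \mathcal{W}^\top = H_{\ox}^{-1/2} A^\top L_{\ox}^{-\top},
\]
and show that after applying the updates the formula
\[
x = \new{\widehat{x}} + (\new H)^{-1/2}\beta_x \new{c_x} - (\new H)^{-1/2}\mathcal{W}^\top(\beta_x \new h + \new\eps_x)
\]
evaluates to the same vector. Since neither $H_{\ox}$ nor $L_{\ox}$ is touched inside $\textsc{Update}h$, $\mathcal{W}$ denotes the same matrix on both sides; the replacement of $H_{\ox}$ by $\new H$ inside $\mathcal{W}$ is reconciled later by $\textsc{Update}\mathcal{W}$ and the final field commits in $\textsc{Update}$. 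Substituting the rule for $\new{\widehat{x}}$ and simplifying, the $\new{c_x}$ terms and the outer $H$-factor differences cancel, leaving a residue proportional to $(\new H)^{-1/2}\mathcal{W}^\top\bigl[(\beta_x h + \eps_x) - (\beta_x \new h + \new\eps_x)\bigr]$; this vanishes because the $\eps_x$-update is designed so that $\beta_x \new h + \new\eps_x = \beta_x h + \eps_x$. The analogous computation for $s$ is similar but shorter, as $\widehat{s}$ has no $c_x$-like term.

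For (c) and (d), the sparsity structure bounds each piece individually. The for-loop runs over $|S|$ indices with $O(1)$ work each (blocks have size $O(1)$), contributing $O(|S|)$ coordinate changes in $\overline{\alpha}$, $\overline{\delta_\mu}$, and $\new c_x$. For $\new h - h$, the vector $A\bigl((\new H)^{-1}\new{\overline{\delta_\mu}} - H_{\ox}^{-1}\overline{\delta_\mu}\bigr)$ is supported on the union of the column patterns $\mathcal{A}_i$ for $i\in S$, which has at most $|S|\cdot\tau$ entries by~\cref{lem:A-sparsity-pattern}; applying $L_{\ox}^{-1}$ to it via~\cref{lem:Linv-sparsity-time} produces a vector with $O(|S|\cdot\tau)$ nonzeros computable in $O(|S|\cdot\tau^2)$ time, and the $\eps_x,\eps_s$-updates inherit both bounds. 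For $\new{\widehat{x}}$ and $\new{\widehat{s}}$, the key point is that $H_{\ox}^{\pm 1/2}-(\new H)^{\pm 1/2}$ are block-diagonal matrices supported only on blocks in $S$, so only $O(|S|)$ coordinates of $\mathcal{W}^\top(\beta_x h + \eps_x)$ and $\mathcal{W}^\top(\beta_s h + \eps_s)$ need to be evaluated; each such coordinate can be computed in $O(\tau^2)$ time via~\cref{lem:L-inv-transpose-coordinate} by walking a single root-to-leaf path of $\mathcal{T}$. Summing, the total cost is $O(|S|\cdot\tau^2)$ and the total coordinate change across all implicit-representation variables is $O(|S|\cdot\tau)$. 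The main obstacle will be the bookkeeping in step (b): carefully tracking which version of $\mathcal{W}$ each symbol refers to during the two-phase $\textsc{Update}h$/$\textsc{Update}\mathcal{W}$ sequence, and checking that the $\new{\widehat{x}}$/$\new{\widehat{s}}$-updates exactly cancel the $c_x$- and outer-$H$-factor changes while the $\eps_x$/$\eps_s$-updates keep the $\mathcal{W}^\top$-arguments invariant.
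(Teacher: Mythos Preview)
Your proposal is correct and follows essentially the same approach as the paper: verify invariants (i)--(iv) directly from the assignments, check the representation formula by substituting the rule for $\new{\widehat{x}}$ and using the key identity $\beta_x\new{h}+\new{\eps_x}=\beta_x h+\eps_x$ (and similarly for $s$), and bound runtime and coordinate changes via the path-sparsity of $A$ and $L^{-1}$. One small citation fix: for evaluating the needed coordinates of $\mathcal{W}^\top(\beta_x h+\eps_x)$ you should invoke \cref{lem:A-sparsity-pattern} together with \cref{lem:Linv-sparsity-time} (or \cref{lem:Linv-path}), since you are computing $(L_{\ox}^{-1}AH_{\ox}^{-1/2}e_i)^\top(\beta_x h+\eps_x)$ rather than a single coordinate of $L^{-\top}v$; \cref{lem:L-inv-transpose-coordinate} alone does not directly give this.
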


\begin{proof}
	\textbf{Proof of Correctness: }First, we check (i)-(iv) of  \cref{inv:representation-invariant}:
	For (i) and (ii), note that the values of $\mu_i$ and $\gamma_i$ only depend on $\ox_i$, $\os_i$ and $\overline{t}$,
	so it suffices to update only the entries of $\overline{\alpha}$ and $\overline{\delta_\mu}$ with indices in $S$. For (iii) and (iv), they are trivially satisfied by definition.

	After~\cref{line:updateh-new}, we have computed new versions of the variables $\widehat{x}, \widehat{s}, c_x, h,\eps_x,\eps_s$. For the implicit representation of $x$, they satisfy: 
	\begin{align*}
		&\quad \new{\widehat{x}}+(\new{H})^{-1/2}\beta_x\new{c_{x}}-(\new{H})^{-1/2}\mathcal{W}^{\top}(\beta_x\new{h}+\new{\eps_{x}}) \\
		&=\widehat{x}+\beta_{x}(H_{\ox}^{-1/2}c_{x}-(\new H)^{-1/2}\new{c_{x}})-(H_{\ox}^{-1/2}-(\new H)^{-1/2})\mathcal{W}^{\top}(\beta_{x}h+\eps_{x}) \\
		&\quad+(\new{H})^{-1/2}\beta_x\new{c_{x}}-(\new{H})^{-1/2}\mathcal{W}^{\top}(\beta_x\new{h}+\new{\eps_{x}}) \\ 
		&= \widehat{x} + \beta_x H_{\ox}^{-1/2}c_x -H_{\ox}^{-1/2}\mathcal{W}^\top(\beta_x h+\eps_x) \\ 
		&= x,
	\end{align*} 
	where the first step follows by the definition of $\new{\widehat{x}}$ and the second step follows by $\beta_x \new{h}+\new{\eps_x} = \beta_x h+\eps_x$ from~\cref{line:multiscale-update-19}. 
	The proof of implicit representation of $s$ is identical; we omit it here. 
	
	The remainder of the function updates the variables to their new versions, giving the desired conclusion of the lemma.
	 
	\textbf{Proof of Runtime: } Since $n_i=O(1)$ for all $i\in [m]$, it takes $O(|S|)$ time to compute $\new{\overline{\alpha}}$ and $\new{\overline{\delta_\mu}}$.
	Since $H$ is a block-diagonal matrix, it takes $O(|S|)$ time to compute $\new{c_x}$. 
	Similarly, it takes $O(|S|)$ time to compute $(\new{H})^{-1}\new{\overline{\delta_\mu}}$.
	We use $\hat{\mu}$ to denote $({H})^{-1}{\overline{\delta_\mu}}$ and $\new{\hat{\mu}}$ to denote  $(\new{H})^{-1}\new{\overline{\delta_\mu}}$.
	Then, we can compute $\new{h}$ by computing $\new{h}-h$. We note that 
	\begin{align*}
		\new{h} - h &= (\new L)^{-1}A(\new H)^{-1}\new{\hat{\mu}} -  L^{-1}AH^{-1}{\hat{\mu}} \\
		&=\underbrace{(\new L)^{-1}A(\new H)^{-1}\new{\hat{\mu}} -  L^{-1}A(\new H)^{-1}\new{\hat{\mu}}}_{\delta_h^{(1)}} + \underbrace{L^{-1}A(\new H)^{-1}\new{\hat{\mu}} -  L^{-1}AH^{-1}{\hat{\mu}}}_{\delta_h^{(2)}}.
	\end{align*}
	We note that $\delta_h^{(2)}$ can be computed in  $O(|S|\cdot \tau^2)$ time using \cref{lem:A-sparsity-pattern,lem:Linv-sparsity-time}. By definition of $\new L$, we can compute $\delta_h^{(1)}$ in $O(|S|\cdot \tau^2)$ time. Let $S_{L} \defeq \{i\in[d]\mid\new L_{i}\neq L_{i}\}.$ By \cref{lem:cholesky-update-sparsity-pattern}, $S_{L}$ can be covered by $|S|$ many paths on the elimination tree.  	This also shows $\new{h}-h$ has $O(|S|\cdot \tau)$ many non-zero entries. Hence, we can compute $\new{\eps_x}$ and $\new{\eps_s}$ in $O(|S|\cdot \tau)$ time.
	Finally, since $\nnz(H-\new{H})=O(|S|)$, we can compute $(H^{-1/2}-(\new{H})^{-1/2})\mathcal{W}^\top$ and $(H^{1/2}-(\new{H})^{1/2})\mathcal{W}^\top$ 
	in $O(|S|\cdot \tau^2)$ time by \cref{lem:A-sparsity-pattern,lem:Linv-sparsity-time}. 

	\textbf{Number of Coordinate Changes:} Recall that $\new{h}-h$ has $O(|S|\cdot \tau)$ many non-zero entries, so the number of coordinate changes in $\eps_x,\eps_s$ 
	is also bounded by $O(|S|\cdot \tau)$. 
	The number of coordinate changes in $\widehat{x}$ and $\widehat{s}$ is bounded by $O(|S|\cdot \tau)$ 
	since $(H^{-1/2}-(\new{H})^{-1/2})\mathcal{W}^\top$ and $(H^{1/2}-(\new{H})^{1/2})\mathcal{W}^\top$ both have $O(|S|\cdot \tau)$ many non-zero entries by 
	\cref{lem:A-sparsity-pattern,lem:Linv-sparsity-time}, and $\|\new{c_x}-c_x\|_0=O(|S|)$.
\end{proof}

\begin{lemma}[$\textsc{Update}\mathcal{W}(\new L,\new H)$]
\label{lem:W-rep-updateW} 
Given $\new H=\nabla^{2}\phi(\new{\ox})$,
the lower Cholesky factor $\new L$ of $A(\new H)^{-1/2}A^{\top}$, 
and current implicit representation of $(x,s)$ given by 
\begin{align*}
x & =\widehat{x}+(\new{H})^{-1/2}\beta_x c_{x}-(\new{H})^{-1/2}\mathcal{W}^{\top}(\beta_x h+\eps_{x})\\
s & =\widehat{s}+(\new{H})^{1/2}\mathcal{W}^{\top}(\beta_s  h+\eps_{s})
\end{align*}
$\textsc{Update}\mathcal{W}$ takes $O((|\mathcal{S}|+|\mathcal{S}_{L}|)\cdot\tau^{2})$ time to update the variables maintained by $\textsc{MultiscaleRepresentation}$, such that at the end of the function call, the central path pair is given by
\begin{align*}
x & =\new{\widehat{x}}+(\new{H})^{-1/2}\beta_x c_{x}-(\new{H})^{-1/2}(\new{\mathcal{W}})^{\top}(\beta_x h+\new{\eps_{x}})\\
s & =\new{\widehat{s}}+(\new{H})^{1/2}(\new{\mathcal{W}})^{\top}(\beta_s h +\new{\eps_{s}}),
\end{align*}
where $\new{\mathcal{W}}\defeq(\new L)^{-1}A(\new{H})^{-1/2}$,
$\mathcal{S} \defeq\{i\in[m]\mid\new{\ox_{i}}\neq\ox_{i}\text{ or }\new{\os_{i}}\neq\os_{i}\}$,
and $\mathcal{S}_{L} \defeq \{i\in[d]\mid\new L_{i}\neq L_{i}\}.$ 

Moreover, when viewed as a set of vertices in $\mathcal{T}$, if $\mathcal{S}_{L}$ can be covered 
by $O(|\mathcal{S}|)$ many paths in $\mathcal{T}$, 
then the running time of $\textsc{Update}\mathcal{W}$ is $O(|\mathcal{S}|\cdot\tau^{2})$ and the number of coordinate changes in $\widehat{x},\widehat{s},\eps_x$ and $\eps_s$ is bounded by $O(|\mathcal{S}|\cdot \tau)$.
\end{lemma}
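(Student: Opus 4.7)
The plan is to verify \textsc{Update}$\mathcal{W}$ by a short algebraic identity and then bound its cost by exploiting the path-structured sparsity given in~\cref{lem:A-sparsity-pattern,lem:L-column-sparsity-pattern,lem:Linv-path}. Writing $v_x \defeq \beta_x h + \eps_x$, the key identity is that the prescribed update $\new{\eps_x} = \eps_x + (\new L - L)^{\top} L^{-\top} v_x$ yields $\beta_x h + \new{\eps_x} = \new L^{\top} L^{-\top} v_x$, and therefore $(\new L)^{-\top}(\beta_x h + \new{\eps_x}) = L^{-\top} v_x$. Consequently
\[
(\new{\mathcal{W}})^{\top}(\beta_x h + \new{\eps_x}) = (\new H)^{-1/2} A^{\top} L^{-\top} v_x,
\]
while the incoming representation contains $\mathcal{W}^{\top} v_x = H_{\ox}^{-1/2} A^{\top} L^{-\top} v_x$. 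Subtracting the two implicit representations of $x$ gives
\[
\new{\widehat{x}} - \widehat{x} = -(\new H)^{-1/2}\bigl((\new H)^{-1/2} - H_{\ox}^{-1/2}\bigr) A^{\top} L^{-\top} v_x,
\]
exactly as in the pseudocode; the $s$-side is symmetric up to sign and the replacement of the outer $(\new H)^{-1/2}$ by $(\new H)^{1/2}$.

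The main obstacle is that $L^{-\top} v_x \in \R^d$ is generally dense, so we cannot afford to materialize it; every use must extract only the coordinates that matter. For $\new{\widehat{x}}$, since $(\new H)^{-1/2} - H_{\ox}^{-1/2}$ is supported on the blocks indexed by $S$, we only need $(L^{-\top} v_x)|_{\mathcal{A}_i}$ for each $i \in S$; by~\cref{lem:A-sparsity-pattern} each $\mathcal{A}_i$ lies on a single root-to-leaf path of $\mathcal{T}$, so~\cref{lem:Linv-path} delivers these $O(\tau)$ entries in $O(\tau^{2})$ time per block, for $O(|S|\tau^{2})$ total; the subsequent multiplication by $A_i^{\top}$ and the sparse diagonals adds only $O(|S|\tau)$. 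For $\new{\eps_x}$, the column support of $\new L - L$ at column $j$ is contained in the root path $\mathcal{P}(j)$ (apply~\cref{lem:L-column-sparsity-pattern} to both $L$ and $\new L$), so for each $j \in S_L$ I pull $(L^{-\top} v_x)|_{\mathcal{P}(j)}$ via~\cref{lem:Linv-path} in $O(\tau^{2})$ time and then form $((\new L - L)^{\top} L^{-\top} v_x)_j$ in another $O(\tau)$, totaling $O(|S_L|\tau^{2})$. Combined, this yields the promised $O((|S|+|S_L|)\tau^{2})$ bound. In the ``moreover'' case where $S_L$ lies on a single path, a single invocation of~\cref{lem:Linv-path} at $O(\tau^{2})$ services all $|S_L| \leq \tau$ output coordinates at once, which together with the $O(|S|\tau^{2})$ contribution from the $\widehat{x}$ piece yields $O(|S|\tau^{2})$.

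The coordinate-change count is then immediate from the sparsity analysis above: $\new{\widehat{x}} - \widehat{x}$ is supported on the $O(|S|)$ blocks of $S$, and $\new{\eps_x} - \eps_x$ is supported on $S_L$, which has size $\leq \tau$ in the path case; the analogous $s$-side bounds follow identically, so all four maintained quantities change in $O(|S|\tau)$ coordinates as claimed.
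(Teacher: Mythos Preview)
Your proof is correct and follows essentially the same approach as the paper: the algebraic identity $(\new L)^{-\top}(\beta_x h+\new{\eps_x})=L^{-\top}v_x$ is the crux of correctness, and the runtime bounds come from path-structured triangular solves. The only cosmetic difference is that you phrase every solve via \cref{lem:Linv-path} (extract a path-restricted piece of $L^{-\top}v_x$), whereas the paper phrases the general $|S_L|$ bound by first forming $L^{-1}(\new L-L)$ columnwise via \cref{lem:Linv-sparsity-time}; these are dual descriptions of the same $O(\tau^{2})$-per-column computation, and both collapse to a single $O(\tau^{2})$ call in the ``moreover'' case.
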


\begin{proof}
\textbf{Proof of Correctness:}

First, we examine the reason behind the definition of $\new {\eps}_{x}$: We want to find $\new{\eps}_{x}$ such that
\[
(\new L)^{-\top}(\beta_x h+ \new \eps_{x}) =L^{-\top}(\beta_x h+\eps_{x}).
\]
Rearrange, we get
\begin{align*}
\new \eps_{x} &={(\new{L})}^{\top}L^{-\top}(\beta_x h+\eps_{x})-\beta_x h\\
& =(\new{L}-L+L)^{\top}L^{-\top}(\beta_x h+\eps_{x})- \beta_x h\\
& =(\new{L}-L)^{\top}L^{-\top}(\beta_x h+\eps_{x}) + L^\top L^{-\top}(\beta_x h + \eps_x) - \beta_x h\\
& =\eps_x + (\new{L}-L)^{\top}L^{-\top}(\beta_x h+ \eps_x).
\end{align*}

Now, we check the implicit representation of $x$. At the end of the function, we have
\begin{align*}
	&\quad \new{\widehat{x}}+(\new{H})^{-1/2}\beta_x c_{x}-(\new{H})^{-1/2}(\new{\mathcal{W}})^{\top}(\beta_x h+\new{\eps_{x}}) \\ 
	&=\new{\widehat{x}}+(\new{H})^{-1/2}\beta_x c_{x}-(\new{H})^{-1/2} (\new{H})^{-1/2}A^\top (\new{L})^{-\top} (\beta_x h+\new{\eps_x}) \\
	&=\new{\widehat x}+(\new{H})^{-1/2}\beta_x c_{x}-(\new{H})^{-1}A^\top L^{-\top}(\beta_x h + \eps_x) \\ 
	&=x,
\end{align*}
where the first step follows by definition of $\new{\mathcal{W}}$, the second step follows by the property of $\new{\eps}_x$ above, 
and the last step follows by definition of $\new{\widehat{x}}$.

The proofs for $\new \eps_s$ and $s$ are identical; we omit them here.

\textbf{Proof of Runtime: }
Note that $\Delta {\eps_x} \defeq \new \eps_x - \eps_x =((\beta_x h + \eps_x)^\top L^{-1}(\new{L}-L))^\top$.
Then, we can compute
$L^{-1}(\new L-L)$ in $O(|\mathcal{S}_{L}|\cdot\tau^{2})$ time by \cref{lem:L-column-sparsity-pattern,lem:Linv-sparsity-time}, and therefore compute $\Delta {\eps_x}$ in $O(|\mathcal{S}_{L}|\cdot\tau^{2})$ time.
By \cref{lem:A-sparsity-pattern,lem:Linv-sparsity-time}, we can compute $((\new{H})^{-1/2}-H^{-1/2})A^{\top}L^{-\top}$
in $O(|\mathcal{S}|\cdot\tau^{2})$ time, and the result has sparsity $O(|\mathcal{S}|\cdot \tau)$. Thus, we can compute $\new{\widehat{x}}$
and $\new{\widehat{s}}$ in $O(|\mathcal{S}|\cdot\tau^{2})$ time.
In total, the function runs in $O((|\mathcal{S}|+|\mathcal{S}_{L}|)\cdot\tau^{2})$ time.

For each path $\mathcal{P}$,
we can directly compute $(L^{-\top}(\beta_x h +\eps_{x}))|_{\mathcal{P}}$ in time $O(\tau^{2})$ by \cref{lem:Linv-path}.
Then, it takes $O(\tau^{2})$ to compute $\Delta \eps_x$ for each path.
Hence, the update time in this case is bounded by $O(|\mathcal{S}|\cdot\tau^{2})$.

\textbf{Number of Coordinate Changes:}
By \cref{lem:A-sparsity-pattern,lem:Linv-sparsity-time}, $((\new{H})^{-1/2}-H^{-1/2})A^{\top}L^{-\top}$
has sparsity $O(|\mathcal{S}|\cdot \tau)$. For each path $\mathcal{P}$, the solution of $L^{-1}(\new{L}-L)$ is a 
$\tau\times\tau$ submatrix by \cref{lem:L-column-sparsity-pattern,lem:Linv-sparsity-time},
leading to $\new \eps_x - \eps_x$ and $\new {\eps}_s - \eps_s$ having sparsity $O(|\mathcal{S}|\cdot \tau)$. 
\end{proof}

\subsection{Approximating A Sequence of Vectors}
The central path maintenance involves a number of dynamic vectors, e.g. $\widehat{x},\widehat{s},c_x$ from \cref{eq:implicit-representation}. These can essentially be viewed as online sequences of vectors, where the sequence length is the number of central path steps. 
To work with these vector variables efficiently over the central path steps, we maintain their $\ell_\infty$-approximations. 

In this section, we introduce the techniques for obtaining $\ell_{\infty}$-approximations of an online sequence of vectors using a \emph{sampling tree} data structure, crucially avoiding reading the input vectors in full at all times to lower the runtime.
The underlying idea is standard in sampling, heavy-hitters, and sketching, see e.g.\ \cite{DBLP:journals/jal/CormodeM05}. We explain how it is used in the context of central path maintenance in subsequent sections.

\begin{definition}
\label{def:sampling-tree}A \emph{sampling tree} $(\mathcal{S},\chi)$ of $\mathbb{R}^n$ 
consists of a constant degree rooted tree $\mathcal{S}=(V,E)$
and a labelling of the vertices $\chi:V \to2^{[n]}$, such that:
\begin{itemize}
\item $\chi(\textrm{root})=[n]$,
\item If $v$ is a leaf node of $\mathcal{S}$, then $|\chi(v)|=1$,
\item For any node $v$ of $\mathcal{S}$, the set $\{\chi(c)\mid\text{$c$ is a child of $v$}\}$
forms a partition of $\chi(v)$.
\end{itemize}
\end{definition}

\begin{theorem}
	\label{thm:approximate-ell-infty}
	Given a sampling tree $(\mathcal{S},\chi)$ with height $\eta$, 
	some $0<\eps_{\mathrm{apx}}, \delta_{\mathrm{apx}}<1$,
	length of input sequence $k$,
	a fixed but unknown JL-matrix $\Phi\in\R^{r\times n}$ where $r=\Omega(\eta^2\log(nk/\delta_{\mathrm{apx}}))$,
	and upper bound $\zeta>0$ such that the sequence $\{y^{(\ell)}\}_{\ell=1}^k$ satisfies $\|y^{(\ell)}-y^{(\ell-1)}\|_{2}\leq \zeta$ for all $\ell \in [k]$,
	the data structure $\textsc{\ensuremath{\ell_{\infty}}-Approximates}$ (\cref{alg:approximate-ell_infty,alg:approximate-ell_infty-1}) supports $k$ calls to \textsc{Query}, such that:
	
	\item In the $\ell$-th call to \textsc{Query}, 
	the data structure can indirectly access $\{y^{(i)}\}_{i=1}^\ell$ using the list of oracles  $\{\mathcal{O}[y^{(i)}]\}_{i=1}^{\ell}$ as follows:
	\begin{itemize}
		\item[] $\mathcal{O}[y^{(i)}].\textsc{TypeI}(v)$: access to the vector $\Phi_{\chi(v)}y^{(i)}$ for node $v \in \mathcal{S}$,
		\item[] $\mathcal{O}[y^{(i)}].\textsc{TypeII}(j)$: access to entry $y^{(i)}_{j}$ for $j \in [n]$,
	\end{itemize}
	and returns $z^{(\ell)}$ such that  $\|z^{(\ell)}-y^{(\ell)}\|_{\infty}\leq\eps_{\mathrm{apx}}$ with probability at least $1 - \delta_{\mathrm{apx}}/k$.

    Over the entire input sequence, the data structure makes $O(\eta \cdot \zeta^2k^{2}/\eps_{\mathrm{apx}}^{2}\cdot\poly\log({nk\zeta}/({\eps_{\mathrm{apx}}\cdot \delta_{\mathrm{apx}}})))$
	type-I oracle calls and $O(\zeta^2k^{2}/\eps_{\mathrm{apx}}^{2}\cdot\poly\log({nk\zeta}/({\eps_{\mathrm{apx}}\cdot \delta_{\mathrm{apx}}})))$
	type-II oracle calls, 
	with $O(\eta\cdot r \cdot \zeta^2 k^{2}/\eps_{\mathrm{apx}}^{2}\cdot\poly\log({nk\zeta}/({\eps_{\mathrm{apx}}\cdot \delta_{\mathrm{apx}}})))$
	additional computation time.
	It maintains $\{z^{(\ell)}\}_{\ell=1}^{k}$ such that
	$\|z^{(\ell)}-y^{(\ell)}\|_{\infty}\leq\eps_{\mathrm{apx}}$ for all $\ell \in [k]$ with success probability at least $1-\delta_{\mathrm{apx}}$.
	\end{theorem}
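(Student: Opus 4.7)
The plan is to build a data structure that, at each query step $\ell$, samples coordinates $j \in [n]$ with probability approximately proportional to $(y^{(\ell)} - z^{(\ell-1)})_j^2$, uses the TypeII oracle to verify each sample, and updates $z$ at the coordinate when the observed deviation exceeds $\eps_{\mathrm{apx}}$. The internal state consists of one sketch $\Phi_{\chi(v)} z^{(\ell-1)} \in \R^r$ at every node $v$ of the sampling tree $\mathcal{S}$. Combined with a TypeI call on $y^{(\ell)}$, this yields $\Phi_{\chi(v)}(y^{(\ell)} - z^{(\ell-1)})$, and hence a JL estimator of $\|(y^{(\ell)} - z^{(\ell-1)})|_{\chi(v)}\|_2^2$ at each node visited during sampling.

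Sampling is carried out by walking from the root of $\mathcal{S}$ to a leaf: at each internal node $v$, the next child $c$ is chosen with probability proportional to its estimated squared $\ell_2$ contribution. With $r = \Omega(\eta^2 \log(nk/\delta_{\mathrm{apx}}))$, a union bound over the $O(n)$ nodes of $\mathcal{S}$ and all $k$ queries ensures every estimator is correct to a $(1 \pm c/\eta)$ multiplicative factor throughout, with total failure probability at most $\delta_{\mathrm{apx}}$; composing these distortions along the $\eta$ levels of a root-to-leaf path keeps the induced sampling distribution within a constant factor of the exact distribution $(y^{(\ell)} - z^{(\ell-1)})_j^2 / \|y^{(\ell)} - z^{(\ell-1)}\|_2^2$. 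Each sample thus costs $O(\eta)$ TypeI calls, $O(\eta r)$ arithmetic on the sketches, and one TypeII verification at the leaf; when the deviation exceeds $\eps_{\mathrm{apx}}$, the coordinate is written into $z$ and its new contribution is propagated up the $O(\eta)$ ancestor sketches in $O(\eta r)$ additional time.

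For correctness at a single step, I would invoke a standard coupon-collector argument: drawing $T_\ell = O(\|y^{(\ell)} - z^{(\ell-1)}\|_2^2 / \eps_{\mathrm{apx}}^2 \cdot \log(nk/\delta_{\mathrm{apx}}))$ samples suffices to hit, with probability at least $1 - \delta_{\mathrm{apx}}/k$, every coordinate whose true deviation exceeds $\eps_{\mathrm{apx}}$. Updating each such coordinate restores $\|y^{(\ell)} - z^{(\ell)}\|_\infty \leq \eps_{\mathrm{apx}}$, and a union bound over $\ell \in [k]$ gives the claimed $1 - \delta_{\mathrm{apx}}$ overall success probability.

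The main obstacle will be amortizing $\sum_\ell T_\ell$ against the claimed total budget $O(\zeta^2 k^2 / \eps_{\mathrm{apx}}^2 \cdot \polylog)$, since the naive telescoping bound $\|y^{(\ell)} - z^{(\ell-1)}\|_2 \leq \ell \zeta$ would give only $O(\zeta^2 k^3 / \eps_{\mathrm{apx}}^2)$. My plan is to introduce the potential $\Phi_\ell = \|y^{(\ell)} - z^{(\ell)}\|_2^2$, bound its per-step increase due to $y$'s drift by $\zeta^2 + 2\zeta\sqrt{\Phi_{\ell-1}}$, and account for the decrease of at least $\eps_{\mathrm{apx}}^2$ from each successful TypeII update of a large-deviation coordinate. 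By taking additional samples at each step to drive $\Phi_\ell$ back down after $y$'s drift, I expect to maintain an invariant of the form $\Phi_\ell = O(\ell \zeta^2)$, which combined with the per-step analysis above yields $\sum_\ell T_\ell = O(\zeta^2 k^2 / \eps_{\mathrm{apx}}^2 \cdot \polylog)$ TypeII calls; multiplying by $O(\eta)$ and $O(\eta r)$ per sample then produces the stated TypeI count and additional computation time.
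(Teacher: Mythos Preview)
Your sampling-tree mechanics and coupon-collector correctness argument are fine, but the potential-function amortization has a genuine gap. The invariant $\Phi_\ell = \|y^{(\ell)} - z^{(\ell)}\|_2^2 = O(\ell\zeta^2)$ cannot be maintained by updating only coordinates whose deviation exceeds $\eps_{\mathrm{apx}}$. Consider $y^{(\ell)}$ drifting by $\zeta$ per step uniformly across many coordinates, each coordinate moving by $\zeta/\sqrt{n}$ per step. As long as $\ell\zeta/\sqrt{n} < \eps_{\mathrm{apx}}$, no coordinate is ever updated, so $z^{(\ell)}=0$ and $\Phi_\ell = \ell^2\zeta^2$, not $\ell\zeta^2$. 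You then spend $T_\ell \approx \ell^2\zeta^2/\eps_{\mathrm{apx}}^2$ samples at step $\ell$, summing to the naive $O(\zeta^2 k^3/\eps_{\mathrm{apx}}^2)$ you were trying to avoid. Updating \emph{every} sampled coordinate does not save you either: when the mass is spread evenly, each sample reduces $\Phi$ by only $\Phi/n$ in expectation, so the cost picks up an $n$-dependence.

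The paper avoids this accumulation of sub-threshold $\ell_2$ mass by never comparing $y^{(\ell)}$ to $z^{(\ell-1)}$ at all. Instead it uses the oracles to \emph{past} iterates (note the theorem gives you access to $\mathcal{O}[y^{(i)}]$ for all $i\le \ell$, which your proposal ignores) and samples from the differences $y^{(\ell)} - y^{(\ell-2^j)}$ over dyadic intervals. Here the Lipschitz assumption gives a hard bound $\|y^{(\ell)}-y^{(\ell-2^j)}\|_2 \le 2^j\zeta$, independent of $z$. At step $\ell$, for each $j$ with $2^j \mid \ell$, they draw $\Theta(4^j\zeta^2/\eps_{\mathrm{apx}}^2 \cdot \polylog)$ samples to catch every coordinate changing by $\Omega(\eps_{\mathrm{apx}}/\log k)$ on that interval. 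Since any interval $[\ell_i,\ell]$ (where $\ell_i$ is the last time $z_i$ was refreshed) decomposes into $O(\log k)$ dyadic pieces, missing all of them forces the total change at coordinate $i$ to be $\le \eps_{\mathrm{apx}}$. The total sample count is $\sum_{j\le \log k} (k/2^j)\cdot 4^j\zeta^2/\eps_{\mathrm{apx}}^2 = O(k^2\zeta^2/\eps_{\mathrm{apx}}^2)$ as claimed. This dyadic-interval idea is the missing ingredient.
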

	
	\begin{algorithm}
	\caption{$\ell_{\infty}$ Maintenance Data Structure -- Initialize and Query \label{alg:approximate-ell_infty}}
	
	\begin{algorithmic}[1]
	
	\State \textbf{data structure }$\textsc{\ensuremath{\ell_{\infty}}-Approximates}$
		
	\State \textbf{private : members}
	
	\State \hspace{4mm} Sampling tree $(\mathcal{S},\chi)$ 
	\Comment{Fixed global constant}
	
	\State \hspace{4mm} $\eps_{\mathrm{apx}},\delta_{\mathrm{apx}}\in(0,1)$ 
	\Comment{error parameter and failure probability parameter}
	
	\State \hspace{4mm} $\ell, k \in \N$ \Comment{counter and total length of sequence}

	\State \hspace{4mm} $\zeta \in \R_+$ \Comment{upper bound of $\|y^{(\ell+1)} - y^{(\ell)}\|_2$}
	
	\State \hspace{4mm} \textsf{list} $\{\mc{O}\{y^{(i)}\}\}_{i=0}^k$ \Comment{sequence of oracles of input vectors $y^{(i)}$}
	
	\State \hspace{4mm} \textsf{list} $\{z^{(i)}\}_{i=0}^k$ \Comment{sequence of approximations}

	\State \textbf{end members}
	
	\Procedure{$\textsc{Initialize}$}{$\mathcal{S}, \chi, \eps_{\mathrm{apx}}\in(0,1),\delta_{\mathrm{apx}}\in(0,1), \zeta \in \R_+, k\in \mathbb{N}$}
	
	\State $\ell\leftarrow0, k \leftarrow k$
	
	\State $\eps_{\mathrm{apx}}\leftarrow\eps_{\mathrm{apx}},\; \delta_{\mathrm{apx}}\leftarrow \delta_{\mathrm{apx}},\;
	\zeta \leftarrow \zeta$
	
	\State $y^{(0)} \leftarrow \boldsymbol{0}, z^{(0)}\leftarrow\boldsymbol{0}$
	
	\EndProcedure
	
	\Procedure{$\textsc{Query}$}{$\mc{O}[\new y]$}
	
	\State $\ell\leftarrow\ell+1$
	
	\State $\mc{O}[y^{(\ell)}] \leftarrow \mc{O}[\new y]$ \Comment{store new oracle to list}
	
	\State $z^{(\ell)}\leftarrow z^{(\ell-1)}$ \Comment{first set the approximation to be the same as the previous}
	
	\State $\mathcal{I}\leftarrow\emptyset$ 
	\Comment{set of indices $i$ where we may need to update $z^{(\ell)}_i$}
	
	\For{$j=0,1,\ldots,\lfloor\log\ell\rfloor$}
		\State $\mathcal{I}_{j}\leftarrow\emptyset$ 
		\If{$\ell\equiv0\mod\;2^{j}$} \label{line:start:num-of-oracle-queries}
			\For{$O(4^{j}\zeta^2/\eps_{\mathrm{apx}}^{2}\cdot\log^{3}k \cdot \log(nk\zeta/({\eps_{\mathrm{apx}}  \delta_{\mathrm{apx}}})))$ many
	times}  
				\State $\mathcal{I}_j\leftarrow\mathcal{I}_j\cup\{\textsc{Sample}(\ell-2^{j}+1,\ell)\}$
				\label{line:oracle-query}
	
			\EndFor
	
		\EndIf\label{line:end:num-of-oracle-queries2}
		\State $\mathcal{I} \leftarrow \mathcal{I} \cup \mathcal{I}_j$
	
	\EndFor
	
	\ForAll{$i \in \mathcal{I}$}
		\State $z_{i}\leftarrow \mc{O}[y^{(\ell)}].\textsc{TypeII}(i)$
		\label{line:linf_approx_28}
		\If{$|z_{i}-(z^{(\ell)})_{i}|>\eps_{\mathrm{apx}}$} \Comment{Set $z_{i}^{(\ell)}\leftarrow y_{i}^{(\ell)}$
	when error is larger than $\eps_{\mathrm{apx}}$}
	
			\State $(z^{(\ell)})_{i}\leftarrow z_{i}$ 
			\label{line:linf_approx_30}
		\EndIf
	
	\EndFor
	
	\State \Return $z^{(\ell)}$ 	
	
	\EndProcedure
	
	\end{algorithmic}
	\end{algorithm}

	\begin{algorithm}
		\caption{$\ell_{\infty}$ Maintenance Data Structure -- Sample and Estimate \label{alg:approximate-ell_infty-1}}
		
		\begin{algorithmic}[1]
		
		\Procedure{$\textsc{Sample}$}{$a\in[k],b\in[k]$} \label{line:ell_infty-sample}
		
		\Repeat \label{line:repeat-start}
		
			\State $v\leftarrow\mathrm{root}(\mathcal{S}),p\leftarrow 1$
		
			\While{$v$ is not a leaf node} \label{line:descent-start}
		
			
		
				\State Sample child $v'$ of $v$ with probability $\displaystyle p_{v'} \defeq \frac{\textsc{Estimate}(a,b,v')}
					{\sum_{u \text{ a child of } v} \textsc{Estimate}(a,b,u)}$
				\label{line:linf_approx_sample}
				\State $\displaystyle p \leftarrow p \cdot p_{v'}$ 
	
				\State $v\leftarrow v'$
		
			\EndWhile \label{line:descent-end}
			\State \Comment{Since $v$ is a leaf node, $\chi(v)$ consists of a single index from $[n]$}
			\State \Comment{For notational purposes, suppose $\chi(v) = \{i\}$}
			\State $y^{(a)}_i\leftarrow \mc{O}[y^{(a)}].\textsc{TypeII}(i)$ \label{line:linf_approx_44}
			\State $y^{(b)}_i\leftarrow \mc{O}[y^{(b)}].\textsc{TypeII}(i)$ \label{line:linf_approx_45}
		
			\State
			with probability $(y^{(a)}_i-y^{(b)}_i)^{2}/(10\cdot p\cdot \textsc{Estimate}(a,b,\operatorname{root}(\mathcal{S})))$, \Return $i$ 
		
		\Until{\textbf{false}} \label{line:repeat-end}

		\EndProcedure
		
		\Procedure{$\textsc{Estimate}$}{$a\in[k],b\in [k],v\in \mathcal{S}$}
		
		\State \Return $\|\mc{O}\{y^{(a)}\}.\textsc{TypeI}(v)-\mc{O}\{y^{(b)}\}.\textsc{TypeI}(v)\|_{2}^{2}$
		\EndProcedure
		\end{algorithmic}
		\end{algorithm}

	For any vector $y$ in the sequence, we show that $\Phi_{\chi(v)}y$ allows us to obtain a $(1\pm \frac{1}{\eta})$-approximation of $\|y|_{\chi(v)}\|_2^2$.
	With such estimations, we can sample a coordinate of $y$ with probability proportional to $y_i^2$ using $O(\eta)$ many oracle calls, using 
	a random descent on the sampling tree, where we choose each child with probability proportional to their estimation. 
	This further enables us to obtain a $(1\pm \eps)$-approximation of $y$ in the $\ell_\infty$-norm using
	$O({\|y\|_2^2}/{\eps^2}\log(\|y\|_2/\eps))$ type-II oracle calls by the coupon collection problem.
	By linearity of $\Phi$, this then allows us to approximate $y^{(b)}-y^{(a)}$. 

	Instead of directly estimating $y^{(\ell)}$ for each $\ell$, we obtain a $\ell_\infty$-approximation for all $y^{(b)}-y^{(a)}$, where $[a,b] \defeq \{a,a+1,\dots, b-1,b\}$ is in the set of \emph{dyadic intervals} of $[k]$.
	\begin{definition}
		Let $k$ be a positive integer. The set of \emph{dyadic intervals} of $[k]$ is
		\[
			\{[i\cdot 2^j+1,(i+1)\cdot 2^j] \mid i,j\in\mathbb{N};(i+1)\cdot 2^j\leq k\}.
		\]
	\end{definition}
	The following lemma tells us why dyadic intervals help to keep the error sub-linear to the size of the intervals.
	\begin{lemma}[folklore]\label{lem:dyadic-interval-cover}
		Any interval $[a,b]$ in $[k]$ can be partitioned into at most $2\log k$ dyadic intervals.
	\end{lemma}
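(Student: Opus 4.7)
The plan is to prove this via a tree-based greedy construction. Consider the complete binary tree $T$ of depth $\lceil\log k\rceil$ whose leaves are labelled $1,2,\ldots,k$ in left-to-right order (padding with dummy leaves if $k$ is not a power of 2). Each internal node of $T$ at depth $\lceil\log k\rceil - j$ corresponds naturally to a dyadic interval of length $2^j$, namely the set of leaves in its subtree. Conversely, every dyadic interval $[i\cdot 2^j+1,(i+1)\cdot 2^j]$ of $[k]$ is the leaf-set of exactly one node of $T$.

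Given $[a,b]\subseteq[k]$, I would first locate the lowest common ancestor $w$ of the leaves $a$ and $b$ in $T$. Then I would build the partition by two upward walks. From the leaf $a$, walk up the tree toward $w$; at each step, if the current node $u$ is the left child of its parent, append the dyadic interval corresponding to the right sibling of $u$ to the output. Symmetrically, from the leaf $b$, walk up toward $w$; if the current node is a right child, append the dyadic interval corresponding to its left sibling. Finally include the singletons $\{a\}$ and $\{b\}$ (themselves dyadic) as needed to catch the endpoints.

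For correctness, I would argue that the resulting collection partitions $[a,b]$. Disjointness: each added sibling subtree lies strictly between the two root-to-leaf paths of $a$ and $b$, on the appropriate side, and the two walks add siblings on opposite sides of $w$. Coverage: for any $c\in[a,b]$, follow the path from leaf $c$ up to $w$; at some point this path must merge with the path from $a$ (if $c$ is on the left half under $w$) or from $b$ (right half), and at the last moment before merging, $c$'s subtree sits as the sibling that was appended. The length bound is immediate: each walk traverses at most $\lceil\log k\rceil$ edges, so contributes at most $\lceil\log k\rceil$ dyadic intervals, yielding at most $2\log k$ in total (absorbing the two endpoint singletons and ceilings into the bound).

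The only real subtlety will be the bookkeeping around the endpoints and the case $a=b$, which reduces to the single dyadic interval $\{a\}$. Handling non-power-of-two $k$ is also routine since the padded leaves are never visited by the walks when $a,b\le k$. The heart of the argument, the LCA decomposition into a left and a right staircase of sibling subtrees, is standard and gives the claimed bound directly.
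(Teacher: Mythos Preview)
The paper does not prove this lemma at all; it is stated as folklore and used without proof. Your proposal supplies the standard segment-tree/LCA decomposition argument, which is correct and is exactly the argument one would expect for this folklore fact.

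One small imprecision worth tightening when you write it up: the two upward walks should terminate at the \emph{children} of the LCA $w$, not at $w$ itself; otherwise the last step of each walk can add a sibling subtree that lies outside $[a,b]$ or overlaps the other side. You already flag the endpoint bookkeeping as the only subtlety, so this is just a matter of stating the stopping condition precisely. With that fix, each side contributes at most one dyadic interval per level (including the singleton endpoint at the leaf level), giving at most $2\lceil\log k\rceil$ intervals, which matches the claimed $2\log k$ up to the usual rounding.
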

	Hence, it suffices to obtain a $(1\pm\frac{\eps}{2\log k})$ approximation for every dyadic interval.

	Before we prove \cref{{thm:approximate-ell-infty}}, we show that the function $\textsc{Sample}(a,b)$ in the data structure 
	indeed samples a coordinate $i$ of $(y^{(b)}-y^{(a)})$ with probability proportional to $(y^{(b)}-y^{(a)})_{i}^{2}$.
	
	\begin{lemma}[$\textsc{Sample}$]\label{lem:ell-infty-sample}
	Under the same setting as \cref{thm:approximate-ell-infty}, 
	conditioned on the function $\textsc{Estimate}(a,b,v)$ always returns $\|\Phi_{\chi(v)}(y^{(a)}-y^{(b)})\|_2^2 = (1\pm \frac{1}{2\eta})\|(y^{(a)}-y^{(b)})|_{\chi(v)}\|_2^2$,
	the function $\textsc{Sample}(a,b)$ on \cref{line:ell_infty-sample} samples a coordinate
	$i$ proportional to $(y^{(b)}-y^{(a)})_{i}^{2}$ with
	$O(\eta \cdot r)$ expected running
	time, 
	and makes $O(\eta)$ many type-I oracle calls and $O(1)$ many type-II oracle calls in expectation.
	\end{lemma}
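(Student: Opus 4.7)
The plan is to analyze a single iteration of the \textbf{repeat} loop (\cref{line:repeat-start}--\cref{line:repeat-end}) and the expected number of iterations separately. The main geometric fact is that the acceptance probability at the leaf is designed to exactly cancel the $p$-factor accumulated during the descent, so that the returned index has the correct distribution even though the per-level descent probabilities are only approximately correct.

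First I would fix a leaf corresponding to a coordinate $i \in [n]$, write $v_0 = \mathrm{root}(\mathcal{S}), v_1, \dots, v_\eta$ for the root-to-leaf path with $\chi(v_\eta) = \{i\}$, and let $p = \prod_{k=1}^{\eta} p_{v_k}$ be the probability the descent ends at this leaf. Applying the hypothesis $\textsc{Estimate}(a,b,v) = (1\pm \tfrac{1}{2\eta})\|(y^{(a)}-y^{(b)})|_{\chi(v)}\|_2^2$ to both the numerator and the denominator of the expression on \cref{line:linf_approx_sample}, and using the partition property $\sum_{u \text{ a child of } v_{k-1}}\|(y^{(a)}-y^{(b)})|_{\chi(u)}\|_2^2 = \|(y^{(a)}-y^{(b)})|_{\chi(v_{k-1})}\|_2^2$, I can rewrite each factor as $p_{v_k} = (1\pm c/\eta)\cdot \|(y^{(a)}-y^{(b)})|_{\chi(v_k)}\|_2^2 / \|(y^{(a)}-y^{(b)})|_{\chi(v_{k-1})}\|_2^2$ for a small absolute constant $c$. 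Telescoping across the $\eta$ levels and using $(1\pm c/\eta)^\eta = \Theta(1)$ yields $p = \Theta(1)\cdot (y^{(a)}_i-y^{(b)}_i)^2 / \|y^{(a)}-y^{(b)}\|_2^2$. The step I expect will need a careful constant-chase is verifying that the acceptance probability $(y^{(a)}_i-y^{(b)}_i)^2/(10\cdot p\cdot \textsc{Estimate}(a,b,\mathrm{root}(\mathcal{S})))$ lies in $[0,1]$: the factor $10$, combined with the root-level estimate, must exactly absorb the compounded multiplicative error from the telescoping.

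Given this, the key observation is that the $p$-factor cancels in the joint probability of reaching leaf $i$ and accepting: the probability that a single iteration returns $i$ equals $p\cdot (y^{(a)}_i-y^{(b)}_i)^2/(10 p\cdot \textsc{Estimate}(a,b,\mathrm{root}(\mathcal{S}))) = (y^{(a)}_i-y^{(b)}_i)^2/(10\cdot \textsc{Estimate}(a,b,\mathrm{root}(\mathcal{S})))$, independent of $p$. Summing over $i \in [n]$, the per-iteration return probability equals $\|y^{(a)}-y^{(b)}\|_2^2/(10\cdot \textsc{Estimate}(a,b,\mathrm{root}(\mathcal{S}))) = \Theta(1)$ by the $\textsc{Estimate}$ hypothesis at the root. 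Hence the number of iterations is geometric with $O(1)$ mean, and conditional on termination, the returned coordinate is distributed exactly proportional to $(y^{(a)}_i-y^{(b)}_i)^2$, as required.

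For the per-iteration cost, the descent (\cref{line:descent-start}--\cref{line:descent-end}) traverses $\eta$ nodes of constant degree, and at each level invokes $\textsc{Estimate}$ on each child, which performs two type-I oracle queries (one to $\mathcal{O}[y^{(a)}]$ and one to $\mathcal{O}[y^{(b)}]$) returning vectors in $\R^r$, followed by an $O(r)$-time norm computation. The leaf step (\cref{line:linf_approx_44}--\cref{line:linf_approx_45}) makes two type-II queries. Multiplying these per-iteration costs by the $O(1)$ expected number of iterations yields the claimed $O(\eta)$ type-I oracle calls, $O(1)$ type-II oracle calls, and $O(\eta\cdot r)$ expected runtime.
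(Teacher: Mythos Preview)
Your proposal is correct and follows essentially the same approach as the paper: both use the exact cancellation of the accumulated descent probability $p$ in the acceptance step to establish that $\Pr[\text{return }i]$ is proportional to $(y^{(a)}_i-y^{(b)}_i)^2$, and both telescope the per-level probability ratios (accumulating a $(1\pm O(1/\eta))^\eta=\Theta(1)$ factor) to control the acceptance probability. The only cosmetic difference is in the expected-iteration bound: the paper lower-bounds the conditional acceptance probability at every leaf by a uniform constant ($\geq 1/50$), whereas you sum the exact per-iteration return probabilities over $i$ to get $\|y^{(a)}-y^{(b)}\|_2^2/(10\cdot\textsc{Estimate}(a,b,\mathrm{root}))=\Theta(1)$ directly---your route is slightly cleaner, and your explicit flag that the acceptance probability must be checked to lie in $[0,1]$ (via the lower bound on $p$ from the same telescoping) is a point the paper leaves implicit.
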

	
	\begin{proof}
	\textbf{Proof of Correctness:}
	Let $\delta_{y}$ denote $y^{(b)}-y^{(a)}$ in this proof.
	
	
	Let $v_{1},v_{2},\ldots,v_{m}\in V$
	be the sequence of nodes visited in the while-loop on~\crefrange{line:descent-start}{line:descent-end}, where $v_{1}$ is the root node of $\ts$
	and $v_{m}$ is the leaf node with $\chi(v_{m})=\{i\}$, for some $i \in [n]$ and $m \leq \eta$. 
	Observe that at end of the while-loop, $p$ is exactly the probability that $v_m$ is sampled. Hence, the algorithm outputs $v_m$ with probability
	\[
		\Pr[i\text{ is outputed}] = \Pr[v_m \text{ is sampled}] \cdot \Pr[i\text{ is returned}\mid v_m] = p\cdot \frac{(y^{(a)}_i-y^{(b)}_i)^2}{10\cdot p \|\Phi\delta_y\|_2^2}=\frac{\delta_{y,i}^2}{10 \cdot \|\Phi\delta_y\|_2^2}.
	\]
	This shows  $\textsc{Sample}$ outputs a coordinate $i$ with probability
	proportional to $\delta_{y,i}^{2}$.
	
	\textbf{Proof of Runtime}: 
	Consider one iteration of the repeat-loop  (\crefrange{line:repeat-start}{line:repeat-end}), 
	where the inner while-loop visits the node sequence $v_1,\dots, v_m$. 
	The probability to choose child $v_{j+1}$ from node $v_{j}$ is \\ $\|\Phi_{\chi(v_{j+1})}\delta_{y}\|_{2}^{2}/\sum_{\text{$u$ a child of $v_j$}}\|\Phi_{\chi(u)}\delta_{y}\|_{2}^{2}.$ By our condition on the function $\textsc{Estimate}(a,b,v)$, we have $\sum_\text{$c$ a child of $v_j$} \|\Phi_{\chi(u)}\delta_y\|_2^2 = (1\pm \frac{1}{\eta}) \|\Phi_{\chi(v_j)}\delta_{y}\|_2^2$. Hence,
	\[
		\Pr[v_{j+1}\mid v_j] \leq \left(1+\frac{1}{\eta}\right)\frac{\|\Phi_{\chi(v_{j+1})}\delta_y\|_2^2}{\|\Phi_{\chi(v_{j})}\delta_y\|_2^2}.
	\] 
	
	Taking the telescoping product, we have 
	\[
		p\leq (1+\frac{1}{\eta})^\eta \frac{\|\Phi_{\chi(v_{2})}\delta_{y}\|_{2}^{2}}{\|\Phi_{\chi(v_{1})}\delta_{y}\|_{2}^{2}}\times\frac{\|\Phi_{\chi(v_{3})}\delta_{y}\|_{2}^{2}}{\|\Phi_{\chi(v_{2})}\delta_{y}\|_{2}^{2}}\times\ldots\times\frac{\|\Phi_{\chi(v_{m})}\delta_{y}\|_{2}^{2}}{\|\Phi_{\chi(v_{m-1})}\delta_{y}\|_{2}^{2}} \leq 3\cdot \frac{\|\Phi_{\{i\}}\delta_{y}\|_{2}^{2}}{\|\Phi\delta_{y}\|_{2}^{2}}.
	\]
	This iteration of the repeat-loop returns with probability 
	\[
		\frac{\delta_{y,i}^2}{10p\|\Phi\delta_{y}\|_{2}^{2}} \geq \frac{\delta_{y,i}^2}{10\|\Phi\delta_{y}\|_{2}^{2}} \cdot \frac{\|\Phi\delta_{y}\|_{2}^{2}}{3\|\Phi_{\{i\}}\delta_{y}\|_2^2} \geq \frac{1}{50},
	\]
	where we used $\|\Phi_{\{i\}}\delta_{y}\|_2^2 \leq (1+\frac{1}{2\eta})\delta_{y,i}^2$.  Hence, the expected number of iteration is $O(1)$.
	
	For each iteration of the repeat-loop, the inner while-loop on~\crefrange{line:descent-start}{line:descent-end} traverses a path from the root to a leaf node in $\mathcal{S}$, so we can bound the number of iterations by $\eta$. At each node $v_j$ along the path with $c$ children, sampling on~\cref{line:linf_approx_sample}  requires $O(r)$ time and $2c$ many type-I queries. At the end of the descent to a leaf, we make 2 type-II queries on~\cref{line:linf_approx_44,line:linf_approx_45} and then 2 type-I queries. 
	Hence, each call of \textsc{Sample} takes $O(\eta \cdot r)$ time, $O(\eta)$ type-I queries, and $O(1)$ type-II queries in expectation.
	\end{proof}
	
	\begin{proof}[Proof of \cref{thm:approximate-ell-infty}]
	\textbf{Proof of Runtime:} For a fix $j \in [ \lfloor \log k\rfloor]$, the data structure makes
	$O(4^{j}\zeta^2\cdot\log^{3}k\cdot\log(nk\zeta/(\eps_{\mathrm{apx}}\cdot \delta_{\mathrm{apx}}))/\eps_{\mathrm{apx}}^{2})$ many calls to $\textsc{Sample}$, for each $2^{j}$ calls to $\textsc{Query}$.
	Since there are $k$ calls to \textsc{Query} in total, the total number of \textsc{Sample} call is 
	\[
	\sum_{j=1}^{\lfloor\log k\rfloor}\frac{k}{2^{j}}\cdot O\left(\frac{4^{j}\zeta^2}{\eps_{\mathrm{apx}}^2}\cdot\log^{3}k \cdot \log(\frac{nk\zeta}{\eps_{\mathrm{apx}}\cdot \delta_{\mathrm{apx}}})\right)
	=O\left(\frac{k^2\zeta^2\cdot \log^{3}k \cdot \log(nk\zeta/(\eps_{\mathrm{apx}}\delta_{\mathrm{apx}}))}{\eps_{\mathrm{apx}}^2}\right).
	\]
	Combining the bound above and \cref{lem:ell-infty-sample}, the total runtime is $O(\frac{\eta r k^{2}\zeta^2}{\eps_{\mathrm{apx}}^{2}}\poly\log({nk\zeta}/({\eps_{\mathrm{apx}}\cdot \delta_{\mathrm{apx}}})))$ in expectation, 
	with $O(k^{2}\zeta^2/\eps_{\mathrm{apx}}^{2}\cdot \poly\log({nk\zeta}/({\eps_{\mathrm{apx}}\cdot \delta_{\mathrm{apx}}}))))$ type-II queries and $O(\eta k^{2}\zeta^2/\eps_{\mathrm{apx}}^{2}\cdot \poly\log({nk\zeta}/({\eps_{\mathrm{apx}}\cdot \delta_{\mathrm{apx}}})))$ type-I queries in expectation.

	\textbf{Proof of Correctness}: 
	By the coupon collection problem, for any vector $v$, 
	we can find all coordinates $i$ of $v$ such that $|v_{i}|\geq\frac{1}{a}\|v\|_{2}$ with high probability, 
	by sampling $O(a^{2}\log a)$ many coordinates, each time sampling coordinate $i$ with probability proportional to $v_i^2$. 
	By our choice of $r=\Omega(\eta^2\log(nk/\delta_{\mathrm{apx}}))$ and union bound over all type-I queries, 
	we have the function $\textsc{Estimate}(a,b,v)$ always return $\|\Phi_{\chi(v)}(y^{(a)}-y^{(b)})\|_2^2 = (1\pm \frac{1}{2\eta})\|(y^{(a)}-y^{(b)})|_{\chi(v)}\|_2^2$
	with probability at least $1-\frac{\delta_{\mathrm{apx}}}{2}$.
	
	Fix $j \in [\lfloor{\log \ell}\rfloor]$. We sample $O(4^{j}\zeta^2 \cdot\log^{3}k \cdot \log(nk\zeta/(\eps_{\mathrm{apx}}\delta_{\mathrm{apx}}))/\eps_{\mathrm{apx}}^{2})$
	many coordinates for $y^{(\ell)}-y^{(\ell-2^{j})}$ (\crefrange{line:start:num-of-oracle-queries}{line:end:num-of-oracle-queries2}). 
	By the triangle inequality, and the property that consecutive $y^{(\ell)}$'s change slowly, 
	we have $\|y^{(\ell)}-y^{(\ell-2^{j})}\|_{2}\leq2^{j}\cdot\zeta$. 
	Hence, $\mathcal{I}_{j}$ contains all coordinates $i$ such that $|(y^{(\ell)}-y^{(\ell-2^{j})})_{i}|>O(\eps_{\mathrm{apx}}/\log k)$
	with success probability $1-\delta_{\mathrm{apx}}/\poly(nk)$. 
	
	Taking the union bound over all $j$, we have that $\mathcal{I} = \bigcup \mathcal{I}_j$ contains all coordinates $i$ such that $|(y^{(\ell)}-y^{(\ell-2^{j}+1)})_{i}|>O(\eps_{\mathrm{apx}}/\log k)$ for all $\ell\in[k]$ and $j\in [ \lfloor \log k\rfloor]$ with probability at least $1-\frac{\delta_{\mathrm{apx}}}{2}$. 
	
	Consider the data structure at the end of iteration $\ell$. Fix a coordinate $i$, and let $\ell_{i}$ be the iteration when the value of $z_{i}$ was set by \cref{line:linf_approx_28,line:linf_approx_30}. In other words, $z_{i}^{(\ell)} = y_{i}^{(\ell_{i})}$. Let $\delta_{y,i}^{(t)} = (y^{(t)} - y^{(t-1)})_i$ for any $t \in [k]$. Then,
	\[
	|y^{(\ell)}_{i}-z^{(\ell)}_{i}|=|y_{i}^{(\ell)}-y_{i}^{(\ell_{i})}|
	=\Bigg\vert\sum_{t=\ell_{i}+1}^{\ell}\delta_{y,i}^{(t)}\Bigg\vert=\Bigg\vert\sum_{s=1}^{2\log k}\sum_{t=a_{s}2^{j_{s}}+1}^{(a_{s}+1)2^{j_{s}}}\delta_{y,i}^{(t)}\Bigg\vert\leq2\log k\cdot O(\eps_{\mathrm{apx}}/\log k)\leq\eps_{\mathrm{apx}},
	\] 
	where the third step follows by \cref{lem:dyadic-interval-cover},
	and the fourth step follows by the fact that $\ell_{i}$ is the
	last time $z_{i}$ was updated, so for any $[a\cdot2^{j}+1,(a+1)2^{j}]\subset[\ell_{i},\ell]$,
	we have $|(y^{((a+1)2^{j}+1)}-y^{(a\cdot2^{j})})_{i}|<O(\eps_{\mathrm{apx}}/\log k)$.

	Taking the union bound again, we have the data structure succeeds with probability at least $1-\delta_{\mathrm{apx}}$.

	\end{proof}

\subsection{Sketching A Sequence of Vectors}

In this section, we show how to construct an oracle used in \cref{thm:approximate-ell-infty}
that supports type-I queries for a sequence of vectors.
\begin{theorem}
\label{thm:vector-sketch}
Given a sampling tree $(\mathcal{S},\chi)$ of $\R^{n}$ with height $\eta$, and 
a JL sketching matrix $\Phi\in\R^{r \times n}$, 
the data structure $\textsc{VectorSketch}$ maintains $y_v \defeq \Phi_{\chi(v)} h$ for all nodes $v$ in the sampling tree through the following functions:
\begin{itemize}
\item $\textsc{Initialize}(\mathcal{S},\chi,\Phi\in\R^{r\times n},h)$:
Initializes the data structure in time $O(n\cdot\eta\cdot r)$, so that node $v \in \mc{S}$ maintains $y_v$.
\item $\textsc{Update}(\new h\in\R^{n})$: Maintains the data structure for $h \leftarrow \new h$ in $O(\eta\cdot r \cdot\|\new h-h\|_{0})$ time.
\item $\textsc{Query}(v\in V(\mathcal{S}))$: Outputs $\Phi_{\chi(v)}h$
in $O(r)$ time.
\end{itemize}
\end{theorem}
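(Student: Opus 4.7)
The plan is to maintain the sketches $y_v = \Phi_{\chi(v)} h$ explicitly at every node $v \in V(\mathcal{S})$, exploiting the defining property of the sampling tree that $\{\chi(c)\}_{c \text{ a child of } v}$ partitions $\chi(v)$. This gives the additive identity
\[
y_v = \sum_{c \text{ a child of } v} y_c,
\]
which in turn implies a clean incremental update rule: when $h$ changes by $\delta \defeq \new h - h$ on a single coordinate $i$, the only sketches $y_v$ that change are those for which $i \in \chi(v)$, i.e.\ the ancestors of the leaf corresponding to $i$. Each such $y_v$ changes by exactly $\delta_i \cdot \Phi e_i$.

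For \textsc{Initialize}, I would first compute $y_v = h_i \cdot \Phi e_i$ at each leaf $v$ with $\chi(v) = \{i\}$, costing $O(r)$ per leaf using the $i$-th column of $\Phi$, for a total of $O(nr)$ across all $n$ leaves. Then I would traverse the tree bottom-up, setting $y_v \leftarrow \sum_{c} y_c$ at each internal node. Summing $O(r)$ work per tree edge and using that a constant-degree tree of height $\eta$ with $n$ leaves has $O(n\eta)$ nodes (and hence $O(n\eta)$ edges), the total initialization time is $O(n \cdot \eta \cdot r)$ as required. I would also record a parent pointer at each node for use in \textsc{Update}.

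For \textsc{Update}, given $\new h$, I iterate over the $\|\new h - h\|_0$ coordinates $i$ where $\delta_i \neq 0$. For each such $i$, I walk from the leaf corresponding to $i$ up to the root of $\mathcal{S}$ (at most $\eta$ steps) and at every ancestor $v$ perform $y_v \leftarrow y_v + \delta_i \cdot \Phi e_i$ in $O(r)$ time. Correctness follows from linearity: $\Phi_{\chi(v)} \new h = y_v + \sum_{i \in \chi(v)} \delta_i \Phi e_i$, and visiting each changed coordinate $i$ via all its ancestors touches exactly the right $v$'s. The total cost is $O(\eta \cdot r \cdot \|\new h - h\|_0)$. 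For \textsc{Query}, since $y_v$ is maintained explicitly, we just return the stored $r$-dimensional vector in $O(r)$ time.

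There is no real obstacle here; the result is essentially a bookkeeping exercise, and the only thing to notice is that the tree's partition property makes the sketches compose additively, so single-coordinate updates affect only an ancestor chain of length $\eta$. The one small assumption needed is $O(r)$-time access to any column of $\Phi$, which is standard for JL sketching matrices as used throughout the paper.
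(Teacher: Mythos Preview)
Your proposal is correct and takes essentially the same approach as the paper: store $y_v$ explicitly at every node, use the partition property so that a single-coordinate change propagates only along the leaf-to-root path, and return $y_v$ on query. The only cosmetic difference is in \textsc{Initialize}: the paper computes $\Phi_{\chi(v)}h$ directly and bounds the cost layer-by-layer (each layer's $\chi$-sets partition $[n]$, so one layer costs $O(nr)$, and there are $\eta$ layers), whereas you compute leaf sketches and aggregate bottom-up via $y_v=\sum_c y_c$; both yield the same $O(n\eta r)$ bound and rest on the same partition property.
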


\begin{algorithm}
\caption{Vector Sketching Data Structure \label{alg:vector-sketch}}

\begin{algorithmic}[1]

\State \textbf{datastructure }$\textsc{VectorSketch}$

\State \textbf{private : members}

\State \hspace{4mm} $\Phi\in\R^{r\times n}$ \Comment{JL
matrix}

\State \hspace{4mm} Sampling tree $(\mathcal{S},\chi)$


\State \hspace{4mm} $h \in \R^n$ \Comment{latest input vector}


\State \hspace{4mm} \textsc{List} $\{y_v\}_{v \in V(\mathcal{S})}$ 
\Comment{sketches indexed by nodes of $\mathcal{S}$, where $y_v \defeq \Phi_{\chi(v)}h$}

\State \textbf{end members}

\Procedure{$\textsc{Initialize}$}{$\mathcal{S},\chi,\Phi\in\R^{r\times n}, h \in\R^{n}$}

\State $(\mathcal{S},\chi)\leftarrow(\mathcal{S},\chi)$

\State $\Phi\leftarrow\Phi$

\State $h \leftarrow h$

\ForAll{$v\in\mathcal{S}$}

	\State $y_v \leftarrow \Phi_{\chi(v)}h$

\EndFor

\EndProcedure

\Procedure{$\textsc{Update}$}{$\new h$}

\ForAll{$i$ such that $\new h_{i}\neq h_{i}$}

	\State Find leaf node $v$ of $\mc{S}$ such that $\chi(v)=\{i\}$

	\ForAll{node $u \in \mathcal{P}^{\mathcal{S}}(v)$} 
	\Comment{where $\mathcal{P}(v)$ is the path from $v$ to the root in $\mathcal{S}$}

		\State $y_v\leftarrow y_v-\Phi_{\{i\}}h+\Phi_{\{i\}}\new h$

	\EndFor

\EndFor
\State $h \leftarrow \new h$

\EndProcedure

\Procedure{$\textsc{Query}$}{$v\in \mathcal{S}$}

	\State \Return $y_v$

\EndProcedure

\end{algorithmic}
\end{algorithm}

\begin{proof}
\textbf{Correctness:} 
In \textsc{Initialize}, we calculate $y_v$ directly for all $v$. In \textsc{Update} when $h$ is updated to $\new h$, 
note that $y_v$ maintained at a node $v$ needs to be updated if and only if $(\new h-h)_{\chi(v)}\neq\boldsymbol{0}$.
Hence, for each index $i$ with $(\new h - h)_{i}\neq0$, we need to update at all nodes $v$ with $i\in\chi(v)$, which is exactly the path from the leaf node $u$ with $\chi(u)=\{i\}$ to the root of $\mathcal{S}$. Moreover, the update due to coordinate $i$ is precisely $\Phi_{\{i\}}(\new h - h)$.

\textbf{Runtime:} For \textsc{Initialize}, let $\layer^\mathcal{S}(i)\defeq\{v\in V(\mathcal{S})\mid\depth(v)=i\}$
be the set of nodes in the $i$-th layer of the sampling tree. 
By property (3) of the sampling tree in \cref{def:sampling-tree}, $\chi(v)\cap\chi(u)=\emptyset$ for any $u,v\in\layer^\mathcal{S}(i)$. 
Hence, we can compute $\Phi_{\chi(v)}h$ for all $v\in\layer^\mathcal{S}(i)$ in time $O(n\cdot r)$. 
Since $\mathcal{S}$ has height $\eta$, initialization takes $O(n\cdot \eta \cdot r)$ time.

For \textsc{Update}, observe that the outer for-loop runs $\|\new h - h\|_{0}$ times. 
The inner for-loop iterates at most $\eta$ times, as it  traverses up a path from a leaf node to the root in $\mathcal{S}$. 
For each node on the path, we need to compute $z_v-\Phi_{\{i\}}h+\Phi_{\{i\}}\new h$
which takes $r$ time. 
Thus, we can bound the total update time
by $O(\eta \cdot r\cdot\|\new h - h\|_{0})$. 

To find the leaf node $v$ such that $\chi(v)=\{i\}$,
we note the function $\chi$ is fixed, so it can be pre-process during initialization in $O(n)$ time.

The query time follows by the fact that $y_v$ is an $r$-dimensional vector.
\end{proof}

\subsection{Sketching the Multiscale Representation via Simple Sampling Tree} \label{subsec:simple-sampling-tree}

The previous section shows how to construct an oracle used in \cref{thm:approximate-ell-infty} that supports type-I queries for a sequence of slowing changing vectors. However, not all vector variables in our main central path maintenance data structure change slowly across consecutive central path steps. 
In particular, we also want to maintain the sketches of matrix-vector products involving $\mathcal{W}^\top$, such as $\mathcal{W}^{\top}h, \mathcal{W}^{\top}\eps_x$ and $\mathcal{W}^{\top}\eps_s$ from \cref{eq:implicit-representation}.

Consider maintaining $\ell_\infty$-approximations of the sequence of $\mathcal{W}^\top h$: Using $\textsc{VectorSketch}$ presented in \cref{thm:vector-sketch} directly yields a data structure whose update time at iteration $\ell+1$ is a function of $\norm{(\mathcal{W}^\top h)^{(\ell+1)} - (\mathcal{W}^\top h)^{(\ell)}}_0$. 
Recall that $\mathcal{W}$ and $h$ change between central path steps as a function of changes in $\ox$; unfortunately, even if $\ox$ only changes in a single coordinate, $\mathcal{W}^{\top}h$ can change densely. 
Hence, we would like to design a modified data structure whose update time is a function of $\norm{\ox^{(\ell+1)} - \ox^{(\ell)}}_0$ and $\norm{h^{(\ell+1)}-h^{(\ell)}}_0$ instead.
In this section and the next, we present sketching data structures that serve as the oracle needed in \cref{thm:approximate-ell-infty} for type-I queries, specifically for the case when the online sequence of vectors is of the form $\{(\mathcal{W}^\top h)^{(\ell)}\}_{\ell=0}^k$, for dynamic $\mathcal{W}$ and $h$.

To this end, we have to crucially utilize the structure of the  lower Cholesky factor $L$ and the elimination tree $\mathcal{T}$.  
In this section, we present a simple construction of a sampling
tree which preserves the structural property of the elimination tree $\mathcal{T}$,
and an intuitive implementation of the sketching maintenance data structure with $\wt O(\poly(\tau))$ amortized runtime per update.
In \cref{subsec:balanced-sampling-tree}, we show a more involved data structure to lower the runtime, using many of the same ideas.

This section mainly serves to illustrate our approach to maintaining the sketches, so \emph{we assume each $n_i=1$ and $m=n$} in the block structure of $A$ for simplicity of presentation; the assumption is removed in \cref{subsec:balanced-sampling-tree}.

\begin{theorem}
Given the constraint matrix $A$, its binary elimination tree $\mathcal{T}$ with height $\tau$, a JL matrix $\Phi\in\R^{r \times n}$,
and a sampling tree $(\mathcal{S},\chi)$ with height $\eta \leq O(\tau+\log(n))$ constructed as in \cref{subsec:simple-sampling-tree-construction},
the data structure $\textsc{SimpleSketch}$ (\cref{alg:simple-W-sketch,alg:simple-W-sketch-1}) 
maintains the sketch $\Phi_{\chi(v)}\mathcal{W}^{\top}h = \Phi_{\chi(v)}H_{\ox}^{-1/2}A^\top L_{\ox}^{-\top}$ at every node $v \in \mc{S}$ through following operations:
\begin{itemize}
\item $\textsc{Initialize}(S,\chi,\Phi,\ox,h)$: Initializes the data structure in $O(n\cdot \tau\cdot \eta\cdot r)$ time, so that node $v \in \mathcal{S}$ maintains the sketch $\Phi_{\chi(v)} \mathcal{W}^\top h = \Phi_{\chi(v)}H_{\ox}^{-1/2}A^\top L_{\ox}^{-\top}$.
\item $\textsc{Update}(\new \ox, \new h)$: Updates all sketches in $\mc{S}$ to reflect $\mathcal{W}$ updating to $\new {\mathcal{W}}$ and $h$ to $\new h$, where $\new{\mathcal{W}}$ is given implicitly by $\new \ox$.
This function runs in $O(\|\new \ox-\ox\|_{0}\cdot\tau^{2}\cdot \eta \cdot r) + O(\|\new h-h\|_{0}\cdot\tau \cdot r)$ time.
\item $\textsc{Query}(v)$: Outputs $\Phi_{\chi(v)}\mathcal{W}^{\top}h$ in $O(\tau^{2}\cdot r)$ time.
\end{itemize}
\end{theorem}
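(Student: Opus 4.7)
The starting identity is
\[
\Phi_{\chi(v)}\mathcal{W}^{\top}h \;=\; \Phi_{\chi(v)}\,H_{\ox}^{-1/2}\,A^{\top}\,L_{\ox}^{-\top}\,h,
\]
so the task reduces to maintaining $y \defeq \mathcal{W}^\top h \in \R^n$ under sparse changes to $\ox$ and $h$, while supporting fast extraction of $\Phi_{\chi(v)}y$ at any node $v \in \mathcal{S}$. The plan is to \emph{never} materialize $y$ explicitly between steps (a single coordinate change in $h$ or $\ox$ can alter $y$ densely), but to keep it implicitly via $L_{\ox}$, $H_{\ox}$, and $h$ themselves, together with per-node auxiliary quantities that let every update and query be localized along a path in $\mathcal{S}$ of length $O(\eta)$.

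For \textsc{Initialize}, I will (i) form $g = L_{\ox}^{-\top}h \in \R^d$ by the sparse triangular-solve of \cref{lem:Linv-sparsity-time} in $O(n\tau)$ time, (ii) form $y = H_{\ox}^{-1/2}A^\top g \in \R^n$ in $O(n\tau)$ time using the column-sparsity of $A$ (\cref{lem:A-sparsity-pattern}), and (iii) propagate $y$ through the sampling tree $\mathcal{S}$, computing $\Phi_{\chi(v)}y$ at each node $v$ by aggregating JL-contributions level by level, at cost $O(nr)$ per level times $\eta$ levels (exactly as in \textsc{VectorSketch}). The remaining factor in $O(n\tau\eta r)$ comes from populating the per-node auxiliary structures, indexed by subtrees of the elimination tree $\mathcal{T}$, that later let us push sparse corrections without touching the full subtree of $\mathcal{S}$. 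For \textsc{Query}$(v)$, I evaluate the sketch on demand: the $O(\tau)$ relevant coordinates of $y$ involving the columns of $A$ that interact with $\chi(v)$ are each obtained in $O(\tau^{2})$ time using \cref{lem:Linv-path} (since each $A_j$ lies on a single root-to-leaf path of $\mathcal{T}$), after which $\Phi$ is applied for a total of $O(\tau^{2} r)$.

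The heart of the proof — and the main obstacle — is \textsc{Update}. For each index $i$ where $h_i$ changes by $\delta$, the contribution to $y$ is $\delta\cdot \mathcal{W}^\top e_i$, whose $n$-dimensional support is in general dense; however, by the simple sampling-tree construction of \cref{subsec:simple-sampling-tree-construction}, $\chi$ is aligned with $\mathcal{T}$'s subtree structure, so this dense contribution aggregates to nonzero corrections of the maintained sketches only on the $O(\eta)$ ancestors of one specific node in $\mathcal{S}$ — and by the structure of $\mathcal{W}$, only $O(\tau)$ of these carry new information per $h$-index, so the per-$h$-coordinate cost is $O(\tau\cdot r)$. For an $\ox$ change at index $i$, \cref{lem:cholesky-update-sparsity-pattern,lem:cholesky-update-time} give that $H_{\ox}$ changes in one entry and $L_{\ox}$ changes along a single root-path of $\mathcal{T}$ of length $\tau$, and I will treat this as $O(\tau)$ rank-$1$ corrections to $\mathcal{W}$. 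Each correction is pushed through $O(\eta)$ affected sketches along a path of $\mathcal{S}$ at cost $O(\tau\cdot r)$, for a total of $O(\tau^{2}\eta r)$ per $\ox$-coordinate. Correctness then follows by linearity from the identity above together with the invariant $L_{\ox}L_{\ox}^{\top}=AH_{\ox}^{-1}A^{\top}$, maintained after every call.

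The difficult step will be the accounting for $\ox$-updates: each coordinate change cascades simultaneously through $H_{\ox}^{-1/2}$, through the $O(\tau)$ affected columns of $L_{\ox}$, and implicitly through $\mathcal{W}$, and the sampling-tree localization must be invoked separately for each to avoid an extra factor of $n$ entering the update time. Verifying the promised per-node coordinate-update budgets — namely that no single $\ox_i$ change forces work outside an $O(\tau)$-path in $\mathcal{S}$ — is where the coupling between the elimination tree $\mathcal{T}$ and the simple sampling tree $\mathcal{S}$ must be used carefully.
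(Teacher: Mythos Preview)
Your plan has a genuine gap in the \textsc{Update} analysis. You claim that when $h_i$ changes, the resulting correction to the stored sketches is confined to the $O(\eta)$ ancestors of a single node in $\mathcal{S}$. This is false if what you store at each node is the full sketch $\Phi_{\chi(v)}\mathcal{W}^\top h$. A short computation shows that the support of $\mathcal{W}^\top e_i$ in $[n]$ is exactly $\chi(i)$ (viewing $i$ as a node of $\mathcal{S}\cap\mathcal{T}$): indeed $(\mathcal{W}^\top e_i)_j\neq 0$ requires $\mathsf{low}^{\mathcal{T}}(A_j)\in\mathcal{D}^{\mathcal{T}}(i)$, which by the simple sampling-tree construction is precisely the condition $j\in\chi(i)$. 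Hence $\Phi_{\chi(v)}\mathcal{W}^\top e_i\neq 0$ whenever $\chi(v)\cap\chi(i)\neq\emptyset$, i.e.\ whenever $v$ is an ancestor \emph{or a descendant} of $i$ in $\mathcal{S}$. The descendant subtree can have $\Theta(n)$ nodes, so your per-coordinate $O(\tau r)$ budget is blown. The same issue arises for $\ox$-updates: a change in column $j$ of $L_{\ox}$ perturbs $Z_v^\ast=J_vL^{-\top}$ at every $v$ in the subtree below $j$, not only along the path above it.

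The paper's fix is exactly the missing idea: it does \emph{not} store $\Phi_{\chi(v)}\mathcal{W}^\top h$. Instead it splits $y_v=Z_v^\ast\, I_{\mathcal{P}^{\mathcal{T}}(v)}\,h + y_v^{\triangledown}$ with $y_v^{\triangledown}\defeq Z_v(I-I_{\mathcal{P}^{\mathcal{T}}(v)})h$, stores only $y_v^{\triangledown}$, and computes the first term on demand at \textsc{Query} time (it involves only $|\mathcal{P}^{\mathcal{T}}(v)|\le\tau$ coordinates of $h$). The point is that the nonzero columns of $Z_v(I-I_{\mathcal{P}^{\mathcal{T}}(v)})$ lie in $\mathcal{D}^{\mathcal{T}}(v)$, so an $h_i$-change touches $y_v^{\triangledown}$ only for $v\in\mathcal{P}^{\mathcal{T}}(i)$, which \emph{is} a single path of length $\le\tau$. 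For $L$-updates the paper uses a timestamp $t_v$ per node and the invariant $(L[\ell]-L[t_v])_{\mathcal{D}^{\mathcal{T}}(v)\setminus\{v\}}=\boldsymbol{0}$, so that the pending correction at any descendant is concentrated on the $O(\tau)$ columns $\mathcal{P}^{\mathcal{T}}(v)$ and can be absorbed lazily at \textsc{Query} via $Z_v^\ast=Z_v-(L[\ell]^{-1}(L[\ell]-L[t_v])_{\mathcal{P}^{\mathcal{T}}(v)}Z_v^\top)^\top$. Your \textsc{Query} sketch (``the $O(\tau)$ relevant coordinates of $y$'') also does not type-check, since $\chi(v)$ is not $O(\tau)$-sized in general; the $O(\tau^2 r)$ query time comes from this lazy catch-up on $Z_v$ together with the $O(\tau)$-term inner product $Z_v I_{\mathcal{P}^{\mathcal{T}}(v)}h$, not from reading a few entries of $y$.
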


In \cref{subsec:simple-sampling-tree-construction}, we give
the construction of the sampling tree. In \cref{subsec:simple-ds},
we give the analysis of each individual function.

\begin{table}[H]
	\begin{centering}
		\begin{tabular}{|c|c|}
			\hline 
			\textbf{Symbol} & \textbf{Definition} \tabularnewline
			\hline 
			$\text{\ensuremath{\mathcal{T}}}$ & elimination tree with vertex set $\{1,\dots,d\}$ \tabularnewline
			\hline 
			$(\mathcal{S},\chi)$ & sampling tree. By convention, we call vertices of $\mc{S}$ \emph{nodes} \tabularnewline
			\hline 
			$\mathcal{D}(v)$ & set of nodes in the subtree rooted at $v$ (inclusively)\tabularnewline
			\hline 
			$\mathcal{P}(v)$ & set of nodes on the path from $v$ to the root (inclusively)\tabularnewline
			\hline 
			$\depth(v)$ & depth of node $v$ in tree ($\depth(\textrm{root})=1)$\tabularnewline
			\hline 
			$\low(a)$ & the lowest node in tree in the nonzero pattern of a vector $a$\tabularnewline
			\hline 
			$\block_low(A)$ & the lowest node in tree in the nonzero column pattern of $A$\tabularnewline
			\hline 
			$\lca(u,v)$ & the lowest common ancestor of $u$ and $v$ in tree \tabularnewline
			\hline 
			$A_{S}$ & matrix $A$ restricted to coordinates/blocks indexed by nodes in set $S$\tabularnewline
			\hline 
			$f^{\mathcal{S}}(v)$ or $f^{\mathcal{T}}(v)$ & function $f(v)$ for the specified tree $\mathcal{S}$ or $\mathcal{T}$ \tabularnewline
			\hline 
		\end{tabular}
		\par\end{centering}
	\caption{Notations in this section}
\end{table}

\subsubsection{Simple Sampling Tree Construction} \label{subsec:simple-sampling-tree-construction}

We begin with the construction of a \emph{simple sampling tree} $(\ts,\chi)$ which has $n$ leaf nodes, based on the elimination tree ${\mathcal{T}}$. 
Recall $\mathcal{T}$ has $d$ vertices given by the set $\{1,2,\ldots,d\}$, where vertex $i$ correspond to row $i$ of $A$.

First, we define the function
$\mathsf{low}^{{\mathcal{T}}}(a):\R^{d}\to[d]$ by 
\[
\mathsf{low}^{{\mathcal{T}}}(a)=\arg\max_{i\in\{i\mid a_{i}\neq0\}}\mathsf{depth}^{{\mathcal{T}}}(i),
\]
which gives the node $i$ at the lowest level in $\mathcal{T}$ such that $a_i \neq 0$. 
Note that $\mathsf{low}^{{\mathcal{T}}}(A_{j})$ is well-defined, since the non-zero pattern of $A_{j}$ is a subset of a path in ${\mathcal{T}}$ by \cref{lem:A-sparsity-pattern}. 

For each vertex $i \in \mathcal{T}$,
we construct a balanced binary tree on all new nodes, rooted at node $i'$ and with leaf nodes given by the set $F_i = \{c_j \mid \mathsf{low}^{{\mathcal{T}}}(A_{j})=i\}$. Observe if $i$ is a leaf node of $\mathcal{T}$, then $F_i$ is non-empty.
We construct a new tree $\ts$ by beginning with $\ts \leftarrow \mathcal{T}$, and then for each $i \in \mathcal{T}$, attaching the new subtree rooted at $i'$ under $i$ in $\ts$. After this, the set of leaf nodes in $\mathcal{S}$ is given by $\bigcup_{i \in \mathcal{T}} F_i = \{c_j \mid j \in [n]\}$, with every leaf corresponding to a distinct column of $A$.

For each $v \in \ts$, we recursively define its labelling $\chi(v)$ by:
\[\chi(v) =
\begin{cases}
	i &\text{if $v = c_i$ is a leaf node} \\
	\bigcup_{u \text{ a child of $v$ in $\mathcal{S}$}} \chi(u) \qquad  &\text{else}
\end{cases}
\]
In particular, for $i \in \mc{S} \cap \mc{T} = \{1,\dots,d\}$, $\chi(i)$ satisfies:
\begin{equation} \label{eq:sampling-tree-chi}
\chi(i)=\{j\mid\mathsf{low}^{{\mathcal{T}}}(A_{j})=i\}\cup\bigcup_{\text{$j$ a child of \ensuremath{i} in \ensuremath{\mathcal{T}}}}\chi(j).
\end{equation}

Let this newly constructed $(\mathcal{S},\chi)$ be the \emph{simple sampling tree}. Since $\mathcal{T}$ is a binary tree, $\mathcal{S}$ is a degree-3 tree.
An example is shown in \cref{fig:simple-sampling-tree-example}.

\begin{definition} [$\mathcal{T}(v)$]
	Recall we have $V(\mc{T}) = \{1,\dots,d\} \subset V(\mc{S})$. For a node $v \in \mc{S}$, we define its \emph{$\mc{T}$-ancestor} $\mc{T}(v)$ to be the lowest ancestor of $v$ in $\mc{S}$ that is also in $\mc{T}$. In particular, if $v \in \{1,\dots, d\}$, then $\mathcal{T}(v) = v$. 
	
	For example, in \cref{fig:simple-sampling-tree-example}, $\mc{T}(5) = 5$ and $\mc{T}(c_3) = 4$, where $c_3$ is the bottom left node in $\mc{S}$.
\end{definition}

\begin{theorem}\label{thm:simple-sampling-tree-construct-time}
	Given an elimination tree $\mathcal{T}$ with height $\tau$, the simple sampling tree with height $\tau + O(\log n) = O(\tau)$ can be constructed in $O(n\tau+n\log n)$ time. 
\end{theorem}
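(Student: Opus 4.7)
The plan is to verify the height bound directly from the construction, and then to analyze each step of the construction individually to obtain the claimed running time.

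First, I would bound the height. The tree $\mathcal{S}$ is obtained by starting with $\mathcal{T}$ (of height $\tau$) and, for each vertex $i \in \mathcal{T}$, attaching a balanced binary tree rooted at $i'$ whose leaves are the set $F_i = \{c_j : \mathsf{low}^{\mathcal{T}}(A_j) = i\}$ below $i$. Since $|F_i| \leq n$, the balanced binary subtree has depth $O(\log n)$. The sets $F_i$ partition $\{c_j : j \in [n]\}$ across $i \in \mathcal{T}$, so any root-to-leaf path in $\mathcal{S}$ has length at most $\mathrm{height}(\mathcal{T}) + O(\log n) = \tau + O(\log n) = O(\tau)$.

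Second, I would analyze the construction time by walking through each phase:
\begin{enumerate}
\item Compute $\mathsf{depth}^{\mathcal{T}}(i)$ for every $i \in V(\mathcal{T})$ via a DFS on $\mathcal{T}$ in $O(d) = O(n)$ time.
\item For each column block $A_j$, compute $\mathsf{low}^{\mathcal{T}}(A_j)$. By~\cref{lem:A-sparsity-pattern}, $A_j$ has at most $\tau$ nonzero rows, and they lie on a single root-to-leaf path of $\mathcal{T}$, so scanning and taking the argmax of depth costs $O(\tau)$ per column; summed over $j \in [n]$ this is $O(n\tau)$.
\item Bucket the columns by their $\mathsf{low}$-value to obtain the sets $\{F_i\}_{i \in \mathcal{T}}$ in $O(n)$ time using a single pass.
\item For each $i$, build a balanced binary tree on $F_i$; this takes $O(|F_i| \log |F_i|)$ time, summing to $O(n\log n)$ across all $i$. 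Attach the root of each such subtree below its corresponding $i \in \mathcal{T}$ in $O(d) = O(n)$ time.
\item The labelling $\chi$ need not be stored explicitly: the recursive definition~\cref{eq:sampling-tree-chi} makes $\chi(v)$ equal to the set of leaf indices in the subtree of $\mathcal{S}$ rooted at $v$, which is determined implicitly by the tree structure itself.
\end{enumerate}
Summing these contributions gives the claimed $O(n\tau + n\log n)$ total construction time.

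The steps are all essentially bookkeeping, so I do not expect any real obstacle; the only subtle points are (i) ensuring that $\mathsf{low}^{\mathcal{T}}(A_j)$ is well-defined, which follows from the fact (\cref{lem:A-sparsity-pattern}) that the nonzero pattern of each block column of $A$ lies on a single path of $\mathcal{T}$, and (ii) recognizing that $\chi$ does not need to be materialized, so that the $\sum_v |\chi(v)|$ terms never enter the running time.
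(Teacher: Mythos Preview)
Your proposal is correct and follows essentially the same approach as the paper: bound the height by observing each attached balanced binary tree has depth $O(\log n)$, and account for $O(n\tau)$ time to compute all $\mathsf{low}^{\mathcal{T}}(A_j)$ values plus $O(n\log n)$ to build the binary subtrees. Your version is actually slightly more careful than the paper's, which simply asserts ``Finding the sets $\chi(v)$ at every $v$ takes $O(n\log n)$ total time'' without justification; your observation that $\chi$ is determined implicitly by the tree structure (and hence need not be materialized) is a cleaner way to handle this point.
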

\begin{proof}
	Since the newly added balanced binary tree under each $v\in \mathcal{T}$ has height at most $O(\log n)$, 
	the height of the sampling tree is bounded by $\tau+O(\log n)=O(\tau)$.
	
	For each column $A_i$, we can find $\low(A_i)$ in time $O(\tau)$ since $\nnz(A_j)=O(\tau)$ by \cref{lem:A-sparsity-pattern}. 
	Hence, we can find $F_i$ for every $i \in \mathcal{T}$ in $O(n\tau)$ time in total. 
	Constructing the balanced binary tree rooted at every $i'$ takes at most $O(n\log n)$ total time. Finding the sets $\chi(v)$ at every $v \in \ts$ takes $O(n \log n)$ total time.
	Hence, we can construct $(\mathcal{S},\chi)$ in $O(n\tau + n\log n)$ time.
\end{proof}

\begin{figure}
\begin{center}
	\includegraphics[scale=0.35]{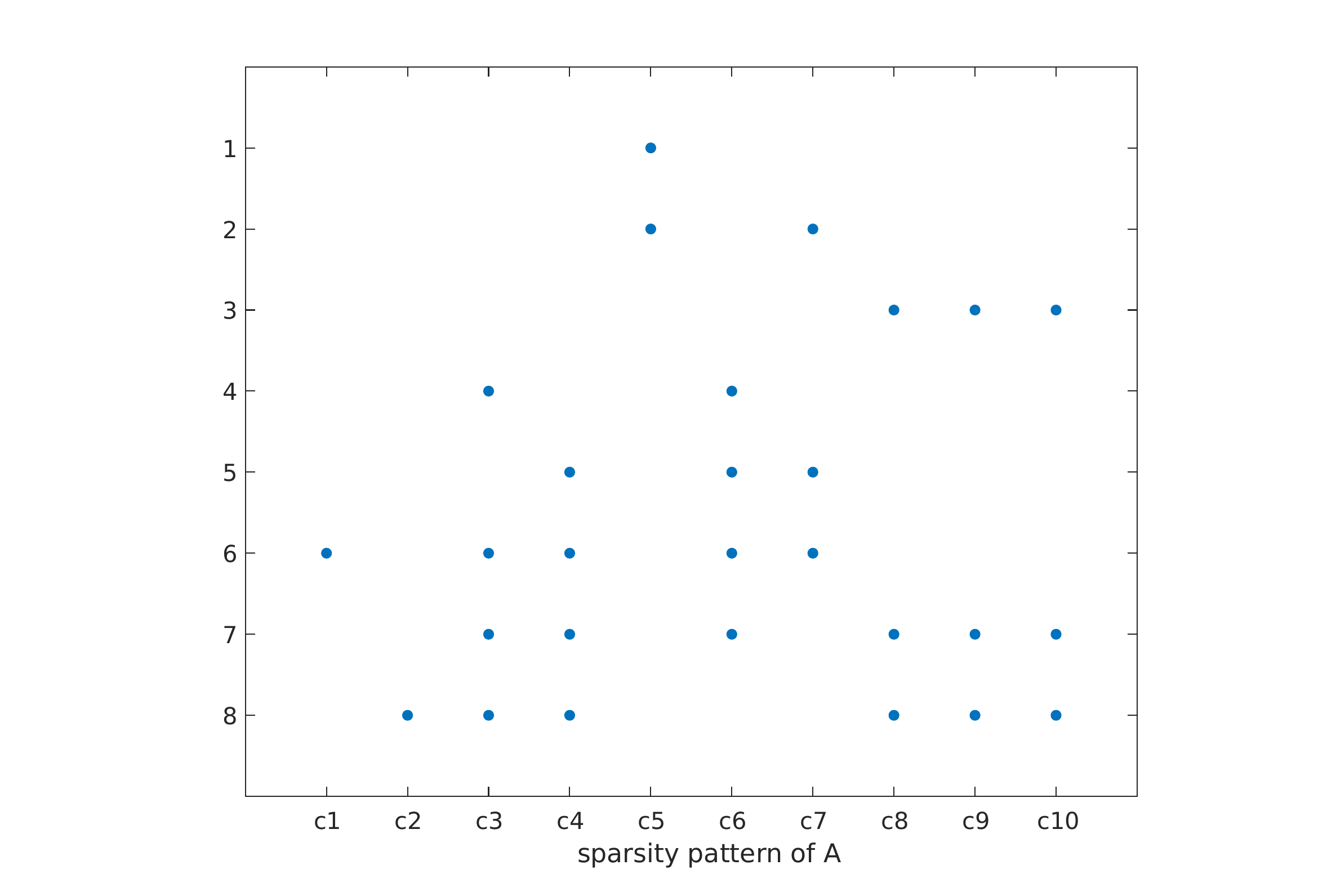}
	\includegraphics[width=16cm]{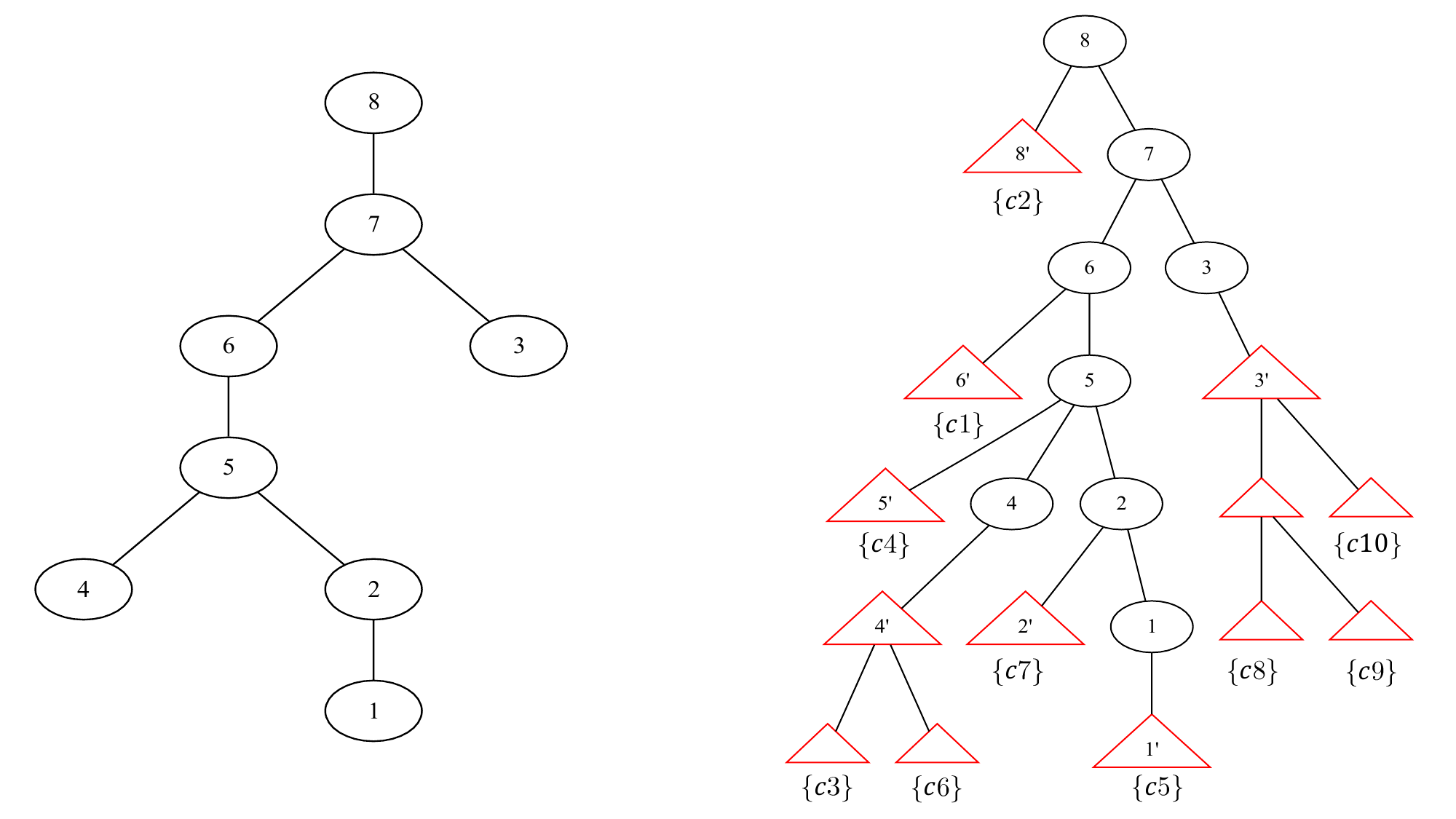}
\end{center}
\caption{Example of simple sampling tree: The tree on the left is the elimination tree $\mathcal{T}$ for constraint matrix $A$, whose sparsity pattern is shown on the right ($n_i=1$ for all $i$). 
The tree on the right is the simple sampling tree, where
red triangles are newly added nodes, and the bracket under each leaf node denotes the column that the node will maintain in the data structure.}\label{fig:simple-sampling-tree-example}
\end{figure}

\subsubsection{Data Structure for Sketching}\label{subsec:simple-ds}

Now, we discuss how to maintain the sketches $\Phi_{\chi(v)} \mathcal{W}^\top h$ at every node $v \in \mathcal{S}$.
Recall the non-zero pattern of the Cholesky factor $L$ is reflected in the elimination tree. Specifically, the non-zero pattern of column $L_i$ is a subset of the path from $i$ to the root of $\mc{T}$. 
Since we have constructed $\mc S$ to preserve the ancestor-descendant relationships from $\mc T$, 
we will be able to update the sketches in $\mc S$ in a more clever way.

To better utilizing the structural relationship between the lower Cholesky factor and the sampling tree, for any $v \in \mc{S}$, we rewrite the sketches $\Phi_{\chi(v)}\mathcal{W}^{\top}h = \Phi_{\chi(v)}H^{-1/2}A^{\top}L^{-\top}h$ using the following notation:
	
\begin{definition}[{$J_v, Z_v^*, y_v$}]
	For each $v \in \mc{S}$, let
	\[
	J_v \defeq \Phi_{\chi(v)} H^{-1/2}A^{\top}, \quad
	Z^*_v \defeq J_v \cdot L^{-\top}, \quad
	y_v \defeq Z^*_v \cdot h = \Phi_{\chi(v)} \mc{W}^{\top} h.
	\]
	At every node $v$, we will maintain $J_v$, and some variant of $Z_v^*$ and $y_v$ discussed later.
\end{definition}

Let us first examine the sparsity pattern of $J_v$ and $Z^*_v$:
\begin{lemma}[{Sparsity pattern of $J_v$}] \label{lem:J-sparsity}
	Let $v\in\mathcal{S}$, and suppose $J_v$ satisfies (i) of \cref{inv:W-simple-invariant}. 
	Let $S$ be the non-zero column pattern of $J_v$, i.e.~$S=\{j\in[d]\mid (J_v)_{j}\neq\boldsymbol{0}\}$.
	If $v \in \mc{S} \setminus \mc{T}$, then $S \subseteq \mc{P}^{\mc{T}}(\mc{T}(v))$. On the other hand, if $v \in \mc{T}$, then $S\subseteq\mathcal{D}^{\mathcal{T}}(\mathcal{T}(v))\cup\mathcal{P}^{\mathcal{T}}(\mathcal{T}(v))$.
\end{lemma}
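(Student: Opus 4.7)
The plan is to unpack each of the matrix products defining $J_v=\Phi_{\chi(v)}H^{-1/2}A^{\top}$ and track which row index $j\in[d]$ can possibly give a non-zero column $(J_v)_j$. Since $H^{-1/2}$ is (block) diagonal and $\Phi_{\chi(v)}$ just restricts to the columns of $\Phi$ indexed by $\chi(v)\subseteq[n]$, a column $(J_v)_j=\Phi_{\chi(v)}H^{-1/2}(A^{\top}e_j)$ is non-zero only if the $j$-th row of $A$ has a non-zero entry in some column $k\in\chi(v)$; equivalently, $S=\{j:(J_v)_j\neq\boldsymbol{0}\}\subseteq\bigcup_{k\in\chi(v)}\mathcal{A}_k$, where $\mathcal{A}_k$ denotes the row support of column $A_k$. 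So the lemma will follow from a good description of $\bigcup_{k\in\chi(v)}\mathcal{A}_k$ in terms of the elimination tree.

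Next, I will read off what $\chi(v)$ looks like from the simple-sampling-tree construction (\cref{subsec:simple-sampling-tree-construction}). For $v\in\mathcal{S}\setminus\mathcal{T}$, the node $v$ lies inside the balanced binary tree glued under $\mathcal{T}(v)$, so $\chi(v)$ is a subset of the leaves of that little binary tree, which by construction equals $\{k:\mathrm{low}^{\mathcal{T}}(A_k)=\mathcal{T}(v)\}$. For $v\in\mathcal{T}$ we have $\mathcal{T}(v)=v$, and unrolling the recursive definition in \cref{eq:sampling-tree-chi} gives $\chi(v)=\{k:\mathrm{low}^{\mathcal{T}}(A_k)\in\mathcal{D}^{\mathcal{T}}(v)\}$.

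Now I combine this with \cref{lem:A-sparsity-pattern}, which states that every $\mathcal{A}_k$ is contained in a leaf-to-root path of $\mathcal{T}$; together with the definition of $\mathrm{low}^{\mathcal{T}}$, this means $\mathcal{A}_k\subseteq\mathcal{P}^{\mathcal{T}}(\mathrm{low}^{\mathcal{T}}(A_k))$. In the first case, every $k\in\chi(v)$ has $\mathrm{low}^{\mathcal{T}}(A_k)=\mathcal{T}(v)$, so $\mathcal{A}_k\subseteq\mathcal{P}^{\mathcal{T}}(\mathcal{T}(v))$ and therefore $S\subseteq\mathcal{P}^{\mathcal{T}}(\mathcal{T}(v))$, as claimed. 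In the second case, each $k\in\chi(v)$ has $i':=\mathrm{low}^{\mathcal{T}}(A_k)\in\mathcal{D}^{\mathcal{T}}(v)$, so the path $\mathcal{P}^{\mathcal{T}}(i')$ first climbs from $i'$ up to $v$ (those vertices all lie in $\mathcal{D}^{\mathcal{T}}(v)$, since $v$ is their common ancestor) and then continues along $\mathcal{P}^{\mathcal{T}}(v)$; hence $\mathcal{A}_k\subseteq\mathcal{D}^{\mathcal{T}}(\mathcal{T}(v))\cup\mathcal{P}^{\mathcal{T}}(\mathcal{T}(v))$, giving the second containment.

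I do not expect any genuinely hard step here; it is bookkeeping with the definitions. The only care-requiring point is the split between the two cases, specifically verifying that an index $k\in\chi(v)$ produced inside the added binary subtree cannot have its $\mathrm{low}^{\mathcal{T}}(A_k)$ located anywhere other than $\mathcal{T}(v)$ itself, which is exactly what the choice of leaves $F_i=\{c_k:\mathrm{low}^{\mathcal{T}}(A_k)=i\}$ in the construction guarantees. Everything else is a one-line consequence of \cref{lem:A-sparsity-pattern} and the definition of $\mathcal{P}^{\mathcal{T}}$ and $\mathcal{D}^{\mathcal{T}}$.
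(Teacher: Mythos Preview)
Your proposal is correct and follows essentially the same approach as the paper's proof: both reduce the non-zero column pattern of $J_v$ to $\bigcup_{k\in\chi(v)}\mathcal{A}_k$, read off the structure of $\chi(v)$ from the sampling-tree construction in each of the two cases, and then invoke \cref{lem:A-sparsity-pattern} together with the definition of $\mathrm{low}^{\mathcal{T}}$ to place each $\mathcal{A}_k$ inside $\mathcal{P}^{\mathcal{T}}(\mathcal{T}(v))$ or $\mathcal{D}^{\mathcal{T}}(\mathcal{T}(v))\cup\mathcal{P}^{\mathcal{T}}(\mathcal{T}(v))$ as appropriate. Your write-up is in fact slightly more explicit than the paper's about why $\mathcal{P}^{\mathcal{T}}(i')\subseteq\mathcal{D}^{\mathcal{T}}(v)\cup\mathcal{P}^{\mathcal{T}}(v)$ when $i'\in\mathcal{D}^{\mathcal{T}}(v)$.
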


\begin{proof}
	First, note that for any $i \in [n]$, the non-zero column pattern of $\Phi_{\{i\}}H^{-1/2}A^{\top}$ is the non-zero pattern of column $A_i$. More generally, the non-zero column pattern of $J_v = \Phi_{\chi(v)}H^{-1/2}A^\top$ is given by the union of the non-zero pattern of columns $A_j$ such that $j \in \chi(v)$ for any $v \in \mc{S}$.
	
	In the case that $v \in \mc{S} \setminus \mc{T}$, let $i = \mc{T}(v)$ be the $\mc{T}$-ancestor of $v$. By construction of $(\mc{S},\chi)$, we have $\chi(v) \subseteq \{j \in [n] \mid \mathsf{low}^{\mathcal{T}}(A_{j}) = i\}$. Hence, the sparsity pattern of any column $A_j$ with $j \in \chi(v)$ is a subset of $\mc{P}^{\mc{T}}(i)$ by \cref{lem:A-sparsity-pattern}. 
	Since $v$ is a descendant of $i$ in $\mc{S}$, we have $\chi(v) \subset \chi(i)$ by property of $\chi$. Therefore, $S \subseteq \mc{P}^{\mc{T}}(i)$, as required.
	
	In the case that $v =i \in \mc{T}$, observe that as a consequence of \cref{eq:sampling-tree-chi}, we have 
	\[
	\chi(i) = \{j \in [n] \mid \mathsf{low}^{\mathcal{T}}(A_{j})= k, \text{ where } k \in \mc{D}^{\mc{T}}(i)\}.
	\]
	By \cref{lem:A-sparsity-pattern}, for any $j \in \chi(i)$, the sparsity pattern of $A_j$ is a subset of a path in $\mc{T}$ containing $i$, that is, it is contained in $\mc{D}^{\mc{T}}(i) \cup \mc{P}^{\mc{T}}(i)$.
\end{proof}

\begin{lemma}[{Sparsity pattern of $Z^*_v$}]
	\label{lem:Z-sparsity}
	Let $v \in \mathcal{T}$, and let $S$ be the non-zero pattern of the columns of $Z_v$, i.e.~$S=\{i\in[d]\mid (Z_v)_{i}\neq\boldsymbol{0}\}$.
	Then, $S\subseteq\mathcal{D}^{\mathcal{T}}(v)\cup\mathcal{P}^{\mathcal{T}}(v)$.
\end{lemma}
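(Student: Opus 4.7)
My plan is to unwind the definition $Z_v^* = J_v L^{-\top}$ column by column, combining the column sparsity information for $J_v$ (from the just-proved \cref{lem:J-sparsity}) with the column sparsity of $L^{-\top}$ (which comes from transposing \cref{lem:Linv-sparsity-time}).

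First I would record the key sparsity fact about $L^{-\top}$: since $L^{-1} e_j$ is supported on $\mathcal{P}^{\mathcal{T}}(j)$ by \cref{lem:Linv-sparsity-time}, the $(i,j)$-entry $(L^{-1})_{ij}$ is nonzero only when $i$ is an ancestor of $j$, so transposing we get $(L^{-\top})_{ij} = (L^{-1})_{ji}$ is nonzero only when $i \in \mathcal{D}^{\mathcal{T}}(j)$. Equivalently, the $i$-th column $L^{-\top} e_i$ is supported on $\mathcal{D}^{\mathcal{T}}(i)$.

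Next, write the $i$-th column of $Z_v^*$ as $Z_v^* e_i = J_v (L^{-\top} e_i)$, a linear combination of columns of $J_v$ indexed by the support of $L^{-\top} e_i$, which lies in $\mathcal{D}^{\mathcal{T}}(i)$. For this combination to be nonzero, at least one index $j \in \mathcal{D}^{\mathcal{T}}(i)$ must lie in the column-support of $J_v$. By \cref{lem:J-sparsity} applied to $v \in \mathcal{T}$, this support is contained in $\mathcal{D}^{\mathcal{T}}(v) \cup \mathcal{P}^{\mathcal{T}}(v)$. Hence a necessary condition for $i \in S$ is
\[
\mathcal{D}^{\mathcal{T}}(i) \cap \bigl(\mathcal{D}^{\mathcal{T}}(v) \cup \mathcal{P}^{\mathcal{T}}(v)\bigr) \neq \emptyset.
\]

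Then I would finish with a short case analysis using the fact that in a rooted tree two vertices share a common descendant if and only if they are comparable. If $\mathcal{D}^{\mathcal{T}}(i) \cap \mathcal{D}^{\mathcal{T}}(v) \neq \emptyset$, the common descendant witnesses that $i$ and $v$ are comparable, so $i \in \mathcal{P}^{\mathcal{T}}(v) \cup \mathcal{D}^{\mathcal{T}}(v)$. If instead some $j \in \mathcal{D}^{\mathcal{T}}(i) \cap \mathcal{P}^{\mathcal{T}}(v)$, then $j$ is simultaneously a descendant of $i$ and an ancestor of $v$, forcing $i$ to be an ancestor of $v$, i.e. $i \in \mathcal{P}^{\mathcal{T}}(v)$. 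Either way $i \in \mathcal{D}^{\mathcal{T}}(v) \cup \mathcal{P}^{\mathcal{T}}(v)$, completing the proof. No real obstacle here beyond being careful about the ancestor/descendant conventions (and whether the sets are taken inclusively, which they are by the notation established earlier).
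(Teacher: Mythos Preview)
Your proof is correct and uses exactly the two ingredients the paper invokes (\cref{lem:J-sparsity} and \cref{lem:Linv-sparsity-time}); the paper's own proof is the single line ``This directly follows by \cref{lem:J-sparsity,lem:Linv-sparsity-time}.'' A marginally shorter way to package the same argument is to work with $(Z_v^*)^\top = L^{-1} J_v^\top$ and note that the set $\mathcal{D}^{\mathcal{T}}(v)\cup\mathcal{P}^{\mathcal{T}}(v)$ is closed under passing to ancestors, so applying \cref{lem:Linv-sparsity-time} directly gives support in $\bigcup_{j\in\mathcal{D}^{\mathcal{T}}(v)\cup\mathcal{P}^{\mathcal{T}}(v)}\mathcal{P}^{\mathcal{T}}(j)=\mathcal{D}^{\mathcal{T}}(v)\cup\mathcal{P}^{\mathcal{T}}(v)$; but your column-by-column case analysis is equally valid.
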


\begin{proof}
	This directly follows by \cref{lem:J-sparsity,lem:Linv-sparsity-time}.
\end{proof}


At this point, we have all the tools to answer queries for the sketch at a node $v \in \mathcal{S} \setminus \mathcal{T}$: Given $J_v$ at $v \in \mathcal{S} \setminus \mc{T}$, the non-zero columns of $J_v$ is a subset of $\mathcal{P}^{\mathcal{T}}(\mathcal{T}(v))$ by \cref{lem:J-sparsity}. Hence, we can compute the sketch $y^*_v=J_v \cdot L^{-\top}h$ in $\poly(\tau)$ time during a query. The sketch at a node $v \in\mathcal{T}$ needs to be maintained more carefully.

Updates to $\mc{W}$ via $\ox$ causes a corresponding update to the Cholesky factor $L$.
We will show later that if column $j$ of $L$ changes, then the sketches that change are at nodes of $\mc{S}$ in the subtree rooted at $j$, and the path from $j$ to the root;
we delay the updates of $L$ at nodes on the path $\mathcal{P}^{\mathcal{T}}(j)$. 

\begin{definition}[{$L[t], t_v, Z_v$}]
	Let $\{L[t]\}_{t \geq 0}$ be a list of Cholesky factors computed at different iterations during the maintenance, 
	such that $L[\ell]$ is the Cholesky factor computed at iteration $\ell$, for an internal iteration counter $\ell \geq 0$ that advances whenever $L$ is updated.

	At every node $v \in \mc{S} \cap \mc{T}$, we maintain a time stamp $t_v \geq 0$.
	Furthermore, we maintain a modified $Z_v$ that depends on $L$ from an earlier iteration given by $t_v$, that is,
	\[
		Z_v = J_v \cdot L[t_v]^{-\top}.
	\]
\end{definition}
\begin{rem}
	For any $v \in \mc{T}$, note that $Z_v$ and $Z_v^*$ have the same non-zero pattern, as the non-zero pattern of $L$ is constant throughout the algorithm.
\end{rem}

Similarly, updates to $h$ may cause sketches at many nodes of $\mc{S}$ to change. Again, we implement lazy updating for the part of the sketch $y_v = Z_v^* \cdot h$ involving $h$. 
\begin{definition}[{$y_v^{\triangledown}$}]
	For $v \in \mc{T}$, let
	\[
	y^{\triangledown}_v\defeq Z_v\cdot(I-I_{\mathcal{P}^{\tt}(v)})h.
	\]
	At every node $v \in \mc{S} \cap \mc{T}$, we maintain $y_v^{\triangledown}$. 
	When $Z_v = Z_v^*$, observe that we can write $y_v$ as
	\begin{equation}\label{eq:y_v}
		y_v = Z^{*}_v\cdot I_{\mathcal{P}^{{\tt }}(v)}h + y_v^{\triangledown}.
	\end{equation}
\end{definition}
It follows that to obtain the latest sketch $y_v$ at node $v$, we can update $Z_v \leftarrow Z_v^*$, update $y_v^{\triangledown}$ accordingly, and then compute $y_v$ by \cref{eq:y_v}, noting that the first term can be computed in $\poly(\tau)$ time given $Z_v^*$ and $h$.

Now, we list the invariants our data structure maintains during
the algorithm.

\begin{invariant}\label{inv:W-simple-invariant} The variables maintained in the data structure \textsc{SimpleSketch}, as given in \cref{alg:simple-W-sketch}, always preserve the following invariant before and after each function call:
		\begin{alignat*}{2}
		J_v & =\Phi_{\chi(v)}H^{-1/2}A^{\top} & v\in\mathcal{S} \tag{i}\\
		Z_v & =J_v \cdot L[t_v]^{-\top} & v\in\mathcal{T} \tag{ii} \\
		\boldsymbol{0} & =(L[\ell]-L[t_v])_{\mathcal{D}^{\mathcal{T}}(v)\setminus\{v\}} \qquad & v\in\mathcal{T} \tag{iii}\\
		y^{\triangledown}_v & = Z_v \cdot (I - I_{\mathcal{P}^{\tt}(v)}) h & v\in\mathcal{T} \tag{iv}
		\end{alignat*}
		where $H =\nabla^{2}\phi(\ox)$, 
		$L$ is the lower Cholesky factor such that $LL^{\top}=AH^{-1/2}A^{\top}$, and $t_v$ is the time stamp of $v$.
\end{invariant}

Finally, suppose the current iteration counter is $\ell$. the following lemma tells us how to compute the latest value $Z^{*}_v =J_v \cdot L[\ell]^{-\top}$, using $Z_v = J_v \cdot L[t_v]^{-\top}$ maintained by the data structure:
\begin{lemma}
\label{lem:Z*-formula}Suppose \cref{inv:W-simple-invariant}
is satisfied for node $v\in\mathcal{T}$, then 
\[
Z^{*}_v=Z_v-\left(L[\ell]^{-1}(L[\ell]-L[t_v])_{\mathcal{P}^{\mathcal{T}}(v)}\cdot Z_v^{\top}\right)^{\top}.
\]
\end{lemma}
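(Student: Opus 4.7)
The plan is to first derive a raw update formula for $Z_v^*$ from $Z_v$ using a standard inverse-difference identity, and then use the sparsity invariants to see that the update only depends on columns of $L[\ell]-L[t_v]$ indexed by $\mathcal{P}^{\mathcal{T}}(v)$.

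I would start from invariants (i) and (ii) of \cref{inv:W-simple-invariant}, giving $Z_v = J_v L[t_v]^{-\top}$ and $Z_v^* = J_v L[\ell]^{-\top}$. Applying the identity $A^{-\top}-B^{-\top} = B^{-\top}(B^{\top}-A^{\top})A^{-\top}$ with $A=L[\ell]$ and $B=L[t_v]$, I obtain
\begin{equation*}
Z_v^* - Z_v \;=\; J_v\bigl(L[\ell]^{-\top}-L[t_v]^{-\top}\bigr) \;=\; -\,Z_v\bigl(L[\ell]-L[t_v]\bigr)^{\top} L[\ell]^{-\top}.
\end{equation*}
So at this raw level, $Z_v^* = Z_v - \bigl(L[\ell]^{-1}(L[\ell]-L[t_v]) Z_v^{\top}\bigr)^{\top}$, which differs from the claimed formula only by the absence of the restriction to $\mathcal{P}^{\mathcal{T}}(v)$.

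The heart of the argument is then to justify inserting that restriction, i.e.\ to show $Z_v\bigl(L[\ell]-L[t_v]\bigr)^{\top} = Z_v\bigl((L[\ell]-L[t_v])_{\mathcal{P}^{\mathcal{T}}(v)}\bigr)^{\top}$. Writing $M=L[\ell]-L[t_v]$, the $(i,j)$ entry of $Z_v M^{\top}$ is $\sum_k (Z_v)_{i,k}\, M_{j,k}$, so I only need to argue that the indices $k$ with both $(Z_v)_{:,k}\ne 0$ and $M_{:,k}\ne 0$ lie in $\mathcal{P}^{\mathcal{T}}(v)$. By \cref{lem:Z-sparsity} the non-zero columns of $Z_v$ (equivalently $Z_v^*$, by the remark following that lemma, since $L$ and $L[t_v]$ share a non-zero pattern) are contained in $\mathcal{D}^{\mathcal{T}}(v)\cup\mathcal{P}^{\mathcal{T}}(v)$, while by invariant (iii) the non-zero columns of $M$ are contained in $[d]\setminus\bigl(\mathcal{D}^{\mathcal{T}}(v)\setminus\{v\}\bigr)$. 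Intersecting and using $\mathcal{D}^{\mathcal{T}}(v)\cap\mathcal{P}^{\mathcal{T}}(v)=\{v\}$ gives exactly $\mathcal{P}^{\mathcal{T}}(v)$, so replacing $M$ by $M_{\mathcal{P}^{\mathcal{T}}(v)} = M\cdot I_{\mathcal{P}^{\mathcal{T}}(v)}$ preserves the product. Taking the transpose of the resulting identity then yields the stated formula.

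The only subtle step is the sparsity bookkeeping in the second paragraph; the rest is routine matrix algebra. I expect this to be straightforward once one is comfortable with what $M_{\mathcal{P}^{\mathcal{T}}(v)}$ means as a column-restriction (zero-padded to the full dimension), so that $Z_v \cdot (M_{\mathcal{P}^{\mathcal{T}}(v)})^{\top} = Z_v I_{\mathcal{P}^{\mathcal{T}}(v)} M^{\top}$ is literally equal to $Z_v M^{\top}$ after the sparsity argument. No other invariant is needed.
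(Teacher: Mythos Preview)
Your proposal is correct and follows essentially the same approach as the paper: derive the raw identity $Z_v^{*}=Z_v-\bigl(L[\ell]^{-1}(L[\ell]-L[t_v])Z_v^{\top}\bigr)^{\top}$ and then use \cref{lem:Z-sparsity} together with invariant~(iii) to restrict the columns of $L[\ell]-L[t_v]$ to $\mathcal{P}^{\mathcal{T}}(v)$. The paper phrases the sparsity step as a three-way column split of $\Delta L$ rather than your entrywise argument, but the content is identical.
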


\begin{proof}
Let us denote $\Delta L = L[\ell]-L[t_v]$. Then $(Z^*)^\top=(L[\ell]+\Delta L)^{-1} J_v^\top$, 
and we want to find $\Delta Z$ such that 
\begin{align*}
	Z_v^\top + (\Delta Z)^\top = (Z^*)^\top &= (L[t_v]+\Delta L)^{-1} J_v^\top.\\
\intertext{We have}
 	(L[t_v]+\Delta L)(Z_v^\top + (\Delta Z)^\top) &= J_v^\top = L[\ell] Z_v^\top \\
(\Delta Z)^{\top} &= -(L[t_v]+\Delta L)^{-1} (\Delta L) Z_v^\top \\
Z^{*}_v - Z_v = \Delta Z &= -\left(L[\ell]^{-1}(L[\ell]-L[t_v])Z_v^{\top}\right)^{\top}.
\end{align*}
We split $\Delta L$ into three parts:
\[
\Delta L=(I_{\mathcal{P}^{\mathcal{T}}(v)}+
I_{\mathcal{D}^{\mathcal{T}}(v)\setminus\{v\}}+
I_{\mathcal{T}\setminus(\mathcal{D}^{\mathcal{T}}(v)\cup\mathcal{P}^{\mathcal{T}}(v))})\Delta L.
\]
By \cref{lem:Z-sparsity}, 
the non-zero columns of $Z_v=J_v \cdot L[t_v]^{-\top}$ is a subset of $\mathcal{D}^{\mathcal{T}}(v)\cup\mathcal{P}^{\mathcal{T}}(v)$.
Hence,  $I_{\mathcal{T}\setminus(\mathcal{D}^{\mathcal{T}}(v)\cup\mathcal{P}^{\mathcal{T}}(v))}\cdot Z_v^{\top}=\boldsymbol{0}$.
By (iii) of \cref{inv:W-simple-invariant}, $(L[\ell]-L[t_v])\cdot I_{\mathcal{D}^{\mathcal{T}}(v)\setminus\{v\}}=\boldsymbol{0},$ implying that
\[
L[\ell]^{-1}(L[\ell]-L[t_v])Z_v^{\top}=L[\ell]^{-1}(L[\ell]-L[t_v])_{\mathcal{P}^{\mathcal{T}}(v)}\cdot Z_v^{\top}.
\]
\end{proof}

\begin{algorithm}
\caption{Simple Multiscale Representation Sketching Data Structure --
Initialize and Query\label{alg:simple-W-sketch}}

\algnewcommand{\LeftComment}[1]{\State \(\triangleright\) #1} 

\begin{algorithmic}[1]

\State \textbf{datastructure }$\textsc{SimpleSketch}$

\State \textbf{private : members}

\State \hspace{4mm} $\Phi\in\R^{r\times n}$ \Comment{JL
	matrix}

\State \hspace{4mm} sampling tree $(\mathcal{S},\chi)$
\Comment{constructed according to~\cref{subsec:simple-sampling-tree-construction}}

\State \hspace{4mm} elimination tree $\mathcal{T}$

\State \hspace{4mm} $\ell \in\mathbb{N}$ \Comment{iteration counter}

\State \hspace{4mm} $h \in\R^{d}$

\State \hspace{4mm} $\ox \in\R^{n}$ 
\Comment{$\mathcal{W}$ given implicitly by $\ox$}

\State \hspace{4mm} $H \in\R^{n \times n}$\Comment{Hessian  $H=\nabla^{2}\phi(\ox)$}

\State \hspace{4mm} \textsc{List} $\{L[t] \in \R^{d\times d}\}_{t \geq 0}$ 
\Comment{sequence of Cholesky factors at various iterations $t$}
\State \hspace{4mm} \textsc{List} $\{J_v \in \R^{r\times d}\}_{v \in \mc{T}}$
\Comment{$J_v =\Phi_{\chi(v)}H^{-1/2}A^{\top}$}

\State \hspace{4mm} \textsc{List} $\{Z_v \in\R^{r\times d}\}_{v\in\mc{S}}$
\Comment{$Z_v=J_v\cdot L[t_v]^{-\top}$ } 

\State \hspace{4mm} \textsc{List} $\{y^{\triangledown}_v \in\R^{r}\}_{v \in \mc{T}}$
\Comment{$y^{\triangledown}_v=Z_v\cdot(I-I_{\mathcal{P}^{{\tt}}(v)})h$}

\State \hspace{4mm} \textsc{list} $\{t_v \in\mathbb{N}\}_{v \in\mc{T}}$ 
\Comment{$t_v$ is the time of the last update at a node $v$}

\State \textbf{end members}

\Procedure{$\textsc{Initialize}$}{$\ts, \chi, \Phi\in\R^{r\times n},\ox\in\R^{n},h\in\R^{d}$}
\Comment{\cref{lem:simple-initialize}}

\State $(\ts, \chi) \leftarrow (\ts, \chi)$

\State $\Phi\leftarrow\Phi$

\State $\ell\leftarrow0,h \leftarrow h$

\State Compute $H\leftarrow\nabla^{2}\phi(\ox)$

\State Find the lower Cholesky factor $L[\ell]$ of $AH^{-1}A^{\top}$

\ForAll{$v\in\mathcal{S}$}

	\State $J_v\leftarrow\Phi_{\chi(v)}H^{-1/2}A^{\top}$
	\Comment{compute $J_v$ for all $v \in \mc{S}$}

\EndFor

\ForAll{$v \in \mathcal{T}$}
\Comment{these nodes store additional partial computations}
	\State $Z_v\leftarrow J_v\cdot L[\ell]^{-\top}$

	\State $y^{\triangledown}_v\leftarrow Z_v\cdot(I-I_{\mathcal{P}^{{\tt }}(v)})h$

	\State $t_v \leftarrow\ell$
	\Comment{record that $Z_v$ and $y_v^{\triangledown}$ were last updated at time $\ell$}

\EndFor
\EndProcedure

\Procedure{$\textsc{Query}$}{$v\in\mathcal{S}$} \Comment{\cref{lem:simple-query}}

	\If{$v\in\mathcal{S}\setminus\mathcal{T}$}
	\Comment{directly compute and return the value of sketch}
		\State \Return $J_v \cdot L[\ell]^{-\top}h$
	\EndIf

	\LeftComment{for $v \in \mc{T}$, we make use of existing partial computations}
	\State 
	$\Delta L \leftarrow(L[\ell]-L[t_v])_{\mathcal{P}^{\mathcal{T}}(v)}$

	\State $Z_v\leftarrow Z_v-(L[\ell]^{-1} \cdot \Delta L \cdot Z_v^{\top})^{\top}$
	\Comment{Update $Z_v$ to correspond to $L[\ell]$, that is, $Z_v = Z^*_v$}
	\State $y^{\triangledown}_v \leftarrow Z_v \cdot (I - I_{\mathcal{P}^{\tt}(v)}) \cdot h$
	
	\LeftComment{$Z_v$ and $y^{\triangledown}_v$ now correspond to the latest $L[\ell]$, so we update the time stamp of $v$}
	\State $t_v\leftarrow\ell$

	\State \Return $Z_v \cdot I_{\mathcal{P}^{\mathcal{T}}(v)}\cdot h + y^{\triangledown}_v$

\EndProcedure

\end{algorithmic}
\end{algorithm}

\begin{algorithm}
	\caption{Simple Multiscale Representation Sketching Data Structure --
		Updates\label{alg:simple-W-sketch-1}}
	
	\begin{algorithmic}[1]
		
		\State \textbf{datastructure }$\textsc{SimpleSketch}$
		
		\State
		
		\Procedure{$\textsc{Update}$}
		{$\new{\ox}\in\R^{n}$, $\new h \in \R^n$}
		
		\Comment{\cref{lem:simple-updateW}}
		
		\For{$i\in[n]$ where $\new \ox_{i}\neq \ox_{i}$}
		
		\State $\textsc{UpdateCoordinate}(\new{\ox}, i)$
		\Comment{break up the update into single-coordinate updates}
		\EndFor
		
		\ForAll{$\new{h_{i}}\neq h_{i}$}
		
		\ForAll{$v\in\mathcal{P}^{\mathcal{\mathcal{T}}}(i)$}
		
		\State $y^{\triangledown}_v\leftarrow y^{\triangledown}_v + Z_v \cdot I_{\{i\}}\cdot(\new h-h)$
		
		\EndFor
		
		\EndFor

		\State $h\leftarrow\new h$
		
		\EndProcedure
		
		\Procedure{$\textsc{UpdateCoordinate}$}{$\new{\ox} \in \R^n, i \in [n]$}\Comment{\cref{lem:update-coordinate}}
		
		\State $\ox_{i}\leftarrow\new{\ox_{i}}$
		
		\State $\new H=\nabla^{2}\phi(\ox)$
		
		\State $\ell\leftarrow\ell+1$ 
		\Comment{increment iteration before computing a new Cholesky factor}
		\State Find lower Cholesky factor $L[\ell]$ of $A(\new H)^{-1}A^{\top}$
		
		\State $\textsc{Update}L(\mathcal{P}^{\mathcal{T}}(\low^{\mathcal{T}}(A_{i})))$
		
		\State $\textsc{Update}H(\new H_i, i)$
		
		\EndProcedure
		
		\Procedure{$\textsc{Update}L$}{$S \subseteq \mc{T}$}  
		\Comment{$S$ is a path in $\mc{T}$, \cref{lem:update-L}}
		
		\ForAll{$v\in S$}
		\State \Comment{We update $Z_v$ to $Z_v^*$ in two steps: first from $L[t_v]$ to $L[\ell-1]$, then from $L[\ell-1]$ to $L[\ell]$}
		\State $Z_v \leftarrow Z_v-\left(L[\ell-1]^{-1} \cdot (L[\ell-1]-L[t_v])_{\mathcal{P}^{\mathcal{T}}(v)}\cdot Z_v^{\top}\right)^{\top}$
		
		\State $Z_v\leftarrow Z_v-(L[\ell]^{-1}\cdot (L[\ell]-L[\ell-1])\cdot Z_v^{\top})^{\top}$
		
		\State $y^{\triangledown}_v \leftarrow Z_v \cdot(I-I_{\mathcal{P}^{\mathcal{T}}(v)})\cdot h$
		
		\State $t_v \leftarrow \ell$
		
		\EndFor
		
		\EndProcedure
		
		\Procedure{$\textsc{Update}H$}{$\new H$}
		\Comment{\cref{lem:update-H}}
		
		\State $\Delta H=\new H - H$
		
		\ForAll{$i\in[n]$ such that $(\Delta H)_{i}\neq\boldsymbol{0}$}
		
		\State Find $v$ such that $\chi(v)=\{i\}$
		
		\ForAll{$u\in\mathcal{P}^{\mathcal{S}}(v)$}
		
		\State $J_v\leftarrow\Phi_{\chi(v)}(H +\Delta H\cdot I_{\{i\}})^{-1/2}A^{\top}$
		
		\If{$u\in\mathcal{T}$}
		
		\State $Z_v\leftarrow J_v\cdot L[t_v]^{-\top}$
		
		\State $y^{\triangledown}_v\leftarrow Z_v\cdot(I-I_{\mathcal{P}^{\mathcal{T}}(v)})\cdot h$
		
		\EndIf
		
		\EndFor
		
		\EndFor
		\State $H \leftarrow \new H$
		\EndProcedure
		
	\end{algorithmic}
\end{algorithm}

Now we are ready to prove the correctness and runtime of each function in the data structure. The correctness of the overall maintenance data structure then follows immediately from the invariants.

\begin{lemma}[$\textsc{Initialize}$]
\label{lem:simple-initialize}
Given initial $\ox$ and $h$, the
JL matrix $\Phi\in\R^{r\times n}$, and the elimination
tree $\mathcal{T}$ with height $\tau$, the data structure \textsc{SimpleSketch} initializes the sketches in the sampling tree in $O(n\cdot \tau\cdot \eta\cdot r)$ time. 
Moreover, the internal state of the
data structure satisfies \cref{inv:W-simple-invariant}
after initialization.
\end{lemma}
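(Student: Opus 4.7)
The plan is to verify the invariants item-by-item from the pseudocode and then bound the running time by an amortized/layerwise analysis of the sampling tree.

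For correctness, I would simply read off each part of \cref{inv:W-simple-invariant} from \cref{alg:simple-W-sketch}. Invariant (i) is assigned literally in the loop that sets $J_v \leftarrow \Phi_{\chi(v)} H^{-1/2} A^\top$ for every $v \in \mathcal{S}$. Invariant (ii) follows because for $v \in \mathcal{T}$ we set $Z_v \leftarrow J_v \cdot L[\ell]^{-\top}$ and $t_v \leftarrow \ell$, so $L[t_v] = L[\ell]$ by construction. Invariant (iii) is automatic: right after initialization $t_v = \ell$ for every $v \in \mathcal{T}$, so $L[\ell] - L[t_v] = 0$ identically. Invariant (iv) is explicit: we assign $y^{\triangledown}_v \leftarrow Z_v (I - I_{\mathcal{P}^{\mathcal{T}}(v)}) h$.

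For the running time, I would account for the cost in four pieces. Computing $H = \nabla^2 \phi(\ox)$ costs $O(n)$ since each $n_i = O(1)$, and finding the Cholesky factor $L[0]$ of $AH^{-1}A^\top$ costs $O(n\tau^2)$ by \cref{cor:chol-time}. The main work is the loop over $v \in \mathcal{S}$ that builds $J_v = \Phi_{\chi(v)} H^{-1/2} A^\top$. Here I would exploit that the sets $\{\chi(v)\}$ at a fixed depth partition $[n]$, and that each column $A_i$ is $\tau$-sparse (\cref{lem:A-sparsity-pattern}). Therefore, at each fixed layer of $\mathcal{S}$, the collection of all $J_v$'s has total nonzero column count $O(n\tau)$, and can be assembled in $O(n\tau r)$ time; summing over $\eta$ layers gives $O(n\tau \eta r)$.

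Finally, for $v \in \mathcal{T}$, computing $Z_v = J_v L[0]^{-\top}$ reduces to $r$ triangular solves $L[0]^{-1} (J_v^\top e_k)$. By \cref{lem:Z-sparsity}, the column support of $Z_v$ lies in $\mathcal{D}^{\mathcal{T}}(v) \cup \mathcal{P}^{\mathcal{T}}(v)$, so each solve costs $O((|\mathcal{D}^{\mathcal{T}}(v)| + \tau)\tau)$ time by \cref{lem:Linv-sparsity-time}. Summing over $v \in \mathcal{T}$ and using $\sum_{v \in \mathcal{T}} |\mathcal{D}^{\mathcal{T}}(v)| = O(n\tau)$ (each node contributes to the subtree sizes of its ancestors, of which there are at most $\tau$), the total cost is $O(n\tau^2 r) = O(n\tau \eta r)$. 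Computing each $y^{\triangledown}_v$ afterwards is a sparse matrix--vector multiplication bounded by the same support and summing to $O(n\tau r)$. Combining all pieces yields the claimed $O(n \tau \eta r)$.

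The only mildly delicate step is the accounting $\sum_{v \in \mathcal{T}} |\mathcal{D}^{\mathcal{T}}(v)| = O(n\tau)$; I would justify it by charging each vertex $u$ to its $O(\tau)$ ancestors in $\mathcal{T}$, using that $\mathcal{T}$ has height $\tau$. Everything else is a direct read-off from the algorithm and the sparsity lemmas already established in \cref{sec:chol_tex}.
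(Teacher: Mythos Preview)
Your proposal is correct and follows essentially the same approach as the paper: invariants are read off directly from the pseudocode, and the runtime is bounded via the column sparsity of $A$ and $L^{-1}$ together with the charging argument $\sum_{v\in\mathcal{T}}|\mathcal{D}^{\mathcal{T}}(v)|=O(d\tau)$. The only cosmetic difference is that the paper computes the $Z_v$'s bottom-up via the recursion $Z_v=\Phi_{\{i:\low(A_i)=v\}}H^{-1/2}A^\top L^{-\top}+\sum_{u\text{ child of }v}Z_u$, whereas you solve $L^{-1}J_v^\top$ directly for each $v$; both routes land on the same $O(n\tau^2 r)\le O(n\tau\eta r)$ bound using the identical subtree-size sum.
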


\begin{proof}
The correctness directly follows by the setup of~\cref{inv:W-simple-invariant}.

\textbf{Runtime:} 
By \cref{cor:chol-time}, we can
find $L[\ell]$ in time $O(n\tau^{2})$. For any non-leaf node $v\in\mathcal{S}$,
we note that $J_v=\sum_{\text{$u$ a child of $v$}}J_u$.
For a leaf node $v\in\mathcal{S},$ we have $\chi(v)=\{i\}$ for some $i \in [n]$, so we can compute $J_v$ in time $O(\tau \cdot r)$ by \cref{lem:A-sparsity-pattern}. Then, we
can compute all $J_v$ for non-leaf nodes by summing the values of its children, iteratively up the tree.
Since the height of tree $\mathcal{S}$ is $\eta$,
we can compute $J_v$ for all $v\in\mathcal S$ in time $O(|V(\mathcal{S})|\cdot\tau\cdot \eta \cdot r)$. 

For $v \in \mc{T}$, by \cref{eq:sampling-tree-chi}, we have
\begin{align*}
Z_v &=  \left(\Phi_{\{i\mid\mathsf{low}^{{\mathcal{T}}}(A_{i})=v\}} + \sum_{\text{child $u$ of $v$ in $\mc{T}$}} \Phi_{\chi(u)}\right) H^{-1/2}A^{\top}L[\ell]^{-\top}  \\ 
&= \Phi_{\{i\mid\mathsf{low}^{{\mathcal{T}}}(A_{i})=v\}}H^{-1/2}A^{\top}L^{-\top}+\sum_{\text{child \ensuremath{u} of \ensuremath{v} in \ensuremath{\mathcal{T}}}}J_v\cdot L[\ell]^{-\top}
\end{align*}
Thus, for each $v \in \mc{T}$, we only need to compute the term
$\Phi_{\{i\mid\mathsf{low}^{{\mathcal{T}}}(A_{i})=v\}}H^{-1/2}A^{\top}L^{-\top}$.
Since the non-zero columns of $\Phi_{\{i\mid\mathsf{low}^{{\mathcal{T}}}(A_{i})=v\}}H^{-1/2}A^{\top}$
lie on $\mathcal{P}^{\mathcal{T}}(v)$, this term has $O(\tau \cdot r)$
many non-zero entries, 
and we can compute it in $O(\tau^{2} \cdot r)$
time by \cref{lem:Linv-sparsity-time}. 
Again, by iteratively computing $Z_v$ up the tree, we can compute $Z_v$ for all $v\in\mathcal{T}$ in  $O(|\mathcal{T}|\cdot\tau^2\cdot r)$ time.

Because $h$ is explicitly given, we can compute $y^{\triangledown}_v=Z_v\cdot(I-I_{\mathcal{P}^{{\tt }}(v)}) \cdot h$ in $\nnz(Z_v)$ time for each $v \in \mc{S}$; then we can compute $y^{\triangledown}_v$ for all $v\in\mathcal{T}$ in $O(|\mathcal{T}| \cdot \tau^2\cdot r)$ time. 

Combined with the fact $\tau \leq \eta$ and $|\mc{T}| = d \leq n$, the total time is bounded by $O(n \cdot \tau\cdot \eta \cdot r)$.
\end{proof}

\begin{lemma}[$\textsc{Update}$] 
 \label{lem:simple-updateW}Suppose the current state of data structure
satisfied the \cref{inv:W-simple-invariant}. Given $\new{\mathcal{W}}$
implicitly by $\new \ox$, and $\new h$, the function \textsc{update} of \textsc{SimpleSketch} updates the sketches in $\mathcal{S}$ implicitly in time $O(\|\new \ox-\ox\|_{0}\cdot\tau^{2} \cdot \eta \cdot r) + O(\|\new h-h\|_{0}\cdot \tau \cdot r)$
Moreover, the function also updates the internal states correspondingly so that \cref{inv:W-simple-invariant} is still preserved.
\end{lemma}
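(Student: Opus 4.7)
The plan is to split the update into two independent phases -- one handling coordinate changes of $\ox$ and one handling coordinate changes of $h$ -- and verify the invariants and cost of each separately. Observe that among the variables constrained by \cref{inv:W-simple-invariant}, only $y^{\triangledown}_v$ depends on $h$, while $J_v$, $Z_v$, $t_v$ and $L[\cdot]$ depend on $\ox$ (through $H$ and $L$). Hence these two phases do not interfere once the order is fixed.

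For the $\ox$-phase, I would process one changed coordinate at a time and invoke the correctness of $\textsc{UpdateCoordinate}$ (the next lemma, $\textsc{Update}L$, and $\textsc{Update}H$): each single coordinate update re-establishes (i)--(iv) and takes $O(\tau^{2}\cdot \eta \cdot r)$ time, giving a total bound of $O(\|\new\ox - \ox\|_{0}\cdot \tau^{2}\cdot\eta\cdot r)$ for this phase. Since after this phase every invariant is restored, we may then update $h$ from the state satisfying \cref{inv:W-simple-invariant} with $\new H$, $\new L$ and the current $h$.

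For the $h$-phase, I would fix a coordinate $i$ with $\new h_i \neq h_i$ and determine exactly which $y^{\triangledown}_v$ change. By \cref{lem:Z-sparsity}, the $i$-th column $(Z_v)_i$ is nonzero only if $i\in \mathcal{D}^{\mathcal{T}}(v)\cup \mathcal{P}^{\mathcal{T}}(v)$, and the defining identity $y^{\triangledown}_v = Z_v\,(I-I_{\mathcal{P}^{\mathcal{T}}(v)})\,h$ shows that any $i\in \mathcal{P}^{\mathcal{T}}(v)$ is projected out. Hence $y^{\triangledown}_v$ changes only for $v$ that is a strict ancestor of $i$ in $\mathcal{T}$, i.e.\ exactly for $v\in\mathcal{P}^{\mathcal{T}}(i)$ (excluding $i$ itself, which is projected out). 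For each such $v$, the correct delta is $(Z_v)_i\,(\new h-h)_i$, which is exactly what the loop adds. This yields $|\mathcal{P}^{\mathcal{T}}(i)|\leq \tau$ updates per changed coordinate of $h$, each a scalar--vector product on an $r$-dimensional column of $Z_v$ costing $O(r)$ time, so the phase runs in $O(\|\new h-h\|_0\cdot \tau\cdot r)$. After all coordinates of $h$ are processed we set $h\leftarrow \new h$, and (iv) is restored while (i)--(iii) are untouched.

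The only substantive subtlety -- and the step I expect to need the most care -- is justifying that no $y^{\triangledown}_v$ outside $\mathcal{P}^{\mathcal{T}}(i)$ is affected and that the endpoint behaviour at $v=i$ is handled correctly. Both follow from combining the column-sparsity bound on $Z_v$ in \cref{lem:Z-sparsity} with the precise definition of the projector $I-I_{\mathcal{P}^{\mathcal{T}}(v)}$: any index $i$ lying on $\mathcal{P}^{\mathcal{T}}(v)$ contributes zero to $y^{\triangledown}_v$ by definition, so the loop need only touch ancestors of $i$ in $\mathcal{T}$. Once this bookkeeping is settled, summing the two phases yields the stated time bound and the preservation of \cref{inv:W-simple-invariant}.
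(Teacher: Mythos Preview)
Your proposal is correct and follows essentially the same approach as the paper: split into an $\ox$-phase handled coordinate-by-coordinate via \textsc{UpdateCoordinate} (deferring to \cref{lem:update-coordinate}), and an $h$-phase where \cref{lem:Z-sparsity} shows that only nodes $v\in\mathcal{P}^{\mathcal{T}}(i)$ need their $y^{\triangledown}_v$ adjusted, at cost $O(r)$ each. Your treatment of the endpoint $v=i$ (noting it is projected out by $I-I_{\mathcal{P}^{\mathcal{T}}(v)}$) is in fact slightly more careful than the paper's own proof, which simply writes $v\in\mathcal{P}^{\mathcal{T}}(i)$ without singling out this case.
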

\begin{proof}
	Note that we can process the updates to $\mathcal{W}$ and $h$ consecutively; hence the runtime is a sum of the runtimes of the two steps. 
	
	To update $\ox$ to $\new \ox$ and thus $\mc W$ to $\new{\mc{W}}$, we again view it as a sequence of updates, where each update correspond to a single coordinate change in $\ox$, processed by the helper function \textsc{UpdateCoordinate}. The associated proof is given in \cref{lem:update-coordinate}.
	
	Similarly, we update $h$ to $\new h$ by a sequence of single-coordinate updates. 
	By \cref{lem:Z-sparsity}, the nonzero columns of $Z_v\cdot(I-I_{\mathcal{P}^{\mathcal{T}}(v)})$ lies on $\mathcal{D}^{\mathcal{T}}(v)$. Therefore, when $h_{i}$ changes, $y_v^{\triangledown}$ changes only if $i\in\mathcal{D}^{\mathcal{T}}(v)$, so it suffices to update only the sketches at nodes $v$ where $v \in \mathcal{P}^{\mathcal{T}}(i)$, and the update is given by $Z_v\cdot I_{\{i\}}\cdot(\new h-h)$, computable in $O(r)$ time. 
	Thus, updating a coordinate $h_{i}$ takes $|\mathcal{P}^\mathcal{T}(v)|O(r)=O(\tau \cdot r)$ time. 
	Summing over all changed coordinates, we can update $h$ in $O(\|\new h-h\|_{0}\cdot\tau \cdot r)$ time.
\end{proof}

\begin{lemma}[$\textsc{Query}$]
\label{lem:simple-query}Suppose \cref{inv:W-simple-invariant}
is satisfied. 
The function $\textsc{Query}(v)$ of \textsc{SimpleSketch} outputs $\Phi_{\chi(v)}\mathcal{W}^{\top}h$ in $O(\tau^{2}r)$ time. 
Moreover, \cref{inv:W-simple-invariant} is preserved 
after the function call.
\end{lemma}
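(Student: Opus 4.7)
I plan to split into two cases according to the branch taken in $\textsc{Query}(v)$.

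\textbf{Case $v \in \mathcal{S}\setminus\mathcal{T}$:} By invariant (i) of~\cref{inv:W-simple-invariant}, $J_v = \Phi_{\chi(v)}H^{-1/2}A^\top$, so the returned value $J_v\cdot L[\ell]^{-\top}h = \Phi_{\chi(v)}H^{-1/2}A^\top L[\ell]^{-\top}h = \Phi_{\chi(v)}\mathcal{W}^\top h$. For runtime, \cref{lem:J-sparsity} says that the non-zero columns of $J_v$ lie in $\mathcal{P}^{\mathcal{T}}(\mathcal{T}(v))$, a set of size at most $\tau$. Hence each of the $r$ rows of $J_v$ is a vector whose support lies on one path to the root, so by \cref{lem:Linv-path} one can compute $(L[\ell]^{-\top}h)$ restricted to that path, and therefore the inner product with the row, in $O(\tau^2)$ time per row, for a total of $O(r\tau^2)$. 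Nothing in the data structure is modified, so~\cref{inv:W-simple-invariant} trivially holds.

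\textbf{Case $v \in \mathcal{T}$:} Correctness will follow from three ingredients. First, \cref{lem:Z*-formula} together with invariants (ii) and (iii) shows that after the line $Z_v \leftarrow Z_v - (L[\ell]^{-1}\Delta L\cdot Z_v^\top)^\top$ with $\Delta L=(L[\ell]-L[t_v])_{\mathcal{P}^{\mathcal{T}}(v)}$, we have $Z_v = Z_v^* = J_v\cdot L[\ell]^{-\top}$. Second, by~\cref{eq:y_v} applied to this new $Z_v$, the final return value equals $y_v = \Phi_{\chi(v)}\mathcal{W}^\top h$, provided $y^{\triangledown}_v$ still satisfies invariant (iv) with respect to the new $Z_v$. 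Third, for that I will observe that the update $\Delta Z_v = -(L[\ell]^{-1}\Delta L\cdot Z_v^\top)^\top$ has its non-zero columns contained in $\mathcal{P}^{\mathcal{T}}(v)$: $\Delta L$ is supported on rows and columns in $\mathcal{P}^{\mathcal{T}}(v)$ (by~\cref{lem:L-column-sparsity-pattern} applied to columns of $L$ indexed on the path), so each column of $\Delta L\cdot Z_v^\top$ is supported on $\mathcal{P}^{\mathcal{T}}(v)$, and applying $L[\ell]^{-1}$ keeps the support on $\mathcal{P}^{\mathcal{T}}(v)$ by \cref{lem:Linv-sparsity-time}. Hence $\Delta Z_v\cdot (I-I_{\mathcal{P}^{\mathcal{T}}(v)}) = 0$, so the line recomputing $y^{\triangledown}_v$ reproduces its previous value; invariant~(iv) is preserved, and after setting $t_v\leftarrow\ell$ invariants~(ii) and~(iii) hold as well.

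\textbf{Runtime for Case $v \in \mathcal{T}$:} I will use the sparsity observations above to bound each step. Forming $\Delta L$ involves $|\mathcal{P}^{\mathcal{T}}(v)|\le\tau$ columns, each of sparsity $\le\tau$, so $\Delta L$ has $O(\tau^2)$ nonzeros and is built in $O(\tau^2)$ time. Computing $\Delta L\cdot Z_v^\top$ takes $O(r\tau^2)$ time since each of the $r$ columns is a length-$d$ vector obtained by summing $\tau$ scalar multiples of $\tau$-sparse columns of $\Delta L$. The subsequent multiplication by $L[\ell]^{-1}$ amounts to $r$ triangular solves, each on a vector whose support lies in $\mathcal{P}^{\mathcal{T}}(v)$; by \cref{lem:Linv-sparsity-time} each solve costs $O(\tau^2)$, totaling $O(r\tau^2)$. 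Updating $Z_v$ touches only the $O(r\tau)$ entries in columns $\mathcal{P}^{\mathcal{T}}(v)$. By the observation that $\Delta Z_v(I-I_{\mathcal{P}^{\mathcal{T}}(v)}) = 0$, the ``recomputation'' of $y^{\triangledown}_v$ is in fact a no-op and can be implemented in $O(1)$. Finally $Z_v\cdot I_{\mathcal{P}^{\mathcal{T}}(v)}\cdot h$ is a sum of $\le\tau$ vectors of length $r$, costing $O(r\tau)$. Summing gives the claimed $O(r\tau^2)$ bound.

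The main subtlety I anticipate is just the bookkeeping around the $y^{\triangledown}_v$ line: one might worry that recomputing it naively would cost time proportional to the size of the subtree $\mathcal{D}^{\mathcal{T}}(v)$, which can be much larger than $\tau$. The support argument for $\Delta Z_v$ is what resolves this and is the single place where the structure of the elimination tree, the path-sparsity of $\Delta L$, and the fact that we only advance $t_v$ along $\mathcal{P}^{\mathcal{T}}(v)$ must all come together.
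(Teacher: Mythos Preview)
Your proof is correct and follows essentially the same approach as the paper's: the same case split, the same use of \cref{lem:J-sparsity} and \cref{lem:Z*-formula}, and the same sparsity bookkeeping for the runtime. Your explicit observation that $\Delta Z_v$ has non-zero columns only in $\mathcal{P}^{\mathcal{T}}(v)$, so the recomputation of $y^{\triangledown}_v$ is a no-op, is in fact a sharper justification than the paper gives---the paper simply lumps the $Z_v$ and $y^{\triangledown}_v$ updates together into an $O(\tau^2 r)$ bound without spelling out why the latter is cheap.
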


\begin{proof}
\textbf{Correctness:} For the case $v\in\mathcal{S\setminus T}$,
the correctness directly follows by definition of $J_v$. Now, we
consider the case that $v\in\mathcal{T}$. The invariant maintenance
of moving $t_v$ to $\ell$ directly follows by \cref{lem:Z*-formula}.
To output $\Phi_{\chi(v)}\mathcal{W}^{\top}h = y_v$, the function computes the expression as given by \cref{eq:y_v}.

\textbf{Runtime: }For the case $v\in\mathcal{S}\setminus\mathcal{T}$,
the non-zero columns of $J_v$ lies on $\mathcal{P}^{\mathcal{T}}(\mathcal{T}(v))$ by \cref{lem:J-sparsity}. 
By \cref{lem:Linv-sparsity-time}, the term $J_v \cdot L[\ell]^{-\top}$ has $O(\tau \cdot r)$
many nonzero entries and can be computed in time $O(\tau^{2}r)$
time. Thus, we can compute $J_v \cdot L[\ell]^{-\top}h$ in time $O(\tau^{2} \cdot r)$.

For the case $v\in\mathcal{T}$, we first note that we can find $\Delta L$ in $O(\tau^{2})$ time since $|\mathcal{P}^{\mathcal{T}}(v)|\leq\tau$, 
and the column sparsity of $L$ is also bounded by $\tau$ by \cref{lem:L-column-sparsity-pattern}. By the sparsity pattern of $\Delta L$, we can compute
$(\Delta L)\cdot(Z{\triangledown}_v)^{\top}$ in  $O(\tau^{2}r)$ time.
By sparsity pattern of $L$, we can update $Z_v \leftarrow Z^*_v$ and $y^{\triangledown}_v$
in $O(\tau^{2}r)$ time by solving $O(r)$ lower triangular
system using \cref{lem:Linv-sparsity-time}.
In \cref{eq:y_v}, we can compute $Z_v \cdot I_{\mc{P}^{\mc{T}}(v)} \cdot h$ in $O(\tau\cdot r)$ time
since $|\mathcal{P}^{\mathcal{T}}(v)|\leq\tau$. Hence, the function takes $O(\tau^{2} \cdot r)$ time in total.
\end{proof}
\begin{lemma}[$\textsc{UpdateCoordinate}$]
\label{lem:update-coordinate}
Suppose the current state of the data
structure satisfies~\cref{inv:W-simple-invariant}. The
function $\textsc{UpdateCoordinate}$ of \textsc{SimpleSketch} updates the implicit representation
of $\mathcal{W}$ by updating the $i$-th coordinate of $\ox$ from $\ox_{i}$ to $\new{\ox_{i}}$ in $O(\tau^{2}\cdot\eta\cdot r)$ time.
Moreover, the function $\textsc{UpdateCoordinate}$ also updates the internal states correspondingly such that~\cref{inv:W-simple-invariant} is preserved after the function call.
\end{lemma}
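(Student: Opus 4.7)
The plan is to follow the three-phase structure of $\textsc{UpdateCoordinate}$: the Hessian and Cholesky recomputation, the delegated call to $\textsc{Update}L$ along an elimination-tree path, and the delegated call to $\textsc{Update}H$ along a sampling-tree path. Since $H=\nabla^{2}\phi(\ox)$ is block diagonal, changing $\ox_i$ perturbs only its $i$-th block, so by \cref{lem:cholesky-update-sparsity-pattern} the new Cholesky factor $L[\ell]$ differs from $L[\ell-1]$ only in columns contained in $\mathcal{P}^{\mathcal{T}}(\low^{\mathcal{T}}(A_i))$, a set of size at most $\tau$, and by \cref{lem:cholesky-update-time} the recomputation costs only $O(\tau^2)$.

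For correctness of the $\textsc{Update}L$ call, I plan to verify that after its two rewrites every node $v$ on the update path satisfies $Z_v = J_v L[\ell]^{-\top}$. The first rewrite applies \cref{lem:Z*-formula} to propagate $Z_v$ from time $t_v$ to time $\ell-1$; this step crucially uses the precondition invariant (iii) at time $\ell-1$, which implies $(L[\ell-1]-L[t_v])_{\mathcal{D}^{\mathcal{T}}(v)\setminus\{v\}}=\mathbf{0}$ and lets us replace $(L[\ell-1]-L[t_v])$ by its $\mathcal{P}^{\mathcal{T}}(v)$-restriction. The second rewrite then applies the unrestricted algebraic identity $Z_v^{*}=Z_v-(L[\ell]^{-1}(L[\ell]-L[\ell-1])Z_v^{\top})^{\top}$ to advance to time $\ell$; no restriction is needed because the column support of $L[\ell]-L[\ell-1]$ is already contained in $\mathcal{P}^{\mathcal{T}}(\low^{\mathcal{T}}(A_i))$. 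For nodes $v$ off the path, I will check that no strict descendant of $v$ lies on $\mathcal{P}^{\mathcal{T}}(\low^{\mathcal{T}}(A_i))$ (else $v$ itself would lie on that path), so adding $L[\ell]-L[\ell-1]$ preserves $(L[\ell]-L[t_v])_{\mathcal{D}^{\mathcal{T}}(v)\setminus\{v\}}=\mathbf{0}$; invariants (ii) and (iv) are untouched because $t_v$, $Z_v$, and $y_v^{\triangledown}$ do not change, and no column of $L$ seen by them has moved. Finally, $y_v^{\triangledown}$ is rebuilt directly from the new $Z_v$ and $t_v$ is advanced to $\ell$.

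For the $\textsc{Update}H$ call, I will use that $H$ changes only in its $i$-th block, so $J_v$ changes exactly at nodes $v \in \mathcal{S}$ with $i\in\chi(v)$, i.e.\ the $O(\eta)$ ancestors in $\mathcal{S}$ of the unique leaf with label $\{i\}$. At each such ancestor, updating $J_v$ only involves the $O(\tau)$-sparse column $A_i$ supported on a path in $\mathcal{T}$ (\cref{lem:A-sparsity-pattern}); for those ancestors that additionally lie in $\mathcal{T}$, refreshing $Z_v = J_v L[t_v]^{-\top}$ reduces via \cref{lem:Linv-sparsity-time} to $O(r)$ path-restricted triangular solves at cost $O(\tau^2 r)$ each, and $y_v^{\triangledown}$ is recomputed in the same budget from the column sparsity of $Z_v$. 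Invariant (iii) is untouched by $\textsc{Update}H$ since $L$ does not move during this phase.

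Summing the costs: the Cholesky refactor contributes $O(\tau^2)$, $\textsc{Update}L$ visits $O(\tau)$ path nodes at $O(\tau^2 r)$ each (path-restricted solves via \cref{lem:Linv-sparsity-time,lem:L-inv-transpose-coordinate}), and $\textsc{Update}H$ visits $O(\eta)$ nodes at $O(\tau^2 r)$ each, yielding the claimed $O(\tau^2\cdot\eta\cdot r)$ using $\tau\leq\eta$. The main subtlety I expect is justifying the second step of $\textsc{Update}L$, where invariant (iii) does not directly apply to the difference $L[\ell]-L[\ell-1]$ at internal path nodes; the resolution is that the formula is an unrestricted algebraic identity and efficiency is recovered from the intrinsic path-support of $L[\ell]-L[\ell-1]$ rather than from the $\mathcal{P}^{\mathcal{T}}(v)$-restriction, a distinction that must be made carefully to match the stated time bound.
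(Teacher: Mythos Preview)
Your proposal is correct and follows essentially the same approach as the paper. The paper's proof is terser, delegating the correctness and runtime of $\textsc{Update}L$ and $\textsc{Update}H$ entirely to \cref{lem:update-L} and \cref{lem:update-H}, whereas you unpack those arguments inline; in particular, your treatment of the two-step rewrite in $\textsc{Update}L$ and the observation that the second step relies on the intrinsic path-support of $L[\ell]-L[\ell-1]$ rather than on invariant~(iii) is more explicit than the paper's, but the underlying reasoning is identical.
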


\begin{proof}
\textbf{Correctness: }First, we show that for the change on $L$,
it suffices to updates all nodes on the path $\mathcal{P}^{\mathcal{T}}(\low^{\mathcal{T}}(A_{i}))$.
We note that only (iii) of \cref{inv:W-simple-invariant}
depends on the value of $L[\ell]$, so we need to update the sketch only if $(L[\ell+1]-L[t_v])\cdot I_{\mathcal{D}^{\mathcal{T}}(v)\setminus\{v\}}\neq\boldsymbol{0}$.
Since the data structure satisfies the invariants for $\ell$, we have
$(L[\ell]-L[t_v])\cdot I_{\mathcal{D}^{\mathcal{T}}(v)\setminus\{v\}}=\boldsymbol{0}$
for all $v$. Therefore, we need to update the sketch only if 
\[
(L[\ell+1]-L[\ell])\cdot I_{\mathcal{D}^{\mathcal{T}}(v)\setminus\{v\}}\neq\boldsymbol{0}.
\]
We use $\new L$ to denote $L[\ell+1]$ and use $L$ to denote $L[\ell]$,
where $\new L(\new L)^{\top}=AH^{-1}A^{\top}+cA_{i}A_{i}^{\top}$
for some $c$ and $i$. Let $\Delta L=\new L-L$. By \cref{lem:cholesky-update-sparsity-pattern}, the non-zero columns of $\Delta L$ lies on $\mathcal{P}^{\mathcal{T}}(\low^{\mathcal{T}}(A_{i}))$.
We denote $\low^{\mathcal{T}}(A_{i})$ by $u$, and rewrite 
$\Delta L$ as $\sum_{w\in\mathcal{P}^{\mathcal{T}}(u)}(\Delta L)_{w}e_{w}^{\top}$.
For each $w\in\mathcal{P}^{\mathcal{T}}(u)$, we note that $(\Delta L)_{w}e_{w}^{\top}\cdot I_{\mathcal{D}^{\mathcal{T}}(v)\setminus\{v\}}\neq\boldsymbol{0}$
only if $v\in\mathcal{P}^{\mathcal{T}}(w)\setminus w$. Hence, it
suffices to update 
\[
\bigcup_{w\in\mathcal{P}^{\mathcal{T}}(u)}\mathcal{P}^{\mathcal{T}}(w)\setminus w\subseteq\mathcal{P}^{\mathcal{T}}(u).
\]

The function then uses two helper functions \textsc{Update}$L$ and \textsc{Update}$H$, whose correctness and run-time are given in \cref{lem:update-H} and \cref{lem:update-L}.

\textbf{Runtime: }By \cref{lem:cholesky-update-time}, we can find
$\new L$ in $O(\tau^{2})$ time and $L$ changes in $\tau$ columns. By \cref{lem:update-H}, 
$H$ changes in coordinate $i$, so we can update it in $O(\tau^2 \cdot \eta \cdot r)$ time.
Since $L$ changes in $\tau$ columns, we can update $L$ in the data structure in $O(\tau^{3}r)$ time by \cref{lem:update-L}.
Hence, the function takes $O(\tau^{2} \cdot \eta \cdot r)$
time in total.
\end{proof}
\begin{lemma}[$\textsc{Update}H$]
\label{lem:update-H}Suppose \cref{inv:W-simple-invariant} is satisfied. $\textsc{Update}H$ updates $H$ to $\new H$, and implicitly adjusts the sketches in $\mc{S}$ to preserve \cref{inv:W-simple-invariant} in $O(\nnz(\Delta H)\cdot\tau^{2}\cdot \eta \cdot r)$ time. 
\end{lemma}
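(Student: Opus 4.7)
The plan is to argue correctness and run-time separately, using the structure of $(\mathcal{S},\chi)$ and the fact that $H$ is block-diagonal so a change in one coordinate/block $i$ of $\ox$ only perturbs the single block $H_{i}$.

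For correctness, I would first observe which invariants of \cref{inv:W-simple-invariant} actually depend on $H$. Invariant (iii) is purely about $L[\ell]$ versus $L[t_v]$ and is untouched by $\textsc{Update}H$. Invariants (i), (ii), (iv) all depend on $H$ through $J_v = \Phi_{\chi(v)}H^{-1/2}A^{\top}$. Crucially, if only coordinate $i$ of $H$ changes, then $\Phi_{\chi(v)}(\Delta H^{-1/2})A^{\top}$ is non-zero only when $i\in\chi(v)$. By construction of $(\mathcal{S},\chi)$, the set of nodes $v$ with $i\in\chi(v)$ is exactly the path $\mathcal{P}^{\mathcal{S}}(u_i)$ from the root to the unique leaf $u_i$ with $\chi(u_i)=\{i\}$. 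So the algorithm correctly restricts its attention to this path. On each such node $u$, it recomputes $J_u$ directly from the definition (restoring (i)), and if moreover $u\in\mathcal{T}$, it recomputes $Z_u = J_u L[t_u]^{-\top}$ and $y^{\triangledown}_u = Z_u(I-I_{\mathcal{P}^{\mathcal{T}}(u)})h$ directly, restoring (ii) and (iv). Processing each coordinate change in $\Delta H$ one at a time, this restores all invariants at the end.

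For the run-time, I would account for the cost per coordinate change in $\Delta H$, then multiply by $\nnz(\Delta H)$. For a single coordinate $i$, the path $\mathcal{P}^{\mathcal{S}}(u_i)$ has length at most $\eta$. For each node $u$ on the path, the change $\Delta J_u$ is supported on the single column block $i$, which has $O(\tau)$ non-zero rows by \cref{lem:A-sparsity-pattern}; thus $\Delta J_u$ can be computed in $O(\tau \cdot r)$ time. For $u\in\mathcal{T}$, the incremental change $\Delta Z_u = \Delta J_u \cdot L[t_u]^{-\top}$ can be computed by applying $L[t_u]^{-\top}$ to an $O(\tau)$-sparse set of rows; by \cref{lem:Linv-sparsity-time,lem:L-column-sparsity-pattern} (applied to $L^{\top}$ via the transpose), this takes $O(\tau^2 \cdot r)$ time, and the resulting $\Delta Z_u$ has column support within $\mathcal{D}^{\mathcal{T}}(u)\cup\mathcal{P}^{\mathcal{T}}(\mathcal{T}(u))$ of size $O(\tau)$ in the relevant contracted sense. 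Then $\Delta y^{\triangledown}_u = \Delta Z_u \cdot (I - I_{\mathcal{P}^{\mathcal{T}}(u)}) h$ is a dot product over the support of $\Delta Z_u$, which costs an additional $O(\tau \cdot r)$. Summing over the $\eta$ ancestors yields $O(\tau^2 \cdot \eta \cdot r)$ per coordinate update, and $O(\nnz(\Delta H)\cdot \tau^2 \cdot \eta \cdot r)$ in total.

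The only delicate point, and the main thing to verify carefully, is that updating $Z_u$ via the incremental formula $Z_u \leftarrow Z_u + \Delta J_u \cdot L[t_u]^{-\top}$ is actually well-defined and cheap, even though the algorithm as written recomputes $Z_u = J_u L[t_u]^{-\top}$ from scratch: the point is that the sparsity pattern of $\Delta J_u$ (restricted to block $i$) combined with \cref{lem:Linv-sparsity-time} bounds the cost of the relevant triangular solves by $O(\tau^2 r)$ regardless. The node-at-most-$\eta$-ancestors bound together with the block-diagonality of $H$ (which guarantees a change in $\ox_i$ touches only $H_i$, hence a single column block of $J$) are the two ingredients that make the overall bound $O(\nnz(\Delta H)\cdot \tau^2 \cdot \eta \cdot r)$ tight.
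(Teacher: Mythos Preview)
Your proposal is correct and follows essentially the same approach as the paper: both argue that a change in coordinate $i$ of $H$ affects $J_v$ only when $i\in\chi(v)$, hence only the path $\mathcal{P}^{\mathcal{S}}(u_i)$ needs updating, and both bound the per-node cost by $O(\tau^2 r)$ via the sparsity of $A_i$ and \cref{lem:Linv-sparsity-time}. Your treatment is slightly more explicit about which invariants are affected and about the incremental-versus-recompute distinction for $Z_u$, but the argument is the same.
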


\begin{proof}
\textbf{Correctness:} We observe that $Z_v$ changes only if $I_{\chi(v)}\cdot\Delta H\neq\boldsymbol{0}$.
Suppose the $i$-th column of $H$ changes, and let $v$ be the node of $\mc{S}$ with $\chi(v)=\{i\}$. 
Then observe that for a change in $H_{i}$, it suffices to update
$\mathcal{P}^{\mathcal{S}}(v)$.

\textbf{Runtime: } 
For a change in $H_i$, let $\widetilde{H}\defeq(H+\Delta H\cdot I_{\{i\}})^{-1/2}-H^{-1/2}$. Then we can find $\Phi_{\chi(v)}\widetilde{H}A^{\top}$
by computing an outer product of a column of $\Phi$ with row of $A^{\top}$, 
which takes $O(\tau \cdot r)$ time by the sparsity pattern of $A$ (\cref{lem:A-sparsity-pattern}). 
Then for a node $v$, we can update $Z_v$ by compute $\Phi_{\chi(v)}\widetilde{H}A^{\top}L[t_v]^{-\top}$,
which takes $O(\tau^{2} \cdot r)$ time by \cref{lem:Linv-sparsity-time}. 
We can then update $y^{\triangledown}_v$
in $O(\tau^{2} \cdot r)$ time. 
As $\mathrm{height}(\mathcal{S}) = \eta$, and we only update along a path to the root, this
function takes $O(\tau^{2}\cdot \eta \cdot r)$ time for the update to $H_i$.
\end{proof}
\begin{lemma}[$\textsc{Update}L$]
\label{lem:update-L}
Given a set $S\subset V(\mathcal{T})$, the function $\textsc{Update}L$ updates $t_v$ to the latest time at each $v \in S$, and adjusts the implicit representation of the sketch at $v$ to preserve \cref{inv:W-simple-invariant}.
If the number of non-zero columns of $\Delta L$ is bounded by $O(\tau)$, then the function takes $O(|S|\cdot\tau^{2} \cdot r)$ time. 
\end{lemma}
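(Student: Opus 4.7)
The plan is to verify the two assertions (correctness of invariant maintenance and run-time) of $\textsc{Update}L$ node-by-node, leveraging the two-stage update of $Z_v$ together with the sparsity results of \cref{lem:L-column-sparsity-pattern,lem:Linv-sparsity-time,lem:cholesky-update-sparsity-pattern,lem:Z*-formula}.

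For correctness, fix $v\in S$. By hypothesis, the invariants hold at the point of the call, so in particular invariant (iii) gives $(L[\ell-1]-L[t_v])_{\mathcal{D}^{\mathcal{T}}(v)\setminus\{v\}}=\mathbf{0}$ (this is the state at the start of $\textsc{UpdateCoordinate}$ before the new Cholesky factor $L[\ell]$ is produced). The first assignment
$Z_v \leftarrow Z_v - (L[\ell-1]^{-1}(L[\ell-1]-L[t_v])_{\mathcal{P}^{\mathcal{T}}(v)}Z_v^\top)^\top$
is precisely the conclusion of \cref{lem:Z*-formula} applied with target time $\ell-1$; hence after this line $Z_v = J_v\cdot L[\ell-1]^{-\top}$. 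The second assignment uses the direct identity
\[
J_v L[\ell]^{-\top} - J_v L[\ell-1]^{-\top} = -\bigl(L[\ell]^{-1}(L[\ell]-L[\ell-1])\,(J_v L[\ell-1]^{-\top})^\top\bigr)^\top,
\]
which I would derive by multiplying $L[\ell]$ through both sides of $L[\ell]\,Z_v^{*\top} = J_v^\top = L[\ell-1]\,Z_v^\top$; this identity needs no sparsity hypothesis, which is important because after computing $L[\ell]$ invariant (iii) may be temporarily violated at $v$. Combining the two assignments yields $Z_v = J_v L[\ell]^{-\top}$, so (ii) holds; setting $t_v \leftarrow \ell$ then makes (iii) trivial. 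Finally, $y^{\triangledown}_v$ is restored from the updated $Z_v$ by its defining formula, so (iv) holds.

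For the run-time, I bound the cost per node $v\in S$. By \cref{lem:L-column-sparsity-pattern} each column of $L$ has sparsity at most $\tau$, and $|\mathcal{P}^{\mathcal{T}}(v)|\le\tau$, so $(L[\ell-1]-L[t_v])_{\mathcal{P}^{\mathcal{T}}(v)}$ is supported on a $\tau\times\tau$ block. Multiplying this block against the $r$ rows of $Z_v^\top$ yields an output whose non-zero rows lie on $\mathcal{P}^{\mathcal{T}}(v)$ and costs $O(\tau^2 r)$; applying $L[\ell-1]^{-1}$ to each of the $r$ right-hand sides preserves this path support by \cref{lem:Linv-sparsity-time} and costs another $O(\tau^2 r)$. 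For the second update, \cref{lem:cholesky-update-sparsity-pattern} (invoked in $\textsc{UpdateCoordinate}$) guarantees that $L[\ell]-L[\ell-1]$ has $O(\tau)$ non-zero columns lying on one path of $\mathcal{T}$, so the same two-step analysis gives $O(\tau^2 r)$. Both assignments perturb $Z_v$ only in $O(\tau)$ columns, so the induced change to $y^{\triangledown}_v$ can be applied incrementally in $O(\tau r)$ time (or, if one prefers to recompute from scratch, it is still bounded by $O(\tau^2 r)$ using the same path sparsity). Summing over $|S|$ nodes yields the advertised $O(|S|\cdot\tau^2\cdot r)$ total.

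The main subtlety I expect is the second step: invariant (iii) may genuinely fail at $v$ in the instant between producing $L[\ell]$ and completing the update, so one cannot simply re-invoke \cref{lem:Z*-formula} with target $\ell$ and exploit the column restriction. I want to make clear that the unrestricted identity above is used, and that its correctness does \emph{not} depend on (iii), only on the algebraic relation $L[\ell]Z_v^{*\top}=L[\ell-1]Z_v^\top$. A secondary point to handle carefully is that ``$Z_v$'' must be interpreted as a sparse object whose non-zero columns are tracked (supported in $\mathcal{D}^{\mathcal{T}}(v)\cup\mathcal{P}^{\mathcal{T}}(v)$ by \cref{lem:Z-sparsity}) so that the products above are actually computed in the claimed $O(\tau^2 r)$ time rather than by dense matrix operations; this requires observing that every update applied so far preserves this sparsity pattern, which follows inductively from the sparsity of each correction term.
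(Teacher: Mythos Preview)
Your proposal is correct and follows essentially the same approach as the paper, which also cites \cref{lem:Z*-formula} for correctness and the column sparsity of $L$ together with \cref{lem:Linv-sparsity-time} for the $O(|S|\cdot\tau^2 r)$ runtime. You are actually more careful than the paper on one point: you explicitly separate the first update (from $t_v$ to $\ell-1$, where invariant~(iii) justifies the restriction to $\mathcal{P}^{\mathcal{T}}(v)$) from the second (from $\ell-1$ to $\ell$, where the unrestricted identity---an intermediate step in the proof of \cref{lem:Z*-formula}---is used because (iii) may momentarily fail), whereas the paper's proof simply says ``correctness directly follows by \cref{lem:Z*-formula}'' without making this distinction explicit.
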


\begin{proof}
\textbf{Correctness: }The correctness directly follows by \cref{lem:Z*-formula}.

\textbf{Runtime: }By the sparsity pattern of\textbf{ $L$ }(\cref{lem:L-column-sparsity-pattern}), we can compute $(L[\ell]-L[t_v])\cdot I_{\mathcal{P}^{{\tt }}(\mathcal{T}(v))}\cdot (Z_v)^{\top}$
and $\Delta L\cdot (Z_v)^{\top}$ in $O(\tau^{2} r )$ time,
and the column sparsity pattern of the result is on a path in  $\mathcal{T}$.
Then, we can update $Z_v$ and $y^{\triangledown}_v$ in  $O(\tau^{2}r)$ time
by solving $O(r)$ many lower triangular systems using 
\cref{lem:Linv-sparsity-time}. Hence, the total time is bounded by
$O(|S|\cdot\tau^{2} r)$.
\end{proof}

\subsection{\label{subsec:balanced-sampling-tree}Sketching the Multiscale Representation via Balanced Sampling Tree}

\begin{table}[H]
\begin{centering}
\begin{tabular}{|c|c|c|}
\hline 
 & Simple Sampling Tree & Balanced Sampling Tree\tabularnewline
\hline 
Sampling tree height & $\wt O(\tau)$ & $\wt O(1)$\tabularnewline
\hline
JL dimension &  $\wt O(\tau^2)$ & $\wt O(1)$\tabularnewline
\hline 
Initialization time & $\wt O(n \tau^4 )$ & $\wt O(n\tau^{2} )$\tabularnewline
\hline 
Update $\mathcal{W}$ time & $\wt O(\|\new \ox-\ox\|_{0}\cdot\tau^{5} )$ & $\wt O(\|\new \ox-\ox\|_{0}\cdot\tau^{2} )$\tabularnewline
\hline 
Update $h$ time & $\wt O(\|\new h-h\|_{0}\cdot\tau^3 )$ & $\wt O(\| \new h-h\|_0)$\tabularnewline
\hline 
Query time  & $\wt O(\tau^{4} )$ & $\wt O(\tau^{2} )$\tabularnewline
\hline 
Query time $\times$ tree height & $\wt O(\tau^{5}  )$ & $\wt O(\tau^{2} )$\tabularnewline
\hline 
\end{tabular}
\par\end{centering}
\caption{Comparison between two sampling trees.}
\end{table}

Combining the simple sampling tree data structure with our IPM algorithm will give us a \\
$\wt O(n\tau^5\log(1/\eps))$ algorithm for solving LPs.
To make our algorithm competitive when $\tau$ is large, 
we demonstrate how to further speed up~(\cref{alg:simple-W-sketch,alg:simple-W-sketch-1}) to $\wt O(\tau^{2})$
per step in this section. More specifically, we have the following theorem:
\begin{theorem}
\label{thm:balanced-W-sketch}Given the constraint matrix $A$, its elimination tree $\mathcal{T}$ with height $\tau$, a 
JL matrix $\Phi\in\R^{r\times n}$, 
and a sampling tree $(\mathcal{S},\chi)$ constructed as in \cref{subsec:balanced-sampling-tree-construction} with height $O(\log n)$,
the data structure $\textsc{BalancedSketch}$
(\cref{alg:balanced-W-sketch,alg:balanced-W-sketch-1}) maintains $\Phi_{\chi(v)}\mathcal{W}^{\top}h$ for each $v \in V(\mathcal{S})$ through the following operations:
\begin{itemize}
\item $\textsc{Initialize}(\mathcal{S},\chi,\Phi,\ox,h)$: Initializes the data structure in $O(n\tau^{2}r \log n)$
time, so that each node $v \in \mc{S}$ maintains the sketch $\Phi_{\chi(v)} \mc{W}^\top h = \Phi_{\chi(v)} H_{\ox}^{-1/2}A^\top L_{\ox}^\top$.
\item $\textsc{Update}(\new{\ox},\new h)$: Updates all sketches in $\mathcal{S}$ implicitly to reflect $\mathcal{W}$ updating to $\new{\mathcal{W}}$ and $h$ updating to $\new h$
in $O(\|\new{\ox}-\ox\|_{0}\cdot\tau^{2}r\log n)+O(\|\new h - h\|_{0}\cdot r \log n)$ time, where $\new{\mathcal{W}}$ is given implicitly by $\new{\ox}$. 
\item $\textsc{Query}(v)$: Outputs $\Phi_{\chi(v)}\mathcal{W}^{\top}h$
in $O(\tau^{2} \cdot r)$ time.
\end{itemize}
\end{theorem}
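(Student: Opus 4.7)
The plan is to replicate the structure of \textsc{SimpleSketch} from \cref{subsec:simple-sampling-tree}, but over a sampling tree $(\mathcal{S},\chi)$ of polylogarithmic rather than $\Theta(\tau)$ height. In the simple version, both the JL dimension (which, by \cref{thm:approximate-ell-infty}, scales as $\eta^2$) and the per-node update cost scale with the tree height $\eta$; compressing $\eta$ to $\polylog(n)$ drops $r$ to $\widetilde O(1)$ and removes a $\tau^3$ factor from the update cost, directly matching the bounds claimed in \cref{thm:balanced-W-sketch}. The target is therefore to produce the same data structure with a different backing tree and a re-accounting of costs, not a new algorithmic idea at the sketching level.

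To build $\mathcal{S}$, I would apply a heavy-light decomposition to the elimination tree $\mathcal{T}$: mark heavy edges so that every root-to-leaf path of $\mathcal{T}$ crosses $O(\log n)$ light edges, then replace each maximal heavy path by a balanced binary tree over its vertices and splice these per-path trees along the light-edge hierarchy. Leaves of $\mathcal{S}$ still correspond to columns of $A$, placed under the $\mathcal{T}$-node $\low^\mathcal{T}(A_j)$, and $\chi$ is defined recursively as in \cref{eq:sampling-tree-chi}. The resulting $\mathcal{S}$ has height $O(\log^2 n) = \widetilde O(1)$, and the ancestor-descendant relation on $V(\mathcal{T}) \subseteq V(\mathcal{S})$ is preserved: for $v \in V(\mathcal{T})$, $u \in \chi(v)$ iff $u$ is a $\mathcal{T}$-descendant of $v$ in the original simple-tree sense.

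Because this structural property is all that was used in \cref{subsec:simple-ds}, the sparsity lemmas \cref{lem:J-sparsity,lem:Z-sparsity,lem:Z*-formula} remain valid verbatim. At each node of $\mathcal{S}$ I maintain exactly the same variables $J_v$, $Z_v$, $y^\triangledown_v$, and $t_v$ under the analogue of \cref{inv:W-simple-invariant}, and port \textsc{Initialize}, \textsc{Query}, \textsc{UpdateL}, \textsc{UpdateH}, and the $h$-update line-for-line from \cref{alg:simple-W-sketch,alg:simple-W-sketch-1}. Re-running the cost analyses of \cref{lem:simple-initialize,lem:simple-query,lem:simple-updateW,lem:update-coordinate,lem:update-L,lem:update-H} with $\eta = O(\log^2 n)$ and $r = \widetilde O(1)$ then yields $O(n\tau^2 r \log n)$ initialization (the $\tau^2$ comes from the intrinsic column sparsity of $L$ via \cref{lem:Linv-sparsity-time}, and the $\log n$ from the $\mathcal{S}$-height), $O(\tau^2 r)$ query (independent of $\eta$, since the walk uses $\mathcal{P}^\mathcal{T}(v)$, not $\mathcal{P}^\mathcal{S}(v)$), $O(\|\new \ox - \ox\|_0 \cdot \tau^2 r \log n)$ for the $\mathcal{W}$-update, and $O(\|\new h - h\|_0 \cdot r \log n)$ for the $h$-update.

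The delicate step I expect to be the main obstacle is verifying that the heavy-light splice really preserves every invariant used in \textsc{UpdateL}: when the $\tau$ affected $\mathcal{T}$-nodes of $\mathcal{P}^\mathcal{T}(\low^\mathcal{T}(A_i))$ are scattered across $O(\log n)$ heavy paths inside $\mathcal{S}$, the lazy snapshot differences $L[\ell] - L[t_v]$ for $v$'s on different heavy paths must still compose through \cref{lem:Z*-formula} so that each $Z_v$ is correctly refreshed in $O(\tau^2 r)$ work. I also need to tune the JL dimension so that, after a union bound over all $\widetilde O(\sqrt n \log(1/\varepsilon))$ sampling queries traversing a polylog-height tree via \cref{thm:approximate-ell-infty}, the overall failure probability stays below a constant. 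Once this bookkeeping is in place, correctness follows by the same invariant-preservation arguments as in the simple case, and the quantitative bounds are immediate from the re-accounting above.
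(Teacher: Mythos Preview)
Your central claim---that after the heavy-light splice ``the ancestor-descendant relation on $V(\mathcal{T}) \subseteq V(\mathcal{S})$ is preserved''---is impossible, and this is the whole point of \cref{subsec:balanced-sampling-tree}. If $u$ is a $\mathcal{T}$-ancestor of $v$ and this forces $u$ to be an $\mathcal{S}$-ancestor of $v$, then any root-to-leaf path in $\mathcal{T}$ (length up to $\tau$) embeds as a chain of $\tau$ distinct nodes in $\mathcal{S}$, so $\mathrm{height}(\mathcal{S}) \geq \tau$. No rebalancing can give $O(\log^2 n)$ height while keeping this property. Concretely, in your construction the vertices of a heavy path become leaves (or near-leaves) of a balanced binary tree, hence siblings in $\mathcal{S}$; for the top vertex $v_1$ of a heavy path, $\chi(v_1)$ no longer contains the columns hanging under its heavy descendants $v_2,v_3,\dots$, contradicting your ``iff''.

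Because this fails, you cannot port \cref{inv:W-simple-invariant} and \cref{lem:J-sparsity,lem:Z-sparsity,lem:Z*-formula} verbatim. The paper explicitly says ``The balanced sampling tree does not preserve ancestor-descendant relationship of the vertices of $\mathcal{T}$'' and replaces $\mathcal{P}^{\mathcal{T}}(v)$ and $\mathcal{D}^{\mathcal{T}}(v)$ by new sets $\Lambda(v)$ and $\overline{\Lambda}(v)$: for a node $v\in\mathcal{B}$ whose $\overline{\chi}(v)$ is a contiguous block of the heavy-light order, $\Lambda(v)$ is the set of $\mathcal{T}$-vertices lying on paths that cross the block boundary. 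The key structural fact is \cref{lem:heavy-light-intersection-size}, which bounds $|\Lambda(v)|\le 2\tau$; this is what lets the new invariant (iii) $(L[\ell]-L[t_v])\cdot I_{\overline{\Lambda}(v)}=0$ and the new $Z^*$-formula (\cref{lem:balanced-Z*-formula}) work with only $O(\tau)$ ``lazy'' columns per node. A further map $\Lambda^{\clubsuit}:\mathcal{T}\to\mathcal{B}$ (\cref{lem:Lambda-club-path,lem:Lamda-clubsuit-path}) is needed to route the $h$-update and \textsc{UpdateL} along an $O(\log n)$-length path in $\mathcal{B}$ rather than the $\tau$-length path $\mathcal{P}^{\mathcal{T}}(i)$ your ported algorithm would walk. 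The ``delicate step'' you flagged is not a verification but a genuine redesign of the invariant; without the $\Lambda/\overline{\Lambda}/\Lambda^{\clubsuit}$ machinery (or an equivalent), the claimed $O(\|\new h-h\|_0\cdot r\log n)$ and $O(\tau^2 r\log n)$ update bounds do not follow.
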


We observe that the data structures in \cref{subsec:simple-sampling-tree}
has the same $\Omega(\tau^{3})$ bottleneck for both updating the multiscale representation and sampling: 
The data structures are always operating on some path of the sampling tree $\ts$, where we need to solve the lower triangular systems for each node on that path. 
In this section, we show how to obtain a balanced
sampling tree with $O(\log n)$ height and thus speeds up each operation to $\wt O(\tau^{2})$. 

\subsubsection{Balanced Sampling Tree Construction} \label{subsec:balanced-sampling-tree-construction}

As a first ingredient in our construction is the following lemma from Sleator and Tarjan's heavy-light decomposition.
\begin{lemma}[{Heavy-Light Decomposition \cite{DBLP:journals/jcss/SleatorT83}}]
	\label{lem:heavy-light-decomposition}
	Given a rooted tree $\mathcal{T}$,
	there exists an ordering of vertices of $V(\mathcal{T})$ such that the path between any two vertices consists of at most $O(\log n)$ many contiguous subsequences of the ordering, 
	and for any vertices $v$,
	the subtree rooted at $v$ corresponds to a single contiguous
	subsequence of the order. Moreover, such an ordering can be found in $O(n)$ time. \qed
	
	\begin{figure}
		\begin{centering}
			\includegraphics[width=10cm]{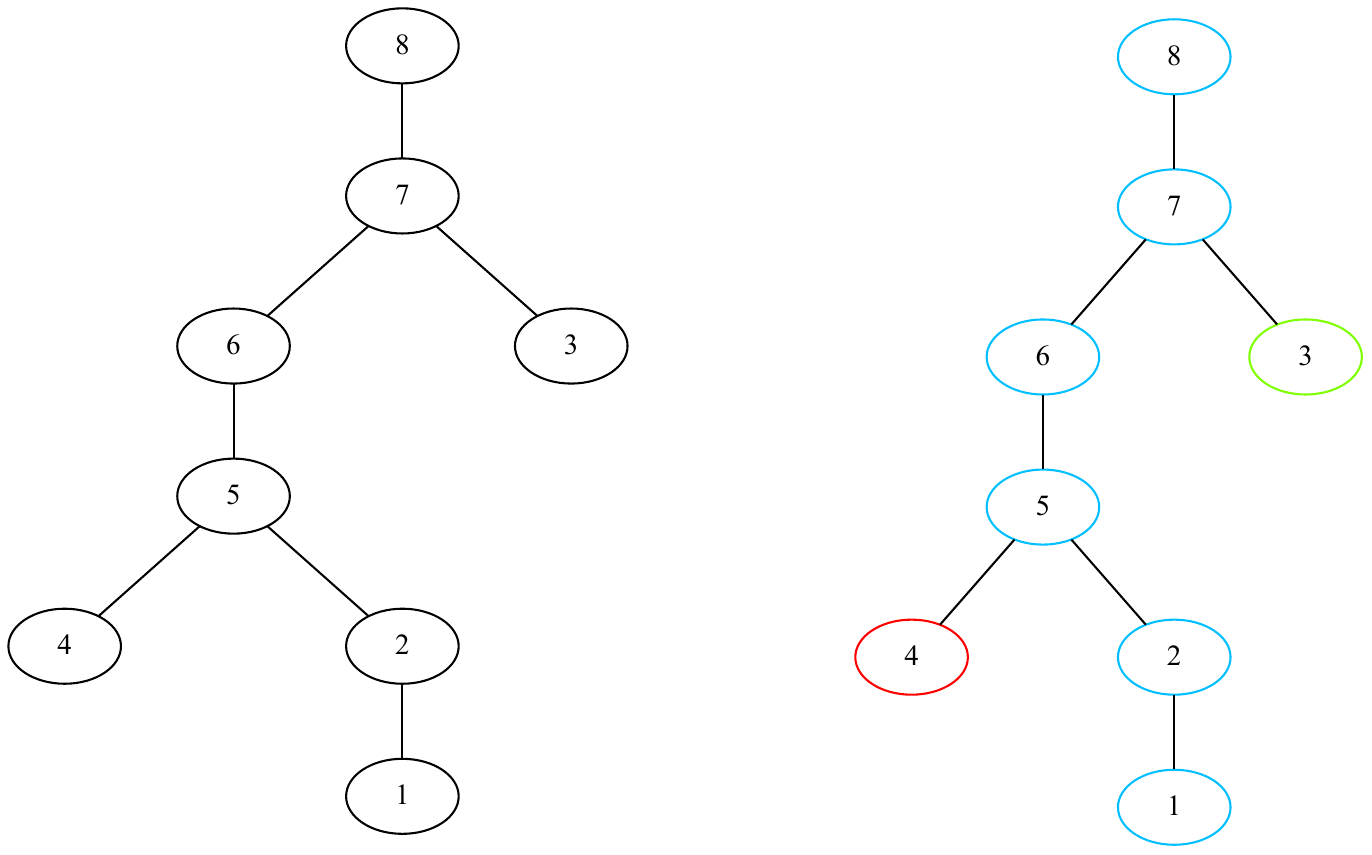}
			\par\end{centering}
		\caption{\label{fig:heavy-light}A rooted tree is given on the left, its heavy-light decomposition is shown on the right; the ordering is $[8,7,6,5,2,1,4,3]$.}
	\end{figure}
	\end{lemma}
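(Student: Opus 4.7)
My plan is to prove this via the classical heavy-light decomposition of Sleator and Tarjan, combined with a depth-first traversal that prioritizes heavy children.

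First I would compute, in a single linear-time pass, the subtree size $s(v)$ at every vertex (e.g.\ by a postorder traversal). For each non-leaf vertex $v$, designate one child $c(v)$ with the largest value of $s(c(v))$ as its \emph{heavy child}, calling the edge $v c(v)$ a \emph{heavy edge}; all other child edges are \emph{light}. This partitions $E(\mathcal{T})$ into maximal \emph{heavy paths}, each a vertex-disjoint chain of heavy edges. I would then produce the desired ordering $\pi$ by running a DFS from the root that, at every internal vertex, recurses into the heavy child first, followed by the light children in arbitrary order. The DFS runs in $O(n)$ time, so the third claim is immediate.

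Second, the subtree property is inherited from any DFS: when DFS enters $v$, it leaves $v$ only after visiting every descendant, so the subtree rooted at $v$ occupies a contiguous block $[\pi(v), \pi(v) + s(v) - 1]$. The ``heavy-child-first'' rule additionally guarantees that each maximal heavy path $v_1 \to v_2 \to \cdots \to v_k$ (with $v_{i+1}$ the heavy child of $v_i$) appears as a contiguous prefix of the subtree block of $v_1$: after visiting $v_1$, the DFS immediately descends into $v_2$, and by induction along the heavy path the entries $\pi(v_1), \pi(v_2), \dots, \pi(v_k)$ are consecutive integers.

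Third, for the path-decomposition bound, I would consider two arbitrary vertices $u, w$ and their lowest common ancestor $a$. The $u$-$w$ path splits into an ascending portion $u \leadsto a$ and descending portion $a \leadsto w$; it suffices to bound the ascending portion. Walking from $u$ upward, each maximal run of consecutive heavy edges lies on a single heavy path and therefore forms a single contiguous subsequence in $\pi$ by the previous paragraph. Between two such runs, one must traverse exactly one light edge. The key quantitative fact is that crossing a light edge from child to parent at least doubles the subtree size: if $vy$ is a light edge with $y$ the parent, then $s(y) \geq 2 s(v)$, because the heavy child $c(y)$ satisfies $s(c(y)) \geq s(v)$ by maximality, and $s(y) \geq s(c(y)) + s(v)$. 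Hence at most $\log_2 n$ light edges appear on any root-directed path, so the ascending portion decomposes into at most $\log_2 n + 1$ contiguous subsequences of $\pi$; combining the two sides of the $\mathrm{lca}$ gives the required $O(\log n)$ bound.

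The only subtle step is the second part of the heavy-path contiguity argument, since one must confirm that the heavy descendants are visited \emph{before} any light siblings interleave; this follows cleanly from the DFS priority rule, so I do not anticipate any real obstacle beyond carefully stating the invariants. The whole argument is $O(n)$ total time: computing subtree sizes, selecting heavy children, and the final DFS are each linear.
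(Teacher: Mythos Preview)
Your argument is correct and is the standard proof of the heavy-light decomposition ordering. The paper itself does not supply a proof of this lemma: it is stated with a \qed and attributed to Sleator and Tarjan, so there is nothing to compare against beyond noting that your writeup matches the classical construction.
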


First, we construct a sampling tree of $\R^{d}$, denoted by $(\mathcal{B},\overline{\chi})$, as follows:
We perform heavy-light decomposition on the elimination tree ${\mathcal{T}}$ with vertex set $[d]$. Let $\sigma_1,\dots,\sigma_d$ denote the vertices ordered according to \cref{lem:heavy-light-decomposition}.
Let $\mathcal{B}$ be a complete binary tree containing $d$ leaves, where the $i$-th leaf is $\sigma_i \in [d]$. 
We set $\overline{\chi}(\sigma_i) = \{\sigma_i\}$. 
For each non-leaf node $v \in\mathcal{B}$, we let $\overline{\chi}(v)=\overline{\chi}(\text{left child of \ensuremath{v}})\cup\overline{\chi}(\text{right child of \ensuremath{v}})$.
It is easy to check $(\mathcal{B},\overline{\chi})$ is a sampling
tree of $\R^{d}$ by \cref{def:sampling-tree}.

Now, we extend the sampling tree $(\mathcal{B},\overline{\chi})$
on $\R^{d}$ to $\R^{n}$ to obtain the \emph{balanced sampling tree $(\mathcal{S},\chi)$}: 
At each leaf node $v\in\mathcal{B}$, we add a complete binary tree with vertex set  
\[
\{i\in[n]\mid\text{coordinate $i$ in $j$-th block and }\block_low^{\mathcal{T}}(A_{j})=\overline{\chi}(v)\},\]
where $\block_low(A_j)$ is defined by  
\[
\block_low(A_j) =\arg \max_{i \in \{i\mid A_{ij}\neq \boldsymbol{0}\}} \depth(i).
\]
We denote this modified binary tree as $\mathcal{S}$. 
Then, each leaf node $u$ of $\mathcal{S}$ corresponds to some $i\in[n]$, and we set $\chi(u)=\{i\}$. We can check that 
for any leaf node $v\in\text{\ensuremath{\mathcal{B}}}$,
\[
	\chi(v)=\{i\in[n]\mid \text{coordinate $i$ in $j$-th block and }\block_low^{\mathcal{T}}(A_{j})=\overline{\chi}(v)\}.
\]
We define this $(\mathcal{S},\chi)$ to be our \emph{balanced sampling tree}. An example is shown in \cref{fig:balanced-sampling-tree-example}. 

Since the height of $\mathcal{B}$ is $\log d$, and the height of the newly
added binary trees are at most $\log n$, the height of $\mathcal{S}$ is $O(\log n)$.

\begin{theorem}\label{thm:balanced-sampling-tree-construct-time}
	Given an elimination tree $\mathcal{T}$ with height $\tau$, the balanced sampling tree can be constructed in $O(n\tau+n\log n)$ time.
\end{theorem}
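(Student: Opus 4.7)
The plan is to mirror the construction recipe described in \cref{subsec:balanced-sampling-tree-construction} and bound the cost of each stage, exploiting the column-sparsity of $A$ (\cref{lem:A-sparsity-pattern}) and the linear-time heavy-light decomposition (\cref{lem:heavy-light-decomposition}).

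First I would build the intermediate sampling tree $(\mathcal{B},\overline{\chi})$ on $\mathbb{R}^d$. By \cref{lem:heavy-light-decomposition}, the ordering $\sigma_1,\dots,\sigma_d$ of $V(\mathcal{T})$ is produced in $O(d)$ time; assembling a complete binary tree $\mathcal{B}$ on the $d$ leaves in that order, together with parent/child pointers and the labels $\overline{\chi}$ at the leaves, takes another $O(d)$ time. Since $d\leq n$, this contributes $O(n)$. I would also precompute $\mathsf{depth}^{\mathcal{T}}(i)$ for every $i\in[d]$ by a single traversal of $\mathcal{T}$ in $O(d)=O(n)$ time, as these depths are needed to identify $\overline{\mathsf{low}}^{\mathcal{T}}$.

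Next I would compute $\overline{\mathsf{low}}^{\mathcal{T}}(A_j)$ for every column block $j\in[m]$. By \cref{lem:A-sparsity-pattern}, the non-zero rows of block $A_j$ lie on a single root-to-leaf path in $\mathcal{T}$ and number at most $\tau$; scanning them while tracking the maximum depth yields $\overline{\mathsf{low}}^{\mathcal{T}}(A_j)$ in $O(\tau)$ time. Summing over the $m\leq n$ blocks gives $O(n\tau)$. Identifying the leaf of $\mathcal{B}$ corresponding to $\overline{\mathsf{low}}^{\mathcal{T}}(A_j)$ is an $O(1)$ lookup once we have stored, for each $i\in[d]$, a pointer to its leaf in $\mathcal{B}$ (done in $O(d)$ during the construction of $\mathcal{B}$).

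Finally I would extend $\mathcal{B}$ to $\mathcal{S}$. Bucketing the $n$ coordinates by the leaf of $\mathcal{B}$ to which they are assigned takes $O(n)$ time (one pass over the blocks, distributing their $n_j$ coordinates into the appropriate bucket). For each bucket of size $k$, building a complete binary tree on those $k$ coordinates and hanging it under the corresponding leaf of $\mathcal{B}$ costs $O(k)$, and $\sum_v k_v=n$, so this phase is $O(n)$; the labels $\chi$ at the leaves of $\mathcal{S}$ are set during this step, and labels at internal nodes remain implicit. Adding everything up gives $O(n\tau)+O(n)+O(n\log n)$ where the $O(n\log n)$ is a conservative upper bound absorbing any auxiliary bookkeeping (e.g., writing parent pointers across the up-to-$\log n$ levels of each of the attached binary trees), which matches the stated bound. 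No step here presents a real obstacle; the only place that requires care is the $O(\tau)$ bound for computing $\overline{\mathsf{low}}^{\mathcal{T}}(A_j)$, which is precisely where \cref{lem:A-sparsity-pattern} is essential.
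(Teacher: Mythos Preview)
Your proposal is correct and follows essentially the same route as the paper's proof: heavy-light decomposition in linear time, building the binary tree $\mathcal{B}$, computing $\overline{\mathsf{low}}^{\mathcal{T}}(A_j)$ in $O(\tau)$ per block via \cref{lem:A-sparsity-pattern}, and then attaching the per-leaf balanced binary trees. Your accounting is in fact slightly tighter in places (e.g., you observe that building $\mathcal{B}$ itself is $O(d)$ rather than the paper's $O(d\log d)$, and that the attached subtrees cost $O(n)$ rather than $O(n\log n)$), but the argument and the key use of column sparsity are identical.
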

\begin{proof}
	By \cref{lem:heavy-light-decomposition}, we find the heavy-light-decomposition order in linear time.
	Since $\mathcal{B}$ is a binary tree on this order, we can construct $(\mathcal{B},\overline{\chi})$ in $O(d\log d)$ time.
	For each block $A_j$ where $j\in [m]$, we can find $\block_low(A_j)$ in time $O(\tau)$ since $\nnz(A_j)=O(\tau)$ (\cref{lem:A-sparsity-pattern}). 
	Hence, we can find $\block_low(A_j)$ for all $j\in [m]$ in time $O(n\tau)$. This gives us $\chi(v)$ for all $v\in \mathcal{B}$.
	Finally, we can construct $(\mathcal{S},\chi)$ in time $O(n\log n)$.
	Hence, we can construct $(\mathcal{S},\chi)$ in $O(n\tau + n\log n)$ time.
\end{proof}

The balanced sampling tree does not preserve ancestor-descendant relationship of the vertices of $\mc{T}$. However, the following lemma about the Heavy-Light
Decomposition will help us get something close.

\begin{figure}
\begin{centering}
\includegraphics[width=16cm]{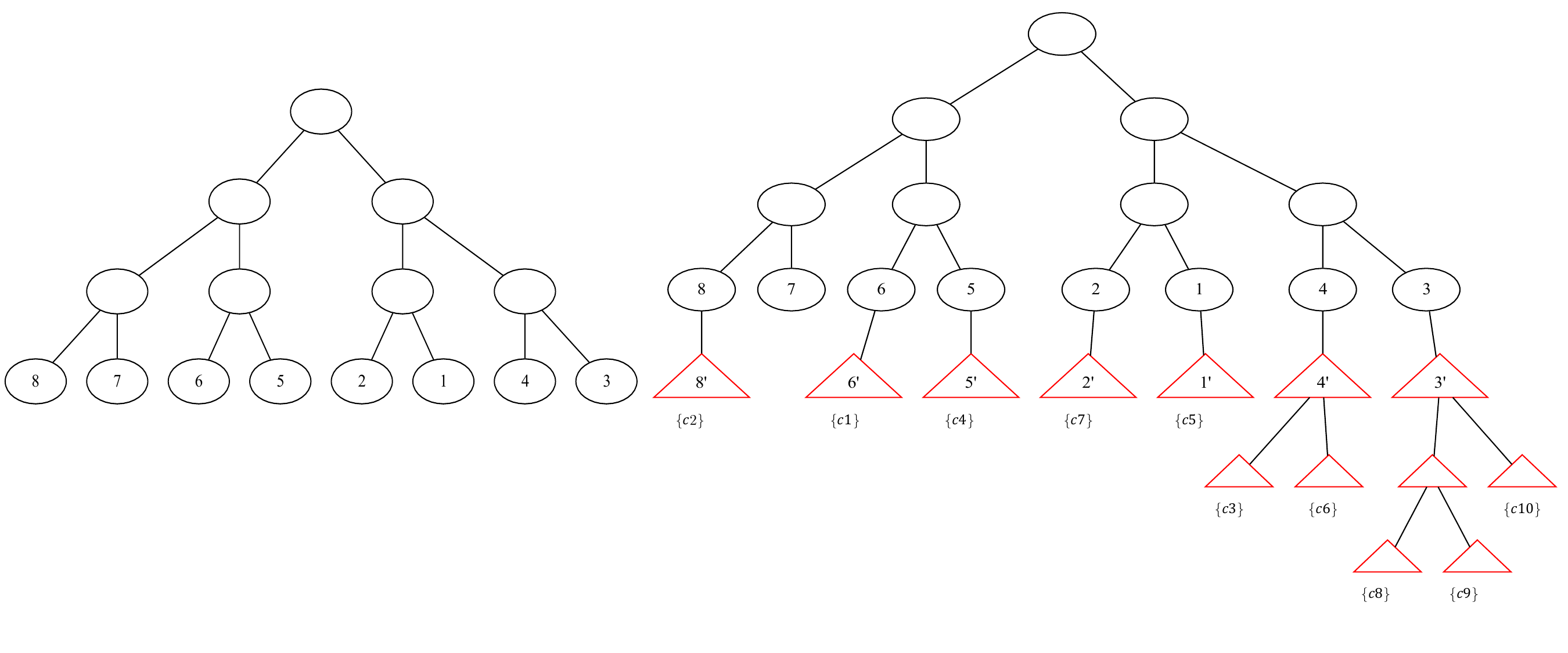}
\par\end{centering}
\caption{Examples of tree $\mathcal{B}$ (left) and $\mathcal{S}$ (right) constructed
from the elimination tree in \cref{fig:heavy-light}. 
The red triangle in the right graph denotes the newly added nodes in the tree.
The bracket under each leaf node denotes the corresponding column maintained by that node.}\label{fig:balanced-sampling-tree-example}

\end{figure}

\begin{lemma}
\label{lem:heavy-light-intersection-size}Let the sequence $a_{1},a_{2},\ldots,a_{n}$
be the order produced by Heavy-Light Decomposition on tree ${\mathcal{T}}$.
For any contiguous subsequence $a_{l},a_{l+1},\ldots,a_{r}$, we have
\[
\left|\left(\bigcup_{i\in[l,r]}\mathcal{P}^{{\mathcal{T}}}(a_{i})\right)\cap\left(\bigcup_{i\in[1,n]\setminus[l,r]}\mathcal{P}^{{\mathcal{T}}}(a_{i})\right)\right|\leq2\cdot\mathsf{height({\mathcal{T}).}}
\]
\end{lemma}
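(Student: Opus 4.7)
The plan is to characterize the intersection of the two unions as the set of vertices whose subtree ``straddles'' the boundary of the window $[l,r]$, and then bound this set by the ancestors of the two boundary vertices $a_l$ and $a_r$.

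First, I would reinterpret the sets. A vertex $u$ lies in $\bigcup_{i\in[l,r]}\mathcal{P}^{\mathcal{T}}(a_{i})$ iff $u$ is on the root-path of some $a_i$ with $i\in[l,r]$, equivalently, the subtree $\mathcal{D}^{\mathcal{T}}(u)$ rooted at $u$ contains some $a_i$ with $i\in[l,r]$. Similarly for the complement. So $u$ is in the intersection iff $\mathcal{D}^{\mathcal{T}}(u)$ contains at least one $a_i$ with $i\in[l,r]$ and at least one $a_j$ with $j\in[1,n]\setminus[l,r]$.

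The second step invokes the key property of the Heavy-Light Decomposition order guaranteed by \cref{lem:heavy-light-decomposition}: the subtree rooted at any vertex $u$ corresponds to a contiguous subsequence, i.e. $\{a_i : a_i\in\mathcal{D}^{\mathcal{T}}(u)\}=\{a_{p_u},a_{p_u+1},\ldots,a_{q_u}\}$ for some $p_u\leq q_u$. Then $u$ belongs to the intersection iff the interval $[p_u,q_u]$ meets both $[l,r]$ and $[1,n]\setminus[l,r]$. A quick case analysis on how $[p_u,q_u]$ overlaps $[l,r]$ shows this happens iff either $p_u<l\leq q_u$ (in which case $l\in[p_u,q_u]$, so $a_l\in\mathcal{D}^{\mathcal{T}}(u)$, meaning $u\in\mathcal{P}^{\mathcal{T}}(a_l)$) or $p_u\leq r<q_u$ (symmetrically, $u\in\mathcal{P}^{\mathcal{T}}(a_r)$).

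Consequently, every vertex in the intersection lies in $\mathcal{P}^{\mathcal{T}}(a_l)\cup\mathcal{P}^{\mathcal{T}}(a_r)$, and since each root-path has length at most $\mathsf{height}(\mathcal{T})$, we conclude
\[
\left|\left(\bigcup_{i\in[l,r]}\mathcal{P}^{\mathcal{T}}(a_{i})\right)\cap\left(\bigcup_{i\in[1,n]\setminus[l,r]}\mathcal{P}^{\mathcal{T}}(a_{i})\right)\right|\leq |\mathcal{P}^{\mathcal{T}}(a_l)|+|\mathcal{P}^{\mathcal{T}}(a_r)|\leq 2\cdot\mathsf{height}(\mathcal{T}).
\]
There should be no real obstacle; the argument is essentially a direct consequence of the contiguous-subsequence property, and the only thing to be slightly careful about is the boundary cases when $l=1$ or $r=n$, which simply remove one of the two contributing root-paths (so the bound still holds and is even stronger).
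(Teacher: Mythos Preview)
Your proof is correct and, in fact, cleaner than the paper's. Both arguments rely on the same underlying fact from \cref{lem:heavy-light-decomposition}, that each subtree occupies a contiguous block of the ordering, but they deploy it differently. The paper first proves a monotonicity lemma: for $l_{1}\le l_{2}\le r_{2}\le r_{1}$ one has $\mathcal{P}(a_{l_{1}})\cap\mathcal{P}(a_{r_{1}})\subseteq\mathcal{P}(a_{l_{2}})\cap\mathcal{P}(a_{r_{2}})$, established via $\mathcal{P}(x)\cap\mathcal{P}(y)=\mathcal{P}(\lca(x,y))$ and the DFS property. It then splits the complement into $[1,l-1]$ and $[r+1,n]$ and bounds each resulting intersection by $|\mathcal{P}(a_{l-1})\cap\mathcal{P}(a_{l})|$ and $|\mathcal{P}(a_{r})\cap\mathcal{P}(a_{r+1})|$ respectively. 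Your route bypasses the LCA machinery entirely: you translate ``$u$ is in the intersection'' directly into ``the contiguous interval $[p_{u},q_{u}]$ straddles a boundary of $[l,r]$,'' which immediately forces $u\in\mathcal{P}(a_{l})\cup\mathcal{P}(a_{r})$. The paper's intermediate monotonicity statement is slightly stronger and potentially reusable, but for this lemma your argument is shorter and more transparent.
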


\begin{proof}
It suffices to show that for any four numbers $l_{1},l_{2},r_{1},r_{2}$
where $l_{1}\leq l_{2}\leq r_{2}\leq r_{1}$, we have $\mathcal{P}(a_{l_{1}})\cap\mathcal{P}(a_{r_{1}})\subseteq\mathcal{P}(a_{l_{2}})\cap\mathcal{P}(a_{r_{2}})$.
Indeed, when this is true, we have
\begin{align*}
\left|\left(\bigcup_{i\in[l,r]}\mathcal{P}(a_{i})\right)\cap\left(\bigcup_{i\in[n]\setminus[l,r]}\mathcal{P}(a_{i})\right)\right| & \leq\left|\left(\bigcup_{i\in[1,l-1]}\mathcal{P}(a_{i})\right)\cap\left(\bigcup_{i\in[l,r]}\mathcal{P}(a_{i})\right)\right|\\
 & \qquad+\left|\left(\bigcup_{i\in[l,r]}\mathcal{P}(a_{i})\right)\cap\left(\bigcup_{i\in[r+1,n]}\mathcal{P}(a_{i})\right)\right|,
\end{align*}
and the two terms on the right-hand side can be bounded by $|\mathcal{P}(a_{l-1})\cap\mathcal{P}(a_{l})|$
and $|\mathcal{P}(a_{r})\cap\mathcal{P}(a_{r+1})|$ respectively.

Note that $\mathcal{P}(x)\cap\mathcal{P}(y)=\mathcal{P}(\mathsf{LCA}(x,y))$,
where $\lca(x,y)$ denotes the \emph{lowest common ancestor} of $x$
and $y$ in tree ${\mathcal{T}}$. Since the ordering $a_{1,}a_{2,}\dots,a_{n}$
is produced by a depth-first traversal on $\mathcal{T},$ we observe
that the subtree rooted at $\lca(a_{l_{1}},a_{r_{1}})$ contains $a_{l_{2}},a_{r_{2}}$,
since they are both discovered during the DFS after $a_{l_{1}}$and
before $a_{r_{1}}$; consequently it contains $\lca(a_{l_{2}},a_{r_{2}})$.
It also by definition contains $a_{l_{1}}$and $a_{r_{1}}$. Therefore,
$\lca(a_{l_{1}},a_{r_{1}})$ is an ancestor of $\lca(a_{l_{2}},a_{r_{2}})$,
and it follows that $\mathcal{P}(\lca(a_{l_{1}},a_{r_{1}}))\subseteq\mathcal{P}(\lca(a_{l_{2}},a_{r_{2}})).$
\end{proof}
For the sampling tree $(\mathcal{B},\overline{\chi})$, we have the
following lemma:
\begin{lemma}[{See e.g. \cite{DBLP:books/lib/BergCKO08}}]
\label{lem:segment-tree-query}Given the complete binary sampling
tree $(\mathcal{B},\overline{\chi})$, 
let $a_{k} = \overline{\chi}(v)$, 
where $v$ is the $k$-th leaf of $\mathcal{B}$. 
For any contiguous subsequence $a_{l},a_{l+1},\ldots,a_{r}$ of the sequence $a_1,\dots,a_d$, we can find a node set $S \subseteq \mathcal{B}$ of size $O(\log d)$ such that 
\[
\bigcup_{u\in S}\overline{\chi}(u)=\{a_{i}\mid i\in[l,r]\}.
\]
 Moreover, this set $S$ can be found in $O(\log d)$ time.
\end{lemma}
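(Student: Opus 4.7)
The plan is to prove \cref{lem:segment-tree-query} using the standard segment-tree range-cover argument, since $(\mathcal{B},\overline{\chi})$ is nothing more than a complete binary tree over the linearly ordered leaves $a_1,\dots,a_d$, with $\overline{\chi}(v)$ equal to the set of leaf labels in the subtree rooted at $v$. I would first note that since $\mathcal{B}$ is complete, its height is $O(\log d)$, and for any index $i\in[d]$ I can locate the corresponding leaf in $O(\log d)$ time by descending from the root using the binary representation of $i$ (this is precomputed during tree construction).

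To build the cover $S$ for the range $[l,r]$, I would compute the leaves $\ell_l$ and $\ell_r$ corresponding to $a_l$ and $a_r$, then find their lowest common ancestor $w$ by walking both pointers up one level at a time until they coincide; this takes $O(\log d)$ time. I would then walk from $\ell_l$ up to $w$ and, at every step where the current node is a \emph{left} child of its parent, add the parent's \emph{right} child to $S$; symmetrically, walk from $\ell_r$ up to $w$ and whenever the current node is a \emph{right} child, add the parent's \emph{left} child to $S$. Finally include $\ell_l$ and $\ell_r$ themselves if not already covered. Each such added node $u$ satisfies $\overline{\chi}(u)\subseteq\{a_i : i\in[l,r]\}$ by construction, and every index in $[l,r]$ is covered exactly once since the chosen subtrees partition the range from the left edge up to $w$ and back down to the right edge.

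For the size and time bound, observe that along each of the two ascending paths (length $\le\mathrm{height}(\mathcal{B})=O(\log d)$) we add at most one node to $S$ per level, so $|S|=O(\log d)$. Every operation performed (leaf lookup, ancestor walk, sibling identification) is $O(1)$ per level and $O(\log d)$ overall. The disjointness and completeness of the union follow from the invariant that at the $k$-th step of the left-side walk, the nodes already placed in $S$ together with the current position cover exactly the prefix $\{a_l,\dots,a_{\mathrm{rightmost\ leaf\ of\ current\ node}}\}$ of the target range, and symmetrically on the right side.

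I expect no substantive obstacle in this proof: the only care is in the invariant bookkeeping to verify that the constructed $S$ covers $\{a_l,\dots,a_r\}$ exactly without duplication, which is handled cleanly by the LCA-based left/right split. I would therefore present it concisely, and point the reader to standard references on segment trees (e.g.\ \cite{DBLP:books/lib/BergCKO08}) for the canonical verification of this invariant.
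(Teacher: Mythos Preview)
Your argument is the standard segment-tree range-cover construction and is correct. The paper does not actually prove this lemma at all: it is stated with a citation to \cite{DBLP:books/lib/BergCKO08} and treated as a known fact, so your proposal simply supplies the canonical proof that the paper omits.
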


\subsubsection{Data Structure for Sketching}

As our balanced sampling tree $\mathcal{S}$ does not totally preserve the ancestor-descendant relationships in $\mathcal{T}$,
we need a more complex maintenance scheme.
We first observe that for any node $v\in\mathcal{S}\setminus\mathcal{B}$,
the nonzero columns of $\Phi_{\chi(v)}H^{-1/2}A^{\top}$
lies on a path of $\mathcal{T}$. Therefore, given $J_v=\Phi_{\chi(v)}H^{-1/2}A^{\top}$,
the term $\Phi_{\chi(v)}\mathcal{W}^{\top}$ can be computed
in $\wt O(\tau^{2})$ time.
In the last section, for each node $v\in\mathcal{T}$, we delay $L$'s
updates in the columns that lie on $\mathcal{P}^{\mathcal{T}}(v)$. Here,
we define its analogy on $\mathcal{B}$:

\begin{definition}[{$\Lambda(v),\overline{\Lambda}(v)$}]
Let $\Lambda:\mathcal{B}\to2^{\mathcal{T}}$ be the function 
\[
\Lambda(v)\defeq\left(\bigcup_{i\in\overline{\chi}(v)}\mathcal{P}^{{\mathcal{T}}}(a_{i})\right)\cap\left(\bigcup_{i\in\mathcal{T}\setminus\overline{\chi}(v)}\mathcal{P}^{{\mathcal{T}}}(a_{i})\right).
\]
We also define $\overline{\Lambda}(v):\mathcal{B\to}2^{\mathcal{T}}$
to be the set of columns that are maintained up-to-date for each $v$:
\[
\overline{\Lambda}(v)\defeq\left(\bigcup_{i\in\overline{\chi}(v)}\mathcal{P}^{{\mathcal{T}}}(a_{i})\right)\setminus\Lambda(v).
\]
\end{definition}

\begin{figure}
	\begin{center}
		\includegraphics[width=15cm]{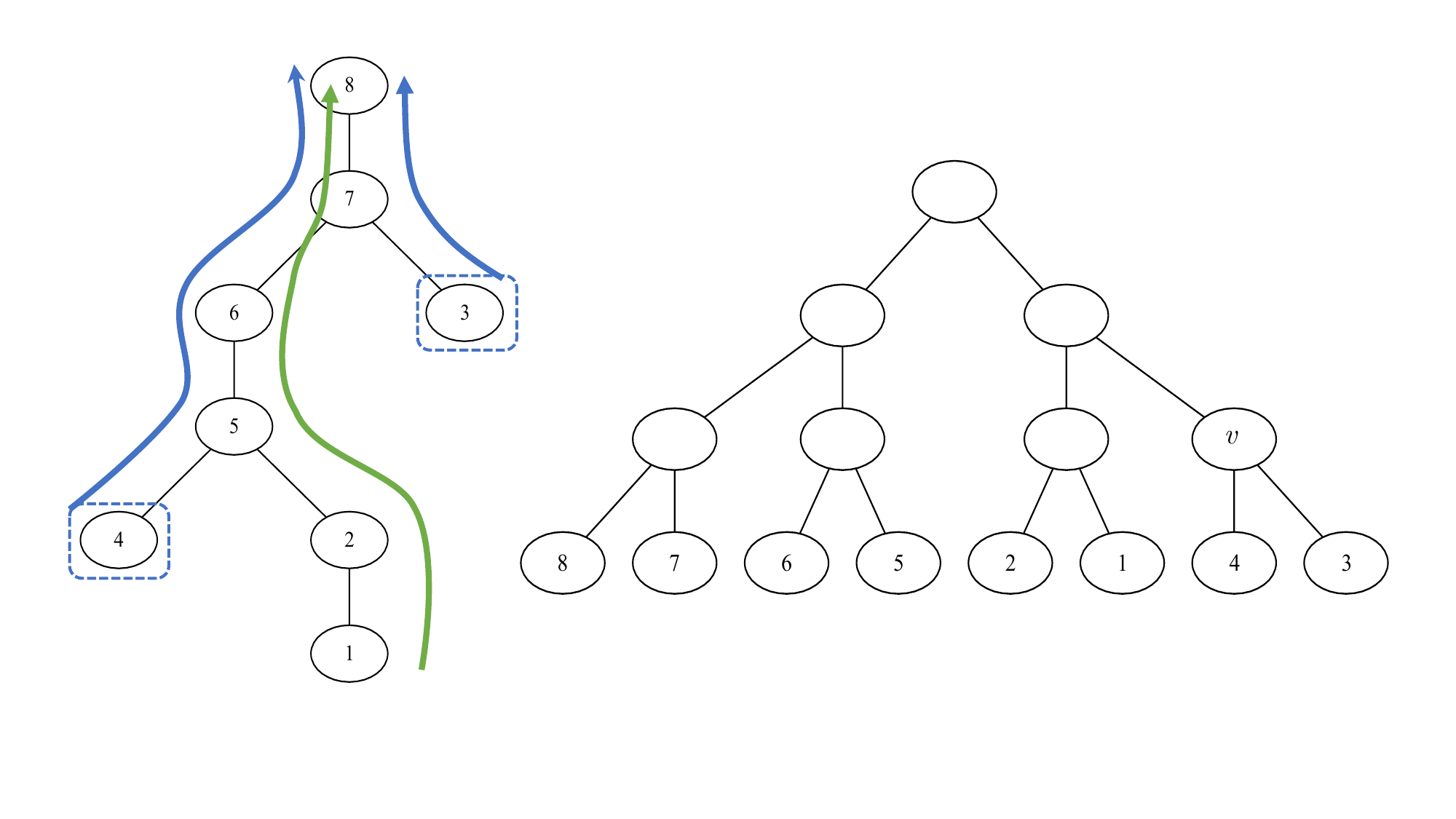}
	\end{center}
	\caption{Example of $\Lambda(v)$ and $\overline{\Lambda}(v)$: on the left is $\mathcal{B}$, where $\Lambda(v)=\{5,6,7,8\}$ is the set of nodes crossed by both the green path and blue path. $\overline{\Lambda}(v)=\{3,4\}$ is the set of nodes contained by blue boxes.}
	\end{figure}
\begin{lemma}\label{lem:Lambda-ancestors}
For any nodes $u,v\in\mathcal{B}$, if $u\in\mathcal{P}^{\mathcal{B}}(v)$,
then $\overline{\Lambda}(v)\subseteq\overline{\Lambda}(u)$.
\end{lemma}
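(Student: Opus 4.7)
The plan is to unfold the definition of $\overline{\Lambda}$ and exploit the monotonicity of the labelling $\overline{\chi}$ along ancestor relations in the sampling tree $\mathcal{B}$. Specifically, the key structural fact I will use is that $(\mathcal{B}, \overline{\chi})$ is a sampling tree in the sense of \cref{def:sampling-tree}, so property (3) of that definition gives $\overline{\chi}(c) \subseteq \overline{\chi}(p)$ whenever $c$ is a child of $p$; iterating this along the ancestor path yields $\overline{\chi}(v) \subseteq \overline{\chi}(u)$ under the hypothesis $u \in \mathcal{P}^{\mathcal{B}}(v)$. Once we have this containment, the remainder is a set-theoretic chase.

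The main step is to rewrite $\overline{\Lambda}(v)$ as $\{x \in V(\mathcal{T}) : x \in \mathcal{P}^{\mathcal{T}}(w) \text{ for some } w \in \overline{\chi}(v), \text{ and } x \notin \mathcal{P}^{\mathcal{T}}(w') \text{ for all } w' \notin \overline{\chi}(v)\}$, which is immediate from the definitions of $\Lambda(v)$ and $\overline{\Lambda}(v)$. I will then take an arbitrary $x \in \overline{\Lambda}(v)$ and verify both conditions hold with $v$ replaced by $u$. The first (existence) condition transfers directly: the witness $w \in \overline{\chi}(v)$ still lies in $\overline{\chi}(u)$ by the containment above. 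The second (exclusion) condition also transfers, but via the contrapositive: $V(\mathcal{T}) \setminus \overline{\chi}(u) \subseteq V(\mathcal{T}) \setminus \overline{\chi}(v)$, so any $w'$ failing to be in $\overline{\chi}(u)$ is a fortiori not in $\overline{\chi}(v)$, and therefore $x \notin \mathcal{P}^{\mathcal{T}}(w')$ by the assumption $x \in \overline{\Lambda}(v)$. Combining both gives $x \in \overline{\Lambda}(u)$.

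I do not anticipate a genuine obstacle here; the statement is essentially a monotonicity consequence of $\overline{\chi}$ being a nested labelling along the tree $\mathcal{B}$, and the two conditions defining $\overline{\Lambda}$ both behave correctly under enlarging $\overline{\chi}$ (the existential clause is preserved because supersets retain witnesses, and the universal ``exclusion'' clause is preserved because the complement shrinks). The only mild care required is to keep the two clauses straight and not conflate $\overline{\Lambda}$ with $\Lambda$ itself — note that an analogous inclusion does \emph{not} hold for $\Lambda$, since enlarging $\overline{\chi}(v)$ shrinks its complement and can therefore shrink the intersection defining $\Lambda$. That asymmetry is exactly what makes $\overline{\Lambda}$ — the ``clean'' part of the ancestor union — the right quantity to track for the lazy maintenance scheme.
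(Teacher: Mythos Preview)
Your proof is correct and takes essentially the same approach as the paper: both rewrite $\overline{\Lambda}(v)$ as a set difference $\bigl(\bigcup_{i\in\overline{\chi}(v)}\mathcal{P}^{\mathcal{T}}(a_i)\bigr)\setminus\bigl(\bigcup_{i\notin\overline{\chi}(v)}\mathcal{P}^{\mathcal{T}}(a_i)\bigr)$, use the sampling-tree property to get $\overline{\chi}(v)\subseteq\overline{\chi}(u)$, and conclude by observing that the first union grows while the second shrinks. The only cosmetic difference is that you phrase it element-wise while the paper stays in set notation.
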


\begin{proof}
By definition of $\overline{\Lambda}(\cdot)$ and $\Lambda(\cdot)$,
we have 
\[
\overline{\Lambda}(v)=\left(\bigcup_{i\in\overline{\chi}(v)}\mathcal{P}^{{\mathcal{T}}}(a_{i})\right)\setminus\left(\bigcup_{i\in\mathcal{T}\setminus\overline{\chi}(v)}\mathcal{P}^{{\mathcal{T}}}(a_{i})\right).
\]
Since
$(\mathcal{B},\overline{\chi})$ is a sampling tree, we have $\overline{\chi}(v)\subseteq\overline{\chi}(u)$. 
Hence, we complete the proof by noting $\left(\bigcup_{i\in\overline{\chi}(v)}\mathcal{P}^{{\mathcal{T}}}(a_{i})\right)\subseteq\left(\bigcup_{i\in\overline{\chi}(u)}\mathcal{P}^{{\mathcal{T}}}(a_{i})\right)$
and $\left(\bigcup_{i\in\mathcal{T}\setminus\overline{\chi}(u)}\mathcal{P}^{{\mathcal{T}}}(a_{i})\right)\subseteq\left(\bigcup_{i\in\mathcal{T}\setminus\overline{\chi}(v)}\mathcal{P}^{{\mathcal{T}}}(a_{i})\right)$.
\end{proof}
\begin{definition}[$\Lambda^{\clubsuit}$]
	Finally, we define the function $\Lambda^{\clubsuit}:\mathcal{T}\to\mathcal{B}$.
\[
\Lambda^{\clubsuit}(u)\defeq\low^{\mathcal{B}}(\{v\in\mathcal{B}\mid u\in\overline{\Lambda}(v)\}).
\]
Equivalently, we have
\[
\Lambda^{\clubsuit}(u)=\low^{\mathcal{B}}(\{v\in\mathcal{B}\mid\mathcal{D}^{\mathcal{T}}(u)\subseteq\overline{\chi}(v)\}).
\]
\end{definition}
Using this equivalent definition, we show that $\Lambda^{\clubsuit}(\cdot)$
is well-defined. 
\begin{lemma}
\label{lem:Lambda-club-path}For any $u\in\mathcal{T}$, the set $\{v\in\mathcal{B}\mid\mathcal{D}^{\mathcal{T}}(u)\subseteq\overline{\chi}(v)\}$
is a path on $\mathcal{B}$.
\end{lemma}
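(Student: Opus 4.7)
The plan is to reduce the claim to a simple fact about contiguous intervals in the ordering $\sigma_1, \ldots, \sigma_d$ produced by the heavy-light decomposition. Two observations drive the argument:

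First, by construction, $\mathcal{B}$ is a complete binary tree whose leaves are exactly $\sigma_1, \ldots, \sigma_d$ in order, and the sets $\{\overline{\chi}(v)\}_{v \in \mathcal{B}}$ are precisely the sets of leaves in subtrees of $\mathcal{B}$. Hence each $\overline{\chi}(v)$ corresponds to a contiguous subinterval of the sequence $\sigma_1, \ldots, \sigma_d$. Conversely, for any contiguous range $[l,r]$, the set of $v \in \mathcal{B}$ with $\overline{\chi}(v) \supseteq \{\sigma_l, \ldots, \sigma_r\}$ is totally ordered under the ancestor-descendant relation in $\mathcal{B}$: indeed, by the sampling tree property (\cref{def:sampling-tree}), any two $\overline{\chi}(v_1), \overline{\chi}(v_2)$ are either nested or disjoint, so if both contain the nonempty set $\{\sigma_l,\ldots,\sigma_r\}$ they must be nested, which means $v_1$ and $v_2$ are related by ancestor-descendant. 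Therefore the set in question is the set of ancestors in $\mathcal{B}$ of its lowest member, which is a path from that node to the root of $\mathcal{B}$.

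Second, I will show that $\mathcal{D}^{\mathcal{T}}(u)$ corresponds to such a contiguous interval of the sequence $\sigma_1, \ldots, \sigma_d$. This is the crucial step, and it is essentially the content of the proof of \cref{lem:heavy-light-intersection-size}: the heavy-light ordering is generated by a depth-first traversal of $\mathcal{T}$, and in any DFS ordering of a rooted tree, the set of descendants of any vertex is visited in one contiguous block. Thus $\mathcal{D}^{\mathcal{T}}(u) = \{\sigma_l, \sigma_{l+1}, \ldots, \sigma_r\}$ for some $l \leq r$.

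Combining the two observations, $\{v \in \mathcal{B} \mid \mathcal{D}^{\mathcal{T}}(u) \subseteq \overline{\chi}(v)\}$ is exactly the set of $v \in \mathcal{B}$ whose subtree covers the contiguous range $[l,r]$, which is a path from the lowest such $v$ (the LCA in $\mathcal{B}$ of the leaves $\sigma_l$ and $\sigma_r$) up to the root. The main potential obstacle is a notational one — the lemma is stated in terms of $\overline{\chi}$ directly rather than through the ordering, so I want to be careful that the equivalence of the two definitions of $\Lambda^{\clubsuit}(u)$ (the one using $\overline{\Lambda}$ and the one using $\mathcal{D}^{\mathcal{T}}(u) \subseteq \overline{\chi}(v)$) is verified before invoking this lemma. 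But the substance of the argument is just the DFS-contiguity property plus the nested-or-disjoint structure of sampling trees, and no extended calculation is needed.
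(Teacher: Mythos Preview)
Your proposal is correct and its core argument --- using the nested-or-disjoint structure of the sampling tree to show the set is totally ordered by ancestor--descendant and closed under taking ancestors --- matches the paper's proof (which phrases it as: at most one node per depth level since $\{\overline{\chi}(v):\depth(v)=k\}$ partitions $\mathcal{T}$, plus upward closure via $\overline{\chi}(v)\subseteq\overline{\chi}(w)$ for ancestors $w$). Your second observation about DFS-contiguity of $\mathcal{D}^{\mathcal{T}}(u)$ is true but unnecessary here, since your nested-or-disjoint argument in the first observation already applies to \emph{any} nonempty subset of $[d]$; the paper's proof accordingly never invokes contiguity.
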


\begin{proof}
Recall that $\{\overline{\chi}(v)\mid v\in\mathcal{B}\text{ and }\depth(v)=k\}$
forms a partition of $\mathcal{T}$ for any $k\leq\text{height}(\mathcal{B})$.
Then, for any $k\leq\text{height}(\mathcal{B})$, there is at most
one node satisfying both $\mathcal{D}^{\mathcal{T}}(u)\subseteq\overline{\chi}(v)$
and $\depth(v)=k$. We complete the proof by note that if $\mathcal{D}^{\mathcal{T}}(u)\subseteq\overline{\chi}(v)$, 
then $\mathcal{D}^{\mathcal{T}}(u)\subseteq\overline{\chi}(w)$ for
any $w\in\mathcal{P}^{\mathcal{B}}(v)$ since $(\mathcal{B},\overline{\chi})$
is a sampling tree.
\end{proof}
Now, we show the equivalence of two definition:
\begin{lemma}
For any $u\in\mathcal{T}$ and $v\in\mathcal{B}$, we have $u\in\overline{\Lambda}(v)$
if and only if $\mathcal{D}^{\mathcal{T}}(u)\subseteq\overline{\chi}(v)$.
\end{lemma}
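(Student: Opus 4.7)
The plan is to reduce both sides of the claimed equivalence to a statement about which elements of the heavy-light order $a_1,\dots,a_d$ lie in the subtree $\mathcal{D}^{\mathcal{T}}(u)$, and then read off the equivalence. The key observation is the basic tree duality: $u\in\mathcal{P}^{\mathcal{T}}(a_i)$ means $u$ is an ancestor of $a_i$ in $\mathcal{T}$, which is the same as $a_i\in\mathcal{D}^{\mathcal{T}}(u)$. Applied pointwise, this gives that for any set $S\subseteq V(\mathcal{T})$,
\[
u\in\bigcup_{a_i\in S}\mathcal{P}^{\mathcal{T}}(a_i)\ \iff\ S\cap\mathcal{D}^{\mathcal{T}}(u)\neq\emptyset.
\]

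Next I would plug this into the definition
\[
\overline{\Lambda}(v)=\Big(\bigcup_{i\in\overline{\chi}(v)}\mathcal{P}^{\mathcal{T}}(a_i)\Big)\setminus\Big(\bigcup_{i\in\mathcal{T}\setminus\overline{\chi}(v)}\mathcal{P}^{\mathcal{T}}(a_i)\Big).
\]
The membership $u\in\overline{\Lambda}(v)$ unpacks to the conjunction of (i) $\mathcal{D}^{\mathcal{T}}(u)\cap\overline{\chi}(v)\neq\emptyset$ and (ii) $\mathcal{D}^{\mathcal{T}}(u)\cap(V(\mathcal{T})\setminus\overline{\chi}(v))=\emptyset$. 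Condition (ii) is exactly $\mathcal{D}^{\mathcal{T}}(u)\subseteq\overline{\chi}(v)$, so it remains only to argue that (ii) implies (i). This follows because $u\in\mathcal{D}^{\mathcal{T}}(u)$ by the paper's convention (a subtree rooted at $u$ contains $u$), so (ii) forces $u\in\overline{\chi}(v)$ and hence $\mathcal{D}^{\mathcal{T}}(u)\cap\overline{\chi}(v)\supseteq\{u\}\neq\emptyset$. Conversely, if $\mathcal{D}^{\mathcal{T}}(u)\subseteq\overline{\chi}(v)$ then both (i) and (ii) hold trivially, giving $u\in\overline{\Lambda}(v)$.

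There is essentially no hard step here: the entire content of the lemma is that the ancestor relation between $u$ and the $a_i$'s can be rephrased as a descendant condition, and then $\overline{\Lambda}(v)$ becomes the set of $u$'s whose subtrees are entirely absorbed by $\overline{\chi}(v)$. The only point requiring slight care is the inclusive convention for $\mathcal{D}^{\mathcal{T}}(u)$, which is what lets condition (ii) subsume condition (i) and produces a clean one-sided characterization.
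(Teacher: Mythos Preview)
Your proposal is correct and is essentially the same argument as the paper's, just organized more cleanly: you state the ancestor--descendant duality $u\in\mathcal{P}^{\mathcal{T}}(w)\iff w\in\mathcal{D}^{\mathcal{T}}(u)$ up front and reduce both directions to it, whereas the paper argues each direction separately by contradiction using the same duality implicitly. Your observation that condition (ii) already implies condition (i) via $u\in\mathcal{D}^{\mathcal{T}}(u)$ is exactly what the paper uses in its ``if'' direction.
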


\begin{proof}
The only if direction: Suppose $u\in\overline{\Lambda}(v)$ and there
is a $w\in\mathcal{D}^{\mathcal{T}}(u)$ but $w\notin\overline{\chi}(v)$.
We note that $w\not\in\overline{\chi}(v)$ implies $w\in\mathcal{T}\setminus\overline{\chi}(v)$.
Then, we have $u\in\Lambda(v)$ since $u\in\mathcal{P}^{\mathcal{T}}(w)$.
This contradicts with our assumption that $u\in\overline{\Lambda}(v).$

The if direction: Since $\mathcal{D}^{\mathcal{T}}(u)\subseteq\overline{\chi}(v)$,
we have $u\in\overline{\chi}(v)\subseteq\left(\bigcup_{i\in\overline{\chi}(v)}\mathcal{P}^{{\mathcal{T}}}(a_{i})\right)$.
Then, it suffices to show $u\not\in\Lambda(v)$. Suppose $u\in\Lambda(v)$,
then there is a node $w\in\mathcal{D}^{\mathcal{T}}(v)$ such that
$w\notin\overline{\chi}(v)$. This contradicts with our assumption
that $\mathcal{D}^{\mathcal{T}}(u)\subseteq\overline{\chi}(v)$.
\end{proof}
Intuitively, for each node $v\in\mathcal{B}$, we need to maintain
the union of paths on the interval $[l_{v},r_{v}]=\overline{\chi}(v)$,
i.e.\,$\bigcup_{i\in[l_{v},r_{v}]}\mathcal{P}^{{\mathcal{T}}}(a_{i})$.
The set $\Lambda(v)$ to denote the set of nodes in $\mathcal{T}$
shared by other nodes the same level of binary tree $\mathcal{B}$.
\cref{lem:heavy-light-intersection-size} shows that $|\Lambda(v)|=O(\tau)$.
Hence, we never maintain them exactly in the sampling tre, but rather compute them as needed. On the other hand, for each node $v\in\mathcal{B}$,
we maintain the node $u\in\mathcal{T}$ exactly only if $a_{i}\in\mathcal{D}^{\mathcal{T}}(u)$
for $i\in[l_{v},r_{v}]$. Thus, for each node $v\in\mathcal{T}$,
it is only been explicitly maintained on a path of $\mathcal{B}$,
and $\Lambda^{\clubsuit}(v)$ denotes the lower end of that path.
In particular, we have the following lemma about $\Lambda^{\clubsuit}(\cdot)$:
\begin{lemma}
\label{lem:Lamda-clubsuit-path}Let $u,v\in\mathcal{T}$. If $u\in\mathcal{P}^{\mathcal{T}}(v)$,
then $\Lambda^{\clubsuit}(u)\in\mathcal{P}^{\mathcal{B}}(\Lambda^{\clubsuit}(v))$.
\end{lemma}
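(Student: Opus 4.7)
The plan is to reduce the claim to the equivalent characterization
$$\Lambda^{\clubsuit}(u) = \mathsf{low}^{\mathcal{B}}\bigl(\{w \in \mathcal{B} \mid \mathcal{D}^{\mathcal{T}}(u) \subseteq \overline{\chi}(w)\}\bigr)$$
established just before the lemma, and to exploit the simple set-containment induced by the ancestor relation in $\mathcal{T}$. For brevity, write $S(x) = \{w \in \mathcal{B} \mid \mathcal{D}^{\mathcal{T}}(x) \subseteq \overline{\chi}(w)\}$ for $x \in \mathcal{T}$.

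The first step is to observe that if $u \in \mathcal{P}^{\mathcal{T}}(v)$, then $u$ is an ancestor of $v$ in $\mathcal{T}$, hence $\mathcal{D}^{\mathcal{T}}(v) \subseteq \mathcal{D}^{\mathcal{T}}(u)$. Therefore, for every $w \in \mathcal{B}$, the inclusion $\mathcal{D}^{\mathcal{T}}(u) \subseteq \overline{\chi}(w)$ forces $\mathcal{D}^{\mathcal{T}}(v) \subseteq \overline{\chi}(w)$, giving the containment
$$S(u) \subseteq S(v).$$

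The second step is to recall (from \cref{lem:Lambda-club-path} and its proof) that both $S(u)$ and $S(v)$ are paths in $\mathcal{B}$ from the root down to their respective lowest nodes $\Lambda^{\clubsuit}(u)$ and $\Lambda^{\clubsuit}(v)$; this uses that $(\mathcal{B},\overline{\chi})$ is a sampling tree, so $\overline{\chi}$ is monotone along root-to-leaf paths of $\mathcal{B}$, and that at each depth of $\mathcal{B}$ at most one node $w$ can satisfy $\mathcal{D}^{\mathcal{T}}(x) \subseteq \overline{\chi}(w)$. Combining this structural fact with $S(u) \subseteq S(v)$, the path for $u$ is an initial segment of the path for $v$, and in particular its bottom endpoint $\Lambda^{\clubsuit}(u)$ is an ancestor (in $\mathcal{B}$) of the bottom endpoint $\Lambda^{\clubsuit}(v)$. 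That is exactly the claim $\Lambda^{\clubsuit}(u) \in \mathcal{P}^{\mathcal{B}}(\Lambda^{\clubsuit}(v))$.

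I do not anticipate a real obstacle here: the entire argument is a two-line chase through the equivalent definition, and all the nontrivial structural content (that $S(\cdot)$ is a downward-closed path in $\mathcal{B}$) has already been packaged into \cref{lem:Lambda-club-path}. The only thing to be mildly careful about is that the lemma's definition of $\Lambda^{\clubsuit}$ uses $\overline{\Lambda}(v)$, so one should explicitly invoke the equivalence $u \in \overline{\Lambda}(v) \iff \mathcal{D}^{\mathcal{T}}(u) \subseteq \overline{\chi}(v)$ proved just above before applying the containment argument.
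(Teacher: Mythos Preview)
Your proposal is correct and follows essentially the same argument as the paper's own proof: both reduce to the equivalent characterization of $\Lambda^{\clubsuit}$, observe that $u\in\mathcal{P}^{\mathcal{T}}(v)$ gives $\mathcal{D}^{\mathcal{T}}(v)\subseteq\mathcal{D}^{\mathcal{T}}(u)$ and hence $S(u)\subseteq S(v)$, and then invoke \cref{lem:Lambda-club-path} to conclude. The paper's version is more terse (three sentences), but the logical content is identical.
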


\begin{proof}
We note that $\mathcal{D^{\mathcal{T}}}(u)\subseteq\overline{\chi}(w)$
implies $\mathcal{D}^{\mathcal{T}}(v)\subseteq\overline{\chi}(w)$
for any $w\in\mathcal{B}$. Then, we have $\{w\in\mathcal{B}\mid\mathcal{D}^{\mathcal{T}}(u)\subseteq\overline{\chi}(w)\}\subseteq\{w\in\mathcal{B}\mid\mathcal{D}^{\mathcal{T}}(v)\subseteq\overline{\chi}(w)\}$.
Hence, by the equivalent definition of $\Lambda^{\clubsuit}$ and
\cref{lem:Lambda-club-path}, we have $\Lambda^{\clubsuit}(u)\in\mathcal{P}^{\mathcal{B}}(\Lambda^{\clubsuit}(v))$.
\end{proof}
Similar to the proof of~\cref{subsec:simple-sampling-tree}, ideally, we want to maintain 
\[
Z^{*}_v\defeq\Phi_{\chi(v)}H^{-1/2}A^{\top}L[{\ell}]^{-\top}\quad\text{ and }y^{*}_v\defeq Z^{*}_v \cdot h.
\]
To do so, we make use of the following properties:
\begin{invariant}\label{inv:w-balanced-invariant}
	The variables maintained in the data structure $\textsc{BalancedSketch}$, as given in \cref{alg:balanced-W-sketch}, preserve the following invariants before and after each function call:
	\begin{alignat*}{2}
		J_v & =\Phi_{\chi(v)}H^{-1/2}A^{\top} & v\in\mathcal{S} \tag{i} \\
Z_v & =J_v\cdot L[t_v]^{-\top} & v\in\mathcal{B}\tag{ii} \\
\boldsymbol{0} & =(L[\ell]-L[t_v])\cdot I_{\overline{\Lambda}(v)} & v\in\mathcal{B}\tag{iii} \\
y^{\triangledown}_v & =Z_v(I-I_{\Lambda(v)})h^{(\ell)} \qquad & v\in\mathcal{B} \tag{iv}
	\end{alignat*}
\end{invariant}

\begin{lemma}[{Sparsity Pattern of $Z_v$}]
\label{lem:balanced-Z-sparsity} Suppose $J_v$ and $Z_v$ satisfies (i) and (ii) of \cref{inv:w-balanced-invariant} for some $v\in\mathcal{B}$. 
Let $S$ be the index set of the non-zero columns
of $Z_v$, e.g.~$S=\{i\in[d]\mid (Z_v)_{i}\neq\boldsymbol{0}\}$,
then we have 
\[
S\subseteq\bigcup_{i\in\overline{\chi}(v)}\mathcal{P}^{{\mathcal{T}}}(a_{i}).
\]
\end{lemma}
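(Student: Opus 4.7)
My plan is to bound the non-zero column pattern of $J_v$ first, and then transfer that bound through the right-multiplication by $L[t_v]^{-\top}$ using the triangular-solve sparsity lemma (\cref{lem:Linv-sparsity-time}). Set $U \defeq \bigcup_{a \in \overline{\chi}(v)} \mathcal{P}^{\mathcal{T}}(a)$; this is the set claimed to contain $S$, and the whole proof reduces to showing the non-zero columns of $Z_v$ are inside $U$.

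The first step is: every non-zero column of $J_v$ lies in $U$. By invariant (i), $J_v = \Phi_{\chi(v)} H^{-1/2} A^{\top}$, so a column $k \in [d]$ of $J_v$ is non-zero only if there exists some coordinate $i \in \chi(v)$ with $A_{k,i} \neq 0$. By the balanced sampling tree construction in~\cref{subsec:balanced-sampling-tree-construction}, the set $\chi(v)$ consists precisely of those coordinates $i$ lying in some block $j$ with $\block_{\mathrm{low}}^{\mathcal{T}}(A_j) \in \overline{\chi}(v)$. By~\cref{lem:A-sparsity-pattern}, the non-zero row indices of any such block satisfy $\mathcal{A}_j \subseteq \mathcal{P}^{\mathcal{T}}(\block_{\mathrm{low}}^{\mathcal{T}}(A_j)) \subseteq U$. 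So $k \in U$.

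The second step is a small but crucial observation: $U$ is closed under taking ancestors, i.e.\ for any $i \in U$, the whole path $\mathcal{P}^{\mathcal{T}}(i)$ still lies in $U$. This is because $i$ is itself on $\mathcal{P}^{\mathcal{T}}(a)$ for some $a \in \overline{\chi}(v)$, and any ancestor of $i$ is then also an ancestor of $a$.

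The third step pushes this through the triangular solve. By invariant (ii), $Z_v = J_v \cdot L[t_v]^{-\top}$; equivalently, each row of $Z_v$ is obtained by applying $L[t_v]^{-1}$ to the transpose of the corresponding row of $J_v$. Step one says every row of $J_v$ has support inside $U$. Now \cref{lem:Linv-sparsity-time} tells us that $L^{-1} y$ has support inside $\bigcup_{i \in \mathrm{supp}(y)} \mathcal{P}^{\mathcal{T}}(i)$, and by step two this union is just $U$ again. Hence every row of $Z_v$ is supported on columns in $U$, which is exactly the claim $S \subseteq U$. The only conceptually subtle point is step two: it is the ``union of root-paths'' structure of $U$ that prevents the triangular-solve bound from expanding $U$, and this is precisely why the balanced sampling tree together with the block construction via $\block_{\mathrm{low}}$ cooperates with the elimination tree. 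I do not anticipate any calculational obstacle beyond spelling out these containments carefully.
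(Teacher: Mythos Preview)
Your proposal is correct and follows essentially the same approach as the paper's proof: bound the non-zero columns of $J_v$ inside $U$ from the construction of $\chi(v)$, then push through $L[t_v]^{-\top}$ via \cref{lem:Linv-sparsity-time}. Your proof spells out one step the paper leaves implicit---namely that $U$, being a union of root-paths, is closed under ancestors---which is exactly what makes the triangular-solve bound stay inside $U$.
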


\begin{proof}
By our construction of $J_v$, we note that nonzero columns of $J_v$
lies on $\bigcup_{i\in\overline{\chi}(v)}\mathcal{P}^{{\mathcal{T}}}(a_{i})$.
By \cref{lem:Linv-sparsity-time}, we have $S\subseteq\bigcup_{i\in\overline{\chi}(v)}\mathcal{P}^{{\mathcal{T}}}(a_{i})$.
\end{proof}
We have the following relationship between $Z_v$ and $Z^{*}_v$ for any $v \in \mathcal{T}$:
\begin{lemma}
\label{lem:balanced-Z*-formula}Suppose \cref{inv:w-balanced-invariant}
is satisfied for $v$, then we have 
\[
Z^{*}_v=Z_v-\left(L[\ell]^{-1}(L[\ell]-L[t_v])\cdot I_{\Lambda(v)}Z_v^{\top}\right)^{\top}.
\]
\end{lemma}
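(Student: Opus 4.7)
The plan is to mirror the proof of~\cref{lem:Z*-formula} from the simple-sampling-tree section, replacing the role of $\mathcal{P}^{\mathcal{T}}(v)$ by $\Lambda(v)$, and using the balanced sampling tree analogues of the sparsity/invariant facts. The key algebraic identity is the same: from $Z_v = J_v L[t_v]^{-\top}$ and $Z_v^* = J_v L[\ell]^{-\top}$, setting $\Delta L = L[\ell] - L[t_v]$ gives $L[\ell] (Z_v^*)^\top = J_v^\top = L[t_v] Z_v^\top = (L[\ell] - \Delta L) Z_v^\top$, so
\[
(Z_v^* - Z_v)^\top = - L[\ell]^{-1}\, \Delta L \, Z_v^\top.
\]
Taking transposes recovers the right form, modulo showing that $\Delta L \, Z_v^\top = \Delta L \cdot I_{\Lambda(v)} Z_v^\top$.

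To establish this reduction, I would split the identity on $\mathcal{T}$ as
\[
I = I_{\overline{\Lambda}(v)} + I_{\Lambda(v)} + I_{\mathcal{T} \setminus (\overline{\Lambda}(v)\cup \Lambda(v))},
\]
so that $\Delta L\, Z_v^\top = \sum_{\star}\Delta L \cdot I_\star \cdot Z_v^\top$ over the three index sets. The first summand vanishes by invariant (iii) of~\cref{inv:w-balanced-invariant}, which states exactly that $(L[\ell] - L[t_v]) I_{\overline{\Lambda}(v)} = \mathbf{0}$. The third summand vanishes because the column-support of $Z_v$ is contained in $\bigcup_{i \in \overline{\chi}(v)} \mathcal{P}^{\mathcal{T}}(a_i) = \overline{\Lambda}(v) \cup \Lambda(v)$ by~\cref{lem:balanced-Z-sparsity}, so $I_{\mathcal{T} \setminus (\overline{\Lambda}(v)\cup\Lambda(v))} Z_v^\top = \mathbf{0}$. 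Only the middle term $\Delta L \cdot I_{\Lambda(v)} Z_v^\top$ survives, which gives the claimed formula.

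There is no real obstacle here once one has the right splitting; the proof is conceptually identical to~\cref{lem:Z*-formula}, with the only subtlety being bookkeeping: in the simple sampling tree, the delayed-update columns were indexed by $\mathcal{P}^{\mathcal{T}}(v)$ and the support of $Z_v$ by $\mathcal{D}^{\mathcal{T}}(v) \cup \mathcal{P}^{\mathcal{T}}(v)$, whereas in the balanced case we use $\Lambda(v)$ and $\overline{\Lambda}(v) \cup \Lambda(v)$ respectively. The invariant (iii) is set up precisely so that this substitution works without modification. I would therefore write the proof as a short calculation followed by the three-way split, citing (iii) of~\cref{inv:w-balanced-invariant} for one vanishing term and~\cref{lem:balanced-Z-sparsity} for the other.
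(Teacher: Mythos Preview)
Your proposal is correct and essentially identical to the paper's proof: the paper also reduces to the identity $(Z_v^*-Z_v)^\top=-L[\ell]^{-1}\Delta L\,Z_v^\top$ and then splits according to $\Lambda(v)$, $\overline{\Lambda}(v)$, and the complement, killing the $\overline{\Lambda}(v)$ piece via invariant~(iii) and the complement piece via \cref{lem:balanced-Z-sparsity}. Your placement of the projection $I_\star$ between $\Delta L$ and $Z_v^\top$ is in fact slightly cleaner than the paper's writing, which places it to the left of $\Delta L$ but then uses it on $Z_v^\top$; the substance is the same.
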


\begin{proof}
Similar to the proof of \cref{lem:Z*-formula}, we have 
\[
Z^{*}_v=Z_v-\left(L[\ell]^{-1}(L[\ell]-L[t_v])Z_v^{\top}\right)^{\top}.
\]
Let $\Delta L\defeq L[\ell]-L[t_v]$. Then, we can split $\Delta L$
into three parts:

\[
\Delta L=(I_{\Lambda(v)}+I_{\overline{\Lambda}(v)}+I_{\mathcal{T}\setminus(\Lambda(v)\cup\overline{\Lambda}(v))})\Delta L.
\]
We first note that $I_{\mathcal{T}\setminus(\Lambda(v)\cup\overline{\Lambda}(v))}\cdot Z_v^{\top}=\boldsymbol{0}.$
By \cref{lem:balanced-Z-sparsity}, the nonzero columns of $Z_v$
lies on \[
\bigcup_{i\in\overline{\chi}(v)}\mathcal{P}^{{\mathcal{T}}}(a_{i})=(\Lambda(v)\cup\overline{\Lambda}(v)).
\]
Hence, $I_{\mathcal{T}\setminus(\Lambda(v)\cup\overline{\Lambda}(v))}\cdot Z_v^{\top}=\boldsymbol{0}.$
By (iii) of \cref{inv:w-balanced-invariant}, we have $(L[\ell]-L[t_v])\cdot I_{\overline{\Lambda}(v)}=\boldsymbol{0}$.
Thus, we have 
\[
L[\ell]^{-1}(L[\ell]-L[t_v])Z_v^{\top}=L[\ell]^{-1}(L[\ell]-L[t_v])\cdot I_{\Lambda(v)}Z_v^{\top}.
\]
\end{proof}

\begin{algorithm}
	\caption{Balanced Multiscale Representation Sketching Data Structure --
		Initialize and Query\label{alg:balanced-W-sketch}}
	
	\algnewcommand{\LeftComment}[1]{\State \(\triangleright\) #1} 

	\begin{algorithmic}[1]
		
		\State \textbf{datastructure }\textsc{BalancedSketch}
		
		\State \textbf{private : members}
		
		\State \hspace{4mm} $\Phi\in\R^{r\times n}$ \Comment{JL
			matrix}
		
		\State \hspace{4mm}  sampling tree $(\mathcal{S},\chi)$ with balanced binary tree $\mathcal{B}$
		\Comment{constructed as in \cref{subsec:balanced-sampling-tree-construction}}
		
		\State \hspace{4mm} $\ell\in\mathbb{N}$ \Comment{iteration counter}
		
		\State \hspace{4mm} $h \in\R^{d}$
		
		\State \hspace{4mm} $\ox \in \R^{n}$ \Comment{$\mathcal{W}$ given implicitly by $\ox$}
		
		\State \hspace{4mm} $H\in \R^{n\times n}$\Comment{Hessian $H=\nabla^2 \phi(\ox)$}
		
		\State \hspace{4mm} {\sc List}$\{L[t]\in\R^{d\times d}\}_{t\geq 0}$ 
		\Comment{sequence of  Cholesky factor $L$ at various iterations $t$}

		\State \hspace{4mm} {\sc List} $\{J_v\in \R^{r\times d}\}_{v\in \mathcal{S}}$
		\Comment{$J_v=\Phi_{\chi(v)}H^{-1/2}A^{\top}$}
		
		\State \hspace{4mm} {\sc List} $\{Z_v\in \R^{r\times d}\}_{v\in \mathcal{B}}$
		\Comment{$Z_v=\Phi_{\chi(v)}H^{-1/2}A^{\top}L[t_v]^{-\top}$}

		\State \hspace{4mm} {\sc List} $\{y^\triangledown_v\in \R^{r}\}_{v\in \mathcal{B}}$
		\Comment{$y^{\triangledown}_v = Z_v(I-I_{\Lambda(v)})h$}

		\State \hspace{4mm} {\sc List} $\{t_v \in \mathbb{N}\}_{v\in \mathcal{B}}$
		\Comment{Last iterations during which node $v$ was updated}
		
		
		
		\State \textbf{end members}
		
		\Procedure{$\textsc{Initialize}$}{$\mathcal{S},\chi, \Phi\in\R^{r\times n},\ox\in\R^{n},h\in\R^{d}$}\Comment{\cref{lem:balanced-initialization}}
		
		\State $(\mathcal{S},\chi)\leftarrow (\mathcal{S},\chi)$
	
		\State $\Phi\leftarrow\Phi$

		\State $\ell\leftarrow0,h\leftarrow h$
		
		\State Find $\Lambda(v)$ for all $v\in\mathcal{B}$
		
		\State Compute $H\leftarrow\nabla^{2}\phi(\ox)$
		
		\State Find lower Cholesky factor $L[\ell]$ of $AH^{-1}A^{\top}$

		\ForAll{$v\in\mathcal{S}$}
		
			\State $J_v\leftarrow \Phi_{\chi(v)}H^{-1/2}A^\top$ \Comment{compute $J_v$ for all $v\in \mathcal{S}$}

		\EndFor

		\ForAll{$v\in\mathcal{B}$}
		
		\State $Z_v\leftarrow J_v L[\ell]^{-\top}$
		
		\State $y^{\triangledown}_v \leftarrow Z_v (I-I_{\Lambda(v)})h$
		
		\State $t_v \leftarrow \ell$
		\EndFor
				
		\EndProcedure
		
		\Procedure{$\textsc{Query}$}{$v\in\mathcal{S}$}\Comment{\cref{lem:balanced-query}}
		
		\If{$v\in\mathcal{S}\setminus\mathcal{B}$}
		
		\State \Return $J_v\cdot L[\ell]^{-\top}h$ \Comment{directly compute the value of sketch}
		
		\EndIf

		\LeftComment{For $v\in \mathcal{T}$, we make use of existing partial computations}
		
		\State $\Delta L \leftarrow (L[\ell]-L[t_v])\cdot I_{\Lambda(v)}$

		\State $Z_v\leftarrow Z_v-(L[\ell]^{-1} \cdot \Delta L\cdot Z_v^{\top})^{\top}$
		
		\State $y^{\triangledown}_v\leftarrow Z_v\cdot(I-I_{\Lambda(v)})h$
		
		\State $t_v\leftarrow\ell$ \Comment{update the time stamp for node $v$}
		
		\State $y^{\vartriangle}_v\leftarrow Z_v\cdot I_{\Lambda(v)}\cdot h$
		
		\State \Return $y^{\vartriangle}_v+y^{\triangledown}_v$
				
		\EndProcedure
		
	\end{algorithmic}
\end{algorithm}

\begin{algorithm}
\caption{Balanced Multiscale Representation Sketching Data Structure --
Updates\label{alg:balanced-W-sketch-1}}
\algnewcommand{\LeftComment}[1]{\State \(\triangleright\) #1} 

\begin{algorithmic}[1]

\State\textbf{datastructure} \textsc{BalancedSketch}

\Procedure{$\textsc{Update}$}{$\new{\ox}\in\R^{n},\new h \in \R^{d}$}
\Comment{\cref{lem:balanced-update}}

	\For{$i\in[m]$ where $\new x_{i}\neq x_{i}$}

		\State $\textsc{UpdateBlock}(\new{x_{i}})$

	\EndFor

	\ForAll{$\new{h_{i}}\neq h_{i}$}

		\State $v\leftarrow\Lambda^{\clubsuit}(i)$

		\ForAll{$u\in\mathcal{P}^{\mathcal{B}}(v)$}

			\State $y^{\triangledown}_u\leftarrow Z_u\cdot I_{\{i\}}\cdot(\new h-h)$

		\EndFor

	\EndFor

	\State $h \leftarrow\new h$

\EndProcedure

\Procedure{$\textsc{UpdateBlock}$}{$\new{\ox_{i}}\in \R^{n_i}$} \Comment{\cref{lem:balanced-update-coordinate}}

	\State $\ell\leftarrow\ell+1$

	\State $\ox_{i}\leftarrow\new{\ox_{i}}$

	\State $\new H=\nabla^{2}\phi(\ox)$

	\State Find lower Cholesky factor $L[\ell]$ of $A(\new H)^{-1}A^{\top}$

	\State $S\leftarrow\mathcal{P}^{\mathcal{B}}(\Lambda^{\clubsuit}(\low(A_{i})))$

	\State $\textsc{Update}L(S)$

	\State $\textsc{Update}H(\new H)$

\EndProcedure

\Procedure{$\textsc{Update}L$}{$S\subset \mathcal{B}$} \Comment{\cref{lem:balanced-update-L}}

	\ForAll{$v\in S$ }
	
		\LeftComment{We update $Z_v$ in two steps: first from $L[t_v]$ to $L[\ell-1]$, then from $L[\ell-1]$ to $L[\ell]$}

		\State $Z_v\leftarrow Z_v-\left(L[\ell-1]^{-1}(L[\ell-1]-L[t_v])_{\Lambda(v)}\cdot Z_v^{\top}\right)^{\top}$

		\State $Z_v\leftarrow Z_v-(L[\ell]^{-1}\cdot(L[\ell]-L[\ell-1])\cdot Z_v^{\top})^{\top}$

		\State $y^{\triangledown}_v\leftarrow Z_v(I-I_{\Lambda(v)})h$

		\State $t_v\leftarrow\ell$

	\EndFor

\EndProcedure

\Procedure{$\textsc{Update}H$}{$\new H$}\Comment{\cref{lem:balanced-update-H}}

	\State $\Delta H = \new H - H$

	\ForAll{$i\in[m]$ such that $(\Delta H)_{i}\neq\boldsymbol{0}$}

		\State Find set $S$ such that $S=\{v \in \mathcal{S} \mid \chi(v)=\{j\}\text{ and coordinate $j$ in $i$-th block}\}$

		\ForAll{$u\in\bigcup_{v\in S}\mathcal{P}^{\mathcal{S}}(v)$}

			\State $J_v\leftarrow\Phi_{\chi(v)}(H+\Delta H\cdot I_{\{i\}})^{-1/2}A^{\top}$

		\If{$v\in\mathcal{B}$}

			\State $Z_v\leftarrow J_v \cdot L[t_v]^{-\top}$

			\State $y^{\triangledown}_v\leftarrow Z_v\cdot(I-I_{\Lambda(v)})h$

		\EndIf

		\EndFor

	\EndFor
	\State $H \leftarrow \new H$

\EndProcedure

\end{algorithmic}
\end{algorithm}

\begin{lemma}[{\sc Initialize}]
\label{lem:balanced-initialization} Given initial $\ox$ and $h$,
the JL matrix $\Phi\in\R^{r\times n}$, and the elimination
tree $\mathcal{T}$, 
the data structure \textsc{BalancedSketch} initializes in time $O(n\tau^{2} r\log n)$.
Moreover, the internal state of the data structure satisfies the \cref{inv:w-balanced-invariant}
after initialization.
\end{lemma}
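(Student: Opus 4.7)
My plan is to verify correctness and runtime separately, where correctness follows almost immediately from the assignments in the pseudocode and runtime requires a level-by-level accounting on the balanced tree $\mathcal{B}$.

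For correctness, I will check each clause of \cref{inv:w-balanced-invariant} directly against the setup. The assignment $J_v \leftarrow \Phi_{\chi(v)} H^{-1/2} A^\top$ gives (i); the assignment $Z_v \leftarrow J_v L[\ell]^{-\top}$ together with $t_v \leftarrow \ell$ gives (ii); invariant (iii) is trivial because at initialization $L[\ell] - L[t_v] = 0$; and $y^{\triangledown}_v \leftarrow Z_v(I - I_{\Lambda(v)}) h$ gives (iv).

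For the runtime, I will bound each block of the pseudocode in turn. Computing $H = \nabla^2 \phi(\ox)$ costs $O(n)$ since all $n_i = O(1)$. Finding the sets $\Lambda(v)$ for all $v \in \mathcal{B}$ can be done by walking the heavy-light order once per level of $\mathcal{B}$; since each $|\Lambda(v)| = O(\tau)$ by \cref{lem:heavy-light-intersection-size} and $\mathcal{B}$ has $O(\log n)$ levels, the total cost is $O(n \tau \log n)$. The lower Cholesky factor $L[\ell]$ of $AH^{-1}A^\top$ is computed in $O(n \tau^2)$ time by \cref{cor:chol-time}. To compute $J_v$ for all $v \in \mathcal{S}$, I will compute it explicitly at each leaf of $\mathcal{S}$ in $O(\tau r)$ time by \cref{lem:A-sparsity-pattern}, and then aggregate up the tree using $J_v = \sum_{u \text{ child of } v} J_u$; at each level of $\mathcal{S}$, the $\chi(v)$ over that level partition $[n]$ and each non-zero column of $\Phi_{\chi(v)} H^{-1/2} A^\top$ can be traced to an $A_j$-column living on a single root-to-leaf path in $\mathcal{T}$, so total work per level is $O(n \tau \cdot r)$ and $O(n \tau r \log n)$ overall.

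The main cost, and the only step needing care, is computing $Z_v = J_v L[\ell]^{-\top}$ for all $v \in \mathcal{B}$ and the subsequent products $y^{\triangledown}_v = Z_v (I - I_{\Lambda(v)}) h$. At each leaf $v$ of $\mathcal{B}$, $\overline{\chi}(v)$ is a single vertex $\sigma$, so by \cref{lem:balanced-Z-sparsity} the non-zero columns of $Z_v$ lie on $\mathcal{P}^{\mathcal{T}}(\sigma)$, a set of size at most $\tau$; hence I compute $Z_v$ by solving $r$ lower triangular systems via \cref{lem:Linv-sparsity-time} in $O(\tau^2 r)$ time per leaf, for a total of $O(d \tau^2 r) = O(n \tau^2 r)$. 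For internal nodes I use $Z_v = \sum_{u \text{ child of } v} Z_u$, and level-by-level summation on $\mathcal{B}$: at each level, the sets $\overline{\chi}(v)$ partition $[d]$, so the total number of non-zero columns across the level is bounded by $\sum_{v} \sum_{i \in \overline{\chi}(v)} |\mathcal{P}^{\mathcal{T}}(a_i)| \leq d \tau$, contributing $O(d \tau r)$ per level and $O(n \tau r \log n)$ across all $O(\log n)$ levels. Computing $y^{\triangledown}_v$ is a matrix-vector product dominated by the column sparsity of $Z_v$, giving the same $O(n \tau r \log n)$ bound by the same partition argument. Summing all contributions, the dominant term is $O(n \tau^2 r)$ from the leaf Cholesky-solves, safely within the claimed $O(n \tau^2 r \log n)$ bound. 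The main obstacle is precisely this sparsity bookkeeping on the internal nodes of $\mathcal{B}$; the key trick is that even though individual $Z_v$ at high levels can have many non-zero columns, summing their sparsities across one level telescopes to $O(d \tau)$ by the partition property of the sampling tree.
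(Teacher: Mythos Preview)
Your proposal is correct and follows essentially the same approach as the paper: verify the invariants directly from the assignments, compute $J_v$ at the leaves and aggregate upward, compute $Z_v$ at the $\mathcal{B}$-leaves via triangular solves and aggregate upward, then form $y^{\triangledown}_v$ by matrix--vector products. Your level-by-level sparsity accounting (using that $\{\overline{\chi}(v)\}$ partitions $[d]$ at each level to bound $\sum_v \nnz(Z_v)$ by $O(d\tau r)$) is actually more explicit than the paper's, and you also include the $\Lambda(v)$ precomputation that the paper's proof leaves implicit; both are welcome additions but do not change the argument.
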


\begin{proof}
The correctness directly follows by \cref{inv:w-balanced-invariant}.

By \cref{cor:chol-time}, we can find the initial Cholesky decomposition $L[0]$
in time $O(n\tau^{2})$ time. For computing $J_v$, we note that
$J_v=\sum_{\text{child \ensuremath{c} of \ensuremath{v}}}J_c$.
When $\chi(v)=\{i\}$ for some $i$, we can compute $J_v$ in time
$O(\tau r)$ by column sparsity of $A$. Since the height of
tree is bounded by $O(\log n)$, we can compute all $J_v$ in time
$O(|\mathcal{S}|\cdot\tau r\log n)$. For $Z_v$, we note that
\[
Z_v=\sum_{\text{child \ensuremath{c} of \ensuremath{v}}}J_c \cdot L[\ell]^{-\top}
\]
Hence, it suffices to compute $J_v \cdot L[\ell]^{-\top}$ for all leaf
node $v\in\mathcal{B}$, which takes $O(n\tau^{2} r)$ time.
By \cref{lem:Linv-sparsity-time}, the solution of $J_v \cdot L[\ell]^{-\top}$
has $O(\tau r)$ nonzero entries. Hence, we can compute $Z_v$
for all $v\in\mathcal{B}$ in time $O(n\tau^{2} r\log n)$ time. Compute
$y^{\triangledown}_v$ takes time $O(n\tau^{2}r\log n)$
time. 
\end{proof}

\begin{lemma}[{\sc Update}]\label{lem:balanced-update}
	Suppose the current state of data structure satisfies \cref{inv:w-balanced-invariant}. 
	Given $\new{\mathcal{W}}$ implicitly by $\new \ox$, and $\new h$, the function {\sc Update}
	of {\sc BalancedSketch} updates the sketches in $\mathcal{S}$ implicitly by in time
	$O(\|\new \ox - \ox\|_0 \cdot \tau^2r\log n)+O(\|\new h - h\|_0 r\log n)$. 
	Moreover, the function also updates the internal states correspondingly so that 
	\cref{inv:w-balanced-invariant} is still preserved.
\end{lemma}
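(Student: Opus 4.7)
The plan is to split the proof along the two loops of {\sc Update}: the outer loop that processes block-coordinate updates of $\ox$, and the subsequent loop that processes coordinate updates of $h$. After the $\ox$-loop finishes, \cref{inv:w-balanced-invariant} will hold with respect to $(\new\ox,h)$, and the $h$-loop will then restore the invariant for $(\new\ox,\new h)$; correctness of the whole call follows by composing these two stages.

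For the $\ox$-stage, I would simply invoke \cref{lem:balanced-update-coordinate} on each block $i$ with $\new\ox_i\neq\ox_i$: it handles the new Cholesky factor $L[\ell]$, the updates of the affected path in $\mathcal{B}$ via {\sc UpdateL}, and the update of $H$ via {\sc UpdateH}, preserving (i)--(iv) of \cref{inv:w-balanced-invariant}. Its per-call time is $O(\tau^2 r\log n)$, giving a total of $O(\|\new\ox-\ox\|_0\cdot\tau^2 r\log n)$ for this stage.

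For the $h$-stage, the content to verify is that only invariant (iv) is affected, and that for each coordinate $i$ with $\new h_i\neq h_i$, updating $y^{\triangledown}_u$ at every $u\in\mathcal{P}^{\mathcal{B}}(\Lambda^{\clubsuit}(i))$ exactly restores (iv). The key structural claim is:
\[
\{v\in\mathcal{B}:\ i\in\overline{\Lambda}(v)\}=\mathcal{P}^{\mathcal{B}}(\Lambda^{\clubsuit}(i)).
\]
The $\supseteq$ direction follows from \cref{lem:Lambda-ancestors} applied to the definition $\Lambda^{\clubsuit}(i)=\low^{\mathcal{B}}(\{v:\ i\in\overline{\Lambda}(v)\})$; the $\subseteq$ direction follows from \cref{lem:Lambda-club-path} together with the equivalent characterization $\{v:\mathcal{D}^{\mathcal{T}}(i)\subseteq\overline{\chi}(v)\}$ of the same set, which was shown to be a path in $\mathcal{B}$. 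For any $v$ outside this path, either $i\notin\bigcup_{j\in\overline{\chi}(v)}\mathcal{P}^{\mathcal{T}}(a_j)$ (so $(Z_v)_i=\boldsymbol{0}$ by \cref{lem:balanced-Z-sparsity}) or $i\in\Lambda(v)$ (so $i$ is zeroed out by $(I-I_{\Lambda(v)})$); in either case $y^{\triangledown}_v$ does not change. For $v$ on the path, $(Z_v\cdot I_{\{i\}}\cdot(\new h-h))$ is precisely the increment needed, so the update $y^{\triangledown}_u \mathrel{+}= Z_u\cdot I_{\{i\}}\cdot(\new h-h)$ restores $y^{\triangledown}_u=Z_u(I-I_{\Lambda(u)})\new h$.

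Running time of the $h$-stage is immediate: each coordinate $i$ requires walking up at most $\mathrm{height}(\mathcal{B})=O(\log n)$ nodes, and at each node the increment is one scalar-times-column-of-$Z_v$ costing $O(r)$ time, for a total of $O(\|\new h-h\|_0\cdot r\log n)$. Combining the two stages yields the claimed bound. The main obstacle I anticipate is the structural claim characterizing the set of affected nodes; once \cref{lem:Lambda-ancestors,lem:Lambda-club-path} are in hand it is short, but one must be careful that ``affected'' is defined with respect to the current $(I-I_{\Lambda(v)})$ mask rather than to column supports alone, which is why the two cases ($i\in\Lambda(v)$ versus $i\notin\bigcup\mathcal{P}^{\mathcal{T}}(a_j)$) must be handled separately.
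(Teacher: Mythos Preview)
Your proposal is correct and follows the same two-stage decomposition as the paper: delegate the $\ox$-updates to \cref{lem:balanced-update-coordinate}, and for each changed $h_i$ argue that the set of nodes $v$ with $Z_v(I-I_{\Lambda(v)})e_i\neq\boldsymbol{0}$ is precisely $\mathcal{P}^{\mathcal{B}}(\Lambda^{\clubsuit}(i))$. Your treatment is in fact slightly more careful than the paper's: the paper only states the $\subseteq$ direction (``all such $v$ lie on the path''), whereas you also note the $\supseteq$ direction via \cref{lem:Lambda-ancestors}, which is what guarantees that adding $Z_u\cdot I_{\{i\}}\cdot(\new h-h)$ at every $u$ on the path does not corrupt $y^{\triangledown}_u$ when $i\in\Lambda(u)$.
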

\begin{proof}
	To update $\ox$ to $\new \ox$, we view it as a sequence of updates, 
	where each update corresponding to a single block change in $\ox$, and use the helper function {\sc UpdateBlock}.
	The proof of correctness and runtime are given in \cref{lem:balanced-update-coordinate}.

	Similarly, we update $h$ to $\new h$ block-wise. Suppose $h$ changes in coordinate $j$. We note that $(Z_v)_{j}\neq\boldsymbol{0}$
	only if $j\in\bigcup_{i\in[l_{v},r_{v}]}\mathcal{P}^{\mathcal{T}}(i)$.
	Then $Z_v(I-I_{\Lambda(v)})e_{j}\neq\boldsymbol{0}$ only if $j\in\bigcup_{i\in\overline{\chi}(v)}\mathcal{P}^{\mathcal{T}}(i)$
	and $j\not\in\Lambda(v)$. Hence, all such $v$ lies on the path $\mathcal{P}^{\mathcal{B}}(\Lambda^{\clubsuit}(j))$. 

	For each coordinate $j$ of $h$ that changes, it suffices to compute $Z_v\cdot I_{\{j\}}\cdot(\new h-h)$. 
	We note this can be done in $O(r)$ time. 
	Thus, for each $h_{j}$, it takes $O(r \log n)$ time since $|\mathcal{P}^{\mathcal{B}}(\Lambda^{\clubsuit}(j))|=O(\log n)$.
	Hence, we can update $h$ in $O(\|\new h-h\|_{0}\cdot r \log n)$ time in total.
\end{proof}

\begin{lemma}[$\textsc{Query}$]
\label{lem:balanced-query}
Suppose the state of the data structure satisfies \cref{inv:w-balanced-invariant} immediately before a call to \textsc{Query}. Then calling  
$\textsc{Query}(v)$ of \textsc{BalancedSketch} returns  $\Phi_{\chi(v)}\mathcal{W}^{\top}h$
in $O(\tau^{2}r)$ time. 
Moreover, \cref{inv:W-simple-invariant} is preserved at the end of the function call.
\end{lemma}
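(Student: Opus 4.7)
My plan is to handle separately the two cases in the pseudocode, corresponding to $v\in\mathcal{S}\setminus\mathcal{B}$ and $v\in\mathcal{B}$. Since the function touches neither $J_v$, $H$, $h$, nor the list $\{L[t]\}$, invariant~(i) is automatically preserved, and in both cases I will show the returned value equals $J_v L[\ell]^{-\top}h=\Phi_{\chi(v)}\mathcal{W}^\top h$.

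For $v\in\mathcal{S}\setminus\mathcal{B}$, the returned expression $J_v L[\ell]^{-\top}h$ is correct by invariant~(i). By the construction in~\cref{subsec:balanced-sampling-tree-construction}, every $j\in\chi(v)$ belongs to a block $A_{j'}$ with $\block_low^\mathcal{T}(A_{j'})$ equal to the single element of $\overline{\chi}(u)$, where $u\in\mathcal{B}$ is the unique leaf of $\mathcal{B}$ that is an ancestor of $v$ in $\mathcal{S}$. Hence, by~\cref{lem:A-sparsity-pattern}, the nonzero column pattern of $J_v$ lies on a single root-to-node path of $\mathcal{T}$ of length at most $\tau$; applying $L[\ell]^{-\top}$ to each of the $r$ rows via~\cref{lem:Linv-sparsity-time}, followed by multiplication with $h$, takes $O(\tau^2 r)$ time.

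For $v\in\mathcal{B}$, correctness of the $Z_v$ update is immediate from~\cref{lem:balanced-Z*-formula}: after the update $Z_v=J_v L[\ell]^{-\top}$, which restores invariant~(ii) once $t_v\leftarrow\ell$ and makes invariant~(iii) vacuous. The run-time bound hinges on two properties of $\Lambda(v)$: first, $|\Lambda(v)|\leq 2\tau$ by~\cref{lem:heavy-light-intersection-size}; second, the proof of that lemma actually shows that $\Lambda(v)$ is covered by (at most) two root-to-node paths of $\mathcal{T}$, namely $\mathcal{P}^\mathcal{T}(\lca(a_{l-1},a_l))\cup\mathcal{P}^\mathcal{T}(\lca(a_r,a_{r+1}))$. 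Hence $\Delta L\cdot I_{\Lambda(v)}$ has $O(\tau)$ nonzero columns, each of column sparsity $\tau$ by~\cref{lem:L-column-sparsity-pattern}; forming $\Delta L\cdot I_{\Lambda(v)}\cdot Z_v^\top$ reads only the $O(\tau)$ rows of $Z_v^\top$ indexed by $\Lambda(v)$ at cost $O(\tau^2 r)$; and each of the $r$ resulting columns is supported on $O(\tau)$ entries, so applying $L[\ell]^{-1}$ through~\cref{lem:Linv-sparsity-time} costs another $O(\tau^2 r)$. As a by-product, $\new{Z_v}-\old{Z_v}$ has only $O(\tau)$ nonzero columns.

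The reassignment of $y^\triangledown_v$ must be interpreted incrementally: naively recomputing $Z_v(I-I_{\Lambda(v)})h$ would be too expensive since $Z_v$ itself may be supported on all of $\bigcup_{i\in\overline{\chi}(v)}\mathcal{P}^\mathcal{T}(a_i)$. Using invariant~(iv) prior to the update, it suffices to add $(\new{Z_v}-\old{Z_v})(I-I_{\Lambda(v)})h$, which by the sparsity bound above costs $O(\tau r)$ and restores~(iv). Finally, $y^\vartriangle_v=Z_v I_{\Lambda(v)}h$ touches only the $O(\tau)$ columns of $Z_v$ indexed by $\Lambda(v)$ at cost $O(\tau r)$, and the returned $y^\vartriangle_v+y^\triangledown_v=Z_v h=\Phi_{\chi(v)}\mathcal{W}^\top h$ is correct. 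The main subtlety -- and where the balanced construction pays off relative to the simple sampling tree -- is the $O(\tau)$ two-path structure of $\Lambda(v)$, which converts an a priori dense update into one supported on only $O(\tau)$ columns.
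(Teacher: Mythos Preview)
Your proof is correct and follows essentially the same approach as the paper's. You are in fact more explicit than the paper on two points the paper glosses over: (1) that $\Lambda(v)$ is contained in the union of two root-to-node paths in $\mathcal{T}$ (the paper's proof only invokes \cref{lem:heavy-light-intersection-size} for the size bound and then mentions ``two paths'' in passing), and (2) that the line $y^{\triangledown}_v\leftarrow Z_v(I-I_{\Lambda(v)})h$ in the pseudocode must be read as an incremental update via the old $y^{\triangledown}_v$ and the $O(\tau)$-column difference $\new{Z_v}-\old{Z_v}$, since a direct recomputation could cost $\Theta(|\overline{\chi}(v)|\cdot r)$ at nodes near the root. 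The paper's proof says ``update $Z_v$ and $y^{\triangledown}_v$'' which signals the same intent, but you spell out why this is necessary and how it works.
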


\begin{proof}
When $v\in\mathcal{S}\setminus\mathcal{B}$, we directly compute $\Phi_{\chi(v)}H^{-1/2}A^{\top}L[\ell]^{-\top}h=J_v\cdot L[\ell]^{-\top}h^{(\ell)}$.
Let $u$ be the lowest ancestor node of $v$ in $\mathcal{B}$. Note
the row sparsity pattern of $J_v$ lies on $\mathcal{P}^{\mathcal{T}}(u)$
by the construction of $\mathcal{S}$. Hence, we can solve $L[\ell]^{-1}J_v^{\top}$
in $O(\tau^{2}r)$ time by \cref{lem:Linv-sparsity-time}. Since $h^{(\ell)}$ is given explicitly, we compute
$J_v\cdot L[\ell]^{-\top}h^{(\ell)}$ in $O(\tau^{2}r)$ time.

In the other case where $v\in\mathcal{B}$, the correctness follows by \cref{inv:w-balanced-invariant}. 
For the runtime, by \cref{lem:heavy-light-intersection-size},
we have $|\Lambda(v)|=O(\tau)$. By the sparsity pattern of $L$, 
we can compute $(L[\ell]-L[t_v])\cdot I_{\Lambda(v)}\cdot Z_{\triangledown}[v]^{\top}$ in $O(\tau^{2} r)$ time, and the column sparsity pattern of the result is on two paths of $\mathcal{T}$. Hence, we can update $Z_v$
and $y^{\triangledown}_v$ in $O(\tau^{2}r)$ time. Since
$|\Lambda(v)|=O(\tau)$, computing $y_{\vartriangle}$ takes  $O(r\cdot\tau)$ time. In total, we can compute  $\Phi_{\chi(v)}\mathcal{W}^{\top}h$ in $O(\tau^{2}r)$ time.
\end{proof}

\begin{lemma}[$\textsc{UpdateBlock}$]
\label{lem:balanced-update-coordinate}
Suppose the current state
of the data structure satisfies \cref{inv:w-balanced-invariant}.
The function $\textsc{UpdateBlock}$ of \textsc{BalancedSketch} updates the implicit representation
of $\mathcal{W}$ by updating $i$-th
block coordinate of $\ox$, $\ox_{i}$ to $\new{\ox_{i}}$, in time $O(\tau^{2}r \log n)$.
Moreover, \cref{inv:w-balanced-invariant} is preserved after the function call.
\end{lemma}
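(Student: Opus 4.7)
The plan is to reduce a single-block update of $\ox_i$ into two logically separate effects and handle them through the helper routines $\textsc{UpdateL}$ and $\textsc{UpdateH}$. A change $\ox_i \leftarrow \new{\ox_i}$ alters $H = \nabla^2\phi(\ox)$ only in the $i$-th block, which in turn causes $AH^{-1}A^\top$ to change by a matrix supported on the $\mathcal{A}_i \times \mathcal{A}_i$ submatrix (an $O(1)$-rank perturbation of the form $\pm A_i(\Delta H_i^{-1})A_i^\top$). By \cref{lem:cholesky-update-time}, the new Cholesky factor $L[\ell]$ can be computed in $O(\tau^2)$ time, and by \cref{lem:cholesky-update-sparsity-pattern}, the non-zero columns of $\Delta L \defeq L[\ell] - L[\ell-1]$ lie on $\mathcal{P}^{\mathcal{T}}(\low^{\mathcal{T}}(A_i))$.

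For correctness, I first identify the set of nodes $v\in\mathcal{B}$ whose invariants can be violated by this change. Invariant (iii) $(L[\ell]-L[t_v])I_{\overline{\Lambda}(v)} = \boldsymbol{0}$ can fail at $v$ only if some column $j \in \mathcal{P}^{\mathcal{T}}(\low^{\mathcal{T}}(A_i))$ lies in $\overline{\Lambda}(v)$, i.e., $v$ must be an ancestor (in $\mathcal{B}$) of some $\Lambda^{\clubsuit}(j)$ for $j$ on the $\mathcal{T}$-path. Applying \cref{lem:Lamda-clubsuit-path} to that $\mathcal{T}$-path, the images $\{\Lambda^{\clubsuit}(j) : j \in \mathcal{P}^{\mathcal{T}}(\low^{\mathcal{T}}(A_i))\}$ themselves form a path in $\mathcal{B}$ whose bottom endpoint is $\Lambda^{\clubsuit}(\low^{\mathcal{T}}(A_i))$, and by \cref{lem:Lambda-ancestors}, taking $\mathcal{B}$-ancestors of all such endpoints collapses to the single $\mathcal{B}$-path $S \defeq \mathcal{P}^{\mathcal{B}}(\Lambda^{\clubsuit}(\low^{\mathcal{T}}(A_i)))$. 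Thus passing exactly this $S$ to $\textsc{UpdateL}$ is both necessary and sufficient to restore (iii), and by \cref{lem:balanced-update-L} this call also repairs (ii) and (iv) at every $v \in S$ (while (i) is untouched since $H$ is not yet modified). The subsequent $\textsc{UpdateH}(\new H)$ call then repairs (i) for the affected $v$, with (ii) and (iv) simultaneously restored according to the new $J_v$, by \cref{lem:balanced-update-H}.

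For the runtime, the Cholesky update costs $O(\tau^2)$ by \cref{lem:cholesky-update-time}. Since $|S| = O(\log n)$ and $\Delta L$ has $O(\tau)$ non-zero columns, the call to $\textsc{UpdateL}(S)$ costs $O(|S|\cdot \tau^2 r) = O(\tau^2 r \log n)$ by \cref{lem:balanced-update-L}. Since $\Delta H$ is non-zero only in the $i$-th block of size $n_i = O(1)$, $\textsc{UpdateH}$ touches only the $O(\log n)$ ancestors in $\mathcal{S}$ of the $O(1)$ many leaves corresponding to block $i$, and by \cref{lem:balanced-update-H} runs in $O(\tau^2 r \log n)$. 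Summing yields the claimed $O(\tau^2 r \log n)$ bound.

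The main obstacle I expect is precisely the path-collapse argument identifying $S$: one must verify, using \cref{lem:Lambda-ancestors} together with \cref{lem:Lamda-clubsuit-path}, that the union $\bigcup_{j \in \mathcal{P}^{\mathcal{T}}(\low^{\mathcal{T}}(A_i))} \mathcal{P}^{\mathcal{B}}(\Lambda^{\clubsuit}(j))$ really is a single root-to-$\Lambda^{\clubsuit}(\low^{\mathcal{T}}(A_i))$ path in $\mathcal{B}$ rather than a branching set, and moreover that no node outside this path needs updating. Everything else is bookkeeping that follows from the structural lemmas already established for $\Lambda$, $\overline{\Lambda}$, and $\Lambda^{\clubsuit}$, combined with the sparse Cholesky machinery.
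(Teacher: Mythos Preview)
Your proposal is correct and follows essentially the same approach as the paper: identify the affected set $S=\mathcal{P}^{\mathcal{B}}(\Lambda^{\clubsuit}(\low^{\mathcal{T}}(A_i)))$ via the sparsity of $\Delta L$, collapse the union $\bigcup_{w}\mathcal{P}^{\mathcal{B}}(\Lambda^{\clubsuit}(w))$ to this single path using \cref{lem:Lamda-clubsuit-path}, then invoke \cref{lem:balanced-update-L} and \cref{lem:balanced-update-H} for correctness and runtime. The only minor point is that your appeal to \cref{lem:Lambda-ancestors} in the path-collapse step is superfluous: once \cref{lem:Lamda-clubsuit-path} places every $\Lambda^{\clubsuit}(w)$ on $\mathcal{P}^{\mathcal{B}}(\Lambda^{\clubsuit}(\low^{\mathcal{T}}(A_i)))$, the containment $\mathcal{P}^{\mathcal{B}}(\Lambda^{\clubsuit}(w))\subseteq \mathcal{P}^{\mathcal{B}}(\Lambda^{\clubsuit}(\low^{\mathcal{T}}(A_i)))$ is an immediate tree fact, and \cref{lem:Lambda-ancestors} (which is about $\overline{\Lambda}$-containment, not path-collapse) plays no role here.
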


\begin{proof}
\textbf{Correctness: }To update $L$, it suffices
to update the nodes in the set $S=\mathcal{P}^{\mathcal{B}}(\Lambda^{\clubsuit}(u))$: Indeed, note that only (iii) of \cref{inv:w-balanced-invariant}
depends on $L[\ell]$. Thus, we need to update the sketch only if
\[
(L[\ell+1]-L[\ell])\cdot I_{\overline{\Lambda}(v)}\neq\boldsymbol{0}.
\]
We use $\Delta L$ to denote $L[\ell+1]-L[\ell]$. By \cref{lem:cholesky-update-sparsity-pattern}, the non-zero columns of $\Delta L$ lies on $\mathcal{P}^{\mathcal{T}}(\low^{\mathcal{T}}(A_{i}))$.
Let $u=\low^{\mathcal{T}}(A_{i})$, then we have 
\[
\Delta L=\sum_{w\in\mathcal{P}^{\mathcal{T}}(u)}(\Delta L)_{w}e_{w}^{\top}.
\]
We note that $(\Delta L)_{w}e_{w}^{\top}\cdot I_{\overline{\Lambda}(v)}\neq\boldsymbol{0}$
only if $w\in\overline{\Lambda}(v)$. By definition of $\Lambda^{\clubsuit}$
and \cref{lem:Lambda-club-path}, we have $(\Delta L)_{w}e_{w}^{\top}\cdot I_{\overline{\Lambda}(v)}\neq\boldsymbol{0}$
only if $v\in\mathcal{P}^{\mathcal{B}}(\Lambda^{\clubsuit}(w))$.
Thus, we need to update the set 
\[
\bigcup_{w\in\mathcal{P}^{\mathcal{T}}(u)}\mathcal{P}^{\mathcal{B}}(\Lambda^{\clubsuit}(w)).
\]
By \cref{lem:Lamda-clubsuit-path}, we have $\Lambda^{\clubsuit}(w)\in\mathcal{P}^{\mathcal{B}}(\Lambda^{\clubsuit}(u))$.
Hence, $\bigcup_{w\in\mathcal{P}^{\mathcal{T}}(u)}\mathcal{P}^{\mathcal{B}}(\Lambda^{\clubsuit}(w))=\mathcal{P}^{\mathcal{B}}(\Lambda^{\clubsuit}(u))$.

\textbf{Runtime: }By \cref{lem:cholesky-update-time}, we can perform the rank-1 updates
on $L$ in $O(\tau^{2})$ time. Since $S=\mathcal{P}^{\mathcal{B}}(\Lambda^{\clubsuit}(u))$,
we have $|S|=O(\log n).$ We note when $x_{i}$ updates, we only need
to update one diagonal block of $H$, hence $\nnz(\Delta H)=O(1)$.
By \cref{lem:balanced-update-H,lem:balanced-update-L},
we can update $H$ and $L$ for the data structure in time $O(\tau^{2}r \log n)$.
Hence, the function takes $O(\tau^{2}r \log n)$ time in total.
\end{proof}
\begin{lemma}[$\textsc{Update}L$]
\label{lem:balanced-update-L}Suppose \cref{inv:w-balanced-invariant}
is satisfied. Given the set $S\subset\mathcal{\mathcal{B}}$, then the function $\textsc{Update}L$ of \textsc{BalancedSketch}
updates the sketches implicitly and $t_v$ to the current time at each $v\in S$. 
If the number of non-zero column of $\Delta L$ is bounded by $O(\tau)$, then the function $\textsc{Update}L$ takes  $O(|S|\cdot\tau^{2}\cdot r)$ time. 
\end{lemma}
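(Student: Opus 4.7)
My plan is to mimic the structure of the proof of \cref{lem:update-L} from the simple-sampling-tree setting, adapted to use the balanced-sampling-tree analogue \cref{lem:balanced-Z*-formula} and the heavy–light bound $|\Lambda(v)|=O(\tau)$ from \cref{lem:heavy-light-intersection-size}. Fix $v\in S$. The goal of one iteration of the for-loop is to restore $Z_v=J_v\cdot L[\ell]^{-\top}$ while maintaining invariant (iii) of \cref{inv:w-balanced-invariant} (which now reads $(L[\ell]-L[t_v])\cdot I_{\overline{\Lambda}(v)}=\mathbf{0}$) and updating $y^{\triangledown}_v$ and $t_v$ accordingly.

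For correctness I would apply \cref{lem:balanced-Z*-formula} twice, decomposing the advance $L[t_v]\to L[\ell]$ into $L[t_v]\to L[\ell-1]$ followed by $L[\ell-1]\to L[\ell]$. For the first step, \cref{inv:w-balanced-invariant} held at time $\ell-1$, so the formula of \cref{lem:balanced-Z*-formula} applies verbatim with $\Lambda(v)$ as the support set, giving the first assignment to $Z_v$. For the second step, $\Delta L\defeq L[\ell]-L[\ell-1]$ has only $O(\tau)$ non-zero columns (by hypothesis), and by \cref{lem:cholesky-update-sparsity-pattern} these columns lie on a single root-path of $\mathcal{T}$, so applying the formula once more and noting $(\Delta L)\cdot Z_v^{\top}=(\Delta L)\cdot Z_v^{\top}$ without any restriction (we can afford it directly) yields the second assignment. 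After both assignments, $Z_v=J_v\cdot L[\ell]^{-\top}$, so setting $t_v\leftarrow\ell$ restores (ii) and (iii); recomputing $y^{\triangledown}_v\leftarrow Z_v(I-I_{\Lambda(v)})h$ then restores (iv).

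For the running time per $v$, I would bound each of the three expensive sub-operations by $O(\tau^2 r)$. Since $|\Lambda(v)|=O(\tau)$ and each column of $L$ is $\tau$-sparse by \cref{lem:L-column-sparsity-pattern}, the sparse product $(L[\ell-1]-L[t_v])_{\Lambda(v)}\cdot Z_v^{\top}$ has $O(\tau)$ non-zero rows and costs $O(\tau^2 r)$; the subsequent triangular solve with $L[\ell-1]^{-\top}$ applied to each of the $r$ right-hand sides costs $O(\tau^2 r)$ by \cref{lem:Linv-sparsity-time} because the support lies on a union of $O(1)$ root-paths in $\mathcal{T}$. The second-step product and solve are bounded identically using the $O(\tau)$ non-zero columns of $\Delta L$. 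Summing over $v\in S$ yields the claimed $O(|S|\cdot \tau^2\cdot r)$.

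The main obstacle I anticipate is bounding the cost of recomputing $y^{\triangledown}_v=Z_v(I-I_{\Lambda(v)})h$ from scratch: although $Z_v$ may have many non-zero columns on $\bigcup_{i\in\overline{\chi}(v)}\mathcal{P}^{\mathcal{T}}(a_i)$, only the \emph{change} $\Delta Z_v$ produced by this procedure matters for maintaining $y^{\triangledown}_v$, and by \cref{lem:Linv-sparsity-time} that change is supported on a union of $O(\tau)$ root-paths in $\mathcal{T}$, giving $\|\Delta Z_v\|_0=O(\tau^2 r)$. Consequently the update $y^{\triangledown}_v\leftarrow y^{\triangledown}_v+(\Delta Z_v)(I-I_{\Lambda(v)})h$ can be performed in $O(\tau^2 r)$ time rather than by a literal re-evaluation of the displayed expression; making this lazy update explicit, and checking that it is consistent with how the pseudocode will be interpreted, is the only delicate part of the argument.
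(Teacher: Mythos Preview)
Your proposal is correct and follows essentially the same approach as the paper: invoke \cref{lem:balanced-Z*-formula} for correctness, and for the runtime use $|\Lambda(v)|=O(\tau)$ from \cref{lem:heavy-light-intersection-size} together with the column sparsity of $L$ (\cref{lem:L-column-sparsity-pattern}, \cref{lem:cholesky-update-sparsity-pattern}) to bound each product and triangular solve by $O(\tau^2 r)$.

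One minor remark: your caution about $y^{\triangledown}_v$ is well placed --- a literal re-evaluation of $Z_v(I-I_{\Lambda(v)})h$ would not obviously fit in $O(\tau^2 r)$, and the paper's terse ``update $Z_v$ and $y^{\triangledown}_v$ in time $O(\tau^2 r)$'' is implicitly doing the incremental update you describe. In fact the support of $\Delta Z_v$ lies on $O(1)$ root-paths of $\mathcal{T}$ (not $O(\tau)$), since both $\Lambda(v)$ and the non-zero columns of $\Delta L$ sit on at most two root-paths; this gives $\|\Delta Z_v\|_0=O(\tau r)$, comfortably within your budget.
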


\begin{proof}
\textbf{Correctness: }The correctness directly follows by \cref{lem:balanced-Z*-formula}.

\textbf{Runtime: }
By \cref{lem:heavy-light-intersection-size}, we have $|\Lambda(v)|=O(\tau)$.
By \cref{lem:L-column-sparsity-pattern,lem:cholesky-update-sparsity-pattern}, 
we can compute $(L[\ell]-L[t_v])\cdot I_{\Lambda(v)}\cdot Z_{\triangledown}[v]^{\top}$
and $\Delta L\cdot Z_{\triangledown}[v]^{\top}$ in time $O(\tau^{2}r)$
and the column sparsity pattern of the result is on a path of $\mathcal{T}$.
Hence, we can update $Z_v$ and $y^{\triangledown}_v$ in time $O(\tau^{2}r)$.
Hence, the total time is bounded by $O(|S|\cdot\tau^{2}r)$. 
\end{proof}
\begin{lemma}[$\textsc{Update}H$]
\label{lem:balanced-update-H} Suppose \cref{inv:w-balanced-invariant}
is satisfied for $\ell$. Given $\Delta H$ such that $H[\ell+1]=H[\ell]+\Delta H$,
then the function $\textsc{Update}H$ updates $H$ and the internal state of the data structure in $O(\nnz(\Delta H)\cdot\tau^{2} r \log n)$ time. 
\end{lemma}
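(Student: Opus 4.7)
The plan is to mirror the proof of the analogous simple-sampling-tree lemma (\cref{lem:update-H}), substituting the balanced tree's structural bounds. By invariant (i) of \cref{inv:w-balanced-invariant}, each $J_v = \Phi_{\chi(v)} H^{-1/2} A^\top$ depends on $H$ only through the coordinates in $\chi(v)$, so a change in block $i$ of $H$ can alter $J_v$ only when $\chi(v)$ contains some coordinate $j$ lying in block $i$. Since $n_i = O(1)$, block $i$ corresponds to $O(1)$ leaves of $\mathcal{S}$, and the set of potentially affected nodes is exactly the union of root-paths from these leaves, which is the set $S$ that the pseudocode iterates over.

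For each affected node $u$ on such a root-path, I would argue the update cost is $O(\tau^2 r)$. The increment to $J_u$ coming from a single coordinate $j$ in the changed block is of the form $\Phi_{\{j\}} \widetilde{H} A^\top$, where $\widetilde H$ is the rank-one change to $H^{-1/2}$ in coordinate $j$; this is an outer product of a column of $\Phi$ with a row of $A^\top$, which by the column sparsity of $A$ (\cref{lem:A-sparsity-pattern}) takes $O(\tau r)$ time. If $u \in \mathcal{B}$, I additionally refresh $Z_u \leftarrow J_u L[t_u]^{-\top}$ and $y^{\triangledown}_u \leftarrow Z_u(I - I_{\Lambda(u)}) h$. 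The sparsity pattern of $J_u$ lies on at most $O(\tau)$ columns (its non-zero columns are contained in $\bigcup_{i \in \overline{\chi}(u)} \mathcal{P}^{\mathcal{T}}(a_i)$ intersected with the path-structure available to $u$), so the triangular solve against $L[t_u]^{-\top}$ takes $O(\tau^2 r)$ time via \cref{lem:Linv-sparsity-time}, and the subsequent sparse multiplication by $h$ (using $|\Lambda(u)| = O(\tau)$ from \cref{lem:heavy-light-intersection-size} and the sparsity of $Z_u$ from \cref{lem:balanced-Z-sparsity}) stays within $O(\tau^2 r)$.

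Multiplying the per-node cost $O(\tau^2 r)$ by the path length $\mathrm{height}(\mathcal{S}) = O(\log n)$ and by the $O(1)$ leaves per affected block gives $O(\tau^2 r \log n)$ per block in the support of $\Delta H$, for a total of $O(\nnz(\Delta H) \cdot \tau^2 r \log n)$ as claimed. For invariant maintenance, (i) is restored directly by the recomputation of each $J_u$ on the root-path; (ii) is restored at the updated $u \in \mathcal{B}$ by recomputing $Z_u = J_u L[t_u]^{-\top}$ against the same $L[t_u]$ (crucially, \textsc{Update}$H$ does not modify $L$, so the time stamp $t_u$ need not advance); (iii) is untouched because $L[t_u]$ and $L[\ell]$ are both unchanged; and (iv) follows from the fresh computation of $y^{\triangledown}_u$.

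The only delicate point I anticipate is bookkeeping: one must check that nodes $u \in \mathcal{S} \setminus \mathcal{B}$ lying on the root-path genuinely require only a $J_u$ update (no $Z_u$ or $y^{\triangledown}_u$ is stored there), and that for $u \in \mathcal{B}$ the recomputation uses the stored $L[t_u]$ rather than the current $L[\ell]$, so that invariant (iii) (which was satisfied at time $\ell$) remains satisfied without needing to advance $t_u$. Given this, the correctness and the $O(\nnz(\Delta H) \cdot \tau^2 r \log n)$ runtime bound follow routinely.
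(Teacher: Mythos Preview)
Your overall approach and correctness argument match the paper's. There is, however, a real gap in the runtime step.

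You assert that ``the sparsity pattern of $J_u$ lies on at most $O(\tau)$ columns,'' but this is false for nodes $u \in \mathcal{B}$ high in the balanced tree: for such $u$, $J_u = \Phi_{\chi(u)} H^{-1/2} A^\top$ has nonzero columns on all of $\bigcup_{i \in \overline{\chi}(u)} \mathcal{P}^{\mathcal{T}}(a_i)$, which can be $\Theta(d)$ (and \cref{lem:balanced-Z-sparsity} gives the same bound for $Z_u$, not an $O(\tau)$ one). The parenthetical ``intersected with the path-structure available to $u$'' seems to be importing the simple-sampling-tree bound from \cref{lem:J-sparsity}, which does not hold in the balanced construction. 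Consequently, recomputing $Z_u \leftarrow J_u L[t_u]^{-\top}$ and $y^{\triangledown}_u \leftarrow Z_u(I - I_{\Lambda(u)}) h$ from scratch would not be $O(\tau^2 r)$ per node.

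The fix is the one you already set up in your first paragraph but then abandoned: work with the \emph{increment}. Since $\Delta J_u = \Phi_{\chi(u)} \widetilde{H} A^\top$ has its nonzero columns on the single path $\mathcal{P}^{\mathcal{T}}(\block_low(A_i))$, one computes $\Delta Z_u = (\Delta J_u)\, L[t_u]^{-\top}$ in $O(\tau^2 r)$ time via \cref{lem:Linv-sparsity-time}, and then $\Delta y^{\triangledown}_u = (\Delta Z_u)(I - I_{\Lambda(u)}) h$ in $O(\tau r)$ time because $\Delta Z_u$ itself has only $O(\tau)$ nonzero columns. This is exactly what the paper does (its pseudocode writes the full assignment for clarity, but the runtime analysis explicitly computes $\Phi_{\chi(v)}\widetilde{H}A^{\top}L[t_v]^{-\top}$). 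With this correction your argument goes through and coincides with the paper's.
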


\begin{proof}
\textbf{Correctness: }We observe that $Z_v$ changes only if $I_{\chi(v)}\cdot\Delta H\neq\boldsymbol{0}$.
Suppose the $i$-th block of $H$ changes, and let $v\in\mathcal{S}$
where $\chi(v)=\{i\}$. For changes on $H_{i}$, it suffices to update
$\mathcal{P}^{\mathcal{S}}(v)$. 

\textbf{Runtime: }Since $H$ is a block-diagonal matrix, let $\widetilde{H}\defeq(H[\ell]+\Delta H\cdot I_{\{i\}})^{-1/2}-H[\ell]^{-1/2}$,
then the nonzero pattern of $\widetilde{H}$ is an $n_{i}\times n_{i}$
submatrix on the block diagonal. Recall $n_{i}=O(1)$, hence, we can find $\Phi_{\chi(v)}\widetilde{H}A^{\top}$
by computing $O(1)$ many outer products of columns of $\Phi$ and
row of $A^{\top}$, which takes $O(\tau r)$ time by sparsity
pattern of $A$ (\cref{lem:A-sparsity-pattern}). Then, we can
update $Z_v$ by compute $\Phi_{\chi(v)}\widetilde{H}A^{\top}L[t_v]^{-\top}$,
which takes $O(\tau^{2}r)$ time by \cref{lem:Linv-sparsity-time}. Thus, we can update $Z_v$ and $y^{\triangledown}_v$
in time $O(\tau^{2}r)$ time for each $v$. As the 
height of $\mathcal{S}$ is bounded by $O(\log n)$, this function
takes time $O(\nnz(\Delta H)\cdot\tau^{2}r \log n)$.
\end{proof}

\subsection{Proof of \cref{thm:central-path-algo}}

\begin{algorithm}
	\caption{Robust Central Path Maintenance -- Initialize, MultiplyAndMove, Output
		\label{alg:CentralPath}}
	
	\algnewcommand{\LeftComment}[1]{\State \(\triangleright\) #1} 
	
	\begin{algorithmic}[1]
		
		\State \textbf{data structure }$\textsc{CentralPathMaintenance}$
		\textbf{extends }$\textsc{MultiscaleRepresentation}$
		
		\State \textbf{private : member}
		\State \hspace{4mm} \textsc{BalancedSketch} $\mathsf{Sketch}\mathcal{W}^\top\eps_x,\;
		\mathsf{Sketch}\mathcal{W}^\top\eps_s,\;
		\mathsf{Sketch}\mathcal{W}^\top h$
		
		\Comment{maintains $\mathcal{W}^{\top}\eps_{x},\mathcal{W}^{\top}\eps_{s}$, and $\mathcal{W}^\top h$,
		\cref{thm:balanced-W-sketch}}
		
		\State \hspace{4mm} 
		$\textsc{VectorSketch}$ $\textsf{Sketch}H^{1/2}\widehat{x},\; \textsf{Sketch}H^{-1/2}\widehat{s},\;\textsf{Sketch}H^{-1/2}c_x$

		\Comment{maintains $H_{\ox}^{1/2}\widehat{x}$, $H_{\ox}^{-1/2}\widehat{s}$, and $H_{\ox}^{-1/2}c_x$,
		\cref{thm:vector-sketch}}
		
		\State \hspace{4mm}  $\textsc{\ensuremath{\ell_{\infty}-}Approximates}$
		$\textsf{Approx}H_{\ox}^{1/2}x,\;\textsf{Approx}H_{\ox}^{-1/2}s$
		\State \Comment{maintains $\ell_\infty$-approximations of $H_{\ox}^{1/2}x$ and $H_{\ox}^{-1/2}s$, \cref{thm:approximate-ell-infty}}
		
		\State \hspace{4mm}  Sampling tree $(\mathcal{S},\chi)$
		
		\State \hspace{4mm} $\ell \in \mathbb{N}$ \Comment{central path step counter}

		\State \hspace{4mm} $N\in \mathbb{N}$ \Comment{upper bound on total number of steps}
		
		\State \hspace{4mm} $k \leftarrow \sqrt{n}$ \Comment{number of steps supported before a restart}

		\State \hspace{4mm} $r\leftarrow \Theta(\log^3(N))$ 

		\State \hspace{4mm} $\Phi\in\R^{r\times n}$ \Comment{JL matrix}
		
		\State \textbf{end members}
		
		\Procedure{$\textsc{Initialize}$}{$x\in\R^{n},s\in\R^{n},t\in\R_{+},\overline{\eps}\in(0,1)$}
		
		\State $\textsf{super}.\textsc{Initialize}(x,s,x,s,t)$
		
		\State Initialize $\Phi\in\R^{r\times n}$ by letting each
		entry be i.i.d.~samples from $\mathcal{N}(0,\frac{1}{\sqrt{r}})$
		
		\State Construct sampling tree $(\mathcal{S},\chi)$ as in \cref{subsec:balanced-sampling-tree-construction}.
		
		\State $\textsc{InitializeSketch}()$

		\State $\ell \leftarrow 0$
		
		\State $\eps_{\mathrm{apx}}^{(x)}\leftarrow \frac{\overline{\eps}}{\max_{i}n_{i}},\zeta^{(x)}\leftarrow 2\alpha,\delta_{\mathrm{apx}}\leftarrow \frac{N}{20k}$
		\Comment{setting the appropriate approximation tolerances}

		\State $\eps_{\mathrm{apx}}^{(s)}\leftarrow \frac{\overline{\eps}\cdot \overline{t}}{2\max_{i}n_i },\zeta^{(s)}\leftarrow 2\alpha \overline{t}$
		\Comment{$\alpha, n_i$ as in~\cref{alg:IPM_framework}}

		\State $\textsf{Approx}H_{\ox}^{1/2}x.\textsc{Initialize}(\mathcal{S},\chi,\eps_{\mathrm{apx}}^{(x)},\delta_{\mathrm{apx}},\zeta^{(x)},k)$
		\State $\textsf{Approx}H_{\ox}^{-1/2}s.\textsc{Initialize}(\mathcal{S},\chi,\eps_{\mathrm{apx}}^{(s)},\delta_{\mathrm{apx}},\zeta^{(s)}, k)$
		
		\EndProcedure
		
		\Procedure{$\textsc{MultiplyAndMove}$}{$t\in \R_+$}

		\State $\ell \leftarrow \ell +1$

		\If{$|\overline{t}-t|>\overline{t}\cdot \eps_t$ or $\ell > k$} \Comment{restarts entire data structure} 

		\State $\textsc{Initialize}(x,s,t,\overline{\eps})$ 

		\EndIf
		
		\State \textsf{super}.$\textsc{Move}()$ 
		
		\State \Comment {Oracle $\mathcal{O}_x, \mc{O}_s$ for the $\ell_\infty$-\textsc{Approximates} data structures, \cref{lem:oracles}}
		\State $\new\ox\leftarrow H_{\ox}^{-1/2}\cdot  \textsf{Approx}H_{\ox}^{1/2}x.\textsc{Query}(\mathcal{O}_x \{H_{\ox}^{1/2}\widehat{x}+\beta_x \cdot c_{x}-\mathcal{W}^{\top}(\beta_x h+\eps_{x})\})$
		\label{line:main-update-ox}
		\State $\new\os\leftarrow H_{\ox}^{1/2}\cdot\mathsf{Approx}H_{\ox}^{-1/2}s.\textsc{Query}(\mathcal{O}_s\{H_{\ox}^{-1/2}\widehat{s}+\mathcal{W}^{\top}(\beta_s h+\eps_{s})\})$
		\label{line:main-update-os}
		\State $\textsf{super}.\textsc{Update}(\new\ox,\new\os)$ 
		
		\State $\textsc{UpdateSktech}()$
		
		\EndProcedure
		
		\Procedure{$\textsc{Output}$}{}
		
		\State \Return $\widehat{x}+H_{\ox}^{-1/2}\beta_x c_{x}-H_{\ox}^{-1/2}\mathcal{W}^{\top}(\beta_{x} h+\eps_{x}),\;\;
		\widehat{s}+H_{\ox}^{1/2}\mathcal{W}^{\top}(\beta_{s} h+\eps_{s})$ 
		
		\EndProcedure
		
	\end{algorithmic}
\end{algorithm}

\begin{algorithm}
	\caption{Robust Central Path Maintenance -- Helper Functions\label{alg:CentralPath-1}}
	
	\begin{algorithmic}[1]
		
		\State \textbf{datastructure }$\textsc{CentralPathMaintenance}$
		\textbf{extends }$\textsc{MultiscaleRepresentation}$
		
		\Procedure{$\textsc{InitializeSketch}$}{}
		
		\State$\mathsf{Sketch}\mathcal{W}^\top\eps_x.
		\textsc{Initialize}(\ts, \chi, \Phi,x,\eps_{x})$
		
		\State$\mathsf{Sketch}\mathcal{W}^\top\eps_s.
		\textsc{Initialize}(\ts, \chi, \Phi,x,\eps_{s})$
		
		\State $\mathsf{Sketch}\mathcal{W}^\top h.
		\textsc{Initialize}(\ts, \chi, \Phi,x,h)$
		
		\State
		$\textsf{Sketch}H^{-1/2}c_x.
		\textsc{Initialize}(\mathcal{S},\chi,\Phi,H^{-1/2}c_{x})$ 
		
		\State 
		$\textsf{Sketch}H^{1/2}\widehat{x}.\textsc{Initialize}(\mathcal{S},\chi,\Phi,H_{\ox}^{1/2}\widehat{x})$
		
		\State 
		$\textsf{Sketch}H^{-1/2}\widehat{s}.\textsc{Initialize}(\mathcal{S},\chi,\Phi,H_{\ox}^{-1/2}\widehat{s})$
		
		\EndProcedure
		
		\Procedure{$\textsc{UpdateSketch}$}{}
		
		\State $\mathsf{Sketch}\mathcal{W}^\top\eps_x.\textsc{Update}(\ox, \eps_{x})$
		
		\State $\mathsf{Sketch}\mathcal{W}^\top\eps_s.\textsc{Update}(\ox, \eps_{s})$
		
		\State $\mathsf{Sketch}\mathcal{W}^\top h.\textsc{Update}(\ox, h)$
		
		\State $\textsf{Sketch}H^{-1/2}c_x.\textsc{Update}(H_{\ox}^{-1/2}c_{x})$
		
		\State $\textsf{Sketch}H^{1/2}\widehat{x}.\textsc{Update}(H_{\ox}^{1/2}\widehat{x})$
		
		\State $\textsf{Sketch}H^{-1/2}\widehat{s}.\textsc{Update}(H_{\ox}^{-1/2}\widehat{s})$
		
		\EndProcedure
		
		\State 
		
		\State \textbf{Oracle}  $\mathcal{O}_x\{H_{\ox}^{1/2}\widehat{x}+\beta_x \cdot c_{x}-\mathcal{W}^{\top}(\beta_x h+\eps_{x})\}$
		\label{line:main-alg-oracle-begin}
		
		\Procedure{\textsc{TypeI}}{$v \in \mc{S}$}
		\State \Return 
		$\textsf{Sketch}H^{1/2}\widehat{x}.\textsc{Query}(v)+
		\beta_x\cdot\textsf{Sketch}H^{-1/2}c_x.\textsc{Query}(v)
		-$\\ 
		\hspace{6em}
		$\beta_x\cdot\mathsf{Sketch}\mathcal{W}^\top h.\textsc{Query}(v) - \mathsf{Sketch}\mathcal{W}^\top\eps_x.\textsc{Query}(v)$
		\EndProcedure
		\Procedure{\textsc{TypeII}}{$i \in [n]$}
		\State \hspace{4mm} \Return  $e_{i}^{\top}(H_{\ox}^{1/2}\widehat{x}+\beta_x \cdot c_{x}-\mathcal{W}^{\top}(\beta_x h+\eps_{x}))$
		\EndProcedure

		\State 
		\State \textbf{Oracle}  $\mathcal{O}_s\{H_{\ox}^{-1/2}\widehat{s}+\mathcal{W}^{\top}(\beta_s h+\eps_{s})\}$
		
		\Procedure{\textsc{TypeI}}{$v \in \mc{S}$}
		\State \Return 
		$
		\textsf{Sketch}H^{-1/2}\widehat{s}.\textsc{Query}(v)+\beta_s \cdot \mathsf{Sketch}\mathcal{W}^\top h .\textsc{Query}(v)+\mathsf{Sketch}\mathcal{W}^\top\eps_s.\textsc{Query}(v)$
		\EndProcedure
		
		\Procedure{\textsc{TypeII}}{$i \in [n]$}
			\State \hspace{4mm} \Return  $e_{i}^{\top}(H_{\ox}^{-1/2}\widehat{s}+\mathcal{W}^{\top}(\beta_s h+\eps_{s}))$
		\EndProcedure \label{line:main-alg-oracle-end}

	\end{algorithmic}
\end{algorithm}

	\textbf{Proof of Correctness:\footnote{
		\cite{gu2022faster} has noted an error pertaining to the maintenance of the approximate solutions (see their remark 4.19). 
		Specifically, the guarantee of \cref{eq:approx-guarantee-bug} is insufficient to satisfy the conditions of 
		\cref{thm:approximate-ell-infty} for using the $\ell_\infty$-\textsc{Approximates} data structure. 
		A fix is provided by \cite{gu2022faster}, where the sampling procedure along the sampling tree is replaced by a top-down BFS.
		An alternative fix is to modify the $\ell_\infty$-\textsc{Approximates} data structure, so that rather than approximating $y = H_{\ox}^{1/2} x$, where $y$ is treated as a black-box vector, the data structure should maintain the term $H_{\ox}^{1/2}$ separately from $x$. The correct implementation of this approach can be found in \cite[Section 6]{dong2022nested}.
		In this manuscript, we have left the error as is.}}
	
	At every iteration of \textsc{MultiplyAndMove}, we call \textsf{super.}\textsc{Move} followed by \textsf{super.}\textsc{Update} with the updated $\new \ox, \new \os$ values.
	Therefore, we correctly maintain the implicit representation $(x,s)$ directly as a result of \cref{thm:W-representation}. 
	
	Now, we show that 
	\[
	\|\ox_{i}-x_{i}\|_{\ox_{i}}\leq\overline{\eps}\quad\text{and}\quad\|\os_i-s_i\|_{\ox_i}^* \leq t\overline{\eps}\quad\text{for all \ensuremath{i\in[m]}}.
	\]
	By construction of $\ell_\infty$\textsc{-Approximates} data structure,  $\mathsf{Approx}H_{\ox}^{1/2}x$ maintains an $\ell_\infty$-approximation of
	\[
		H_{\ox}^{1/2}\widehat{x}+\beta_x \cdot c_{x}-\mathcal{W}^{\top}(\beta_x h+\eps_{x})=H_{\ox}^{1/2}x.
	\]
	For any non-negative integer $\ell\leq k$, we have the guarantee
	\begin{equation} \label{eq:approx-guarantee-bug}
		\|H_{\ox}^{1/2}x^{(\ell+1)}-H_{\ox}^{1/2}x^{(\ell)}\|_{2} =\| H_{\ox}^{1/2}(x^{(\ell+1)}-x^{(\ell)})\|_{2}
		=\|\delta_{x}^{(\ell)}\|_{\ox}
		 \leq\frac{9}{8}\alpha\leq \zeta^{(x)},
	\end{equation}
	where we used \cref{lem:alpha_i} for the first inequality.
	Since $\eta = O(\log n)$, $k=\sqrt{n}$, and $\delta_{\mathrm{apx}}=\frac{N}{2k}$, we can choose $r=\Theta(\log^3 N)$.
	
	By \cref{thm:approximate-ell-infty},
	if $\widetilde{x}$ denotes the output of $\textsf{Approx}\ox.\textsc{Query}$, then it satisfies 
	\begin{align*}
	\| H_{\ox}^{1/2}x-\widetilde{x}\|_{\infty} &\leq \eps_{\mathrm{apx}}^{(x)} = \frac{\overline{\eps}}{\max_{i}n_{i}}.\\
	\intertext{In \cref{line:main-update-ox}, we set $\overline{x}=H^{-1/2}\widetilde{x}$, so} 
	\|H^{1/2}(\ox-x)\|_{\infty}&\leq\frac{\overline{\eps}}{\max_{i}n_{i}}\cdot \\
	\intertext{Therefore, we have the desired error bound}
	\|\ox_{i}-x_{i}\|_{\ox_{i}}=\|H_{\ox_{i}}^{1/2}(\ox_{i}-x_{i})\|_{2} &\leq\sqrt{n_{i}}\cdot\|H_{\ox_{i}}^{1/2}(\ox_{i}-x_{i})\|_{\infty}\leq\overline{\eps}.
	\end{align*}
	
	Similarly, $\textsf{Approx}H_{\ox}^{-1/2}s$ maintains an $\ell_\infty$-approximation of 
	\[
		H_{\ox}^{-1/2}\widehat{s}+\mathcal{W}^{\top}(\beta_s h+\eps_{s})=H_{\ox}^{-1/2}s.
	\]
	By \cref{lem:alpha_i}, for any non-negative integer $\ell\leq k$
	we have 
	\[
	\|H_{\ox}^{-1/2}\delta_{s}^{(\ell)}\|\leq\frac{9}{8}\alpha\cdot t \leq \frac98 \alpha \overline{t} \leq \zeta^{(s)},
	\]
	where we used $t\leq\overline{t}$ at every step of the algorithm.
	By \cref{thm:approximate-ell-infty}, if $\widetilde{s}$ denotes the output of $\textsf{Approx}\os.\textsc{Query}$, then in \cref{line:main-update-os}, $\os=H_{\ox}^{1/2}\widetilde{s}$, and so
	\[
	\|H_{\ox}^{-1/2}(s-\overline{s})\|_{\infty}\leq \eps_{\mathrm{apx}}^{(s)} = \frac{\overline{t}\cdot \overline{\eps}}{2\max_{i}n_{i}},
	\]
	Therefore, we have the desired error bound
	\[
	\|\os_{i}-s_{i}\|_{\ox_{i}}^{*}=\|H_{\ox_{i}}^{-1/2}(s_{i}-\os_{i})\|_{2}\leq\sqrt{n_{i}}\cdot\frac{\overline{t}\cdot\overline{\eps}}{2\max_{i}n_{i}}\leq t\overline{\eps},
	\]
	where the last step follows by $0<\eps_t<\frac{1}{2}$ and hence $t\in (\overline{t}/2,\overline{t}]$.
	
	By our choice of $\delta_{\mathrm{apx}}$, $\textsf{Approx}H_{\ox}^{1/2}x$ and $\textsf{Approx}H_{\ox}^{-1/2}s$ succeed with probability at least $1-\frac{N}{10k}$. Taking the union bound over $\frac{N}{k}$ many restarts, the data structure succeeds with probability at least $0.9$ after $N$ total steps of central path.

	Lastly, we ensure our oracle implementations are correct. For simplicity, we check $\mc{O}_x$:
	\begin{lemma}\label{lem:oracles}
		Oracles $\mc{O}_x$ and $\mc{O}_s$ are implemented correctly on \crefrange{line:main-alg-oracle-begin}{line:main-alg-oracle-end} of \cref{alg:CentralPath-1} for the latest query to the $\ell_\infty$-\textsc{Approximates} data structure $\textsf{Approx}H_{\ox}^{1/2}x$ and $\textsf{Approx}H_{\ox}^{-1/2}s$.
	\end{lemma}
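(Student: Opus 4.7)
The plan is to verify each of the two oracle types for $\mathcal{O}_x$ (the argument for $\mathcal{O}_s$ is symmetric) and then argue the sketches they query are synchronized with the current values of the implicit-representation variables.

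First, I will handle \textsc{TypeII}, which is the easier direction. The oracle must return $e_i^\top y^{(\ell)}$ where $y^{(\ell)} = H_{\ox}^{1/2}\widehat{x}+\beta_x c_x-\mathcal{W}^\top(\beta_x h+\eps_x)$, and indeed this is exactly the expression evaluated on \cref{line:main-alg-oracle-end} of \cref{alg:CentralPath-1}. The only subtlety is that each of $\widehat{x}, c_x, h, \eps_x, H_{\ox}, \mathcal{W}$ on the right-hand side must refer to the latest values at the time of the query. This is guaranteed by the order of operations in \textsc{MultiplyAndMove}: \textsf{super}.\textsc{Move} only updates $\beta_x$ and $\beta_s$ (by \cref{lem:W-move}), leaving every other variable from \cref{inv:representation-invariant} untouched, and the $\ell_\infty$-\textsc{Approximates} queries are issued before $\textsf{super}.\textsc{Update}$.

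For \textsc{TypeI}, I will invoke linearity of $\Phi_{\chi(v)}$ to write
\[
\Phi_{\chi(v)} y^{(\ell)} \;=\; \Phi_{\chi(v)} H_{\ox}^{1/2}\widehat{x} \;+\; \beta_x\Phi_{\chi(v)} c_x \;-\; \beta_x\Phi_{\chi(v)}\mathcal{W}^\top h \;-\; \Phi_{\chi(v)}\mathcal{W}^\top \eps_x,
\]
and then match each summand with one of the four subqueries that the oracle issues. By the contract of \textsc{VectorSketch} (\cref{thm:vector-sketch}), $\textsf{Sketch}H^{1/2}\widehat{x}.\textsc{Query}(v)$ returns $\Phi_{\chi(v)}$ applied to the currently stored vector, and the analogous statement holds for $\textsf{Sketch}H^{-1/2}c_x$; by the contract of \textsc{BalancedSketch} (\cref{thm:balanced-W-sketch}), $\mathsf{Sketch}\mathcal{W}^\top h.\textsc{Query}(v)$ and $\mathsf{Sketch}\mathcal{W}^\top\eps_x.\textsc{Query}(v)$ return $\Phi_{\chi(v)}\mathcal{W}^\top h$ and $\Phi_{\chi(v)}\mathcal{W}^\top\eps_x$ respectively, where $\mathcal{W}$ is determined by the currently stored $\ox$.

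The main obstacle — and the only nontrivial content of the lemma — is showing that the four underlying sketches are in fact up-to-date at the moment \textsc{TypeI} is invoked. I will argue this by induction on the central-path step. At the end of every call to \textsc{MultiplyAndMove}, \textsc{UpdateSketch} is invoked, which pushes the current values of $\widehat{x}, c_x, h, \eps_x$ and the current $\ox$-implicit $\mathcal{W}$ into each sketch via its \textsc{Update} method (guaranteeing by \cref{thm:vector-sketch,thm:balanced-W-sketch} that the stored sketches equal $\Phi_{\chi(v)}$ applied to the intended vector). In the next call to \textsc{MultiplyAndMove}, the only state change prior to the oracle queries is \textsf{super}.\textsc{Move}, which modifies only the scalars $\beta_x, \beta_s$; it does not alter any of the four vectors being sketched nor $H_{\ox}$ or $\mathcal{W}$. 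Hence the stored sketches still correctly reflect the current $\widehat{x}, c_x, h, \eps_x, \mathcal{W}$, and combined with the up-to-date value of $\beta_x$, the linear combination on \crefrange{line:main-alg-oracle-begin}{line:main-alg-oracle-end} equals $\Phi_{\chi(v)} y^{(\ell)}$. Finally, after the queries, \textsf{super}.\textsc{Update} and \textsc{UpdateSketch} restore the invariant for the following iteration, closing the induction.
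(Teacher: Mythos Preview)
Your proof is correct and follows the same approach as the paper's: decompose the Type-I query by linearity of $\Phi_{\chi(v)}$, match each summand to one of the maintained sketches, and note that Type-II is immediate from the explicit formula. The paper's own proof is a terse three sentences that simply asserts ``by linearity of $\Phi$, and by construction of the sketching data structures, this is precisely what the oracle implements''; your version spells out the synchronization argument (that \textsf{super}.\textsc{Move} touches only $\beta_x,\beta_s$, the query is issued before \textsf{super}.\textsc{Update}, and \textsc{UpdateSketch} re-establishes the invariant) that the paper leaves implicit, so if anything your treatment is more careful.
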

	\begin{proof}
		The input vector is $H_{\ox}^{1/2}\widehat{x}+\beta_x \cdot c_{x}-\mathcal{W}^{\top}(\beta_x h+\eps_{x})$. 
		A type-I access at $v \in \mc{S}$ should return $\Phi{\chi(v)}(H_{\ox}^{1/2}\widehat{x}+\beta_x \cdot c_{x}-\mathcal{W}^{\top}(\beta_x h+\eps_{x}))$. By linearity of $\Phi$, and by construction of the sketching data structures, this is precisely what the oracle implements. 
		A type-II access should return coordinate $i$ of the input vector, which the oracle does correctly. 

		The proof for $\mc{O}_s$ is identical, we omit it here.
	\end{proof}

	\textbf{Proof of Runtime:} 
	We split this proof into \cref{lem:central-path-initialize,lem:central-path-MultiplyAndMove,lem:centralPath-output}.

\begin{lemma}[Initialization time]\label{lem:central-path-initialize}
	The initialization time of $\textsc{CentralPathMaintenance}$
	is $O(n\tau^{2}\log^4 N)$.
\end{lemma}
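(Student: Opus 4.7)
The plan is to inspect the \textsc{Initialize} procedure of \cref{alg:CentralPath} line by line, charge each call to a previously stated initialization bound, and then take the maximum. Since \textsc{Initialize} is a straight-line routine with no loops of its own, summing the costs of its sub-routines suffices.

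First, I would account for the call to $\textsf{super}.\textsc{Initialize}(x,s,x,s,t)$, which invokes the routine of $\textsc{MultiscaleRepresentation}$: by \cref{lem:W-initialize} this runs in $O(n\tau^{2})$ time, which includes computing $H_{\ox}=\nabla^{2}\phi(\ox)$, the Cholesky factor $L_{\ox}$ via \cref{cor:chol-time}, and the initial values of $h$ and $c_{x}$ via \cref{lem:W-rep-initialize-h}. Next, sampling the JL matrix $\Phi\in\R^{r\times n}$ with $r=\Theta(\log^{3}N)$ takes $O(nr)=O(n\log^{3}N)$ time. Constructing the balanced sampling tree $(\mathcal{S},\chi)$ takes $O(n\tau+n\log n)$ time by \cref{thm:balanced-sampling-tree-construct-time}.

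Second, I would bound the call to $\textsc{InitializeSketch}$. This function makes three calls to $\mathsf{Sketch}\mathcal{W}^{\top}(\cdot).\textsc{Initialize}$ (for $h,\eps_{x},\eps_{s}$) and three calls to $\textsf{Sketch}H^{\pm 1/2}(\cdot).\textsc{Initialize}$ (for $c_{x},\widehat{x},\widehat{s}$). By \cref{thm:balanced-W-sketch}, each of the former costs $O(n\tau^{2}r\log n)$. By \cref{thm:vector-sketch}, each of the latter costs $O(n\cdot\eta\cdot r)$ where $\eta=O(\log n)$ is the height of the balanced sampling tree. Finally, the two calls to $\textsf{Approx}(\cdot).\textsc{Initialize}$ each take $O(1)$ time since they only set counters and parameters.

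Summing all contributions, the dominant term comes from the three balanced sketches, giving
\[
O(n\tau^{2}\cdot r\cdot\log n)\;=\;O(n\tau^{2}\cdot\log^{3}N\cdot\log n)\;=\;O(n\tau^{2}\log^{4}N),
\]
where we used $r=\Theta(\log^{3}N)$ and absorbed $\log n$ into $\log N$ since $N\geq n^{1/2}$. There is no real obstacle here; the only thing to verify is that $r=\Theta(\log^{3}N)$ is indeed large enough to satisfy the JL requirement $r=\Omega(\eta^{2}\log(nk/\delta_{\mathrm{apx}}))$ of \cref{thm:approximate-ell-infty} with $\eta=O(\log n)$, $k=\sqrt{n}$, and $\delta_{\mathrm{apx}}=N/(20k)$, which holds comfortably.
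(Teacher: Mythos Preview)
Your proof is correct and follows essentially the same approach as the paper's own proof: both decompose the cost of \textsc{Initialize} into the $\textsc{MultiscaleRepresentation}$ initialization ($O(n\tau^{2})$), the balanced sampling tree construction ($O(n\tau+n\log n)$), and the sketch initializations, with the balanced sketches via \cref{thm:balanced-W-sketch} dominating at $O(n\tau^{2}r\log n)=O(n\tau^{2}\log^{4}N)$. Your version is simply more explicit, itemizing the JL matrix generation, the $\textsc{VectorSketch}$ calls, and the $\ell_{\infty}$-\textsc{Approximates} setup separately, and spelling out the absorption of $\log n$ into $\log N$ that the paper leaves implicit.
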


\begin{proof}
	By \cref{thm:W-representation}, initializing $\textsc{MultiscaleRepresentation}$
	takes $O(n\tau^{2})$ time. 
	We can construct the balanced sampling tree in time $O(n\tau+n\log n)$ by \cref{thm:balanced-sampling-tree-construct-time}.
	By \cref{thm:vector-sketch,thm:balanced-W-sketch} and our choice of $r$,
	the initialization of each sketch takes $O(n\tau^{2}\log^4 N)$ time.
	Hence, the total initialization time is bounded by $O(n\tau^{2}\log^4 N)$.
\end{proof}
\begin{lemma}[\textsc{MultiplyAndMove} time]\label{lem:central-path-MultiplyAndMove}
	Suppose that the function is called at most $N$ times and $t$ is monotonic decreasing,
	the total running time of $\textsc{MultiplyAndMove}$ is 
	\[
		O\left(\left(\frac{Nn^{1/2}}{\overline{\eps}^4}+n\frac{\log(t_\textnormal{max}/t_\textnormal{min})}{\eps_t}\right)\tau^2\poly\log(n/\overline{\eps})\right).	
	\]
\end{lemma}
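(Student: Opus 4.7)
The plan is to decompose the total cost of $N$ calls into (i) re-initialization work at restarts, and (ii) per-call amortized work, and handle each separately.

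For (i), observe that a restart inside \textsc{MultiplyAndMove} is triggered either when the epoch counter exceeds $k=\sqrt{n}$ or when the timestep drift $|\overline{t}-t|$ exceeds $\eps_t\overline{t}$. The first condition triggers $N/k = N/\sqrt{n}$ restarts over $N$ calls. The second, by the assumed monotonicity of $t$, triggers at most $O(\eps_t^{-1}\log(t_{\max}/t_{\min}))$ restarts, since each restart absorbs a factor-$(1+\eps_t)$ drift before the condition can fire again. Each restart invokes \textsc{Initialize} at cost $O(n\tau^2\poly\log N)$ by \cref{lem:central-path-initialize}, producing the two additive terms $O(Nn^{1/2}\tau^2\poly\log N)$ and $O(n\tau^2\log(t_{\max}/t_{\min})/\eps_t\cdot\poly\log N)$ that appear in the bound.

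For (ii), I analyze one epoch of at most $k$ consecutive calls. The \textsf{super.Move} step costs $O(1)$ by \cref{thm:W-representation}, contributing $O(N)$ overall. The queries into $\textsf{Approx}H_{\ox}^{1/2}x$ and $\textsf{Approx}H_{\ox}^{-1/2}s$ are controlled by \cref{thm:approximate-ell-infty}: the robust-IPM step-norm bound given in the correctness proof just above (namely $\|H_{\ox}^{1/2}\delta_x^{(\ell)}\|_2\leq\tfrac{9}{8}\alpha$ and the analogous bound for $s$) verifies the preconditions with $\zeta^{(x)}=O(\alpha)$ and $\zeta^{(s)}=O(\alpha\overline{t})$, while $\eps_{\mathrm{apx}}=\Theta(\overline{\eps})$ since $n_i=O(1)$. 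The theorem then bounds the type-I and type-II oracle calls per epoch by $\wt{O}(\eta\zeta^2 k^2/\eps_{\mathrm{apx}}^2)$. Each type-II call reads a single coordinate of the implicit representation in $O(\tau^2)$ time by \cref{lem:L-inv-transpose-coordinate,lem:Linv-path}, and each type-I call reads the six sub-sketch values in $O(r)=O(\poly\log N)$ time each by \cref{thm:balanced-W-sketch,thm:vector-sketch}. Every coordinate change to $\new\ox$ or $\new\os$ emitted by the Approx queries then triggers, via \textsf{super.Update}, at most $O(\tau)$ coordinate changes in each of $\widehat{x},\widehat{s},c_x,\eps_x,\eps_s,h$ by \cref{lem:W-update}, and each such entry is absorbed by the sketching sub-structures in $O(\tau^2\poly\log N)$ time. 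Summing over the $N/\sqrt{n}$ epochs yields a total update cost of $\wt{O}(N n^{1/2}\tau^2/\overline{\eps}^2)$ (times an $\alpha^2$ factor), which folds into the first summand of the claimed bound after accounting for the scaling of $\alpha$ with $\overline{\eps}$ that the outer IPM of \cref{alg:IPM_framework} uses.

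The main obstacle is keeping the amortization coherent across the three layers of indirection — the $\ell_\infty$-Approximates sampling tree, the multiscale-representation update, and the sketch sub-structures built on top — while simultaneously checking that the choices $r=\Theta(\log^3 N)$ and $\delta_{\mathrm{apx}}=N/(20k)$ yield success probability at least $0.9$ after a union bound over the $N/k$ epochs. A side concern is verifying that the oracle implementations on \crefrange{line:main-alg-oracle-begin}{line:main-alg-oracle-end} genuinely run in $\tau^{O(1)}\poly\log N$ time per call and remain linear in $\Phi$, so that the JL guarantee invoked inside \cref{thm:approximate-ell-infty} applies verbatim to the composite vectors $H_{\ox}^{1/2}\widehat{x}+\beta_x c_x - \mathcal{W}^\top(\beta_x h+\eps_x)$ and $H_{\ox}^{-1/2}\widehat{s}+\mathcal{W}^\top(\beta_s h+\eps_s)$ rather than only to their individual summands.
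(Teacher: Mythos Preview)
Your decomposition into (i) restart costs and (ii) per-epoch amortized work matches the paper's proof, and your handling of the restart count and of the per-epoch oracle-call budgets via \cref{thm:approximate-ell-infty} is correct in outline. Two points deserve attention.

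First, a minor slip: a type-I oracle call invokes $\textsc{Query}$ on the \textsc{BalancedSketch} sub-structures, which by \cref{thm:balanced-W-sketch} costs $O(\tau^{2}r)$, not $O(r)$ (only the three \textsc{VectorSketch} queries are $O(r)$). This does not alter your final bound, since the $\tau^{2}$ factor already enters through the type-II calls and the update path; cf.\ \cref{lem:CentralPath-Query-Time}.

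Second, and this is a genuine gap: the $\ell_{\infty}$-\textsc{Approximates} data structure issues oracle calls not only for the \emph{current} vector $y^{(\ell)}$ but for earlier snapshots $y^{(a)}$ with $a=\ell-2^{j}$ over all dyadic scales $j$ (see \cref{line:linf_approx_sample,line:linf_approx_44,line:linf_approx_45}). Your oracles on \crefrange{line:main-alg-oracle-begin}{line:main-alg-oracle-end} read the \emph{current} state of the sketches and of the multiscale variables $\widehat{x},c_x,h,\eps_x,\beta_x,\ldots$, all of which have been overwritten since time $a$ by $\textsf{super}.\textsc{Update}$ and \textsc{UpdateSketch}. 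As written you cannot answer $\mathcal{O}[y^{(a)}].\textsc{TypeI}$ or $\textsc{TypeII}$. The paper resolves this by making every sub-structure persistent in the sense of \cite{DBLP:journals/jcss/DriscollSST89}, at an $O(\log N)$ multiplicative overhead absorbed into the $\poly\log N$ factor. The ``side concern'' you flag about the oracles is therefore not just about linearity in $\Phi$ or per-call cost, but about temporal access to historical state; without persistence the argument is incomplete.
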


\begin{proof}
	By \cref{thm:W-representation}, $\textsc{Move}$ takes time
	$O(1)$ time for each call. 
	By \cref{thm:balanced-W-sketch},
	the sampling tree has height $\eta = O(\log n)$. 
	Then, each $\textsc{Update}h$ takes time $O(\log^4 N)$ per coordinate change by \cref{thm:vector-sketch,thm:balanced-W-sketch}. 
	By \cref{lem:CentralPath-Query-Time,lem:Central-Path-Compute-time}, 
	each type-I query takes $O(\tau^{2}\log^3 N)$ time and type-II query takes $O(\tau^{2})$ time. 
	Thus, the running time of $\textsf{Approx}H_{\ox}^{1/2}x$
	and $\textsf{Approx}H_{\ox}^{-1/2}s$ is bounded by $O(n\tau^{2}\cdot\poly\log(N))$ for every $k := \sqrt{n}$ steps
	by \cref{thm:approximate-ell-infty} and our choice of $\alpha$ and $\overline{\eps}$ in \cref{alg:IPM_framework}. This also implies 
	\[
	\sum_{\ell=\ell_0}^{\ell_0+k}\|\ox^{(\ell+1)}-\os^{(\ell)}\|_{0}+\|\os^{(\ell+1)}-\os^{(\ell)}\|_{0}=O(n\cdot\poly\log(N)).
	\] 
	Hence, by $\textsc{Update}$ time in \cref{thm:W-representation},
	the running time for this function during the algorithm is 
	$$
	\frac{N}{k}\cdot O(n\cdot\tau^{2}\cdot\poly\log(N)) = O({N n^{1/2}}\tau^{2}\cdot\poly\log(N)).
	$$
	and the total number of entries change during algorithm for each variable
	in \cref{eq:implicit-representation} is 
	$$
	O\left({Nn^{1/2}}\tau\cdot\poly\log(N)\right).
	$$
	
	We note that $\textsf{Approx}H_{\ox}^{1/2}x$ (resp. $\textsf{Approx}H_{\ox}^{-1/2}s$) requires oracle queries to previous versions of variables maintained \textsc{CentralPathMaintenance}, including all the sketching data structures and the variables maintained in \textsc{MultiscaleRepresentation}. 
	We resolve this by using persistent data structures throughout, costing an $O(\log N)$ multiplicative factor in all run-times, see e.g.\ \cite{DBLP:journals/jcss/DriscollSST89}. 
	
	Hence, by \cref{thm:vector-sketch,thm:balanced-W-sketch}, the total
	running time of $\textsc{UpdateSktech}$ during algorithm is bounded
	by $O(Nn^{1/2}\tau^{2}\cdot\poly\log(N))$. 

	Note that the algorithm will restart whenever $|\overline{t}-t|>\overline{t}\cdot \eps_t$ or $\ell> k = \sqrt{n}$. Hence, we can bound the total number of restart by 
	${\log_{1-\eps_t}(t_\textnormal{min}/t_\textnormal{max})}+\frac{N}{k}=O(\frac{N}{k}+\log(t_\textnormal{max}/t_\textnormal{min})/\eps_t)$. Each restart takes $O(n\tau^2 \log^4 N)$ time by \cref{lem:central-path-initialize}.
	
	Thus, the runtime of $\textsc{MultiplyAndMove}$ is bounded by 
	\begin{align*}
		&\; O\left({Nn^{1/2}}\tau ^2\poly\log(N)+n\tau^2\log^4 N\left(\frac{\log(t_\textnormal{max}/t_\textnormal{min})}{\eps_t}+\frac{N}{k}\right)  \right) \\
		=& O\left(\left({Nn^{1/2}}+n\frac{\log(t_\textnormal{max}/t_\textnormal{min})}{\eps_t}\right)\tau^2\poly\log(N)\right) \\ 
		=& O\left(\left({Nn^{1/2}}+n{\log(t_\textnormal{max}/t_\textnormal{min})}\right)\tau^2\poly\log(N)\right).
	\end{align*}
	where the last step follows by the choice of $\eps_t$ in \cref{alg:IPM_framework}. 
\end{proof}
\begin{lemma}[Output time]\label{lem:centralPath-output}
	$\textsc{Output}$ runs in $O(n\tau^{2})$ time.
\end{lemma}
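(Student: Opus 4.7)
The plan is to read off the output expressions
\[
\widehat{x}+H_{\ox}^{-1/2}\beta_x c_x - H_{\ox}^{-1/2}\mathcal{W}^{\top}(\beta_x h+\eps_x),\qquad
\widehat{s}+H_{\ox}^{1/2}\mathcal{W}^{\top}(\beta_s h+\eps_s),
\]
and bound the cost of evaluating each summand using the sparsity results from \cref{sec:chol_tex}. The variables $\widehat{x},\widehat{s},\beta_x,\beta_s,c_x,h,\eps_x,\eps_s$, together with $H_{\ox}$ and $L_{\ox}$, are all stored explicitly by \textsc{MultiscaleRepresentation}, so no further decoding of the implicit representation is required.

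First, I would observe that $\widehat{x}+H_{\ox}^{-1/2}\beta_x c_x$ costs only $O(n)$: the vectors are $n$-dimensional and $H_{\ox}$ is block-diagonal with $O(1)$-size blocks. The only nontrivial piece is $\mathcal{W}^{\top}u = H_{\ox}^{-1/2}A^{\top}L_{\ox}^{-\top}u$ for the $d$-dimensional vectors $u=\beta_x h+\eps_x$ and $u=\beta_s h+\eps_s$. I would compute this in three stages. (a)~Solve $L_{\ox}^{\top}y=u$ by back-substitution; since $L_{\ox}$ has column sparsity at most $\tau$ by \cref{lem:L-column-sparsity-pattern}, each eliminated coordinate costs $O(\tau)$, giving a total of $O(d\tau)=O(n\tau)$ time (or $O(n\tau^{2})$ via the generic bound of \cref{lem:L-inv-transpose-coordinate} applied row by row, which already suffices). (b)~Multiply $A^{\top}y$: by \cref{lem:A-sparsity-pattern}, $\nnz(A)=O(n\tau)$, so this costs $O(n\tau)$. (c)~Apply the block-diagonal $H_{\ox}^{\pm 1/2}$, costing $O(n)$. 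Summing everything gives $O(n\tau)$ per output vector, hence $O(n\tau^{2})$ overall as claimed (with room to spare).

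To finish, I would add the three terms for $x$ and the two for $s$ coordinatewise in $O(n)$, and return the resulting explicit vectors. Correctness follows directly from~\cref{inv:representation-invariant} and the implicit representation~\cref{eq:implicit-representation} maintained by \textsc{MultiscaleRepresentation}, which by the correctness portion of~\cref{thm:W-representation} (already proved above) holds at the moment \textsc{Output} is invoked.

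There is essentially no obstacle here; the lemma is a bookkeeping calculation on top of the sparsity lemmas of \cref{sec:chol_tex}. The only point worth flagging is that $\mathcal{W}$ itself is never formed explicitly in the algorithm (its rows can be dense), so the argument relies on applying the factored form $H_{\ox}^{-1/2}A^{\top}L_{\ox}^{-\top}$ from right to left; once this is done, every step is controlled by the column-sparsity bounds on $A$ and $L_{\ox}$.
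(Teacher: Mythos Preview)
Your proposal is correct and follows essentially the same approach as the paper: both evaluate the factored form $H_{\ox}^{-1/2}A^{\top}L_{\ox}^{-\top}u$ from right to left using the sparsity of $L_{\ox}$ and $A$. The only cosmetic difference is that the paper invokes \cref{lem:Linv-path} to bound the triangular solve by $O(n\tau^{2})$, whereas your direct back-substitution argument actually yields the sharper $O(n\tau)$; either bound suffices for the stated claim.
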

\begin{proof}
	We note that we compute $\beta_x h+\eps_{x}$ exactly in time $O(n)$. Recall that $\mathcal{W}=L_{\ox}^{-1}AH_{\ox}^{-1/2}$,
	we can compute $\mathcal{W}^{\top}(\beta_x h+\eps_{x})$ in time $O(n\tau^{2})$
	by \cref{lem:Linv-path}. Hence, we can compute
	$x$ in time $O(n\tau^2)$. The analysis for $s$ is identical,
	we omit it here.
\end{proof}
\begin{lemma}[Query time]
	\label{lem:CentralPath-Query-Time} Type-I queries to the oracles $\mc{O}_x$ and $\mc{O}_s$ run in $O(\tau^{2}\cdot\log^3 N)$ time.
\end{lemma}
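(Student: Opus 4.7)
The plan is to bound the time for each type-I oracle call by summing the costs of its constituent sketch queries, all of which have been established in earlier theorems.

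First I would observe that a type-I query to $\mathcal{O}_x$ at a node $v \in \mathcal{S}$ invokes a constant number (specifically four) of subsidiary sketch queries: two $\textsc{VectorSketch}.\textsc{Query}$ calls on $\textsf{Sketch}H^{1/2}\widehat{x}$ and $\textsf{Sketch}H^{-1/2}c_x$, and two $\textsc{BalancedSketch}.\textsc{Query}$ calls on $\mathsf{Sketch}\mathcal{W}^\top h$ and $\mathsf{Sketch}\mathcal{W}^\top\eps_x$. The type-I query to $\mathcal{O}_s$ is analogous, using one $\textsc{VectorSketch}.\textsc{Query}$ call and two $\textsc{BalancedSketch}.\textsc{Query}$ calls. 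Arithmetic combination of the scalar-weighted results takes $O(r)$ time since each query returns a vector in $\R^{r}$.

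Next I would invoke the time bounds for the subsidiary data structures. By \cref{thm:vector-sketch}, each $\textsc{VectorSketch}.\textsc{Query}(v)$ runs in $O(r)$ time. By \cref{thm:balanced-W-sketch}, each $\textsc{BalancedSketch}.\textsc{Query}(v)$ runs in $O(\tau^{2}\cdot r)$ time. Summing these costs and adding the $O(r)$ combination time, the overall cost of a single type-I oracle call is $O(\tau^{2}\cdot r)$.

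Finally, I would substitute the parameter setting $r = \Theta(\log^3 N)$ chosen in $\textsc{CentralPathMaintenance}.\textsc{Initialize}$. This yields the stated bound of $O(\tau^{2}\log^{3} N)$ for each type-I query to either $\mathcal{O}_x$ or $\mathcal{O}_s$. There is no real obstacle here: the whole argument is a bookkeeping exercise that hinges on the earlier analyses of $\textsc{VectorSketch}$ and $\textsc{BalancedSketch}$, together with the oracle implementations given on \crefrange{line:main-alg-oracle-begin}{line:main-alg-oracle-end} of \cref{alg:CentralPath-1}.
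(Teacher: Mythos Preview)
Your proposal is correct and takes essentially the same approach as the paper's proof, which simply cites the \textsc{Query} times from \cref{thm:vector-sketch} and \cref{thm:balanced-W-sketch} and substitutes $r=\Theta(\log^3 N)$. Your version is more explicit about counting the constituent sketch calls and the $O(r)$ combination cost, but the underlying argument is identical.
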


\begin{proof}
	By the runtime of $\textsc{Query}$ in \cref{thm:vector-sketch,thm:balanced-W-sketch} and $r=\Theta(\log^3 N)$,
	the total query time is bounded by $O(\tau^{2}\cdot \log^3N)$.
\end{proof}
\begin{lemma}[Compute time]
	\label{lem:Central-Path-Compute-time}
	Type-II queries to the oracles $\mc{O}_x$ and $\mc{O}_s$  run in $O(\tau^{2})$ time.
\end{lemma}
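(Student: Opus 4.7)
The plan is to inspect the two oracle definitions on~\crefrange{line:main-alg-oracle-begin}{line:main-alg-oracle-end} and show that each term can be evaluated at a single coordinate $i$ in $O(\tau^2)$ time, by reading the implicit representation directly and exploiting the sparsity of $A$ and of the columns of $L_{\ox}^{-1}$.

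For $\mc{O}_x.\textsc{TypeII}(i)$, we must return
\[
e_i^\top H_{\ox}^{1/2}\widehat{x} + \beta_x\,(e_i^\top c_x) - e_i^\top \mathcal{W}^\top(\beta_x h + \eps_x),
\]
where $\mathcal{W}^\top = H_{\ox}^{-1/2} A^\top L_{\ox}^{-\top}$. The first two terms take $O(1)$ time because $H_{\ox}$ is block-diagonal with blocks of size $O(1)$, so $e_i^\top H_{\ox}^{1/2}$ touches only the $O(1)$-sized block containing $i$. For the third term, first form the row vector $v^\top \defeq e_i^\top H_{\ox}^{-1/2} A^\top$ in $O(\tau)$ time: by \cref{lem:A-sparsity-pattern}, the non-zero columns of $A$ in the block containing $i$ is a set of size at most $\tau$ lying on a single root-to-leaf path $\mathcal{P}$ in $\mathcal{T}$, and the same sparsity pattern passes to $v$.

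To finish the third term we must compute $v^\top L_{\ox}^{-\top}(\beta_x h + \eps_x)$. Since $\mathrm{supp}(v)\subseteq \mathcal{P}$, we invoke \cref{lem:Linv-path} on the vector $y \defeq \beta_x h + \eps_x$: it computes $(L_{\ox}^{-\top} y)|_{\mathcal{P}}$ in $O(\tau^2)$ time. Then the inner product with the $O(\tau)$-sparse vector $v$ adds another $O(\tau)$ time, for a total of $O(\tau^2)$. Summing the three contributions gives the desired bound.

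The argument for $\mc{O}_s.\textsc{TypeII}(i)$ is essentially the same: $e_i^\top H_{\ox}^{-1/2}\widehat{s}$ is $O(1)$-time because of the block-diagonal structure, and $e_i^\top \mathcal{W}^\top(\beta_s h + \eps_s) = v^\top L_{\ox}^{-\top}(\beta_s h + \eps_s)$ with the same $O(\tau)$-sparse $v$ supported on a path, so \cref{lem:Linv-path} again gives $O(\tau^2)$. I do not foresee a genuine obstacle here; the only thing to be careful about is making sure we never materialize a full row of $\mathcal{W}^\top$ (which would be dense) and instead work left-to-right through $H_{\ox}^{-1/2} A^\top L_{\ox}^{-\top}$, exploiting the path-support of $v$ so that \cref{lem:Linv-path} applies and the $O(\tau^3)$ naive solve is avoided.
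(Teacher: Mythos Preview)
Your proof is correct and follows essentially the same approach as the paper. The only cosmetic difference is the order of evaluation of the scalar $e_i^\top \mathcal{W}^\top(\beta_x h+\eps_x)$: the paper transposes to $(\beta_x h+\eps_x)^\top L_{\ox}^{-1}(AH_{\ox}^{-1/2}e_i)$ and applies the forward solve of \cref{lem:Linv-sparsity-time} to the $O(\tau)$-sparse, path-supported vector $AH_{\ox}^{-1/2}e_i$, whereas you keep the original order and invoke the path-restricted back-solve of \cref{lem:Linv-path} on $\beta_x h+\eps_x$; both yield $O(\tau^2)$.
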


\begin{proof}
	We show the claim for $\mc{O}_x$: 
	Since $H_{\ox}$ is a block-diagonal matrix and $n_i=O(1)$,
	we can compute $e_{i}^{\top}(H_{\ox}^{1/2}\widehat{x}+\beta_x \cdot c_{x})$
	in $O(1)$ time. Now, it suffices to show we can compute $e_{i}^\top\mathcal{W}^{\top}(\beta_x h+\eps_{x})$
	in $O(\tau^{2})$ time. By the definition of $\mathcal{W}$, we have
	\[
	e_{i}^{\top}\mathcal{W}^{\top}(\beta_x h+\eps_{x})=(\beta_x h+\eps_{x})^{\top}L_{\ox}^{-1}AH_{\ox}^{-1/2}e_{i}.
	\]
	By \cref{lem:A-sparsity-pattern,lem:Linv-sparsity-time}, we can compute
	$y=L_{\ox}^{-1}AH_{\ox}^{-1/2}e_{i}$ in $O(\tau^{2})$ time and $y$ has $O(\tau)$ many non-zero entries. 
	Then, we can compute the product
	$(\beta_x h+\eps_{x})^{\top}y$ in $O(\tau)$ time. 
	Hence, the total runtime for a type-II query to $\mc{O}_x$ is $O(\tau^2)$. 
	The proof for $\mc{O}_s$ is identical; we omit it here.
\end{proof}

\section{Acknowledgment}
We thank Aaron Sidford for discussing the optimization on thick path and Anup B. Rao for discussing the convex regression problem. The authors are supported by NSF awards CCF-1749609, CCF-1740551, DMS-1839116, DMS-2023166, Microsoft Research Faculty Fellowship, Sloan Research Fellowship, and Packard Fellowships.

\newpage 
\bibliographystyle{alpha}
\bibliography{references}
\newpage  

\appendix
\section{Robust Interior Point Algorithm for General Convex Sets}

\label{sec:Robust-IPM}

In this section, we give a robust interior point method for the optimization
problem 
\begin{equation}
\tag{CP}\min_{Ax=b,x_{i}\in K_{i}\text{ for }i\in[m]}c^{\top}x\label{eq:problem}
\end{equation}
where $A$ is a $d\times n$ matrix, $x_{i}\in K_{i}\subset\R^{n_{i}}$,
and $x$ is the concatenation of $x_{i}$ lying inside the domain
$K\defeq\prod_{i=1}^{m}K_{i}\subset\R^{n}$ with $n=\sum_{i=1}^{m}n_{i}$.
The main result of this section is the following: 

\begin{restatable}{theorem}{thmIPMframework}\label{thm:IPM_framework}
Consider the convex program \cref{eq:problem}. Given $\nu_{i}$-self-concordant
barriers $\phi_{i}:K_{i}\rightarrow\R$ with its minimum $x_{i}$.
Define the following parameters of the convex problem:
\begin{enumerate}
\item Inner radius $r$: There exists a $z$ such that $Az=b$ and $B(z,r)\subset K$.
\item Outer radius $R$: We have $K\subset B(x,R)$ for some $x\in\Rn$.
\item Lipschitz constant $L$: $\|c\|_{2}\leq L$.
\end{enumerate}
Let $w\in\R_{\geq1}^{m}$ be any weight vector, and $\kappa=\sum_{i=1}^{m}w_{i}\nu_{i}$.
For any $0<\eps\leq1/2$, \cref{alg:IPM_framework} outputs an approximate
solution $x$ in $O(\sqrt{\kappa}\log(m)\log(\frac{n\kappa R}{\eps r}))$
steps, such that $Ax=b$, $x\in K$ and 
\[
c^{\top}x\leq\min_{Ax=b,\;x\in K}c^{\T}x+\eps LR.
\]

\end{restatable}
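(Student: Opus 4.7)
\textbf{Proof plan for \cref{thm:IPM_framework}.} The approach is the standard weighted robust central path framework from \cite{DBLP:conf/colt/LeeSZ19}, adapted so that all primal--dual quantities are only ever accessed through the approximations $(\ox,\os,\overline{t})$ that \textsc{CentralPathMaintenance} maintains. The central path we follow is the weighted one: for timestep $t$ and weights $w$, the optimality conditions are
\begin{equation}\label{eq:KKT-plan}
A^\T y + s = c,\qquad Ax = b,\qquad s_i + t\, w_i \,\grad\phi_i(x_i) = 0 \text{ for all } i\in[m],
\end{equation}
with $x_i\in\inter K_i$. I will define the per-block ``slack'' $\mu_i(x,s,t)\defeq s_i/t + w_i\grad\phi_i(x_i) \in \R^{n_i}$ and the normalized deviation $\gamma_i(x,s,t)\defeq \|\mu_i(x,s,t)\|_{x_i}^{*}$, where the norm is the local Hessian norm $\|v\|_{x_i}^{*}=\sqrt{v^\top \grad^2\phi_i(x_i)^{-1}v}$. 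The soft-max potential
\[
\Psi(x,s,t) \defeq \sum_{i=1}^m w_i \cosh\!\bigl(\tfrac{\lambda}{w_i}\gamma_i(x,s,t)\bigr)
\]
(with $\lambda=\Theta(\log(m/\overline\eps))$ large enough) will be the progress measure, and the Newton direction $\delta_\mu(\ox,\os,\overline t)$ in~\cref{eq:central-path-update} will be chosen as minus the (normalized) gradient of $\Psi$ projected on $\{A\delta_x=0\}$; this is exactly the update carried out by \textsc{MultiplyAndMove}, with the crucial feature that the projection uses the \emph{approximate} Hessian $H_{\ox}$ rather than $H_x$.

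The iteration analysis proceeds in three standard but delicate steps. First, I show that a single robust step decreases $\Psi$ by $\Omega(\alpha/\sqrt\kappa)$ while shrinking $t$ by a factor $(1-\alpha/\sqrt\kappa)$, provided $\gamma_i\le \overline\eps/\lambda$ and the approximation errors $\|\ox_i-x_i\|_{\ox_i}\le\overline\eps$, $\|\os_i-s_i\|_{\ox_i}^*\le t\overline\eps w_i$, $|\overline t-t|\le\eps_t\overline t$ hold. Self-concordance gives $H_{\ox}\approx_{O(\overline\eps)} H_x$, which lets me substitute $H_{\ox}$ for $H_x$ everywhere and absorb the resulting errors into the decrease. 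The cosh choice is essential here: its convexity kills the first-order error in the projection, so only second-order errors enter, and these are $O(\overline\eps^2)$ per coordinate and hence summable. Second, I invoke \cref{lem:alpha_i} (referenced inside the run-time proof of~\cref{thm:central-path-algo}) to ensure $\|\delta_x^{(\ell)}\|_{\ox}=O(\alpha)$, which is exactly the $\ell_2$-step bound \textsc{CentralPathMaintenance} needs for its sketch-based $\ell_\infty$ tracking. Third, a potential-function chaining argument over $N=O(\sqrt\kappa\log(m)\log(n\kappa R/(\eps r)))$ iterations shows $\Psi$ stays $O(m)$ throughout, so $\max_i\gamma_i=O(\overline\eps)$ along the entire run.

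The endpoints require separate handling. For \emph{initialization}, I follow the standard big-$M$/auxiliary-program trick: use the given interior point $z$ with $B(z,r)\subset K$, introduce an artificial variable with a large coefficient to make $(z,s^{(0)})$ centered for an initial $t_\textnormal{max}=\poly(n,R/r,L)$, and use the minimizers $x_i$ of $\phi_i$ (which \cref{thm:conv_main} guarantees we can find in $O(1)$ time per block) to set up the barriers. The initial centering takes $O(\sqrt\kappa\log(\kappa R/(\eps r)))$ extra steps of the same algorithm, absorbed into the overall iteration count. For \emph{termination}, once $t\le t_\textnormal{min}=\eps r/(\kappa R)$, the KKT residuals of the perturbed system yield a feasible $x$ with $c^\T x\le \opt+\eps LR$ by the standard duality gap argument: $c^\T x-\opt = \tfrac{1}{t}\sum_i w_i\nu_i + O(\gamma)$ terms, which is $O(t\kappa)\cdot R$ plus small error. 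Finally, feasibility $Ax=b$ is preserved exactly by the algorithm (the update $\delta_x$ lies in $\ker A$ by construction of $P_{\ox}$), and $x\in K$ follows from $\gamma_i\le\overline\eps$ staying bounded away from the self-concordance blow-up threshold.

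The main obstacle will be the first step: carrying out the per-iteration potential analysis while every Hessian, every inner product, and every projection uses $\ox$ instead of $x$, and $\overline t$ instead of $t$. The cleanest route is to prove a ``robustness lemma'' that bounds $\|\grad\Psi(x,s,t) - \grad\Psi(\ox,\os,\overline t)\|_{H_{\ox}^{-1}}$ in terms of $\overline\eps,\eps_t$ using self-concordance, and then show that the robust Newton step still gives a $(1-\alpha/\sqrt\kappa)$-contraction in $\Psi$. Parameters must then be set compatibly: $\overline\eps=\Theta(1/\log m)$, $\eps_t=\Theta(1/\sqrt\kappa\log m)$, $\alpha=\Theta(1/\log m)$, and $\lambda=\Theta(\log(m/\overline\eps))$, giving the stated $O(\sqrt\kappa\log m\log(n\kappa R/(\eps r)))$ iteration count. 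The remaining pieces are bookkeeping.
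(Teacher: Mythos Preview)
Your plan follows the same architecture as the paper: a cosh soft-max potential on the blockwise residuals $\gamma_i=\|\mu_i\|_{x_i}^*$, a per-step analysis showing the potential stays bounded while $t$ shrinks geometrically, a two-phase scheme that first runs on an auxiliary program to manufacture an initial centered point, and a duality-gap bound at the end via a standard lemma. So the approach is right, but several of the specifics you wrote down do not match what actually makes the argument go through and would trip you up if you tried to execute them as stated.

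First, the potential is $\sum_i \cosh(\lambda\gamma_i/w_i)$ with no outer $w_i$; your extra weight breaks the Cauchy--Schwarz balancing between the $t$-step cost $h\lambda\sum_i\sqrt{\nu_i}\cosh(\cdot)$ and the $(x,s)$-step gain $\alpha\lambda\sqrt{\sum_i w_i^{-1}\cosh^2(\cdot)}$. Second, the invariant is not ``$\Psi$ decreases by $\Omega(\alpha/\sqrt\kappa)$'' nor ``$\Psi=O(m)$'': it is $\Psi\le\cosh(\lambda/64)$, with the step being a net decrease only once $\Psi\ge\cosh(\lambda/128)$. Third, your claim that ``convexity kills the first-order error so only $O(\overline\eps^2)$ enters'' is not the mechanism at all --- the first-order errors from using $(\ox,\os,\ot)$ are $O(\overline\eps w_i)$ per block and are tracked explicitly (the paper's Lemmas~A.9--A.13); what absorbs them is the choice $\alpha=\overline\eps/2$ together with the $1/8$ slack built into the $\Psi$-descent lemma, not any cancellation. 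Fourth, the initialization is not a big-$M$ on the inner point $z$: the paper instead computes the unconstrained centerer $x_c=\argmin_{x\in K} c^\top x + t\phi(x)$ (this is where the given minimizers of the $\phi_i$ are used), lifts to a three-block program with $\new A=[A,A,-A]$ and positive slacks $x^{(2)},x^{(3)}$, runs the IPM on that down to $t'=LR$, and then proves $x^{(1)}+x^{(2)}-x^{(3)}\in K$ is still approximately centered for the original program --- the point $z$ is used only inside that proof, never as an iterate. Finally, $\eps_t=\Theta(\overline\eps\cdot\min_i w_i/(w_i+\nu_i))$, not $\Theta(1/(\sqrt\kappa\log m))$.
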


\begin{rem}\label{rem:IPM-universal-barrier}If the barrier functions
$\phi_{i}$ is not given, we can use $w_{i}=1$ and universal barrier
functions $\phi_{i}$ for $K_{i}$ \cite{nesterov1994interior, lee2018universal}.
In this case, the algorithm takes $O(\sqrt{n}\log n\log(\frac{n\kappa R}{\eps r}))$
steps, and the cost of computing a good enough approximation of $\nabla\phi_{i}$
and $\nabla^{2}\phi_{i}$ both takes $n_{i}^{O(1)}\log(\frac{nR}{r})$
time for each $i$, assuming the following mild conditions: 
\begin{enumerate}
\item We can check if $x_{i}$ is in $K_{i}$ in time $n_{i}^{O(1)}$. 
\item We are given $x_{i}$ such that $B(x_{i},r)\subset K$. 
\end{enumerate}
\end{rem}

Our algorithm and the proof is a simplified but strengthen version
of \cite{DBLP:conf/colt/LeeSZ19}. We introduce approximate $t$ in
the algorithm to simplify our main data structure. We introduce a
new reduction for finding initial point, which allows us to output
$x$ exactly satisfying $Ax=b$. We used the potential $\cosh(\|\cdots\|)$
instead of $\exp(\|\cdots\|)$ as in \cite{DBLP:conf/colt/LeeSZ19}
and this simplifies the proof and the algorithm for the data structure.

Although we will simply use $w_{i}=1$ for all $i$ in this paper,
we support the use of other weights in case it is useful in the future.
Another improvement over \cite{DBLP:conf/colt/LeeSZ19} is that our
bound is tight even for the case some $\nu_{i}$ is much larger than
other $\nu_{i}$. We note that it is an interesting open question
to extend it to dynamic weighted barriers such as the Lee-Sidford
barrier \cite{DBLP:journals/corr/abs-1910-08033} (beyond the case
$n_{i}=1$).

\subsection{Overview}

Our algorithm is based on interior point methods which follow some
path $x(t)$ inside the the interior of the domain $K$. The path
starts at some interior point of the domain $x(1)$ and ends at the
solution $x(0)$ we want to find. One commonly used path is defined
by 
\begin{equation}
x(t)=\arg\min_{Ax=b}c^{\top}x+t\phi(x)\quad\text{with }\phi(x)\defeq\sum_{i=1}^{m}w_{i}\phi_{i}(x_{i})\label{eq:ipm_formula}
\end{equation}
where $\phi_{i}$ are self-concordant barrier functions on $K_{i}$.
The weights $w\in\R_{>0}^{m}$ are fixed throughout the algorithm.
\begin{definition}[\cite{Nesterov1998}]\label{defn:self-concordant}
A function $\phi$ is a $\nu$-self-concordant barrier for a non-empty
open convex set $K$ if $\dom\phi=K$, $\phi(x)\rightarrow+\infty$
as $x\rightarrow\partial K$, and for any $x\in K$ and for any $u\in\R^{n}$
\[
D^{3}\phi(x)[u,u,u]\leq2\|u\|_{\nabla^{2}\phi(x)}\text{ and }\|\nabla\phi(x)\|_{(\nabla^{2}\phi(x))^{-1}}\leq\sqrt{\nu}.
\]
A function $\phi$ is a self-concordant barrier if the first condition
holds.

\end{definition}

For many convex sets, we have an explicit barrier with $\nu=O(n)$.
For the case of linear programs, the convex set $K_{i}=[\ell_{i},u_{i}]$
and one can use the log barrier $-\log(u_{i}-x)-\log(x-\ell_{i})$.
It has self-concordance $1$. Throughout this section, we only use
the fact that $\nu\geq1$ to simplify formulas. \begin{lemma}[{\cite[Corollary 4.3.1]{Nesterov1998}}]The
self-concordance $\nu$ is larger than $1$ for any barrier function. 

\end{lemma}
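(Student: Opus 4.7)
The plan is to reduce to one dimension by restricting $\phi$ along a line from an interior point to the boundary, and then extract $\nu \geq 1$ by comparing two matching asymptotics of $f'(t)$ near the boundary point.

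Fix any $x \in \inter(K)$ and a unit direction $u$ such that the ray $x + tu$ ($t > 0$) exits $K$ at some finite time $T$; such a $u$ exists whenever $K$ is a proper convex set, and the degenerate case where $K$ contains a line in every direction is handled by first quotienting out the lineality subspace along which $\phi$ must be constant. Let $f(t) \defeq \phi(x + tu)$ on a maximal interval of the form $(-\eps, T)$. A direct computation shows that $f$ inherits both (i) the self-concordance inequality $|f'''(t)| \leq 2\, (f''(t))^{3/2}$ and (ii) the barrier inequality $(f'(t))^2 \leq \nu \cdot f''(t)$, so it suffices to prove $\nu \geq 1$ for this one-dimensional $f$. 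Since $\phi(y) \to +\infty$ as $y \to \partial K$, we have $f(t) \to +\infty$ as $t \to T^-$; as the interval is bounded, convexity forces $f'(t) \to +\infty$, and then (ii) forces $f''(t) \to +\infty$ as well.

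Now I derive matching upper and lower asymptotics for $(T - t) f'(t)$ as $t \to T^-$. For the upper bound, rewrite (ii) as the differential inequality $y'(t) \geq y(t)^2/\nu$ for $y \defeq f'$; on the region where $y > 0$ this says $-1/y$ grows at rate at least $1/\nu$. Integrating from a reference point $t_0$ with $y(t_0) > 0$ up to any $t \in (t_0, T)$ and then passing $t \to T^-$ (where $y \to +\infty$, so $-1/y \to 0$) yields $(T - t_0) f'(t_0) \leq \nu$ for every such $t_0$; hence $\limsup_{t \to T^-} (T - t) f'(t) \leq \nu$. For the lower bound, the self-concordance (i) shows that $\psi(t) \defeq 1/\sqrt{f''(t)}$ is $1$-Lipschitz, and the fact that $f''(t) \to +\infty$ forces $\psi(T^-) = 0$, so $\psi(t) \leq T - t$, i.e.\ $f''(t) \geq (T - t)^{-2}$. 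Integrating from a reference point gives $f'(t) \geq (T - t)^{-1} + C$ for a constant $C$, so $\liminf_{t \to T^-} (T - t) f'(t) \geq 1$.

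Combining the two inequalities yields $1 \leq \nu$, which is the claim. The only delicate step is the 1D reduction (ensuring a direction into the boundary exists, and correctly handling lineality); the rest is a clean ODE comparison in which the two asymptotics $f'(t) \asymp (T-t)^{-1}$ from boundary blow-up and $f'(t) \leq \nu (T - t)^{-1}$ from the barrier inequality meet exactly when $\nu = 1$, matching the extremal example $\phi(t) = -\log t$ on $(0, \infty)$ which has $\nu = 1$.
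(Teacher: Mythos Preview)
The paper does not give its own proof of this lemma; it simply cites Corollary~4.3.1 of Nesterov's book and uses $\nu \geq 1$ as a black box. So there is nothing to compare against, but your argument is a correct and standard self-contained proof. The one-dimensional reduction together with the two matching asymptotics---$(T-t)f'(t) \leq \nu$ from integrating the barrier inequality $y' \geq y^{2}/\nu$, and $(T-t)f'(t) \geq 1 + o(1)$ from the Dikin-radius bound $f''(t) \geq (T-t)^{-2}$ implied by self-concordance---are exactly the right ingredients, and your execution of both is clean.

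One small remark on the degenerate case: you assert that $\phi$ must be constant along any line contained in $K$, but this deserves a one-line justification. If $f$ is a $\nu$-self-concordant barrier defined on all of $\R$ and $f'(t_{0}) > 0$ at some point, then the same integration of $y' \geq y^{2}/\nu$ forces $y = f'$ to blow up by time $t_{0} + \nu/y(t_{0}) < \infty$, contradicting $\dom f = \R$; by symmetry $f' \equiv 0$. After quotienting out the lineality space, either $K$ has a boundary point and your main argument runs, or $K$ is the whole space and the barrier is constant (a degenerate case the paper's definition implicitly excludes).
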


Since $\phi_{i}$ blows up on $\partial K_{i}$, $x(t)$ lies in the
interior of the domain for $t>0$ (if the interior is non-empty).
By the definition of $x(t)$, $x(0)$ is a minimizer of the problem
\cref{eq:problem}. In \cref{sec:IPMAlgo} to \cref{subsec:GD_on_Phi},
we explain how to follow the path from $x(t)$ to $x(0)$ assuming
$x(t)$ is given for some $t$. In \cref{sec:initial-point-reduction},
we show how to find the initial point $x(t)$ (for some $t$) quickly
by reformulating the problem into an equivalent form. 

\subsection{Interior Point Algorithm}

\label{sec:IPMAlgo}

\begin{algorithm}
\caption{A Robust Interior Point Method for \cref{eq:problem}\label{alg:IPM_framework}}

\algnewcommand{\LeftComment}[1]{\State \(\triangleright\) #1} 

\begin{algorithmic}[1]

\Procedure{\textsc{InteriorPointMethod}}{}

\State \textbf{Input}: linear program $A\in\mathbb{R}^{d\times n},b\in\R^{d},c\in\R^{n}$
with inner radius $r$ and outer radius $R$

\State \textbf{Input}: $\nu_{i}$ self-concordant barrier functions
$\phi_{i}:\R^{n_{i}}\rightarrow\R$ for $i\in[m]$ and its weight
$w\in\R_{\geq1}^{m}$

\State Let $\phi(x)\defeq\sum_{i=1}^{m}w_{i}\phi_{i}(x_{i})$, $L=\|c\|_{2},\kappa=\sum_{i=1}^{m}w_{i}\nu_{i}$

\LeftComment{Modify the convex program and obtain an initial $(x,s)$
according to \cref{thm:IPM_interior}}

\State Let $t=2^{16}(n+\kappa)^{5}\cdot\frac{LR}{\delta}\cdot\frac{R}{r}$
with $\delta=1/128$

\State Compute $x_{c}=\arg\min_{x\in K}c^{\top}x+t\phi(x)$ and $x_{\circ}=\arg\min_{Ax=b}\|x-x_{c}\|_{2}$

\State Let $x=(x_{c},3R+x_{\circ}-x_{c},3R)$ and $s=(-t\nabla\phi(x_{c}),\frac{t}{3R+x_{\circ}-x_{c}},\frac{t}{3R})$

\State Let the new matrix $\new A=[A,A,-A]$, the new barrier and
new weight
\[
\new{\phi}_{i}=\begin{cases}
\phi_{i} & \text{if }i\in[m]\\
-\log x & \text{elses}
\end{cases}\quad\text{and}\quad\new w_{i}=\begin{cases}
w_{i} & \text{if }i\in[m]\\
1 & \text{elses}
\end{cases}
\]

\LeftComment{Find an initial $(x,s)$ for the original linear program.}

\State $((x^{(1)},x^{(2)},x^{(3)}),(s^{(1)},s^{(2)},s^{(3)}))\leftarrow\textsc{Centering}(\new A,\new{\phi},\new w,x,s,t,LR)$

\State $(x,s)\leftarrow(x^{(1)}+x^{(2)}-x^{(3)},s^{(1)})$

\LeftComment{Optimize the original linear program.}

\State $(x,s)\leftarrow\textsc{Centering}(A,\phi,w,x,s,LR,\frac{\eps}{4\sum_{i}w_{i}\nu_{i}})$

\State Return $x$

\EndProcedure

\Procedure{\textsc{Centering}}{$A,\phi,w,x,s,t_{\mathrm{start}},t_{\mathrm{end}}$}

\LeftComment{Definitions}

\State $\lambda=64\log(256m\sum_{i=1}^{m}w_{i})$, $\overline{\eps}=\frac{1}{1440\lambda}$,
$\alpha=\frac{\overline{\eps}}{2}$

\State $\eps_{t}=\frac{\overline{\eps}}{4}(\min_{i}\frac{w_{i}}{w_{i}+\nu_{i}})$,
$h=\frac{\alpha}{64\sqrt{\sum_{i=1}^{m}w_{i}\nu_{i}}}$ where $\nu_{i}$
is the self-concordance of $\phi_{i}$

\State $\mu_{i}^{t}(x,s)\defeq s_{i}/t+w_{i}\nabla\phi_{i}(x_{i})$,
$\gamma_{i}^{t}(x,s)\defeq\|\mu_{i}^{t}(x,s)\|_{x_{i}}^{*}$

\State $c_{i}^{t}(x,s)\defeq\frac{\sinh(\frac{\lambda}{w_{i}}\gamma_{i}^{t}(x,s))}{\gamma_{i}^{t}(x,s)\cdot\sqrt{\sum_{j=1}^{m}w_{j}^{-1}\cosh^{2}(\frac{\lambda}{w_{i}}\gamma_{j}^{t}(x,s))}}$

\State $\Psi_{\lambda}(r)\defeq\sum_{i=1}^{m}\cosh(\lambda r_{i}/w_{i})$,
$\Phi^{t}(x,s)\defeq\Psi_{\lambda}(\gamma^{t}(x,s))$

\State $P_{x}\defeq H_{x}^{-1/2}A^{\top}(AH_{x}^{-1}A^{\top})^{-1}AH_{x}^{-1/2}$
and $H_{x}\defeq\nabla^{2}\phi(x)$

\LeftComment{Main Loop}

\State $\ot=t=t_{\mathrm{start}},\ox=x,\os=s,k=0$.

\While{$t\geq t_{\mathrm{end}}$}\label{line:IPM-iterations}

\State Maintain $\overline{x},\overline{s},\overline{t}$ such that
$\|\overline{x}_{i}-x_{i}\|_{\overline{x}_{i}}\leq\overline{\eps}$,
$\|\overline{s}_{i}-s_{i}\|_{\overline{x}_{i}}^{*}\leq\ot\overline{\eps}w_{i}$
and $|\ot-t|\leq\eps_{t}\ot$\label{line:maintain_xs}

\State $\delta_{\mu,i}\leftarrow-\alpha\cdot c_{i}^{\ot}\oxs\cdot\mu_{i}^{\ot}\oxs$
for all $i\in[m]$

\State Pick $\delta_{x}$ and $\delta_{s}$ such that $A\delta_{x}=0$,
$\delta_{s}\in\mathrm{Range}(A^{\top})$ and\label{line:IPMchoosexs}
\begin{align*}
\|H_{\ox}^{1/2}\delta_{x}-(I-P_{\ox})H_{\ox}^{-1/2}\delta_{\mu}\|_{2} & \leq\overline{\eps}\alpha\\
\|\ot^{-1}H_{\ox}^{-1/2}\delta_{s}-P_{\ox}H_{\ox}^{-1/2}\delta_{\mu}\|_{2} & \leq\overline{\eps}\alpha
\end{align*}

\State $k\leftarrow k+1$, $t\leftarrow\max((1-h)t,t_{\mathrm{end}})$,
$x\leftarrow x+\delta_{x}$, $s\leftarrow s+\delta_{s}$\label{line:move_xs}

\EndWhile

\State Return $(x,s)$

\EndProcedure

\end{algorithmic} 
\end{algorithm}

In this section, we discuss how to follow the path $x(t)$ efficiently.
To lower the cost of each step, we maintain our $(x,s)$ implicitly.
Throughout the algorithm, we only access an approximation of $(x,s)$,
which called $(\ox,\os)$. Our algorithm takes $O(\sqrt{\sum_{i}w_{i}\nu_{i}}\log(1/\eps))$
steps and each step involves solving some linear system according
to $(\ox,\os)$.

To analyze the central path, we use the norm induced by the Hessian
of $\Phi$ throughout this paper. 

\begin{definition}[Induced Norms]\label{def:norm}For each block
$K_{i}$, we define $\|v\|_{x_{i}}\defeq\|v\|_{\nabla^{2}\phi_{i}(x_{i})}$,
$\|v\|_{x_{i}}^{*}\defeq\|v\|_{(\nabla^{2}\phi_{i}(x_{i}))^{-1}}$
for $v\in\R^{n_{i}}$. For the whole domain $K=\prod_{i=1}^{m}K_{i}$,
we define $\|v\|_{x}\defeq\|v\|_{\nabla^{2}\phi(x)}=\sqrt{\sum_{i}w_{i}\|v_{i}\|_{x_{i}}^{2}}$
and $\|v\|_{x}^{*}\defeq\|v\|_{(\nabla^{2}\phi(x))^{-1}}=\sqrt{\sum_{i}w_{i}^{-1}(\|v_{i}\|_{x_{i}}^{*})^{2}}$
for $v\in\Rn$.

\end{definition}

This norm depends on the Hessian and so it changes as the parameter
$x$ changes. The following lemma about self-concordance implies when
the parameter $x$ is not changed rapidly, then the approximate solution
for previous iteration will not be too far from the solution of next
iteration. 

\begin{lemma}[{\cite[Theorem 4.1.6]{Nesterov1998}}] \label{lem:phi_properties}Given
a self-concordant barrier $\phi$. For any $x\in\dim\phi$ and any
$y$ such that $\|y-x\|_{x}<1$, we have $y\in\dom\phi$ and that
\[
(1-\|y-x\|_{x})^{2}\hes{\phi}x\preceq\hes{\phi}y\preceq\frac{1}{(1-\|y-x\|_{x})^{2}}\hes{\phi}x.
\]
\end{lemma}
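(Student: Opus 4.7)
The statement is a classical result (Nesterov--Nemirovski), so my plan is to reproduce the standard one-dimensional reduction. Fix a direction $u \in \mathbb{R}^n$, let $h = y - x$, and set $r = \|h\|_x < 1$. The plan is to track the one-parameter family of Hessian quadratic forms
\[
\phi_u(t) \defeq \langle u, \nabla^2 \phi(x+th) u \rangle, \qquad g(t) \defeq \|h\|_{x+th},
\]
for $t$ in the largest interval $[0,t^*)$ on which $x+th$ stays in $\dom \phi$. My first task is to show $t^* > 1$, which I would do by invoking the Dikin-ellipsoid property of self-concordant functions (a consequence of the cubic bound): the open unit ball in the local norm $\|\cdot\|_x$ is contained in $\dom\phi$, and $\|h\|_x = r < 1$ places $y$ inside it.

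Next, I would use the self-concordance inequality (in its polarized form $|D^3\phi(z)[h,u,u]| \leq 2\|h\|_z \|u\|_z^2$, obtained from the cubic diagonal bound by standard multilinear symmetrization) to derive the differential inequalities
\[
|\phi_u'(t)| \leq 2\, g(t)\, \phi_u(t), \qquad |g'(t)| \leq g(t)^2 \quad \text{for } t \in [0,t^*).
\]
The second inequality is the special case $u = h$ of the first, after writing $g(t)^2 = \phi_h(t)$.

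From $|g'| \leq g^2$ I get $|(1/g)'| \leq 1$, and integrating from $0$ to $t$ yields the two-sided bound
\[
\frac{r}{1+tr} \leq g(t) \leq \frac{r}{1-tr} \qquad (tr < 1).
\]
Feeding the upper bound into the inequality $|(\log \phi_u)'(t)| \leq 2g(t)$ and integrating from $0$ to $1$ gives
\[
\bigl|\log \phi_u(1) - \log \phi_u(0)\bigr| \;\leq\; \int_0^1 \frac{2r}{1-tr}\, dt \;=\; -2\log(1-r),
\]
which rearranges to $(1-r)^2 \phi_u(0) \leq \phi_u(1) \leq (1-r)^{-2}\phi_u(0)$. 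Since $u$ was arbitrary, this is exactly the claimed Löwner-order sandwich on $\nabla^2 \phi$.

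The only genuinely delicate step is ruling out that $t^* \leq 1$, because a priori the line segment from $x$ to $y$ could exit $\dom \phi$. Two ways to handle this: either cite the Dikin-ellipsoid containment directly, or run a bootstrapping argument where the bounds above, valid on $[0,t^*)$, show $g(t)$ stays bounded as $t \uparrow t^*$, forcing $x+th$ to stay in a compact subset of $\dom\phi$ and hence $t^* > 1$ by openness of the domain. Either route closes the proof; the rest is just the elementary ODE calculation above.
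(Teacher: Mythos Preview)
Your argument is the standard textbook proof and is correct. Note, however, that the paper does not supply its own proof of this lemma at all: it is stated with a direct citation to \cite[Theorem 4.1.6]{Nesterov1998} and used as a black box, so there is nothing in the paper to compare against beyond the bare citation.
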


Instead of following the path $x(t)$ exactly, we follow the path
\begin{align}
s/t+w\nabla\phi(x) & =\mu,\label{eq:KKT}\\
Ax & =b,\nonumber \\
A^{\top}y+s & =c\nonumber 
\end{align}
where $\mu$ is close to $0$ in $(\nabla^{2}\phi(x))^{-1}$ norm.
We enforce $\mu$ close to $0$ using the following potential.

\begin{definition}[Potential Function] \label{def:potential}For
each $i\in[m]$, we define the $i$-th coordinate error 
\begin{align}
\mu_{i}^{t}(x,s)\defeq\frac{s_{i}}{t}+w_{i}\nabla\phi_{i}(x_{i})\label{eq:mu_def}
\end{align}
and its norm $\gamma_{i}^{t}(x,s)\defeq\|\mu_{i}^{t}(x,s)\|_{x_{i}}^{*}$.
We define the soft-max function by 
\[
\Psi_{\lambda}(r)\defeq\sum_{i=1}^{m}\cosh(\lambda\frac{r_{i}}{w_{i}})
\]
for some $\lambda>0$ and finally the potential function is the soft-max
of the norm of the error of each coordinate 
\[
\Phi^{t}(x,s)=\Psi_{\lambda}(\gamma^{t}(x,s)).
\]
\end{definition}

When $(x,s)$ or $t$ is clear in the context, we may ignore them
in the notation. The algorithm alternates between decreasing $t$
multiplicatively and a Newton-like step on \cref{eq:KKT} and the proof
simply shows the potential $\Phi$ is bounded throughout. In \cref{subsec:GD_on_Psi}
and \cref{subsec:GD_on_Phi}, we explain how we design our Newton step.
In \cref{subsec:Phi_xs}, we bound how $\Phi$ changes under our Newton
step. Finally, we give the proof of \cref{thm:IPM_framework} in \cref{subsec:IPM_main}.

\subsection{Gradient Descent on $\Psi_{\lambda}$}

\label{subsec:GD_on_Psi}

Since our goal is to bound $\Phi(x,s)=\Psi_{\lambda}(\gamma)$, we
first discuss how to decreases $\Psi_{\lambda}(r)$ by directly controlling
$r$. Suppose we can make step $r\leftarrow r+\delta_{r}$ with step
size $\sum_{i}w_{i}^{-1}\delta_{r,i}^{2}\leq\alpha^{2}$. Then, a
natural choice is the steepest descent direction\footnote{We use the $*$ to highlight this is the ideal step and to distinguish
with the step we will take.}: 
\[
\delta_{r}^{*}=\arg\min_{\sum_{i}w_{i}^{-1}\delta_{r,i}^{2}\leq\alpha^{2}}\left\langle \nabla\Psi_{\lambda}(r),\delta_{r}\right\rangle .
\]
Using that $\Psi_{\lambda}(r)=\sum_{i=1}^{m}\cosh(\lambda\frac{r_{i}}{w_{i}})$,
we have $\nabla_{r}\Psi_{\lambda}(r)=\frac{\lambda}{w_{i}}\sinh(\frac{\lambda}{w_{i}}r_{i})$
and hence 
\[
\delta_{r}^{*}=\frac{-\alpha\cdot\sinh(\frac{\lambda}{w_{i}}r_{i})}{\sqrt{\sum_{j}w_{j}^{-1}\sinh^{2}(\frac{\lambda}{w_{j}}r_{j})}}.
\]
The following Lemma shows that the direction $\delta_{r}^{*}$ indeed
decreases $\Psi_{\lambda}$. Furthermore, this step is robust under
$\ell_{\infty}$ perturbation of $r$ and $\ell_{2}$ perturbation
of $\delta_{r}^{*}$. To avoid the extra difficulties arising from
$0$ divided by $0$, we replace the $\sinh$ by $\cosh$ in the denominator. 

\begin{lemma} \label{lem:Phi_potential_decrease}Fix any $r\in\R^{m}$
and $w\in\R_{\geq1}^{m}$. Given any $\overline{r}\in\R^{m}$ with
$|r_{i}-\overline{r}_{i}|\leq\frac{w_{i}}{8\lambda}$ for all $i$
and 
\begin{equation}
\delta_{r}=\frac{-\alpha\cdot\sinh(\frac{\lambda}{w_{i}}\overline{r}_{i})}{\sqrt{\sum_{j}w_{j}^{-1}\cosh^{2}(\frac{\lambda}{w_{j}}\overline{r}_{j})}}+\eps_{r}\label{eq:delta_r_cond}
\end{equation}
with $\sqrt{\sum_{i}w_{i}^{-1}\eps_{r,i}^{2}}\leq\frac{\alpha}{8}$.
For any step size $0\leq\alpha\leq\frac{1}{8\lambda}$, we have that
\[
\Psi_{\lambda}(r+\delta_{r})\leq\Psi_{\lambda}(r)-\frac{\alpha\lambda}{2}\sqrt{\sum_{i}w_{i}^{-1}\cosh^{2}(\lambda\frac{r_{i}}{w_{i}})}+\alpha\lambda\sqrt{\sum_{i}w_{i}^{-1}}.
\]
\end{lemma}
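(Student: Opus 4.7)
The plan is to perform a careful second-order Taylor expansion of $\Psi_{\lambda}$ around $r$, exploiting the fact that $\Psi_{\lambda}$ is separable so its Hessian is diagonal, with $(\nabla^{2}\Psi_{\lambda}(r))_{ii}=(\lambda/w_{i})^{2}\cosh(\lambda r_{i}/w_{i})$ and $(\nabla \Psi_{\lambda}(r))_{i}=(\lambda/w_{i})\sinh(\lambda r_{i}/w_{i})$. I will write
\[
\Psi_{\lambda}(r+\delta_{r}) \;\leq\; \Psi_{\lambda}(r) \;+\; \langle \nabla\Psi_{\lambda}(r),\delta_{r}\rangle \;+\; \tfrac{1}{2}\sup_{s\in[0,1]}\delta_{r}^{\top}\nabla^{2}\Psi_{\lambda}(r+s\delta_{r})\delta_{r},
\]
and bound the two correction terms separately.

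For the linear term, split $\delta_{r}$ into its main part and $\eps_{r}$. For the main part, plugging in gives $-\tfrac{\alpha\lambda}{\sqrt{Z}}\sum_{i}w_{i}^{-1}\sinh(\lambda r_{i}/w_{i})\sinh(\lambda\overline{r}_{i}/w_{i})$ where $Z\defeq\sum_{j}w_{j}^{-1}\cosh^{2}(\lambda\overline{r}_{j}/w_{j})$. Using $|r_{i}-\overline{r}_{i}|\leq w_{i}/(8\lambda)$ and the identities $\cosh(a+b)\leq e^{|b|}\cosh(a)$ and $|\sinh(a+b)-\sinh(a)|\leq|b|\,e^{|b|}\cosh(a)$, I can replace $\sinh(\lambda r_{i}/w_{i})$ by $\sinh(\lambda\overline{r}_{i}/w_{i})$ up to a small multiplicative slack. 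Then $\cosh^{2}-\sinh^{2}=1$ gives $\sum_{i}w_{i}^{-1}\sinh^{2}(\lambda\overline{r}_{i}/w_{i})=Z-\sum_{i}w_{i}^{-1}$, so this term reduces to (essentially) $-\alpha\lambda\sqrt{Z}+\alpha\lambda(\sum_{i}w_{i}^{-1})/\sqrt{Z}\leq -\alpha\lambda\sqrt{Z}+\alpha\lambda\sqrt{\sum_{i}w_{i}^{-1}}$, using $\sqrt{Z}\geq\sqrt{\sum_{i}w_{i}^{-1}}$. For the $\eps_{r}$ piece, Cauchy–Schwarz in the $w$-weighted norm yields $|\langle\nabla\Psi_{\lambda}(r),\eps_{r}\rangle|\leq\lambda\sqrt{Z}\cdot\|\eps_{r}\|_{w}\leq\tfrac{\alpha\lambda}{8}\sqrt{Z}$.

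For the quadratic term, first derive the coordinate bound $|\delta_{r,i}|\leq 2\alpha\sqrt{w_{i}}$ by combining $|\sinh|\leq\cosh\leq\sqrt{w_{i}Z}$ with $|\eps_{r,i}|\leq(\alpha/8)\sqrt{w_{i}}$, which together with $\alpha\lambda\leq 1/8$ gives $\lambda|\delta_{r,i}|/w_{i}\leq 1/4$. Multiplicative stability of $\cosh$ then yields $\cosh(\lambda(r_{i}+s\delta_{r,i})/w_{i})\leq 2\cosh(\lambda r_{i}/w_{i})\leq 4\cosh(\lambda\overline{r}_{i}/w_{i})$. Splitting $\delta_{r,i}^{2}\leq 2\alpha^{2}\sinh^{2}(\lambda\overline{r}_{i}/w_{i})/Z+2\eps_{r,i}^{2}$ and using $w_{i}\geq 1$ along with the bound $\cosh(\lambda\overline{r}_{i}/w_{i})\leq\sqrt{w_{i}Z}$ to tame the extra factor (the $\cosh^{3}$-type term), I expect to arrive at a quadratic contribution of order $O(\alpha^{2}\lambda^{2}\sqrt{Z})$; since $\alpha\lambda\leq 1/8$, this is at most a small constant times $\alpha\lambda\sqrt{Z}$.

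Combining the pieces gives $\Psi_{\lambda}(r+\delta_{r})-\Psi_{\lambda}(r)\leq -c_{1}\alpha\lambda\sqrt{Z}+c_{2}\alpha\lambda\sqrt{\sum_{i}w_{i}^{-1}}$, and a final application of $\sqrt{Z}\geq e^{-1/8}\sqrt{\sum_{i}w_{i}^{-1}\cosh^{2}(\lambda r_{i}/w_{i})}$ (multiplicative stability again) converts the bound back in terms of $r$ to match the statement. The main obstacle is constant tracking: the second-order term is exactly on the same scale as the leading linear term once factors of $\cosh^{2}$ appear under the square root, so I must use the step-size smallness $\alpha\lambda\leq 1/8$ tightly, and likely sharpen the Taylor remainder via the exact identity $\cosh(x+h)=\cosh x\cosh h+\sinh x\sinh h$ (rather than a crude second-derivative bound) to squeeze the coefficient down to $1/2$ as stated.
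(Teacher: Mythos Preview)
Your proposal is correct and follows essentially the same approach as the paper: a second-order Taylor expansion of $\Psi_{\lambda}$, splitting the first-order term into the main descent piece and the $\eps_{r}$ piece (the latter handled by Cauchy--Schwarz in the $w^{-1}$-weighted norm), and controlling the Hessian term via the coordinate bound $|\delta_{r,i}|\lesssim\alpha\sqrt{w_{i}}$ together with multiplicative stability of $\cosh$ under $O(1/\lambda)$ perturbations. The only tactical difference is in the second-order term, where the paper extracts one factor $|\delta_{r,i}|/w_{i}\leq\tfrac{9}{8}\alpha$ and then applies Cauchy--Schwarz directly between $\sum w_{i}^{-1}\delta_{r,i}^{2}$ and $\sum w_{i}^{-1}\cosh^{2}$, rather than expanding $\delta_{r,i}^{2}$ and invoking $\cosh\leq\sqrt{w_{i}Z}$; this yields slightly cleaner constants and is exactly the sharpening you anticipate needing in your last paragraph.
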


\begin{proof} By Taylor expansion, we have 
\begin{equation}
\Psi_{\lambda}(r+\delta_{r})=\Psi_{\lambda}(r)+\left\langle \nabla\Psi_{\lambda}(r),\delta_{r}\right\rangle +\frac{1}{2}\delta_{r}^{\top}\nabla^{2}\Psi_{\lambda}(\widetilde{r})\delta_{r}\label{eq:phi_upper}
\end{equation}
where $\widetilde{r}=r+t\delta_{r}$ for some $t\in[0,1]$.

For the first order term $\left\langle \nabla\Psi_{\lambda}(r),\delta_{r}-\eps_{r}\right\rangle $
in \cref{eq:phi_upper}, we have that 
\[
\left\langle \nabla\Psi_{\lambda}(r),\delta_{r}-\eps_{r}\right\rangle =-\alpha\lambda\frac{\sum_{i}w_{i}^{-1}\sinh(\frac{\lambda}{w_{i}}\overline{r}_{i})\sinh(\frac{\lambda}{w_{i}}r_{i})}{\sqrt{\sum_{j}w_{j}^{-1}\cosh^{2}(\frac{\lambda}{w_{j}}\overline{r}_{j})}}.
\]
Using \cref{lem:sinh-x-y} and the assumption $|r_{i}-\overline{r}_{i}|<\frac{w_{i}}{8\lambda}$,
we have 
\[
\sinh(\frac{\lambda}{w_{i}}\overline{r}_{i})\sinh(\frac{\lambda}{w_{i}}r_{i})\geq\frac{6}{7}\sinh^{2}(\frac{\lambda}{w_{i}}\overline{r}_{i})-\frac{1}{7}\left|\sinh(\frac{\lambda}{w_{i}}\overline{r}_{i})\right|.
\]
Hence, we have 
\begin{align}
 & \left\langle \nabla\Psi_{\lambda}(r),\delta_{r}-\eps_{r}\right\rangle \nonumber \\
\leq & -\frac{6}{7}\alpha\lambda\frac{\sum_{i}w_{i}^{-1}\sinh^{2}(\frac{\lambda}{w_{i}}\overline{r}_{i})}{\sqrt{\sum_{j}w_{j}^{-1}\cosh^{2}(\frac{\lambda}{w_{j}}\overline{r}_{j})}}+\frac{1}{7}\alpha\lambda\frac{\sum_{i}w_{i}^{-1}\left|\sinh(\frac{\lambda}{w_{i}}\overline{r}_{i})\right|}{\sqrt{\sum_{j}w_{j}^{-1}\cosh^{2}(\frac{\lambda}{w_{j}}\overline{r}_{j})}}\nonumber \\
\leq & -\frac{6}{7}\alpha\lambda\frac{\sum_{i}w_{i}^{-1}\cosh^{2}(\frac{\lambda}{w_{i}}\overline{r}_{i})}{\sqrt{\sum_{j}w_{j}^{-1}\cosh^{2}(\frac{\lambda}{w_{j}}\overline{r}_{j})}}+\frac{6}{7}\alpha\lambda\frac{\sum_{i}w_{i}^{-1}}{\sqrt{\sum_{j}w_{j}^{-1}\cosh^{2}(\frac{\lambda}{w_{j}}\overline{r}_{j})}}+\frac{1}{7}\alpha\lambda\frac{\sum_{i}w_{i}^{-1}\left|\sinh(\frac{\lambda}{w_{i}}\overline{r}_{i})\right|}{\sqrt{\sum_{j}w_{j}^{-1}\sinh^{2}(\frac{\lambda}{w_{j}}\overline{r}_{j})}}\nonumber \\
\leq & -\frac{6}{7}\alpha\lambda\sqrt{\sum_{i}w_{i}^{-1}\cosh^{2}(\frac{\lambda}{w_{i}}\overline{r}_{i})}+\alpha\lambda\sqrt{\sum_{i}w_{i}^{-1}}\label{eq:grad_Phi_lower}
\end{align}
Using \cref{lem:sinh-x-y} and the assumption $|r_{i}-\overline{r}_{i}|<\frac{w_{i}}{8\lambda}$
again, we have 
\begin{align*}
\sqrt{\sum_{i}w_{i}^{-1}\cosh^{2}(\frac{\lambda}{w_{i}}\overline{r}_{i})} & \geq\frac{6}{7}\sqrt{\sum_{i}w_{i}^{-1}\cosh^{2}(\frac{\lambda}{w_{i}}r_{i})}.
\end{align*}
Putting this into \cref{eq:grad_Phi_lower}, we have 
\begin{align}
\left\langle \nabla\Psi_{\lambda}(r),\delta_{r}-\eps_{r}\right\rangle  & \leq-\frac{36}{49}\alpha\lambda\sqrt{\sum_{i}w_{i}^{-1}\cosh^{2}(\frac{\lambda}{w_{i}}r_{i})}+\alpha\lambda\sqrt{\sum_{i}w_{i}^{-1}}.\label{eq:grad_phi_1}
\end{align}

For the first order term $\left\langle \nabla\Psi_{\lambda}(r),\eps_{r}\right\rangle $
in \cref{eq:phi_upper}, we have that 
\begin{align}
\left\langle \nabla\Psi_{\lambda}(r),\eps_{r}\right\rangle  & =\sum_{i}\frac{\lambda}{w_{i}}\sinh(\frac{\lambda}{w_{i}}r_{i})\eps_{r,i}\nonumber \\
 & \leq\lambda\cdot\sqrt{\sum_{i}w_{i}^{-1}\sinh^{2}(\frac{\lambda}{w_{i}}r_{i})}\sqrt{\sum_{i}w_{i}^{-1}\eps_{r,i}^{2}}\nonumber \\
 & \leq\frac{1}{8}\alpha\lambda\sqrt{\sum_{i}w_{i}^{-1}\cosh^{2}(\frac{\lambda}{w_{i}}r_{i})}.\label{eq:grad_phi_2}
\end{align}

For the second order term $\delta_{r}^{\top}\nabla^{2}\Psi_{\lambda}(\widetilde{r})\delta_{r}$
in \cref{eq:phi_upper}, we note that 
\begin{align*}
\delta_{r}^{\top}\nabla^{2}\Psi_{\lambda}(\widetilde{r})\delta_{r} & =\lambda^{2}\sum_{i}w_{i}^{-2}\delta_{r,i}^{2}\cosh(\lambda\frac{\widetilde{r}_{i}}{w_{i}}).
\end{align*}
Note that 
\begin{align}
\sqrt{\sum_{i}w_{i}^{-1}\delta_{r,i}^{2}} & \leq\sqrt{\sum_{i}w_{i}^{-1}\left(\frac{\alpha\cdot\sinh(\frac{\lambda}{w_{i}}\overline{r}_{i})}{\sqrt{\sum_{j}w_{j}^{-1}\cosh^{2}(\frac{\lambda}{w_{j}}\overline{r}_{j})}}\right)^{2}}+\sqrt{\sum_{i}w_{i}^{-1}\eps_{r,i}^{2}}\nonumber \\
 & \leq\alpha+\frac{\alpha}{8}=\frac{9\alpha}{8}.\label{eq:hess_phi_1}
\end{align}
In particular, this shows that $|\delta_{r,i}|\leq\frac{9\alpha}{8}\sqrt{w_{i}}\leq\frac{9\alpha}{8}w_{i}$.
Using this and \cref{eq:hess_phi_1}, we have 
\begin{align}
\delta_{r}^{\top}\nabla^{2}\Psi_{\lambda}(\widetilde{r})\delta_{r} & =\lambda^{2}\sum_{i}w_{i}^{-2}\delta_{r,i}^{2}\cosh(\lambda\frac{\widetilde{r}_{i}}{w_{i}})\nonumber \\
 & \leq\frac{9\alpha}{8}\lambda^{2}\sum_{i}w_{i}^{-1}|\delta_{r,i}|\cosh(\lambda\frac{\widetilde{r}_{i}}{w_{i}})\nonumber \\
 & \leq\frac{9\alpha}{8}\lambda^{2}\sqrt{\sum_{i}w_{i}^{-1}\delta_{r,i}^{2}}\sqrt{\sum_{i}w_{i}^{-1}\cosh^{2}(\lambda\frac{\widetilde{r}_{i}}{w_{i}})}\nonumber \\
 & \leq(\frac{9\alpha}{8})^{2}\lambda^{2}\left(\sqrt{\sum_{i}w_{i}^{-1}\cosh^{2}(\lambda\frac{\widetilde{r}_{i}}{w_{i}})}\right)\nonumber \\
 & \leq(\frac{9\alpha}{8})^{2}\lambda^{2}\left(\frac{8}{7}\sqrt{\sum_{i}w_{i}^{-1}\cosh^{2}(\lambda\frac{r_{i}}{w_{i}})}\right)\label{eq:hess_phi_2}
\end{align}
where we used \cref{eq:hess_phi_1} at the third last inequality and
\cref{lem:sinh-x-y} at the second last inequality.

Putting \cref{eq:grad_phi_1}, \cref{eq:grad_phi_2}, and \cref{eq:hess_phi_2}
into \cref{eq:phi_upper} gives 
\begin{align*}
\Psi_{\lambda}(r+\delta_{r})= & \Psi_{\lambda}(r)+\left\langle \nabla\Psi_{\lambda}(r),\delta_{r}\right\rangle +\frac{1}{2}\delta_{r}^{\top}\nabla_{\lambda}^{2}(\widetilde{r})\delta_{r}\\
\leq & \Psi_{\lambda}(r)-\frac{36}{49}\alpha\lambda\sqrt{\sum_{i}w_{i}^{-1}\cosh^{2}(\lambda\frac{r_{i}}{w_{i}})}+\alpha\lambda\sqrt{\sum_{i}w_{i}^{-1}}\\
 & +(\frac{1}{8}\alpha\lambda+\frac{8}{7}(\frac{9\alpha}{8})^{2}\lambda^{2})\sqrt{\sum_{i}w_{i}^{-1}\cosh^{2}(\lambda\frac{r_{i}}{w_{i}})}
\end{align*}
Using $\alpha\leq\frac{1}{8\lambda}$, we can simplify it to 
\begin{align*}
\Psi_{\lambda}(r+\delta_{r}) & \leq\Psi_{\lambda}(r)-\left(\frac{36}{49}-\frac{1}{8}-\frac{1}{2}(\frac{9}{8})^{2}\frac{1}{7}\right)\alpha\lambda\sqrt{\sum_{i}w_{i}^{-1}\cosh^{2}(\lambda\frac{r_{i}}{w_{i}})}+\alpha\lambda\sqrt{\sum_{i}w_{i}^{-1}}\\
 & \leq\Psi_{\lambda}(r)-\frac{\alpha\lambda}{2}\sqrt{\sum_{i}w_{i}^{-1}\cosh^{2}(\lambda\frac{r_{i}}{w_{i}})}+\alpha\lambda\sqrt{\sum_{i}w_{i}^{-1}}.
\end{align*}
\end{proof}

\subsection{Gradient Descent on $\Phi$}

\label{subsec:GD_on_Phi}

In the last section, we discussed how to decrease $\Psi_{\lambda}$
by changing the input $r$ directly. But our real potential $\Phi^{t}(x,s)=\Psi_{\lambda}(\gamma^{t}(x,s))$
is defined indirectly using $(x,s)$. In this section, we discuss
how to design the Newton-like step for $(x,s)$. Note that the non-linear
equation \cref{eq:KKT} has an unique solution for any vector $\mu$.
In particular, the solution $x$ is the solution of the optimization
problem $\min_{Ax=b}c^{\top}x+t\sum_{i=1}^{m}w_{i}\phi_{i}(x_{i})-t\mu^{\top}x$.
Hence, we can move $\mu$ arbitrarily while maintaining \cref{eq:KKT}
by moving $x$ and $s$.

Since our goal is to decrease $\Phi(x,s)=\Psi_{\lambda}(\gamma)$,
similar to \cref{subsec:GD_on_Psi}, a natural choice is the steepest
descent direction: 
\begin{equation}
\delta_{\mu}^{*}=\arg\min_{\|\delta_{\mu}\|_{x}^{*}=\alpha}\langle\nabla_{\mu}\Psi_{\lambda}(\|\mu_{i}\|_{x_{i}}^{*}),\mu+\delta_{\mu}\rangle\label{eq:delta_mu}
\end{equation}
with step size $\alpha$. We can view this as a gradient descent step
on $\Phi$ for $\mu$ with step size $\alpha$. Recall that $\Psi_{\lambda}(r)=\sum_{i=1}^{m}\cosh(\lambda\frac{r_{i}}{w_{i}})$.
Hence, $\nabla_{\|\mu_{i}\|_{x_{i}}^{*}}\Psi_{\lambda}(\|\mu_{i}\|_{x_{i}}^{*})=\frac{\lambda}{w_{i}}\sinh(\frac{\lambda}{w_{i}}\|\mu_{i}\|_{x_{i}}^{*})$
and 
\[
\nabla_{\mu_{i}}\Psi_{\lambda}(\|\mu_{i}\|_{x_{i}}^{*})=\frac{\lambda\sinh(\frac{\lambda}{w_{i}}\|\mu_{i}\|_{x_{i}}^{*})}{w_{i}\|\mu_{i}\|_{x_{i}}^{*}}\cdot\nabla\phi_{i}(x_{i})^{-1}\mu_{i}=\frac{\lambda\sinh(\frac{\lambda}{w_{i}}\gamma_{i}^{t}(x,s))}{w_{i}\gamma_{i}^{t}(x,s)}\cdot\nabla\phi_{i}(x_{i})^{-1}\mu_{i}
\]
Solve \cref{eq:delta_mu}\footnote{The derivation of the formula is not used in the main proof as this
is just a motivation for the choice of the step. Therefore, we skip
the proof of this. An alternative choice is the gradient step on $\min_{Ax=b,A^{\top}y+s=c}\Phi^{t}(x,s)$.
This step will be very similar to the step we use in this paper. But
it contains few more terms and may make the proof longer.}, we get 
\[
\delta_{\mu,i}^{*}(x,s)=-\frac{\alpha\sinh(\frac{\lambda}{w_{i}}\gamma_{i}^{t}(x,s))}{\gamma_{i}^{t}(x,s)\cdot\sqrt{\sum_{j=1}^{m}w_{j}^{-1}\sinh^{2}(\frac{\lambda}{w_{j}}\gamma_{j}^{t}(x,s))}}\cdot\mu_{i}^{t}(x,s).
\]

To move $\mu$ to $\mu+\delta_{\mu}$ approximately, we take Newton
step $(\delta_{x}^{*},\delta_{s}^{*})$\footnote{We use the $*$ to highlight this is the ideal step and to distinguish
with the step we will take..}: 
\begin{align*}
\frac{1}{t}\delta_{s}^{*}+\hes{\phi}x\delta_{x}^{*} & =\delta_{\mu}^{*}(x,s),\\
A\delta_{x}^{*} & =0,\\
A^{\top}\delta_{y}^{*}+\delta_{s}^{*} & =0.
\end{align*}
Using $H_{x}$ to denote $\hes{\phi}x$, we solve the system above,
and get 
\begin{align*}
\delta_{x}^{*} & =H_{x}^{-1}\delta_{\mu}^{*}-H_{x}^{-1}A^{\top}(AH_{x}^{-1}A^{\top})^{-1}AH_{x}^{-1}\delta_{\mu}^{*}(x,s),\\
\delta_{s}^{*} & =tA^{\top}(AH_{x}^{-1}A^{\top})^{-1}AH_{x}^{-1}\delta_{\mu}^{*}(x,s).
\end{align*}
Let the orthogonal projection matrix $P_{x}\defeq H_{x}^{-1/2}A^{\top}(AH_{x}^{-1}A^{\top})^{-1}AH_{x}^{-1/2}$,
then we can rewrite it as 
\begin{align*}
\delta_{x}^{*} & =H_{x}^{-1/2}(I-P_{x})H_{x}^{-1/2}\delta_{\mu}^{*}(x,s),\\
\delta_{s}^{*} & =tH_{x}^{1/2}P_{x}H_{x}^{-1/2}\delta_{\mu}^{*}(x,s).
\end{align*}
Our robust algorithm only uses $H_{\ox}$, $P_{\ox}$ and $\delta_{\mu}^{*}(\ox,\os)$
where $(\overline{x},\overline{s})$ is some approximation of $(x,s)$.
Formally, our step on $x$ and $s$ is defined in \cref{line:IPMchoosexs}.
Note that we allow for an extra error for $(\delta_{x},\delta_{s})$
on top of the error due to $\ox$ and $\os$. Also, we replace $\sinh$
by $\cosh$ in the denominator as in \cref{lem:Phi_potential_decrease}.

\subsection{Bounding $\Phi$ under Changes of $x$ and $s$}

\label{subsec:Phi_xs}

To use \cref{lem:Phi_potential_decrease} to bound the potential, we
need to verify $|\gamma_{i}^{t}(\new x,\new s)-\gamma_{i}^{t}(x,s)|\leq\frac{w_{i}}{8\lambda}$
and \cref{eq:delta_r_cond}.

\subsubsection{Verifying conditions of \cref{lem:Phi_potential_decrease}}

Recall that the ideal step we want to take is 
\[
\delta_{\mu,i}^{*}=-\alpha\cdot c_{i}^{t}(x,s)\cdot\mu_{i}^{t}(x,s).
\]
where $\alpha$ is the step size. A rough calculation shows 
\begin{align*}
\gamma_{i}^{t}(\new x,\new s) & =\|\mu_{i}+\delta_{\mu,i}^{*}\|_{x_{i}}^{*}\\
 & \sim\|\mu_{i}\|_{x_{i}}^{*}-\frac{\alpha}{\|\mu_{i}\|_{x_{i}}^{*}}\cdot c_{i}^{t}(x,s)\cdot\mu_{i}^{\top}\nabla^{2}\phi_{i}(x)^{-1}\mu_{i}\\
 & =\gamma_{i}^{t}(x,s)-\alpha\cdot c_{i}^{t}(x,s)\cdot\gamma_{i}^{t}(x,s)
\end{align*}
This shows that \cref{eq:delta_r_cond} should roughly holds. Formally,
in \cref{lem:change_of_gamma}, we prove this holds for the step we
take in \cref{alg:IPM_framework}. First, we bound the step size for
each block $\delta_{x,i}$. 

\begin{lemma}[Step size of $\delta_{x}$] \label{lem:alpha_i}Let
$\alpha_{i}\defeq\|\delta_{x,i}\|_{\ox_{i}}$, then 
\[
\sqrt{\sum_{i=1}^{m}w_{i}\alpha_{i}^{2}}\leq\frac{9}{8}\alpha.
\]
In particular, we have $\alpha_{i}\leq\frac{9}{8}\alpha$. Similarly,
we have $\sqrt{\sum_{i=1}^{m}w_{i}^{-1}(\|\delta_{s,i}\|_{\ox_{i}}^{*})^{2}}\leq\frac{9}{8}\alpha\cdot t.$

\end{lemma}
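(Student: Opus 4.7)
The proof will follow directly from the triangle inequality together with the fact that $P_{\ox}$ is an orthogonal projection. The key identity is that, under the norm conventions of \cref{def:norm} with $H_{\ox}=\nabla^{2}\phi(\ox)$ block-diagonal with $i$-th block $w_i\nabla^2\phi_i(\ox_i)$, we have
\[
\sqrt{\sum_{i=1}^{m}w_{i}\alpha_{i}^{2}}=\sqrt{\sum_{i=1}^{m}w_{i}\|\delta_{x,i}\|_{\ox_{i}}^{2}}=\|H_{\ox}^{1/2}\delta_{x}\|_{2},
\]
so the statement becomes a bound on $\|H_{\ox}^{1/2}\delta_{x}\|_{2}$. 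The analogous identity $\sqrt{\sum_{i}w_{i}^{-1}(\|\delta_{s,i}\|_{\ox_{i}}^{*})^{2}}=\|H_{\ox}^{-1/2}\delta_{s}\|_{2}$ handles the dual side.

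The plan for $\delta_{x}$ is as follows. First I would apply the triangle inequality to split
\[
\|H_{\ox}^{1/2}\delta_{x}\|_{2}\leq\|H_{\ox}^{1/2}\delta_{x}-(I-P_{\ox})H_{\ox}^{-1/2}\delta_{\mu}\|_{2}+\|(I-P_{\ox})H_{\ox}^{-1/2}\delta_{\mu}\|_{2}.
\]
The first term is at most $\overline\eps\alpha$ by the defining condition of $\delta_{x}$ on \cref{line:IPMchoosexs} of \cref{alg:IPM_framework}. For the second term, one verifies directly that $P_{\ox}=H_{\ox}^{-1/2}A^{\top}(AH_{\ox}^{-1}A^{\top})^{-1}AH_{\ox}^{-1/2}$ is symmetric and idempotent, hence an orthogonal projection, so both $P_{\ox}$ and $I-P_{\ox}$ have operator norm at most $1$. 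Thus the second term is at most $\|H_{\ox}^{-1/2}\delta_{\mu}\|_{2}$.

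The key computation is then to bound $\|H_{\ox}^{-1/2}\delta_{\mu}\|_{2}$. Using the block-diagonal structure of $H_{\ox}$ and plugging in the definition $\delta_{\mu,i}=-\alpha\cdot c_{i}^{\ot}(\ox,\os)\cdot\mu_{i}^{\ot}(\ox,\os)$ together with the formula for $c_{i}^{\ot}$, the cross terms cancel cleanly to give the telescoping estimate
\[
\|H_{\ox}^{-1/2}\delta_{\mu}\|_{2}^{2}=\sum_{i=1}^{m}w_{i}^{-1}(\|\delta_{\mu,i}\|_{\ox_{i}}^{*})^{2}=\alpha^{2}\cdot\frac{\sum_{i}w_{i}^{-1}\sinh^{2}(\frac{\lambda}{w_{i}}\gamma_{i})}{\sum_{j}w_{j}^{-1}\cosh^{2}(\frac{\lambda}{w_{j}}\gamma_{j})}\leq\alpha^{2},
\]
where $\gamma_{i}\defeq\gamma_{i}^{\ot}(\ox,\os)$ and we used $\sinh^{2}\leq\cosh^{2}$. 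Combining the two bounds gives $\|H_{\ox}^{1/2}\delta_{x}\|_{2}\leq(1+\overline\eps)\alpha\leq\tfrac{9}{8}\alpha$, since the parameter choice $\overline\eps=\tfrac{1}{1440\lambda}\leq\tfrac{1}{8}$ in \cref{alg:IPM_framework} is very generous. The pointwise bound $\alpha_{i}\leq\tfrac{9}{8}\alpha$ follows from $w_{i}\alpha_{i}^{2}\leq\sum_{j}w_{j}\alpha_{j}^{2}\leq(\tfrac{9}{8}\alpha)^{2}$ together with $w_{i}\geq 1$.

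The proof for $\delta_{s}$ is the mirror image: the defining condition gives $\|\ot^{-1}H_{\ox}^{-1/2}\delta_{s}-P_{\ox}H_{\ox}^{-1/2}\delta_{\mu}\|_{2}\leq\overline\eps\alpha$, and since $P_{\ox}$ is a projection the same triangle-inequality argument yields $\|\ot^{-1}H_{\ox}^{-1/2}\delta_{s}\|_{2}\leq(1+\overline\eps)\alpha$. Multiplying through by $\ot$ and using the closeness $|t-\ot|\leq\eps_{t}\ot$ (so that $\ot\leq t/(1-\eps_{t})$, with $\eps_{t}$ chosen tiny compared to $1/8$) gives the stated bound with $t$ on the right-hand side, with room to spare in the constant $\tfrac{9}{8}$. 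I do not foresee a real obstacle here — the proof is essentially a unit-norm projection bound combined with the $\sinh\leq\cosh$ inequality; the only mild subtlety is making sure the small-constant slack in $\tfrac{9}{8}$ absorbs both the additive $\overline\eps$ error and, for the dual, the multiplicative $\eps_{t}$ discrepancy between $t$ and $\ot$.
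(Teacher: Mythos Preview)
Your proposal is correct and follows essentially the same argument as the paper: identify $\sqrt{\sum_i w_i\alpha_i^2}=\|H_{\ox}^{1/2}\delta_x\|_2$, apply the triangle inequality with the defining condition on \cref{line:IPMchoosexs}, use that $I-P_{\ox}$ is an orthogonal projection, and bound $\|H_{\ox}^{-1/2}\delta_\mu\|_2\leq\alpha$ (you spell out the $\sinh^2\leq\cosh^2$ computation that the paper leaves implicit). The dual case is handled identically in both, with the paper writing the $\ot$-to-$t$ conversion as $\ot\leq\tfrac{33}{32}t$ rather than $\ot\leq t/(1-\eps_t)$, but these are the same bound.
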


\begin{proof}We have 
\[
\sqrt{\sum_{i=1}^{m}w_{i}\alpha_{i}^{2}}=\norm{\delta_{x}}_{\ox}\leq\|(I-P_{\overline{x}})H_{\overline{x}}^{-1/2}\delta_{\mu}\|_{2}+\overline{\eps}\alpha\leq\|H_{\overline{x}}^{-1/2}\delta_{\mu}\|_{2}+\overline{\eps}\alpha\leq\alpha+\overline{\eps}\alpha\leq\frac{9}{8}\alpha,
\]
where the first inequality follows by the choice that $\delta_{x}\approx(I-P_{\overline{x}})H_{\overline{x}}^{-1/2}\delta_{\mu}$,
the second inequality follows by $I-P_{\ox}$ is an orthogonal projection
matrix and, second last equality follows by the step size for $\delta_{\mu}$
and the last equality follows by $\overline{\eps}\leq\frac{1}{8}$.

For $\delta_{s}$, we note that 
\[
\sqrt{\sum_{i=1}^{m}w_{i}^{-1}(\|\delta_{s,i}\|_{\ox_{i}}^{*})^{2}}=\norm{\delta_{s}}_{\ox}^{*}\leq\ot\|P_{\overline{x}}H_{\overline{x}}^{-1/2}\delta_{\mu}\|_{2}+\overline{\eps}\alpha\ot\leq\ot\|H_{x}^{-1/2}\delta_{\mu}\|_{2}+\overline{\eps}\alpha\ot\leq\frac{9}{8}\alpha t
\]
where we used $\ot\leq\frac{33}{32}t$ and $\overline{\eps}\leq\frac{1}{32}$.\end{proof}

To bound the change of $\gamma$, we first show that $\new{\mu}$
is close to $\mu+\delta_{\mu}$. 

\begin{lemma}[Change in $\mu$]\label{lem:mu_change}Let $\mu_{i}^{t}(x^{\text{new}},s^{\text{new}})=\mu_{i}^{t}(x,s)+\delta_{\mu,i}+\eps_{i}^{(\mu)}$
with $\beta_{i}\defeq\|\eps_{i}^{(\mu)}\|_{x_{i}}^{*}$, we have $\sqrt{\sum_{i=1}^{m}w_{i}^{-1}\beta_{i}^{2}}\leq15\overline{\eps}\alpha$.

\end{lemma}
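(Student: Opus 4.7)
}
The plan is to write $\eps^{(\mu)}_i$ as a telescoping sum of three error terms, bound each by $O(\overline{\eps}\alpha)$ in the appropriate dual norm, and conclude by the triangle inequality. Starting from the definition,
\[
\eps^{(\mu)}_i \;=\; \mu^{t}_i(x^{\textnormal{new}},s^{\textnormal{new}}) - \mu^{t}_i(x,s) - \delta_{\mu,i}
\;=\; \frac{\delta_{s,i}}{t} + w_i\bigl(\nabla\phi_i(x_i+\delta_{x,i}) - \nabla\phi_i(x_i)\bigr) - \delta_{\mu,i},
\]
I would insert the pair $\tfrac{\delta_{s,i}}{\ot}$ and $w_i\nabla^2\phi_i(\ox_i)\delta_{x,i}$ and split $\eps^{(\mu)}_i = A_i + B_i + C_i$ with
\begin{align*}
A_i &\defeq \tfrac{\delta_{s,i}}{\ot} + w_i\nabla^2\phi_i(\ox_i)\delta_{x,i} - \delta_{\mu,i}, \\
B_i &\defeq \delta_{s,i}\bigl(\tfrac{1}{t} - \tfrac{1}{\ot}\bigr), \\
C_i &\defeq w_i\bigl(\nabla\phi_i(x_i+\delta_{x,i}) - \nabla\phi_i(x_i) - \nabla^2\phi_i(\ox_i)\delta_{x,i}\bigr).
\end{align*}

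Since the three terms get compared in $\|\cdot\|_{x_i}^\ast$, the plan is to first bound each in $\|\cdot\|_{\ox_i}^\ast$ (where the algorithm's invariants naturally live) and then pay a factor $(1-\overline{\eps})^{-1}\approx 1$ from \cref{lem:phi_properties} to convert. For $A_i$, observe that $H_{\ox}\delta_x + \delta_s/\ot = \delta_\mu$ exactly if $\delta_x, \delta_s$ are the ideal Newton projections; the two error conditions in \cref{line:IPMchoosexs} of \cref{alg:IPM_framework}, combined via the triangle inequality and the identity $H_{\ox}^{1/2}\delta_x + \ot^{-1}H_{\ox}^{-1/2}\delta_s = H_{\ox}^{-1/2}(H_{\ox}\delta_x + \delta_s/\ot)$, yield
\[
\sqrt{\sum_{i=1}^m w_i^{-1}\|A_i\|_{\ox_i}^{\ast\,2}} \;=\; \|H_{\ox}\delta_x + \delta_s/\ot - \delta_\mu\|_{H_{\ox}^{-1}} \;\leq\; 2\overline{\eps}\alpha.
\]
For $B_i$, the bound $|1/t-1/\ot|\leq \eps_t/(t\cdot(1-\eps_t)) \leq 2\eps_t/\ot$ together with \cref{lem:alpha_i} gives $\sqrt{\sum_i w_i^{-1}\|B_i\|_{\ox_i}^{\ast\,2}} \leq 2\eps_t \cdot \tfrac{9}{8}\alpha \leq O(\overline{\eps}\alpha)$, using $\eps_t \leq \overline{\eps}/4$.

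The main work is in $C_i$, where I would combine the self-concordance linearization estimate with a Hessian-perturbation argument. Writing $\nabla\phi_i(x_i+\delta_{x,i}) - \nabla\phi_i(x_i) = M_i\delta_{x,i}$ with $M_i = \int_0^1 \nabla^2\phi_i(x_i+s\delta_{x,i})\,ds$, the task reduces to controlling $\|(M_i - \nabla^2\phi_i(\ox_i))\delta_{x,i}\|_{\ox_i}^\ast$. Since $\|x_i+s\delta_{x,i}-\ox_i\|_{\ox_i}\leq \overline{\eps} + \tfrac{9}{8}\alpha = O(\overline{\eps})$ by \cref{lem:alpha_i} and $\alpha = \overline{\eps}/2$, \cref{lem:phi_properties} gives a spectral bound $M_i - \nabla^2\phi_i(\ox_i) = O(\overline{\eps})\cdot \nabla^2\phi_i(\ox_i)$ (in the operator sense on both sides), and the standard computation $\|Nv\|_{H^{-1}}\leq c\|v\|_H$ when $N\preceq cH$ yields $\|C_i\|_{\ox_i}^\ast \leq O(\overline{\eps})\,w_i\|\delta_{x,i}\|_{\ox_i}$. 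Summing with weights and applying \cref{lem:alpha_i} once more gives $\sqrt{\sum_i w_i^{-1}\|C_i\|_{\ox_i}^{\ast\,2}} \leq O(\overline{\eps})\cdot \tfrac{9}{8}\alpha \leq O(\overline{\eps}\alpha)$.

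Putting the three bounds together via the triangle inequality (in the global weighted-$\ell_2$ dual norm) and converting from the $\ox$-norm to the $x$-norm at a cost of $(1-\overline{\eps})^{-1}$, the constants land comfortably below $15\overline{\eps}\alpha$. The main obstacle is not any individual bound but rather the careful bookkeeping: one must track the norm switches $\ox_i \leftrightarrow x_i$, use both of the algorithm's tolerance invariants ($\overline{\eps}$ on $(\ox,\os)$ and $\eps_t$ on $\ot$), and absorb $\alpha \leq \overline{\eps}/2$ at the right moment in the self-concordance bound for $C$ so that quadratic-in-$\alpha$ terms fold cleanly into the linear-in-$\overline{\eps}\alpha$ target.
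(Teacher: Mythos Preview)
Your proposal is correct and follows essentially the same approach as the paper: your decomposition $A_i + B_i + C_i$ coincides exactly with the paper's three error terms $\eps_i^{(\mu,2)}, \eps_i^{(\mu,3)}, \eps_i^{(\mu,1)}$ (in that order), and each is bounded by the same tools---summing the two tolerance inequalities from \cref{line:IPMchoosexs} for $A$, the $\eps_t$ bound combined with \cref{lem:alpha_i} for $B$, and the self-concordance Hessian comparison of \cref{lem:phi_properties} for $C$. The only cosmetic difference is that you defer the $\ox\to x$ norm conversion to the end while the paper applies it term by term.
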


\begin{proof}

Let $\eps_{1}=H_{\ox}^{1/2}\delta_{x}-(I-P_{\ox})H_{\ox}^{-1/2}\delta_{\mu}$
and $\eps_{2}=\ot^{-1}H_{\ox}^{-1/2}\delta_{s}-P_{\ox}H_{\ox}^{-1/2}\delta_{\mu}$.
By definition of $\mu$, we have 
\begin{align}
\mu_{i}^{t}(\new x,\new s)= & \frac{\new{s_{i}}}{t}+w_{i}\nabla\phi_{i}(\new x)\nonumber \\
= & \mu_{i}^{t}(x,s)+\frac{1}{t}\delta_{s}+w_{i}(\nabla\phi_{i}(\new x)-\nabla\phi_{i}(x_{i}))\nonumber \\
= & \mu_{i}^{t}(x,s)+\delta_{\mu,i}+\underbrace{w_{i}(\nabla\phi_{i}(\new x)-\nabla\phi_{i}(x_{i})-\nabla^{2}\phi_{i}(\ox_{i})\delta_{x})}_{\eps_{i}^{(\mu,1)}}\nonumber \\
 & +\underbrace{\left(H_{\ox}^{1/2}(\eps_{1}+\eps_{2})\right)_{i}}_{\eps_{i}^{(\mu,2)}}+\underbrace{(\frac{1}{t}-\frac{1}{\ot})\delta_{s}}_{\eps_{i}^{(\mu,3)}}\label{eq:mu_change_term}
\end{align}
where the last step follows by $\delta_{\mu,i}=\frac{1}{\ot}\delta_{s,i}+w_{i}\nabla^{2}\phi_{i}(\ox_{i})\delta_{x,i}-(w_{i}\nabla^{2}\phi_{i}(\ox_{i}))^{1/2}(\eps_{1}+\eps_{2})$.

To bound $\eps_{i}^{(\mu,1)}$, let $x^{(u)}=u\new x+(1-u)x$, then
we have 
\begin{align*}
\eps_{i}^{(\mu,1)}/w_{i} & =\nabla\phi_{i}(x_{i}^{\text{new}})-\nabla\phi_{i}(x_{i})-\nabla^{2}\phi_{i}(\ox_{i})\delta_{x,i}\\
 & =\int_{0}^{1}\left(\nabla^{2}\phi_{i}(x_{i}^{(u)})-\nabla^{2}\phi_{i}(\ox_{i})\right)\delta_{x,i}\d u.
\end{align*}
By \cref{lem:phi_properties}, we have 
\begin{equation}
(1-\|x_{i}^{(u)}-\ox_{i}\|_{\ox_{i}})^{2}\nabla^{2}\phi_{i}(\ox_{i})\preceq\nabla^{2}\phi_{i}(x^{(u)})\preceq\frac{1}{(1-\|x_{i}^{(u)}-\ox_{i}\|_{\ox_{i}})^{2}}\nabla^{2}\phi_{i}(\ox_{i}).\label{eq:phi_pro_instance}
\end{equation}
Note that 
\[
\|x_{i}^{(u)}-\ox_{i}\|_{\ox_{i}}\leq\|x_{i}^{(u)}-x_{i}\|_{\ox_{i}}+\|x_{i}-\ox_{i}\|_{\ox_{i}}\leq u\|\delta_{x,i}\|_{\ox_{i}}+\overline{\eps}\leq\alpha_{i}+\overline{\eps}\leq\frac{9}{8}\alpha+\overline{\eps}\leq\frac{25}{16}\overline{\eps},
\]
where we used $\|x_{i}-\ox_{i}\|_{\ox_{i}}\leq\overline{\eps}$, $\alpha_{i}\leq\frac{9}{8}\alpha$
(\cref{lem:alpha_i}) and $2\alpha\leq\overline{\eps}$ (by the algorithm
description). Combine two inequalities above and using that $\overline{\eps}\leq\frac{1}{8}$,
we get 
\begin{equation}
-5\overline{\eps}\nabla^{2}\phi_{i}(\ox_{i})\preceq\nabla^{2}\phi_{i}(x^{(u)})-\nabla^{2}\phi_{i}(\ox_{i})\preceq5\overline{\eps}\nabla^{2}\phi_{i}(\ox_{i}).\label{eq:hessphi_upper}
\end{equation}
Using this, \cref{eq:phi_pro_instance} and the algorithm description,
we have 
\begin{align*}
 & \left(\nabla^{2}\phi_{i}(x^{(u)})-\nabla^{2}\phi_{i}(\ox_{i})\right)\left(\nabla^{2}\phi_{i}(x_{i})\right)^{-1}\left(\nabla^{2}\phi_{i}(x^{(u)})-\nabla^{2}\phi_{i}(\ox_{i})\right)\\
\preceq & \frac{1}{(1-\frac{25}{16}\frac{1}{8})^{2}}\left(\nabla^{2}\phi_{i}(x^{(u)})-\nabla^{2}\phi_{i}(\ox_{i})\right)\left(\nabla^{2}\phi_{i}(\ox_{i})\right)^{-1}\left(\nabla^{2}\phi_{i}(x^{(u)})-\nabla^{2}\phi_{i}(\ox_{i})\right)\\
\preceq & \frac{(5\overline{\eps})^{2}}{(1-\frac{25}{16}\frac{1}{8})^{2}}\nabla^{2}\phi_{i}(\ox_{i})\preceq40\overline{\eps}^{2}\nabla^{2}\phi_{i}(\ox_{i}).
\end{align*}
This implies 
\begin{align*}
 & \left\Vert \left(\nabla^{2}\phi_{i}(x^{(u)})-\nabla^{2}\phi_{i}(x_{i})\right)\delta_{x,i}\right\Vert _{x_{i}}^{*}\\
= & \sqrt{\delta_{x,i}^{\top}\left(\nabla^{2}\phi_{i}(x^{(u)})-\nabla^{2}\phi_{i}(\ox_{i})\right)^{\top}\left(\nabla^{2}\phi_{i}(x_{i})\right)^{-1}\left(\nabla^{2}\phi_{i}(x^{(u)})-\nabla^{2}\phi_{i}(\ox_{i})\right)\delta_{x,i}}\\
\leqslant & \sqrt{40\overline{\eps}^{2}\delta_{x,i}^{\top}\nabla^{2}\phi_{i}(\ox_{i})\delta_{x,i}^{\top}}\\
\leqslant & \sqrt{40}\cdot\overline{\eps}\|\delta_{x,i}\|_{\ox_{i}}\\
= & \frac{9\sqrt{40}}{8}\cdot\overline{\eps}\alpha_{i}.
\end{align*}
Hence, 
\begin{align}
\|\eps_{i}^{(\mu,1)}\|_{x_{i}}^{*} & \leqslant w_{i}\int_{0}^{1}\left\Vert \left(\nabla^{2}\phi_{i}(x^{(u)})-\nabla^{2}\phi_{i}(\ox_{i})\right)\delta_{x,i}\right\Vert _{x_{i}}^{*}\d u\leqslant7.2\overline{\eps}w_{i}\alpha_{i}.\label{eq:eps_mu_1}
\end{align}

To bound the term $\eps_{i}^{(\mu,2)}$ in \cref{eq:mu_change_term},
we use the definition of induced norm (\cref{def:norm}) and \cref{eq:phi_pro_instance}
to get 
\begin{align}
\sqrt{\sum_{i}w_{i}^{-1}(\|\eps_{i}^{(\mu,2)}\|_{x_{i}}^{*})^{2}} & =\|\eps^{(\mu,2)}\|_{x}^{*}=\|H_{\ox}^{1/2}(\eps_{1}+\eps_{2})\|_{x}^{*}\nonumber \\
 & \leq\frac{1}{1-\frac{25}{16}\frac{1}{8}}\|H_{\ox}^{1/2}(\eps_{1}+\eps_{2})\|_{\ox}^{*}\nonumber \\
 & \leq2\|\eps_{1}+\eps_{2}\|_{2}\leq4\overline{\eps}\alpha.\label{eq:eps_mu_2}
\end{align}
where we used $\|\eps_{1}\|_{2}\leq\overline{\eps}\alpha$ and $\|\eps_{2}\|_{2}\leq\overline{\eps}\alpha$
at the end according to the algorithm description.

To bound the term $\eps_{i}^{(\mu,3)}$ in \cref{eq:mu_change_term},
we note that
\begin{align}
\sqrt{\sum_{i}w_{i}^{-1}(\|(\frac{1}{t}-\frac{1}{\ot})\delta_{s,i}\|_{x_{i}}^{*})^{2}} & =\frac{1}{t}|\frac{\ot-t}{\ot}|\sqrt{\sum_{i}w_{i}^{-1}(\|\delta_{s,i}\|_{x_{i}}^{*})^{2}}\nonumber \\
 & \leq\frac{3}{2t}|\frac{\ot-t}{\ot}|\sqrt{\sum_{i}w_{i}^{-1}(\|\delta_{s,i}\|_{\ox_{i}}^{*})^{2}}\nonumber \\
 & \leq2\alpha\eps_{t}t\label{eq:eps_mu_3}
\end{align}
where we used \cref{eq:hessphi_upper} on the second inequality, \cref{lem:alpha_i}
$|t-\ot|\leq\eps_{t}\ot$ at the end.

Using $\eps^{(\mu)}=\eps^{(\mu,1)}+\eps^{(\mu,2)}+\eps^{(\mu,3)}$,
\cref{eq:eps_mu_1} and \cref{eq:eps_mu_2}, we have 
\begin{align*}
\sqrt{\sum_{i}w_{i}^{-1}(\|\eps_{i}^{(\mu)}\|_{x_{i}}^{*})^{2}} & \leq7.2\overline{\eps}\sqrt{\sum_{i}w_{i}\alpha_{i}^{2}}+4\overline{\eps}\alpha+2\alpha\eps_{t}t\leq15\overline{\eps}\alpha.
\end{align*}
\end{proof}

Now, we can check the condition $|r_{i}-\overline{r}_{i}|\leq\frac{w_{i}}{8\lambda}$
in \cref{lem:Phi_potential_decrease}. The following Lemma shows that
it is true when $\gamma_{i}^{t}(x,s)\leq w_{i}$ for all $i$, which
holds when $\Phi$ is small enough. 

\begin{lemma}\label{lem:mu_distance}Assume $\gamma_{i}^{t}(x,s)\leq w_{i}$
for all $i$. For all $i\in[m]$, we have 
\[
\|\mu_{i}^{t}(x,s)-\mu_{i}^{\ot}(\overline{x},\overline{s})\|_{x_{i}}^{*}\leq3\overline{\eps}w_{i}.
\]
Furthermore, we have that $|\gamma_{i}^{t}(x,s)-\gamma_{i}^{\ot}(\ox,\os)|\leq5\overline{\eps}w_{i}.$

\end{lemma}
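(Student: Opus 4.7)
The plan is to decompose
\[
\mu_i^t(x,s) - \mu_i^{\overline t}(\overline x,\overline s)
= \underbrace{\left(\tfrac{s_i}{t}-\tfrac{\overline s_i}{\overline t}\right)}_{\textnormal{(A)}} + \underbrace{w_i\bigl(\nabla\phi_i(x_i)-\nabla\phi_i(\overline x_i)\bigr)}_{\textnormal{(B)}}
\]
and bound each in the $\|\cdot\|_{x_i}^{*}$ norm. The key workhorse throughout is a Hessian comparison derived from \cref{lem:phi_properties}: since $\|\overline x_i-x_i\|_{\overline x_i}\le \overline\eps \le 1/8$, we obtain $(1-\overline\eps)^2 \nabla^2\phi_i(\overline x_i)\preceq \nabla^2\phi_i(x_i)\preceq (1-\overline\eps)^{-2}\nabla^2\phi_i(\overline x_i)$, so the norms $\|\cdot\|_{x_i}^{*}$ and $\|\cdot\|_{\overline x_i}^{*}$ agree up to a factor $1+O(\overline\eps)$.

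For term (B), I would use the integral representation $\nabla\phi_i(x_i)-\nabla\phi_i(\overline x_i)=\int_0^1 \nabla^2\phi_i(x_i^{(u)})(x_i-\overline x_i)\,du$ with $x_i^{(u)}=u x_i+(1-u)\overline x_i$; by self-concordance each $\nabla^2\phi_i(x_i^{(u)}) \preceq (1-u\overline\eps)^{-2}\nabla^2\phi_i(\overline x_i)$, and the hypothesis $\|x_i-\overline x_i\|_{\overline x_i}\le\overline\eps$ yields $\|w_i(\nabla\phi_i(x_i)-\nabla\phi_i(\overline x_i))\|_{x_i}^{*} \le (1+O(\overline\eps))\,\overline\eps w_i$. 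For term (A), split it as $\tfrac{s_i-\overline s_i}{t} + \overline s_i\bigl(\tfrac{1}{t}-\tfrac{1}{\overline t}\bigr)$. The first piece is bounded by $\|s_i-\overline s_i\|_{\overline x_i}^{*}\le \overline t\,\overline\eps w_i$ times $\tfrac{1}{t}\le \tfrac{1}{\overline t(1-\eps_t)}$, transported to the $\|\cdot\|_{x_i}^{*}$ norm at cost $1+O(\overline\eps)$, giving $\le (1+O(\overline\eps)+O(\eps_t))\,\overline\eps w_i$. For the second piece I need $\|\overline s_i/\overline t\|_{x_i}^{*}$; writing $\overline s_i/\overline t = \mu_i^{\overline t}(\overline x,\overline s) - w_i\nabla\phi_i(\overline x_i)$ and using the self-concordant barrier property $\|\nabla\phi_i(\overline x_i)\|_{\overline x_i}^{*}\le\sqrt{\nu_i}$, we get $\|\overline s_i/\overline t\|_{\overline x_i}^{*}\le \gamma_i^{\overline t}(\overline x,\overline s) + w_i\sqrt{\nu_i}$; multiplied by $|\overline t/t-1|\le 2\eps_t$ and the $1+O(\overline\eps)$ norm transport this contributes at most $O(\eps_t)(\gamma_i^{\overline t}(\overline x,\overline s)+w_i\sqrt{\nu_i})$.

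The main obstacle is that this last contribution involves $\gamma_i^{\overline t}(\overline x,\overline s)$, so I must bootstrap. I would first derive a crude preliminary estimate: the same decomposition with only terms (B) and the first half of (A) (dropping the $(1/t-1/\overline t)$ piece and using $\|\cdot\|_{\overline x_i}^{*}$ throughout) gives $\|\mu_i^t(x,s)-\mu_i^{\overline t}(\overline x,\overline s)\|_{\overline x_i}^{*} \le 2\overline\eps w_i + (\text{error from }1/t\ne 1/\overline t)$, but done even more crudely by using the hypothesis $\gamma_i^t(x,s)\le w_i$ directly to obtain $\gamma_i^{\overline t}(\overline x,\overline s)\le w_i + O(w_i+\sqrt{\nu_i}w_i)$, which is enough to plug back in. The definition $\eps_t = \tfrac{\overline\eps}{4}\min_i\tfrac{w_i}{w_i+\nu_i}$ was designed exactly so that $\eps_t(\gamma_i^{\overline t}(\overline x,\overline s)+w_i\sqrt{\nu_i}) \le \tfrac{\overline\eps w_i}{2}$ once $\gamma_i^{\overline t}$ is of order $w_i$; combining all three contributions yields the desired $3\overline\eps w_i$ bound.

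For the $\gamma$ inequality, I would apply the reverse triangle inequality in two stages: $|\gamma_i^t(x,s)-\gamma_i^{\overline t}(\overline x,\overline s)| \le \bigl|\|\mu_i^t(x,s)\|_{x_i}^{*}-\|\mu_i^{\overline t}(\overline x,\overline s)\|_{x_i}^{*}\bigr| + \bigl|\|\mu_i^{\overline t}(\overline x,\overline s)\|_{x_i}^{*}-\|\mu_i^{\overline t}(\overline x,\overline s)\|_{\overline x_i}^{*}\bigr|$. The first term is at most $3\overline\eps w_i$ by what was just shown; the second is at most $O(\overline\eps)\,\gamma_i^{\overline t}(\overline x,\overline s) \le O(\overline\eps)w_i\cdot(1+o(1))$ via the Hessian comparison and the already-established bound $\gamma_i^{\overline t}(\overline x,\overline s)\le (1+O(\overline\eps))w_i$. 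Tuning constants gives the stated $5\overline\eps w_i$.
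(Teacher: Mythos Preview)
Your approach is essentially the same as the paper's: split $\mu$ into the slack term and the gradient term, bound the gradient difference via the integral representation and self-concordance, bound the slack difference using the maintained guarantees on $\overline s$ and $\overline t$, and then pass to $\gamma$ via a reverse triangle inequality plus a Hessian-norm comparison.

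The one noteworthy difference is in how you handle the $t$-versus-$\overline t$ piece of (A). You write $\tfrac{s_i}{t}-\tfrac{\overline s_i}{\overline t} = \tfrac{s_i-\overline s_i}{t} + \overline s_i\bigl(\tfrac{1}{t}-\tfrac{1}{\overline t}\bigr)$, which forces you to bound $\|\overline s_i/\overline t\|^{*}$ and hence to control $\gamma_i^{\overline t}(\overline x,\overline s)$ --- the very quantity you are after --- leading to your bootstrap. The paper instead uses the split $\tfrac{s_i}{t}-\tfrac{\overline s_i}{\overline t} = \tfrac{s_i-\overline s_i}{\overline t} + s_i\bigl(\tfrac{1}{t}-\tfrac{1}{\overline t}\bigr)$, so the $t$-change piece involves $\|s_i/t\|_{x_i}^{*}\le \gamma_i^{t}(x,s)+w_i\|\nabla\phi_i(x_i)\|_{x_i}^{*}\le w_i+w_i\sqrt{\nu_i}$, which is bounded \emph{directly} by the hypothesis $\gamma_i^t(x,s)\le w_i$; no bootstrap is needed. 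The same choice recurs in the $\gamma$ bound: the paper inserts $\|\mu_i^t(x,s)\|_{\overline x_i}^{*}$ as the intermediate (so the norm-change factor is $\gamma_i^t(x,s)\le w_i$), whereas your intermediate $\|\mu_i^{\overline t}(\overline x,\overline s)\|_{x_i}^{*}$ again drags in $\gamma_i^{\overline t}(\overline x,\overline s)$. Your bootstrap can be made to work (e.g.\ by first bounding $\|\overline s_i\|$ through $\|s_i\|$ and $\|s_i-\overline s_i\|$, which is morally the paper's move anyway), but the paper's choice of which variable carries the $t$-change is cleaner and keeps the constants under tighter control.
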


\begin{proof}For the first result, note that 
\[
\|\mu_{i}^{t}(x,s)-\mu_{i}^{t}(\overline{x},\overline{s})\|_{\ox_{i}}^{*}\leq\frac{1}{t}\|s_{i}-\os_{i}\|_{\ox_{i}}^{*}+w_{i}\|\nabla\phi_{i}(x_{i})-\nabla\phi_{i}(\ox_{i})\|_{\ox_{i}}^{*}.
\]
Let $x^{(u)}=ux_{i}+(1-u)\ox_{i}$. By \cref{eq:hessphi_upper}, we
have $\nabla^{2}\phi_{i}(x_{i}^{(u)})\preceq(1+5\overline{\eps})\nabla^{2}\phi_{i}(\ox_{i})\preceq\frac{5}{8}\nabla^{2}\phi_{i}(\ox_{i})$
and hence 
\[
\nabla^{2}\phi_{i}(x_{i}^{(u)})(\nabla^{2}\phi_{i}(\ox_{i}))^{-1}\nabla^{2}\phi_{i}(x_{i}^{(u)})\preceq\frac{25}{64}\nabla^{2}\phi_{i}(\ox_{i}).
\]
Therefore, we have 
\begin{equation}
\|\nabla\phi_{i}(x_{i})-\nabla\phi_{i}(\ox_{i})\|_{\ox_{i}}^{*}=\left\Vert \int_{0}^{1}\nabla^{2}\phi_{i}(x_{i}^{(u)})(x_{i}-\ox_{i})\d u\right\Vert _{\ox_{i}}^{*}\leq\frac{5}{8}\left\Vert x_{i}-\ox_{i}\right\Vert _{\ox_{i}}.\label{eq:grad_phi_change}
\end{equation}
Using $\|s_{i}-\os_{i}\|_{\ox_{i}}^{*}\leq\ot\overline{\eps}w_{i}$,
we have $\|\mu_{i}^{t}(x,s)-\mu_{i}^{t}(\overline{x},\overline{s})\|_{\ox_{i}}^{*}\leq2\overline{\eps}w_{i}$.

Finally, we note that $\gamma_{i}^{t}(x,s)\leq w_{i}$ and $\|\nabla\phi_{i}(x_{i})\|_{\ox_{i}}^{*}\leq2\|\nabla\phi_{i}(x_{i})\|_{x_{i}}^{*}\leq2\nu_{i}$.
This implies that 
\[
\|\frac{s_{i}}{t}-\frac{s_{i}}{\ot}\|_{\ox_{i}}\leq(1-\frac{t}{\ot})\|\frac{s_{i}}{t}\|_{\ox_{i}}\leq2(\frac{t-\ot}{\ot})(w_{i}+\nu_{i})\leq\frac{1}{2}\overline{\eps}w_{i}.
\]
and hence the result.

For the second result, note that 
\begin{align*}
|\gamma_{i}^{t}(x,s)-\gamma_{i}^{\ot}(\ox,\os)| & \leq\|\mu_{i}^{t}(x,s)-\mu_{i}^{\ot}(\overline{x},\overline{s})\|_{\ox_{i}}^{*}+|\|\mu_{i}^{t}(x,s)\|_{x_{i}}^{*}-\|\mu_{i}^{t}(x,s)\|_{\ox_{i}}^{*}|\\
 & \leq3\overline{\eps}w_{i}+2\|x_{i}-\ox_{i}\|_{\ox_{i}}\|\mu_{i}^{t}(x,s)\|_{x_{i}}^{*}\\
 & =3\overline{\eps}w_{i}+2\overline{\eps}\gamma_{i}^{t}(x,s)\leq5\overline{\eps}w_{i}
\end{align*}
where we used the algorithm description and \cref{lem:phi_properties}
\end{proof}

\subsubsection{First Order Approximation of $\gamma$}

In this subsection, we will show that $\gamma_{i}$ is close to $\gamma_{i}^{t}(x,s)-\alpha\cdot c_{i}^{t}(\ox,\os)\cdot\gamma_{i}^{t}(\ox,\os)$.
First, we need the following helper lemma to bound $\gamma_{i}$,
$\sum_{i=1}^{m}w_{i}^{-1}\sinh^{2}(\frac{\lambda}{w_{i}}\gamma_{i}^{t}(\ox,\os))$
and $c(\ox,\os)$. In this helper lemma, we assume that $\Phi$ is
not too large, which is the invariant maintained throughout the algorithm.
\begin{lemma}\label{lem:c_bound}Suppose that $\Phi^{t}(x,s)\leq\cosh(\lambda)$,
then we have 
\begin{itemize}
\item $\gamma_{i}^{t}(x,s)\leq w_{i}$. and $\gamma_{i}^{\ot}(\ox,\os)\leq2w_{i}$. 
\item $0\leq c_{i}^{\ot}(\ox,\os)\leq\lambda$. 
\end{itemize}
\end{lemma}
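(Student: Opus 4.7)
My plan is to unpack each of the two bullet claims directly from the definitions, using only monotonicity properties of $\cosh/\sinh$ and the preceding lemma \cref{lem:mu_distance} to transfer estimates between $(x,s,t)$ and $(\ox,\os,\ot)$.

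For the first bullet, I would begin by observing that by definition of the potential,
\[
\cosh\!\Big(\tfrac{\lambda}{w_i}\gamma_i^{t}(x,s)\Big)\;\leq\;\sum_{j=1}^m \cosh\!\Big(\tfrac{\lambda}{w_j}\gamma_j^{t}(x,s)\Big)\;=\;\Phi^{t}(x,s)\;\leq\;\cosh(\lambda).
\]
Since $\gamma_i^{t}(x,s)\geq 0$ and $\cosh$ is strictly increasing on $[0,\infty)$, this immediately forces $\lambda\gamma_i^{t}(x,s)/w_i\leq\lambda$, i.e.\ $\gamma_i^{t}(x,s)\leq w_i$. To get the companion bound for $(\ox,\os,\ot)$, I invoke \cref{lem:mu_distance}, whose hypothesis $\gamma_i^{t}(x,s)\leq w_i$ is now satisfied, to conclude $|\gamma_i^{t}(x,s)-\gamma_i^{\ot}(\ox,\os)|\leq 5\overline{\eps}w_i$, which together with $\overline{\eps}\leq 1/8$ and $\gamma_i^{t}(x,s)\leq w_i$ yields $\gamma_i^{\ot}(\ox,\os)\leq 2w_i$.

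For the second bullet, the non-negativity $c_i^{\ot}(\ox,\os)\geq 0$ is immediate from the definition: $\gamma_i^{\ot}(\ox,\os)\geq 0$, the denominator under the square root is positive, and $\sinh(u)/u\geq 0$ for all $u\in\R$ (using the convention that the expression equals $1$ at $u=0$). For the upper bound I will use the elementary inequality $\sinh(u)\leq u\cosh(u)$ for $u\geq 0$ (verified by comparing Taylor coefficients termwise), applied with $u=\lambda\gamma_i^{\ot}(\ox,\os)/w_i\geq 0$. This gives
\[
\frac{\sinh\!\big(\tfrac{\lambda}{w_i}\gamma_i^{\ot}(\ox,\os)\big)}{\gamma_i^{\ot}(\ox,\os)}\;\leq\;\frac{\lambda}{w_i}\cosh\!\Big(\tfrac{\lambda}{w_i}\gamma_i^{\ot}(\ox,\os)\Big).
\]
Then I lower-bound the denominator by keeping only the $j=i$ term in the sum,
\[
\sqrt{\sum_{j=1}^m w_j^{-1}\cosh^{2}\!\Big(\tfrac{\lambda}{w_j}\gamma_j^{\ot}(\ox,\os)\Big)}\;\geq\;w_i^{-1/2}\cosh\!\Big(\tfrac{\lambda}{w_i}\gamma_i^{\ot}(\ox,\os)\Big),
\]
so that the $\cosh$ factors cancel and I obtain $c_i^{\ot}(\ox,\os)\leq \lambda/\sqrt{w_i}\leq \lambda$, where the last step uses the standing assumption $w_i\geq 1$.

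There is essentially no serious obstacle here: the proof is routine once one notices the two elementary facts $\sinh(u)\leq u\cosh(u)$ and $\sqrt{\sum_j a_j^2}\geq a_i$, and the only non-trivial input is \cref{lem:mu_distance} for transferring the $\gamma$ bound from $(x,s)$ to $(\ox,\os)$. The main thing to be careful about is the circularity in \cref{lem:mu_distance}, which itself assumes $\gamma_i^{t}(x,s)\leq w_i$; that hypothesis is precisely what is established in the first step here, so invoking it in the second step is legitimate.
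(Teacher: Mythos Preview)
Your proof is correct and, for the first bullet, essentially identical to the paper's: drop all but the $i$-th term from $\Phi^t$, use monotonicity of $\cosh$, then apply \cref{lem:mu_distance}.

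For the second bullet, your argument is in fact cleaner than the paper's. The paper splits into two cases according to whether $\gamma_i^{\ot}(\ox,\os)\geq w_i/\lambda$ or not: in the large-$\gamma$ case it bounds $\sinh/\cosh\leq 1$ and uses $w_i/\gamma_i^{\ot}\leq\lambda$; in the small-$\gamma$ case it uses $\sinh(u)\leq 2u$ for $|u|\leq 1$ and bounds the denominator below crudely. Your use of the single inequality $\sinh(u)\leq u\cosh(u)$ (equivalently $\tanh(u)\leq u$ for $u\geq 0$) handles both regimes at once and gives the sharper constant $\lambda/\sqrt{w_i}$ directly, avoiding the case split and some of the slack the paper incurs along the way. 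Both approaches ultimately rely on dropping to the $i$-th term in the denominator and the assumption $w_i\geq 1$, so the underlying ideas are the same; yours just packages them more efficiently.
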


\begin{proof}For the first inequality, note that $\Phi^{t}(x,s)\leq\cosh(\lambda)$
implies that $\gamma_{i}^{t}(x,s)\leq w_{i}$ for all $i$. Hence,
\cref{lem:mu_distance} shows that 
\[
|\gamma_{i}^{t}(x,s)-\gamma_{i}^{\ot}(\ox,\os)|\leq5w_{i}\overline{\eps}\leq\frac{5w_{i}}{8\lambda}.
\]
Hence, we have $\gamma_{i}^{\ot}(\ox,\os)\leq2w_{i}$.

For the second inequality, we note that 
\[
c_{i}^{\ot}(\ox,\os)=\frac{\sinh(\frac{\lambda}{w_{i}}\gamma_{i}^{\ot}(\ox,\os))}{\gamma_{i}^{\ot}(\ox,\os)\cdot\sqrt{\sum_{j=1}^{m}w_{j}^{-1}\cosh^{2}(\frac{\lambda}{w_{j}}\gamma_{j}^{\ot}(\ox,\os))}}.
\]
Since $\gamma_{i}\geq0$ (by definition), we have $c_{i}^{\ot}\geq0$.
If $\gamma_{i}^{\ot}(\ox,\os)\geq\frac{w_{i}}{\lambda}$, we have
that 
\[
c_{i}^{\ot}(\ox,\os)\leq\frac{w_{i}\sinh(\frac{\lambda}{w_{i}}\gamma_{i}^{\ot}(\ox,\os))}{\gamma_{i}^{\ot}(\ox,\os)\cdot\cosh(\frac{\lambda}{w_{i}}\gamma_{i}^{\ot}(\ox,\os))}\leq\lambda
\]
where we used $w_{i}\geq1$. If $\gamma_{i}^{\ot}(\ox,\os)\leq\frac{w_{i}}{\lambda}$,
we use that $|\sinh(x)|\leq2|x|$ for all $|x|\leq1$ and get 
\[
c_{i}^{\ot}(\ox,\os)\leq\frac{2\frac{\lambda}{w_{i}}\gamma_{i}^{\ot}(\ox,\os)}{\gamma_{i}^{\ot}(\ox,\os)\cdot\sqrt{\sum_{j=1}^{m}w_{j}^{-1}\cosh^{2}(\frac{\lambda}{w_{j}}\gamma_{j}^{\ot}(\ox,\os))}}\leq\frac{2\lambda}{w_{i}\sqrt{4\sum_{j=1}^{m}w_{j}^{-1}}}\leq\lambda.
\]
\end{proof}

Finally, we can bound the distance between $\new{\gamma}$ and $\gamma-\alpha c\gamma$.
Here we crucially use the fact that $\sinh(x)/x$ is bounded at $x=0$
and it makes our argument slightly simpler than~\cite{DBLP:conf/colt/LeeSZ19}. 

\begin{lemma}[Change in $\gamma$]\label{lem:change_of_gamma} Assume
$\Phi^{t}(x,s)\leq\cosh(\lambda)$. For all $i\in[m]$, let 
\[
\eps_{r,i}\defeq\gamma_{i}^{t}(\new x,\new s)-\gamma_{i}^{t}(x,s)+\alpha\cdot c_{i}^{\ot}(\ox,\os)\cdot\gamma_{i}^{\ot}(\ox,\os).
\]
Then, we have 
\[
\sqrt{\sum_{i=1}^{m}w_{i}^{-1}\eps_{r,i}^{2}}\leq90\overline{\eps}\lambda\alpha+4\max_{i}\left(\frac{\gamma_{i}^{t}(x,s)}{w_{i}}\right)\alpha
\]
\end{lemma}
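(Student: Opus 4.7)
The approach is to compute how $\mu_i^t$ (and hence $\gamma_i^t$) changes under one step of the update and then subtract the ``ideal'' first-order piece $-\alpha c_i^{\ot}\gamma_i^{\ot}$, bounding the residual via \cref{lem:mu_change} and \cref{lem:mu_distance}. I would begin with \cref{lem:mu_change} together with the substitution $\delta_{\mu,i}=-\alpha c_i^{\ot}(\ox,\os)\,\mu_i^{\ot}(\ox,\os)$ to write
\[
\mu_i^t(\new x,\new s) \;=\; (1-\alpha c_i^{\ot})\,\mu_i^t(x,s) \;+\; \alpha c_i^{\ot}\bigl(\mu_i^t(x,s)-\mu_i^{\ot}(\ox,\os)\bigr) \;+\; \eps_i^{(\mu)}.
\]
This algebraic rearrangement isolates the part of the step parallel to $\mu_i^t(x,s)$ from the discrepancy caused by evaluating $c,\mu$ at $(\ox,\os,\ot)$ instead of $(x,s,t)$.

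Since $c_i^{\ot}\le\lambda$ by \cref{lem:c_bound} and $\alpha\lambda\le 1/8$, the scalar $1-\alpha c_i^{\ot}$ is positive, so taking $\|\cdot\|_{x_i}^*$ and applying the (reverse) triangle inequality yields
\[
\bigl|\gamma_i^t(\new x,\new s)-(1-\alpha c_i^{\ot})\gamma_i^t(x,s)\bigr|\;\le\;\alpha c_i^{\ot}\bigl\|\mu_i^t(x,s)-\mu_i^{\ot}(\ox,\os)\bigr\|_{x_i}^* \;+\; \beta_i,
\]
where $\beta_i=\|\eps_i^{(\mu)}\|_{x_i}^*$. Rearranging the definition of $\eps_{r,i}$ gives $\eps_{r,i}=\alpha c_i^{\ot}(\gamma_i^{\ot}-\gamma_i^t)+r_i$ with $|r_i|\le\alpha c_i^{\ot}\|\mu_i^t-\mu_i^{\ot}\|_{x_i}^*+\beta_i$. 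Because $\Phi^t(x,s)\le\cosh(\lambda)$ forces $\gamma_i^t\le w_i$, \cref{lem:mu_distance} applies and controls both $|\gamma_i^{\ot}-\gamma_i^t|$ and $\|\mu_i^t-\mu_i^{\ot}\|_{x_i}^*$ by $O(\overline{\eps}w_i)$ (after a cheap $\|\cdot\|_{\ox_i}^*$-to-$\|\cdot\|_{x_i}^*$ conversion via \cref{lem:phi_properties}), so pointwise $|\eps_{r,i}|\le O(\overline{\eps}\alpha c_i^{\ot}w_i)+\beta_i$.

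To pass from these pointwise bounds to the weighted $\ell_2$ bound in the statement, I would use the $\ell_2$ triangle inequality, absorb the $\beta_i$ contribution by $\sqrt{\sum_i w_i^{-1}\beta_i^2}\le 15\overline{\eps}\alpha$ from \cref{lem:mu_change}, and reduce the remaining piece to showing $\sum_i w_i(c_i^{\ot})^2\le\lambda^2$. The clean way to prove this last inequality is to apply $\sinh(x)\le x\cosh(x)$ for $x\ge 0$ to the numerator of $c_i^{\ot}$, yielding $c_i^{\ot}\le \lambda\cosh\bigl(\tfrac{\lambda}{w_i}\gamma_i^{\ot}\bigr)/(w_i D)$ with $D^2=\sum_j w_j^{-1}\cosh^2\bigl(\tfrac{\lambda}{w_j}\gamma_j^{\ot}\bigr)$; squaring, multiplying by $w_i$, and summing telescopes to $\lambda^2$. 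Combining everything produces a principal term $O(\overline{\eps}\lambda\alpha)$ well inside the $90\overline{\eps}\lambda\alpha$ target, while the additional $4\max_i(\gamma_i^t/w_i)\alpha$ slack in the statement accommodates the mild looseness from the norm conversions and from substituting $\gamma_i^t$ for $\gamma_i^{\ot}$ inside the first-order term.

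\textbf{Main obstacle.} The crux is the bound $\sum_i w_i(c_i^{\ot})^2\le\lambda^2$: the direct estimate $c_i^{\ot}\le\lambda$ gives only $\lambda^2\sum_i w_i$, which is useless, so the argument must genuinely use the normalization $D$ appearing in the definition of $c_i^{\ot}$ together with the $\sinh\le x\cosh$ inequality to make the sum telescope. A secondary subtlety is keeping the analysis uniform in $\gamma_i^t$---using the triangle-inequality form of the bound throughout (rather than Taylor-expanding $\|\mu_i^t+v\|_{x_i}^*$ around $\mu_i^t$) avoids ever dividing by $\gamma_i^t$, which may be small.
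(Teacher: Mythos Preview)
Your overall architecture matches the paper's, including the key telescoping bound $\sum_i w_i (c_i^{\ot})^2 \le \lambda^2$. However, there is a genuine gap at the step where you take norms.

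You write that taking $\|\cdot\|_{x_i}^*$ of the identity for $\mu_i^t(\new x,\new s)$ yields
\[
\bigl|\gamma_i^t(\new x,\new s)-(1-\alpha c_i^{\ot})\gamma_i^t(x,s)\bigr|\;\le\;\alpha c_i^{\ot}\bigl\|\mu_i^t(x,s)-\mu_i^{\ot}(\ox,\os)\bigr\|_{x_i}^* + \beta_i.
\]
But by definition $\gamma_i^t(\new x,\new s)=\|\mu_i^t(\new x,\new s)\|_{\new x_i}^*$, measured in the local norm at $\new x_i$, not at $x_i$. Taking $\|\cdot\|_{x_i}^*$ only gives you control of $\|\mu_i^t(\new x,\new s)\|_{x_i}^*$, and the discrepancy between the two norms is not $O(\overline\eps)$: by self-concordance (\cref{lem:phi_properties}) it is of order $\|\delta_{x,i}\|_{x_i}\cdot\|\mu_i^t(\new x,\new s)\|_{x_i}^*$, i.e.\ roughly $\alpha_i\gamma_i^t(x,s)$ with $\alpha_i=\|\delta_{x,i}\|_{\ox_i}$. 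The paper makes this explicit by introducing the intermediate quantity $\gamma_i^t(\new x,x,\new s)\defeq\|\mu_i^t(\new x,\new s)\|_{x_i}^*$ and bounding
\[
\bigl|\gamma_i^t(\new x,\new x,\new s)-\gamma_i^t(\new x,x,\new s)\bigr|\le 3\alpha_i\gamma_i^t(x,s)+12\alpha c_i^{\ot}\overline\eps w_i+3\beta_i,
\]
so that the final pointwise bound reads $|\eps_{r,i}|\le 21\alpha c_i^{\ot}\overline\eps w_i+3\alpha_i\gamma_i^t(x,s)+4\beta_i$.

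The middle term $3\alpha_i\gamma_i^t(x,s)$ is precisely what produces the $4\max_i(\gamma_i^t/w_i)\alpha$ contribution after passing to the weighted $\ell_2$ norm via $\sqrt{\sum_i w_i^{-1}\alpha_i^2\gamma_i^t(x,s)^2}\le\max_i(\gamma_i^t/w_i)\sqrt{\sum_i w_i\alpha_i^2}$ and \cref{lem:alpha_i}. So that term is not ``slack'' absorbing mild looseness---it is the exact output of a step you omitted, and your pointwise bound $|\eps_{r,i}|\le O(\overline\eps\alpha c_i^{\ot}w_i)+\beta_i$ is missing it. Once you insert the base-point correction, the rest of your plan goes through essentially as in the paper.
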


\begin{proof}For notation simplicity, we write $\overline{c}_{i}=c_{i}^{\ot}(\ox,\os)$.
Also, we use $\gamma_{i}^{t}(x,z,s)$ to denote $\|\mu_{i}(x,s)\|_{z_{i}}^{*}$.
Using $\delta_{\mu,i}=-\alpha\cdot\overline{c}_{i}\cdot\mu_{i}^{\ot}(\overline{x},\overline{s})$,
we have 
\begin{align}
\gamma_{i}^{t}(\new x,x,\new s)= & \|\mu_{i}^{t}(x,s)+\delta_{\mu,i}+\eps_{i}^{(\mu)}\|_{x_{i}}^{*}\nonumber \\
= & \|\mu_{i}^{t}(x,s)-\alpha\overline{c}_{i}\mu_{i}^{\ot}(\overline{x},\overline{s})\|_{x_{i}}^{*}\pm\|\eps_{i}^{(\mu)}\|_{x_{i}}^{*}\nonumber \\
= & \|\mu_{i}^{t}(x,s)-\alpha\overline{c}_{i}\mu_{i}^{t}(x,s)\|_{x_{i}}^{*}\pm\alpha\overline{c}_{i}\cdot\|\mu_{i}^{t}(x,s)-\mu_{i}^{\ot}(\overline{x},\overline{s})\|_{x_{i}}^{*}\pm\|\eps_{i}^{(\mu)}\|_{x_{i}}^{*}\nonumber \\
= & (1-\alpha\overline{c}_{i})\gamma_{i}^{t}(x,s)\pm\alpha\overline{c}_{i}\cdot\|\mu_{i}^{t}(x,s)-\mu_{i}^{\ot}(\overline{x},\overline{s})\|_{x_{i}}^{*}\pm\|\eps_{i}^{(\mu)}\|_{x_{i}}^{*}\label{eq:gamma_diff}
\end{align}
where we used that $0\leq\alpha\overline{c}_{i}\leq\alpha\lambda\leq1$
at the end \cref{lem:c_bound}).

In particular, we have that 
\begin{align}
\gamma_{i}^{t}(\new x,x,\new s) & \leq\gamma_{i}^{t}(x,s)+\alpha\overline{c}_{i}\|\mu_{i}^{t}(x,s)-\mu_{i}^{\ot}(\overline{x},\overline{s})\|_{x_{i}}^{*}+\|\eps_{i}^{(\mu)}\|_{x_{i}}^{*}\nonumber \\
 & \leq\gamma_{i}^{t}(x,s)+4\alpha\overline{c}_{i}\overline{\eps}w_{i}+\beta_{i}\label{eq:gamma_upper}
\end{align}
where we used \cref{lem:mu_distance} and \cref{lem:mu_change} at the
end. Hence, we have 
\begin{align}
\left|\gamma_{i}^{t}(\new x,\new x,\new s)-\gamma_{i}^{t}(\new x,x,\new s)\right|= & \left|\|\mu_{i}^{t}(\new x,\new s)\|_{\new{x_{i}}}-\|\mu_{i}^{t}(\new x,\new s)\|_{x_{i}}\right|\nonumber \\
\leq & 2\|\new{x_{i}}-x_{i}\|_{x_{i}}\|\mu_{i}^{t}(\new x,\new s)\|_{x_{i}}\nonumber \\
\leq & 3\|\delta_{x,i}\|_{\ox_{i}}\gamma_{i}^{t}(\new x,x,\new s)=3\alpha_{i}\gamma_{i}^{t}(\new x,x,\new s)\nonumber \\
\leq & 3\alpha_{i}\gamma_{i}^{t}(x,s)+12\alpha\overline{c}_{i}\overline{\eps}w_{i}+3\beta_{i}\label{eq:gamma_diff1}
\end{align}
where we used \cref{lem:phi_properties} on the first inequality, $\new{x_{i}}-x_{i}=\delta_{x,i}$
on the second inequality, the definition of $\alpha_{i}$ on the second
equality, \cref{eq:gamma_upper} and $\alpha_{i}\leq1$ on the last
inequality.

Using \cref{eq:gamma_diff}, we have 
\begin{align}
 & \left|\gamma_{i}^{t}(\new x,x,\new s)-\gamma_{i}^{t}(x,s)+\alpha\overline{c}_{i}\gamma_{i}^{\ot}(\ox,\os)\right|\nonumber \\
\leq & |(1-\alpha\overline{c}_{i})\gamma_{i}^{t}(x,s)-\gamma_{i}^{t}(x,s)+\alpha\overline{c}_{i}\gamma_{i}^{\ot}(\ox,\os)|\nonumber \\
 & +\alpha\overline{c}_{i}\|\mu_{i}^{t}(x,s)-\mu_{i}^{\ot}(\overline{x},\overline{s})\|_{x_{i}}^{*}+\|\eps_{i}^{(\mu)}\|_{x_{i}}^{*}\nonumber \\
\leq & \alpha\overline{c}_{i}|\gamma_{i}^{t}(x,s)-\gamma_{i}^{\ot}(\ox,\os)|+\alpha\overline{c}_{i}\|\mu_{i}^{t}(x,s)-\mu_{i}^{\ot}(\overline{x},\overline{s})\|_{x_{i}}^{*}+\|\eps_{i}^{(\mu)}\|_{x_{i}}^{*}\nonumber \\
\leq & \alpha\overline{c}_{i}(5\overline{\eps}w_{i})+\alpha\overline{c}_{i}(4\overline{\eps}w_{i})+\beta_{i}\nonumber \\
\leq & 9\alpha\overline{c}_{i}\overline{\eps}w_{i}+\beta_{i}\label{eq:gamma_diff2}
\end{align}
where we used \cref{lem:mu_distance}, \cref{lem:mu_change} and $\gamma_{i}^{t}(x,s)\leq w_{i}$
at the second last inequality.

Combining \cref{eq:gamma_diff1} and \cref{eq:gamma_diff2}, we have
\begin{align}
|\eps_{r,i}| & \leq\left|\gamma_{i}^{t}(\new x,x,\new s)-\gamma_{i}^{t}(x,s)+\alpha\overline{c}_{i}\gamma_{i}^{\ot}(\ox,\os)\right|+\left|\gamma_{i}^{t}(\new x,\new x,\new s)-\gamma_{i}^{t}(\new x,x,\new s)\right|\nonumber \\
 & \leq9\alpha\overline{c}_{i}\overline{\eps}w_{i}+\beta_{i}+3\alpha_{i}\gamma_{i}^{t}(x,s)+12\alpha\overline{c}_{i}\overline{\eps}w_{i}+3\beta_{i}\nonumber \\
 & \leq21\alpha\overline{c}_{i}\overline{\eps}w_{i}+3\alpha_{i}\gamma_{i}^{t}(x,s)+4\beta_{i}\label{eq:v_bound}
\end{align}
where we used \cref{lem:mu_distance} at the end.

Now, we bound the $\|\eps_{r}\|_{w^{-1}}$. We first note that 
\begin{align*}
\sum_{i=1}^{m}w_{i}\overline{c}_{i}^{2} & =\frac{\sum_{i=1}^{m}w_{i}\frac{\sinh^{2}(\frac{\lambda}{w_{i}}\gamma_{i}^{\ot}(\ox,\os))}{\gamma_{i}^{\ot}(\ox,\os)^{2}}}{\sum_{j=1}^{m}w_{j}^{-1}\cosh^{2}(\frac{\lambda}{w_{j}}\gamma_{j}^{\ot}(\ox,\os))}\\
 & =\lambda^{2}\frac{\sum_{i=1}^{m}w_{i}^{-1}\frac{w_{i}^{2}}{\lambda^{2}\gamma_{i}^{\ot}(\ox,\os)^{2}}\sinh^{2}(\frac{\lambda}{w_{i}}\gamma_{i}^{\ot}(\ox,\os))}{\sum_{j=1}^{m}w_{j}^{-1}\cosh^{2}(\frac{\lambda}{w_{j}}\gamma_{j}^{\ot}(\ox,\os))}\\
 & \leq\lambda^{2}\frac{\sum_{i=1}^{m}w_{i}^{-1}\cosh^{2}(\frac{\lambda}{w_{i}}\gamma_{i}^{\ot}(\ox,\os))}{\sum_{j=1}^{m}w_{j}^{-1}\cosh^{2}(\frac{\lambda}{w_{j}}\gamma_{j}^{\ot}(\ox,\os))}=\lambda^{2}
\end{align*}
where we used that $\frac{\sinh^{2}(x)}{x^{2}}\leq\cosh^{2}(x)$ for
all $x$ at the second last inequality. Using this and $\sum_{i}w_{i}\alpha_{i}^{2}\leq\frac{9}{8}\alpha^{2}$
(\cref{lem:alpha_i}) into \cref{eq:v_bound}, we have 
\begin{align*}
\sqrt{\sum_{i=1}^{m}w_{i}^{-1}\eps_{r,i}^{2}}\leq & 21\sqrt{2}\alpha\lambda\overline{\eps}+3\max\left(\frac{\gamma_{i}^{t}(x,s)}{w_{i}}\right)\cdot\sqrt{\sum_{i=1}^{m}w_{i}\alpha_{i}^{2}}+4\sqrt{\sum_{i=1}^{m}w_{i}^{-1}\beta_{i}^{2}}\\
\leq & 21\sqrt{2}\alpha\lambda\overline{\eps}+4\max\left(\frac{\gamma_{i}^{t}(x,s)}{w_{i}}\right)\alpha+60\alpha\overline{\eps}
\end{align*}
where we used $\sqrt{\sum_{i=1}^{m}w_{i}\alpha_{i}^{2}}\leq\frac{9}{8}\alpha$
(\cref{lem:alpha_i}) and $\sqrt{\sum_{i=1}^{m}w_{i}^{-1}\beta_{i}^{2}}\leq15\overline{\eps}\alpha$
(\cref{lem:mu_change}) at the end.

\end{proof}

\subsubsection{Bounding the Movement of $\Phi$}

After verifying conditions in \cref{lem:Phi_potential_decrease}, we
are ready to bound the change of $\Phi$ in one step of $(x,s)$. 

\begin{lemma}[Change of $\Phi$ after $(x,s)$ step] \label{lem:change_of_xs}Assume
$\Phi^{t}(x,s)\leq\cosh(\lambda/64)$. We have 
\[
\Phi^{t}(\new x,\new s)\leq\Phi^{t}(x,s)-\frac{\alpha\lambda}{2}\sqrt{\sum_{i=1}^{m}w_{i}^{-1}\cosh^{2}(\frac{\lambda}{w_{i}}\gamma_{i}^{t}(x,s))}+\alpha\lambda\sqrt{\sum_{i}w_{i}^{-1}}.
\]
\end{lemma}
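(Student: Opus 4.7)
The plan is to reduce this lemma directly to \cref{lem:Phi_potential_decrease}, which already does the heavy lifting: it describes how the soft-max potential $\Psi_\lambda$ decreases under a step of the form \cref{eq:delta_r_cond} on its argument. So the main task is to identify $\Phi^t(x,s) = \Psi_\lambda(\gamma^t(x,s))$ and then verify that the induced change $\delta_r := \gamma^t(\new x, \new s) - \gamma^t(x,s)$ fits the template of \cref{eq:delta_r_cond} with $r_i = \gamma_i^t(x,s)$ and $\bar r_i = \gamma_i^{\ot}(\ox,\os)$.

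By definition of $c_i^{\ot}(\ox,\os)$, we have
\[
\alpha \cdot c_i^{\ot}(\ox,\os) \cdot \gamma_i^{\ot}(\ox,\os) \;=\; \frac{\alpha \sinh(\tfrac{\lambda}{w_i}\gamma_i^{\ot}(\ox,\os))}{\sqrt{\sum_j w_j^{-1}\cosh^2(\tfrac{\lambda}{w_j}\gamma_j^{\ot}(\ox,\os))}},
\]
so the leading term of $\delta_r$ produced by \cref{lem:change_of_gamma} is exactly the descent direction required by \cref{lem:Phi_potential_decrease}, and the residual is $\eps_{r,i}$ as defined there. I would then verify the three hypotheses of \cref{lem:Phi_potential_decrease} in turn. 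First, $|r_i - \bar r_i| \leq w_i/(8\lambda)$ follows from \cref{lem:mu_distance} (which gives $5\overline{\eps}w_i$) together with the algorithm's choice $\overline{\eps} = 1/(1440\lambda)$. Second, the step size condition $\alpha \leq 1/(8\lambda)$ follows from $\alpha = \overline{\eps}/2$ and again the value of $\overline{\eps}$.

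The main thing to check is the residual bound $\sqrt{\sum_i w_i^{-1}\eps_{r,i}^2} \leq \alpha/8$. \cref{lem:change_of_gamma} provides
\[
\sqrt{\sum_i w_i^{-1}\eps_{r,i}^2} \;\leq\; 90\overline{\eps}\lambda\alpha + 4\max_i\!\bigl(\gamma_i^t(x,s)/w_i\bigr)\alpha.
\]
Plugging in $\overline{\eps} = 1/(1440\lambda)$ gives $90\overline{\eps}\lambda\alpha = \alpha/16$. For the second term, this is exactly where the hypothesis $\Phi^t(x,s) \leq \cosh(\lambda/64)$ enters: since every summand $\cosh(\lambda \gamma_i^t/w_i)$ is at most $\Phi^t(x,s) \leq \cosh(\lambda/64)$, we get $\gamma_i^t(x,s)/w_i \leq 1/64$ for every $i$, so $4\max_i(\gamma_i^t/w_i)\alpha \leq \alpha/16$. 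Adding the two contributions yields the required $\alpha/8$. Note the hypothesis $\Phi^t(x,s) \leq \cosh(\lambda/64) \leq \cosh(\lambda)$ also justifies invoking \cref{lem:c_bound} upstream in \cref{lem:change_of_gamma}.

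With all hypotheses verified, \cref{lem:Phi_potential_decrease} applied to $r = \gamma^t(x,s)$ and the decomposition of $\delta_r$ above gives the stated inequality verbatim. The only real subtlety is bookkeeping: \cref{lem:change_of_gamma} mixes $\bar{t}$-indexed and $t$-indexed $\gamma$'s (since the ideal descent direction is computed at $(\ox,\os,\ot)$), and one needs to confirm that $\bar r_i = \gamma_i^{\ot}(\ox,\os)$ is the correct "approximation of $r_i$" fed into \cref{lem:Phi_potential_decrease}; this is precisely why \cref{lem:mu_distance} is stated in a form that controls $|\gamma_i^t(x,s) - \gamma_i^{\ot}(\ox,\os)|$ rather than just $|\gamma_i^t(x,s) - \gamma_i^t(\ox,\os)|$. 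Once this identification is made, the proof is essentially a one-line application of \cref{lem:Phi_potential_decrease}.
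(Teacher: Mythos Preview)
Your proposal is correct and follows essentially the same approach as the paper: set $r_i=\gamma_i^t(x,s)$, $\bar r_i=\gamma_i^{\ot}(\ox,\os)$, use \cref{lem:mu_distance} and \cref{lem:change_of_gamma} to verify the hypotheses of \cref{lem:Phi_potential_decrease}, and then apply that lemma directly. Your arithmetic and your identification of where the $\cosh(\lambda/64)$ hypothesis enters (to force $\gamma_i^t/w_i\leq 1/64$) match the paper's proof.
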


\begin{proof}Let $r_{i}=\gamma_{i}^{t}(x,s)$, $\overline{r}_{i}=\gamma_{i}^{\ot}(\ox,\os)$
and $\delta_{r,i}=\gamma_{i}^{t}(\new x,\new s)-\gamma_{i}^{t}(x,s)$.
Now, we verify the conditions in \cref{lem:Phi_potential_decrease}
for $r_{i}$, $\overline{r}_{i}$ and $\delta_{r}$. \cref{lem:mu_distance}
shows that 
\[
|r_{i}-\overline{r}_{i}|\leq5w_{i}\overline{\eps}\leq\frac{w_{i}}{8\lambda}
\]
where we used the assumption $\overline{\eps}\leq\frac{1}{40\lambda}$.
\[
\sqrt{\sum_{i=1}^{m}w_{i}^{-1}\eps_{r,i}^{2}}\leq90\alpha\lambda\overline{\eps}+4\max\left(\frac{\gamma_{i}^{t}(x,s)}{w_{i}}\right)
\]

Next, \cref{lem:change_of_gamma} shows that 
\[
\delta_{r,i}=-\alpha\cdot c_{i}^{\ot}(\ox,\os)\cdot\gamma_{i}^{\ot}(\ox,\os)+\eps_{r,i}
\]
with 
\[
\sqrt{\sum_{i=1}^{m}w_{i}^{-1}\eps_{r,i}^{2}}\leq90\alpha\lambda\overline{\eps}+4\max\left(\frac{\gamma_{i}^{t}(x,s)}{w_{i}}\right)\alpha\leq\frac{90\alpha\lambda}{1440\lambda}+\frac{4}{64}\alpha\leq\frac{\alpha}{8}
\]
where we used $|\gamma_{i}^{t}(x,s)|\leq\frac{w_{i}}{64}$ (due to
$\Phi^{t}(x,s)\leq\cosh(\lambda/64)$), $\overline{\eps}\leq\frac{1}{1440\lambda}$.
Using the formula of $c_{i}^{\ot}(\ox,\os)$, we have 
\[
\delta_{r,i}=-\frac{\alpha\sinh(\frac{\lambda}{w_{i}}\gamma_{i}^{\ot}(\ox,\os))}{\sqrt{\sum_{j=1}^{m}w_{j}^{-1}\cosh^{2}(\frac{\lambda}{w_{j}}\gamma_{j}^{\ot}(\ox,\os))}}+\eps_{r,i}=\frac{-\alpha\sinh(\frac{\lambda}{w_{i}}\overline{r}_{i})}{\sqrt{\sum_{j=1}^{m}w_{j}^{-1}\cosh^{2}(\frac{\lambda}{w_{j}}\overline{r}_{j})}}+\eps_{r,i}
\]
and this exactly satisfies the conditions in \cref{lem:Phi_potential_decrease}.

Now, \cref{lem:Phi_potential_decrease} shows that 
\begin{align*}
\Phi^{t}(\new x,\new s) & \leq\Phi^{t}(x,s)-\frac{\alpha\lambda}{2}\sqrt{\sum_{i=1}^{m}w_{i}^{-1}\cosh^{2}(\frac{\lambda}{w_{i}}\gamma_{i}^{t}(x,s))}+\alpha\lambda\sqrt{\sum_{i}w_{i}^{-1}}.
\end{align*}
\end{proof}Now, we bound the change of $\Phi$ after changing $t$. 

\begin{lemma}[Change of $\Phi$ after $t$ step] \label{lem:change_of_t}
Assume that $\Phi^{t}(x,s)\leq\cosh(\lambda)$. Let $\new t\leftarrow(1-h)t$
for $h\leq\frac{1}{8\lambda\sqrt{\max_{i}\nu_{i}}}$. We have 
\[
\Phi^{\new t}(x,s)\leq\Phi^{t}(x,s)+16h\lambda\sum_{i=1}^{m}\sqrt{\nu_{i}}\cosh(\lambda\gamma_{i}^{\new t}(x,s)/w_{i}).
\]
\end{lemma}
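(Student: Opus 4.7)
The idea is to reduce everything to a perturbation bound on $\gamma_i$ induced by changing $t$ while holding $(x,s)$ fixed, then convert that coordinate-wise bound on $\gamma$ into a bound on $\Psi_\lambda(\gamma)$ via elementary properties of $\cosh$. The main point is that only the $s/t$ part of $\mu_i^t(x,s)$ depends on $t$, so the change is completely explicit.

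First, observe that
\[
\mu_i^{\new t}(x,s)-\mu_i^t(x,s)=\Big(\tfrac{1}{\new t}-\tfrac{1}{t}\Big)s_i=\frac{h}{1-h}\cdot\frac{s_i}{t},
\]
and, using $s_i/t=\mu_i^t(x,s)-w_i\nabla\phi_i(x_i)$ together with the self-concordance estimate $\|\nabla\phi_i(x_i)\|_{x_i}^{*}\leq\sqrt{\nu_i}$ and the hypothesis $\Phi^t(x,s)\leq\cosh(\lambda)$ (which forces $\gamma_i^t(x,s)\leq w_i\leq w_i\sqrt{\nu_i}$ because $\nu_i\geq1$), the triangle inequality gives
\[
\Big\|\tfrac{s_i}{t}\Big\|_{x_i}^{*}\leq\gamma_i^t(x,s)+w_i\sqrt{\nu_i}\leq 2w_i\sqrt{\nu_i}.
\]
Combining these two displays with $h\leq1/2$ and the triangle inequality $|\gamma_i^{\new t}(x,s)-\gamma_i^t(x,s)|\leq\|\mu_i^{\new t}-\mu_i^t\|_{x_i}^{*}$ yields the key perturbation bound
\[
|\gamma_i^{\new t}(x,s)-\gamma_i^t(x,s)|\leq\frac{h}{1-h}\cdot 2w_i\sqrt{\nu_i}\leq 4h\,w_i\sqrt{\nu_i}.
\]

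Next I would translate this into a bound on each summand $\cosh(\lambda\gamma_i/w_i)$ of $\Psi_\lambda(\gamma)$. By the mean value theorem applied to $\cosh$ and the fact that $|\sinh|\leq\cosh$, for any $a,b\in\R$,
\[
|\cosh(a)-\cosh(b)|\leq|a-b|\cdot\max(\cosh(a),\cosh(b)).
\]
Setting $a=\lambda\gamma_i^{\new t}/w_i$ and $b=\lambda\gamma_i^t/w_i$, the perturbation bound above gives $|a-b|\leq 4h\lambda\sqrt{\nu_i}$. Using the elementary inequality $\cosh(u+v)\leq\cosh(u)\,e^{|v|}$ (which follows from $\cosh(u+v)=\cosh u\cosh v+\sinh u\sinh v\leq\cosh(u)(\cosh v+\sinh|v|)=\cosh(u)e^{|v|}$) together with the hypothesis $h\leq\frac{1}{8\lambda\sqrt{\max_i\nu_i}}$, we get $e^{4h\lambda\sqrt{\nu_i}}\leq e^{1/2}\leq 2$, and therefore
\[
\cosh\!\big(\tfrac{\lambda}{w_i}\gamma_i^t(x,s)\big)\leq 2\cosh\!\big(\tfrac{\lambda}{w_i}\gamma_i^{\new t}(x,s)\big).
\]

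Finally, combining the last two displays, each summand changes by at most
\[
\Big|\cosh\!\big(\tfrac{\lambda}{w_i}\gamma_i^{\new t}\big)-\cosh\!\big(\tfrac{\lambda}{w_i}\gamma_i^t\big)\Big|\leq 4h\lambda\sqrt{\nu_i}\cdot 3\cosh\!\big(\tfrac{\lambda}{w_i}\gamma_i^{\new t}(x,s)\big),
\]
and summing over $i\in[m]$ gives $\Phi^{\new t}(x,s)-\Phi^t(x,s)\leq 12h\lambda\sum_{i}\sqrt{\nu_i}\cosh(\lambda\gamma_i^{\new t}/w_i)$, which is stronger than the stated $16h\lambda$ bound. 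I do not expect a real obstacle here; the only delicate point is to make sure the smallness hypothesis on $h$ is used exactly where needed, namely to convert $\cosh$ evaluated at the old $\gamma$ to $\cosh$ at the new $\gamma$ with a harmless multiplicative constant.
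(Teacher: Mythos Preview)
Your proposal is correct and follows essentially the same route as the paper. Both arguments bound $|\gamma_i^{\new t}-\gamma_i^t|\leq 4h\,w_i\sqrt{\nu_i}$ from the explicit formula for $\mu_i^{\new t}-\mu_i^t$ together with $\|\nabla\phi_i(x_i)\|_{x_i}^*\leq\sqrt{\nu_i}$ and $\gamma_i^t\leq w_i$, then push this through the $\cosh$ potential term-by-term; the paper uses its lemma $\cosh(x+y)\leq(1+2y)\cosh(x)$ where you use the mean value theorem and $\cosh(u+v)\leq\cosh(u)e^{|v|}$, which are interchangeable here. (Your constant ``$3$'' in the last display could be taken as $2$, since $\cosh(b)\leq 2\cosh(a)$ already gives $\max(\cosh(a),\cosh(b))\leq 2\cosh(a)$, but the resulting $12h\lambda$ is in any case below the stated $16h\lambda$.)
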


\begin{proof}By definition of $\gamma$, we have 
\begin{align}
\gamma_{i}^{\new t}(x,s) & =\|\frac{s_{i}}{\new t}+w_{i}\nabla\phi_{i}(x_{i})\|_{x_{i}}^{*}=\|\frac{s}{t(1-h)}+w_{i}\nabla\phi_{i}(x_{i})\|_{x_{i}}^{*}\nonumber \\
 & \leq\frac{1}{1-h}\gamma_{i}^{t}(x,s)+\left(\frac{1}{1-h}-1\right)w_{i}\|\nabla\phi_{i}(x_{i})\|_{x_{i}}^{*}\nonumber \\
 & \leq(1+2h)\gamma_{i}^{t}(x,s)+2h\sqrt{\nu_{i}}w_{i}\label{eq:gamma_new_change}
\end{align}
where the last inequality follows by the definition of self-concordance,
$h\leq\frac{1}{8\lambda\sqrt{\max_{i}\nu_{i}}}$ and $\nu_{i}\geq1$
.

For $\Phi^{\new t},$we have 
\[
\Phi^{\new t}(x,s)=\sum_{i=1}^{m}\cosh(\lambda\gamma_{i}^{\new t}/w_{i})\leq\sum_{i=1}^{m}\cosh(\lambda\gamma_{i}^{t}/w_{i}+2h\lambda(\gamma_{i}^{t}/w_{i}+\sqrt{\nu_{i}})).
\]
Since $\Phi^{t}(x,s)\leq\cosh(\lambda)$, we have $\gamma_{i}^{t}/w_{i}\leq1$.
Since we have self-concordance $\nu_{i}\geq1$ , we have 
\begin{align*}
\Phi^{\new t}(x,s) & \leq\sum_{i=1}^{m}\cosh(\lambda\gamma_{i}^{t}/w_{i}+4h\lambda\sqrt{\nu_{i}})\\
 & \leq\Phi^{t}(x,s)+8h\lambda\sum_{i=1}^{m}\sqrt{\nu_{i}}\cosh(\lambda\gamma_{i}^{t}/w_{i})
\end{align*}
where the last inequality follows by \cref{lem:cosh-x-y}.

Similar to the argument in \cref{eq:gamma_new_change}, we have 
\[
\gamma_{i}^{\new t}(x,s)\geq(1+h)\gamma_{i}^{t}(x,s)-2h\sqrt{\nu_{i}}w_{i}.
\]
Hence, we have $\gamma_{i}^{t}(x,s)\leq\gamma_{i}^{\new t}(x,s)+2h\sqrt{\nu_{i}}w_{i}$.
By \cref{lem:cosh-x-y} again, we have 
\begin{equation}
\cosh(\lambda\gamma_{i}^{t}/w_{i})\leq2\cosh(\lambda\gamma_{i}^{\new t}/w_{i}).\label{eq:cosh_bound_change}
\end{equation}
This gives the result. 

\end{proof}

Combining the bound of $\Phi$ under $(x,s)$ change (\cref{lem:change_of_xs})
and the bound of $\Phi$ under $t$ change (\cref{lem:change_of_t}),
we get the bound on $\Phi$ after 1 step. 

\begin{theorem}\label{thm:potential-decrease}Assume $\Phi^{t}(x,s)\leq\cosh(\lambda/64)$.
Then for any $0\leq h\leq\frac{\alpha}{64\sqrt{\sum_{i=1}^{m}w_{i}\nu_{i}}}$,
we have 
\[
\Phi^{\new t}(\new x,\new s)\leq(1-\frac{\alpha\lambda}{8\sqrt{\sum_{i}w_{i}}})\Phi^{t}(x,s)+\alpha\lambda\sqrt{\sum_{i}w_{i}^{-1}}.
\]
In particular, for any $\cosh(\lambda/128)\leq\Phi^{t}(x,s)\leq\cosh(\lambda/64)$,
we have that 
\[
\Phi^{\new t}(\new x,\new s)\leq\Phi^{t}(x,s).
\]

\end{theorem}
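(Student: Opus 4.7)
The plan is to combine \cref{lem:change_of_xs} (the $(x,s)$-step) and \cref{lem:change_of_t} (the $t$-step) to control one inner iteration, which simultaneously advances $(x,s)\to(\new x,\new s)$ and $t\to \new t=(1-h)t$.

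First, I would invoke \cref{lem:change_of_xs}, whose hypothesis $\Phi^t(x,s)\le\cosh(\lambda/64)$ is given, to obtain
\[
\Phi^t(\new x,\new s)\le\Phi^t(x,s)-\tfrac{\alpha\lambda}{2}\,A+\alpha\lambda\sqrt{\textstyle\sum_i w_i^{-1}},\qquad A\defeq\sqrt{\textstyle\sum_i w_i^{-1}\cosh^2(\lambda\gamma_i^t(x,s)/w_i)}.
\]
By Cauchy--Schwarz, $\Phi^t(x,s)=\sum_i \cosh(\lambda\gamma_i^t(x,s)/w_i)\le\sqrt{\sum_i w_i}\cdot A$, so $A\ge \Phi^t(x,s)/\sqrt{\sum_i w_i}$; this is how the reduction in $A$ will translate into a multiplicative reduction in $\Phi$. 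Along the way I also need to certify the hypothesis $\Phi^t(\new x,\new s)\le\cosh(\lambda)$ for the next lemma: the displayed inequality shows $\Phi^t(\new x,\new s)$ grows by at most an additive $\alpha\lambda\sqrt{\sum_i w_i^{-1}}$, which under the algorithm's choice of $\alpha,\lambda,w$ keeps it well below $\cosh(\lambda)$ since we start from $\cosh(\lambda/64)$.

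Next, I would apply \cref{lem:change_of_t} at the pair $(\new x,\new s)$ for the step $t\mapsto\new t=(1-h)t$:
\[
\Phi^{\new t}(\new x,\new s)\le\Phi^t(\new x,\new s)+16h\lambda\sum_i\sqrt{\nu_i}\cosh(\lambda\gamma_i^{\new t}(\new x,\new s)/w_i).
\]
Cauchy--Schwarz with the weights $w_i$ gives $\sum_i\sqrt{\nu_i}\cosh(\cdots)\le\sqrt{\sum_i w_i\nu_i}\cdot A'$, where $A'\defeq\sqrt{\sum_i w_i^{-1}\cosh^2(\lambda\gamma_i^{\new t}(\new x,\new s)/w_i)}$. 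The step-size restriction $h\le\alpha/(64\sqrt{\sum_i w_i\nu_i})$ then collapses the extra factor to $\tfrac{\alpha\lambda}{4}A'$. The remaining task is the comparison $A'\le c\cdot A$ for a small absolute constant $c$: the $(x,s)$-step changes each $\gamma_i/w_i$ by at most $O(\alpha_i)$ by \cref{lem:alpha_i,lem:mu_change,lem:mu_distance}, and the $t$-step satisfies $\cosh(\lambda\gamma_i^t(\cdot)/w_i)\le 2\cosh(\lambda\gamma_i^{\new t}(\cdot)/w_i)$ (as in \cref{eq:cosh_bound_change}); chaining these coordinatewise before taking the square root yields $A'\le 2A$, say. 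Combining,
\[
\Phi^{\new t}(\new x,\new s)\le \Phi^t(x,s)-\tfrac{\alpha\lambda}{2}A+\tfrac{\alpha\lambda}{2}A\cdot\tfrac12+\alpha\lambda\sqrt{\textstyle\sum_i w_i^{-1}}\le \Phi^t(x,s)-\tfrac{\alpha\lambda}{8}A+\alpha\lambda\sqrt{\textstyle\sum_i w_i^{-1}},
\]
and substituting $A\ge\Phi^t(x,s)/\sqrt{\sum_i w_i}$ produces the main inequality. The main obstacle is precisely this constant-factor comparison between $A'$ and $A$: each individual step is well-controlled, but chaining the two perturbations inside the Cauchy--Schwarz root-sum-square is the only place where care is needed to keep the overall constant below $2$, which is what drives the coefficient $1/8$ in the statement.

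For the ``in particular'' clause, I would use the algorithm's choice $\lambda=64\log(256m\sum_i w_i)$, which gives
\[
\cosh(\lambda/128)\ge\tfrac12 e^{\lambda/128}=\tfrac12\sqrt{256m\textstyle\sum_i w_i}=8\sqrt{m\textstyle\sum_i w_i}\ge 8\sqrt{(\textstyle\sum_i w_i)(\sum_i w_i^{-1})},
\]
where the last step uses $w_i\ge 1$, hence $\sum_i w_i^{-1}\le m$. Therefore, whenever $\Phi^t(x,s)\ge\cosh(\lambda/128)$, the reduction term $\tfrac{\alpha\lambda}{8\sqrt{\sum_i w_i}}\Phi^t(x,s)$ already dominates the additive error $\alpha\lambda\sqrt{\sum_i w_i^{-1}}$, and the main inequality reduces to $\Phi^{\new t}(\new x,\new s)\le\Phi^t(x,s)$.
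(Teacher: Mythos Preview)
Your argument is essentially correct and parallels the paper's, with one structural difference: you apply \cref{lem:change_of_xs} first (at time $t$) and then \cref{lem:change_of_t} at the new point $(\new x,\new s)$, whereas the paper reverses the order, first decreasing $t$ at $(x,s)$ and then taking the $(x,s)$-step at $\new t$. The paper's ordering is slightly cleaner because both the decrease term $-\tfrac{\alpha\lambda}{2}\sqrt{\sum_i w_i^{-1}\cosh^2(\lambda\gamma_i^{\new t}(x,s)/w_i)}$ and the increase term $16h\lambda\sum_i\sqrt{\nu_i}\cosh(\lambda\gamma_i^{\new t}(x,s)/w_i)$ are measured at the \emph{same} point $(x,s,\new t)$, so the Cauchy--Schwarz comparison is immediate; only at the very end is \cref{eq:cosh_bound_change} invoked once to pass back to $\gamma_i^t$. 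Your ordering forces a comparison of $A'$ (at $(\new x,\new s,\new t)$) against $A$ (at $(x,s,t)$), which requires chaining two perturbations rather than one. This does work, but be careful with the citation: \cref{eq:cosh_bound_change} as written gives the opposite direction ($\cosh$ at $t$ bounded by $\cosh$ at $\new t$); you actually need the companion bound coming from \cref{eq:gamma_new_change}, which yields $\cosh(\lambda\gamma_i^{\new t}/w_i)\le 2\cosh(\lambda\gamma_i^t/w_i)$. The ``in particular'' computation is identical to the paper's.
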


\begin{proof}By \cref{lem:change_of_t} and \cref{lem:change_of_xs},
we have 
\begin{align*}
 & \Phi^{\new t}(\new x,\new s)\\
\leq & \Phi^{\new t}(x,s)-\frac{\alpha\lambda}{2}\sqrt{\sum_{i=1}^{m}w_{i}^{-1}\cosh^{2}(\frac{\lambda}{w_{i}}\gamma_{i}^{\new t}(x,s))}+\alpha\lambda\sqrt{\sum_{i}w_{i}^{-1}}\\
\leq & \Phi^{t}(x,s)+16h\lambda\sum_{i=1}^{m}\sqrt{\nu_{i}}\cosh(\frac{\lambda}{w_{i}}\gamma_{i}^{\new t}(x,s))-\frac{\alpha\lambda}{2}\sqrt{\sum_{i=1}^{m}w_{i}^{-1}\cosh^{2}(\frac{\lambda}{w_{i}}\gamma_{i}^{\new t}(x,s))}+\alpha\lambda\sqrt{\sum_{i}w_{i}^{-1}}.
\end{align*}
By Cauchy Schwarz inequality, we have
\begin{align*}
\frac{\alpha}{4}\sqrt{\sum_{i=1}^{m}w_{i}^{-1}\cosh^{2}(\frac{\lambda}{w_{i}}\gamma_{i}^{\new t}(x,s))}\geq & \frac{\alpha}{4}\frac{\sum_{i=1}^{m}\sqrt{\nu_{i}}\cosh(\frac{\lambda}{w_{i}}\gamma_{i}^{\new t}(x,s))}{\sqrt{\sum_{i=1}^{m}w_{i}\nu_{i}}}\\
\geq & 16h\sum_{i=1}^{m}\sqrt{\nu_{i}}\cosh(\frac{\lambda}{w_{i}}\gamma_{i}^{\new t}(x,s)).
\end{align*}
Hence, we have that
\begin{align*}
\Phi^{\new t}(\new x,\new s)\leq & \Phi^{t}(x,s)-\frac{\alpha\lambda}{4}\sqrt{\sum_{i=1}^{m}w_{i}^{-1}\cosh^{2}(\frac{\lambda}{w_{i}}\gamma_{i}^{\new t}(x,s))}+\alpha\lambda\sqrt{\sum_{i}w_{i}^{-1}}\\
\leq & \Phi^{t}(x,s)-\frac{\alpha\lambda}{8}\sqrt{\sum_{i=1}^{m}w_{i}^{-1}\cosh^{2}(\frac{\lambda}{w_{i}}\gamma_{i}^{t}(x,s))}+\alpha\lambda\sqrt{\sum_{i}w_{i}^{-1}}\\
\leq & \Phi^{t}(x,s)-\frac{\alpha\lambda}{8}\frac{\Phi^{t}(x,s)}{\sqrt{\sum_{i}w_{i}}}+\alpha\lambda\sqrt{\sum_{i}w_{i}^{-1}}
\end{align*}
where we used \cref{eq:cosh_bound_change} at the second inequality.

If $\Phi^{t}(x,s)\geq\cosh(\lambda/128)$, we have
\begin{align*}
\frac{\Phi^{t}(x,s)}{8\sqrt{\sum_{i}w_{i}}} & \geq\frac{\cosh(\lambda/128)}{8\sqrt{\sum_{i}w_{i}}}=\frac{\exp(\lambda/128)}{16\sqrt{\sum_{i}w_{i}}}\\
 & =\frac{\exp(64\log(256m\sum_{i}w_{i})/128)}{16\sqrt{\sum_{i}w_{i}}}\\
 & =\frac{16\sqrt{m\sum w_{i}}}{16\sqrt{\sum_{i}w_{i}}}=\sqrt{m}\geq\sqrt{\sum_{i}w_{i}^{-1}}.
\end{align*}
Hence, we have $\Phi^{\new t}(\new x,\new s)\leq\Phi^{t}(x,s)$.

\end{proof}

\subsection{Initial Point Reduction}

\label{sec:initial-point-reduction}

Since \cref{alg:IPM_framework} requires a point on the central path,
we modify the convex program to make it happen. To satisfy the constraint
$x\in K$, we start the algorithm by solving $\min_{x\in K}c^{\top}x+t\phi(x)$
for some parameter $t$. Since $x$ may not satisfy the constraint
$Ax=b$, we write $\new x=x^{(1)}+x^{(2)}-x^{(3)}$ where $x^{(1)}$
acts as the original variable and $x^{(2)},x^{(3)}\in\R_{\geq0}^{n}$
are the extra variables. We put a large cost vector on $x^{(2)}$
and $x^{(3)}$ to ensure the solution is roughly the same. The proof
shows that if we optimize this new program well enough, we will have
$\new x=x^{(1)}+x^{(2)}-x^{(3)}\in K$ and hence $\new x$ gives a
starting point of the original program. The precise formulation of
the modified linear program is as follows:

\begin{definition}\label{def:IPM_interior_modified}Given a convex
program $\min_{Ax=b,x\in K}c^{\top}x$ with inner radius $r$, outer
radius $R$ and Lipschitz constant $L$. For any $t$, we define the
modified convex program by
\[
\min_{(x^{(1)},x^{(2)},x^{(3)})\in\mathcal{P}_{t}}\left\langle c^{(1)},x^{(1)}\right\rangle +\left\langle c^{(2)},x^{(2)}\right\rangle +\left\langle c^{(3)},x^{(3)}\right\rangle 
\]
where $\mathcal{P}_{t}=\{x^{(1)}\in K,(x^{(2)},x^{(3)})\in\R_{\geq0}^{2n}:A(x^{(1)}+x^{(2)}-x^{(3)})=b\}$,
$c^{(1)}=c$, $c^{(2)}=\frac{t}{3R+x_{\circ}-x_{c}}$, $c^{(3)}=\frac{t}{3R}$,
$x_{c}=\arg\min_{x\in K}c^{\top}x+t\phi(x)$ and $x_{\circ}=\arg\min_{Ax=b}\|x-x_{c}\|_{2}$.
We define the corresponding dual set by
\[
\mathcal{D}_{t}=\{s^{(1)}\in K^{*},(s^{(2)},s^{(3)})\in\R_{\geq0}^{2n}:A^{\top}y+s^{(1)}=c^{(1)},A^{\top}y+s^{(2)}=c^{(2)},-A^{\top}y+s^{(3)}=c^{(3)}\text{ for }y\in\R^{d}\}.
\]
We define the corresponding central path problem 
\begin{equation}
\min_{(x^{(1)},x^{(2)},x^{(3)})\in\mathcal{P}_{t}}f_{t}(x^{(1)},x^{(2)},x^{(3)})\label{eq:IPM_interior_ft}
\end{equation}
where $f_{t}(x^{(1)},x^{(2)},x^{(3)})=\left\langle c^{(1)},x^{(1)}\right\rangle +\left\langle c^{(2)},x^{(2)}\right\rangle +\left\langle c^{(3)},x^{(3)}\right\rangle +t\phi(x^{(1)})-t\sum_{i=1}^{n}\log x_{i}^{(2)}-t\sum_{i=1}^{n}\log x_{i}^{(3)}$.

\end{definition}

The main result about the modified program is the following.

\begin{theorem}\label{thm:IPM_interior}Given a convex program $\min_{Ax=b,x\in K}c^{\top}x$
with inner radius $r$, outer radius $R$ and Lipschitz constant $L$.
For any $0\leq\delta\leq\frac{1}{2}$, the modified linear program
(\cref{def:IPM_interior_modified}) with $t\geq2^{16}(n+\kappa)^{5}\cdot\frac{LR}{\delta}\cdot\frac{R}{r}$
has the following properties:
\begin{itemize}
\item The point $(x_{c},3R+x_{\circ}-x_{c},3R)$ is the minimizer of \cref{eq:IPM_interior_ft}.
The corresponding $s$ variables are $(-t\nabla\phi(x_{c}),\frac{t}{3R+x_{\circ}-x_{c}},\frac{t}{3R})$.
\item Given any primal $(x^{(1)},x^{(2)},x^{(3)})\in\mathcal{P}_{t}$ and
dual $(s^{(1)},s^{(2)},s^{(3)})\in\mathcal{D}_{t}$ that approximately
minimizes $f_{t'}$ at $t'=LR$ as promised by \cref{alg:IPM_framework}:
\begin{align}
\|s_{i}^{(1)}/t'+w_{i}\nabla\phi_{i}(x_{i}^{(1)})\|_{x_{i}^{(1)}}^{*} & \leq\frac{w_{i}}{16}\text{ for all }i\in[m],\label{eq:modified_LP_apx_xs}\\
x_{i}^{(j)}s_{i}^{(j)} & \in[1\pm\frac{1}{16}]t'\text{ for all }i\in[n],j\in\{2,3\}.\nonumber 
\end{align}
Let $\new x=x^{(1)}+x^{(2)}-x^{(3)}$ and $\new s=s^{(1)}$, then
we have that $A\new x=b$, $\new x\in K$, $A^{\top}y+\new s=c$ for
some $y$ and 
\[
\|\new s_{i}/t'+w_{i}\nabla\phi_{i}(\new x_{i})\|_{\new x}^{*}\leq\|s_{i}^{(1)}/t'+w_{i}\nabla\phi_{i}(x_{i}^{(1)})\|_{x_{i}^{(1)}}^{*}+\delta.
\]
\end{itemize}
\end{theorem}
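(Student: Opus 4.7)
The proof splits into two self-contained parts, one per bullet. The first is a direct verification of the first-order optimality conditions for \cref{eq:IPM_interior_ft}: setting the multiplier on $A(x^{(1)}+x^{(2)}-x^{(3)})=b$ to $y=0$, stationarity in $x^{(1)}$ reduces to $c+t\nabla\phi(x_c)=0$, which is the defining property of $x_c$, while stationarity in $x^{(2)}$ and $x^{(3)}$ reduces respectively to the definitional equalities $c^{(2)}=t/(3R+x_\circ-x_c)$ and $c^{(3)}=t/(3R)$. Primal feasibility is $Ax_\circ=b$ by construction, and $3R+x_\circ-x_c>0$ coordinate-wise follows from the outer-radius bound $\|x_\circ-x_c\|_\infty\leq2R$. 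Reading off $s^{(j)}=c^{(j)}\mp A^\top y$ with $y=0$ yields the claimed dual variables.

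For the second bullet, $A\new x=b$ and $A^\top y+\new s=c$ are immediate from membership in $\mathcal{P}_{t'}$ and $\mathcal{D}_{t'}$ (with the same $y$). The substantive task is to show $\new x\in K$ and bound the centrality of $(\new x,\new s)$. Both follow once we establish that $\Delta := x^{(2)}-x^{(3)}$ is tiny compared to $r$. The plan is to invoke weak duality for the modified program at $t'=LR$: the primal point $(z,0,0)$ is feasible with value at most $c^\top z\leq LR$, and standard self-concordance theory implies that at an approximate $t'$-center the duality gap is $O((\kappa+2n)t')$, a direct consequence of the near-complementarity encoded in \cref{eq:modified_LP_apx_xs}. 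Combining these,
\[
c^{(2)\top}x^{(2)}+c^{(3)\top}x^{(3)}\leq L\|z-x^{(1)}\|_2+O((n+\kappa)t')=O((n+\kappa)LR).
\]
Since $c^{(2)}_i,c^{(3)}_i\geq t/(4R)$, this yields $\|x^{(2)}\|_\infty+\|x^{(3)}\|_\infty\leq O((n+\kappa)LR^2/t)$, which by the choice $t=2^{16}(n+\kappa)^5LR^2/(\delta r)$ is at most $\delta r/(n+\kappa)^4$.

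With $\|\Delta\|_\infty\ll r$ in hand, $\new x=x^{(1)}+\Delta\in K$ follows because the unit Dikin ball at $x^{(1)}$ lies inside $K$, once we verify $\|\Delta\|_{x^{(1)}}<1$. For the gradient bound, we expand
\[
\nabla\phi_i(\new x_i)-\nabla\phi_i(x^{(1)}_i)=\int_0^1\nabla^2\phi_i(x^{(1)}_i+u\Delta_i)\,\Delta_i\,\mathrm{d}u,
\]
and apply \cref{lem:phi_properties} to control each Hessian factor by a multiplicative $(1-\|\Delta_i\|_{x^{(1)}_i})^{-2}$; the integrand then contributes at most $O(\|\Delta_i\|_{x^{(1)}_i})$ to the stated $\delta$-slack in the $x^{(1)}$-norm. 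A final $(1\pm O(\|\Delta_i\|_{x^{(1)}_i}))$ factor, again from \cref{lem:phi_properties}, handles the change of local metric from $x^{(1)}$ to $\new x$.

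The principal obstacle will be upgrading the coarse bound $\|\Delta\|_\infty\leq\delta r/\mathrm{poly}(n,\kappa)$ to the local-norm bound $\|\Delta\|_{x^{(1)}}\ll\delta$. This requires quantitative control on how degenerate $\nabla^2\phi(x^{(1)})$ can become, i.e.\ on the distance from $x^{(1)}$ to $\partial K$. That distance is in turn controlled by the dual centrality \cref{eq:modified_LP_apx_xs}, which via self-concordance caps $\|\nabla\phi_i(x^{(1)}_i)\|^*_{x^{(1)}_i}\leq\sqrt{\nu_i}+1/16$ and therefore rules out $x^{(1)}$ being too close to the boundary relative to the inner scale. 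The enormous $(n+\kappa)^5/r$ factor built into the choice of $t$ is precisely what makes this comparison go through with room to spare.
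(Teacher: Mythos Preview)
Your treatment of the first bullet is fine and matches the paper's \cref{lem:ipm_interior_modified}. Your weak-duality route to bounding $\|x^{(2)}\|_1+\|x^{(3)}\|_1$ is also a legitimate alternative to the paper's directional-derivative computation \cref{eq:LP_exact_3_term}: invoking \cref{lem:erm-error} on the modified program gives the same $O((n+\kappa)LR^2/t)$ control that the paper obtains in \cref{lem:LP_exact_distance_OPT}.

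The genuine gap is the step you yourself flag as the principal obstacle, and your proposed resolution does not work. The inequality $\|\nabla\phi_i(x^{(1)}_i)\|^*_{x^{(1)}_i}\leq\sqrt{\nu_i}$ is the \emph{defining} property of a $\nu_i$-self-concordant barrier and holds at every interior point of $K_i$, including points arbitrarily close to $\partial K_i$. As $x\to\partial K$, both $\nabla\phi(x)$ and $\nabla^2\phi(x)$ blow up, but in a coordinated way that keeps $\|\nabla\phi(x)\|_x^*$ bounded; so neither this inequality nor the dual-centrality condition \cref{eq:modified_LP_apx_xs} gives any control on the Euclidean distance $\eta$ from $x^{(1)}$ to the boundary, and without an $\eta$-bound you cannot convert $\|\Delta\|_2\ll r$ into $\|\Delta\|_{x^{(1)}}\ll 1$.

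The paper obtains the missing $\eta$-bound through an entirely different mechanism that your outline does not touch. It first observes (\cref{lem:ipm_interior_modified_KKT}) that $(x^{(1)},x^{(2)},x^{(3)})$ exactly minimizes a perturbed objective $g$ over $\mathcal{P}_t$, and then evaluates the directional derivative of $g$ at this minimizer along the segment towards the lifted inner-radius witness $(z,\tfrac{t'}{t}R,\tfrac{t'}{t}R)$, which must be nonnegative. The one-dimensional barrier estimate \cref{lem:barrier_distance} shows that on any line through $x^{(1)}$ and $z$, if $x^{(1)}$ sits much closer to the boundary than $z$ (which is at distance $\geq r$), then $\nabla\phi(x^{(1)})^\top(z-x^{(1)})$ contributes a large \emph{negative} term of order $-r/\eta$. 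Balancing this against the bounded remaining terms in \cref{eq:LP_exact_3_term} forces $\eta\geq r/(96(n+\kappa^2))$. This is precisely where the inner-radius witness $z$ enters the argument, and it is not recoverable from weak duality or centrality alone.
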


\begin{proof}The proof is separated into \cref{lem:ipm_interior_modified}
and \cref{lemma:IPM_center_original}\end{proof}

First, we prove the first conclusion in \cref{thm:IPM_interior}.

\begin{lemma}\label{lem:ipm_interior_modified}The point $x\defeq(x_{c},3R+x_{\circ}-x_{c},3R)$
is the minimizer of $f_{t}$ over $\mathcal{P}_{t}$ (\cref{eq:IPM_interior_ft}).
The corresponding $s$ variables are $(-t\nabla\phi(x_{c}),\frac{t}{3R+x_{\circ}-x_{c}},\frac{t}{3R})$.

\end{lemma}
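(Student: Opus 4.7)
The strategy is to verify that the proposed point $\hat{x} = (x_c,\, 3R+x_\circ-x_c,\, 3R)$ satisfies the KKT conditions of the convex program $\min_{(x^{(1)},x^{(2)},x^{(3)})\in\mathcal{P}_t} f_t(x^{(1)},x^{(2)},x^{(3)})$, and then invoke convexity of $f_t$ to conclude optimality. Since $\phi$ is a self-concordant barrier on $K$ and $-\sum_i \log x_i^{(j)}$ is the standard log barrier on $\mathbb{R}^n_{\geq 0}$, the function $f_t$ is strictly convex on the interior of the domain, so a KKT point there is the unique minimizer.

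First I would verify primal feasibility of $\hat{x}$. The membership $x_c \in \operatorname{int}(K)$ is immediate from the definition $x_c = \arg\min_{x \in K} c^\top x + t\phi(x)$ and the fact that $\phi$ blows up on $\partial K$. The affine constraint holds because $A(x_c + (3R + x_\circ - x_c) - 3R) = A x_\circ = b$ by definition of $x_\circ$. The nontrivial part is the strict positivity of the $x^{(2)}$ and $x^{(3)}$ coordinates. For this I use the radius hypotheses: by the inner radius condition, there exists $z$ with $Az = b$ and $z \in K$; since $x_\circ$ is the orthogonal projection of $x_c$ onto $\{x : Ax = b\}$, we have $\|x_\circ - x_c\|_2 \leq \|z - x_c\|_2 \leq 2R$ using the outer radius bound $K \subset B(\cdot, R)$. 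Coordinate-wise this gives $|(x_\circ - x_c)_i| \leq 2R$, hence $3R + (x_\circ - x_c)_i \geq R > 0$, so $\hat{x}^{(2)}, \hat{x}^{(3)} > 0$.

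Next I would verify stationarity by computing the partial gradients of $f_t$ and matching them to the multiplier structure encoded in $\mathcal{D}_t$. Writing the Lagrangian with multiplier $y \in \mathbb{R}^d$ for the constraint $A(x^{(1)}+x^{(2)}-x^{(3)}) = b$, the KKT equations at an interior point are
\begin{align*}
c + t\nabla\phi(x^{(1)}) &= A^\top y, \\
c^{(2)} - t\, (x^{(2)})^{-1} &= A^\top y, \\
c^{(3)} - t\, (x^{(3)})^{-1} &= -A^\top y,
\end{align*}
where division is componentwise. The optimality of $x_c$ for $\min_{x \in K} c^\top x + t\phi(x)$ gives exactly $c + t\nabla\phi(x_c) = 0$, so taking $y = 0$ satisfies the first equation. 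Plugging $x^{(2)} = 3R + x_\circ - x_c$ and $c^{(2)} = \frac{t}{3R+x_\circ-x_c}$, and $x^{(3)} = 3R$ and $c^{(3)} = \frac{t}{3R}$, the remaining two equations are satisfied identically. Reading off the dual $s$ values from the definition of $\mathcal{D}_t$ (with $y = 0$) yields $s^{(1)} = c^{(1)} - A^\top y = -t\nabla\phi(x_c)$, $s^{(2)} = t/x^{(2)} = \frac{t}{3R+x_\circ-x_c}$, and $s^{(3)} = t/x^{(3)} = \frac{t}{3R}$, matching the claim.

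No step here is a serious obstacle; the only subtlety is the $\ell_\infty$ bound on $x_\circ - x_c$, which is why the offset $3R$ (rather than a smaller constant times $R$) is chosen. The convexity of $f_t$ on the open domain $\operatorname{int}(K) \times \mathbb{R}^n_{>0} \times \mathbb{R}^n_{>0}$ together with the verified KKT conditions then gives optimality, completing the proof.
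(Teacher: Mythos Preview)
Your proposal is correct and essentially the same as the paper's proof. The paper phrases the optimality step slightly more directly---it observes that $\nabla f_t(\hat x)=0$ so $\hat x$ is the unconstrained minimizer over $\mathbb{R}^{3n}$, hence a fortiori the minimizer over $\mathcal{P}_t$---but this is exactly your KKT computation with $y=0$, and the feasibility check (in particular the bound $\|x_\circ-x_c\|_2\le 2R$ to ensure $x^{(2)}>0$) is identical.
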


\begin{proof}

We will show that $x\in\mathcal{P}_{t}$ and that it minimizes $f_{t}$
over $\R^{3n}$, not just $\mathcal{P}_{t}$.

For the set constraints, we note that $x^{(1)}=x_{c}\in K$ by the
definition of $x_{c}$ and $x^{(3)}=3R\geq0$ by the definition. For
$x^{(2)}$, we note that $z\in K$ with $Az=b$ and hence $\|x_{\circ}-x_{c}\|_{2}\leq\|z-x_{c}\|_{2}\leq2R$
(since $K$ has radius $R$). Hence, $x_{i}^{(2)}\geq R$ for all
$i\in[n]$. Hence, $(x^{(1)},x^{(2)},x^{(3)})\in\mathcal{P}_{t}$.

For the optimality, we note that
\begin{align*}
\nabla_{x^{(1)}}f_{t}(x) & =c^{(1)}+t\nabla\phi(x^{(1)})\\
 & =c+t\nabla\phi(x_{c})=0
\end{align*}
where we used that $x_{c}=\arg\min_{x\in K}c^{\top}x+t\phi(x)$. We
note that
\[
\nabla_{x^{(2)}}f_{t}(x)=c^{(2)}-\frac{t}{x^{(2)}}=0
\]
and similarly $\nabla_{x^{(3)}}f_{t}(x)=0$. Hence $x$ is the minimizer
of $f_{t}$.

\end{proof}

Next, we show that the minimizer of $f_{t'}(x)$ for $t'=LR$ is far
from the boundary of $K$ for $x^{(1)}$ and has small $x^{(2)}$
and $x^{(3)}$. The proof for both involves the same idea: use the
optimality condition of $f_{t'}$. Throughout the rest of the section,
we are given $(x^{(1)},x^{(2)},x^{(3)})\in\mathcal{P}_{t}$ and $(s^{(1)},s^{(2)},s^{(3)})\in\mathcal{D}_{t}$
satisfying \cref{eq:modified_LP_apx_xs}. The following lemma shows
that $(x^{(1)},x^{(2)},x^{(3)})$ is the minimizer of some function
$g$ and we use it to prove the properties of $x$.

\begin{lemma}\label{lem:ipm_interior_modified_KKT}$(x^{(1)},x^{(2)},x^{(3)})$
is the minimizer of the function
\[
g(x^{(1)},x^{(2)},x^{(3)})\defeq\left\langle \widetilde{c},x^{(1)}\right\rangle +\left\langle c^{(2)},x^{(2)}\right\rangle +\left\langle c^{(3)},x^{(3)}\right\rangle +t'\phi(x^{(1)})-\sum_{i=1}^{n}\mu_{i}^{(2)}\log x_{i}^{(2)}-\sum_{i=1}^{n}\mu_{i}^{(3)}\log x_{i}^{(3)}
\]
over the domain $\mathcal{P}_{t}$ for some $\widetilde{c}=c^{(1)}-t'(s^{(1)}/t'+\nabla\phi(x^{(1)}))$,
$\frac{15}{16}t'\leq\mu_{i}^{(2)}\leq\frac{17}{16}t'$, $\frac{15}{16}t'\leq\mu_{i}^{(3)}\leq\frac{17}{16}t'$

\end{lemma}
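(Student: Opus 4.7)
The plan is to verify that the tuple $(x^{(1)}, x^{(2)}, x^{(3)})$ satisfies the first-order optimality (KKT) conditions for minimizing $g$ over $\mathcal{P}_t$, and then conclude the claim by convexity. First I would define $\mu_i^{(j)} \defeq x_i^{(j)} s_i^{(j)}$ for $j \in \{2,3\}$. The hypothesis (\ref{eq:modified_LP_apx_xs}) that $x_i^{(j)} s_i^{(j)} \in [1 \pm 1/16]\, t'$ immediately yields the claimed bound $\tfrac{15}{16} t' \leq \mu_i^{(j)} \leq \tfrac{17}{16} t'$, so no additional work is needed to bound these multipliers.

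Next, observe that the only non-barrier constraint defining $\mathcal{P}_t$ is the affine constraint $A(x^{(1)} + x^{(2)} - x^{(3)}) = b$: the memberships $x^{(1)} \in K$ and $x^{(j)} \geq \boldsymbol{0}$ are enforced softly by the barrier terms in $g$, which blow up at the respective boundaries. Therefore KKT reduces to producing a multiplier $y \in \R^d$ such that $\nabla_{x^{(1)}} g = A^\top y$, $\nabla_{x^{(2)}} g = A^\top y$, and $\nabla_{x^{(3)}} g = -A^\top y$. I would take $y$ to be the vector guaranteed by $(s^{(1)}, s^{(2)}, s^{(3)}) \in \mathcal{D}_t$, which satisfies $A^\top y + s^{(1)} = c^{(1)}$, $A^\top y + s^{(2)} = c^{(2)}$, and $-A^\top y + s^{(3)} = c^{(3)}$. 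Substituting the prescribed $\widetilde{c} = c^{(1)} - s^{(1)} - t' \nabla \phi(x^{(1)})$ gives $\nabla_{x^{(1)}} g = \widetilde{c} + t' \nabla \phi(x^{(1)}) = c^{(1)} - s^{(1)} = A^\top y$. For the other two blocks, $\nabla_{x^{(2)}} g = c^{(2)} - \mu^{(2)}/x^{(2)} = c^{(2)} - s^{(2)} = A^\top y$ by the definition of $\mu^{(2)}$, and analogously $\nabla_{x^{(3)}} g = c^{(3)} - \mu^{(3)}/x^{(3)} = c^{(3)} - s^{(3)} = -A^\top y$.

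Finally, $g$ is strictly convex on the relative interior of $\mathcal{P}_t$, being the sum of a linear term, $t' \phi$ (convex on $K$), and the strictly convex terms $-\mu_i^{(j)} \log x_i^{(j)}$ on $\R_{>0}$; moreover it blows up on $\partial \mathcal{P}_t$. The point $(x^{(1)}, x^{(2)}, x^{(3)})$ lies in this relative interior, since $s^{(j)} > 0$ with $x_i^{(j)} s_i^{(j)} \approx t'$ forces $x_i^{(j)} > 0$ for $j \in \{2,3\}$, and the finite value of $\|s_i^{(1)}/t' + w_i \nabla \phi_i(x_i^{(1)})\|_{x_i^{(1)}}^*$ forces $x^{(1)}$ into the interior of $K$. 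Hence the KKT point we exhibited is the unique minimizer of $g$ over $\mathcal{P}_t$.

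The argument has no genuine obstacle; the only subtlety is bookkeeping the sign conventions, in particular the flipped sign $-A^\top y$ in the equation for $s^{(3)}$, and recognizing that the stated formula for $\widetilde{c}$ is engineered precisely to absorb the discrepancy between $-t' w_i \nabla \phi_i(x_i^{(1)})$ (where an exact central point would place $s^{(1)}$) and the actual $s^{(1)}$ guaranteed only by the approximate centrality bound in (\ref{eq:modified_LP_apx_xs}).
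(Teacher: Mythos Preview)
Your proposal is correct and follows essentially the same approach as the paper: define $\mu_i^{(j)} = x_i^{(j)} s_i^{(j)}$, verify the stationarity conditions $\nabla_{x^{(1)}} g = A^\top y$, $\nabla_{x^{(2)}} g = A^\top y$, $\nabla_{x^{(3)}} g = -A^\top y$ using the dual vector $y$ from $\mathcal{D}_t$, and conclude optimality. The paper's proof is terser and leaves the convexity/interiority verification implicit, whereas you spell these out, but the argument is the same.
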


\begin{proof}

Let $\mu^{(2)}=x^{(2)}s^{(2)}$ and $\mu^{(3)}=x^{(3)}s^{(3)}$. By
the definition of $\mathcal{P}_{t}\times\mathcal{D}_{t}$, we have
that
\begin{align*}
\nabla_{x^{(2)}}g(x) & =c^{(2)}-\frac{\mu^{(2)}}{x^{(2)}}=c^{(2)}-s^{(2)}=A^{\top}y,\\
\nabla_{x^{(3)}}g(x) & =c^{(3)}-\frac{\mu^{(3)}}{x^{(3)}}=c^{(3)}-s^{(3)}=-A^{\top}y
\end{align*}
for some $y$. For the gradient with respect to $x^{(1)}$, we note
that
\[
\nabla_{x^{(1)}}g(x)=\widetilde{c}+t'\nabla\phi(x^{(1)})=c^{(1)}-s^{(1)}=A^{\top}y.
\]
This shows that $x$ satisfies the optimality condition for $g$,
namely $\nabla g(x)=[A,-A,A]^{\top}y$.

\end{proof}

The gradient of $g$ is a bit complicated. We avoid it by considering
the directional derivative at $x$ on the direction ``$z-x$'' for
some $z\in K$ promised by the definition of inner radius. Since our
domain is in $\mathcal{P}_{t}\subset\R^{3n}$, we need to lift $z$
to higher dimension. Now, we define the point
\begin{align*}
z^{(1)}= & z,\\
z^{(2)}= & z^{(3)}=\frac{t'}{t}R.
\end{align*}
By construction, we have that $z\in\mathcal{P}_{t}$. Now, we define
the path $p(\beta)=(1-\beta)\cdot(x^{(1)},x^{(2)},x^{(3)})+\beta\cdot(z^{(1)},z^{(2)},z^{(3)})$.
Since $p(0)$ minimizes $g$, we have that $\frac{d}{d\beta}g(p(\beta))|_{\beta=0}\geq0$.
In particular, we have
\begin{align}
0\leq & \frac{d}{d\beta}g(p(\beta))|_{\beta=0}\nonumber \\
= & (\widetilde{c}+t'\nabla\phi(x^{(1)}))^{\top}(z^{(1)}-x^{(1)})\label{eq:LP_exact_3_term}\\
 & +\sum_{i=1}^{n}(c_{i}^{(2)}-\frac{\mu_{i}^{(2)}}{x_{i}^{(2)}})(z_{i}^{(2)}-x_{i}^{(2)})+\sum_{i=1}^{n}(c_{i}^{(3)}-\frac{\mu_{i}^{(3)}}{x_{i}^{(3)}})(z_{i}^{(3)}-x_{i}^{(3)}).\nonumber 
\end{align}
Now, we bound the terms one by one. To bound the first term in \eqref{eq:LP_exact_3_term},
we need following lemmas relating the self-concordance barrier and
the distance to the boundary.

\begin{lemma}[{\cite[Theorem 4.1.6, Theorem 4.2.6]{Nesterov1998}}]\label{lem:phi_properties_2}Given
a $\nu$-self-concordant barrier $\phi$. For any $x,y\in\dim\phi$
such that $\nabla\phi(x)^{\top}(y-x)\geq0$, we have $\|y-x\|_{x}\leq\nu+2\sqrt{\nu}$.
In particular, for $x^{*}=\argmin_{x}\phi(x)$, we have
\[
\{x:\|x-x^{*}\|_{x^{*}}\leq1\}\subset\dom\phi\subset\{x:\|x-x^{*}\|_{x^{*}}\leq\nu+2\sqrt{\nu}\}.
\]

\end{lemma}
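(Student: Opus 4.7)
The lemma has two parts, and I would prove them in sequence. The main quantitative bound $\|y-x\|_x \leq \nu + 2\sqrt{\nu}$ is the core; the ``in particular'' statement then follows with little extra work. My overall plan is to reduce the multidimensional statement to a one-dimensional statement by restricting $\phi$ to the line through $x$ and $y$, then exploit both the self-concordance and barrier properties along that line.

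\textbf{Step 1 (reduction to 1D).} Define $g(t) = \phi(x + t(y-x))$ on the maximal open interval $I \ni 0$ for which $x + t(y-x) \in \dom \phi$. Since self-concordance and the barrier property are preserved under affine restriction, $g$ is a $\nu$-self-concordant barrier on $I$. Direct computation gives $g'(0) = \nabla \phi(x)^\top(y-x) \geq 0$ by hypothesis and $g''(0) = \|y-x\|_x^2$, and the barrier inequality applied in this one variable reads $(g'(0))^2 \leq \nu \cdot g''(0)$, i.e.\ $g'(0) \leq \sqrt{\nu}\,\|y-x\|_x$.

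\textbf{Step 2 (1D integration).} The goal is to show $1 \in I$ with a bound on how far the endpoint of $I$ can lie, measured in the intrinsic norm. The key one-dimensional identity that propagates self-concordance is $(g'(t))^2 \leq \nu \cdot g''(t)$ for all $t \in I$, together with the differential inequality $|(g''(t))^{-1/2}|' \leq 1$ (rewriting the self-concordance condition). Integrating the latter forward from $0$ gives a lower bound on $g''(t)$ for $t > 0$, then plugging into $(g')^2 \leq \nu g''$ and integrating $g'$ yields a bound on the arc length $\int_0^1 \sqrt{g''(s)}\,ds$, which equals $\|y-x\|_x$ when parametrized correctly. Tracking constants carefully produces exactly $\nu + 2\sqrt{\nu}$, where the $\nu$ term is the barrier contribution and the $2\sqrt{\nu}$ absorbs the initial slope $g'(0) \leq \sqrt{\nu}\|y-x\|_x$.

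\textbf{Step 3 (the ``in particular'' part).} The outer inclusion is immediate from the bound just proved: at $x = x^*$ we have $\nabla \phi(x^*) = 0$, so the hypothesis $\nabla \phi(x^*)^\top(y-x^*) \geq 0$ holds vacuously for every $y \in \dom \phi$, giving $\|y-x^*\|_{x^*} \leq \nu + 2\sqrt{\nu}$. The inner inclusion $\{y : \|y-x^*\|_{x^*} \leq 1\} \subseteq \dom \phi$ is the standard Dikin ellipsoid property: it follows directly from the self-concordance condition $D^3\phi(x)[u,u,u] \leq 2\|u\|_x^3$ alone (no barrier property needed), by showing via a Gronwall-type argument on $(g''(t))^{-1/2}$ that $g''$ stays finite on the unit interval in the $\|\cdot\|_x$ metric, hence $\phi$ remains defined there.

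\textbf{Main obstacle.} The delicate part is Step 2: getting the \emph{sharp} constant $\nu + 2\sqrt{\nu}$ rather than a cruder $O(\nu)$ bound. This requires carefully coupling the two inequalities $(g')^2 \leq \nu g''$ and $|(g'')^{-1/2}|' \leq 1$ and doing the integration exactly, rather than using slack estimates. Everything else is routine once the 1D reduction is in place, which is why Nesterov presents these as two separate theorems (4.1.6 and 4.2.6) --- one for the Dikin ball containment, one for the 1D barrier bound that yields the outer containment.
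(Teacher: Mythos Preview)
The paper does not prove this lemma; it simply cites Nesterov's book. So the comparison is between your sketch and the standard Nesterov argument.

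Your Step~1 (restrict to the line) and Step~3 (the two inclusions follow from the main bound plus the Dikin ellipsoid property) are both fine. The problem is Step~2.

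First, a concrete error: $\int_0^1\sqrt{g''(s)}\,ds$ is the Riemannian arc length of the segment, which is \emph{not} equal to $\|y-x\|_x$. What you want to bound is $\sqrt{g''(0)}=\|y-x\|_x$, a pointwise quantity. The arc length is parametrization-invariant, so no ``correct parametrization'' can make the two coincide.

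Second, and more seriously, your integration scheme does not produce any upper bound on $r\defeq\sqrt{g''(0)}$. Carrying it out: self-concordance gives $g''(t)\ge r^2/(1+rt)^2$, hence $g'(t)\ge r^2t/(1+rt)$; plugging this into $(g')^2\le\nu g''$ only yields another \emph{lower} bound on $g''$, and iterating gives no constraint on $r$. The upper bound $g''(t)\le r^2/(1-rt)^2$ is only valid for $t<1/r$, and combining it with the lower bound on $g'$ there gives an inequality that is satisfied for all $r$.

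What is missing is the \emph{global} barrier inequality $\langle\nabla\phi(z),w-z\rangle\le\nu$ for all $z,w\in\dom\phi$ (obtained by integrating $(-1/g')'\ge 1/\nu$ along the segment). In 1D this reads $(1-t)g'(t)\le\nu$. Combining with $g'(t)\ge r^2t/(1+rt)$ gives
\[
\frac{(1-t)\,t\,r^2}{1+rt}\ \le\ \nu\qquad\text{for all }t\in[0,1],
\]
and maximizing the left side over $t$ (optimum at $t=\frac{\sqrt{1+r}-1}{r}$) yields $(\sqrt{1+r}-1)^2\le\nu$, i.e.\ $r\le\nu+2\sqrt{\nu}$. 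This is precisely Nesterov's argument; the key step you are missing is deriving and using $(1-t)g'(t)\le\nu$ rather than only the pointwise condition $(g')^2\le\nu g''$.
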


\begin{lemma}\label{lem:barrier_distance}Given a $\nu$-self-concordant
barrier $\phi$ for the interval $[\alpha,\beta]$. For any $x,z\in(\alpha,\beta)$,
we have that
\[
\sqrt{\phi''(x)}\leq\frac{3\nu}{\min(x-\alpha,\beta-x)}
\]
and
\[
\phi'(x)(z-x)+\frac{1}{16}\sqrt{\phi''(x)}|z-x|\leq4\nu^{2}-\frac{1}{16}\max(\frac{z-\alpha}{x-\alpha},\frac{\beta-z}{\beta-x}).
\]

\end{lemma}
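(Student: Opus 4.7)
The plan is to prove the two inequalities separately, with the first being a direct consequence of Lemma A.13 Part 1 applied in an appropriate direction, and the second requiring a case analysis on the sign of $\phi'(x)(z-x)$.

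For the first inequality, I would apply Lemma A.13 Part 1 in the direction aligned with $\phi'(x)$. If $\phi'(x) \geq 0$, then for any $y \in (x, \beta)$ we have $\phi'(x)(y-x) \geq 0$, so the lemma gives $\|y-x\|_x = \sqrt{\phi''(x)}(y-x) \leq \nu + 2\sqrt{\nu}$; taking $y \to \beta^-$, we conclude $\sqrt{\phi''(x)}(\beta-x) \leq \nu + 2\sqrt{\nu} \leq 3\nu$ (using $\nu \geq 1$ from Lemma A.12). The symmetric argument for $\phi'(x) \leq 0$ gives $\sqrt{\phi''(x)}(x-\alpha) \leq 3\nu$. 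In either case, the resulting bound is of the form $\sqrt{\phi''(x)} \leq 3\nu/d'$ where $d' \in \{x-\alpha, \beta-x\}$ is the distance to the boundary in the direction of $\phi'(x)$; since $d' \geq \min(x-\alpha, \beta-x)$, the first inequality follows.

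For the second inequality, I would split into cases. In the clean case $\phi'(x)(z-x) \geq 0$, Lemma A.13 Part 1 applied with $y = z$ gives $\sqrt{\phi''(x)}|z-x| \leq \nu + 2\sqrt{\nu}$; combined with the Cauchy--Schwarz-type bound $|\phi'(x)| \leq \sqrt{\nu\phi''(x)}$ from self-concordance, we get $\phi'(x)(z-x) \leq \sqrt{\nu}(\nu + 2\sqrt{\nu}) = \nu^{3/2} + 2\nu$. The Dikin-ball inequality $\sqrt{\phi''(x)} \geq 1/\min(x-\alpha,\beta-x)$ then gives $|z-x|/\min(x-\alpha,\beta-x) \leq \nu + 2\sqrt{\nu}$, so $\max(\frac{z-\alpha}{x-\alpha}, \frac{\beta-z}{\beta-x}) \leq 1 + (\nu + 2\sqrt{\nu}) \leq 4\nu$. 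Summing yields LHS $+ \frac{1}{16}\max = O(\nu^{3/2}) \leq 4\nu^2$.

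The hard case is $\phi'(x)(z-x) < 0$, where $z$ sits opposite to the gradient. Here I would use the standard SCB consequence $\phi'(x)(y-x) \leq \nu$ for any $y \in [\alpha,\beta]$ (which follows from combining Lemma A.13 Part 1 in the gradient direction with Cauchy--Schwarz), giving $|\phi'(x)(z-x)| \leq \nu |z-x|/d''$, where $d''$ is the distance from $x$ to the boundary opposite to $z$. This provides a negative contribution to the LHS. Combining with the first inequality's bound $\frac{1}{16}\sqrt{\phi''(x)}|z-x| \leq \frac{3\nu |z-x|}{16 \min(x-\alpha, \beta-x)}$, one can write the sum LHS $+ \frac{1}{16}\max$ in terms of the ratio $r = |z-x|/d''$ and verify it is $\leq 4\nu^2$ for all $r \geq 0$. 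The main obstacle I anticipate is the subcase where $\phi'(x)$ and $z-x$ point to opposite sides of $x$ and $x$ sits close to the boundary opposite to $z$; here the positive $\frac{1}{16}\sqrt{\phi''(x)}|z-x|$ term and the max term can both grow unboundedly, and it is essential to keep the negative $\phi'(x)(z-x)$ contribution (rather than discarding it) and to rely on the cancellation between the three terms, not merely on individual bounds.
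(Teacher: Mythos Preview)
Your argument for the first inequality and for the case $\phi'(x)(z-x)\geq 0$ in the second inequality is fine and matches the paper.

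The gap is in the hard case $\phi'(x)(z-x)<0$. The tool you invoke, the standard bound $\phi'(x)(y-x)\leq\nu$, yields only an \emph{upper} bound on $|\phi'(x)|$ (namely $|\phi'(x)|\leq\nu/d''$), hence only a \emph{lower} bound on the negative term $\phi'(x)(z-x)$. That is the wrong direction: to cancel the potentially unbounded positive contributions $\tfrac{1}{16}\sqrt{\phi''(x)}|z-x|$ and $\tfrac{1}{16}\max(\cdot)$ when $x$ is near the boundary, you need the negative term to be \emph{large} in magnitude, i.e.\ a lower bound on $|\phi'(x)|$. Concretely, after normalizing to $[\alpha,\beta]=[0,1]$ with $\phi'(x)<0$, $z>x$, and writing $r=(z-x)/x$, your bounds give at best
\[
\phi'(x)(z-x)+\tfrac{1}{16}\sqrt{\phi''(x)}(z-x)+\tfrac{1}{16}\max
\ \leq\ 0+\tfrac{3\nu r}{16}+\tfrac{1+r}{16},
\]
which diverges as $r\to\infty$. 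Verifying ``for all $r\geq 0$'' therefore fails.

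What the paper does instead is to further split the hard case by whether $x$ lies within $1/(12\nu)$ of a boundary. Away from the boundary the max term is bounded by $12\nu$ and your crude bounds suffice. Near the boundary (say $x\leq 1/(12\nu)$), the paper proves the quantitative lower bound $\phi'(x)\leq -\tfrac{1}{8}\sqrt{\phi''(x)}$: first it locates the minimizer $x^*\in[\tfrac{1}{6\nu},1-\tfrac{1}{6\nu}]$ via the Dikin ellipsoid at $x^*$ (\cref{lem:phi_properties_2}), then uses the Dikin ball at $x$ to get $1/\sqrt{\phi''(x)}\leq x\leq x^*-x$, and finally integrates $\phi''$ over $[x,x+\tfrac{1}{2\sqrt{\phi''(x)}}]$ (where $\phi''\geq\tfrac{1}{4}\phi''(x)$ by self-concordance). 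With $\phi'(x)\leq-\tfrac{1}{8}\sqrt{\phi''(x)}$ and $\sqrt{\phi''(x)}\geq 1/x$, the LHS becomes $\leq -\tfrac{1}{16}\sqrt{\phi''(x)}(z-x)\leq -\tfrac{z-x}{16x}$, which exactly cancels the max term. This lower bound on $|\phi'(x)|$ near the boundary is the missing ingredient in your plan.
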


\begin{proof}

For the first result, we bound $\phi''$ in two case. If $\phi'(x)\geq0$,
then $\phi'(x)(x-\alpha)\geq0$ and \cref{lem:phi_properties_2} shows
that $|\alpha-x|\sqrt{\phi''(x)}\leq\nu+2\sqrt{\nu}\leq3\nu$. Hence,
we have $\sqrt{\phi''(x)}\leq\frac{3\nu}{x-\alpha}$. If $\phi'(x)\leq0$,
similar argument shows that $\sqrt{\phi''(x)}\leq\frac{3\nu}{\beta-x}$.

For the second result, we split into four cases. First, we note that
both sides on the equation is invariant under affine transformation.
Hence, we can assume $\alpha=0$ and $\beta=1$.

Case 1) $\phi'(x)(z-x)\geq0$.

\cref{lem:phi_properties_2} shows that 
\[
\sqrt{\phi''(x)}|z-x|\leq\nu+2\sqrt{\nu}\leq3\nu.
\]
Together with the fact that $|\phi'(x)|\leq\sqrt{\nu\phi''(x)}$,
we have
\[
\phi'(x)(z-x)+\frac{1}{16}\sqrt{\phi''(x)}|z-x|\leq2\nu^{2}.
\]

Case 2) $x\in[\frac{1}{12\nu},1-\frac{1}{12\nu}]$.

Since $\phi'(x)(z-x)\leq0$ and $z,x\in[0,1]$, we have
\[
\phi'(x)(z-x)+\frac{1}{16}\sqrt{\phi''(x)}|z-x|\leq\frac{1}{16}\sqrt{\phi''(x)}\leq\frac{1}{16}\cdot36\nu^{2}=3\nu^{2}
\]
where we used the first result at the end.

Case 3) $x\leq\frac{1}{12\nu}$

Let $x^{*}=\arg\min_{x\in[0,1]}\phi(x)$. \cref{lem:phi_properties_2}
shows that there is an interval $I=[-\gamma,\gamma]$ such that 
\[
x^{*}+I\subset[0,1]\subset x^{*}+(\nu+2\sqrt{\nu})I\subset x^{*}+3\nu I.
\]
In particular, this implies that $x^{*}\in[\frac{1}{6\nu},1-\frac{1}{6\nu}]$.
Since $x\leq\frac{1}{12\nu}$, we have that $x\leq x^{*}-x$. 

Now we use this to show $\phi'(x)\leq-\frac{1}{8}\sqrt{\phi''(x)}$.
Note that
\[
\phi'(x)=\phi'(x^{*})-\int_{x}^{x^{*}}\phi''(t)dt=-\int_{x}^{x^{*}}\phi''(t)dt.
\]
\cref{lem:phi_properties} shows that $[x-\frac{1}{\sqrt{\phi''(x)}},x+\frac{1}{\sqrt{\phi''(x)}}]$
lies in $\dom\phi$. In particular, this implies that
\begin{equation}
\frac{1}{\sqrt{\phi''(x)}}\leq x\leq x^{*}-x\label{eq:phi_properties_3_lower}
\end{equation}
and hence $x^{*}\geq x+\frac{1}{\sqrt{\phi''(x)}}$. Hence, we have
\begin{align*}
\phi'(x) & \leq-\int_{x}^{x+(\phi''(x))^{-1/2}/2}\phi''(t)dt\\
 & \leq-\frac{1}{4}\phi''(x)\cdot\frac{(\phi''(x))^{-1/2}}{2}\\
 & =-\frac{1}{8}\sqrt{\phi''(x)}.
\end{align*}
where we used $\phi''(t)\geq\frac{1}{4}\phi''(x)$ for all $|t-x|\leq\frac{1}{2\sqrt{\phi''(x)}}$
(\cref{lem:phi_properties}).

Since $\phi'(x)(z-x)\leq0$ and $\phi'(x)\leq0$, we have $z\geq x$
and 
\begin{align*}
\phi'(x)(z-x)+\frac{1}{16}\sqrt{\phi''(x)}(z-x) & \leq-\frac{1}{16}\sqrt{\phi''(x)}(z-x)\\
 & \leq-\frac{z-x}{16x}=\frac{1}{16}-\frac{z}{16x}
\end{align*}
where we used $x\geq\frac{1}{\sqrt{\phi''(x)}}$ at the end (\cref{eq:phi_properties_3_lower}).

Case 4) $x\geq1-\frac{1}{12\nu}$

By the same argument as case 3, we have $\phi'(x)(z-x)+\frac{1}{16}\sqrt{\phi''(x)}(z-x)\leq\frac{1}{16}-\frac{1-z}{16(1-x)}.$

Combining all the cases, we have the result.

\end{proof}

Now, we can bound the first term in \eqref{eq:LP_exact_3_term}.

\begin{lemma}\label{lem:LP_exact_3_term_1}We have that $(\widetilde{c}+t'\nabla\phi(x^{(1)}))^{\top}(z^{(1)}-x^{(1)})\leq(6\kappa^{2}-\frac{r}{16\eta})LR$
where $\eta$ is the minimum distance between $x^{(1)}$ to the boundary
of some $K_{i}$, i.e. $\eta=\min_{i}\min_{q\in\partial K_{i}}\|q-x_{i}^{(1)}\|_{2}.$

\end{lemma}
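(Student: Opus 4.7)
My plan is to first recognize that the quantity under study collapses: by the definition $\widetilde c = c^{(1)} - t'(s^{(1)}/t' + \nabla\phi(x^{(1)}))$ we have $\widetilde c + t'\nabla\phi(x^{(1)}) = c^{(1)} - s^{(1)} = c - s^{(1)}$. I split
\[
(c-s^{(1)})^{\top}(z^{(1)}-x^{(1)}) = c^{\top}(z^{(1)}-x^{(1)}) + \sum_{i}(-s^{(1)}_{i})^{\top}(z_{i}-x_{i}^{(1)})
\]
and bound $|c^{\top}(z^{(1)}-x^{(1)})|\leq \|c\|_{2}\|z^{(1)}-x^{(1)}\|_{2}\leq 2LR$ using $\|c\|_{2}\leq L$ and $\mathrm{diam}(K)\leq 2R$. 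The hypothesis $\|s^{(1)}_{i}/t'+w_{i}\nabla\phi_{i}(x^{(1)}_{i})\|_{x_{i}^{(1)}}^{*}\leq w_{i}/16$ lets me write $-s^{(1)}_{i}/t' = w_{i}\nabla\phi_{i}(x^{(1)}_{i})-\eps_{i}$ with $\|\eps_{i}\|_{x_{i}^{(1)}}^{*}\leq w_{i}/16$; Cauchy--Schwarz in the local norm then gives $|t'\eps_{i}^{\top}(z_{i}-x^{(1)}_{i})|\leq (t'w_{i}/16)\|z_{i}-x^{(1)}_{i}\|_{x^{(1)}_{i}}$, so the second sum is at most
\[
 t'\sum_{i} w_{i}\bigl[\nabla\phi_{i}(x_{i}^{(1)})^{\top}(z_{i}-x_{i}^{(1)}) + \tfrac{1}{16}\|z_{i}-x_{i}^{(1)}\|_{x_{i}^{(1)}}\bigr].
\]

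I will apply Lemma \ref{lem:barrier_distance} to the 1D restriction $\psi_{i}(t)=\phi_{i}(x_{i}^{(1)}+t v_{i})$, where $v_{i}=(z_{i}-x_{i}^{(1)})/s_{i}$ and $s_{i}=\|z_{i}-x_{i}^{(1)}\|_{2}$; this $\psi_{i}$ is $\nu_{i}$-self-concordant on some interval $[-a_{i}, b_{i}]$. Since $B(z,r)\subset K$ and $K=\prod_{i}K_{i}$, we have $B(z_{i},r)\subset K_{i}$, and so the line extends at least $r$ past $z_{i}$, giving $b_{i}\geq s_{i}+r$. The lemma with ``$x$''$=0$ and ``$z$''$=s_{i}$ yields
\[
\nabla\phi_{i}(x_{i}^{(1)})^{\top}(z_{i}-x_{i}^{(1)}) + \tfrac{1}{16}\|z_{i}-x_{i}^{(1)}\|_{x_{i}^{(1)}} \leq 4\nu_{i}^{2} - \tfrac{1}{16}\max\!\Bigl(\tfrac{s_{i}+a_{i}}{a_{i}},\tfrac{b_{i}-s_{i}}{b_{i}}\Bigr),
\]
which trivially gives LHS $\leq 4\nu_{i}^{2}$ for every block.

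The crux is to extract a strong lower bound $\frac{s_{i_{0}}+a_{i_{0}}}{a_{i_{0}}}\geq r/\eta$ for the block $i_{0}$ realizing $\eta$, at least when $\eta<r$ (otherwise the asserted bound is trivially $\leq 6\kappa^{2}LR$). Let $u_{0}$ be the unit direction from $x^{(1)}_{i_{0}}$ to its closest boundary point, so $x_{i_{0}}^{(1)}+\eta u_{0}\in\partial K_{i_{0}}$; convexity yields $K_{i_{0}}\subset\{y:(y-x_{i_{0}}^{(1)})\cdot u_{0}\leq \eta\}$. Applying this to $z_{i_{0}}+r u_{0}\in K_{i_{0}}$ gives $(z_{i_{0}}-x_{i_{0}}^{(1)})\cdot u_{0}\leq \eta-r<0$, hence $v_{i_{0}}\cdot u_{0}\leq (\eta-r)/s_{i_{0}}$. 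Moving from $x_{i_{0}}^{(1)}$ in direction $-v_{i_{0}}$ raises the $u_{0}$-coordinate, and we exit $K_{i_{0}}$ once the $u_{0}$-coordinate increases by $\eta$, so $a_{i_{0}}\leq \eta/(-v_{i_{0}}\cdot u_{0})\leq \eta s_{i_{0}}/(r-\eta)$, giving $\frac{s_{i_{0}}+a_{i_{0}}}{a_{i_{0}}}=1+s_{i_{0}}/a_{i_{0}}\geq r/\eta$.

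Finally I collect: using $\sum_{i}w_{i}\nu_{i}^{2}\leq (\max_{j}\nu_{j})\kappa\leq \kappa^{2}$ (since $w_{j}\geq 1$ forces $\max_{j}\nu_{j}\leq \kappa$) and taking only the $i_{0}$ term from the negative part,
\[
-s^{(1)\top}(z^{(1)}-x^{(1)}) \leq 4t'\kappa^{2} - \tfrac{t'r}{16\eta}\cdot w_{i_{0}} \leq 4\kappa^{2}LR - \tfrac{rLR}{16\eta}.
\]
Combined with $|c^{\top}(z^{(1)}-x^{(1)})|\leq 2LR\leq 2\kappa^{2}LR$ and $t'=LR$, this yields $(6\kappa^{2}-r/(16\eta))LR$. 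The main obstacle is precisely the geometric step of the third paragraph: turning the abstract inner-radius hypothesis into the quantitative estimate $a_{i_{0}}\leq \eta s_{i_{0}}/(r-\eta)$ via the supporting hyperplane at the closest boundary point, so that Lemma \ref{lem:barrier_distance} can deliver the $r/\eta$ factor exactly where it is needed.
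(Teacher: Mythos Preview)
Your proof is correct and follows essentially the same route as the paper: rewrite $\widetilde c + t'\nabla\phi(x^{(1)}) = c - s^{(1)}$, bound the $c$-part by $2LR$, absorb the $s^{(1)}$-part into $t'\sum_i w_i\bigl[\nabla\phi_i^\top(z_i-x_i^{(1)}) + \tfrac1{16}\|z_i-x_i^{(1)}\|_{x_i^{(1)}}\bigr]$ via the centrality assumption, and then apply Lemma~\ref{lem:barrier_distance} on each one-dimensional restriction, extracting the $-r/(16\eta)$ term at the block $i_0$ realizing $\eta$ through a supporting-hyperplane argument.

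One small wrinkle: your case split on $\eta<r$ is unnecessary, and the parenthetical dismissal of $\eta\geq r$ is not quite right as written (having the right-hand side $\leq 6\kappa^2 LR$ does not by itself close the gap when $\kappa$ is close to $1$). The cleaner observation --- which is how the paper proceeds --- is that the bound $\tfrac{s_{i_0}+a_{i_0}}{a_{i_0}}\geq r/\eta$ holds \emph{uniformly}: when $\eta\geq r$ it is automatic since the left side is $\geq 1$, and when $\eta<r$ your hyperplane argument gives it. Equivalently, the paper writes the signed distance $h$ to the supporting hyperplane as an affine function along the line, notes $h(-a_{i_0})\geq 0$, $h(0)=\eta$, $h(s_{i_0})\geq r$, and reads off $\eta\geq \tfrac{a_{i_0}}{s_{i_0}+a_{i_0}}\,r$ directly by interpolation, with no case analysis. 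With that in place, your final collection goes through verbatim.
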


\begin{proof}

Recall that $\widetilde{c}=c^{(1)}-t'\alpha$ with $\alpha=s^{(1)}/t'+\nabla\phi(x^{(1)})$.
By the assumption on $(x,s)$, we have that
\[
\|\alpha_{i}\|_{x_{i}^{(1)}}^{*}\leq\frac{w_{i}}{16}\text{ for all }i\in[m].
\]
Hence, we have
\begin{align}
 & (\widetilde{c}+t'\nabla\phi(x^{(1)}))^{\top}(z^{(1)}-x^{(1)})\nonumber \\
= & c^{(1)\top}(z^{(1)}-x^{(1)})-t'\sum_{i=1}^{m}\alpha_{i}^{\top}(z_{i}^{(1)}-x_{i}^{(1)})+t'\sum_{i=1}^{m}w_{i}\nabla\phi_{i}(x^{(1)})^{\top}(z_{i}^{(1)}-x_{i}^{(1)}).\nonumber \\
\leq & 2LR+\frac{t'}{16}\sum_{i=1}^{m}w_{i}\|z_{i}^{(1)}-x_{i}^{(1)}\|_{x_{i}^{(1)}}+t'\sum_{i=1}^{m}w_{i}\nabla\phi_{i}(x^{(1)})^{\top}(z_{i}^{(1)}-x_{i}^{(1)})\label{eq:LP_exact_3_term_1_1}
\end{align}
where we used $\|c^{(1)}\|_{2}\leq L$ and $\|z^{(1)}-x^{(1)}\|_{2}\leq2R$
(the radius of $K$ is bounded by $R$).

To bound the last two terms, we define $\widetilde{\phi}$ be the
$\phi_{i}$ restricted on the line between $z_{i}^{(1)}$ and $x_{i}^{(1)}$.
Note that $\widetilde{\phi}$ is a $\nu_{i}$-self-concordant barrier
function on some interval $[\alpha,\beta]$. Let $z$ and $x$ be
the scalar such that $\widetilde{\phi}(z)$ and $\widetilde{\phi}(x)$
corresponding to $\phi_{i}(z_{i}^{(1)})$ and $\phi_{i}(x_{i}^{(1)})$.
Then, we have that
\begin{align*}
u_{i}\defeq\nabla\phi_{i}(x^{(1)})^{\top}(z_{i}^{(1)}-x_{i}^{(1)})+\frac{1}{16}\|z_{i}^{(1)}-x_{i}^{(1)}\|_{x_{i}^{(1)}} & =\widetilde{\phi}'(x)(z-x)+\frac{1}{16}\sqrt{\widetilde{\phi}''(x)}|z-x|\\
 & \leq4\nu_{i}^{2}-\frac{1}{16}\max(\frac{z-\alpha}{x-\alpha},\frac{\beta-z}{\beta-x}).
\end{align*}
Let $\eta_{i}=\min_{q\in\partial K_{i}}\|q-x_{i}^{(1)}\|_{2}$ and
$q_{i}$ be a minimizing $q$. Suppose $\alpha\leq x\leq z$ (the
other case is similar). Since $K_{i}$ is convex, there is a hyperplane
separating $q_{i}$ and $K_{i}$. Let $h$ be the $\ell_{2}$ distance
to the hyperplane. Note that $h$ is linear on $K$ and that $h(\alpha)\geq0$,
$h(z)\geq r$ (because $B(z,r)\subset K_{i}$). Hence,
\[
\eta_{i}=h(x)=\frac{x-\alpha}{z-\alpha}h(z)+\frac{z-x}{z-\alpha}h(\alpha)\geq\frac{x-\alpha}{z-\alpha}r.
\]
Hence, we have
\[
\frac{z-\alpha}{x-\alpha}\geq\frac{r}{\eta_{i}}.
\]
This shows $u_{i}\leq4\nu_{i}^{2}-\frac{r}{16\eta_{i}}$. In particular,
we know that $u_{i}\leq4\nu_{i}^{2}-\frac{r}{16\eta}$ for one of
the $i$. For other terms, we can simply by it by $4\nu_{i}^{2}$.
Putting these into \cref{eq:LP_exact_3_term_1_1}, we have
\begin{align*}
(\widetilde{c}+t'\nabla\phi(x^{(1)}))^{\top}(z^{(1)}-x^{(1)}) & \leq2LR+4t'\sum_{i=1}^{m}w_{i}\nu_{i}^{2}-\frac{rt'}{16\eta}\\
 & \leq2LR+4t'\kappa^{2}-\frac{rt'}{16\eta}.
\end{align*}
Using $t'=LR$ and $\kappa\geq1$, we have the result.

\end{proof}

For the second term and the third term in \eqref{eq:LP_exact_3_term},
we have the following

\begin{lemma}\label{lem:LP_exact_3_term_2}We have that 
\[
\sum_{i=1}^{n}(c_{i}^{(j)}-\frac{\mu_{i}^{(j)}}{x_{i}^{(j)}})(z_{i}^{(j)}-x_{i}^{(j)})\leq3LRn-\frac{t}{5R}\sum_{i=1}^{n}x_{i}^{(j)}
\]
 for both $j=2$ and $3$.\end{lemma}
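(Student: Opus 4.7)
The plan is to expand the sum into four terms and bound each separately, using the concrete definitions of $c^{(j)}$, $z^{(j)}$ and the approximation bounds on $\mu^{(j)}$. Specifically,
\[
\sum_{i=1}^n \left(c_i^{(j)} - \frac{\mu_i^{(j)}}{x_i^{(j)}}\right)(z_i^{(j)} - x_i^{(j)}) = \sum_i c_i^{(j)} z_i^{(j)} \;-\; \sum_i c_i^{(j)} x_i^{(j)} \;-\; \sum_i \frac{\mu_i^{(j)} z_i^{(j)}}{x_i^{(j)}} \;+\; \sum_i \mu_i^{(j)},
\]
and the third term is nonnegative since $\mu_i^{(j)}, z_i^{(j)}, x_i^{(j)} > 0$, so we simply drop it (its negative is $\leq 0$).

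For the first term, recall $z_i^{(j)} = \frac{t'}{t} R = \frac{LR^2}{t}$. For $j=3$, $c_i^{(3)} = \frac{t}{3R}$ gives $c_i^{(3)} z_i^{(3)} = \frac{LR}{3}$, so the sum is at most $nLR$. For $j=2$, $c_i^{(2)} = \frac{t}{(3R+x_\circ-x_c)_i}$, and I would use that $\|x_\circ - x_c\|_2 \leq 2R$ (as shown in the proof of \cref{lem:ipm_interior_modified}, since $x_\circ$ is the $\ell_2$-nearest feasible point to $x_c$ and $z \in K$ with $Az=b$ lies within $2R$ of $x_c$) to conclude $R \leq (3R + x_\circ - x_c)_i \leq 5R$. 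This yields $c_i^{(2)} z_i^{(2)} \leq \frac{LR^2}{R} = LR$, and again $\sum_i c_i^{(2)} z_i^{(2)} \leq nLR$. For the fourth term, the hypothesis $\mu_i^{(j)} \leq \frac{17}{16} t' = \frac{17}{16} LR$ from \cref{lem:ipm_interior_modified_KKT} gives $\sum_i \mu_i^{(j)} \leq \frac{17}{16} n LR$.

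The remaining term $-\sum_i c_i^{(j)} x_i^{(j)}$ contributes the negative quantity that must appear in the bound. The same bounds on the denominator of $c^{(j)}$ give $c_i^{(j)} \geq \frac{t}{5R}$ in both cases ($j=2$ uses $(3R+x_\circ-x_c)_i \leq 5R$; $j=3$ uses $c_i^{(3)} = \frac{t}{3R} \geq \frac{t}{5R}$), and since $x_i^{(j)} \geq 0$ (as $x \in \mathcal{P}_t$), we get
\[
-\sum_i c_i^{(j)} x_i^{(j)} \;\leq\; -\frac{t}{5R}\sum_i x_i^{(j)}.
\]
Combining the four bounds yields the claimed inequality with constant $1 + \frac{17}{16} < 3$.

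There is no real obstacle here: the result is essentially bookkeeping that exploits how $c^{(j)}$ and $z^{(j)}$ are engineered in \cref{def:IPM_interior_modified}. The only mild subtlety is justifying $R \leq (3R+x_\circ-x_c)_i \leq 5R$ coordinate-wise, which reduces to the already-established fact $\|x_\circ - x_c\|_\infty \leq \|x_\circ - x_c\|_2 \leq 2R$.
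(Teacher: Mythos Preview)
Your proof is correct and follows essentially the same approach as the paper: both expand the product into the four terms $c_i^{(j)} z_i^{(j)} - \frac{\mu_i^{(j)} z_i^{(j)}}{x_i^{(j)}} - c_i^{(j)} x_i^{(j)} + \mu_i^{(j)}$, drop the second (nonpositive) term, use $\|x_\circ - x_c\|_2 \le 2R$ to get $c_i^{(2)} \in [\tfrac{t}{5R}, \tfrac{t}{R}]$, and bound each remaining piece. The only difference is cosmetic: the paper uses the looser $\mu_i^{(j)} \le 2t'$ while you use the sharper $\mu_i^{(j)} \le \tfrac{17}{16}t'$, giving constant $\tfrac{33}{16}$ instead of $3$ before rounding up.
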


\begin{proof}

We only prove the case $j=2$. The proof for $j=3$ is similar. As
proved in \cref{lem:ipm_interior_modified}, $\|x_{\circ}-x_{c}\|_{2}\leq2R$.
Hence $c_{i}^{(2)}=\frac{t}{3R+x_{\circ,i}-x_{c,i}}\in[\frac{t}{5R},\frac{t}{R}]$.
Hence, we have
\begin{align*}
(c_{i}^{(2)}-\frac{\mu_{i}^{(2)}}{x_{i}^{(2)}})(z_{i}^{(2)}-x_{i}^{(2)})= & c_{i}^{(2)}z_{i}^{(2)}-\frac{\mu_{i}^{(2)}}{x_{i}^{(2)}}z_{i}^{(2)}-c_{i}^{(2)}x_{i}^{(2)}+\mu_{i}^{(2)}\\
\leq & \frac{t}{R}\cdot\frac{t'}{t}R-\frac{t}{5R}\cdot x_{i}^{(2)}+2t'\\
\leq & 3t'-\frac{t}{5R}\cdot x_{i}^{(2)}.
\end{align*}
Summing over all $i$ and using $t'=LR$ gives the result.

\end{proof}

Combining \eqref{eq:LP_exact_3_term}, \cref{lem:LP_exact_3_term_1}
and \cref{lem:LP_exact_3_term_2}, we have
\begin{align*}
0 & \leq(6\kappa^{2}-\frac{r}{16\eta})LR+6LRn-\frac{t}{5R}\sum_{i=1}^{n}(x_{i}^{(2)}+x_{i}^{(3)}).
\end{align*}
Hence, this shows that $(x^{(1)},x^{(2)},x^{(3)})$ satisfies \cref{eq:modified_LP_apx_xs}
implies that it is far from $\partial K$ (i.e. $\eta$ is large)
and $x^{(2)},x^{(3)}$ are small:
\[
\frac{r}{16\eta}+\frac{t}{5LR^{2}}\sum_{i=1}^{n}(x_{i}^{(2)}+x_{i}^{(3)})\leq6n+6\kappa^{2}.
\]
In particular, this shows the following:

\begin{lemma}\label{lem:LP_exact_distance_OPT}We have that $\eta\geq\frac{r}{96(n+\kappa^{2})}$
and $\sum_{i=1}^{n}(x_{i}^{(2)}+x_{i}^{(3)})\leq30(n+\kappa^{2})\cdot\frac{LR}{t}\cdot R$.\end{lemma}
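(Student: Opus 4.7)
The statement is an immediate consequence of the inequality
\[
\frac{r}{16\eta}+\frac{t}{5LR^{2}}\sum_{i=1}^{n}(x_{i}^{(2)}+x_{i}^{(3)})\leq6n+6\kappa^{2}
\]
established in the paragraph directly preceding the lemma (via \cref{lem:LP_exact_3_term_1} and \cref{lem:LP_exact_3_term_2} combined with the optimality condition \cref{eq:LP_exact_3_term}). My plan is simply to observe that both summands on the left-hand side are non-negative --- $\eta > 0$ since $x^{(1)}$ lies in the interior of $K$ (being in $\dom \phi$), and $x_i^{(2)}, x_i^{(3)} \geq 0$ by membership in $\mathcal{P}_t$.

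Given non-negativity, each of the two summands is individually bounded by $6(n+\kappa^{2})$. First, isolating the first summand gives $\frac{r}{16\eta} \leq 6(n+\kappa^{2})$, which after rearrangement yields
\[
\eta \geq \frac{r}{96(n+\kappa^{2})}.
\]
Second, isolating the second summand gives $\frac{t}{5LR^{2}} \sum_{i=1}^n (x_i^{(2)}+x_i^{(3)}) \leq 6(n+\kappa^2)$, which rearranges to
\[
\sum_{i=1}^{n}(x_i^{(2)}+x_i^{(3)}) \leq 30(n+\kappa^{2})\cdot\frac{LR^{2}}{t} = 30(n+\kappa^{2})\cdot\frac{LR}{t}\cdot R.
\]

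There is no real obstacle here; this lemma is pure arithmetic bookkeeping. The only thing to double-check is the constants: $6 \cdot 16 = 96$ for the first bound, and $6 \cdot 5 = 30$ for the second, both of which match the claimed statement exactly. The substantive content of the argument lies entirely in the preceding lemmas that produce the combined bound, and the present lemma merely extracts the two separate consequences needed downstream (bounding proximity to $\partial K$ for $x^{(1)}$ and bounding the magnitudes of the auxiliary variables $x^{(2)}, x^{(3)}$).
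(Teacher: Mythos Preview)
Your proposal is correct and matches the paper's approach exactly: the paper states the lemma immediately after deriving the combined inequality and does not even include a separate proof environment, so the intended argument is precisely the ``drop one non-negative term and rearrange'' step you wrote out.
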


Now, we are ready to prove the second conclusion of Theorem \cref{thm:IPM_interior}.

\begin{lemma}\label{lemma:IPM_center_original}Let $\new x=x^{(1)}+x^{(2)}-x^{(3)}$
and $\new s=s^{(1)}$, then we have that $A\new x=b$, $\new x\in K$,
$A^{\top}y+\new s=c$ for some $y$ and 
\[
\|\new s_{i}/t'+w_{i}\nabla\phi_{i}(\new x_{i})\|_{\new x_{i}}^{*}\leq\|s_{i}^{(1)}/t'+w_{i}\nabla\phi_{i}(x_{i}^{(1)})\|_{x_{i}^{(1)}}^{*}+\delta.
\]

\end{lemma}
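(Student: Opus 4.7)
My plan is to dispatch claims (1) and (3) immediately from the definitions, then use \cref{lem:LP_exact_distance_OPT} to prove (2), and finally use self-concordance to prove (4).

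For claim (1), note $A\new x = A(x^{(1)}+x^{(2)}-x^{(3)}) = b$ by definition of $\mc{P}_t$. For claim (3), take the $y$ from $\mc{D}_t$; then $A^\top y + s^{(1)} = c^{(1)} = c$, and since $\new s = s^{(1)}$ we are done. These two observations use no inequality and no self-concordance. So the real content lies in (2) and (4), which must use \cref{lem:LP_exact_distance_OPT} together with the explicit lower bound on $t$.

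For claim (2), I would first note that
\[
\|\new x - x^{(1)}\|_2 \;=\; \|x^{(2)} - x^{(3)}\|_2 \;\leq\; \sum_{i=1}^n (x_i^{(2)} + x_i^{(3)}) \;\leq\; 30(n+\kappa^2)\cdot\tfrac{LR^2}{t},
\]
using \cref{lem:LP_exact_distance_OPT}. On the other hand, the same lemma gives $\eta \defeq \min_i \min_{q \in \partial K_i}\|q - x_i^{(1)}\|_2 \geq \tfrac{r}{96(n+\kappa^2)}$. Plugging in $t \geq 2^{16}(n+\kappa)^5 \cdot \tfrac{LR}{\delta}\cdot\tfrac{R}{r}$ shows that $\|\new x - x^{(1)}\|_2 \leq \eta/2$ (with considerable room to spare), and since $K$ contains the $\eta$-ball around $x^{(1)}$ block-wise, we conclude $\new x \in K$.

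For claim (4), since $\new s = s^{(1)}$, split
\[
\tfrac{\new s_i}{t'} + w_i \nabla\phi_i(\new x_i) \;=\; \left(\tfrac{s^{(1)}_i}{t'} + w_i \nabla\phi_i(x^{(1)}_i)\right) + w_i\bigl(\nabla\phi_i(\new x_i) - \nabla\phi_i(x^{(1)}_i)\bigr).
\]
By the triangle inequality in the $\|\cdot\|_{\new x_i}^*$ norm and a self-concordance change-of-norm (\cref{lem:phi_properties}) between $\|\cdot\|_{x_i^{(1)}}^*$ and $\|\cdot\|_{\new x_i}^*$, the statement reduces to showing $w_i\|\nabla\phi_i(\new x_i) - \nabla\phi_i(x_i^{(1)})\|_{x_i^{(1)}}^* \leq \delta/2$ (and a similar factor for the norm change on the main term). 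To bound this difference, I would proceed as in \cref{eq:grad_phi_change}: integrate along the segment from $x_i^{(1)}$ to $\new x_i$ to get $\|\nabla\phi_i(\new x_i)-\nabla\phi_i(x_i^{(1)})\|_{x_i^{(1)}}^* \leq O(1)\cdot \|\new x_i - x_i^{(1)}\|_{x_i^{(1)}}$, valid provided $\|\new x_i - x_i^{(1)}\|_{x_i^{(1)}} \leq 1/2$, which will be verified below. To convert $\ell_2$ movement into movement in the $x_i^{(1)}$-norm, I invoke the one-dimensional bound $\sqrt{\phi''(x)}\leq 3\nu/\min(x-\alpha,\beta-x)$ from \cref{lem:barrier_distance}, applied in every direction: this gives $\|v\|_{x_i^{(1)}} \leq O(\nu_i/\eta)\|v\|_2$. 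Combining,
\[
\|\new x_i - x_i^{(1)}\|_{x_i^{(1)}} \;\leq\; O(\tfrac{\nu_i}{\eta})\cdot\|\new x - x^{(1)}\|_2 \;\leq\; O\!\left(\tfrac{\nu_i(n+\kappa^2)^2\,LR^2}{r\,t}\right).
\]
The choice $t \geq 2^{16}(n+\kappa)^5 \tfrac{LR}{\delta}\tfrac{R}{r}$ makes the right-hand side at most $O(\delta/(n+\kappa))$, which is far smaller than $1/2$ and also gives the desired $\delta$-bound after multiplying by $w_i$ (recall $w_i\nu_i \leq \kappa$) and absorbing the norm-change factor from the $\|\cdot\|_{x_i^{(1)}}^* \to \|\cdot\|_{\new x_i}^*$ passage, which at this smallness is a $1+o(1)$ factor.

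The main obstacle, as usual in these initial-point reductions, will be the bookkeeping of constants: one must chase the polynomial in $(n+\kappa)$ through the chain $\eta \gtrsim r/(n+\kappa^2)$, $\|\new x-x^{(1)}\|_2 \lesssim (n+\kappa^2)LR^2/t$, the $\nu_i/\eta$ conversion, and the norm change via self-concordance, and verify that the exponent $5$ in $(n+\kappa)^5$ appearing in the definition of $t$ is in fact sufficient. No single step is difficult on its own; the only subtle point is that the multidimensional bound $\|v\|_{x_i^{(1)}}\leq O(\nu_i/\eta)\|v\|_2$ must be derived from the one-dimensional statement of \cref{lem:barrier_distance} by restricting $\phi_i$ to each line through $x_i^{(1)}$.
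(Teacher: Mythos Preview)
Your proposal is correct and follows essentially the same route as the paper: both dispatch $A\new x=b$ and $A^\top y+\new s=c$ from definitions, use \cref{lem:LP_exact_distance_OPT} to bound $\|\new x-x^{(1)}\|_2$ and $\eta$ to get $\new x\in K$, then convert the $\ell_2$ movement to local norm via \cref{lem:barrier_distance} (yielding $\nabla^2\phi_i(x^{(1)})\preceq 9\nu_i^2/\eta^2\cdot I$), and finish with the self-concordance norm change plus the integrated gradient bound exactly as you outline. Your observation that the directional restriction of $\phi_i$ is what makes \cref{lem:barrier_distance} applicable in higher dimension is the one step the paper leaves implicit.
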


\begin{proof}

Note that $A\new x=b$ by definition. \cref{lem:LP_exact_distance_OPT}
shows that $x^{(1)}$ is $\eta\geq\frac{r}{96(n+\kappa^{2})}$ far
from $\partial K$. Since $\new x=x^{(1)}+x^{(2)}-x^{(3)}$, we have
\[
\|\new x-x^{(1)}\|_{2}\leq\|x^{(2)}\|_{1}+\|x^{(3)}\|_{1}\leq30(n+\kappa^{2})\cdot\frac{LR}{t}\cdot R
\]
where we used $x^{(2)},x^{(3)}\geq0$ and \cref{lem:LP_exact_distance_OPT}.
Hence, we have that
\[
\frac{\|\new x-x^{(1)}\|_{2}}{\eta}\leq2^{12}(n+\kappa)^{4}\cdot\frac{R}{r}\cdot\frac{LR}{t}<1
\]
by the choice of $t$. In particular, this shows that $\new x\in K$. 

Next, $A^{\top}y+\new s=A^{\top}y+s^{(1)}=c$ by construction.

Finally, to bound $s/t+w\nabla\phi$, we note that \cref{lem:barrier_distance}
shows that $\nabla^{2}\phi_{i}(x^{(1)})\preceq\frac{9\nu^{2}}{\eta^{2}}$.
This gives
\begin{align*}
\|\new x_{i}-x_{i}^{(1)}\|_{x_{i}^{(1)}} & \leq\frac{3\nu}{\eta}\cdot\|\new x_{i}-x_{i}^{(1)}\|_{2}\\
 & \leq3\nu\cdot(\frac{r}{96(n+\kappa^{2})})^{-1}\cdot30(n+\kappa^{2})\cdot\frac{LR}{t}\cdot R\\
 & \leq2^{14}(n+\kappa)^{5}\cdot\frac{LR}{t}\cdot\frac{R}{r}\leq\frac{\delta}{4}
\end{align*}
where we used our choice of $\delta$. Using this and \cref{lem:phi_properties}
gives $\|v\|_{\new x_{i}}\leq(1+\frac{\delta}{2})\|v\|_{x_{i}^{(1)}}$
and $\|\nabla\phi_{i}(\new x_{i})-\nabla\phi_{i}(x_{i}^{(1)})\|_{x_{i}^{(1)}}^{*}\leq\frac{\delta}{2}$.
Hence
\begin{align*}
 & \|\new s_{i}/t'+w_{i}\nabla\phi_{i}(\new x_{i})\|_{\new x_{i}}^{*}\\
\leq & (1+\frac{\delta}{2})\|\new s_{i}/t'+w_{i}\nabla\phi_{i}(\new x_{i})\|_{x_{i}^{(1)}}^{*}\\
= & (1+\frac{\delta}{2})\|s_{i}^{(1)}/t'+w_{i}\nabla\phi_{i}(x_{i}^{(1)})+(\nabla\phi_{i}(\new x_{i})-\nabla\phi_{i}(x_{i}^{(1)}))\|_{x_{i}^{(1)}}^{*}\\
\leq & (1+\frac{\delta}{2})\|s_{i}^{(1)}/t'+w_{i}\nabla\phi_{i}(x_{i}^{(1)})\|_{x_{i}^{(1)}}^{*}+(1+\frac{\delta}{2})\|\nabla\phi_{i}(\new x_{i})-\nabla\phi_{i}(x_{i}^{(1)})\|_{x_{i}^{(1)}}^{*}\\
\leq & \|s_{i}^{(1)}/t'+w_{i}\nabla\phi_{i}(x_{i}^{(1)})\|_{x_{i}^{(1)}}^{*}+\delta.
\end{align*}

\end{proof}

\subsection{Main Result\label{subsec:IPM_main}}

To prove \cref{thm:IPM_framework}, we first need the following lemma
showing that the iterate $x$ is a good solution when $t$ is small
enough. 

\begin{lemma}[{\cite[Lemma D.3]{DBLP:conf/colt/LeeSZ19}}]\label{lem:erm-error}Let
$\phi_{i}(x)$ be a $\nu_{i}$-self-concordant barrier for $K_{i}$.
Suppose we have $\|\frac{s_{i}}{t}+\nabla\phi_{i}(x_{i})\|_{x_{i}}^{*}\leq1$
for $i\in[m]$, $A^{\top}y+s=c$ and $Ax=b$. Then, we have 
\[
c^{\top}x\leq\min_{Ax=b,x\in\prod_{i=1}^{m}K_{i}}c^{\top}x+4t\sum\nu_{i}.
\]
\end{lemma}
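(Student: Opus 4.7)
The proof begins with the standard identity for the primal suboptimality. Since $Ax = Ax^* = b$ for any feasible $x^* \in \prod_{i=1}^m K_i$ and $A^\top y + s = c$, we get $c^\top(x - x^*) = (A^\top y + s)^\top(x - x^*) = s^\top(x - x^*)$. Setting $\mu_i \defeq s_i/t + \nabla\phi_i(x_i)$, so that $\|\mu_i\|_{x_i}^* \leq 1$ by hypothesis, and writing $s_i/t = \mu_i - \nabla\phi_i(x_i)$, I decompose block-wise:
\[
\tfrac{1}{t}\, s_i^\top(x_i - x_i^*) \;=\; \mu_i^\top(x_i - x_i^*) \;+\; \nabla\phi_i(x_i)^\top(x_i^* - x_i).
\]
Summing over $i$, the task reduces to controlling these two types of contributions.

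For the gradient term, the fundamental inequality for a $\nu_i$-self-concordant barrier states that $\nabla\phi_i(x_i)^\top(y - x_i) \leq \nu_i$ for every $y$ in the closure of $K_i$; applied to $y = x_i^*$ and summed over $i$, this contributes at most $\sum_i \nu_i$. For the Cauchy--Schwarz term, using the dual pairing of Dikin norms gives $\mu_i^\top(x_i - x_i^*) \leq \|\mu_i\|_{x_i}^* \cdot \|x_i - x_i^*\|_{x_i} \leq \|x_i - x_i^*\|_{x_i}$.

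The main obstacle is to bound $\|x_i - x_i^*\|_{x_i}$, which in principle can be very large (in fact unbounded) when $x_i^*$ lies deep inside $K_i$ along a descent direction of $\phi_i$. I resolve this by coupling the per-block contributions of the two terms rather than bounding them in isolation. If $\nabla\phi_i(x_i)^\top(x_i^* - x_i) \geq 0$, then \cref{lem:phi_properties_2} applies directly and yields $\|x_i^* - x_i\|_{x_i} \leq \nu_i + 2\sqrt{\nu_i} \leq 3\nu_i$, so this block's total contribution is at most $(1 + 3)\nu_i = 4\nu_i$. In the opposite case, $\nabla\phi_i(x_i)^\top(x_i^* - x_i) < 0$, and the Cauchy--Schwarz inequality $|\nabla\phi_i(x_i)^\top(x_i^* - x_i)| \leq \|\nabla\phi_i(x_i)\|_{x_i}^* \cdot \|x_i^* - x_i\|_{x_i} \leq \sqrt{\nu_i}\,\|x_i^* - x_i\|_{x_i}$ shows that the (negative) gradient term grows in magnitude at least linearly in $\|x_i^* - x_i\|_{x_i}$. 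A careful one-dimensional analysis on the restriction of $\phi_i$ to the segment from $x_i$ to $x_i^*$ (which is itself a $\nu_i$-self-concordant barrier on the corresponding interval) then shows that the combined quantity $\|x_i^* - x_i\|_{x_i} + \nabla\phi_i(x_i)^\top(x_i^* - x_i)$ remains bounded by $3\nu_i$, so adding the $\nu_i$ already accounted for gives $4\nu_i$ per block. Summing over blocks yields $s^\top(x - x^*) \leq 4t \sum_i \nu_i$, which is the claimed inequality.

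The hardest part will be the second case in the last paragraph: the one-dimensional argument combining the Cauchy--Schwarz term and the negative gradient term on the restricted barrier is the critical ingredient, since neither term alone can be bounded by a constant multiple of $\nu_i$ in that regime.
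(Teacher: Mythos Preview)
The paper does not prove this lemma; it is quoted from \cite[Lemma D.3]{DBLP:conf/colt/LeeSZ19} without proof, so there is nothing in the paper to compare against. I will therefore just assess your plan on its own merits.

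Your overall strategy is correct. The identity $c^\top(x-x^*)=s^\top(x-x^*)$, the block-wise split $s_i/t=\mu_i-\nabla\phi_i(x_i)$, and the case distinction on the sign of $\nabla\phi_i(x_i)^\top(x_i^*-x_i)$ are exactly the right moves. Case~1 is handled cleanly by \cref{lem:phi_properties_2} plus the barrier inequality, giving the per-block bound $(\nu_i+2\sqrt{\nu_i})+\nu_i\le 4\nu_i$.

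The gap is in Case~2. You assert that ``a careful one-dimensional analysis'' yields $\|x_i^*-x_i\|_{x_i}+\nabla\phi_i(x_i)^\top(x_i^*-x_i)\le 3\nu_i$, but you do not supply the argument, and the constant $3$ is not obvious. Here is one way to make the 1D step rigorous: with $\psi$ the restriction to the line (a $\nu_i$-self-concordant function), the quantity $g(\tau)=\sqrt{\psi''(\tau)}+\psi'(\tau)$ is non-decreasing, because self-concordance gives $g'(\tau)=\psi''(\tau)+\tfrac{\psi'''(\tau)}{2\sqrt{\psi''(\tau)}}\ge 0$. Since $\psi'(0)<0$, either $\psi'$ stays negative (then $g(0)\le\lim_{\tau\to\infty}g(\tau)\le 0$), or there is a one-dimensional minimizer $c>0$, whence $g(0)\le g(c)=\sqrt{\psi''(c)}$. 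Applying \cref{lem:phi_properties_2} at the point $x_i+cv$ toward both $x_i$ and $x_i^*$ (the directional derivative vanishes, so the hypothesis holds) gives $c\sqrt{\psi''(c)}\le \nu_i+2\sqrt{\nu_i}$ and $|1-c|\sqrt{\psi''(c)}\le \nu_i+2\sqrt{\nu_i}$; adding these yields $\sqrt{\psi''(c)}\le 2(\nu_i+2\sqrt{\nu_i})\le 6\nu_i$. So your plan delivers an $O(\nu_i)$ per-block bound, but with constant $6$, not $3$; the final inequality becomes $c^\top x\le \mathrm{OPT}+6t\sum_i\nu_i$. That is perfectly adequate for how the lemma is used in the paper (the constant is absorbed into $t_{\mathrm{end}}$), but it does not literally recover the stated $4$. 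If you want the exact constant you will need a sharper 1D estimate than the one above.

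One small expository issue: in Case~2 you write ``adding the $\nu_i$ already accounted for gives $4\nu_i$''. There is nothing to add --- in Case~2 the gradient term is already part of (and negative in) the combined quantity, so the per-block bound is just whatever you prove for $\|x_i^*-x_i\|_{x_i}+\nabla\phi_i(x_i)^\top(x_i^*-x_i)$.
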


\thmIPMframework*

\begin{proof}

\cref{thm:IPM_interior} gives an explicit point on the central path.
Hence, we have $\Phi^{t}(x,s)=m\leq\cosh(\lambda/128)$ initially.
\cref{thm:potential-decrease} shows that $\Phi^{t}(x,s)\leq\cosh(\lambda/128)$
throughout the first call of $\textsc{Centering}$. After we obtain
the approximate central path point $((x^{(1)},x^{(2)},x^{(3)}),(s^{(1)},s^{(2)},s^{(3)}))$
at $t=LR$ for the modified convex program, \cref{thm:IPM_interior}
shows that $(x^{(1)}+x^{(2)}-x^{(3)},s^{(1)})$ is an approximate
central path point at $t=LR$ for the original convex program. Furthermore,
$\gamma_{i}^{t}$ is increased by $\delta=\frac{1}{128}$ for all
$i$. Hence, $\Phi^{LR}$ is increased by at most $\exp(\frac{\lambda}{128})$
factor. Hence, we have $\Phi^{LR}(x,s)\leq\cosh(\lambda/64)$. Now,
\cref{thm:potential-decrease} shows that $\Phi^{t}(x,s)\leq\cosh(\lambda/64)$
throughout the second call of $\textsc{Centering}$.

Now, we verify the output. Note that $A\delta_{x}=0$ and $\delta_{s}\in\text{Im}A^{\top}$.
Hence, throughout the algorithm, we have $Ax=b$ and $c-s\in\text{Im}A^{\top}$.
Finally, for the optimality, we note that $w_{i}\phi_{i}$ are $w_{i}\nu_{i}$
self-concordant. \cref{lem:erm-error} shows that 
\[
c^{\top}x'\leq\min_{Ax=b,x\in\prod_{i=1}^{m}K_{i}}c^{\top}x+4t_{\mathrm{end}}\sum_{i=1}^{m}w_{i}\nu_{i}.
\]
Since the algorithm terminates at $t_{\mathrm{end}}=\eps/(4\sum_{i=1}^{m}w_{i}\nu_{i})$,
we have the error bounded.

\end{proof}

\subsection{Using the Universal Barrier}

\label{subsec:IPM-universal-barrier}

In this subsection, we discuss the case if the barriers $\phi_{i}:K_{i}\rightarrow\mathbb{R}$
is not given. In this case, we can use the universal barrier, which
has self concordance $n_{i}$.

\begin{theorem}[\cite{nesterov1994interior, lee2018universal}]For
any convex set $K$, the universal barrier function $\phi(x)=\log\mathrm{Vol}(K^{\circ}(x))$
is a $n$ self-concordant barrier where $K^{\circ}(x)=\{y\in\R^{n}:y^{\top}(z-x)\leq1,\forall z\in K\}$.

\end{theorem}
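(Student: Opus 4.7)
The plan is to verify separately the two defining inequalities of an $n$-self-concordant barrier (\cref{defn:self-concordant}), after first setting up a probabilistic representation of the derivatives of $\phi$. By translation invariance of the construction, I can reduce to computing derivatives at a fixed point (say $x=0$) and then invoke a symmetry argument at general $x$. Writing $f(x) = \mathrm{Vol}(K^\circ(x))$ and differentiating through the integral $f(x) = \int \mathbf{1}[y \in K^\circ(x)]\, dy$, I would use the change of variables $y \mapsto y(1 - \langle y, z^\star(y) - x \rangle)^{-1}$ (essentially parameterizing by the supporting hyperplane of $K - x$) to rewrite the derivatives of $\log f$ as polynomial moments of $Y$ drawn from the uniform probability measure $\mu_x$ on $K^\circ(x)$. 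The upshot is that $\nabla\phi(x)$, $\nabla^2\phi(x)$, and $D^3\phi(x)[u,u,u]$ become, respectively, a weighted first, centered second, and centered third moment of linear functionals of $Y$.

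First I would handle the third-derivative inequality $D^3\phi(x)[u,u,u] \leq 2\|u\|_{\nabla^2\phi(x)}^3$. Once expressed in moment form, this is a statement that the third cumulant of the random variable $\langle Y, u\rangle$ (for $Y \sim \mu_x$) is bounded by $2$ times the $3/2$-power of its variance. For log-concave measures on convex bodies in $\mathbb{R}^n$, this is a classical consequence of Borell's lemma applied to one-dimensional marginals (or equivalently the Pr\'ekopa-Leindler inequality), which gives the required cumulant control on log-concave marginals up to universal constants that match the self-concordance definition.

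Next I would verify the gradient bound $\|\nabla\phi(x)\|^*_{\nabla^2\phi(x)} \leq \sqrt{n}$. This is where the sharp parameter $n$ (as opposed to $O(n)$) enters, and it is the main obstacle. Following Lee-Sidford, I would pass to the Fenchel dual picture: $\phi$ is conjugate (up to affine terms) to the differential-entropy functional of an exponential family indexed by $K$, and in that picture the ratio $\|\nabla\phi\|^{*\,2}_{\nabla^2\phi}$ translates into a Fisher-information-times-covariance product for a log-concave distribution on $K$. That product can be bounded by $n$ using a direct rescaling argument, combined with the Cram\'er-Rao identity for the exponential family. Alternatively, one can invoke the reverse-Santal\'o-type inequality for polar bodies to relate first and second moments of $\mu_x$.

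The main obstacle is the sharp constant in the gradient bound: obtaining $\sqrt{n}$ rather than a large universal constant times $\sqrt{n}$ requires the exponential-family/entropic viewpoint of \cite{lee2018universal}, and cannot be recovered from the direct moment-bound route that handles the third derivative. The third-derivative bound, by contrast, is essentially a one-dimensional log-concave calculation, and the probabilistic representation in the initial step is the standard bridge used in both \cite{nesterov1994interior} and \cite{lee2018universal}.
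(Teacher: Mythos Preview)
The paper does not prove this theorem; it is stated as a cited result from \cite{nesterov1994interior, lee2018universal} with no accompanying proof, so there is no in-paper argument to compare against. Your sketch is a reasonable outline of the argument in those references (moment representation of derivatives of $\phi$, log-concave moment bounds for the self-concordance inequality, and the sharper entropic/exponential-family argument of \cite{lee2018universal} for the $\sqrt{n}$ gradient bound), but for the purposes of this paper the statement is treated as a black box.
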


The gradient and Hessian of the universal barrier function $\phi$
can be computed using the center of gravity and the covariance of
$K^{\circ}(x)$.

\begin{lemma}[{\cite[Lemma1]{lee2018universal}}] For any convex set
$K\subset\mathbb{R}^{n}$ and any $x\in\mathrm{int}(K)$, we have
\begin{align*}
\nabla\phi(x) & =-(n+1)\mu(K^{\circ}(x)),\\
\nabla^{2}\phi(x) & =(n+1)(n+2)\mathrm{Cov}(K^{\circ}(x))+(n+1)\mu(K^{\circ}(x))\mu(K^{\circ}(x))^{\top}.
\end{align*}
where $\mu(K^{\circ}(x))$ is the center of gravity of $K^{\circ}(x)$
and $\mathrm{Cov}(K^{\circ}(x))$ is the covariance matrix of $K^{\circ}(x)$.

\end{lemma}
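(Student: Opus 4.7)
The plan is to use the classical integral representation of $\mathrm{Vol}(K^\circ(x))$ via a Gamma-function trick, and then differentiate twice under the integral sign. The starting point is the identity
$$n!\,\mathrm{Vol}(L) \;=\; \int_{\mathbb{R}^n} e^{-p_L(y)}\,dy$$
valid for any convex body $L$ containing the origin in its interior, where $p_L$ is the Minkowski gauge. Applying this to $L=K^\circ(x)$ and observing that $K^\circ(x)=(K-x)^\circ$, so that the gauge of this polar equals the support function $p_{K^\circ(x)}(y)=h_{K-x}(y)=h_K(y)-y^\top x$, I obtain
$$G(x)\;:=\;n!\,\mathrm{Vol}(K^\circ(x)) \;=\; \int_{\mathbb{R}^n} e^{\,y^\top x\,-\,h_K(y)}\,dy.$$
The $x$-dependence now sits only inside the exponential weight, so differentiation under the integral is immediate (and justified since $h_K$ grows linearly, making the integrand and all its $x$-derivatives dominated near any fixed $x$). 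Thus
$$\nabla G(x)=\int y\, e^{\,y^\top x-h_K(y)}\,dy,\qquad \nabla^2 G(x)=\int yy^\top e^{\,y^\top x-h_K(y)}\,dy.$$

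The second step is to re-express these moment integrals in terms of geometric quantities of $K^\circ(x)$. I would use the radial decomposition $y=tv$ with $v\in\partial K^\circ(x)$ and $t\geq 0$, noting $h_{K-x}(tv)=t$ and $dy=t^{n-1}\,dt\,d\tilde\sigma(v)$ for the appropriate cone surface measure $\tilde\sigma$ on $\partial K^\circ(x)$. The $t$-integrals reduce to Gamma values $\int_0^\infty t^{n+k-1}e^{-t}\,dt=(n+k-1)!$ for $k=0,1,2$. Performing the same radial decomposition on $\int_{K^\circ(x)} y\,dy$ and $\int_{K^\circ(x)} yy^\top dy$ gives $\int_{K^\circ(x)} y^{(k)}\,dy = \frac{1}{n+k}\int_{\partial K^\circ(x)} v^{(k)}\,d\tilde\sigma(v)$, which lets me eliminate $\tilde\sigma$ and obtain the three clean identities
\begin{align*}
\int e^{-h_{K-x}(y)}\,dy &= n!\,\mathrm{Vol}(K^\circ(x)),\\
\int y\,e^{-h_{K-x}(y)}\,dy &= (n+1)!\,\mathrm{Vol}(K^\circ(x))\cdot\mu(K^\circ(x)),\\
\int yy^\top e^{-h_{K-x}(y)}\,dy &= (n+2)!\,\mathrm{Vol}(K^\circ(x))\cdot\bigl(\mathrm{Cov}(K^\circ(x))+\mu(K^\circ(x))\mu(K^\circ(x))^\top\bigr).
\end{align*}

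The last step is a routine chain rule. Since $\phi(x)=\log\mathrm{Vol}(K^\circ(x))=\log G(x)-\log n!$, one has $\nabla\phi = \nabla G/G$ and $\nabla^2\phi = \nabla^2 G/G - (\nabla G/G)(\nabla G/G)^\top$. Substituting the three identities above and simplifying, the first formula gives $\nabla\phi(x)=(n+1)\mu(K^\circ(x))$ (up to the sign convention chosen for the centroid in the statement), and the second gives
$$\nabla^2\phi(x)=(n+1)(n+2)\bigl(\mathrm{Cov}(K^\circ(x))+\mu\mu^\top\bigr)-(n+1)^2\mu\mu^\top=(n+1)(n+2)\,\mathrm{Cov}(K^\circ(x))+(n+1)\,\mu\mu^\top,$$
which is the stated Hessian expression. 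The main obstacle is the bookkeeping in the radial-decomposition step: one has to choose the cone measure on $\partial K^\circ(x)$ precisely so that the Jacobian is exactly $t^{n-1}$ and the Gamma factorials line up with $(n+k)!$ on both sides of the comparison — all other steps are either standard convex-analysis identities or straightforward differentiation.
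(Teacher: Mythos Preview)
The paper does not give a proof of this lemma: it is quoted verbatim from \cite{lee2018universal} and used as a black box, so there is nothing in the paper to compare against. Your argument via the exponential integral representation $n!\,\mathrm{Vol}(K^{\circ}(x))=\int_{\R^n}\exp(y^{\top}x-h_K(y))\,dy$, differentiation under the integral, and the radial decomposition matching $\int e^{-h_{K-x}}y^{\otimes k}$ to $(n+k)!$ times the $k$-th moment of the uniform distribution on $K^{\circ}(x)$, is the standard proof and is correct; this is essentially how the cited reference establishes the formulas.

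Your observation about the sign is also correct: with the definitions as stated (and as a one-dimensional sanity check confirms), one gets $\nabla\phi(x)=+(n+1)\,\mu(K^{\circ}(x))$, so the minus sign in the displayed formula is a typo inherited either here or from a differing convention in the source. This does not affect anything downstream, since the paper only uses the lemma qualitatively to argue that $\nabla\phi_i$ and $\nabla^2\phi_i$ can be approximated by sampling from $K_i^{\circ}(x_i)$.
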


Computing center of gravity and covariance takes polynomial time.
See for example \cite{lee2018kannan} for a survey. 

\begin{theorem}[\cite{dyer1991random,srivastava2013covariance}]Given
a membership oracle for a convex set $K\subset\R^{n}$ with cost $\mathcal{T}$.
Assuming $B(0,r)\subset K\subset B(0,R)$, we can compute $x$ and
$A$ such that 
\[
\|x-\mu(K)\|_{\mathrm{Cov}(K)^{-1}}\leq\eps\qquad\text{and}\qquad(1-\eps)A\preceq\mathrm{Cov}(K)\preceq(1+\eps)A
\]
in time $O(n^{O(1)}\mathcal{T}\log(R/r)/\eps^{2})$.

\end{theorem}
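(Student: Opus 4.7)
The plan is to combine (i) a polynomial-time near-uniform sampler on $K$ with (ii) empirical estimators for the first two moments, after first putting $K$ into approximately isotropic position so that the Mahalanobis-type error $\|x-\mu(K)\|_{\mathrm{Cov}(K)^{-1}}$ reduces to a standard Euclidean error. The reason the error is stated in the $\mathrm{Cov}(K)^{-1}$ norm (rather than the Euclidean norm) is exactly so the bound is affine-invariant, which is what a rounding-based approach naturally yields.

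Step 1 (sampler). Starting from the fact that $B(0,r)\subseteq K\subseteq B(0,R)$, I would invoke a polynomial-time near-uniform sampler on a well-bounded convex body given by a membership oracle (for instance the ball walk or hit-and-run of Dyer--Frieze--Kannan/Lov\'asz--Vempala), which produces samples whose law is within $1/\poly(n)$ total variation of $\mathrm{Unif}(K)$ in $\poly(n)\cdot \mathcal{T}\cdot\log(R/r)$ expected time per sample. The cost per sample is dominated by $O(\poly(n))$ membership-oracle calls along a random walk whose mixing time depends only polynomially on $n$ and $\log(R/r)$.

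Step 2 (isotropization). Using $N_0=\poly(n)$ such samples I would form the empirical covariance $\widehat{\Sigma}_0$ and an empirical mean $\widehat{\mu}_0$, and, by a standard Kannan--Lov\'asz--Simonovits-type rounding argument, conclude that $\widehat{\Sigma}_0$ is within a constant multiplicative factor of $\mathrm{Cov}(K)$ in the Loewner order. Applying the affine map $y\mapsto \widehat{\Sigma}_0^{-1/2}(y-\widehat{\mu}_0)$ to $K$ yields a body $K'$ which is in \emph{approximately isotropic} position, meaning $\mathrm{Cov}(K')$ is spectrally within a constant factor of $I_n$ and $K'\subseteq B(0,\poly(n))$. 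Crucially, both the estimators we want (mean and covariance) and the target norm $\|\cdot\|_{\mathrm{Cov}(K)^{-1}}$ are affine-equivariant, so it suffices to prove the error bounds for $K'$ in the ordinary Euclidean and operator norms.

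Step 3 (refined mean and covariance on $K'$). In isotropic position each coordinate of a sample from $K'$ has variance $\Theta(1)$ and, since $K'\subseteq B(0,\poly(n))$, is bounded. Drawing $N=O(n\,\mathrm{polylog}(n)/\eps^{2})$ further approximately uniform samples, a coordinate-wise Bernstein bound together with a union bound gives $\|\widehat{\mu}-\mu(K')\|_{2}\leq\eps$ with high probability; this is exactly $\|\widehat{\mu}-\mu(K)\|_{\mathrm{Cov}(K)^{-1}}\leq\eps$ after undoing the affine map. For the covariance, I would apply a matrix Bernstein / Rudelson-type inequality (Srivastava--Vershynin) to the rank-one random matrices $(Y_i-\widehat{\mu})(Y_i-\widehat{\mu})^{\top}$ with $Y_i\in K'$, which in the isotropic-and-bounded setting gives $(1-\eps)\widehat{\Sigma}\preceq\mathrm{Cov}(K')\preceq(1+\eps)\widehat{\Sigma}$ from $N=O(n\,\mathrm{polylog}(n)/\eps^{2})$ samples. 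Transporting back via $\widehat{\Sigma}_0^{1/2}$, the pair $(x,A)=(\widehat{\Sigma}_0^{1/2}\widehat{\mu}+\widehat{\mu}_0,\ \widehat{\Sigma}_0^{1/2}\widehat{\Sigma}\widehat{\Sigma}_0^{1/2})$ satisfies the conclusions of the theorem.

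Step 4 (bookkeeping of the total cost and the main obstacle). The total running time is (samples)$\times$(cost per sample)$=O(n/\eps^{2})\cdot\poly(n)\cdot\mathcal{T}\cdot\log(R/r)=O(n^{O(1)}\mathcal{T}\log(R/r)/\eps^{2})$, as required. The main obstacle, and the place where one really uses deep results from the convex-geometry literature, is controlling the sampler: proving that the ball walk or hit-and-run applied to $K$ (or to $K'$) reaches stationarity in $\poly(n,\log(R/r))$ steps and that the $1/\poly(n)$ total-variation error in the distribution of the samples only contaminates the empirical mean and covariance by $\eps$ after $N=O(n\,\mathrm{polylog}(n)/\eps^{2})$ draws; this is precisely what the cited works of Dyer--Frieze--Kannan and Srivastava--Vershynin supply, and the proof amounts to plugging their sampling and matrix-concentration bounds into the three-stage template (rough rounding, isotropization, refined estimation) above.
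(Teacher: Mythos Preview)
The paper does not prove this theorem; it is quoted as a black-box result from \cite{dyer1991random,srivastava2013covariance} and used only to argue that the gradient and Hessian of the universal barrier can be approximated efficiently. Your three-stage outline (polynomial-time near-uniform sampling from a well-bounded body, affine rounding to approximate isotropic position, then matrix concentration on $O(n\,\mathrm{polylog}(n)/\eps^{2})$ samples) is exactly the standard route those references take, so your plan is consistent with the literature the paper is citing.
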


Next, note that the membership oracle of $K^{\circ}(x)$ involves
optimizing one linear function over the convex set $k$ and it can
be done using membership oracle of $K$ and the ellipsoid method.
Therefore, for any $x$, we can compute an approximate gradient $g$
and the Hessian $H$ of the universal barrier function such that 
\[
\|g-\nabla\phi(x)\|_{\nabla^{2}\phi(x)^{-1}}\leq\eps\qquad\text{and}\qquad(1-\eps)H\preceq\nabla^{2}\phi(x)\leq(1+\eps)H
\]
in time $O(n^{O(1)}\mathcal{T}\log(R/r)/\eps^{2})$ where $\mathcal{T}$
is the cost of the membership oracle of $K$. 

Finally, we note that as long as $\eps\leq\frac{1}{\log^{c}m}$ for
some large enough $c$, our robust interior point method works with
those approximate gradient and the Hessian with the same guarantee.
Since the proof is essentially same, we skip the analysis here. We
note there are known explicit barrier functions with good self-concordance
for most commonly used convex sets and in this case, we do not need
heavy machinery like the above to compute them. 

\subsection{Hyperbolic Function Lemmas}

\begin{lemma} \label{lem:sinh-x-y}For any $x,y\in\R$ with $|y|\leq\frac{1}{8}$,
we have 
\[
|\sinh(x+y)-\sinh(x)|\leq\frac{1}{7}|\sinh(x)|+\frac{1}{7}.
\]
Similarly, we have $|\cosh(x+y)-\cosh(x)|\leq\frac{1}{7}\cosh(x).$

\end{lemma}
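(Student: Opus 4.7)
The plan is to use the addition formulas for the hyperbolic functions, which will reduce the problem to verifying simple numerical inequalities at $y = \pm 1/8$. Specifically, I would write
\[
\sinh(x+y) - \sinh(x) = \sinh(x)(\cosh(y) - 1) + \cosh(x)\sinh(y),
\]
\[
\cosh(x+y) - \cosh(x) = \cosh(x)(\cosh(y) - 1) + \sinh(x)\sinh(y),
\]
and then take absolute values, using the fact that $\cosh$ is even and increasing on $[0,\infty)$ while $|\sinh|$ is increasing on $[0,\infty)$, so that for $|y| \leq 1/8$ we have $\cosh(y) - 1 \leq \cosh(1/8) - 1$ and $|\sinh(y)| \leq \sinh(1/8)$.

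For the $\sinh$ bound, I would then use the elementary estimate $\cosh(x) \leq 1 + |\sinh(x)|$ (which follows from $\cosh^2 - \sinh^2 = 1$, giving $\cosh(x) = \sqrt{1 + \sinh^2(x)} \leq 1 + |\sinh(x)|$) to obtain
\[
|\sinh(x+y) - \sinh(x)| \leq |\sinh(x)|\bigl(\cosh(1/8) - 1 + \sinh(1/8)\bigr) + \sinh(1/8).
\]
For the $\cosh$ bound, I would instead use $|\sinh(x)| \leq \cosh(x)$ to get
\[
|\cosh(x+y) - \cosh(x)| \leq \cosh(x)\bigl(\cosh(1/8) - 1 + \sinh(1/8)\bigr).
\]

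The only remaining work is to verify the numerical inequalities $\cosh(1/8) - 1 + \sinh(1/8) \leq 1/7$ and $\sinh(1/8) \leq 1/7$. Using the Taylor series $\sinh(1/8) = 1/8 + O((1/8)^3)$ and $\cosh(1/8) - 1 = (1/8)^2/2 + O((1/8)^4)$, both quantities are approximately $0.133$ and $0.125$ respectively, which are comfortably below $1/7 \approx 0.143$. There is no real obstacle here; this lemma is purely a warm-up computational fact used in the interior point analysis, and the entire argument is a short calculation once the addition formula is applied.
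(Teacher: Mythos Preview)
Your proposal is correct and essentially identical to the paper's proof: both use the addition formulas, the estimate $\cosh(x)\le 1+|\sinh(x)|$ (the paper phrases it as $||\cosh(x)|-|\sinh(x)||\le 1$) for the $\sinh$ bound and $|\sinh(x)|\le\cosh(x)$ for the $\cosh$ bound, and then check that $\cosh(1/8)-1+|\sinh(1/8)|\le 1/7$.
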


\begin{proof} Note that $\sinh(x+y)=\sinh(x)\cosh(y)+\cosh(x)\sinh(y)$.
Using that $||\cosh(x)|-|\sinh(x)||\leq1$, we have 
\begin{align*}
|\sinh(x+y)-\sinh(x)| & \leq|\sinh(x)||\cosh(y)-1|+\cosh(x)\sinh(y)\\
 & \leq|\sinh(x)|(|\cosh(y)-1|+|\sinh(y)|)+|\sinh(y)|
\end{align*}
The first result follows from this and $|\cosh(y)-1|+|\sinh(y)|\leq\frac{1}{7}$
for $|y|\leq\frac{1}{8}$.

For the second result, note that $\cosh(x+y)=\cosh(x)\cosh(y)+\sinh(x)\sinh(y)$.
Hence, 
\begin{align*}
\left|\cosh(x+y)-\cosh(x)\right| & \leq(\cosh(y)-1)\cosh(x)+\sinh(x)\sinh(y)\\
 & \leq(\cosh(y)-1+|\sinh(y)|)\cosh(x)\\
 & \leq\frac{1}{7}\cosh(x).
\end{align*}
\end{proof}

\begin{lemma} \label{lem:cosh-x-y}For any $x\geq0$ and $0\leq y\leq1$,
we have 
\[
\cosh(x+y)\leq(1+2y)\cosh(x)
\]
\end{lemma}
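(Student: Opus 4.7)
The plan is to use the hyperbolic addition formula together with a standard convexity bound on $e^y$.

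First I would expand $\cosh(x+y) = \cosh(x)\cosh(y) + \sinh(x)\sinh(y)$. Since $x \geq 0$, we have $0 \leq \sinh(x) \leq \cosh(x)$, and since $y \geq 0$, $\sinh(y) \geq 0$. Therefore
\[
\cosh(x+y) \leq \cosh(x)\cosh(y) + \cosh(x)\sinh(y) = \cosh(x)\cdot e^y,
\]
using the identity $\cosh(y) + \sinh(y) = e^y$.

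It then suffices to show $e^y \leq 1 + 2y$ for $0 \leq y \leq 1$. The hard part (such as it is) is just this scalar inequality. I would argue by convexity of $e^y$: on the interval $[0,1]$, convexity gives $e^y \leq (1-y)e^0 + y\cdot e^1 = 1 + (e-1)y$. Since $e - 1 < 2$, this yields $e^y \leq 1 + 2y$ on $[0,1]$, completing the proof. Alternatively, one can verify directly that $f(y) \defeq 1 + 2y - e^y$ satisfies $f(0) = 0$, $f(1) = 3 - e > 0$, and $f$ is concave, hence nonnegative on $[0,1]$.

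No serious obstacle is expected; the lemma is a short hyperbolic estimate of the same flavor as \cref{lem:sinh-x-y}, and the two ingredients (addition formula and convexity of the exponential) combine immediately.
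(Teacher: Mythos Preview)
Your proof is correct and matches the paper's approach essentially line for line: both use the addition formula, bound $\sinh(x)\leq\cosh(x)$ for $x\geq 0$ to reduce to $\cosh(x)e^y$, and finish with $e^y\leq 1+2y$ on $[0,1]$. Your convexity justification for the last inequality is a nice addition, as the paper simply asserts it.
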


\begin{proof} Note that $\cosh(x+y)=\cosh(x)\cosh(y)+\sinh(x)\sinh(y)$
and $\exp(x)=\sinh(x)+\cosh(x)$, then we have 
\begin{align*}
\cosh(x+y) & =\cosh(x)\left[\exp(y)-\sinh(y)\right]+\sinh(x)\sinh(y)\\
 & \leq\cosh(x)\left[\exp(y)-\sinh(y)\right]+\cosh(x)\sinh(y)\\
 & =\cosh(x)\exp(y)\\
 & \leq\cosh(x)+2y\cosh(x),
\end{align*}
where we use $\exp(y)\leq1+2y$ for $0\leq y\leq1$. \end{proof}

\section{Treewidth vs. Problem Size in Netlib Instances}

\begin{figure}[H] \label{fig:practice}
	\centering
\begin{tikzpicture}[scale=0.85] 
\begin{axis}[
    title={Treewidth vs Problem size},
    xlabel={Number of variables $n$},
    ylabel={Upper bound on treewidth $\tau$},
    xmin=0, xmax=1000000, xmode=log,
    ymin=0, ymax=10000, ymode=log,
    legend pos=north west,
    grid style=dashed,
]

\addplot[only marks, scatter,
    color=blue,
    mark=*,
    mark options={scale=0.5}
    ]
    coordinates {
(1876,157)
(12061,169)
(138,21)
(51,9)
(615,137)
(758,162)
(758,162)
(472,64)
(295,48)
(114,39)
(1586,123)
(4486,340)
(334,60)
(303,71)
(482,76)
(7248,125)
(77137,967)
(6411,115)
(73948,739)
(3371,174)
(3562,30)
(5831,283)
(6184,313)
(757,129)
(2604,336)
(12230,1254)
(472,80)
(816,135)
(1028,136)
(1064,62)
(1049,23)
(1677,627)
(10524,25)
(13525,3000)
(1706,103)
(1160,30)
(5598,192)
(5598,192)
(645,80)
(946,107)
(301,60)
(316,151)
(68,23)
(3602,36)
(21349,78)
(42659,122)
(154699,433)
(366,46)
(1966,126)
(9408,1035)
(1620,85)
(25067,79)
(54797,77)
(104374,78)
(243246,78)
(7716,138)
(29351,518)
(49932,862)
(108175,1972)
(1506,162)
(4860,524)
(1123,121)
(6680,837)
(2267,281)
(2928,137)
(2446,252)
(8856,2289)
(22275,4552)
(1632,620)
(204,32)
(163,20)
(317,29)
(78,15)
(78,15)
(671,44)
(185,28)
(600,61)
(1200,66)
(1800,71)
(466,36)
(1275,81)
(760,35)
(1350,59)
(2750,65)
(660,44)
(2500,113)
(3340,98)
(253,37)
(162,35)
(1777,49)
(2166,28)
(1506,33)
(4363,42)
(2467,50)
(5533,28)
(2869,35)
(2735,31)
(614,101)
(1274,69)
(1383,55)
(1274,60)
(165,27)
(3045,92)
(23541,142)
(8806,182)
(628,74)
(346,56)
(2595,205)
(8418,155)
    };
    \legend{Netlib LP}
    
\addplot [dashed,
    domain=10:1000000, 
    samples=100, 
    color=red,
]
{x^(0.5)};
\addlegendentry{$\tau = n^{1/2}$}

\addplot [dashed,
    domain=10:1000000, 
    samples=100, 
    color=blue,
]
{x^(3/4)};
\addlegendentry{$\tau = n^{3/4}$}
\end{axis}
\end{tikzpicture}
\qquad
\begin{tikzpicture}[scale=0.85]
\begin{axis}[
    title={Time vs Problem size},
    xlabel={Number of non-zeros $s = \mathrm{nnz}(A)$},
    ylabel={Time complexity $\mathcal{T} = n \tau^2$},
    xmin=50, xmax=2000000, xmode=log,
    ymin=1000, ymax=1000000000000, ymode=log,
    legend pos=north west,
    grid style=dashed,
]
\addplot[only marks, scatter,
    color=blue,
    mark=*,
    mark options={scale=0.5}
    ]
    coordinates {
(23264,344474221)
(424,60858)
(102,4131)
(2862,11542935)
(4740,19892952)
(4756,19892952)
(2494,1933312)
(3408,679680)
(522,173394)
(5532,23994594)
(14996,518581600)
(1448,1202400)
(2202,1527423)
(1896,2784032)
(18168,113250000)
(260785,72129960193)
(15977,84785475)
(246614,40384555708)
(21234,102060396)
(10708,3205800)
(33081,466998959)
(37704,605840296)
(4201,12597237)
(25432,293981184)
(35632,19231870680)
(2768,3020800)
(2537,14871600)
(6401,19013888)
(2760,4090016)
(13427,554921)
(9868,659277333)
(129042,6577500)
(50284,121725000000)
(6937,18098954)
(2445,1044000)
(31070,206364672)
(31070,206364672)
(5620,4128000)
(8252,10830754)
(2612,1083600)
(2443,7205116)
(313,35972)
(8404,4668192)
(49058,129887316)
(97246,634936556)
(358171,29004360811)
(1136,774456)
(10137,31212216)
(144848,10078084800)
(3168,11704500)
(144812,156443147)
(317097,324891413)
(604488,635011416)
(1408073,1479908664)
(16571,146943504)
(63220,7875577724)
(107605,37101673008)
(232647,420669209200)
(6148,39523464)
(44375,1334439360)
(5264,16441843)
(74949,4679800920)
(14977,179004587)
(9265,54955632)
(13331,155330784)
(38304,46401197976)
(94950,461553681600)
(7296,627340800)
(687,208896)
(340,65200)
(665,266597)
(160,17550)
(148,17550)
(1725,1299056)
(465,145040)
(2732,2232600)
(5469,5227200)
(8206,9073800)
(1534,603936)
(3288,8365275)
(2388,931000)
(4316,4699350)
(8584,11618750)
(1872,1277760)
(7334,31922500)
(9734,32077360)
(1179,346357)
(777,198450)
(3558,4266577)
(6380,1698144)
(4400,1640034)
(12882,7696332)
(7194,6167500)
(16276,4337872)
(8284,3514525)
(8001,2628335)
(4003,6263414)
(3230,6065514)
(3338,4183575)
(3878,4586400)
(501,120285)
(9357,25772880)
(72721,474680724)
(27836,291689944)
(4561,3438928)
(1051,1085056)
(70216,109054875)
(37487,202242450)
    };
    \legend{Netlib LP}
    
\addplot [dashed,
    domain=10:10000000, 
    samples=100, 
    color=red,
]
{x^(3/2)};
\addlegendentry{$\mathcal{T} = s^{3/2}$}

\addplot [dashed,
    domain=10:10000000, 
    samples=100, 
    color=blue,
]
{x^(2)};
\addlegendentry{$\mathcal{T} = s^{2}$}
\end{axis}
\end{tikzpicture}
	\caption{The left plot shows some upper bound of treewidth vs $d$ for all 109 feasible
		linear program instances in Netlib repository. We compute a upper
		bound of treewidth using \cite{karypis1998fast}. This shows that treewidth is between $n^{1/2}$ and $n^{3/4}$ for many linear programs in this data set. The right plot shows that the runtime $n \tau^2$ is sub-quadratic in the input size $\mathrm{nnz}(A)$ for many linear programs in this data set.}
\end{figure}
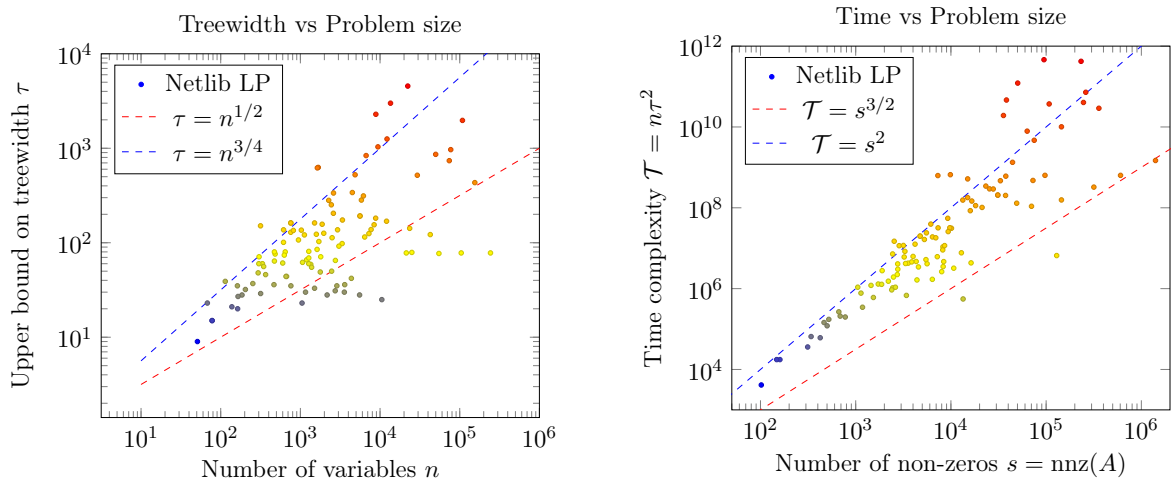

\end{document}